\documentclass[UTF8,11pt,a4paper,oneside,scheme=plain,fontset=fandol]{ctexart}
\usepackage[margin=27mm]{geometry}
\usepackage{amsmath,amssymb,amsthm,mathtools,mathrsfs,bm}
\usepackage{booktabs,array,enumitem}
\usepackage[unicode,colorlinks=true,linkcolor=blue,citecolor=blue,urlcolor=blue]{hyperref}
\allowdisplaybreaks
\ctexset{
  paragraph={runin=false,beforeskip=1.5ex plus .3ex minus .2ex,afterskip=.5ex plus .1ex},
  subparagraph={runin=false,beforeskip=1.3ex plus .3ex minus .2ex,afterskip=.5ex plus .1ex}
}
\newcommand{\papertheoremname}{Theorem}
\newtheorem{theorem}{\papertheoremname}[section]
\newcommand{\paperlemmaname}{Lemma}
\newtheorem{lemma}[theorem]{\paperlemmaname}
\newcommand{\paperpropositionname}{Proposition}
\newtheorem{proposition}[theorem]{\paperpropositionname}
\newcommand{\papercorollaryname}{Corollary}
\newtheorem{corollary}[theorem]{\papercorollaryname}
\newcommand{\paperdefinitionname}{Definition}
\theoremstyle{definition}
\newtheorem{definition}[theorem]{\paperdefinitionname}
\newcommand{\paperexamplename}{Example}
\newtheorem{example}[theorem]{\paperexamplename}
\newcommand{\paperassumptionname}{Assumption}
\newtheorem{assumption}[theorem]{\paperassumptionname}
\newcommand{\paperremarkname}{Remark}
\theoremstyle{remark}

\renewcommand{\proofname}{Proof}
\numberwithin{equation}{section}
\newcommand{\Holant}{\operatorname{Holant}}

\newcommand{\supp}{\operatorname{supp}}
\newcommand{\diag}{\operatorname{diag}}
\newcommand{\arity}{\operatorname{arity}}
\newcommand{\Tr}{\operatorname{Tr}}
\newcommand{\FP}{\mathrm{FP}}

\newcommand{\sharpP}{\#\mathrm P}
\newcommand{\Bgroup}{\mathcal B}
\newcommand{\Ggroup}{\mathcal G}
\newcommand{\Kfour}{K_4}
\newcommand{\Qeight}{Q_8}
\newcommand{\Kstd}{\mathcal K_{\mathrm{std}}}
\newcommand{\Ktw}{\mathcal K_{\mathrm{tw}}}

\newcommand{\ket}[1]{\lvert#1\rangle}

\newcommand{\paperlanguage}{en}
\renewcommand{\theHsection}{\paperlanguage.\thesection}
\let\paperoriginalappendix\appendix
\renewcommand{\appendix}{%
  \paperoriginalappendix
  \renewcommand{\theHsection}{\paperlanguage.\Alph{section}}%
}
\makeatletter
\let\paperoriginaladdtocontents\addtocontents
\renewcommand{\addtocontents}[2]{%
  \def\paperrequestedfile{#1}\def\papertocfile{toc}%
  \ifx\paperrequestedfile\papertocfile
    \paperoriginaladdtocontents{\paperlanguage toc}{#2}%
  \else
    \paperoriginaladdtocontents{#1}{#2}%
  \fi
}
\renewcommand{\tableofcontents}{%
  \section*{\contentsname}%
  \edef\papertocfilename{\paperlanguage toc}%
  \expandafter\@starttoc\expandafter{\papertocfilename}%
}
\renewcommand{\maketitle}{%
  \newpage\null\vskip 2em
  \begin{center}%
    {\LARGE\@title\par}\vskip 1.5em
    {\large\lineskip .5em\begin{tabular}[t]{c}\@author\end{tabular}\par}%
    \vskip 1em{\large\@date\par}%
  \end{center}\par\vskip 1.5em
}
\makeatother
\hypersetup{pdftitle={Unifying Dichotomy Theorems: The Klein Group Upper Case},
  pdfauthor={Mingji Xia},pdflang={en}}
\begin{document}
\pdfbookmark[0]{English version}{language-en}
\title{Unifying Dichotomy Theorems: The Klein Group Upper Case\\
Version 4: The Twisted Klein Upper Case and the Road to Completion}
\author{Mingji Xia\\Institute of Software, Chinese Academy of Sciences\\University of Chinese Academy of Sciences}
\date{September 2026}

\maketitle
\begin{abstract}
Within the nine-group classification framework, this paper continues the
study of the Klein group upper case in the third version. Transpose closure
classifies the concrete matrix representations of the Klein group into
the standard and twisted forms. For the standard form, we retain the
previous version's approach through realnumberizing and the Real-Holant
dichotomy. The main text of the fourth version supplies a direct proof for
the twisted form. It completes the dichotomy for the Klein group upper
case, in which every quaternary function is decomposable; the lower case
is not treated in the main text.

The appendix develops a route to completion based on the recent dichotomy
theorem for problems with binary disequality\cite{en:liu2026disequality}.
It determines which groups yield binary disequality, and then presents
the proofs for the cyclic group $C_1$ (see the second version), the cyclic
groups $C_k$ (see the first version), and odd dihedral groups. For the
cyclic group $C_2$ of order two, we cite the independent draft of
Yuxuan Zhou without reproducing its proof. Subject to the cited results,
these four group classes, together with the classes covered by the binary
disequality dichotomy\cite{en:liu2026disequality}, give dichotomies for all
nine group classes and thereby reach the final dichotomy theorem.
In particular, the new theorem covers the entire Klein group case,
including its lower case.
\end{abstract}
\tableofcontents
\clearpage

\section{Background and organization of this version}
Fix a finite set $\mathcal F$ of complex-valued Boolean functions.
Using its functions to form function-nets gives the counting problem
$\#\mathcal F$ of evaluating these function-nets. The original general
framework divides the finite-group cases into nine classes according to
the projective group associated with the nonzero realizable binary
functions, with the aim of establishing a complexity dichotomy for each
class. The framework also distinguishes two higher-arity cases: the
upper case, in which every nonzero ordinarily realizable quaternary
function decomposes into two binary factors, and the lower case,
in which an indecomposable quaternary function has already appeared.

The third version\cite{en:xia2026v3} focuses on the Klein group upper case.
It first uses transpose closure to identify the two concrete matrix
representations, and then analyzes common phases and realnumberizing
for the standard form. The main text of this version follows that order;
the central addition in the fourth version is the proof for the twisted
form. Under the explicit quaternary decomposition condition for the upper
case, simultaneous nonzero contractions first show that a minimal
counterexample in the ordinary gadget closure satisfies the
scattered-pearl condition. We then establish a common pairing,
support propagation, and directional contradictions at arities six and
eight, and finally give a cycle-by-cycle algorithm. This completes the
proof for the Klein group upper case in the main text; the lower case
is not treated there.

Appendix~\ref{en:sec:appendix-road} opens a separate ``Road to Completion.''
The first route uses the recent dichotomy theorem for problems with
binary disequality. It first handles groups that yield this function,
then presents the proofs for cyclic and odd dihedral groups.
The second route continues the direct proofs within the original
nine-group framework. This version records that plan without developing
its technical details.

\section{Preliminaries and reductions}\label{en:sec:foundations-zh}

\subsection{Function-nets, net-functions, and associativity}

All functions in this paper have Boolean arguments and algebraic complex values.
Fix a finite function set $\mathcal F$. The functions themselves and their arities
are not part of the input. Given a finite graph, place at each vertex a function
whose arity equals the degree of that vertex, and specify the correspondence
between its ports and the arguments of the function. The resulting object is a
\emph{function-net}, usually called a tensor network. Self-loops and parallel
edges are allowed; a self-loop occupies two distinct ports of the same vertex.

Each internal edge carries a Boolean variable. The two ports of an edge use the
same variable, so the binary kernel of an ordinary edge is equality. Multiplying
the vertex functions and summing over all internal edge variables gives the value
of the function-net. If external ports remain, sum only over internal variables;
the resulting function of the external variables is its \emph{net-function}.
The external ports are ordered. In particular, exchanging the two external ports
of a binary net-function transposes its matrix.

\begin{proposition}[Associativity of function-nets]
Evaluating a part of a function-net first and replacing that part by its
net-function does not change the net-function or value of the whole.
\end{proposition}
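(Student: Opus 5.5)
The plan is to write the value of a function-net as an explicit sum of products and regroup that sum using the distributive law.

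\textbf{Setup.} Let $\Omega$ be a function-net with vertex set $V$, internal edge set $E$ and external edge set $X$. Its net-function is
\[
  N_\Omega(x_X)=\sum_{\sigma\in\{0,1\}^{E}}\ \prod_{v\in V} f_v\bigl((\sigma,x_X)|_{\partial v}\bigr),
\]
where $\partial v$ is the ordered list of ports at $v$. A self-loop contributes the same edge variable to two ports of $v$, and parallel edges contribute distinct variables, so the formula covers these cases without change.

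\textbf{Partition.} Choose a subset $S\subseteq V$ to be evaluated first, and split the edges into three classes:
\begin{itemize}
\item $E_S$: internal edges with both endpoints in $S$;
\item $E_{\bar S}$: internal edges with both endpoints outside $S$;
\item $C$: edges joining $S$ to $V\setminus S$, together with external edges attached to $S$.
\end{itemize}
The sub-net $\Omega_S$ consists of the vertices $S$, internal edges $E_S$, and external ports $C$, taken in some fixed order. Its net-function is
\[
  g(y_C)=\sum_{\tau\in\{0,1\}^{E_S}}\ \prod_{v\in S} f_v(\cdot).
\]

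\textbf{Regrouping.} Summing over $\{0,1\}^E$ is the same as summing over the three independent blocks $\{0,1\}^{E_{\bar S}}$, $\{0,1\}^{C\cap E}$ and $\{0,1\}^{E_S}$. The factors $f_v$ with $v\notin S$ do not depend on the variables in $E_S$. By distributivity we can therefore move the inner sum over $E_S$ onto the product over $S$, and that inner sum is exactly $g$. So
\[
  N_\Omega(x_X)=\sum_{E_{\bar S},\,C\cap E}\ g(y_C)\prod_{v\notin S} f_v(\cdot).
\]
The right-hand side is by definition the net-function of $\Omega'$, the net obtained by replacing $S$ with a single vertex carrying $g$. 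In $\Omega'$, each cut edge in $C\cap E$ becomes an ordinary internal edge, and each external edge in $C$ stays external.

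\textbf{Main obstacle: port bookkeeping.} The ports of the new vertex must be matched to the arguments of $g$ in the chosen order of $C$, and edge identities must be preserved. Two cases need a check:
\begin{itemize}
\item A cut edge that has both ends on $S$ cannot occur, because such an edge lies in $E_S$ by definition.
\item Two distinct cut edges joining the same pair of vertices are distinct variables, so they become distinct (parallel) edges of $\Omega'$. Keeping them as separate edges is what the claim requires.
\end{itemize}

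\textbf{Value.} If $X=\emptyset$, the net-function is a constant, namely the value of the net, so the statement about values is the special case $X=\emptyset$.
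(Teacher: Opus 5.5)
Your proposal is correct and takes the same approach as the paper. The paper also partitions the internal variables into those inside the chosen part and all others, then notes that the two expressions differ only by reordering a finite sum and applying distributivity. You carry out the same regrouping with the edge classes and the port bookkeeping written out explicitly.
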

\begin{proof}
Partition the internal variables into those internal to the chosen part and all
others. The two expressions differ only in the order of finite summation and the
application of distributivity of multiplication over addition, so they agree.
\end{proof}

The problem of evaluating closed function-nets whose vertex functions belong to
$\mathcal F$ is denoted by $\#\mathcal F$, or $\Holant(\mathcal F)$.
With a fixed binary edge kernel $E$, write $\Holant_E(\mathcal F)$; the ordinary
model has $E=I$. This notation distinguishes an edge kernel from a vertex
function: every edge in $\Holant_N(\mathcal F)$ uses $N$, but this does not assert
that an $N$ vertex has been realized in the ordinary model.

\subsection{Gadgets, decomposition, and established dichotomies}

If $g$ is the net-function of a fixed finite gadget using only vertex functions
from $\mathcal F$, call $g$ \emph{ordinarily realizable} from $\mathcal F$.
Associativity allows this gadget to replace any number of $g$ vertices, giving
a polynomial-time reduction from $\Holant(\mathcal F\cup\{g\})$ to
$\Holant(\mathcal F)$. More generally, if adjoining $g$ gives a problem
polynomial-time Turing reducible to the original one, call $g$
\emph{Turing available}. Interpolation may provide only this latter form of
availability. Throughout, $\equiv_{\mathrm T}$ denotes polynomial-time Turing
equivalence, and $\sharpP$-hardness is with respect to these reductions.

For two disjoint sets of variables, define the tensor product by
\[
 (f\otimes g)(x,y)=f(x)g(y).
\]
A nonzero function is \emph{decomposable} if it can be written as such a product
on two nonempty variable sets, and \emph{indecomposable} otherwise. Variables
may first be permuted. Once a decomposition is obtained, we use the following
theorem to obtain its factors.

\begin{theorem}[Decomposition theorem, \cite{en:meng2025odd,en:guan2026lp}]
\label{en:thm:decomposition}\label{en:thm:decomposition-fp}
Let $\mathcal F$ be a fixed finite set of algebraic complex-valued Boolean
functions, and let $f,g$ be nonzero algebraic complex-valued Boolean functions.
At least one of the following holds:
\[
 \Holant(\mathcal F\cup\{f,g\})
 \equiv_{\mathrm T}
 \Holant(\mathcal F\cup\{f\otimes g\}),
\]
or $\Holant(\mathcal F\cup\{f\otimes g\})\in\FP$.
\end{theorem}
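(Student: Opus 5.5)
The direction $\Holant(\mathcal F\cup\{f\otimes g\})\le_{\mathrm T}\Holant(\mathcal F\cup\{f,g\})$ is immediate. Place a vertex $f$ and a vertex $g$ side by side; this gadget has net-function $f\otimes g$, so by associativity it may replace every occurrence of $f\otimes g$. The substance is therefore the converse reduction, together with the claim that it can fail only when the tensor-product problem is tractable.

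To realize $f$ alone from $f\otimes g$, I would contract the $g$-ports away. Take one copy of $f\otimes g$ and connect the $g$-ports, for instance to another copy of $f\otimes g$ or by pairing ports, so as to produce a net-function $c\cdot f$. If the scalar $c$ is a nonzero constant, we obtain $f$ up to a scalar and can simply divide.

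The natural candidate is to join the $g$-part of one copy with the $g$-part of a second copy. This yields $f\otimes f\otimes \langle g,g\rangle$, and the scalar here uses the bilinear, not Hermitian, pairing $\sum_x g(x)^2$. This scalar may vanish for nonzero complex $g$ (isotropic vectors), which is the main obstacle. When it does, I would try the other available scalars: self-loop contractions of $g$ with itself, and pairings of $g$ against closed nets built from $\mathcal F\cup\{f\otimes g\}$. If some closed gadget with exactly one $g$-slot gives a nonzero value, then $f$ is realized. Otherwise every such contraction vanishes.

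In that degenerate case, every function-net containing $f\otimes g$ evaluates through $g$ in a way that forces a vanishing scalar, and I would argue from this that the value is computable. One route is to use the structure of the orthogonal group acting on $g$'s arguments: an isotropic $g$ can be transformed, by an orthogonal holographic transformation (which preserves $=_2$ edges), into a form supported on a structured set. The nets would then evaluate to $0$ or factor into easily computed pieces, giving $\Holant(\mathcal F\cup\{f\otimes g\})\in\FP$.

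The argument is symmetric in $f$ and $g$, so realizing $g$ is handled the same way. Once both are available, the equivalence follows. Since the theorem is cited from \cite{en:meng2025odd,en:guan2026lp}, I would mainly check that their hypotheses match the present algebraic-complex setting.
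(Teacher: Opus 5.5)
Your easy direction is correct, and so is the gadget step: if some $h$ realizable from $\mathcal F\cup\{f\otimes g\}$, of the same arity as $g$, has $\langle g,h\rangle\ne0$, then a multiple of $f$ is realized. The gap is the degenerate case, where both of your claims fail.

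The vanishing claim does not follow from the hypothesis. Cutting one copy of $g$ out of a closed net leaves a remainder that still contains the partner copy of $f$ as a loose vertex. That remainder is not a gadget over $\mathcal F\cup\{f\otimes g\}$, so degeneracy says nothing about it.

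The conclusion $\Holant(\mathcal F\cup\{f\otimes g\})\in\FP$ is also false in general. Work in the $N$-edge model; the $K$ transformation following Proposition~\ref{en:prop:holographic-reduction} carries everything to the ordinary model and preserves tensor products. Take $\mathcal F=\{q\}$ with $q=\delta_{0011}+\delta_{1100}+\delta_{0101}+\delta_{1010}+\delta_{0110}+\delta_{1001}$, $f=\delta_0$ and $g=\delta_1$. In the ordinary model these become the isotropic vectors $\frac1{\sqrt2}(1,\pm i)$. Each $N$ edge contributes one $0$ and one $1$, while $q$ and $\delta_0\otimes\delta_1$ are balanced. Hence every one-port gadget over $\{q,\delta_0\otimes\delta_1\}$ is identically zero, and $\langle g,g\rangle=0$. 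Your procedure finds no nonzero contraction for either factor and would declare the problem tractable.

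Yet $\Holant(N\mid\{q,\delta_0\otimes\delta_1\})$ contains counting Eulerian orientations of $4$-regular graphs, which is $\sharpP$-hard; this is the case $s=t=0$, $abc\ne0$ of Lemma~\ref{en:lem:c1-general-three-centers}. Moreover, joining the two ports of a single $\delta_0\otimes\delta_1$ by an $N$ edge gives value $1$, so nets containing $f\otimes g$ need not vanish. The theorem still holds here, through its first alternative, for a reason your dichotomy (gadget realization versus tractability) cannot see. By the same imbalance count, an instance of $\Holant(N\mid\{q,\delta_0,\delta_1\})$ vanishes unless it has equally many $\delta_0$ and $\delta_1$ vertices. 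When the counts are equal, the vertices can be paired into copies of $\delta_0\otimes\delta_1$.

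So two ideas are missing. First, a Turing reduction only has to \emph{balance} the numbers of $f$- and $g$-vertices; realizing $f$ by a fixed gadget is unnecessary. You can balance by taking a disjoint union with a fixed closed instance of known nonzero value and opposite count difference, or by showing that unbalanced instances vanish.

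Second, when balancing is impossible in one direction, the residual case needs a genuine complexity classification, not a vanishing argument. If $\Holant(\mathcal F\cup\{f\otimes g\})$ is $\sharpP$-hard, the equivalence holds trivially, because $\Holant(\mathcal F\cup\{f,g\})$ lies in $\FP^{\sharpP}$. So what remains to be shown, and suffices, is that in the unbalanced case the tensor-product problem is either in $\FP$ or $\sharpP$-hard. That is the substance of the cited theorem. The paper does not reprove it; it only remarks that the $\FP$ alternative rests on the EO classification of the odd-arity work together with the later polynomial-time EO algorithm. Your orthogonal normal form for an isotropic $g$ does not supply such a classification.
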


\begin{theorem}[Dichotomy with a nonzero odd-arity function,
\cite{en:meng2025odd,en:guan2026lp}]
\label{en:thm:odd-arity-quasi}\label{en:thm:odd-arity-fp}
If a fixed finite set $\mathcal H$ of algebraic complex-valued Boolean functions
contains a nonzero odd-arity function, then $\Holant(\mathcal H)$ is in $\FP$
or is $\sharpP$-hard.
\end{theorem}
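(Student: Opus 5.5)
The plan is to reduce to the known dichotomy for Holant problems with free unary functions (the $\Holant^c$ / $\Holant^*$ line of results) by manufacturing unary functions from the nonzero odd-arity function, and to use the decomposition theorem, Theorem~\ref{en:thm:decomposition}, to control arity. Fix a nonzero $f\in\mathcal H$ of odd arity $2k+1$ and argue by induction on $k$. If $f$ is decomposable, write $f=g\otimes h$ with nonempty variable sets; one factor, say $g$, has odd arity smaller than $2k+1$. Theorem~\ref{en:thm:decomposition} then gives one of two outcomes. Either $\Holant(\mathcal H)\in\FP$ (when $f$ is the only obstruction), or $\Holant(\mathcal H)\equiv_{\mathrm T}\Holant(\mathcal H\setminus\{f\}\cup\{g,h\})$, and the induction hypothesis applies to this new set. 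Hence we may assume that $f$ is indecomposable.

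For indecomposable $f$, I would try to realize a nonzero unary $u$ ordinarily. The first attempt is to place self-loops on $k$ disjoint pairs of ports of $f$. If, for some choice of pairs, the resulting unary is nonzero, we are done. Otherwise every such contraction vanishes. I would then use two copies of $f$ joined along $2k$ edges, which leaves a binary function, and then attach $f$ again to realize new unaries. Vanishing of all these contractions imposes strong linear constraints on the coefficients of $f$. I expect these constraints to force $f$ into a restricted family, for instance one whose support lies on vectors of fixed Hamming weight, or which is an affine-type function up to an orthogonal transformation. For that family, holographic transformation and the known tractable classes settle the question directly.

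Once a nonzero unary $u$ is available, the goal is to obtain the two pinnings $\Delta_0,\Delta_1$, or a generic family of unaries, Turing available. Then the $\Holant^c$ dichotomy (with its $\sharpP$-hard versus $\FP$ cases) finishes the proof. To do this I would attach $u$ to ports of $f$ and of any even-arity function in $\mathcal H$ to produce binary gadgets $B$. I would then interpolate $\{B^m u\}$: when the eigenvalues of $B$ have a ratio that is not a root of unity, these unaries span all unaries, and polynomial interpolation makes $\Delta_0,\Delta_1$ Turing available. The main obstacle is the degenerate situation in which every realizable binary gadget has eigenvalue ratio a root of unity, or is a multiple of an orthogonal or isotropic matrix. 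In that case I would first try a case analysis on the finite group generated by these binaries. Concretely, I would show that $u$ together with its orbit under this group either already supplies two independent unaries, which suffices after a holographic change of basis, or places $\mathcal H$ inside a known tractable class, namely affine, product-type, or vanishing.
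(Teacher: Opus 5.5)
The paper does not prove this statement. It imports it from \cite{en:meng2025odd,en:guan2026lp}, and notes that reaching $\FP$ on the tractable side additionally needs the later polynomial-time EO algorithm. Your first step is sound: you induct through Theorem~\ref{en:thm:decomposition} down to an indecomposable nonzero odd-arity function, which is also how the paper feeds odd-arity factors into this theorem. The gap comes after that. Your plan assumes that either a usable unary can be manufactured, or $f$ falls into a family that ``known tractable classes settle directly.'' Both assumptions fail in the Eulerian-orientation branches, and those branches are why the cited proof needs an EO dichotomy and an EO algorithm.

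Here is a concrete obstruction. Pass to the $N$-edge coordinates given by $K$ (Proposition~\ref{en:prop:holographic-reduction}). Let $\mathcal H$ consist of $\widehat f=\delta_{000}+\delta_{111}$ together with an arbitrary finite set $\mathcal G$ of EO functions.
\begin{itemize}
\item The function $f$ is nonzero, ternary and indecomposable, so your induction stops at it.
\item Each $N$ edge carries one $0$ and one $1$. Every EO vertex has imbalance $0$ and every $\widehat f$ vertex has imbalance $\pm3$, whereas a one-port gadget has total imbalance $\pm1$.
\item Hence \emph{every} unary gadget vanishes identically, not merely the self-loop closures, and the iteration $B^mu$ never starts.
\item Even granting that the vanishing constraints force $f$ into such a family (here they do), the rest of $\mathcal H$ stays arbitrary. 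Instances avoiding $f$ are exactly $\Holant(N\mid\mathcal G)$, an arbitrary $\#\mathrm{EO}$ problem.
\end{itemize}

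The same obstruction defeats your degenerate-unary analysis. Take $\mathcal H=\{u\}\cup\mathcal G$ with isotropic $u\propto(1,i)$. Then $K^{-1}u\propto\delta_0$, and the imbalance count shows that every realizable unary is a multiple of $\delta_0$. So the second pinning is never available and $\Holant^c$ cannot be invoked. Yet the problem is $\sharpP$-hard whenever $\#\mathrm{EO}(\mathcal G)$ is, and $\mathcal H$ is then not affine, product-type or vanishing.

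Your closing alternative, ``two independent unaries, or a known tractable class,'' is therefore false. What is missing is a classification of these Eulerian-orientation branches: the $\#\mathrm{EO}$ dichotomy for hardness, and the polynomial-time EO algorithm for the tractable side. A $\Holant^c$-based argument cannot supply this; the cited works do.
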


We use the decomposition and classification results of the odd-arity paper,
together with the subsequent polynomial-time EO algorithm, to upgrade the
tractable side to $\FP$. Following the framework of the first version, whenever
a decomposable function appears we apply the decomposition theorem: the
tractable alternative is settled, and otherwise we adjoin its tensor factors
and continue. A nonzero odd-arity factor gives the odd-arity dichotomy.
The same reasoning applies to the fixed $N$-edge model: transform the whole
problem back to equality edges, perform the decomposition and reductions, and
then transform back. The group classification below includes factors obtained
in this way.

\begin{lemma}[Nonzero scaling of individual functions]\label{en:lem:per-signature-scaling}
For a fixed finite function set $\mathcal F$, choose a nonzero algebraic constant
$c_f$ for each $f\in\mathcal F$. If $\mathcal F'=\{c_ff:f\in\mathcal F\}$,
then $\Holant(\mathcal F')\equiv_{\mathrm T}\Holant(\mathcal F)$.
The same conclusion holds with any fixed edge kernel.
\end{lemma}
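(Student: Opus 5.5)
The proof is a direct bookkeeping argument. Scaling a vertex function rescales the value of every function-net by a predictable constant. Given an instance $\Omega'$ of $\Holant(\mathcal F')$, I replace each vertex carrying $c_ff$ by a vertex carrying $f$, keeping the same graph, port correspondence and edge kernel. This gives an instance $\Omega$ of $\Holant(\mathcal F)$. Since the value is a sum over edge assignments of products of vertex values, and each vertex contributes the factor $c_f$ independently of the assignment, we get
\[
 \Holant(\Omega')=\Bigl(\prod_{f\in\mathcal F} c_f^{\,n_f}\Bigr)\Holant(\Omega),
\]
where $n_f$ is the number of vertices of $\Omega$ labeled $f$. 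The same identity holds with a fixed edge kernel $E$, because the edge factors are untouched. The reverse direction uses the constants $c_f^{-1}$, which are nonzero algebraic numbers, so the two directions are symmetric. The reduction is therefore a single oracle call followed by multiplication by a known constant.

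The only point needing care is that the constant $\prod_f c_f^{n_f}$ must be computable in time polynomial in the instance size. Since $\mathcal F$ is fixed and finite, the $c_f$ are finitely many fixed algebraic numbers, and $n_f$ is at most the number of vertices. I would work in a fixed number field $\mathbb Q(\alpha)$ containing all $c_f$, with the field fixed by $\mathcal F$. Each power $c_f^{n_f}$ can be computed by repeated squaring. The bit size of its coordinates over that field grows only linearly in $n_f$, so the total cost is polynomial.

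Multiplying the oracle's answer, an element of the same field extended by the function values, is likewise polynomial. For the same reason the answer's representation size grows by at most a polynomial amount. I expect this representation and complexity bookkeeping to be the only nontrivial step, and it is routine once the computation model for algebraic numbers is fixed.
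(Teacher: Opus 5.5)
Your proof is correct and follows the paper's argument. The paper likewise notes that an instance with $n_f$ vertices of type $f$ has its value multiplied by the known nonzero factor $\prod_f c_f^{n_f}$, which is divided out, and it uses $c_f^{-1}$ for the reverse reduction. Your discussion of computing this factor in a fixed number field by repeated squaring spells out what the paper covers with the phrase ``can be computed exactly''.
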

\begin{proof}
If an instance contains $n_f$ vertices of type $f$, the scaling multiplies its
value by $\prod_f c_f^{n_f}$. This nonzero factor can be computed exactly and
compensated. Use $c_f^{-1}$ for the reverse reduction.
\end{proof}

We also call a function available when a known nonzero scalar multiple is
available, compensating the scalar by this lemma. The function set is fixed,
so each auxiliary gadget, scaling constant, or reduction is fixed in advance
and does not vary arbitrarily with the input function-net.

\begin{theorem}[Real Boolean Holant dichotomy, \cite{en:shao2020real}]
\label{en:thm:real-holant}
Let $\mathcal H$ be a fixed finite set of algebraic real-valued Boolean
functions. If $\mathcal H$ satisfies the decidable tractability conditions of
the Shao--Cai Real Boolean Holant dichotomy theorem, then
$\Holant(\mathcal H)\in\FP$; otherwise it is $\sharpP$-hard.
We refer to the full collection of tractability conditions in that theorem as
condition \textup{(T)}.
\end{theorem}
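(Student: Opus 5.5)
This statement is imported rather than proved from scratch. The plan is to derive it from the main theorem of Shao and Cai~\cite{en:shao2020real}, after checking that their setting and ours agree on every point. No new combinatorial argument is needed. The work is a careful translation of definitions, done in three steps.

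\textbf{Step 1: the problem $\Holant(\mathcal H)$.} Shao and Cai consider a fixed finite set $\mathcal H$ of real-valued Boolean signatures, with no restriction on arity. Instances are arbitrary finite graphs whose vertices carry signatures from $\mathcal H$ with a specified port assignment, and whose edges use the equality kernel. This is exactly our function-net model with $E=I$. Self-loops and parallel edges are allowed on both sides; if needed, a self-loop can be subdivided by an equality-edge argument, so neither convention is stronger. Their signature values are algebraic reals, which matches our standing hypothesis that all values are algebraic. Hence instances of the two problems correspond verbatim, with identical values.

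\textbf{Step 2: the two sides of the dichotomy.} For the tractable side, I would go through the tractability conditions of the Shao--Cai theorem one by one: the product-type classes, the affine-type classes up to the relevant orthogonal and local transformations, and the vanishing-type and local-affine families. For each family their paper gives a polynomial-time algorithm with exact arithmetic over a fixed algebraic number field. This gives membership in $\FP$ in our sense. For the hardness side, their $\sharpP$-hardness proofs use gadgets, interpolation and holographic transformations, all of which are polynomial-time Turing reductions. This matches our convention that $\sharpP$-hardness is with respect to $\equiv_{\mathrm T}$. Finally, the conditions are decidable because $\mathcal H$ is a fixed finite set of algebraic-valued functions and each condition is a finite algebraic test over the signature entries. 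We then simply name the conjunction of their tractability criteria condition \textup{(T)}.

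\textbf{Step 3: the expected obstacle.} The only place needing real care is the bookkeeping in Step~2. Our later uses may apply the theorem to a set obtained after realnumberizing, scaling by Lemma~\ref{en:lem:per-signature-scaling}, or decomposing by Theorem~\ref{en:thm:decomposition}. I would therefore confirm that the theorem is invoked only on genuinely real-valued finite sets with equality edges. If a nonstandard edge kernel is ever involved, I would first move the problem back to equality edges by a holographic transformation before applying the citation. With that check done, the statement follows directly from~\cite{en:shao2020real}.
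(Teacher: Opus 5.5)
Your proposal is correct and takes the paper's approach: the paper gives no proof and simply imports the theorem from Shao--Cai. The only remaining obligation is the one you flag in Step~3. Every later invocation must be on a real-valued finite set with equality edges; the paper secures this by an orthogonal change of basis plus per-function scaling in the standard Klein case, and by checking real-valuedness after the $K$ transformation in the $C_1$ case.

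One small correction to Step~2: vanishing-type families play no role in the real-valued dichotomy. Joining two copies of a nonzero real-valued $f$ port-to-port gives $\sum_x f(x)^2>0$, so no nonzero real-valued function is vanishing.
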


In the standard Klein case, we first scale each function to make it algebraic
real-valued, and then test the resulting function set against condition
\textup{(T)}.

\subsection{Matrix notation and changes of basis}

Write a binary function as a $2\times2$ matrix according to the order of its
two arguments. Throughout, fix
\[
 I=\begin{pmatrix}1&0\\0&1\end{pmatrix},\qquad
 N=\begin{pmatrix}0&1\\1&0\end{pmatrix},\qquad
 Y=\begin{pmatrix}0&1\\-1&0\end{pmatrix},
\]
and
\[
 X=iN=\begin{pmatrix}0&i\\i&0\end{pmatrix},\qquad
 Z=i\diag(1,-1)=\begin{pmatrix}i&0\\0&-i\end{pmatrix}.
\]
Here $N$ is binary disequality and $Y$ is skew-symmetric. The names $X,Z$ are
interchanged relative to the third version; all formulas here use the displayed
convention. Serial connections of binary vertices give matrix multiplication,
and exchanging their two ports gives the ordinary transpose $A^{\mathsf T}$,
without conjugation. A complex orthogonal matrix satisfies $O^{\mathsf T}O=I$;
function-net summation does not use a conjugate inner product.

\begin{proposition}[Holographic transformations with an edge kernel]
\label{en:prop:holographic-reduction}
Let $P$ be an invertible algebraic matrix. List each $m$-ary vertex function's
values in a fixed argument order and define
$\widehat f=(P^{-1})^{\otimes m}f$. Then
\[
 \Holant_E(\mathcal F)
 \equiv_{\mathrm T}
 \Holant_{P^{\mathsf T}EP}(\widehat{\mathcal F}).
\]
Corresponding instances have equal values. A binary vertex function $A$ becomes
$\widehat A=P^{-1}AP^{-\mathsf T}$. In particular, if $E=I$ and $P=O$ is
complex orthogonal, the edge kernel remains $I$ and the binary matrix becomes
$O^{\mathsf T}AO$.
\end{proposition}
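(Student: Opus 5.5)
The plan is to build an explicit instance-to-instance correspondence: keep the underlying graph and its port assignments, replace each vertex function $f$ by $\widehat f$, and change the edge kernel from $E$ to $P^{\mathsf T}EP$. We then show the two nets have equal values. Since $P$ and $P^{-1}$ are fixed algebraic matrices and $\mathcal F$ is finite, $\widehat{\mathcal F}$ is a fixed finite set of algebraic functions. The map on instances is trivially polynomial-time. Applying the same construction with $P^{-1}$ in place of $P$ undoes the change, because $(P^{-1})^{\mathsf T}(P^{\mathsf T}EP)P^{-1}=E$ and applying $P^{\otimes m}$ to $\widehat f$ recovers $f$. Hence equality of values yields reductions in both directions, giving $\equiv_{\mathrm T}$.

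For the value identity, write an $E$-edge model value as a tensor contraction. Each edge $e=\{u,v\}$, connecting port $a$ of $u$ to port $b$ of $v$, contributes a factor $E_{x_a x_b}$, with the orientation fixed by the chosen port order. Each vertex contributes its tensor $f_v$, and we sum over all port variables. Now substitute $f_v=P^{\otimes m}\widehat f_v$, that is,
\[
 f_v(x_1,\dots,x_m)=\sum_{y}\prod_{j}P_{x_j y_j}\,\widehat f_v(y_1,\dots,y_m).
\]
Each edge then carries two factors of $P$, one from each endpoint, and summing out $x_a,x_b$ gives
\[
 \sum_{x_a,x_b}P_{x_a y_a}E_{x_a x_b}P_{x_b y_b}=(P^{\mathsf T}EP)_{y_a y_b}.
\]
By associativity this regrouping of finite sums is legitimate, and the total equals the value of the new instance with kernel $P^{\mathsf T}EP$. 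Self-loops and parallel edges need no special treatment, since every edge still has exactly two port endpoints.

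For the binary statement, put $m=2$. With the convention $\widehat f=(P^{-1})^{\otimes2}f$ and $f$ read as the matrix $A_{x_1x_2}$, we get $\widehat A=P^{-1}AP^{-\mathsf T}$. When $E=I$ and $P=O$ is orthogonal, $O^{\mathsf T}IO=I$ and $O^{-1}=O^{\mathsf T}$, so $\widehat A=O^{\mathsf T}AO$.

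The step most likely to go wrong is the index bookkeeping. The orientation of $E$ on each edge must match the side on which $P$ is contracted, so that we obtain $P^{\mathsf T}EP$ and not $PEP^{\mathsf T}$. If $E$ is not symmetric, the edge orientation must be carried consistently, and I will fix a convention for it explicitly.
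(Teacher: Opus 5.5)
Your proposal is correct and is essentially the paper's proof. Writing $f=P^{\otimes m}\widehat f$ is exactly inserting $PP^{-1}$ at every port, and your computation $\sum_{x_a,x_b}P_{x_ay_a}E_{x_ax_b}P_{x_by_b}=(P^{\mathsf T}EP)_{y_ay_b}$ is the paper's step of absorbing the two adjacent copies of $P$ into each edge kernel, with the reverse direction obtained from $P^{-1}$ just as in the paper; your explicit index bookkeeping (including the orientation convention for a nonsymmetric $E$, which never arises for the symmetric kernels $I$ and $N$ used later) simply spells out what the paper compresses into one appeal to associativity.
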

\begin{proof}
Insert $PP^{-1}$ at every port. First combine $P^{-1}$ with each vertex function,
and combine the two adjacent copies of $P$ with each old edge kernel $E$.
These give the stated new vertex functions and edge kernel. Associativity on
every edge preserves the value of the whole function-net. The reverse
transformation uses $P^{-1}$.
\end{proof}

For example, take
\[
 K=\frac1{\sqrt2}\begin{pmatrix}1&1\\i&-i\end{pmatrix},
 \qquad K^{\mathsf T}K=N.
\]
Applying this $K$ transformation to the ordinary model gives edge kernel $N$
and vertex functions $(K^{-1})^{\otimes m}f$. A binary vertex becomes
$K^{-1}AK^{-\mathsf T}$; combining it with an $N$ edge gives the transfer
matrix $K^{-1}AK$.

\subsection{Finite binary groups and the nine cases}

We first make precise the ``all and only'' convention of the first
version~\cite{en:xia2026v1}. Let $L$ be the number field generated by all values of
the original finite function set $\mathcal F$. Starting from $\mathcal F$ and
the ordinary two-port wire $I$, allow finitely many gadget constructions,
tensor-product decompositions of nonzero functions, and scalings by nonzero
elements of $L$. Denote the resulting set by $\mathscr C(\mathcal F)$ and call
its members the \emph{available functions} of this framework. Decomposition
factors can be chosen over $L$, as proved below. Interpolation reductions are
given separately where they are used.

For each nonzero full-rank binary matrix $A$ in this set, take both representatives
\[
 \frac{A}{s},\qquad s^2=\det A,
\]
and denote the resulting set by $\Bgroup$.

\begin{lemma}[Decomposition factors and the finite binary group]
\label{en:lem:framework-available-closure}
For every finite set $\mathcal H\subseteq\mathscr C(\mathcal F)$, either
$\Holant(\mathcal F)\in\FP$, or
\[
 \Holant(\mathcal F\cup\mathcal H)\equiv_{\mathrm T}\Holant(\mathcal F).
\]
After excluding the tractable alternative of the decomposition theorem and the
established dichotomy cases with a nonzero odd-arity function, a rank-one binary
function, or a binary function of infinite projective order, $\Bgroup$ is a
finite group. All its binary directions can be used simultaneously in a fixed
finite equivalent problem.
\end{lemma}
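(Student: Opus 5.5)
We argue by induction on the number of construction steps that place a function into $\mathscr C(\mathcal F)$. The invariant we carry is the following: for every function $h$ obtained in at most $k$ steps, either $\Holant(\mathcal F)\in\FP$, or $\Holant(\mathcal F\cup\{h\})\equiv_{\mathrm T}\Holant(\mathcal F)$. Since $\mathcal H$ is finite, we handle it by adding its members one at a time and applying the invariant repeatedly. Each added function is fixed, so every resulting reduction is fixed in advance. The steps are handled as follows.

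\textbf{Base case and simple steps.} The base case is immediate. A gadget step is handled by associativity (Proposition on associativity): the gadget is built from functions already shown available, so we replace each of its vertices. A scaling step is handled by Lemma~\ref{en:lem:per-signature-scaling}.

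\textbf{Decomposition step.} The nontrivial step is decomposition. Suppose a nonzero available $h=f\otimes g$ is split. We apply Theorem~\ref{en:thm:decomposition} to the current finite set $\mathcal F\cup\mathcal H_0$, where $\mathcal H_0$ is the part already adjoined. Either $\Holant(\mathcal F\cup\mathcal H_0\cup\{h\})\in\FP$, which by the inductive equivalence gives $\Holant(\mathcal F)\in\FP$, or the factors $f,g$ can be adjoined with Turing equivalence.

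We must also check that the factors can be taken over $L$. Choose an index $y_0$ with $g(y_0)\neq0$. Then $f(x)=h(x,y_0)/g(y_0)$, so we may take $f'=h(\cdot,y_0)$, whose values lie in $L$. Next pick $x_0$ with $f'(x_0)\ne0$ and set $g'=h(x_0,\cdot)/f'(x_0)$, which also has values in $L$. By rank one, $h=f'\otimes g'$. Since $f,g$ differ from $f',g'$ only by scalars, the invariant transfers.

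\textbf{The group statement.} Once the excluded dichotomy cases are removed, every nonzero binary function in $\mathscr C(\mathcal F)$ is either rank one or full rank. A rank-one binary function decomposes into two unary factors, which are nonzero of odd arity and fall into the excluded case of Theorem~\ref{en:thm:odd-arity-quasi}. So every nonzero binary is invertible. Normalizing by $s$ with $s^2=\det A$ gives determinant $1$ and sign ambiguity $\pm$, so $\Bgroup\subseteq SL_2$ is closed under negation. It is closed under products, since serial connection is a gadget and $\det$ is multiplicative. It contains $I$.

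\textbf{Main obstacle: finiteness and inverses.} Finite projective order means each element $A/s$ satisfies $(A/s)^m=\pm I$ for some $m$, so inverses are powers and $\Bgroup$ is a group of torsion elements. The real difficulty is showing there are only finitely many elements. My plan is to use the fact that all entries lie in a fixed number field $L(\sqrt{\cdot})$ of bounded degree over $L$. Eigenvalues of torsion elements are roots of unity whose degree is bounded, so only finitely many traces occur. A torsion subgroup of $SL_2$ over a number field with finitely many traces is finite, by Burnside/Schur: a torsion linear group is locally finite, and an irreducible group with finitely many traces is finite.

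I expect this step to be the main obstacle. The reducible case needs separate care: there a unipotent part of infinite order would have to occur, contradicting finite projective order. Finally, since $\Bgroup$ is finite, it has a finite generating set of gadgets, and adjoining this finite list via the first part gives one fixed equivalent problem in which all binary directions are used simultaneously.
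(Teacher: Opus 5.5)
Your proposal is correct. Its first and last parts follow the paper: you adjoin the functions of merged derivations in order, choose $L$-valued factors via $F(x,y)=F(x,b)F(a,y)/F(a,b)$, and finally adjoin finitely many representatives. The genuine difference is the finiteness step.

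\textbf{The paper's route.} The paper uses the eigenvalue ratio of $A$, which is a primitive $n$th root of unity of degree at most $2$ over $L$. This gives $\varphi(n)\le 2d$ and hence a uniform exponent for $\Bgroup$. It then cites Burnside's finite-exponent theorem.

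\textbf{Your route.} You bound the eigenvalues of each $A/s$ directly, conclude that only finitely many traces occur, and prove finiteness by hand.
\begin{itemize}
\item In the irreducible case, $\Bgroup$ spans $M_2(\mathbb C)$. Each element is then determined by its traces against a spanning subset, and these take finitely many values.
\item In the reducible case, the diagonal character has finite image. Its kernel consists of unipotents, which must be trivial because a nontrivial unipotent has infinite projective order.
\end{itemize}

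\textbf{Comparison.} Your argument essentially unpacks Burnside's theorem in dimension two. It is more self-contained and makes visible where the excluded infinite-order case is used, at the cost of a case split that the paper's citation hides. With the trace argument in hand, the appeal to Schur's local finiteness is superfluous. The trace lemma also needs no number-field hypothesis; it holds over $\mathbb C$.

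\textbf{Two wording points.}
\begin{itemize}
\item State the inductive invariant for the cumulative finite set of functions adjoined so far, as you in fact do in the decomposition step. A gadget may use several earlier functions at once, and separate equivalences $\Holant(\mathcal F\cup\{h_i\})\equiv_{\mathrm T}\Holant(\mathcal F)$ do not combine by themselves.
\item There is no single field $L(\sqrt{\cdot})$ containing all entries of $\Bgroup$, because $\sqrt{\det A}$ varies with $A$. What you actually use, and what is true, is that each $A/s$ has eigenvalues of degree at most $4$ over $L$.
\end{itemize}
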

\begin{proof}
First, decomposition need not enlarge the number field. If a nonzero function
$F(x,y)$ decomposes across two variable sets, choose $F(a,b)\ne0$. Then
\[
 F(x,y)=F(x,b)\,\frac{F(a,y)}{F(a,b)}.
\]
Both factors on the right are over $L$. Gadget constructions use only finite
additions and multiplications and also preserve $L$. Each of finitely many
available functions has a finite derivation. Merge these derivations and adjoin
the required functions in order. Gadgets and nonzero scalings preserve
equivalence. At each decomposition, Theorem~\ref{en:thm:decomposition} either gives
an $\FP$ algorithm for the original problem, or preserves equivalence upon
adjoining both factors. Composing finitely many reductions proves the first
claim, with simultaneous and arbitrarily repeated use of these auxiliaries.

For finiteness, follow the first version. A rank-one binary function decomposes
into nonzero unary functions. A binary matrix of infinite projective order has
a Jordan block or an eigenvalue ratio allowing chain interpolation of a rank-one
matrix. These cases are handled by the odd-arity dichotomy. Every remaining
nonzero binary matrix is full rank and has finite projective order.
Write $d=[L:\mathbb Q]$. If $A$ has projective order $n$, its eigenvalue ratio is
a primitive $n$th root of unity in an extension of $L$ of degree at most two.
Thus $\varphi(n)\le2d$. Only finitely many $n$ satisfy this inequality, and each
normalized matrix has order dividing $2n$, so all these orders have a common
finite multiple. Serial connections and port exchanges give closure under
multiplication and transpose; an inverse is a positive power. Hence $\Bgroup$
is a complex matrix group of finite exponent. The finite-exponent theorem used
in the first version~\cite[Theorem~19A.9(1)]{en:legacy-a} implies that $\Bgroup$ is
finite.

Finally, choose a representative of each binary direction and a finite
derivation of that representative. There are only finitely many directions,
so the first paragraph allows all representatives to be adjoined to one fixed
finite equivalent problem. Its ordinary gadgets remain in
$\mathscr C(\mathcal F)$, and every binary direction has been adjoined, so its
complete ordinary binary group is exactly $\Bgroup$. Any further finite set of
available functions required below can likewise be adjoined simultaneously;
they remain in this closure and therefore create no binary directions outside
the group.
\end{proof}

Fix such an equivalent problem and continue to denote its function set by
$\mathcal F$. The auxiliary functions used in the proof are fixed in advance
and may then be connected in gadgets and shortloops. We classify fixed
function sets by their groups; the finite derivations in the lemma do not
require an element-by-element computation of an infinite closure.

Define the projective quotient group by
\[
 \Ggroup=\Bgroup/\{I,-I\}.
\]
Write $[A]=\{A,-A\}$, with multiplication $[A][B]=[AB]$. The classification of
finite rotation groups determines the isomorphism type of $\Ggroup$; changes
of basis for the actual functions still follow the holographic transformation
above.

Fixing a group both provides all its binary directions as auxiliaries and
forbids other binary directions. A larger group generally provides more
auxiliary power; a smaller group imposes stronger restrictions on the binary
functions, giving stronger self-discipline. These two roles lead to nine cases.

The dihedral rotation group $D_{2n}$ has order $2n$; ``odd'' and ``even'' refer
to the parity of $n$.
\begin{center}
\begin{tabular}{cl}
\toprule
Case & Projective group type\\
\midrule
1 & Trivial cyclic group $C_1$\\
2 & Cyclic group $C_2$ of order two\\
3 & Higher-order cyclic group $C_k$, $k\ge3$\\
4 & Odd dihedral group $D_{2n}$, odd $n\ge3$\\
5 & Klein four-group $D_4$\\
6 & Larger even dihedral group $D_{2n}$, even $n\ge4$\\
7 & Tetrahedral rotation group\\
8 & Octahedral rotation group\\
9 & Icosahedral rotation group\\
\bottomrule
\end{tabular}
\end{center}

Within each group type, distinguish the upper and lower cases according to
whether an indecomposable quaternary function is obtainable. In the upper case,
every nonzero quaternary function decomposes. Odd-arity factors have already
been handled by the odd-arity dichotomy, leaving only a binary-by-binary
decomposition. The decomposition lemma makes both binary factors available,
so their normalized representatives belong to this group. Thus
\emph{the upper-case quaternary functions are exactly binary pairing products
from the group}; the lower case has an indecomposable quaternary function.
In fixed $N$-edge coordinates, use the corresponding binary vertex functions
in these pairing products.

The main text begins with Case~5, follows the third version for the Klein upper
case, and completes the twisted form. Its gadget arguments also apply directly
under the ordinary-closure conditions stated there.

\section{Transpose closure and the two normal forms of the Klein group}\label{en:sec:klein}

The next three sections start from the abstract condition $\Ggroup\cong\Kfour$ supplied by the common framework. The proof has three parts. First, an orthogonal holographic transformation preserving equality edges puts the entire problem into either the standard or the twisted Klein form. Second, in the standard form we prove a common-phase property and carry out realnumberizing, then apply the Real Boolean Holant dichotomy. Third, in the twisted form we use directed cycles and the transpose action to rule out a minimal counterexample satisfying the scattered-pearl condition. The complexity analysis of both forms assumes the quaternary decomposition condition in Assumption~\ref{en:ass:klein-upper}.

\subsection{From the abstract \texorpdfstring{$\Kfour$}{K4} to two matrix normal forms}\label{en:subsec:klein-canonicalization}

Use the determinant-one Pauli notation
\[
 I=\begin{pmatrix}1&0\\0&1\end{pmatrix},\qquad
 X=\begin{pmatrix}0&i\\i&0\end{pmatrix},\qquad
 Y=\begin{pmatrix}0&1\\-1&0\end{pmatrix},\qquad
 Z=\begin{pmatrix}i&0\\0&-i\end{pmatrix},
\]
and write
\[
  \Qeight=\{\pm I,\pm X,\pm Y,\pm Z\}.
\]
These matrices satisfy
\[
 X^2=Y^2=Z^2=-I,\qquad XY=-Z,\quad YZ=-X,\quad ZX=-Y,
\]
with the sign reversed when the multiplication order is reversed.

\begin{lemma}[Lifting the projective Klein group to the quaternion group]\label{en:lem:klein-lift-q8}
Suppose $\Bgroup\leq\mathrm{SL}_2(\mathbb C)$, $\{I,-I\}\leq\Bgroup$, and
$\Bgroup/\{I,-I\}\cong\Kfour$. Then $\Bgroup$ is isomorphic to $\Qeight$.
More precisely, if $R,S\in\Bgroup$ represent two distinct nontrivial projective classes, then
\[
 R^2=S^2=-I,\qquad RS=-SR,\qquad
 \Bgroup=\{\pm I,\pm R,\pm S,\pm RS\}.
\]
\end{lemma}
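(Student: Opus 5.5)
The plan is to work entirely inside $\mathrm{SL}_2(\mathbb C)$ and use two facts about $2\times2$ matrices of determinant one. First, by Cayley--Hamilton, $M^2=(\Tr M)M-I$. Second, the only element of order two in $\mathrm{SL}_2(\mathbb C)$ is $-I$. The second fact holds because $M^2=I$ makes $M$ diagonalizable with eigenvalues $\pm1$ whose product is $1$. Hence the eigenvalues are both $1$ or both $-1$, so $M=\pm I$.

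Take $R\in\Bgroup$ representing a nontrivial class of $\Ggroup\cong\Kfour$. Since every nontrivial element of $\Kfour$ has order two, $[R]^2=[I]$, so $R^2=\pm I$. If $R^2=I$, the fact above gives $R=\pm I$, contradicting nontriviality of $[R]$. Hence $R^2=-I$, and likewise $S^2=-I$.

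Since $[R]\ne[S]$ are distinct nontrivial elements of $\Kfour$, their product $[RS]$ is the third nontrivial element. Applying the previous step to $RS$ gives $(RS)^2=-I$. Expand $RSRS=-I$ and multiply on the left by $R$ and on the right by $S$, using $R^2=S^2=-I$:
\[
  R(RSRS)S=R^2\,SR\,S^2=SR .
\]
The left side also equals $R(-I)S=-RS$. Therefore $SR=-RS$, which is the anticommutation relation.

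For the enumeration, $\Bgroup$ is the full preimage of $\Ggroup$ under the quotient by $\{I,-I\}$, and $\{I,-I\}\le\Bgroup$. The four classes $[I],[R],[S],[RS]$ exhaust $\Kfour$, and each class is a pair $\{A,-A\}$. So $\Bgroup=\{\pm I,\pm R,\pm S,\pm RS\}$, which has eight elements.

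The isomorphism with $\Qeight$ is then the map $I\mapsto I$, $R\mapsto X$, $S\mapsto Y$, $RS\mapsto XY=-Z$, extended by signs. The paper records $X^2=Y^2=-I$ and $XY=-YX$, so both groups are generated by two elements satisfying the same relations $a^2=b^2=-1$, $ab=-ba$, with $-1$ central. The map is therefore a well-defined bijective homomorphism. To confirm this, one checks that the multiplication table of $\{\pm I,\pm R,\pm S,\pm RS\}$ is determined by these relations: for example, $R(RS)=-S$, $(RS)R=S$, and so on. The same computation holds for the Pauli matrices.

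No step is a genuine obstacle. The only point needing care is the claim that $-I$ is the unique involution in $\mathrm{SL}_2(\mathbb C)$. Diagonalizability follows because $M^2=I$ means the minimal polynomial divides $t^2-1$, which has distinct roots.
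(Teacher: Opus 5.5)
Your proof is correct. It agrees with the paper on ruling out $R^2=I$, since $-I$ is the only involution in $\mathrm{SL}_2(\mathbb C)$, and on the final enumeration. It differs in how you obtain the anticommutation.

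\textbf{The paper's route.} It uses only that $\Ggroup$ is abelian, which places the commutator $RSR^{-1}S^{-1}$ in $\{\pm I\}$. It then excludes $+I$ by a matrix argument. Diagonalizing $R$, whose eigenvalues $i,-i$ are distinct, forces a commuting $S$ to be diagonal. The conditions $S^2=-I$ and $\det S=1$ then give $S=\pm R$, contradicting $[R]\ne[S]$.

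\textbf{Your route.} You apply your involution fact a third time, to $RS$, which represents the third nontrivial class. You then cancel $R^2=S^2=-I$ from $RSRS=-I$.

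\textbf{Comparison.} Your version is shorter and purely group-theoretic. Everything follows from one lemma (a unique, central involution) plus the exponent two of $\Kfour$, with no eigenvector or centralizer computation; it is the standard recognition argument for $Q_8$. The paper's version stays closer to the matrix picture of the subsequent normal-form theorem. There a representative is diagonalized, and commuting or anticommuting partners are read off entrywise.

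You also make the isomorphism with $\Qeight$ explicit, which the paper leaves implicit. What makes your map well defined is that both groups have the same unique eight-element normal form $\pm R^iS^j$, $i,j\in\{0,1\}$, with products reduced by the relations; merely sharing the relations would not suffice. Finally, the Cayley--Hamilton identity you state at the outset is never used and can be dropped.
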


\begin{proof}
The square of the nontrivial class $[R]$ is $[I]$, so $R^2=I$ or $R^2=-I$.
If $R^2=I$, then $R\ne\pm I$ forces its two eigenvalues to be $1,-1$, contradicting $\det R=1$.
Thus $R^2=-I$, and similarly $S^2=-I$.

Since $[R],[S]$ commute in the quotient, $RSR^{-1}S^{-1}=\pm I$.
If the right-hand side were $I$, then $R,S$ would commute. After diagonalizing $R$, whose two distinct eigenvalues are $i,-i$, the matrix $S$ would also be diagonal. Together with $S^2=-I$ and $\det S=1$, this would force $S=\pm R$, contrary to the distinctness of the two classes.
The commutator is therefore $-I$, or equivalently $RS=-SR$. The quotient has four elements, so the eight displayed matrices exhaust $\Bgroup$.
\end{proof}

Transpose is not a homomorphism of matrix multiplication, but it induces a homomorphism on the abelian group $\Ggroup$:
\[
 \vartheta([A])=[A^{\mathsf T}].
\]
Indeed, $[(AB)^{\mathsf T}]=[B^{\mathsf T}A^{\mathsf T}]
=[A^{\mathsf T}B^{\mathsf T}]$, and $\vartheta^2$ is the identity map.
It fixes $[I]$, so its permutation of the other three elements has only two possibilities: it fixes all three, or fixes one and exchanges the other two.

\begin{theorem}[Two orthogonal normal forms of the Klein group]\label{en:thm:two-klein-normal-forms}
Suppose $\Bgroup$ satisfies the hypotheses of Lemma~\ref{en:lem:klein-lift-q8}, has algebraic matrix entries, and is closed under transpose. Then there is an invertible algebraic matrix $O$ with $O^{\mathsf T}O=I$ such that
\[
 O^{\mathsf T}\Bgroup O=\{\pm M:M\in\Kstd\}
 \quad\text{or}\quad
 O^{\mathsf T}\Bgroup O=\{\pm M:M\in\Ktw\},
\]
where
\[
 \Kstd=\{I,X,Y,Z\}
\]
is the standard Klein form, and
\[
\begin{gathered}
 \Ktw=\{I,Z,X',Y'\},\\
 X'=\frac1{\sqrt2}\begin{pmatrix}0&-1+i\\1+i&0\end{pmatrix},\qquad
 Y'=\frac1{\sqrt2}\begin{pmatrix}0&1+i\\-1+i&0\end{pmatrix}
\end{gathered}
\]
is the twisted Klein form. Their projective transpose actions are, respectively,
\[
 [M^{\mathsf T}]=[M]\quad(M\in\Kstd),
\]
and
\[
 [I^{\mathsf T}]=[I],\quad[Z^{\mathsf T}]=[Z],\quad
 [(X')^{\mathsf T}]=[Y'],\quad[(Y')^{\mathsf T}]=[X'].
\]
\end{theorem}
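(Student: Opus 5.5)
The plan is to classify $\Bgroup$ by how transpose acts on the three nontrivial classes. By Lemma~\ref{en:lem:klein-lift-q8}, write $\Bgroup=\{\pm I,\pm R,\pm S,\pm RS\}$ with $R^2=S^2=-I$ and $RS=-SR$. Every element $A\ne\pm I$ has $\det A=1$ and $A^2=-I$, so $\Tr A=0$ and its eigenvalues are $\pm i$. The involution $\vartheta$ permutes the three nontrivial classes, so it fixes at least one of them, say $[R]$. Then $R^{\mathsf T}=\pm R$, and $R$ is either symmetric or skew-symmetric.

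\textbf{Step 1: a fixed class can be taken symmetric.} A skew-symmetric trace-zero matrix of determinant one equals $\pm Y$. Suppose $R=\pm Y$. Write $S=\begin{pmatrix}a&b\\c&-a\end{pmatrix}$. Expanding $SY=-YS$ forces $b=c$, so $S$ is symmetric and $[S]$ is also fixed. Hence $[RS]$ is fixed as well, because $\vartheta$ is a homomorphism.
\begin{itemize}[nosep]
\item In the twisted case, exactly one class is fixed, so the fixed class cannot be $[Y]$; it therefore has a symmetric representative.
\item In the fixed-all case, at most one class can be $[Y]$, so some fixed class has a symmetric representative.
\end{itemize}
In either case we may take $R$ symmetric.

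\textbf{Step 2: orthogonal diagonalization of $R$.} This is the step I expect to be the main technical point. Let $v,w$ be eigenvectors of $R$ for the eigenvalues $i$ and $-i$. Symmetry gives $i\,v^{\mathsf T}w=v^{\mathsf T}Rw=-i\,v^{\mathsf T}w$, so $v^{\mathsf T}w=0$. If $v^{\mathsf T}v=0$ as well, then $v$ would be bilinearly orthogonal to the basis $\{v,w\}$, contradicting nondegeneracy of $x^{\mathsf T}y$. Thus $v^{\mathsf T}v\ne0$ and $w^{\mathsf T}w\ne0$. Normalize by algebraic square roots and take $O=(v\;w)$. This $O$ is algebraic and satisfies $O^{\mathsf T}O=I$, and $O^{\mathsf T}RO=\diag(i,-i)=Z$. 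The transformation preserves $\Bgroup$ as a set up to conjugation and commutes with transposition, since $(O^{\mathsf T}AO)^{\mathsf T}=O^{\mathsf T}A^{\mathsf T}O$.

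\textbf{Step 3: pinning down $S$.} Now $R=Z$. Anticommutation with $Z$ forces $S=\begin{pmatrix}0&b\\c&0\end{pmatrix}$ with $bc=-1$.
\begin{itemize}[nosep]
\item \emph{Fixed-all case.} Here $S^{\mathsf T}=\pm S$. If $S$ is symmetric, then $b=c$, $b^2=-1$ and $S=\pm X$. If $S$ is skew, then $S=\pm Y$. The third class is $[ZS]$, which is the remaining one of $[X],[Y]$. This gives $\Kstd$.
\item \emph{Twisted case.} Here $S^{\mathsf T}=\pm ZS=\pm\begin{pmatrix}0&ib\\-ic&0\end{pmatrix}$, which forces $c=\pm ib$ and hence $b^2=\pm i$. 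So $b$ ranges over $\pm\frac{1\pm i}{\sqrt2}$.
\end{itemize}

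\textbf{Step 4: normalizing the twisted case.} Use the algebraic orthogonal matrices $\diag(1,-1)$ and $N$. Conjugating by $\diag(1,-1)$ fixes $Z$ and flips the signs of $b,c$. Conjugating by $N$ sends $Z$ to $-Z$ and swaps $b$ and $c$. Together with the freedom $S\mapsto-S$ and $S\mapsto ZS$ (choosing which swapped class is called $[S]$), these operations reduce $S$ to $X'$, and then $ZX'=\pm Y'$. Finally, check directly that $(X')^{\mathsf T}=\pm Y'$ and that $Z^{\mathsf T}=Z$. The same direct check confirms $[M^{\mathsf T}]=[M]$ for every $M\in\Kstd$, which gives the stated transpose actions.
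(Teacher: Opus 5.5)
Your proof is correct and follows essentially the paper's route: orthogonally diagonalize a symmetric fixed class to $Z$ using its non-isotropic eigenvectors, then solve $S^{\mathsf T}=\pm S$ or $S^{\mathsf T}=\pm ZS$ entrywise. The only reorganization is that you skip the paper's separate $R=\pm Y$ branch by noting that a fixed class $[Y]$ forces all three classes to be fixed, so a symmetric fixed class always exists; your Step~4 conjugations by $\diag(1,-1)$ and $N$ are unnecessary, since they leave the group unchanged and every solution of $b^2=\pm i$ already gives $S\in\{\pm X',\pm Y'\}$ with $ZY'=X'$.
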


\begin{proof}
Choose a representative $R$ of a nontrivial class fixed by $\vartheta$.
Then $R^{\mathsf T}=R$ or $R^{\mathsf T}=-R$.

First suppose $R^{\mathsf T}=-R$. A skew-symmetric $2\times2$ matrix has the form
$c\begin{pmatrix}0&1\\-1&0\end{pmatrix}$; the condition $\det R=1$ gives $c=\pm1$, so $R=\pm Y$.
Choose a representative $S$ of another nontrivial class. Comparing the four entries in $SY=-YS$ gives
\[
 S=\begin{pmatrix}a&b\\b&-a\end{pmatrix},\qquad a^2+b^2=-1.
\]
The matrix $S$ is symmetric, with eigenvalues $i,-i$. Let $u,v$ be corresponding eigenvectors.
Symmetry gives $u^{\mathsf T}v=0$. If $u^{\mathsf T}u=0$, then $u$ is orthogonal to both vectors in the basis $u,v$, contradicting the nondegeneracy of the bilinear form. The same argument applies to $v$.
Thus neither eigenline is isotropic. Normalize the two eigenvectors and use them as the columns of a complex orthogonal matrix $O$. Then $O^{\mathsf T}SO=Z$.
Also $O^{\mathsf T}YO=\pm Y$, so the entire group takes the standard form.

Now suppose $R^{\mathsf T}=R$. The same normalization of eigenvectors gives a complex orthogonal matrix $O$ with $O^{\mathsf T}RO=Z$.
Since $(O^{\mathsf T}AO)^{\mathsf T}=O^{\mathsf T}A^{\mathsf T}O$, this change of basis preserves the projective transpose action. In these coordinates, continue to denote the other generator by $S$.
From $SZ=-ZS$ and $S^2=-I$, we obtain
\[
 S=\begin{pmatrix}0&b\\c&0\end{pmatrix},\qquad bc=-1.
\]
If $\vartheta$ fixes all three nontrivial classes, then $S^{\mathsf T}=\pm S$.
The plus sign gives $b=c$ and $b^2=-1$; the minus sign gives $c=-b$ and $b^2=1$.
Both cases yield the standard form $\{\pm I,\pm X,\pm Y,\pm Z\}$.

The only remaining possibility is that $\vartheta$ fixes $[Z]$ and exchanges $[S]$ with $[ZS]$.
Hence
\[
 S^{\mathsf T}=\delta ZS,\qquad \delta\in\{1,-1\}.
\]
Entrywise comparison gives $c=\delta ib$ and $b^2=i/\delta$.
After changing the sign of $S$ if necessary, we may take $S=Y'$ when $\delta=1$, and $S=X'$ when $\delta=-1$; the remaining one of these two matrices represents the third nontrivial class.
This is precisely the twisted form. All entries of $O$ are obtained from the original algebraic entries by finitely many field operations and square roots, so $O$ can still be chosen algebraic.
\end{proof}

The two projective transpose actions are the identity permutation and a transposition. Since an orthogonal change of basis preserves this permutation type, the two forms are not orthogonally equivalent.

Because $O^{\mathsf T}O=I$, Proposition~\ref{en:prop:holographic-reduction} shows that the transformation in Theorem~\ref{en:thm:two-klein-normal-forms} puts the entire function-net into the standard or twisted form.
A binary matrix $A$ becomes $O^{\mathsf T}AO$, and ordinary equality edges are preserved.

\subsection{Ports, pairings, and contraction conventions}\label{en:subsec:klein-contraction}

For two ports $p,q$ of a function $F$, with $p\ne q$, insert a binary function $M$ along the chosen cycle direction and sum. Write
\begin{equation}\label{en:eq:cycle-contraction}
  \Gamma_M^{pq}(F)
  =\sum_{a,b\in\{0,1\}}M(a,b)
   F(\ldots,x_p=a,\ldots,x_q=b,\ldots).
\end{equation}
If the binary factor on the other side is written as $N(b,a)$ along the cycle, the local contraction is
\begin{equation}\label{en:eq:trace-convention}
  \sum_{a,b}M(a,b)N(b,a)=\Tr(MN).
\end{equation}
Equations~\eqref{en:eq:cycle-contraction}--\eqref{en:eq:trace-convention} define a bilinear contraction.
We also use the dual basis of this pairing to read off coordinates after a change of basis.

For $\mathcal K\in\{\Kstd,\Ktw\}$, a nonzero function of arity $2d$ is called a $\mathcal K$-basis function if, for some pairing, it has the form
\[
 c\prod_{j=1}^d M_j(x_{q_j},x_{p_j}),\qquad
 c\in\mathbb C^\times,\quad M_j\in\mathcal K.
\]
Each pair is recorded in the order $(p_j,q_j)$; writing its factor as $M_j(q_j,p_j)$ makes Equation~\eqref{en:eq:cycle-contraction} read as $\Tr(MM_j)$ along the cycle.
A nonzero nullary constant is regarded as a basis function with no binary factors.

The two matrix bases have the same cycle-pairing table. Ordering each basis by $I$ followed by its three nontrivial matrices gives
\begin{equation}\label{en:eq:klein-cycle-gram}
 \bigl(\Tr(MN)\bigr)_{M,N\in\mathcal K}
 =2\diag(1,-1,-1,-1).
\end{equation}
Indeed, $I^2=I$, each nontrivial matrix squares to $-I$, and the product of two distinct nontrivial matrices has trace zero.
Accordingly, with
\[
 c_I=\frac12,\qquad c_M=-\frac12\quad(M\ne I),
\]
there is a unique expansion on the variables $p,q$:
\begin{equation}\label{en:eq:klein-pair-expansion}
 F=\sum_{M\in\mathcal K}c_M M(x_q,x_p)\otimes
 \Gamma_M^{pq}(F).
\end{equation}
More explicitly, the bilinear dual of $M(x_q,x_p)$ is $c_M M(x_p,x_q)$, because
\[
 \sum_{a,b\in[2]}c_M M(a,b)N(b,a)
 =c_M\Tr(MN)=\mathbf1_{M=N}.
\]
Taking these dual functions pair by pair gives the dual of the corresponding tensor-product basis.

\subsection{The quaternary decomposition condition (upper case)}\label{en:subsec:klein-upper-condition}

Fix a finite algebraic function set $\mathcal F$, already transformed orthogonally as in Theorem~\ref{en:thm:two-klein-normal-forms}.
Let $\mathscr R=\mathscr R(\mathcal F)$ be its ordinary gadget closure: only the original functions, equality edges, port permutations, disjoint unions, and internal contractions are used.
Throughout the following argument, ``available'' means ordinarily realizable in this fixed closure.
By Lemma~\ref{en:lem:per-signature-scaling}, we may ignore known nonzero algebraic constants whose effects can be compensated.

\begin{assumption}[Quaternary decomposition condition (upper case)]\label{en:ass:klein-upper}
Let $\mathcal K=\Kstd$ or $\mathcal K=\Ktw$, and assume:
\begin{enumerate}[label=\textup{(\arabic*)}]
 \item All four binary directions of $\mathcal K$ are realizable in $\mathscr R$, and every nonzero binary function in $\mathscr R$ is a nonzero scalar multiple of a matrix in $\mathcal K$.
 \item Every nonzero quaternary function in $\mathscr R$ decomposes, along some pairing, as the tensor product of two binary functions.
\end{enumerate}
\end{assumption}

The first item is the completeness condition for the Klein binary group.
The second is what the main text calls the \emph{quaternary decomposition condition (upper case)}: it requires every ordinarily realizable quaternary function to admit a binary-by-binary decomposition, rather than requiring this only of the input functions.

\begin{lemma}[Immediate consequences of the upper-case condition]\label{en:lem:klein-upper-basic}
Under Assumption~\ref{en:ass:klein-upper}, $\mathscr R$ contains no nonzero odd-arity function, and every nonzero quaternary function is a $\mathcal K$-basis function.
\end{lemma}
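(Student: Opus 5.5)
The first claim follows from the pair expansion \eqref{en:eq:klein-pair-expansion}, used to lower arity two at a time inside $\mathscr R$. The second follows from contracting one factor of the decomposition in item~(2) against the Klein basis.

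\textbf{No odd-arity functions.} Suppose $F\in\mathscr R$ is nonzero of odd arity $2d+1$ with $d\ge1$, and pick two ports $p\ne q$.
\begin{enumerate}[label=\textup{(\roman*)}]
 \item By \eqref{en:eq:klein-pair-expansion}, $F=\sum_{M\in\mathcal K}c_M M(x_q,x_p)\otimes\Gamma_M^{pq}(F)$.
 \item Since $F\ne0$, at least one $\Gamma_M^{pq}(F)$ is nonzero.
 \item Each $\Gamma_M^{pq}(F)$ is ordinarily realizable. By item~(1) of Assumption~\ref{en:ass:klein-upper}, a nonzero multiple of $M$ lies in $\mathscr R$. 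Attaching that binary vertex to ports $p,q$ with equality edges, in the orientation fixed by \eqref{en:eq:cycle-contraction}, realizes $\Gamma_M^{pq}(F)$ up to a known nonzero scalar.
 \item The result is a nonzero function in $\mathscr R$ of arity $2d-1$. Iterating yields a nonzero unary $u\in\mathscr R$.
 \item The disjoint union $u\otimes u$ is a nonzero binary function in $\mathscr R$ of rank one.
\end{enumerate}
Every matrix in $\Kstd$ and $\Ktw$ has determinant $1$, so it is full rank. Hence $u\otimes u$ is not a scalar multiple of any matrix in $\mathcal K$, contradicting item~(1).

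\textbf{Quaternary functions.} Let $F\in\mathscr R$ be nonzero quaternary. By item~(2), for some pairing $\{a,b\},\{c,e\}$ we have $F=A(x_a,x_b)\,B(x_c,x_e)$ with $A,B$ nonzero binary functions. I would show that $A$ and $B$ are themselves realizable up to nonzero scalars.
\begin{enumerate}[label=\textup{(\roman*)}]
 \item Contracting ports $c,e$ gives $\Gamma_M^{ce}(F)=\lambda_M A$, where $\lambda_M=\sum_{s,t}M(s,t)B(s,t)$ is the bilinear contraction of $B$ with $M$.
 \item By the Gram computation \eqref{en:eq:klein-cycle-gram}, the pairing is nondegenerate on the four matrices of $\mathcal K$. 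These four matrices therefore span all $2\times2$ matrices.
 \item Since $B\ne0$, some $M\in\mathcal K$ has $\lambda_M\ne0$. Then $\lambda_M A\in\mathscr R$ by the same realization argument as in the first part.
 \item By item~(1), $A$ is a nonzero multiple of a matrix in $\mathcal K$. Symmetrically, contracting ports $a,b$ shows the same for $B$.
\end{enumerate}

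\textbf{Orientation.} It remains to match the convention $M_j(x_{q_j},x_{p_j})$ in the definition of a $\mathcal K$-basis function. Since the order $(p_j,q_j)$ within each pair is ours to record, we simply choose it to agree with the argument order of the factor. Alternatively, Theorem~\ref{en:thm:two-klein-normal-forms} shows that the transpose of a matrix in $\mathcal K$ is $\pm$ a matrix in $\mathcal K$ in both forms, and the sign is absorbed into $c$.

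The only delicate point is step~(iii) of the first part: contraction must be realized inside the ordinary closure $\mathscr R$, not merely as a Turing reduction. This holds because the binary vertices of $\mathcal K$ are realizable by item~(1), and contraction uses only equality edges with a known nonzero scaling. With that in place, both claims reduce to the nondegeneracy in \eqref{en:eq:klein-cycle-gram}.
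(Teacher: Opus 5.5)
Your proof is correct and is essentially the paper's argument. The key fact is that $\mathcal K$ spans the binary space, which you obtain from the pair expansion or the nondegenerate Gram table. Successive nonzero $\mathcal K$-contractions then lower an odd arity to a nonzero unary $u$ with $u\otimes u$ of rank one, and contracting one factor of the item~(2) decomposition against a suitable $\mathcal K$ matrix realizes a nonzero multiple of the other factor, which item~(1) places in $\mathcal K$. The only omission is the arity-one case excluded by your $d\ge1$, which is trivial: $F$ is then already the unary function $u$.
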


\begin{proof}
The four matrices in $\mathcal K$ span the whole space of binary functions.
Thus, for any nonzero function and any variable pair, at least one $\mathcal K$ contraction is nonzero; otherwise every binary slice on that pair would vanish.
If a nonzero odd-arity function existed, successive nonzero contractions would yield a nonzero unary function $u$.
The disjoint union $u\otimes u$ would then be a nonzero rank-one binary function, contradicting the first item of the assumption.

Let $H=g\otimes h$ be a nonzero quaternary function with the binary-by-binary decomposition supplied by the second item.
Since $\mathcal K$ spans the binary space, some matrix in $\mathcal K$ contracts $h$ to a nonzero constant, thereby ordinarily realizing a nonzero multiple of $g$.
Similarly, a nonzero multiple of $h$ is realizable.
The first item therefore makes each factor a nonzero multiple of a $\mathcal K$ matrix, so $H$ is a basis function.
\end{proof}

Here ``decomposable'' is also equivalent to ``binary-by-binary decomposable'' for quaternary functions.
Indeed, a unary-by-ternary decomposition $u\otimes h$, followed by a nonzero $\mathcal K$ contraction on $h$, would give a nonzero rank-one binary function $u\otimes v$, again contradicting the binary condition.

The following simultaneous nonzero contraction is an arity-reduction tool shared by the two forms.

\begin{lemma}[Simultaneous nonzero contraction]\label{en:lem:klein-simultaneous-nonzero}
Let $\mathcal K=\Kstd$ or $\mathcal K=\Ktw$.
If $C,D$ are two nonzero complex-valued functions on the same set of $n\ge3$ variables, then there are a variable pair $p,q$ and $M\in\mathcal K$ such that
\[
 \Gamma_M^{pq}(C)\ne0,\qquad
 \Gamma_M^{pq}(D)\ne0.
\]
\end{lemma}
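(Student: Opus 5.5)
The plan is to reduce to the case $n=3$ by a generic contraction of the extra variables, and then to settle three variables by a finite computation in the Pauli-type basis $\mathcal K$.

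\emph{Reduction to $n=3$.} Fix three variables, say $1,2,3$, and regard $C,D$ as elements of $(\mathbb C^2)^{\otimes3}\otimes(\mathbb C^2)^{\otimes(n-3)}$. A linear functional $\psi$ on the last $n-3$ tensor factors sends $C\mapsto\psi(C)$, a function of $x_1,x_2,x_3$. The functionals with $\psi(C)=0$ form a proper subspace, since $C\neq0$ has some nonzero slice; the same holds for $D$. A vector space over $\mathbb C$ is not a union of two proper subspaces, so some $\psi$ gives $\psi(C)\ne0$ and $\psi(D)\ne0$. The contraction $\Gamma_M^{pq}$ on a pair $p,q\subseteq\{1,2,3\}$ commutes with $\psi$: $\psi(\Gamma_M^{pq}C)=\Gamma_M^{pq}(\psi C)$. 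Hence a good pair and matrix for $\psi(C),\psi(D)$ are good for $C,D$. It therefore suffices to treat $n=3$. Note that $\psi$ is only an auxiliary linear-algebra device, not a gadget, and the lemma is a purely linear statement.

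\emph{Support sets.} For a nonzero $F$ and a pair $p,q$, let $S_F(pq)=\{M\in\mathcal K:\Gamma_M^{pq}(F)\ne0\}$. By the expansion \eqref{en:eq:klein-pair-expansion}, $F\ne0$ forces $S_F(pq)\ne\emptyset$. We must show that $S_C(pq)\cap S_D(pq)\neq\emptyset$ for some pair among $12,13,23$. Suppose not.

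For $n=3$ write
\[
C=\sum_{M\in\mathcal K}c_M\,M(x_2,x_1)\otimes u_M(x_3),
\]
where $u_M=\Gamma_M^{12}(C)$, and similarly $D$ with vectors $v_M$. Contracting instead on the pair $(1,3)$ with $N\in\mathcal K$ turns each term into a vector in $x_2$ of the form $\pm M\,\widetilde N\,u_M$. Here $\widetilde N$ is $N$ or $N^{\mathsf T}$ according to the orientation convention, so up to sign it is again an element of $\mathcal K$. The same holds for the pair $(2,3)$.

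Since $\Bgroup$ is a group of invertible matrices closed under transpose (projectively $\Kfour$), the $N$-component on pair $13$ is $\sum_M\pm c_M (M\widetilde N) u_M$. Each $M\widetilde N$ runs, for fixed $N$, bijectively over the four classes of $\mathcal K$. The twisted transpose action only permutes $[X'],[Y']$, which is why the same argument covers both forms.

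\emph{Main obstacle: the three-variable case.} Deriving a contradiction from disjointness on all three pairs is where I expect the real work. I would proceed as follows.

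First, choose the orientation so that a single element, say $I$, lies in $S_C(12)$. Disjointness then forces $I\notin S_D(12)$, so $v_I=0$.

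Next, write $D$ in terms of the $v_M$ with $M\ne I$. Its supports on pairs $13$ and $23$ are governed by vectors $M\widetilde N v_M$ with $M$ nontrivial, while $C$ has the nonzero term coming from $u_I$.

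I would then exploit that the four vectors $\{Ku:K\in\mathcal K\}$ span $\mathbb C^2$, with only rank-type cancellations possible. I expect to show that the supports of $C$ on pairs $13$ and $23$ must jointly contain the classes hit by $D$; in other words, $S_C(13)\cup S_C(23)$ cannot avoid both $S_D(13)$ and $S_D(23)$.

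If this bookkeeping gets unwieldy, I would fall back on a dimension count. The conditions "$\Gamma_N^{pq}(D)=0$ for all $N\in S_C(pq)$, on all three pairs" impose more than $8$ independent linear constraints on the $8$-dimensional $D$ whenever $C\neq0$. I would verify this by checking, over the finitely many possible support patterns of $C$, that the constraint matrix built from the Gram table \eqref{en:eq:klein-cycle-gram} has full rank.
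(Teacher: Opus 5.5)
Your reduction to $n=3$ is correct. The paper's version is simpler: it fixes the extra variables separately, at a nonzero value of $C$ and at a nonzero value of $D$. A common $\psi$ is not needed, because the two contractions only have to be nonzero independently.

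The gap is the ternary case, which carries the whole content of the lemma and is not proved in your proposal. The normalization $I\in S_C(12)$ needs an argument, for example a common port transformation by a $\mathcal K$ matrix applied to both $C$ and $D$. Even granting it, the next step, that $S_C(13)\cup S_C(23)$ cannot avoid $S_D(13)$ and $S_D(23)$, is only the goal restated.

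The dimension-count fallback does not work as stated either. The number of scalar equations, $2|S_C(pq)|$ per pair, says nothing about their independence. Moreover, the constraint subspace for $D$ depends on the pattern $(S_C(12),S_C(13),S_C(23))$. For the pattern $S_C(pq)=\{I\}$ on all three pairs there are only six equations, so a nonzero $D$ survives. You would therefore first need to know which patterns a nonzero ternary $C$ can actually have, and that is exactly the missing step.

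The missing idea has two stages.

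\textbf{Branch count.} Suppose $|S_C(12)|=1$. The pair expansion gives $C=N(x_2,x_1)v(x_3)$ with $N$ invertible. Your own formulas then give $\Gamma_M^{13}(C)=NMv$ and $\Gamma_M^{23}(C)=N^{\mathsf T}Mv$, all nonzero. Hence $\sum_{pq}|S_C(pq)|\ge 6$ for every nonzero ternary $C$. Disjointness gives $|S_C(pq)|+|S_D(pq)|\le 4$ on each pair. Together these force all six counts to equal $2$, with complementary supports on every pair.

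\textbf{Eigenvector argument.}
\begin{enumerate}
\item Every complementary split of $\mathcal K$ into two $2$-sets is the eigenspace split $\{I,R\}$ versus $\mathcal K\setminus\{I,R\}$ of $\Omega_R^{pq}\colon N\mapsto RNR$, which acts by $R^{\mathsf T}$ on $p$ and by $R$ on $q$. So on each pair, $C$ and $D$ are eigenvectors with opposite eigenvalues $\pm\lambda_{pq}$.
\item Operators that anticommute on a shared variable cannot have a common eigenvector. This forces a single $R$ for all three pairs.
\item In the twisted form this also forces $R=Z$. On the shared variable of $12$ and $23$, one operator reads $R$ and the other $R^{\mathsf T}$, and $(X')^{\mathsf T}$, $(Y')^{\mathsf T}$ are projectively $Y'$, $X'$, which anticommute with $X'$, $Y'$ respectively.
\item Write $R^{\mathsf T}=\eta_RR$. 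The identity $\Omega_R^{12}\Omega_R^{13}=-\eta_R\Omega_R^{23}$ gives $\lambda_{12}\lambda_{13}\lambda_{23}=-\eta_R$ for $C$. For $D$ the product of the three negated eigenvalues is $+\eta_R$, a contradiction.
\end{enumerate}

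So your remark that the twisted transpose \emph{only permutes} $[X'],[Y']$ and therefore \emph{the same argument covers both forms} is not harmless. The transpose action is precisely what has to be controlled in the twisted case.
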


\begin{proof}
First let $n=3$. Write
\[
 \Sigma_{pq}(C)=\{M\in\mathcal K:\Gamma_M^{pq}(C)\ne0\},\qquad
 s_{pq}(C)=|\Sigma_{pq}(C)|.
\]
The four matrices span the entire binary-function space, so $s_{pq}(C)\ge1$ for every pair.
If, for example, $s_{12}(C)=1$, Equation~\eqref{en:eq:klein-pair-expansion} gives
\[
 C(x_1,x_2,x_3)=N(x_2,x_1)v(x_3)
\]
up to a nonzero constant, where $N$ is invertible and $v\ne0$.
Summing directly over row and column indices gives
\[
 \Gamma_M^{13}(C)=NMv,\qquad
 \Gamma_M^{23}(C)=N^{\mathsf T}Mv.
\]
The matrices involved are invertible, so each of the other two pairs has four nonzero branches.
If no pair has only one nonzero branch, each of the three branch counts is at least two.
In either case,
\[
 s_{12}(C)+s_{13}(C)+s_{23}(C)\ge6.
\]

Suppose, for a contradiction, that $C,D$ have no common nonzero contraction.
On each variable pair, their two supports are disjoint, so the corresponding branch counts sum to at most four.
Summing over the three pairs gives an upper bound of $12$; applying the preceding lower bound to $C$ and $D$ separately gives a lower bound of $12$.
Thus the three branch counts of each function sum to six.
If any branch count were one, the preceding calculation would make the other two equal to four, giving a sum of at least nine.
Consequently all six branch counts are two, and on every variable pair the two supports are complementary two-element sets.

For nontrivial $R\in\mathcal K$ and $p<q$, define the operator
\[
 \Omega_R^{pq}(N)=RNR
\]
on the binary matrix coordinates of $p,q$.
In tensor-product coordinates, it acts by $R^{\mathsf T}$ on variable $p$ and by $R$ on variable $q$.
The quaternion multiplication table shows that the two eigenspaces of $\Omega_R^{pq}$ are spanned by
\[
 \{I,R\},\qquad \mathcal K\setminus\{I,R\},
\]
with eigenvalues $-1,+1$, respectively.
The positive and negative eigendirections for the three choices of $R$ give all six two-element subsets of the four-element set.
Thus, for every pair $pq$, there are a unique nontrivial projective direction $R_{pq}$ and $\lambda_{pq}\in\{1,-1\}$ such that
\[
 \Omega_{R_{pq}}^{pq}C=\lambda_{pq}C,\qquad
 \Omega_{R_{pq}}^{pq}D=-\lambda_{pq}D.
\]

Compare $12$ and $13$.
If the two matrices on the shared variable $1$ represented different projective directions, they would anticommute, and so would the two full operators.
They could not then have the common nonzero eigenvector $C$.
Hence $R_{12}=R_{13}$.
Comparing the shared variable $3$ of $13$ and $23$ similarly gives $R_{13}=R_{23}$.
Denote their common direction by $R$.

In the standard form, $R^{\mathsf T}=\eta_RR$ for $\eta_R\in\{1,-1\}$.
Direct multiplication in the three tensor-product positions gives
\[
 \Omega_R^{12}\Omega_R^{13}=-\eta_R\Omega_R^{23}.
\]
The product of the three eigenvalues for $C$ is therefore $-\eta_R$.
For $D$, all three eigenvalues are reversed, so their product is $\eta_R$; the same operator identity requires it to be $-\eta_R$, a contradiction.

In the twisted form, if the common direction were $X'$ or $Y'$, compare $12$ and $23$.
On the shared variable $2$, the first ordered pair reads $R$ and the second reads $R^{\mathsf T}$.
Since $X',Y'$ are transposes of one another and anticommute, the two full operators would again anticommute.
Thus $R=Z$ is the only possibility.
Now $Z^{\mathsf T}=Z$, and
$\Omega_Z^{12}\Omega_Z^{13}=-\Omega_Z^{23}$.
The same comparison of the three opposite eigenvalues for $C,D$ again gives a contradiction.
This proves the ternary case.

For $n>3$, choose any three common variables.
For each of $C,D$, fix the remaining variables using one of its nonzero function values, obtaining two nonzero ternary slices; the two assignments need not agree.
Apply the ternary result to these slices.
The resulting nonzero contraction values are values of the respective original contraction functions, so the same pair and the same $M$ give nonzero contractions of both $C,D$.
These assignments are used only to select nonzero slices; no pinning functions are assumed available.
\end{proof}

\section{The standard Klein form: common phase and realnumberizing}\label{en:subsec:standard-klein}

In this section, take $\mathcal K=\Kstd$ in Assumption~\ref{en:ass:klein-upper} and follow the common-phase and realnumberizing approach of the third version.
Lemma~\ref{en:lem:klein-upper-basic} already ensures that every nonzero available function has even arity and every quaternary function is a basis function.

Write $V=\operatorname{span}_{\mathbb R}\Kstd$.
We shall repeatedly use one immediate consequence of the quaternary condition: for every nonzero available quaternary function $H$, there is $c\in\mathbb C^\times$ such that every scalar obtained by closing its four ports along any pairing with matrices in $V$ belongs to $c\mathbb R$.
Indeed, superimposing the factor pairing of $H$ and the closing pairing gives either one cycle of length four or two cycles of length two.
The value on each cycle is the trace of a product of $\Qeight$ matrices, hence belongs to $\{0,2,-2\}$.
Real linear expansion then gives the claim for matrices in $V$.

For an ordered pairing $P$ and a choice $\mathbf M$ of tools on its pairs, let
$\zeta_{P,\mathbf M}(F)$ denote the scalar obtained by closing all variables of $F$ along $P$.

\begin{lemma}[Common phase in the standard form]\label{en:lem:standard-klein-common-phase}
In the standard-form case of Assumption~\ref{en:ass:klein-upper}, let $F$ be a nonzero available function of even arity.
Choose any two ordered pairings $P,Q$ and two collections of $\Kstd$ tools.
If both complete closures are nonzero, then
\[
 \frac{\zeta_{Q,\mathbf N}(F)}{\zeta_{P,\mathbf M}(F)}\in\mathbb R^\times.
\]
Consequently, in the $\Kstd$ tensor-product basis determined by any pairing, all nonzero coordinates of $F$ have a common complex phase.
\end{lemma}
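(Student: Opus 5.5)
The plan is to argue by induction on the arity $2d$ of $F$. The two ingredients are the quaternary observation made just before the lemma, and the fact that every partial closure of $F$ is again an available function. The base case $2d=2$ is immediate. By Assumption~\ref{en:ass:klein-upper}(1), $F=cA$ with $A\in\Kstd$. Every closure of $F$ with a tool from $V$ is then $c$ times the trace of a real combination of $\Qeight$ products, so it lies in $c\mathbb R$. The case $2d=4$ is exactly the displayed consequence of the quaternary condition: there is one phase $c$ with every closure of $F$ along any pairing by tools in $V$ lying in $c\mathbb R$. So in general I will prove the stronger statement
\[
 \text{there is } c\in\mathbb C^\times \text{ such that } \zeta_{P,\mathbf M}(F)\in c\mathbb R \text{ for every pairing } P \text{ and every tool collection } \mathbf M \text{ from } V.
\]
Once tools are allowed in $V$ rather than only in $\Kstd$, each closure is real-multilinear in the tools. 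This multilinearity is what will let me avoid vanishing intermediate closures.

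The first reduction handles \emph{elementary moves}. Call two pairings adjacent if they differ only on the four variables of two pairs. Any two perfect matchings on $2d$ points are connected by a chain of such moves: look at the cycles of $P\cup Q$ and shorten one cycle at a time by re-pairing four of its vertices. Now take two closures $(P,\mathbf M)$ and $(Q,\mathbf N)$ whose pairings are equal or adjacent and whose tools agree on all common pairs except at most two. Close $F$ along the common pairs with the common tools. This produces an available function $G$ of arity $2$ or $4$, and both given closures are closures of $G$. If one of them is nonzero, then $G\neq0$, and the base cases give a phase $c_G$ controlling both. Hence the two values are real multiples of each other whenever both are nonzero.

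The main obstacle is passing from these local comparisons to arbitrary $(P,\mathbf M)$ and $(Q,\mathbf N)$. Intermediate closures along a chain of moves may vanish, and then they transmit no phase information. I would handle this with real genericity in $V$. Consider the set of all pairs $(P,\mathbf T)$ with $\mathbf T\in V^{d}$. For fixed $P$, the map $\mathbf T\mapsto\zeta_{P,\mathbf T}(F)$ is a real-multilinear map into $\mathbb C$. It is not identically zero, since by the spanning argument of Lemma~\ref{en:lem:klein-upper-basic} some $\Kstd$ closure of $F$ along $P$ is nonzero. Its zero set in $V^{d}\cong\mathbb R^{4d}$ is therefore a proper real-algebraic subset. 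I will show that the phase is constant on the nonzero values along a fixed $P$. Take two nonzero points, join them by a path in $V^d$ that changes one pair's tool at a time, and perturb the path generically off the zero set. Each step is a change on a single pair, so the local comparison with $G$ binary applies. Along a one-parameter step the values have the form $\alpha+t\beta$ with $t\in\mathbb R$. If this is a real multiple of $\alpha$ for all generic $t$, then $\beta\in\alpha\mathbb R$, and the phase propagates even through isolated zeros. To change pairings, I use one elementary move. Given a nonzero closure along $P$, a generic choice of tools on the common pairs keeps the resulting quaternary $G$ nonzero. It also makes the closure along the adjacent pairing $Q$ nonzero for generic tools on the two new pairs. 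The quaternary case then identifies the phases on the two sides.

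Chaining these steps over the whole path gives a single phase $c=c(F)$ with every closure in $c\mathbb R$. In particular, the ratio of any two nonzero closures by $\Kstd$ tools is real, which is the first claim of the lemma. For the consequence, fix a pairing. By the expansion \eqref{en:eq:klein-pair-expansion} applied pair by pair, the coordinate of $F$ at the basis tensor indexed by $(M_1,\dots,M_d)$ is $\prod_j c_{M_j}\,\zeta_{P,\mathbf M}(F)$. Each $c_{M_j}$ equals $\pm\tfrac12$, so all nonzero coordinates share the phase $c$.
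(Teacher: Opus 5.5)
There is a genuine gap. Your whole argument rests on the step ``Close $F$ along the common pairs with the common tools. This produces an available function $G$.'' That step is valid only when those tools lie in $\Kstd$.

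Closing with a tool $T\in V\setminus\Kstd$ gives merely a real linear combination of ordinarily realizable functions. Neither Assumption~\ref{en:ass:klein-upper}(1) nor the quaternary consequence says anything about such a combination: its summands may carry different phases, which is exactly what you are trying to exclude. Your genericity needs non-$\Kstd$ tools precisely on the pairs you close off. So in the fixed-pairing step the binary $G$, built from $d-1$ generic tools, is not known to be a multiple of a $\Kstd$ matrix. In the pairing-change step the quaternary $G$ is likewise not known to be a basis function.

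The pairing-change step is easy to repair, and this is what the paper does: start from a nonzero $\Kstd$ closure, keep its $\Kstd$ tools on the common pairs, and choose only the two new tools. The fixed-pairing step cannot be repaired this way. If all tools stay in $\Kstd$, two nonzero closures along the same pairing never differ in exactly one pair, since closing the other pairs would give an available binary function with two nonzero $\Kstd$ directions. Hence every one-pair-at-a-time path between them passes through zero closures, and these transmit no phase.

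In fact, information from a single pairing cannot suffice. Take $P=(p_1q_1)(p_2q_2)(p_3q_3)$ and $F=\prod_j I(x_{q_j},x_{p_j})+i\prod_j X(x_{q_j},x_{p_j})$.
\begin{itemize}
\item Closing one or more pairs of $P$ with $\Kstd$ tools yields zero or a single basis tensor. So $F$ passes every test your fixed-pairing step may legitimately apply.
\item Yet $\zeta_{P,(I,I,I)}(F)=8$ and $\zeta_{P,(X,X,X)}(F)=-8i$, whose ratio is not real.
\item Your step would instead close two pairs with $I+X$ and obtain $4(I+iX)$. This binary function has rank one, and the hypotheses do not apply to it.
\item Such an $F$ is excluded only by closures that cross $P$. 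Joining $p_1$ to $p_2$ and $q_1$ to $q_2$ by ordinary edges leaves $2I-2iX$ on the third pair, which has rank one.
\end{itemize}

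The missing ingredient is a way to pass between two nonzero closures along one pairing that differ in at least two pairs, re-pairing four variables on the way. The paper gets this from Lemma~\ref{en:lem:klein-simultaneous-nonzero}.
\begin{itemize}
\item Close all but two differing pairs according to each collection. This gives available nonzero quaternary functions $H,H'$ on the same four variables.
\item Find a pair, possibly crossing the original pairs, and a common $A\in\Kstd$ whose contraction of both is nonzero.
\item Choose a common $B\in V$ outside two proper subspaces, so that both complete closures are nonzero.
\item The comparison paths only ever contain this one non-$\Kstd$ tool, and it is removed before the next available quaternary functions are formed.
\end{itemize}
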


\begin{proof}
First change the pairing.
The symmetric difference of two perfect matchings consists of alternating even cycles.
On an alternating cycle of length at least four, replacing two edges of the current matching by two other edges can shorten the cycle.
Thus a finite sequence of two-edge switches transforms $P$ into $Q$.
For one switch, remove two tools from the current closure to obtain a nonzero available quaternary function $H$.
Since $\Kstd^{\otimes2}$ is a basis on the two pairs of the new pairing, two tools can be chosen to make the new closure nonzero.
The trace calculation at the beginning of this section shows that the ratio of the new and old nonzero values is real.
Multiplying these ratios along the sequence connects the given $P$-closure to some nonzero $Q$-closure with a real ratio.

Now fix a pairing and compare two nonzero closures.
If their tools differ on only one pair, closing all other pairs gives a nonzero available binary function with nonzero contractions in two different $\Kstd$ directions.
This contradicts the binary condition that only one basis direction occurs.
Hence this distance-one case cannot occur.

If at least two pairs differ, choose two such pairs and close the other variables according to each of the two collections, obtaining nonzero quaternary functions $H,H'$ on the same four variables.
By Lemma~\ref{en:lem:klein-simultaneous-nonzero}, there are a pair among these variables and a common $A\in\Kstd$ for which both contractions are nonzero.
Two nonzero binary functions remain.
For each, the matrices in the real vector space $V$ that make the complete closure zero form a proper subspace.
The union of two proper subspaces cannot cover $V$, so a common $B\in V$ makes both complete closures nonzero.
We have now obtained two new closing paths: they use the same pairing and the same tools $A,B$ on the selected four variables, while elsewhere they follow their respective original paths.
On each side, the old and new values are complete closures of the corresponding nonzero quaternary function, so their ratio is real.

Here $B$ is used only in an algebraic comparison of closure values; it need not be ordinarily realizable.
The two paths being compared always contain at most this one common non-$\Kstd$ tool.
At the next step it is removed before the available quaternary functions are formed.

If the original paths differed on exactly two pairs, the new paths now agree.
If other differences remain, designate the common pair carrying $B\in V$ as a special pair, and choose one pair on which the paths still differ.
After removing these two pairs, every remaining tool belongs to $\Kstd$, so two available quaternary functions are again obtained.
Repeat the preceding construction with new common tools $A\in\Kstd,B\in V$.
Each step reduces the number of differing pairs by one, so the two paths eventually coincide.
Every nonzero ratio along either path is real; hence the ratio of the original two closure values is real.
Together with the first paragraph, this proves the claim for arbitrary pairings.

Finally, Equation~\eqref{en:eq:klein-cycle-gram} shows that each basis coordinate corresponds to a unique collection of $\Kstd$ closing tools, and its complete closure is that coordinate multiplied by a nonzero real number $\pm2^d$.
Thus all nonzero coordinates have a common complex phase as well.
\end{proof}

We still need to pass from a common phase of the basis coordinates to a scaling that makes the function real-valued.
This is the purpose of the arity reduction preserving the mixed real and imaginary parts in v3.
To distinguish having two nonzero parts from having no common phase, we first specify the two complementary subspaces actually preserved by the reduction.

Set $\chi([I])=\chi([Y])=0$ and $\chi([X])=\chi([Z])=1$.
This is a group homomorphism $\Kfour\to\mathbb F_2$.
For a $\Kstd$ tensor-product basis vector on a pairing, define its parity label as the sum of the $\chi$ values of its binary factors.
Transpose preserves projective classes in the standard form.
When comparing two pairings, a nonzero matrix trace on an alternating cycle requires the projective products on the two sides to agree.
The trace-pairing table shows that all change-of-basis coefficients are real and do not mix parity labels.
Therefore the following two real subspaces are independent of the pairing: let $W_0,W_1$ be the real spans of the tensor-product basis vectors with labels $0,1$, respectively.
Basis vectors in $W_0$ are real-valued, while those in $W_1$ are purely imaginary-valued.
Inserting $M$ on any variable pair shifts the two subspaces simultaneously to
$W_{\epsilon+\chi([M])}$, so they remain disjoint.

\begin{theorem}[Per-function realnumberizing in the standard form]\label{en:thm:standard-klein-realification}
In the standard-form case of Assumption~\ref{en:ass:klein-upper}, every nonzero available function $F$ of even arity admits $\lambda_F\in\mathbb C^\times$ such that $\lambda_FF$ is real-valued.
\end{theorem}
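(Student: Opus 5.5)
The plan is to combine the common-phase lemma with the real/imaginary splitting $W_0\oplus W_1$ introduced just above, and to rule out the one bad case by an arity reduction that keeps both parts alive.

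\textbf{Setup.} Fix any pairing of the variables of $F$ and expand $F$ in the $\Kstd$ tensor-product basis on that pairing. By Lemma~\ref{en:lem:standard-klein-common-phase}, all nonzero coordinates share one complex phase. Hence there is $c\in\mathbb C^\times$ with $F=c\,(w_0+w_1)$, where $w_\epsilon\in W_\epsilon$ is the real combination of the label-$\epsilon$ basis vectors.

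\textbf{Easy cases.} Basis vectors with label $0$ are real-valued and those with label $1$ are purely imaginary-valued. So if $w_1=0$ then $c^{-1}F$ is real-valued. If $w_0=0$ then $(ic)^{-1}F$ is real-valued. Either way $\lambda_F$ exists, and it remains to show that $w_0\neq0$ and $w_1\neq0$ cannot hold simultaneously.

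\textbf{Arity 2.} Suppose both parts are nonzero. If $F$ is binary, its $\Kstd$ expansion has nonzero coordinates on at least two distinct matrices of $\Kstd$, one of each label. Then $F$ is not a scalar multiple of a single $\Kstd$ matrix, contradicting item~(1) of Assumption~\ref{en:ass:klein-upper}.

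\textbf{Reduction step.} If the arity $2d$ is at least $4$, view $w_0$ and $w_1$ as two nonzero complex-valued functions on the same $2d\ge 3$ variables. Apply Lemma~\ref{en:lem:klein-simultaneous-nonzero} to them. This gives a variable pair $p,q$ and $M\in\Kstd$ with $\Gamma_M^{pq}(w_0)\ne0$ and $\Gamma_M^{pq}(w_1)\ne0$.

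The function $\Gamma_M^{pq}(F)=c\bigl(\Gamma_M^{pq}(w_0)+\Gamma_M^{pq}(w_1)\bigr)$ is available, since $M$ is realizable and contraction is a gadget. By the remark preceding the theorem, $\Gamma_M^{pq}$ is real-linear and maps $W_\epsilon$ into $W_{\epsilon+\chi([M])}$. Therefore the two contracted pieces lie in the two distinct subspaces $W_0,W_1$ of the new arity, in some order, and both are nonzero.

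This decomposition is the canonical one for the new function, because $W_0$ and $W_1$ do not depend on the pairing. So $\Gamma_M^{pq}(F)$ is a nonzero available function of arity $2d-2$ whose two parts are both nonzero. Iterating reaches arity $2$, where the previous paragraph gives the contradiction.

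\textbf{Main obstacle.} The delicate point is the legitimacy of the reduction step. First, Lemma~\ref{en:lem:klein-simultaneous-nonzero} must be applied to the components $w_0,w_1$ rather than to available functions; this is fine because that lemma only needs two nonzero complex-valued functions. Second, one needs that contraction by $M$ neither mixes the two subspaces nor lets one piece cancel the other. Both hold because the preceding discussion shows the change-of-basis coefficients are real, do not mix parity labels, and $\Gamma_M$ shifts labels uniformly by $\chi([M])$.

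Finally, $\lambda_F$ is $c^{-1}$ or $(ic)^{-1}$ and can be taken algebraic, since $c$ can be read off as a coordinate of $F$.
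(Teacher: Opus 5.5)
Your proof is correct and follows the paper's argument. It uses the common phase from Lemma~\ref{en:lem:standard-klein-common-phase} and the parity split into $W_0,W_1$, then applies the simultaneous nonzero contraction repeatedly to the two components, so both survive with real coordinates and uniformly shifted labels. The only difference is the endpoint: the paper stops at arity four and invokes the quaternary condition (a basis function lies in a single $W_\epsilon$), whereas you take one more contraction step, which is legitimate since the lemma only needs $n\ge3$, and conclude from the binary condition; this works equally well.
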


\begin{proof}
By Lemma~\ref{en:lem:standard-klein-common-phase}, first multiply $F$ by a nonzero constant so that all its coordinates in a $\Kstd$ tensor-product basis are real.
If its support lies entirely in $W_0$, it is already real-valued; if entirely in $W_1$, multiply once more by $-i$.

Suppose instead that both subspaces have nonzero coordinates, and denote the corresponding nonzero components by $F_0,F_1$.
They lie in $W_0,W_1$ respectively and have real coordinates in the $\Kstd$ basis; equivalently, $F_0$ is real-valued and $F_1$ is purely imaginary-valued.
If $F$ has arity at least six, apply Lemma~\ref{en:lem:klein-simultaneous-nonzero} to $F_0,F_1$ to obtain a common variable pair and a common $M\in\Kstd$ whose contraction leaves both components nonzero.
The components $F_0,F_1$ are used only to select the tool and need not be separately realizable.
The actual operation is a single contraction of the current function $F_0+F_1$, so it remains in the ordinary gadget closure.
When a tensor-product basis vector is contracted, Equation~\eqref{en:eq:klein-cycle-gram} and the quaternion multiplication table show that the resulting coefficient is a real multiple of the original one.
Thus the new components still have real coordinates in the $\Kstd$ basis.
Their parity labels both shift by $\chi([M])$, so the two components remain in complementary subspaces.
Use these two nonzero components directly as the inputs to the simultaneous nonzero contraction lemma at the next step.
Repeating this process eventually produces a nonzero available quaternary function using both complementary parity subspaces.

The quaternary condition says that this function has only one nonzero basis coordinate for some pairing, and hence belongs to only one $W_\epsilon$, a contradiction.
The binary condition likewise excludes a mixed function of arity below four.
Therefore the two subspaces cannot both occur, proving the theorem.
\end{proof}

\begin{corollary}[Complexity dichotomy for the standard form]\label{en:cor:standard-klein-real-holant}
In the standard-form case of Assumption~\ref{en:ass:klein-upper}, per-function nonzero scaling gives an algebraic real-valued function set $\mathcal F_{\mathbb R}$ such that
\[
 \Holant(\mathcal F)\equiv_{\mathrm T}\Holant(\mathcal F_{\mathbb R}).
\]
Consequently, by Theorem~\ref{en:thm:real-holant}, the original problem belongs to $\FP$ if $\mathcal F_{\mathbb R}$ satisfies condition \textup{(T)}, and is $\sharpP$-hard otherwise.
\end{corollary}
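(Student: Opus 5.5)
The plan is to apply Theorem~\ref{en:thm:standard-klein-realification} to each input function and then invoke Lemma~\ref{en:lem:per-signature-scaling} and Theorem~\ref{en:thm:real-holant}. In this setting $\mathcal F$ is the fixed finite set after the orthogonal transformation of Theorem~\ref{en:thm:two-klein-normal-forms}, and it satisfies Assumption~\ref{en:ass:klein-upper} with $\mathcal K=\Kstd$. Each $f\in\mathcal F$ is trivially in $\mathscr R$. Zero functions can be discarded or kept as they are, since the zero function is already real-valued. By Lemma~\ref{en:lem:klein-upper-basic}, every nonzero $f\in\mathcal F$ has even arity. Theorem~\ref{en:thm:standard-klein-realification} then supplies $\lambda_f\in\mathbb C^\times$ with $\lambda_f f$ real-valued. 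Set $\mathcal F_{\mathbb R}=\{\lambda_f f\}$.

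The first point to check is algebraicity, because Lemma~\ref{en:lem:per-signature-scaling} and Theorem~\ref{en:thm:real-holant} require algebraic constants and values. Choose any nonzero value $f(a)$ and take $\lambda_f=\overline{\lambda}\,$-adjusted as follows. Since $\lambda_f f$ is real, every value of $f$ is a real multiple of $f(a)$. So we may replace $\lambda_f$ by $1/f(a)$: then $f/f(a)=(\lambda_f f)/(\lambda_f f(a))$ is a quotient of reals, hence real. It is also algebraic, because the entries of $f$ are algebraic after the algebraic orthogonal $O$. Thus $c_f=1/f(a)$ is a nonzero algebraic constant and $c_f f$ is algebraic real-valued.

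Lemma~\ref{en:lem:per-signature-scaling} then gives $\Holant(\mathcal F)\equiv_{\mathrm T}\Holant(\mathcal F_{\mathbb R})$. Composing with the equivalence given by Proposition~\ref{en:prop:holographic-reduction} for the orthogonal change of basis relates this to the original problem, since the edge kernel $I$ is preserved. Finally, Theorem~\ref{en:thm:real-holant} applies to the finite algebraic real set $\mathcal F_{\mathbb R}$. It yields $\FP$ under condition (T) and $\sharpP$-hardness otherwise, and Turing equivalence transfers both conclusions to $\Holant(\mathcal F)$.

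The main obstacle is the algebraicity and normalization of $\lambda_f$ handled in the second paragraph; once the constant is fixed as $1/f(a)$, everything else is a direct chaining of earlier results. One further point needs a remark: the realnumberizing theorem is stated for available functions in the fixed closure $\mathscr R$. The input functions are members of that closure, so no extra argument is needed beyond noting $\mathcal F\subseteq\mathscr R$.
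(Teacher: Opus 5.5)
Your proof is correct and follows the paper's argument: you apply Theorem~\ref{en:thm:standard-klein-realification} to each nonzero input function, make the scaling constant algebraic, and then chain Lemma~\ref{en:lem:per-signature-scaling} with Theorem~\ref{en:thm:real-holant}. The only variation is that you normalize by the reciprocal of a nonzero function value $f(a)$ rather than a nonzero $\Kstd$ basis coordinate, which makes the paper's extra factor $-i$ unnecessary.
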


\begin{proof}
The function set is fixed and finite.
For each nonzero $f$, choose the scalar $\lambda_f$ supplied by Theorem~\ref{en:thm:standard-klein-realification}.
The reciprocal of a nonzero algebraic basis coordinate removes the common phase; if necessary, multiply additionally by $-i$.
Thus the scaling constants are algebraic.
Leave zero functions unchanged.
Lemma~\ref{en:lem:per-signature-scaling} gives the stated Turing equivalence, and the scaled function set satisfies the algebraic real-valued input hypothesis of the Real Boolean Holant theorem.
\end{proof}

\section{The twisted Klein form: a complete proof}\label{en:subsec:twisted-klein}

In this section, take $\mathcal K=\Ktw$ in Assumption~\ref{en:ass:klein-upper}.
If the ordinary gadget closure contains a non-basis function, choose one of minimum arity.
Every binary contraction of this function is nonzero and is a basis function.
We call this the scattered-pearl condition.
A common pairing and support spreading reduce its arity to six or eight.
Cycle directions then force the nontrivial transpose action of the twisted form to be the identity, giving a contradiction.

First assign two-bit labels to the four projective classes:
\[
 \operatorname{lab}([I])=\binom00,\qquad
 \operatorname{lab}([Z])=\binom10,\qquad
 \operatorname{lab}([Y'])=\binom01,\qquad
 \operatorname{lab}([X'])=\binom11.
\]
Write $M_s\in\Ktw$ for the representative matrix with label $s\in\mathbb F_2^2$.
Ignoring signs in matrix multiplication, addition of labels is multiplication of projective classes.
Transpose acts on the labels by
\begin{equation}\label{en:eq:twisted-label-transpose}
 \operatorname{lab}([M^{\mathsf T}])=\tau\operatorname{lab}([M]),\qquad
 \tau=\begin{pmatrix}1&1\\0&1\end{pmatrix}.
\end{equation}
This matrix multiplication is over $\mathbb F_2$.
Thus $\tau\binom{1}{1}=\binom{0}{1}$ and $\tau\binom{0}{1}=\binom{1}{1}$, while $\binom00,\binom10$ are fixed.
In particular, the projective class $[I]$ of the matrix $I$ has label $\binom00$.
We denote the identity map on the label space by
\[
  \mathrm{Id}_{\mathrm{lab}}=\begin{pmatrix}1&0\\0&1\end{pmatrix}.
\]

\subsection{How a minimal counterexample gives the scattered-pearl condition}

All functions are taken from the fixed ordinary gadget closure $\mathscr R$ in Assumption~\ref{en:ass:klein-upper}.
Let
\[
 \mathscr R_{\mathrm{nb}}
 =\{F\in\mathscr R:F\ne0,\ \arity(F)\text{ is even},\ F\text{ is not a }\Ktw\text{-basis function}\}.
\]
If this set is nonempty, choose a member $F$ of minimum arity.

\begin{definition}[Scattered-pearl condition]\label{en:def:scattered-pearl}
A nonzero even-arity function $F$ satisfies the \emph{scattered-pearl condition} if, for every variable pair $p,q$ and every $M\in\Ktw$,
\[
 \Gamma_M^{pq}(F)\ne0,
 \qquad
 \Gamma_M^{pq}(F)\text{ is a }\Ktw\text{-basis function}.
\]
The factor pairings of the basis functions obtained by different contractions may differ.
\end{definition}

\begin{lemma}[A minimal counterexample satisfies the scattered-pearl condition]\label{en:lem:twisted-minimal-pearl}
In the twisted-form case of Assumption~\ref{en:ass:klein-upper}, if $\mathscr R_{\mathrm{nb}}\ne\varnothing$, then a minimum-arity member $F$ has arity at least six and satisfies the scattered-pearl condition.
\end{lemma}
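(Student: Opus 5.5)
The argument will rest on three inputs: the minimality of $F$, Lemma~\ref{en:lem:klein-upper-basic}, and the fact that $\Ktw$ spans the binary space. First I will bound the arity from below. The set $\mathscr R_{\mathrm{nb}}$ contains only nonzero even-arity functions, so $\arity(F)\in\{0,2,4,\dots\}$. A nonzero nullary constant is a basis function by convention, so it is not in $\mathscr R_{\mathrm{nb}}$. A nonzero binary function in $\mathscr R$ is, by item (1) of Assumption~\ref{en:ass:klein-upper}, a nonzero multiple of some $M\in\Ktw$, hence a basis function. A nonzero quaternary function is a basis function by Lemma~\ref{en:lem:klein-upper-basic}. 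Hence $\arity(F)\ge 6$.

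Next I will show that every contraction is a basis function. Fix a pair $p,q$ and $M\in\Ktw$. Since $M$ lies in $\mathscr R$ and is binary, $\Gamma_M^{pq}(F)$ is obtained by connecting the two ports of an $M$ vertex to ports $p,q$ of $F$ with equality edges. It is therefore in the ordinary closure $\mathscr R$ and has arity $\arity(F)-2$. If it is nonzero, it has even arity and is strictly shorter than $F$. By minimality it is not in $\mathscr R_{\mathrm{nb}}$, so it is a $\Ktw$-basis function. The only genuine content is therefore the nonvanishing of every contraction.

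\textbf{Nonvanishing (expected main obstacle).} Suppose $\Gamma_{M_0}^{pq}(F)=0$ for some $p,q,M_0$. Expand $F$ on the pair $p,q$ by Equation~\eqref{en:eq:klein-pair-expansion}:
\[
F=\sum_{M\in\Ktw}c_M\,M(x_q,x_p)\otimes G_M,\qquad G_M=\Gamma_M^{pq}(F).
\]
Each nonzero $G_M$ is an available basis function of arity $\arity(F)-2\ge4$. I aim to derive a contradiction by building a nonzero quaternary function in $\mathscr R$ that is not a basis function. At most three $G_M$ are nonzero, and at least one is, since $F\ne0$.

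If only one $G_M$ is nonzero, then $F=c_M M\otimes G_M$ is a product of basis functions, hence a basis function, which is a contradiction. So at least two are nonzero, say $G_A$ and $G_B$. I will apply Lemma~\ref{en:lem:klein-simultaneous-nonzero} repeatedly to the pair $(G_A,G_B)$, contracting common variable pairs outside $\{p,q\}$. This works as in the proof of Theorem~\ref{en:thm:standard-klein-realification}: the same contraction is performed on $F$ itself, and the components are used only to choose the tools. After these steps I reach an available quaternary function
\[
H=\sum_{M}c_M\,M(x_q,x_p)\otimes g_M,
\]
where $g_A\ne0$, $g_B\ne0$, and $g_{M_0}=0$. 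The contractions of $F$ act only on the components $G_M$, so the zero component stays zero.

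By Lemma~\ref{en:lem:klein-upper-basic}, $H$ is a basis function $c\,R(x_{\cdot},x_{\cdot})S(x_{\cdot},x_{\cdot})$. There are two cases for its pairing.

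\emph{Case 1: $H$ pairs $p$ with $q$.} Then only one $\Gamma^{pq}$ branch of $H$ is nonzero. This contradicts $g_A,g_B\ne0$.

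\emph{Case 2: $H$ pairs $p$ and $q$ with the other two variables.} Then $\Gamma_M^{pq}(H)$ is the binary function $R^{(\mathsf T)}MS^{(\mathsf T)}$, with transposes according to orientation. Since $R,S,M$ are invertible, this is nonzero for every $M$, including $M_0$. This contradicts $g_{M_0}=0$.

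Either way we have a contradiction, so every contraction is nonzero. The delicate point I expect to check carefully is that the iterated simultaneous contraction keeps both chosen components nonzero at each stage. This requires applying Lemma~\ref{en:lem:klein-simultaneous-nonzero} to the current components on the variables other than $p,q$, which number at least $4\ge3$ until arity four is reached.
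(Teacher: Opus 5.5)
Your proof is correct and follows the paper's own argument. It uses the same split into one nonzero branch versus two or three, the same use of Lemma~\ref{en:lem:klein-simultaneous-nonzero} to keep two branches alive while the zero branch stays zero, and the same contradiction with the fact that a basis function has branch count $1$ or $4$. The only difference is that you iterate the contraction down to a quaternary function and invoke Lemma~\ref{en:lem:klein-upper-basic}, whereas the paper contracts once and applies minimality directly to $F'=\Gamma_N^{rs}(F)$, which already has smaller even arity.
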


\begin{proof}
By convention, a nonzero nullary function is a basis function.
The binary condition excludes binary counterexamples, and Lemma~\ref{en:lem:klein-upper-basic} excludes quaternary counterexamples.
Thus $F$ has arity at least six.
Fix a variable pair $p,q$, and let $b_{pq}(F)$ be the number of nonzero functions among its four contractions.
Since the four matrices span the whole binary-function space and $F\ne0$, we have $b_{pq}(F)\ge1$.
First consider a $\Ktw$-basis function.
If $p,q$ form one binary-factor edge, Equation~\eqref{en:eq:klein-cycle-gram} gives exactly one nonzero branch.
If they belong to different factors, write those factors as $A(x_r,x_p)$ and $B(x_s,x_q)$.
The contracted binary matrix is precisely $AMB^{\mathsf T}$.
All three matrices are invertible, so all four branches are nonzero.
Thus the branch count of a basis function on any variable pair is either $1$ or $4$.

If $b_{pq}(F)=1$, let $M_0$ be the unique tool giving a nonzero contraction.
Equation~\eqref{en:eq:klein-pair-expansion} gives
\[
 F=c_{M_0}M_0(x_q,x_p)\otimes\Gamma_{M_0}^{pq}(F).
\]
The second factor is nonzero, ordinarily realizable, and has smaller arity than $F$.
By minimality it is a basis function, making $F$ a basis function as well, a contradiction.

Next suppose $b_{pq}(F)=2$ or $3$.
Choose two nonzero branches
\[
 C=\Gamma_{M_1}^{pq}(F),\qquad
 D=\Gamma_{M_2}^{pq}(F).
\]
Since $F$ has arity at least six, $C,D$ still share at least four variables.
By Lemma~\ref{en:lem:klein-simultaneous-nonzero}, there are another variable pair $r,s$ and a common $N\in\Ktw$ whose contraction is nonzero for both functions.
Let
\[
 F'=\Gamma_N^{rs}(F).
\]
The pairs are disjoint, and finite summations commute, so
\[
 \Gamma_{M_a}^{pq}(F')
 =\Gamma_N^{rs}\bigl(\Gamma_{M_a}^{pq}(F)\bigr)\ne0
 \qquad(a=1,2).
\]
Branches that were zero remain zero after further contraction.
Hence $F'$ still has exactly two or three nonzero branches on the same pair.
On the other hand, $F'$ is nonzero, ordinarily realizable, and has smaller arity.
Minimality makes it a basis function, contradicting the $1/4$ branch-count criterion for basis functions.

Therefore every variable pair of a minimal counterexample has four nonzero branches.
For every $p,q,M$, the function $G=\Gamma_M^{pq}(F)$ is nonzero, has even arity, and has smaller arity than $F$.
If $G$ were not a basis function, it would itself belong to $\mathscr R_{\mathrm{nb}}$, contradicting the minimality of $F$.
Thus every such $G$ is a basis function, establishing the scattered-pearl condition.
\end{proof}

\subsection{Common pairings and an arity bound}

\begin{lemma}[Partitioning all variable edges into pairings]\label{en:lem:twisted-one-factorization}
Let $F$ be a nonzero function of arity $2d\ge6$ satisfying the scattered-pearl condition.
Then all variable pairs partition into $2d-1$ pairwise edge-disjoint pairings.
For each such pairing $P$, the function $F$ has exactly four nonzero coordinates in the $\Ktw$ tensor-product basis determined by $P$.
\end{lemma}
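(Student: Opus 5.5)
The plan is to attach to every variable pair $pq$ a canonical pairing $P_{pq}\ni pq$. I will show that $P_{rs}=P_{pq}$ whenever $rs\in P_{pq}$; the distinct $P_{pq}$ then partition the edges of the complete graph on the $2d$ variables. Since each pairing has $d$ edges and there are $d(2d-1)$ edges, there are exactly $2d-1$ of them. The coordinate count will come out of the same construction.

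\emph{Step 1: the four branches on $pq$ share one pairing.} Fix $p,q$ and write $G_M=\Gamma_M^{pq}(F)$. By the scattered-pearl condition, each $G_M$ is a nonzero $\Ktw$-basis function of arity $2d-2\ge4$. Equation~\eqref{en:eq:klein-pair-expansion} gives
\[
 F=\sum_{M\in\Ktw}c_M\,M(x_q,x_p)\otimes G_M .
\]
Take another pair $r,s$ disjoint from $p,q$, and let $S$ be the set of $M$ for which $rs$ is a factor edge of $G_M$. By the $1/4$ branch-count computation in the proof of Lemma~\ref{en:lem:twisted-minimal-pearl}:
\begin{itemize}
 \item for $M\notin S$, all four contractions $\Gamma_N^{rs}(G_M)$ are nonzero;
 \item for $M\in S$, exactly one $N$ gives $\Gamma_N^{rs}(G_M)\ne0$.
\end{itemize}
Contractions on disjoint pairs commute, so $\Gamma_M^{pq}\Gamma_N^{rs}(F)=\Gamma_N^{rs}(G_M)$. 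Hence the branch count of $\Gamma_N^{rs}(F)$ on $pq$ equals $(4-k)+a_N$, where $k=|S|$ and $a_N$ is the number of $M\in S$ whose unique tool is $N$, so that $\sum_N a_N=k$. The scattered-pearl condition makes $\Gamma_N^{rs}(F)$ a basis function, so this count lies in $\{1,4\}$.

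If $1\le k\le3$, then $4-k\ge1$, which forces the count to be $4$ for every $N$. Then $a_N=k$ for all $N$, so $\sum_N a_N=4k\ne k$, a contradiction. Thus $k\in\{0,4\}$ for every $rs$, i.e. every edge is a factor edge of all four $G_M$ or of none, and the four $G_M$ have the same factor pairing $\pi_{pq}$. Set $P_{pq}=\{pq\}\cup\pi_{pq}$. In the $\Ktw$ tensor-product basis of $P_{pq}$, each $G_M$ is a single basis vector. The display above therefore shows that $F$ has exactly four nonzero coordinates there, one for each $M$.

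\emph{Step 2: consistency.} Let $rs\in\pi_{pq}$. By Step 1, $F=\sum_M c'_M\,M(x_q,x_p)\otimes A_M(x_s,x_r)\otimes B_M$, where $A_M\in\Ktw$ and $B_M$ is a basis vector on the remaining pairs of $P_{pq}$. Then $\Gamma_N^{rs}(F)$ is a sum over those $M$ with $[A_M]=[N]$. Scattered pearl requires this to be nonzero for all four $N$. There are only four terms, so $M\mapsto[A_M]$ is a bijection, and each $\Gamma_N^{rs}(F)$ is a single basis vector. Its pairing contains $pq$ together with $P_{pq}\setminus\{pq,rs\}$. By Step 1 applied at $rs$, this means $\pi_{rs}=P_{pq}\setminus\{rs\}$, i.e. $P_{rs}=P_{pq}$. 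Hence ``lying in a common $P_{pq}$'' is an equivalence relation on edges whose classes are pairings, which proves the partition.

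The main obstacle is Step 1: ruling out that the four branches $G_M$ are basis functions along different pairings. I would handle it with the counting argument above, combining the $1/4$ branch criterion applied to $\Gamma_N^{rs}(F)$ with commutation of disjoint contractions. The bound $2d\ge6$ is what guarantees that a disjoint pair $rs$ exists inside the arity-$(2d-2)\ge4$ branches, so that ``factor edge or not'' is meaningful.
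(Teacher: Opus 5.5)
There is a genuine gap in Step 1, in the case $k=3$. You correctly show that the branch count of $\Gamma_N^{rs}(F)$ on $pq$ is $(4-k)+a_N$ with $\sum_N a_N=k$. When $k=3$ this count is $1+a_N$. The constraint $1+a_N\in\{1,4\}$ together with $\sum_N a_N=3$ is satisfied by $a_{N_0}=3$ for a single tool $N_0$ and $a_N=0$ for the other three.

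Concretely, three of the $G_M$ have $rs$ as a factor edge, all with the same unique tool $N_0$, and the fourth does not. Then $\Gamma_{N_0}^{rs}(F)$ has four nonzero branches on $pq$, and each other $\Gamma_N^{rs}(F)$ has exactly one, namely the fourth $M$. This is consistent with the $1/4$ criterion on both sides. Your claim that $4-k\ge1$ ``forces the count to be $4$'' is only valid when $4-k\ge2$. So the local argument at a single edge $rs$ excludes $k\in\{1,2\}$ but not $k=3$, and the conclusion that $k\in\{0,4\}$ for every $rs$ does not follow. This is exactly the ``cross'' form of the $4\times4$ zero-one table that the paper lists alongside the permutation and all-one forms.

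The missing idea is a global count at a vertex, which is how the paper closes this case. For each edge $e$ on the $2d-2$ remaining variables, let $m(e)$ be the number of the four factor pairings of the $G_M$ that contain $e$; your argument gives $m(e)\in\{0,3,4\}$. Now fix a remaining variable $r$. It has exactly one partner in each of the four pairings, so $\sum_{e\ni r}m(e)=4$. A summand equal to $3$ would leave a remainder of $1$, which cannot be formed from $0,3,4$. Hence exactly one summand is $4$ and all others vanish, so $r$ has the same partner in all four pairings.

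With this step added you obtain the common pairing $\pi_{pq}$. Your Step 2 (the bijection $M\mapsto[A_M]$, with the uniqueness of the factor pairing of the single basis vector $\Gamma_N^{rs}(F)$) then gives $P_{rs}=P_{pq}$, as in the paper's use of a nonzero entry of the permutation table. The four-coordinate count also goes through as written.
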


\begin{proof}
Fix a variable pair $p,q$ and write
\[
 G_M=\Gamma_M^{pq}(F)\qquad(M\in\Ktw).
\]
All four $G_M$ are nonzero basis functions.
The factor pairing of a basis function is uniquely recovered from the $1/4$ branch-count criterion: a variable pair has exactly one nonzero contraction branch if and only if its two variables belong to the same binary factor.
Denote the pairing of $G_M$ by $\mu_M$.

For another, disjoint variable pair $r,s$, consider the zero-one table
\[
 D_{M,N}=\mathbf1_{\Gamma_N^{rs}
                    \bigl(\Gamma_M^{pq}(F)\bigr)\ne0}
 \qquad(M,N\in\Ktw).
\]
Every row and every column has weight either $1$ or $4$; the column property follows because the two disjoint contractions commute.
Such a $4\times4$ table has only three possible forms: a permutation matrix, with each row of weight one; the all-one matrix; or a cross, with exactly one row of weight four and the unique $1$ in each of the other three rows in the same column.

For an edge $e$ on the remaining variables, let $m(e)$ be the number of pairings among the four $\mu_M$ that contain $e$.
The three table forms give $m(e)=4,0,3$, respectively.
Fix a remaining variable $r$.
It has one partner in each of the four pairings, so
\[
 \sum_{e\ni r}m(e)=4.
\]
If a summand were $3$, any other positive summand would be at least $3$, making a total of $4$ impossible.
Thus exactly one summand is $4$.
Each variable has the same partner in all four pairings, and hence
\[
 \mu_I=\mu_Z=\mu_{X'}=\mu_{Y'}=: \mu_{pq}.
\]

Set $\Pi_{pq}=\{pq\}\cup\mu_{pq}$.
If $rs\in\Pi_{pq}$, choose any nonzero entry in the permutation table above.
Contracting $pq$ and then $rs$, or doing so in the opposite order, gives the same basis function.
By uniqueness of its pairing, deleting these two edges gives $\Pi_{rs}=\Pi_{pq}$.
Consequently, two distinct pairings $\Pi$ share no edge.
Every variable edge $pq$ belongs to $\Pi_{pq}$, so these pairings partition all edges of the complete graph.
Each pairing has $d$ edges, and their number is therefore
\[
 \binom{2d}{2}/d=2d-1.
\]

Finally, apply Equation~\eqref{en:eq:klein-pair-expansion} on $pq$:
\[
 F=\sum_{M\in\Ktw}c_M M(x_q,x_p)\otimes G_M.
\]
The four functions $G_M$ have the common pairing $\mu_{pq}$.
Thus the right-hand side consists of four distinct vectors in the same tensor-product basis for $P=\Pi_{pq}$, each with a nonzero coefficient.
The same conclusion holds for every such $P$.
\end{proof}

Take two distinct pairings $P,Q$.
Since they share no edge, $P\cup Q$ decomposes into cycles whose edge colors alternate.
Fix a recorded order within each pair, and choose a traversal direction on each cycle.
If cycle $k$ has $r_k$ edges of $P$ and $r_k$ edges of $Q$, and there are $c$ cycles, then
\[
 r_1+\cdots+r_c=d.
\]
For a $P$-edge $e$, let $\epsilon_e^P=0$ when the traversal follows its recorded order and $\epsilon_e^P=1$ when it reverses that order.
Define
\[
 \sigma_k^P(\mathbf s)=
 \sum_{e\in P\cap C_k}\tau^{\epsilon_e^P}s_e.
\]
Define $\epsilon_e^Q$ similarly for $Q$-edges.
To read off change-of-basis coordinates, replace $M(x_q,x_p)$ by its bilinear dual $c_M M(x_p,x_q)$, reversing the order of the pair.
This reversal applies $\tau$ once more to the projective label, so define
\[
 \sigma_k^Q(\mathbf t)=
 \sum_{e\in Q\cap C_k}\tau^{1+\epsilon_e^Q}t_e.
\]

Denote the two sets of basis vectors by $B_P(\mathbf s),B_Q(\mathbf t)$ and their bilinear dual bases by $D_P(\mathbf s),D_Q(\mathbf t)$.
For example, if the ordered edges of $Q$ are $(p_j,q_j)$, then
\[
 B_Q(\mathbf t)=\prod_{j=1}^d M_{t_j}(x_{q_j},x_{p_j}),\qquad
 D_Q(\mathbf t)=\prod_{j=1}^d c_{M_{t_j}}M_{t_j}(x_{p_j},x_{q_j}).
\]
Define the change-of-basis matrix by
\begin{equation}\label{en:eq:twisted-change-basis}
 A^{Q\leftarrow P}_{\mathbf t,\mathbf s}
 =\sum_{\alpha\in[2]^{2d}}
   D_Q(\mathbf t)(\alpha)B_P(\mathbf s)(\alpha).
\end{equation}
This is simply the pairwise application of Equation~\eqref{en:eq:klein-pair-expansion}.
Summation along each alternating cycle in its traversal direction expresses an entry as a product of traces of matrix words around the cycles, multiplied by the constants from the dual factors.
More explicitly, a $P$-edge carries $M(x_q,x_p)$, so traversal in its recorded order reads $M^{\mathsf T}$.
A $Q$-edge carries the dual factor $c_MM(x_p,x_q)$, so traversal in its recorded order reads $c_MM$.
The actual label sum around cycle $k$ is therefore
\[
 \sum_{e\in P\cap C_k}\tau^{1+\epsilon_e^P}s_e
 +\sum_{e\in Q\cap C_k}\tau^{\epsilon_e^Q}t_e
 =\tau\bigl(\sigma_k^P(\mathbf s)+\sigma_k^Q(\mathbf t)\bigr).
\]
Since $\tau$ is invertible, this sum vanishes exactly when the two label sums agree.
Hence
\begin{equation}\label{en:eq:twisted-block-criterion}
 A^{Q\leftarrow P}_{\mathbf t,\mathbf s}\ne0
 \quad\Longleftrightarrow\quad
 \sigma_k^P(\mathbf s)=\sigma_k^Q(\mathbf t)
 \quad(1\le k\le c).
\end{equation}
For a nonzero entry, each cycle contributes absolute value $2$, and each dual factor contributes absolute value $1/2$.
The absolute value of the entry is therefore $2^{c-d}$.
Exchanging $P,Q$ gives the inverse change-of-basis matrix $A^{P\leftarrow Q}$, whose nonzero entries have the same absolute value.
In the reverse summation, both sides of the cycle condition acquire one additional application of $\tau$; applying $\tau$ to both sides restores Equation~\eqref{en:eq:twisted-block-criterion}.
The forward and inverse transformations thus use the same blocks, with row and column indices interchanged within each block.

\begin{lemma}[Support spreading in alternating-cycle blocks]\label{en:lem:twisted-support-spreading}
Partition the rows and columns by the $c$ cycle-label sums in Equation~\eqref{en:eq:twisted-block-criterion}.
Every nonzero block has size $N\times N$, where
\[
 N=4^{d-c}.
\]
If $h$ terms of the $P$-basis support of $F$ lie in one block, they produce at least $N/h$ nonzero coordinates in the $Q$ basis.
Consequently, every function satisfying the scattered-pearl condition has $d\le4$.
\end{lemma}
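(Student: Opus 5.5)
The plan is to make the block structure from Equation~\eqref{en:eq:twisted-block-criterion} explicit, prove an uncertainty-type inequality for a flat block, and then count coordinates using Lemma~\ref{en:lem:twisted-one-factorization}.

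\emph{Block size.} Fix the $c$ cycle-label sums $v_1,\dots,v_c\in\mathbb F_2^2$. On cycle $k$, the $r_k$ labels $s_e$ with $e\in P\cap C_k$ may be chosen freely except for one of them. Since $\tau$ is invertible, the condition $\sigma_k^P(\mathbf s)=v_k$ determines that remaining label uniquely. This gives $4^{r_k-1}$ choices on cycle $k$, so $\prod_k4^{r_k-1}=4^{d-c}$ indices $\mathbf s$ in the row class. The same count applies to $\mathbf t$ on the $Q$ side. By Equation~\eqref{en:eq:twisted-block-criterion}, $A^{Q\leftarrow P}$ is zero outside the diagonal blocks (equal label sums), and every entry inside a block is nonzero with absolute value $2^{c-d}=N^{-1/2}$. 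As noted after Equation~\eqref{en:eq:twisted-block-criterion}, the inverse $A^{P\leftarrow Q}$ has the same block structure and the same entry modulus. Hence the block $A_b$ is an invertible $N\times N$ matrix, and its inverse is the corresponding block of $A^{P\leftarrow Q}$, which is also flat with modulus $N^{-1/2}$.

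\emph{Support spreading.} Let $x$ be the restriction of the $P$-coordinates of $F$ to block $b$, with $h\ge1$ nonzero entries, and let $y=A_bx$ be the $Q$-coordinates in that block, with $k$ nonzero entries. I will avoid any appeal to unitarity and use only flatness of $A_b$ and $A_b^{-1}$:
\[
 \|x\|_\infty\le N^{-1/2}\|y\|_1\le N^{-1/2}k\|y\|_\infty
 \le N^{-1/2}k\,N^{-1/2}\|x\|_1\le\frac{kh}{N}\|x\|_\infty .
\]
Since $x\ne0$, this gives $kh\ge N$, i.e.\ $k\ge N/h$.

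\emph{Arity bound.} This is the step where a good choice is needed. Let $F$ satisfy the scattered-pearl condition with arity $2d\ge6$. Take two distinct pairings $P,Q$ from Lemma~\ref{en:lem:twisted-one-factorization}. There $F$ has exactly four nonzero coordinates in each basis. Because $P$ and $Q$ are edge-disjoint, every alternating cycle has $r_k\ge2$, so $c\le d/2$ and $N=4^{d-c}\ge4^{d/2}=2^d$. Choose any block $b$ containing some of the four $P$-support terms, so $1\le h\le4$. Then the $Q$-support in that block alone has size at least $N/h\ge N/4$. It cannot exceed $4$, so $N\le16$, giving $2^d\le16$ and $d\le4$.

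The main obstacle is the first step: one must verify that the forward and inverse change of basis really act block-diagonally with the \emph{same} blocks and with flat moduli. This relies on the remark after Equation~\eqref{en:eq:twisted-block-criterion} that the reverse summation applies $\tau$ on both sides. Once that is in place, the inequality chain and the cycle-length bound $r_k\ge2$ are routine.
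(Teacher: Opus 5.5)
Your proposal is correct and follows essentially the same argument as the paper. It uses the same per-cycle count to get $N=4^{d-c}$, and your chain of $\ell_\infty$ and $\ell_1$ bounds, using flatness of a block and its inverse, is the paper's $U\le saV\le sha^2U$, giving $hs\ge N$. You then combine the four-term supports with $r_k\ge2$ exactly as the paper does to conclude $d\le4$.
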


\begin{proof}
Fix a cycle with $r_k$ edges of $P$ and fix its label sum.
The labels on the first $r_k-1$ edges can be chosen arbitrarily, and the sum uniquely determines the last.
Thus this cycle has $4^{r_k-1}$ possible column indices, and similarly as many row indices.
Multiplying over the cycles gives
\[
 N=\prod_{k=1}^c4^{r_k-1}=4^{d-c}.
\]
The dual-basis identities show that the two changes of basis are inverse operations.
There are no nonzero entries between blocks with different label sums, so each block is itself invertible.
Every nonzero entry of the block and its inverse has absolute value
$a=2^{c-d}=N^{-1/2}$.

Let $v\ne0$ be the input coefficient vector in the block, with exactly $h$ nonzero coordinates.
Let $w=A^{Q\leftarrow P}v$ have exactly $s$ nonzero coordinates, and write
\[
 U=\max_j|v_j|>0,\qquad V=\max_i|w_i|>0.
\]
Each output coordinate is a sum of at most $h$ terms of absolute value at most $aU$, giving $V\le haU$.
Recovering $v$ from $w$ with the inverse matrix similarly gives $U\le saV$.
Therefore
\[
 U\le saV\le sha^2U.
\]
Canceling $U>0$ gives $hs\ge a^{-2}=N$, or $s\ge N/h$.

By Lemma~\ref{en:lem:twisted-one-factorization}, $F$ has exactly four nonzero coordinates in each of the $P$ and $Q$ bases.
Different blocks cannot cancel one another, and $h\le4$ in every block.
Hence $4\ge N/4=4^{d-c-1}$, or $d-c\le2$.
The two pairings share no edge, so every alternating cycle contains at least two $P$-edges, giving $c\le\lfloor d/2\rfloor$.
Together these inequalities imply $d\le4$.
\end{proof}

Since we also assume $2d\ge6$, only arities six and eight remain.
In both cases the resulting equalities give $N=16$, while each basis support has size four.
If the four $P$-basis vectors were distributed among several blocks with respective support sizes $h_1,\ldots,h_t$, then
\[
 \sum_{j=1}^t\frac{16}{h_j}>4
 \qquad(t\ge2,\ h_1+\cdots+h_t=4).
\]
Indeed, each $h_j\le4$, so the left-hand side is at least eight when $t\ge2$.
This contradicts the four-element support in the $Q$ basis.
Thus the four vectors in the $P$-basis support have the same directed label sum on every alternating cycle.

\subsection{Direction contradictions at arities six and eight}

Fix a pairing $P$ and index its four nonzero basis coordinates by the four labels on one of its edges.
For each $P$-edge $j$, the four branches in the scattered-pearl condition are all nonzero.
Consequently the four terms use each label exactly once at that position, giving a permutation $s_j$ of the four-point set.
Every such permutation has a unique expression
\[
 s_j(x)=L_jx+b_j,
\]
where $L_j$ is an invertible $2\times2$ matrix over $\mathbb F_2$.
To see this, take $b_j=s_j(0)$ and use $s_j\binom10+b_j$ and $s_j\binom01+b_j$ as its columns.
These are distinct nonzero vectors and are therefore linearly independent.
The image of the fourth point is then automatically correct, because the sum of all four points of $\mathbb F_2^2$ is zero and remains zero after any permutation.
We may arrange that the first edge has $L_0=\mathrm{Id}_{\mathrm{lab}},b_0=0$.

If an alternating cycle contains the $P$-edge positions in $J$, and $\epsilon_j$ records whether its traversal reverses the direction of $P$-edge $j$, equality of the four directed label sums means that their linear part vanishes:
\begin{equation}\label{en:eq:twisted-directed-label}
 \sum_{j\in J}\tau^{\epsilon_j}L_j=0.
\end{equation}

\begin{theorem}[No scattered-pearl function in the twisted form]\label{en:thm:twisted-no-pearl}
There is no nonzero function of arity at least six satisfying the scattered-pearl condition.
\end{theorem}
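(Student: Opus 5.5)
The plan is to combine the arity bound with the directed label equations~\eqref{en:eq:twisted-directed-label} for several different second pairings $Q$, and to derive an equation of the form $(\tau+\mathrm{Id}_{\mathrm{lab}})L=0$ with $L$ invertible. Since $\tau+\mathrm{Id}_{\mathrm{lab}}\ne0$, this is impossible.

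\textbf{Step 1: the cycle structure.} Let $F$ satisfy the scattered-pearl condition with arity $2d\ge6$. By Lemma~\ref{en:lem:twisted-support-spreading}, $d\in\{3,4\}$. The equality $N=16$ forces $d-c=2$, and every alternating cycle contains at least two $P$-edges. Hence for every pair of distinct pairings $P,Q$ from Lemma~\ref{en:lem:twisted-one-factorization}:
\begin{itemize}
\item when $d=3$, $P\cup Q$ is a single $6$-cycle through all three $P$-edges;
\item when $d=4$, $P\cup Q$ consists of two $4$-cycles, each through two $P$-edges.
\end{itemize}
Fix $P$ and its affine permutations $s_j(x)=L_jx+b_j$. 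The remark preceding the theorem shows that every $Q\ne P$ yields~\eqref{en:eq:twisted-directed-label} on each of its cycles.

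Reversing a traversal replaces $\tau^{\epsilon_j}$ by $\tau^{1+\epsilon_j}=\tau\cdot\tau^{\epsilon_j}$. So each equation is independent of the traversal and depends only on the \emph{relative} orientations of the $P$-edges along the cycle.

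\textbf{Step 2: the case $d=4$.} Take two $P$-edges $j=(p_j,q_j)$ and $k=(p_k,q_k)$. There are four cross edges between them. A $4$-cycle through $j$ and $k$ uses either $\{p_jp_k,q_jq_k\}$ or $\{p_jq_k,q_jp_k\}$, and these two choices give opposite relative orientations of $j$ and $k$. The $2d-1=7$ pairings partition all edges, so every cross edge lies in some $Q\ne P$. By Step 1, any $Q$ containing one cross edge between $j$ and $k$ closes a $4$-cycle through $j$ and $k$.

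Hence both cross-edge pairs occur, in two different pairings. This gives both
\[
L_j+L_k=0\quad\text{and}\quad L_j+\tau L_k=0,
\]
and therefore $(\tau+\mathrm{Id}_{\mathrm{lab}})L_k=0$. This contradicts the invertibility of $L_k$.

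\textbf{Step 3: the case $d=3$.} Each of the four pairings $Q\ne P$ gives one equation
\[
L_1+\tau^{a_Q}L_2+\tau^{b_Q}L_3=0,
\]
where $a_Q$ and $b_Q$ are the orientations of edges $2$ and $3$ relative to edge $1$. The relative orientation of edges $2$ and $3$ on the same cycle is then $a_Q+b_Q$.

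Each $6$-cycle uses exactly one cross edge between each pair of $P$-edges. There are three such pairs, each with four cross edges, for $12$ edges in total. Four pairings use $3$ edges each, also $12$ edges, so each cross edge is used exactly once.

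Two of the four cross edges between edges $1$ and $2$ give each relative orientation. So $a_Q$ takes each value exactly twice, and likewise $b_Q$ and $a_Q+b_Q$. Consequently the vectors $(a_Q,b_Q)$ cannot all lie in a coset $\{v,v+(1,1)\}$, since $a_Q+b_Q$ would then be constant. So some two pairings $Q,Q'$ differ in exactly one coordinate, say in $a$. Subtracting their equations gives $(\tau+\mathrm{Id}_{\mathrm{lab}})L_2=0$, the same contradiction.

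\textbf{Main obstacle.} The delicate point is the orientation bookkeeping. I must verify three things carefully:
\begin{itemize}
\item the two ways of joining a pair of $P$-edges really give opposite relative orientations;
\item in a $6$-cycle, relative orientations compose additively;
\item the dual-factor twist already built into~\eqref{en:eq:twisted-directed-label} does not shift these parities.
\end{itemize}
I would check each of these directly from the definitions of $\epsilon^P_e$ and the traversal conventions preceding~\eqref{en:eq:twisted-block-criterion}.
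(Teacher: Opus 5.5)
Your proposal is correct and follows the paper's proof: both reduce to $d\in\{3,4\}$ and then exhibit two pairings whose alternating cycles traverse one $P$-edge with opposite relative orientation, so that subtracting the two instances of \eqref{en:eq:twisted-directed-label} gives $(\tau+\mathrm{Id}_{\mathrm{lab}})L=0$ with $L$ invertible. The differences are only in how those two pairings are found. For arity six, the paper takes any $Q$ together with the forced pairing $Q^*=(13)(25)(46)$ containing edge $13$, whereas you use a balancing count over all four pairings. For arity eight, your cross-edge covering argument replaces the paper's count of groupings. Your orientation bookkeeping checks out.
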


\begin{proof}
Lemma~\ref{en:lem:twisted-support-spreading} has reduced the arity to six or eight.

First consider arity six.
After renaming variables, take
\[
 P=(12)(34)(56),\qquad Q=(23)(45)(61).
\]
Among the pairings supplied by Lemma~\ref{en:lem:twisted-one-factorization}, the one containing edge $13$ cannot use any edge of $P,Q$.
Its only possible form is therefore
\[
 Q^*=(13)(25)(46).
\]
Direct the $P$-edges as $1\to2,3\to4,5\to6$.
Traverse the six-cycle in $P\cup Q$ as
\[
 1\to2\to3\to4\to5\to6\to1.
\]
All three $P$-edges are traversed forwards, so Equation~\eqref{en:eq:twisted-directed-label} gives
\[
 \mathrm{Id}_{\mathrm{lab}}+L_1+L_2=0.
\]
Traverse the six-cycle in $P\cup Q^*$ as
\[
 1\to2\to5\to6\to4\to3\to1.
\]
Now $12,56$ are traversed forwards and $34$ backwards, giving
\[
 \mathrm{Id}_{\mathrm{lab}}+\tau L_1+L_2=0.
\]
Adding these two equations over $\mathbb F_2$ gives $(\tau+\mathrm{Id}_{\mathrm{lab}})L_1=0$.
Since $L_1$ is invertible, this forces $\tau=\mathrm{Id}_{\mathrm{lab}}$, contradicting Equation~\eqref{en:eq:twisted-label-transpose}.

Now consider arity eight. Fix
\[
 P=(12)(34)(56)(78).
\]
Besides $P$, Lemma~\ref{en:lem:twisted-one-factorization} supplies six pairings.
Each forms two four-cycles with $P$, and therefore partitions the four $P$-edges into two pairs.
There are three possible groupings:
\[
 \{12,34\}\mid\{56,78\},\quad
 \{12,56\}\mid\{34,78\},\quad
 \{12,78\}\mid\{34,56\}.
\]
Fix the first grouping.
The only two ways to connect $12,34$ are $(13)(24)$ and $(14)(23)$.
Likewise, the only two ways to connect $56,78$ are $(57)(68)$ and $(58)(67)$.
If two pairings make the same choice in either place, they share an edge.
Thus this grouping contains at most two pairings.
The six pairings fall into three groupings, so each contains exactly two, whose choices are complementary in both places.

Consequently, one pairing in the first grouping uses $(14)(23)$ on $1,2,3,4$, and the other uses $(13)(24)$.
Direct $12$ as $1\to2$ and $34$ as $3\to4$.
The former pairing gives the four-cycle $1\to2\to3\to4\to1$, traversing both $P$-edges forwards, so
\[
 \mathrm{Id}_{\mathrm{lab}}+L_1=0.
\]
The latter gives the four-cycle $1\to2\to4\to3\to1$, traversing $12$ forwards and $34$ backwards, so
\[
 \mathrm{Id}_{\mathrm{lab}}+\tau L_1=0.
\]
Adding and multiplying on the right by $L_1^{-1}$ again gives
$\tau=\mathrm{Id}_{\mathrm{lab}}$, contradicting the transpose action of the twisted form.
\end{proof}

\begin{corollary}[Function structure in the twisted upper case]\label{en:cor:twisted-internal-exit}
In the twisted-form case of Assumption~\ref{en:ass:klein-upper}, every nonzero function in $\mathscr R$ is a $\Ktw$-basis function.
\end{corollary}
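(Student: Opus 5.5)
The plan is a contradiction argument that combines the three results just established for the twisted form. Suppose some nonzero function of $\mathscr R$ is not a $\Ktw$-basis function. The first step is to dispose of odd arity. By Lemma~\ref{en:lem:klein-upper-basic}, $\mathscr R$ contains no nonzero odd-arity function, so every nonzero member has even arity. Our function therefore lies in $\mathscr R_{\mathrm{nb}}$, and this set is nonempty.

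Next, choose a member $F$ of $\mathscr R_{\mathrm{nb}}$ of minimum arity. Lemma~\ref{en:lem:twisted-minimal-pearl} applies because we are in the twisted-form case of Assumption~\ref{en:ass:klein-upper}. It shows that $F$ has arity at least six and satisfies the scattered-pearl condition of Definition~\ref{en:def:scattered-pearl}. Theorem~\ref{en:thm:twisted-no-pearl} says that no nonzero function of arity at least six satisfies that condition. This is a contradiction, so $\mathscr R_{\mathrm{nb}}=\varnothing$.

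It remains to check the small arities, so that the conclusion covers literally every nonzero function of $\mathscr R$. Nonzero nullary functions are basis functions by convention. Binary functions are basis functions by item (1) of Assumption~\ref{en:ass:klein-upper}. Quaternary functions are basis functions by Lemma~\ref{en:lem:klein-upper-basic}. All of these are already subsumed in the definition of $\mathscr R_{\mathrm{nb}}$ together with the minimality lemma.

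I expect no real obstacle here, since all the substance lies in the preceding lemmas and theorem. The one point to verify is that Theorem~\ref{en:thm:twisted-no-pearl} uses only the scattered-pearl condition, through Lemmas~\ref{en:lem:twisted-one-factorization} and~\ref{en:lem:twisted-support-spreading}. In particular, it must not rely on minimality or availability beyond what Lemma~\ref{en:lem:twisted-minimal-pearl} supplies. Reading those proofs confirms that they use only the four nonzero basis-function contractions on every pair, so the composition is valid.
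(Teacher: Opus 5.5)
Your proposal is correct and matches the paper's proof: if $\mathscr R_{\mathrm{nb}}$ were nonempty, Lemma~\ref{en:lem:twisted-minimal-pearl} would give a minimum-arity scattered-pearl function of arity at least six, contradicting Theorem~\ref{en:thm:twisted-no-pearl}, and Lemma~\ref{en:lem:klein-upper-basic} rules out odd arities. Your separate check of arities $0,2,4$ is redundant, as you note, since emptiness of $\mathscr R_{\mathrm{nb}}$ already covers them, but it does no harm.
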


\begin{proof}
If $\mathscr R_{\mathrm{nb}}$ were nonempty, Lemma~\ref{en:lem:twisted-minimal-pearl} would give a minimum-arity function satisfying the scattered-pearl condition, whereas Theorem~\ref{en:thm:twisted-no-pearl} says that no such function exists.
Thus every nonzero even-arity function is a basis function.
Lemma~\ref{en:lem:klein-upper-basic} has already excluded nonzero odd-arity functions.
\end{proof}

\begin{lemma}[A cycle-by-cycle algorithm for basis functions]\label{en:lem:klein-cycle-algorithm}
Let $\mathcal H$ be a fixed finite function set, where every nonzero function is given as a nonzero constant multiple of a $\Kstd$-basis function or a $\Ktw$-basis function.
Then $\Holant(\mathcal H)\in\FP$.
\end{lemma}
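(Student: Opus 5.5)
Given an instance of $\Holant(\mathcal H)$, the plan is to evaluate it directly by exploiting the product structure of the vertex functions. If any vertex carries the zero function, output $0$. Otherwise each vertex function is $c_v\prod_j M_{v,j}(x_{q_j},x_{p_j})$, with $M_{v,j}$ drawn from $\Kstd$ or $\Ktw$. These factorizations are fixed in advance, since $\mathcal H$ is a fixed finite set, so no computation is needed to find them.

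First, replace every vertex of arity $2d$ by $d$ binary vertices $M_{v,j}$ joined to the corresponding ports, and record the scalar $\prod_v c_v$. By associativity of function-nets this preserves the value, because a tensor product is just a disjoint union of its factors. Zero-arity vertices contribute only constants, and odd-arity nonzero functions do not occur, since a basis function has even arity. The resulting net has every vertex of degree two, so it is a disjoint union of cycles; self-loops and parallel edges are harmless here. Each cycle is computed independently and the results are multiplied.

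Second, evaluate each cycle. Traverse it in a chosen direction. Each binary vertex $M(x_q,x_p)$ contributes either $M$ or $M^{\mathsf T}$, according to whether the traversal follows or reverses its port order, and ordinary edges are identity kernels. The cycle value is therefore $\Tr$ of a product of $2\times2$ algebraic matrices, which costs $O(\text{length})$ arithmetic operations. Since all matrices come from $\Qeight$ or its twisted analogue, the product is $\pm$ a group element. A simpler alternative is to track the product exactly within the finite group $\{\pm M\}$, giving trace values in $\{0,\pm2\}$, which avoids any growth in bit size.

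Finally, multiply the cycle traces together with the product of the constants $c_v$. The number of factors is linear in the instance size, and the constants form a fixed finite set of algebraic numbers, so the result is a product $\prod_f c_f^{n_f}$ times $\pm2^{k}$ (or $0$). It can be output exactly in polynomial time in a fixed number field representation. No step poses a real obstacle. The only care needed is to orient transposes consistently along each cycle and to represent the algebraic output exactly, which the finite-group bookkeeping handles. This places $\Holant(\mathcal H)$ in $\FP$.
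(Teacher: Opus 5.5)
Your proof is correct and follows the paper's argument essentially step for step: split each vertex along its preprocessed witnessing pairing, note that every component is a cycle (self-loops and parallel edges included), and multiply the traces of the oriented $M$/$M^{\mathsf T}$ products by the vertex constants. One small caveat: your side remark that each cycle product lies in $\pm\Kstd$ or $\pm\Ktw$, with trace in $\{0,\pm2\}$, assumes that $\mathcal H$ uses only one of the two forms. If it mixes $\Kstd$- and $\Ktw$-basis functions, which the statement permits, traces such as $\Tr(XX')=-\sqrt2$ occur. This does not affect the direct trace computation, which is what the paper uses.
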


\begin{proof}
If an input function-net has a vertex carrying a zero function, its value is zero.
Assume henceforth that every vertex function is nonzero.
The function set is fixed, so a witnessing pairing, the matrix on each pair, and the overall constant of each function can be preprocessed.
Split each positive-arity vertex of the input function-net into binary-factor vertices according to its pairing.
Nullary functions contribute directly to the overall constant.
Each binary-factor vertex has two ports, and the internal edges of the original function-net still connect these ports.
Every connected component of the resulting graph is therefore a cycle.
Self-loops give cycles of length one, and parallel edges may give cycles of length two; both are included.

Choose a direction on each cycle.
When traversing a binary factor, read $M$ if the direction agrees with the recorded order of its ports and $M^{\mathsf T}$ if it reverses that order; read ordinary internal edges as $I$.
Successively summing over the Boolean variables on the cycle gives exactly the trace of the product of these $2\times2$ matrices.
Multiply the traces of all cycles and the overall constants of all vertices to obtain the value of the original function-net.
The number of matrices is proportional to the input size, so this is a polynomial-time algorithm.
\end{proof}

\subsection{Conclusion for the upper case}\label{en:subsec:klein-exits}

\begin{theorem}[Complexity dichotomy for the Klein upper case]\label{en:thm:klein-upper-dichotomy}
Let $\mathcal F$ be a fixed finite algebraic function set whose nonzero ordinarily realizable binary functions have the Klein group as their projective group.
Suppose that after the orthogonal transformation of Theorem~\ref{en:thm:two-klein-normal-forms}, Assumption~\ref{en:ass:klein-upper} holds.
Then:
\begin{enumerate}[label=\textup{(\arabic*)}]
 \item In the standard form, each function can be multiplied by a nonzero algebraic constant to become real-valued.
       If the resulting real-valued function set satisfies condition \textup{(T)} of Theorem~\ref{en:thm:real-holant}, the original problem belongs to $\FP$; otherwise it is $\sharpP$-hard.
 \item In the twisted form, every nonzero ordinarily realizable function is a $\Ktw$-basis function, and the original problem belongs to $\FP$.
\end{enumerate}
\end{theorem}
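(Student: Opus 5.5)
The theorem is assembled from results already proved, and the proof will treat the two normal forms separately. First I invoke the common framework. Lemma~\ref{en:lem:framework-available-closure} supplies the finite binary group, and Lemma~\ref{en:lem:klein-lift-q8} identifies it with $\Qeight$. Theorem~\ref{en:thm:two-klein-normal-forms} then gives the orthogonal matrix $O$. By Proposition~\ref{en:prop:holographic-reduction} with $E=I$ and $P=O$, the problem $\Holant(\mathcal F)$ is Turing equivalent, with equal instance values, to the transformed problem on the same equality edges. Complexity is therefore unchanged, and I may assume $\mathcal F$ is already in standard or twisted form with Assumption~\ref{en:ass:klein-upper} holding. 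Transpose closure, which the normal-form theorem requires, comes from port exchange of binary gadgets.

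For item (1), the standard form, I quote Theorem~\ref{en:thm:standard-klein-realification}. It applies to each nonzero $f\in\mathcal F$, since each lies in $\mathscr R$ and has even arity by Lemma~\ref{en:lem:klein-upper-basic}; zero functions are left unchanged. This gives the per-function real scaling. Corollary~\ref{en:cor:standard-klein-real-holant} then shows that the scaling constants are algebraic and gives $\Holant(\mathcal F)\equiv_{\mathrm T}\Holant(\mathcal F_{\mathbb R})$. The Shao--Cai dichotomy, Theorem~\ref{en:thm:real-holant}, finishes item (1): $\FP$ under condition (T), and $\sharpP$-hardness otherwise. Undoing the holographic transformation carries both conclusions back to the original set.

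For item (2), the twisted form, Corollary~\ref{en:cor:twisted-internal-exit} says that every nonzero member of $\mathscr R$ is a $\Ktw$-basis function. This holds in particular for every function of the transformed input set, since each is ordinarily realizable from itself. Lemma~\ref{en:lem:klein-cycle-algorithm} then puts the transformed problem in $\FP$, and the Turing equivalence of the orthogonal transformation returns the conclusion to the original $\mathcal F$. The only care needed is that ``ordinarily realizable'' in the statement refers to the transformed closure. Since $O^{\mathsf T}O=I$, gadgets correspond one-to-one under the transformation, so the structural statement transfers with the basis $O\Ktw O^{\mathsf T}$ in the original coordinates.

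There is no real obstacle here; the substantive work is in the cited results. The step needing the most attention is the bookkeeping: the reductions of Lemma~\ref{en:lem:framework-available-closure}, which adjoin a finite set of auxiliaries, must be composed with the orthogonal change of basis and the per-function scaling of Lemma~\ref{en:lem:per-signature-scaling}, so that the final equivalence is a single fixed finite chain of polynomial-time Turing reductions.
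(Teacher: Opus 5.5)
Your proposal is correct and follows the paper's proof: Theorem~\ref{en:thm:two-klein-normal-forms} splits the problem into the two normal forms, Corollary~\ref{en:cor:standard-klein-real-holant} settles the standard form, and Corollary~\ref{en:cor:twisted-internal-exit} together with Lemma~\ref{en:lem:klein-cycle-algorithm} settles the twisted form. Your extra bookkeeping is consistent with the paper but not needed, because the statement is already phrased in the transformed coordinates. This covers the transfer back through the orthogonal transformation and the basis $O\Ktw O^{\mathsf T}$ in the original coordinates.
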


\begin{proof}
Theorem~\ref{en:thm:two-klein-normal-forms} exhausts the possibilities with the two normal forms.
The standard-form conclusion follows from Corollary~\ref{en:cor:standard-klein-real-holant}.
The twisted-form conclusion follows from Corollary~\ref{en:cor:twisted-internal-exit} and Lemma~\ref{en:lem:klein-cycle-algorithm}.
\end{proof}

This completes the Klein case under the quaternary decomposition condition in the main text.
The lower case, in which a quaternary function has no binary-by-binary decomposition, is left untreated here.

\paragraph*{Statement on the proof work}

As a rough division of the work, the first three versions may be regarded as the author's own work.
The proof of the twisted Klein case in the fourth version was created by AI, and the author has checked the complete proof.

More precisely, AI developed the twisted Klein proof with guidance from the author through conversation.
The author observed that, in the standard form, the real/imaginary division is also the division of the Klein group into $\{I,Y\}$ and $\{X,Z\}$.
He conjectured that the other two divisions---$\{I,X\}$ versus $\{Y,Z\}$, and $\{I,Z\}$ versus $\{X,Y\}$---should have the same effect, and wondered whether the other two divisions in the twisted form would have the same effect as well.
AI proved the claim for the standard form and directly reported a stronger result, the simultaneous nonzero contraction lemma, which also holds in the twisted form.
This was the author's main guiding contribution in the conversations, and its role already seemed quite small.
The lengthy exchanges before and after it, supplying materials and instructions, appeared to play an even smaller role.

To describe the work up to this point more precisely: apart from the twisted Klein case, only the tetrahedral arity-reduction lemma for the standard form was independently proved in full by a paid AI service after the author supplied the proposition, while he was busy grading final examinations; see the acknowledgments in the third version.
The author subsequently checked and rewrote that proof.
Looking back, he considers formulating the proposition an important contribution, and the proof was within his own capabilities.
In all remaining parts, free AI tools were used only for work requiring relatively little original creation, such as literature retrieval, adjusting proof wording, simple matrix calculations, and Chinese-to-English translation, with only modest results.
For example, they could not identify which Burnside lemma to use; a search engine ultimately resolved the matter.

\clearpage
\appendix
\section{The Road to Completion}\label{en:sec:appendix-road}
Recently, the dichotomy theorem with binary disequality\cite{en:liu2026disequality} was submitted to arXiv on August 31, 2026.
I saw it on September 8 and also heard in conversation that a complete dichotomy theorem had been submitted and was awaiting public display online. After thinking for a few hours, I decided to quickly prepare a sprint version,
filling in the remaining cases from the dichotomy with binary disequality to the final dichotomy, except for the C2 case, for which a complete independent manuscript already exists.
Of the cases needing completion, all except the odd dihedral groups were close to being proved in the first three versions. I subscribed to a stronger paid AI service to help complete some proofs and organize the text.

The proof in the paper establishing the dichotomy with binary disequality\cite{en:liu2026disequality} is long and covers many cases, including the entire Klein group case discussed here, as well as the remaining even dihedral groups and the three regular polyhedral rotation groups. Incidentally, I once tried to unify the proofs for all larger groups containing a Klein group: I wanted first to establish conclusions common to the four Klein matrices and then show that the additional elements in each group provide capabilities with similar effects, leading to the respective complexity dichotomies. What I did not manage to do was accomplished by this dichotomy theorem (one only needs an auxiliary analysis showing that binary disequality can be obtained in these larger groups).
With this major theorem as a foundation, the final stage of the original nine-class group framework has a shorter route: carry out a brief group-theoretic analysis to identify the groups that can obtain binary disequality after a complex orthogonal holographic transformation, and then invoke the theorem for those cases.
Thus, from the standpoint of completing the complexity classification, the main text can retain the direct proof for the Klein upper case, while the lower case is handled by this route in the appendix; this version does not develop the direct proof for the lower case.

After this analysis, the cases in which binary disequality cannot yet be obtained lie only among cyclic groups and odd dihedral groups, corresponding to
$C_1,C_2,C_k\ (k\ge3)$ and $D_{2n}\ (n\ge3\text{ odd})$ in the nine classes.
All cyclic groups of odd order remain; for cyclic groups of even order and odd dihedral groups, only matrix representations in which binary disequality cannot yet be obtained remain.
Coincidentally, at the beginning of the semester after this year's Spring Festival, two of my junior graduate students chose $C_2$ and odd dihedral groups, respectively, as directions for exploration and research.
The dichotomy for $C_2$ has been completed by Yuxuan Zhou in independent work
\cite{en:zhou-draft}; I am not an author of that paper, and this appendix only cites its result.
The higher-order cyclic groups $C_k$ were already proved in the first draft, v1\cite{en:xia2026v1}; the main proof for $C_1$ was given in
v2\cite{en:xia2026v2}, but some finite cases remained unverified at the time, and some lower cases had not been fully treated.
Much of what needs to be added in this round happens to have a foundation in earlier work by my students and me that was almost entirely manual (including help from free AI with capabilities slightly stronger than a search engine). However, after completion and rewriting with the current paid AI, some arguments from the main text through the appendix have been refined and no longer retain quite their original flavor.

Since a complete dichotomy theorem has also been submitted and is awaiting public display online,
let me borrow a metaphor of my own: my colleagues' toes have already touched the finish line, and once their paper is displayed publicly, their heels will have crossed it too.
I want to try to use this short interval to organize an AI-assisted version, let my own toes touch the finish line as well, and experience the satisfaction of following them across it, bringing a stage of my pursuit of more than twenty years to a close.
This sprint version rests on the major theorem of~\cite{en:liu2026disequality}; the task of this appendix is to identify the parts of the nine-class group framework that can directly use it, and then complete the remaining proofs.

This appendix has three parts, in order:
\begin{enumerate}[label=\textup{(\arabic*)}]
 \item a brief analysis determining which cases among the nine group classes can obtain binary disequality and invoke the new theorem;
 \item proofs for the cyclic groups, focusing on completing the remaining $C_1$ cases with AI assistance, organizing the existing $C_k$ proof,
       and citing the independent $C_2$ result;
 \item the dichotomy theorem for odd dihedral groups. The proof presented in this appendix was completed independently by AI after it was given the framework and research questions.
\end{enumerate}

\paragraph*{Why retain the original route?}
After completing this sprint with the new theorem, I still hope to follow the original framework and complete direct proofs, class by class, without invoking that theorem. This is a second route to the finish, which this appendix does not develop.

I conjecture that the organization of the proof in~\cite{en:liu2026disequality} may resemble the earlier $\#\mathrm{EO}$ dichotomy in that new auxiliary functions are continually obtained and added during the investigation.
Studying which auxiliary functions can be obtained, and which properties the original functions must satisfy when they cannot be obtained, reflects what I call the ``auxiliary role'' and ``self-discipline,'' respectively.
The continual addition of auxiliary functions is a part of the $\#\mathrm{EO}$ framework that I particularly like; my favorite strategic method within it is realnumberizing. The basic idea of this framework is not complicated, but proving all cases requires handling many stages. I prefer to separate the stages, taking each new group situation as a stage and dividing the cases by group, so that the conditions for each case are clearer.

As a metaphor, such a proof is like explaining how an army starts with a commander alone, gradually gains special forces, marines, and a navy, and ultimately forms three armed services. Each stage has its own personnel and capabilities, along with corresponding discipline and constraints; the proof must explain the auxiliary role and self-discipline at each stage.
I personally prefer to use the nine-class group classification directly: from the outset, the complete binary group identifies the auxiliary capabilities already possessed by the class and the restrictions imposed by its inability to produce other binary functions, after which the analysis proceeds within that class.

The marines have their own personnel structure and discipline, as does the navy. When studying them separately, one can state their respective auxiliary capabilities and self-discipline directly, without taking ``how the marines develop into a navy'' as the main thread of the proof and then distinguishing cases in which such development is possible from those in which it is not: which personnel are added in the former cases, and which conclusions follow from the original discipline in the latter.
I favor the nine-class group framework precisely because it makes these characteristics of each class easier for me to see.

As discussed in the first version, the number of available binary auxiliary functions and their distribution in the matrix space or subspace under consideration affect the strength of arity reduction. When the number of auxiliary directions exceeds the dimension of the relevant space, the auxiliary power is usually already strong enough that one can expect substantial arity-reduction results; the proof methods for these groups also often share common abstract features.
Explaining these common features to AI can help advance direct proofs for the polyhedral groups and higher-order dihedral groups.
Thus, after this sprint using~\cite{en:liu2026disequality}, I remain willing to work through the original framework once more: groups with stronger auxiliary power may admit similar methods, although completing the proofs one by one requires substantial work; groups with fewer auxiliary directions than, or exactly as many as, the dimension of the relevant space come in fewer types, and their proofs often have more individual character and more ``bite.''

Below we develop only the first route. To facilitate the passage from arity-reduction analysis to complexity dichotomies, we first restate the required model and notation.
The product-type class $\mathcal P$ is used in the EO classification for the order-one group and in the final tractability criterion for odd dihedral groups;
the binary pairing product class $\mathcal P_{\mathcal B}$ describes functions that have already decomposed into binary factors during arity reduction.

\paragraph*{The problem, notation, and reductions in this appendix}
An $m$-ary Boolean function is a map $f:\{0,1\}^m\to\overline{\mathbb Q}$.
The function set is fixed and finite; function values are given by exact representations of algebraic numbers.
We call what is usually known as a tensor network a function-net.
A function-net consists of an underlying graph, which may have self-loops and parallel edges, a function at each vertex, and an ordering of the edge ends at each vertex.
External half-edges are also ordered. Summing over the internal edge variables while retaining the external variables gives the net-function
\[
 f_G(\mathbf x)=\sum_{\mathbf z\in\{0,1\}^{E(G)}}
       \prod_{v\in V(G)}f_v(\mathbf z,\mathbf x).
\]
Each internal variable appears at two edge ends; a self-loop appears twice at the same vertex.
When there are no external half-edges, the net-function is a constant, and the problem of computing it is denoted by $\Holant(\mathcal F)$.
The empty product is $1$. Zero functions and nullary functions are treated as known scalars.
The associativity of finite summation allows any sub-function-net to be combined first, so a fixed gadget can replace its net-function.

The net-function of a gadget is called an ordinarily realizable function; tensor factors obtained by decomposition are also treated as available functions, according to Lemma~\ref{en:lem:framework-available-closure}.
Function slices are used to read the values on specified inputs, and interpolation reductions are given where they are used.
Known nonzero scalars can be compensated for: if a gadget realizes $cf$ and is substituted at $r$ locations in an instance, the answer is multiplied by $c^r$, and this nonzero factor is divided out at the end.

A binary function is written as a matrix according to the order of its two variables. Serial composition and swapping the two external ports give matrix multiplication and transpose, respectively.
Throughout the paper, we fix
\[
 I=\begin{pmatrix}1&0\\0&1\end{pmatrix},\qquad
 N=\begin{pmatrix}0&1\\1&0\end{pmatrix},\qquad
 Y=\begin{pmatrix}0&1\\-1&0\end{pmatrix}.
\]
$N$ is binary disequality, and $Y$ is antisymmetric.
In the conventional notation for the quaternion group, the other two axes are
$X=iN$ and $Z=i\diag(1,-1)$.

\paragraph*{Vertex functions and edges after a change of basis}
For an invertible matrix $P$, let $\widehat f=(P^{-1})^{\otimes m}f$.
Combining the two adjacent matrices on each old edge gives the new edge kernel $Q=P^{\mathsf T}P$.
Thus the entire problem becomes exactly a function-net with edge kernel $Q$ and vertex functions $\widehat{\mathcal F}$.
For a binary function $A$, the actual transformation is
\begin{equation}\label{en:eq:binary-congruence}
 \widehat A=P^{-1}AP^{-\mathsf T},\qquad
 \widehat A Q=P^{-1}AP.
\end{equation}
Combining the new binary vertex with a new edge gives the port transfer action represented by the latter matrix.
The old edge kernel is preserved only when $P^{\mathsf T}P=I$.
Here and below, $\mathsf T$ denotes transpose without conjugation.

In particular, take
\begin{equation}\label{en:eq:K-basis}
 K=\frac1{\sqrt2}\begin{pmatrix}1&1\\ i&-i\end{pmatrix},
 \qquad K^{\mathsf T}K=N.
\end{equation}
The new problem has fixed $N$ edges, and its binary vertex functions are $K^{-1}AK^{-\mathsf T}$.
In this model, when two ports are shorted with a binary vertex $A$, the effective contraction matrix is $NAN$, whereas attaching $A$ to one port gives the transfer matrix $AN$.
We also denote this native $N$-edge model by $\Holant(N\mid\Gamma)$ below, synonymously with $\Holant_N(\Gamma)$ in the main text; regarding each $N$ edge as a binary vertex gives the corresponding bipartite function-net.

\paragraph*{Two distinct product classes}
The function class $\mathcal P$ is the product-type class for counting constraint satisfaction problems: a function can be written as a product of unary functions, equality constraints, and disequality constraints, with variables allowed to recur in these factors.
It includes generalized equality functions of every arity.
In contrast, $\mathcal P_{\mathcal B}$ consists of the zero functions and the tensor products of full-rank binary functions from the current group over a perfect matching, allowing a nonzero overall scalar.
These binary directions have been added to the current function set according to the general framework.
In the latter class, each variable appears in exactly one binary factor.
When all binary matrices are full-rank diagonal or antidiagonal matrices, a nonzero $2m$-ary
$\mathcal P_{\mathcal B}$ function has exactly $2^m$ support inputs, and
$\mathcal P_{\mathcal B}\subseteq\mathcal P$.
The quaternary equality function $=_4=\ket{0000}+\ket{1111}$ belongs to $\mathcal P$,
but not to $\mathcal P_{\mathcal B}$.

We also use three known tractable classes below. Functions in the affine class $\mathcal A$ have the form
$c\,\mathbf1_{L}(x)i^{Q(x)}$, where $L$ is an affine subspace over $\mathbb F_2$ and
$Q$ is a quadratic polynomial modulo four whose cross-term coefficients are even.
Let $\alpha=e^{\pi i/4}$ and $T=\diag(1,\alpha)$, and define
$\mathcal A_\alpha=\{T^{\otimes m}f:f\in\mathcal A,\ \arity(f)=m\}$.
The local affine class $\mathcal L$ requires that, for each $\sigma\in\supp(f)$,
the function $\alpha^{\sum_j\sigma_jx_j}f(x)$ belongs to $\mathcal A$.
These classes all include the zero functions. A $2d$-ary function is an EO function if its support contains only inputs of weight $d$.
An EO function has the ARS property if
$f(x)=\overline{f(\bar x)}$, where $\bar x$ denotes bitwise complementation.
This equation describes only the real structure of the function values; function-net summation still involves no conjugation.

\subsection{Which Groups Have Binary Disequality?}\label{en:subsec:road-N}
\label{en:sec:binary-group}
Following the group framework in the foundations, $\mathcal B$ contains all determinant-one representatives of the available binary functions and is closed under multiplication and transpose; its finiteness was proved in Lemma~\ref{en:lem:framework-available-closure}.
The projective quotient group is
\[
 \mathcal G=\mathcal B/\{I,-I\}.
\]
This section only needs to determine which groups contain a symmetric, traceless binary matrix.
Orthogonally diagonalizing such a matrix and then applying a simple orthogonal change of basis gives binary disequality $N$.

\begin{lemma}\label{en:lem:N-symmetric}
There exists a complex orthogonal holographic transformation after which the current problem ordinarily realizes a nonzero multiple of $N$ if and only if $\mathcal B$ contains a nonzero, symmetric, traceless matrix.
\end{lemma}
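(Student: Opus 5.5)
Both directions rest on the congruence rule of Proposition~\ref{en:prop:holographic-reduction}. When $O^{\mathsf T}O=I$, a binary function $A$ becomes $O^{\mathsf T}AO$, and this map preserves symmetry, trace and determinant up to the usual invariance (trace is invariant under similarity, and $O^{\mathsf T}=O^{-1}$). The realizable binary directions are exactly the scalar multiples of elements of $\mathcal B$, both before and after the transformation. The matrix $N$ is symmetric, traceless and nonzero, with $\det N=-1$.

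For the ``only if'' direction, suppose that after some orthogonal $O$ a nonzero multiple $cN$ is ordinarily realizable. Then $O(cN)O^{\mathsf T}$ is realizable in the original problem. It is symmetric, traceless and nonzero, since all three properties are preserved by orthogonal congruence. Its determinant is $-c^2\ne0$, so it is full rank. Dividing by a square root $s$ of its determinant gives a representative in $\mathcal B$ that is still symmetric and traceless.

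For the ``if'' direction, let $S\in\mathcal B$ be nonzero, symmetric and traceless. Its determinant is $1$, so its eigenvalues are $\pm i$ and distinct.
\begin{enumerate}[label=\textup{(\arabic*)}]
 \item Diagonalize $S$ orthogonally exactly as in the proof of Theorem~\ref{en:thm:two-klein-normal-forms}. Eigenvectors $u,v$ for distinct eigenvalues of a symmetric matrix satisfy $u^{\mathsf T}v=0$. Nondegeneracy of the bilinear form then forces $u^{\mathsf T}u\ne0$ and $v^{\mathsf T}v\ne0$. We can therefore normalize $u$ and $v$ and use them as the columns of a complex orthogonal $O_1$ with $O_1^{\mathsf T}SO_1=Z=i\diag(1,-1)$. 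This entry is algebraic, since only square roots are taken.
 \item Apply the fixed real orthogonal Hadamard-type rotation $H=\frac1{\sqrt2}\begin{pmatrix}1&1\\1&-1\end{pmatrix}$. It satisfies $H^{\mathsf T}\diag(1,-1)H=N$, so $H^{\mathsf T}ZH=iN$.
\end{enumerate}
With $O=O_1H$, the transformed problem realizes $iN$ and hence, after scaling by the known constant $-i$ via Lemma~\ref{en:lem:per-signature-scaling}, the function $N$. Equality edges are preserved because $O^{\mathsf T}O=I$.

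The main obstacle is the isotropy issue in step (1). A symmetric complex matrix need not be orthogonally diagonalizable; for example, a nilpotent symmetric matrix is not. Here the distinct eigenvalues $\pm i$ rescue the argument through the nondegeneracy argument already used in Theorem~\ref{en:thm:two-klein-normal-forms}, so I will invoke it verbatim. A minor point is that ``contains a nonzero symmetric traceless matrix'' refers to the normalized representatives, so in the ``only if'' direction the normalization must be by $s$ with $s^2=\det$.
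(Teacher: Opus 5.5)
Your proposal is correct and follows essentially the same proof as the paper. The forward direction uses invariance of symmetry and trace under orthogonal congruence, which you spell out a bit more fully by renormalizing by $s$ with $s^2=\det$. The converse takes non-isotropic eigenvectors for $\pm i$, giving an orthogonal diagonalization to $Z$, and then applies the Hadamard rotation to obtain $iN$.
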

\begin{proof}
Orthogonal congruence is also similarity, so it preserves symmetry and trace.
Conversely, suppose $A\in\mathcal B$ is symmetric and traceless.
Its determinant is one, so its eigenvalues are $i,-i$. Corresponding eigenvectors $u,v$ satisfy
$u^{\mathsf T}v=0$. Both eigendirections are non-isotropic: if, for example, $u^{\mathsf T}u=0$,
then for the nondegenerate bilinear form in dimension two, $u^\perp=\mathbb C u$, forcing $v$ to be collinear with $u$, a contradiction.
Normalizing them separately gives an orthogonal eigenbasis that transforms $A$ into $i\diag(1,-1)$.
Then applying $2^{-1/2}\left(\begin{smallmatrix}1&1\\1&-1\end{smallmatrix}\right)$
gives $iN$. All scalars can be compensated for.
\end{proof}

\begin{lemma}\label{en:lem:involution-route}
Suppose the finite group $\mathcal G$ has even order. If $Y\notin\mathcal B$, then $N$ can be obtained orthogonally.
If $Y\in\mathcal B$, then $N$ can be obtained orthogonally if and only if $[Y]$ belongs to a Klein four subgroup of $\mathcal G$.
\end{lemma}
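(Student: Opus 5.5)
The plan is to reduce everything to Lemma~\ref{en:lem:N-symmetric}: we must decide when $\mathcal B$ contains a symmetric traceless matrix. Two elementary facts about $\mathrm{SL}_2$ make this a statement about the projective group. First, an element $A\in\mathcal B$ with $A\ne\pm I$ is traceless iff $A^2=-I$, i.e.\ iff $[A]$ is an involution of $\mathcal G$. This holds by Cayley--Hamilton, exactly as in the proof of Lemma~\ref{en:lem:klein-lift-q8}, since $A^2=I$ with $A\ne\pm I$ is impossible at determinant one. Second, every $2\times2$ matrix of determinant one satisfies the adjugate identity $A^{\mathsf T}=YA^{-1}Y^{-1}$. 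For a traceless $A$ we have $A^{-1}=-A$, so $A^{\mathsf T}=-YAY^{-1}$. Hence a traceless $A$ is symmetric iff $AY=-YA$, and skew-symmetric iff $AY=YA$. In the skew case $A$ lies in $\operatorname{span}\{I,Y\}$, and tracelessness with $\det A=1$ forces $A=\pm Y$. So the symmetric traceless elements of $\mathcal B$ are exactly the lifts of involutions $[A]\ne[Y]$ of $\mathcal G$ whose lifts anticommute with $Y$.

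For the case $Y\in\mathcal B$, we must show that such an $A$ exists iff $[Y]$ lies in a Klein four subgroup. If $A$ anticommutes with $Y$, then $[A]$ and $[Y]$ are distinct commuting involutions, so $\{[I],[A],[Y],[AY]\}\cong\Kfour$. Conversely, suppose $[Y]$ lies in a Klein four subgroup, and let $[A]$ be another nontrivial element of it. The preimage of this subgroup in $\mathcal B$ satisfies the hypotheses of Lemma~\ref{en:lem:klein-lift-q8}, which gives $AY=-YA$. By the first paragraph, $A$ is then symmetric and traceless.

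For the case $Y\notin\mathcal B$, closure of $\mathcal B$ under transpose, together with $A^{\mathsf T}=YA^{-1}Y^{-1}$, shows that $YAY^{-1}=(A^{-1})^{\mathsf T}\in\mathcal B$ for every $A\in\mathcal B$. Thus conjugation by $Y$ induces an automorphism $\psi$ of $\mathcal G$ with $\psi^2=\mathrm{id}$. This $\psi$ permutes the set of involutions of $\mathcal G$. If that set has odd cardinality, some involution $[A]$ is fixed, which means $YAY^{-1}=\pm A$. The sign $+$ would force $A=\pm Y\notin\mathcal B$, so the sign is $-$. Then $A$ anticommutes with $Y$ and is symmetric traceless, and Lemma~\ref{en:lem:N-symmetric} finishes.

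It remains to check that every finite subgroup of $\mathrm{SO}_3$-type (the list in the nine-case table) of even order has an odd number of involutions. I expect this to be the main obstacle, only in that it requires running through the classification rather than a uniform argument. Going through the table:
\begin{itemize}
\item $C_{2k}$ has exactly $1$ involution;
\item $D_{2n}$ has $n$ involutions for odd $n$ and $n+1$ for even $n$, including $\Kfour=D_4$ with $3$;
\item the tetrahedral group has $3$;
\item the octahedral group has $9$;
\item the icosahedral group has $15$.
\end{itemize}
All of these counts are odd. Alternatively, one can argue uniformly: in any group of even order, the number of involutions is odd, because the non-identity elements pair off with their inverses. I would cite this standard fact instead of the table, which avoids the case analysis entirely.
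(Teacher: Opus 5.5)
Your proof is correct and follows essentially the same route as the paper: the involutions are odd in number because the remaining nonidentity elements pair off as $g,g^{-1}$; a transpose-fixed involution class has a symmetric lift unless that lift is $\pm Y$; and anticommutation with $Y$ links symmetric traceless matrices to Klein four subgroups containing $[Y]$. The adjugate identity $A^{\mathsf T}=YA^{-1}Y^{-1}$ (under which your $\psi$ coincides with the transpose action on involution classes) and your appeal to Lemma~\ref{en:lem:klein-lift-q8} only repackage the paper's entrywise comparisons, and the uniform parity argument you settle on is exactly the paper's, so the table check is unnecessary.
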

\begin{proof}
Pair the elements that are neither the identity nor of order two as $g,g^{-1}$.
Since the group order is even, the number of nonidentity elements of order two is odd.
Transpose acts as an involution on this odd set, so it fixes some coset $[A]$.
A determinant-one lift of a nonidentity coset of order two satisfies $A^2=-I$ and $\operatorname{tr}A=0$.
Since its coset is fixed, $A^{\mathsf T}=\pm A$.
The only antisymmetric two-dimensional matrices of determinant one are $\pm Y$.
Thus, when $Y\notin\mathcal B$, there must be a symmetric traceless matrix.

Now suppose $Y\in\mathcal B$. Every symmetric traceless matrix
$A=\left(\begin{smallmatrix}a&b\\b&-a\end{smallmatrix}\right)$ satisfies $AY=-YA$.
Hence $[A],[Y]$ are two distinct commuting elements of order two and generate a Klein four group.
Conversely, if $[A]\ne[I],[Y]$ is an element of a Klein subgroup containing $[Y]$,
then $AY=\pm YA$. If the two matrices commute, then $A=uI+vY$; zero trace forces $u=0$,
and hence $[A]=[Y]$, a contradiction. Therefore $AY=-YA$, and direct comparison of the four entries gives
$A^{\mathsf T}=A$ and $\operatorname{tr}A=0$. Apply Lemma~\ref{en:lem:N-symmetric}.
\end{proof}

In the classification of finite three-dimensional rotation groups, this criterion has the following direct interpretation.
\begin{center}
\begin{tabular}{@{}>{\raggedright\arraybackslash}p{.56\linewidth}>{\raggedright\arraybackslash}p{\dimexpr.44\linewidth-2\tabcolsep\relax}@{}}
\toprule
Projective group & Can $N$ be obtained orthogonally?\\
\midrule
Cyclic groups of odd order (including the trivial group) & No\\
Cyclic groups of even order & If and only if $Y\notin\mathcal B$\\
Odd dihedral groups & If and only if $Y\notin\mathcal B$\\
Even dihedral groups and the Klein four group & Yes\\
Tetrahedral, octahedral, and icosahedral rotation groups & Yes\\
\bottomrule
\end{tabular}
\end{center}
Here the dihedral group $D_{2n}$ has order $2n$, and odd or even refers to $n$.
This table does not require choosing an arbitrary Klein subgroup and assuming that it is preserved by transpose.
Odd cyclic groups have no elements of order two. Cyclic groups have no Klein subgroups; in an odd dihedral group, the product of two distinct elements of order two has nontrivial odd order, so it too has no Klein subgroup.
In an even dihedral group, the central half-turn and any other half-turn generate a Klein subgroup, so every element of order two belongs to such a subgroup.
Likewise, each half-turn in the three regular polyhedral rotation groups belongs to a triple of half-turns about pairwise perpendicular axes.
This can be checked directly in coordinates: take the tetrahedron with vertices $(1,1,1),(1,-1,-1),(-1,1,-1),(-1,-1,1)$;
the octahedron with the three coordinate unit vectors and their negatives as its six vertices; and the icosahedron with vertices
$(0,\pm1,\pm\varphi),(\pm1,\pm\varphi,0),(\pm\varphi,0,\pm1)$,
where $\varphi=(1+\sqrt5)/2$.
The three half-turns about the coordinate axes preserve each corresponding vertex set.
These are precisely the three rotations of order two for the tetrahedron; the half-turn axes of the icosahedron are conjugate by edge transitivity.
The octahedron also has half-turn axes through the midpoints of opposite edges; the representative axis $(1,1,0)$, together with
$(1,-1,0)$ and $(0,0,1)$, gives the required three perpendicular axes.
Thus every element of order two lies in a Klein subgroup, and Lemma~\ref{en:lem:involution-route} applies.

\begin{theorem}[The existing dichotomy with binary disequality\cite{en:liu2026disequality}]
For every finite set $\mathcal H$ of algebraic complex-valued Boolean functions,
$\Holant(\mathcal H\cup\{N\})$ has an $\FP/\sharpP$-hard dichotomy,
and its tractability criterion is decidable from exact algebraic input.
\end{theorem}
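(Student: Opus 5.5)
This statement is imported from~\cite{en:liu2026disequality} rather than proved here. My plan is therefore not to reprove the classification, but to check that the cited result applies in exactly the form used later in this appendix. That check has three parts: the model, the reductions, and the decidability claim.

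\emph{Model.} In~\cite{en:liu2026disequality} the disequality $N$ is an available vertex function in ordinary equality-edge function-nets. This is precisely $\Holant(\mathcal H\cup\{N\})$ as defined in Section~\ref{en:sec:foundations-zh}. It should not be confused with $\Holant_N(\mathcal H)$, where $N$ is the edge kernel. I would record the bridge between the two explicitly. Placing an $N$ vertex on every edge gives a reduction from $\Holant_N(\mathcal H)$ to $\Holant(\mathcal H\cup\{N\})$. Conversely, two consecutive $N$ vertices compose to $I$. Together with the $K$-transformation of Proposition~\ref{en:prop:holographic-reduction} and Lemma~\ref{en:lem:N-symmetric}, this lets every case in the table of Section~\ref{en:subsec:road-N} with answer ``Yes'' be rewritten as an instance of the cited theorem.

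\emph{Reductions.} The hardness side of the cited theorem should be stated with respect to polynomial-time Turing reductions. I would check that any interpolation used there yields $\equiv_{\mathrm T}$ in our sense. I would also check that its tractable side gives $\FP$ algorithms, and not merely membership under some nonconstructive condition. For inputs, the theorem must allow arbitrary fixed finite sets of algebraic complex-valued functions, including odd-arity and zero functions. These are harmless here: zero functions are treated as known scalars, and odd-arity functions fall under Theorem~\ref{en:thm:odd-arity-fp} anyway.

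\emph{Decidability, and the main obstacle.} The decidability of the tractability criterion is the part I expect to require the most care. The tractable families in such dichotomies are typically of the form $\mathcal P$, $\mathcal A$, $\mathcal A_\alpha$, $\mathcal L$, their images under finitely many holographic transformations, and ARS-type conditions. Each membership test reduces to finitely many polynomial identities over the number field generated by the inputs. Whenever a finite orbit of transformations is involved, I would verify that the orbit is explicitly enumerable, so that the criterion is effective from exact algebraic input. If any clause of~\cite{en:liu2026disequality} were stated only up to an unspecified transformation, I would first try to bound the relevant transformations by the finite group $\Bgroup$ of Lemma~\ref{en:lem:framework-available-closure}, which yields a finite check.
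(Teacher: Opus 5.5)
Treating the statement as an import matches the paper: it gives no proof, and its only remark is that the $N$ here is an available vertex function, not an edge kernel produced by a change of basis. Your model-bridging paragraph, however, contains a step that fails. In an $N$-edge model, a path through $k$ binary $N$ vertices crosses $k+1$ edges, so its effective kernel is $N^{2k+1}=N$. The identity $N\cdot N=I$ holds only across equality edges, that is, in the ordinary model, where it simulates nothing new. Chaining $N$ vertices therefore gives no converse reduction from $\Holant(\mathcal H\cup\{N\})$ into the $N$-edge model. An ordinary wire there needs $=_2$ (the matrix $I$) as a vertex, and the correct equivalence is $\Holant(\mathcal H\cup\{N\})\equiv_{\mathrm T}\Holant_N(\mathcal H\cup\{=_2\})$. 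The direction you lose is the one that transfers hardness. This is why the paper, whenever it works in the native $N$-edge model, uses a different result of \cite{en:liu2026disequality}: the Quaternary $K$-Holant dichotomy (Theorem~\ref{en:thm:cyc-external-q4}), whose proof does not assume that $I$ is an available vertex.

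For the same reason, the $K$-transformation should not appear in your treatment of the ``Yes'' rows of the table. The paper's bridge there is Lemma~\ref{en:lem:N-symmetric} alone. It applies an orthogonal transformation, which keeps equality edges, after which a nonzero multiple of $N$ is an ordinary gadget. Hence $\Holant(\mathcal F)\equiv_{\mathrm T}\Holant(\mathcal F\cup\{N\})$ holds trivially, and the cited theorem applies verbatim. Passing through $K$ instead lands you in an $N$-edge problem $\Holant_N(\Gamma)$. There, inserting $N$ vertices yields only $\Holant_N(\Gamma)\le_{\mathrm T}\Holant(\Gamma\cup\{N\})$, which transfers tractability but not hardness. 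Finally, the paper takes decidability directly from the citation. Your fallback of bounding the witnessing transformations by $\Bgroup$ has no justification, because holographic witnesses of tractable classes need not be realizable binary functions of the instance.
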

Thus an actually realizable symmetric traceless binary matrix suffices to resolve the corresponding branch.
The $N$ in this theorem is an available vertex function, rather than a kernel assigned to edges after a general change of basis.

\subsection{Complete Proofs for the Cyclic Groups}\label{en:subsec:road-cyclic}
This section has two subsections, for $C_1$ and $C_k\ (k\ge3)$, each treating the upper case and then the lower case;
the independent result for $C_2$ is stated here first.
Under the preceding decomposition convention, nonzero quaternary functions in the upper case are pairing products of binary functions from the current group, while the lower case contains an indecomposable quaternary function.
The arity-reduction proofs below use common structural lemmas for pairing products from Section~\ref{en:subsec:road-odd};
these lemmas are proved in full at the cited location.

\paragraph*{Citation for the cyclic group of order two}\phantomsection\label{en:sec:c2-citation}
The complete complexity dichotomy for the cyclic group of order two was established by Yuxuan Zhou in independent work\cite{en:zhou-draft}.
I am not an author of that work, and only cite its result here.
During discussions at the Yanqihu campus of the University of Chinese Academy of Sciences, I heard presentations of most of that manuscript and, following the explanations of its author, Yuxuan Zhou, checked the portions I heard at the time.
For representations that can obtain binary disequality, Section~\ref{en:subsec:road-N} already provides a resolution;
the remaining antisymmetric representation, in ordinary equality-edge coordinates, is
\[
 \mathcal B=\{\pm I,\pm Y\},\qquad
 Y=\begin{pmatrix}0&1\\-1&0\end{pmatrix},\qquad
 \mathcal G\cong C_2.
\]
This appendix retains the citation to that independent result without reproducing its proof.

\addtocontents{toc}{\protect\setcounter{tocdepth}{4}}
\subsubsection{\texorpdfstring{The $C_1$ Case}{The C1 Case}}\label{en:subsec:cyclic-c1}

\paragraph{Upper case}\label{en:subsec:c1-upper}\label{en:sec:c1-strict}
In native $N$-edge coordinates, the only binary functions for $C_1$ are nonzero multiples of $N$.
The decomposition lemma gives two binary factors for each upper-case quaternary function, so these functions are all $\lambda N\otimes N$, up to port permutations.
We first reduce the problem to a single function, then organize the common phase of its EO part, and finally connect two copies of the function to force overall realnumberizing.

\subparagraph*{Quaternary pairing products and the proof objective}

Use the coordinates of
\eqref{en:eq:K-basis}, with fixed edge kernel $N$, and write $\Gamma$ for the current finite
function set. Every nonzero function is assumed to have even arity, and we assume that
\begin{enumerate}[label=\textup{(C\arabic*)}]
\item\label{en:item:c1-binary} every nonzero binary function realizable by an ordinary gadget
      is a nonzero scalar multiple of $N$;
\item\label{en:item:c1-quaternary} every nonzero quaternary function realizable by an ordinary
      gadget has the form $\lambda N\otimes N$ under some port pairing, where $\lambda\ne0$.
\end{enumerate}
By the decomposition lemma, the second condition is exactly the quaternary condition for the $C_1$ upper case. The following proof establishes the dichotomy under these two conditions.

For even $m$, write
\[
 \mathrm{EO}^{m}=\{\alpha\in\{0,1\}^{m}:|\alpha|_1=m/2\},\qquad
 r(\alpha)=|\alpha|_1-|\alpha|_0.
\]
The EO restriction of a function retains its values on this layer and is zero elsewhere.
We write $[m]=\{1,\ldots,m\}$.

\begin{theorem}[Dichotomy for the $C_1$ upper case]
\label{en:thm:c1-strict-dichotomy}
Let $\Gamma$ be a finite set of algebraic complex-valued functions of even arity satisfying
\ref{en:item:c1-binary} and \ref{en:item:c1-quaternary}. Then $\Holant(N\mid\Gamma)$ is in $\FP$
or is $\sharpP$-hard. Apart from the direct zero-value and common-one-sided EO cases,
the proof either transforms the original problem by a Turing equivalence into an ordinary
real-valued Holant problem, or obtains a hardness witness from a finite set of actual EO gadgets.
\end{theorem}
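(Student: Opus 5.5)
Let me outline a proof. In the native $N$-edge model, $C_1$ means every binary realizable function is $\lambda N$, and quaternary ones are $\lambda N\otimes N$ under a pairing. Under the transformation by $K$, $N$-edges with $N$ vertices correspond to ordinary equality edges with equality-type binary functions. So the natural split is between the EO part and the non-EO part of each function. The plan mirrors the standard Klein argument: first arity reduction to show structural constraints (a common phase), then realnumberizing, then invoking Theorem~\ref{en:thm:real-holant}, with an EO-hardness or EO-tractability branch for functions supported one-sidedly.

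\textbf{Step 1 (support structure).} Contracting any two ports of $f\in\Gamma$ by an $N$ vertex (in this model the effective matrix is $NNN=N$, i.e.\ pairing a $0$ with a $1$) is the only nonzero binary tool, up to scaling. I would show that for a nonzero realizable function, any non-EO component forces a realizable binary or quaternary function outside \ref{en:item:c1-binary}--\ref{en:item:c1-quaternary}. This would follow by tracking the imbalance $r(\alpha)$, which is preserved under $N$-contraction. Concretely, I take a minimal-arity counterexample with support on two distinct imbalance values. I then use a simultaneous-nonzero-contraction argument analogous to Lemma~\ref{en:lem:klein-simultaneous-nonzero}, with a single tool $N$ but a choice of port pair, to reduce to arity $\le 4$. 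There a binary $\lambda I$-type or rank-one piece would appear, which is a contradiction. Functions supported entirely on one side ($r>0$ or $r<0$ throughout) are the \emph{common-one-sided EO cases}: a net with only such functions has zero value unless the imbalances cancel, and I dispose of them directly as tractable. Functions identically zero on EO are handled the same way.

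\textbf{Step 2 (common phase of the EO part).} For functions supported on $\mathrm{EO}^m$, I mimic Lemma~\ref{en:lem:standard-klein-common-phase}. I close all ports along perfect matchings of $0$-ports to $1$-ports by $N$, and compare two closures via two-edge switches. Each switch exposes a quaternary function $\lambda N\otimes N$, whose closures are $\lambda\cdot\{0,2\}$-type values. The ratio is therefore real, or more precisely has a common phase. I then conclude that all nonzero EO values of $f$ share a phase up to the ARS-type conjugation symmetry $f(x)\leftrightarrow f(\bar x)$.

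\textbf{Step 3 (overall realnumberizing via two copies).} This is where I expect the main obstacle. A common phase per function need not give real values after transforming back by $K$; one needs the ARS relation $f(x)=\overline{f(\bar x)}$ up to a scalar. I would connect two copies of $f$, one port-reversed, to form a gadget whose quaternary contractions must be $\lambda N\otimes N$. Comparing the coefficients of $x$ and $\bar x$ paths forces $f(\bar x)/\overline{f(x)}$ to be a constant $c_f$. Scaling by $\sqrt{c_f}$ then gives ARS, which in the $K$-basis means real-valued. Lemma~\ref{en:lem:per-signature-scaling} and Theorem~\ref{en:thm:real-holant} then finish the proof. If the two-copy gadget instead exposes a violation, it is an explicit finite EO gadget; I would feed that into the known $\#\mathrm{EO}$ hardness classification as the stated hardness witness.
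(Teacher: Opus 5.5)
\textbf{Step~1 is false, and it removes exactly the part of the problem the paper's proof is about.}

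An $N$-shortloop deletes one $0$ and one $1$, so it preserves the imbalance $r(\alpha)$. A support component of imbalance $r$ therefore cannot survive below arity $|r|$. Your arity-$\le4$ contradiction thus only excludes a least generating imbalance $g\in\{2,4\}$; this is how the paper gets $g\ge6$ in Lemma~\ref{en:lem:c1-shape-generator}.

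Two-sided imbalances of size $\ge6$ genuinely occur under \ref{en:item:c1-binary}--\ref{en:item:c1-quaternary}. Take the indicator $\mathbf 1_{\mathcal C}$ of the $[8,4,4]$ extended Hamming code; it is fixed by $K^{\otimes8}$ because the code is self-dual and doubly even. In equality-edge coordinates, every gadget over it is a nonzero multiple of the indicator of a self-dual code over $\mathbb F_2$. The self-dual codes of lengths $2$ and $4$ are $=_2$ and pairings $=_2\otimes=_2$, so both conditions hold. Yet in native coordinates $\mathbf 1_{\mathcal C}=F'_8+\ket{0^8}+\ket{1^8}$, which has nonzero values at imbalance $\pm8$. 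So you cannot reduce to EO functions.

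The paper proceeds as follows.
\begin{enumerate}
\item It reduces to a single active $H$ (Lemma~\ref{en:lem:c1-active-single}).
\item It proves EO complement symmetry, and rationality of the EO layer up to one scalar via the full-rank system of matching closures (Lemmas~\ref{en:lem:c1-eo-symmetry} and~\ref{en:lem:c1-common-phase}).
\item It sends the case where all imbalances are $\ge0$ (or all $\le0$), zero included, to the general EO dichotomy. That is the statement's common-one-sided case, and it is decided by that dichotomy, not by a vanishing argument.
\item Otherwise it makes $H$ globally complement symmetric, shows the phase class is a character $\chi$ of the imbalance, and shapes $F=Q+a(\ket{0^g}+\ket{1^g})$.
\item It applies the ARS--EO dichotomy to all $\partial_{ij}F$. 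This, not a ``violation'' of the hypotheses, is where the hardness witness comes from.
\item It uses \ref{en:item:c1-quaternary} to turn the tractable branch into matching products (Lemma~\ref{en:lem:c1-cleanup}), and invokes Theorem~\ref{en:thm:c1-eo-matching-reduction} with its $F'_8$ exception.
\item Two copies of $F$ force $a^2=\lambda^2$ with $\lambda\in\mathbb Q$ (Lemma~\ref{en:lem:c1-double-F}), so $\chi$ is trivial.
\end{enumerate}

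\textbf{Step~3 cannot work as stated.} Gadget identities are polynomial in the function values and never produce complex conjugates. Comparing coefficients in a two-copy gadget therefore cannot by itself relate $f(\bar x)$ to $\overline{f(x)}$. Realness must enter through a rational anchor, as in the paper: the closure values $z_M=2^{t_M}\theta$, and finally $a^2=\lambda^2$.

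Moreover, the intermediate conclusion you aim for is false. With $D=\diag(2,1/2)$ one has $DND=N$, so $H_c=D^{\otimes8}\mathbf 1_{\mathcal C}$ still satisfies \ref{en:item:c1-binary}--\ref{en:item:c1-quaternary}. But $H_c$ equals $1$ on the support of $F'_8$, $2^8$ at $0^8$, and $2^{-8}$ at $1^8$. Hence $H_c(\bar x)/\overline{H_c(x)}$ is $1$ on the EO support but $2^{-16}$ at $x=0^8$. No scalar multiple of $H_c$ is ARS; equivalently, no scalar multiple of $K^{\otimes8}H_c$ is real-valued. So Lemma~\ref{en:lem:per-signature-scaling} plus the Real-Holant theorem cannot finish.

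What is needed is the $N$-preserving diagonal transformation of Lemma~\ref{en:lem:c1-absorb-charge}. For a single $D$ to serve the whole function set, you also need the single-function reduction first.
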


\subparagraph*{Shared zero behavior for fixed vertex labels and reduction to one function}

\begin{lemma}\label{en:lem:c1-active-single}
Fix a list of vertex functions and their port orders, and vary the perfect matching that
closes all ports. Either all resulting closed function-nets have value zero, or all have nonzero
value. In the latter case, reconnecting only two edges changes the value by a factor in
$\{1/2,1,2\}$. Consequently, either every nonempty instance has value zero, or the original
problem is Turing equivalent to
\[
 \Holant(N\mid\{H\}),
\]
where $H$ is an even-arity function realizable by an ordinary gadget and every complete
closure of a single $H$ vertex is nonzero.
\end{lemma}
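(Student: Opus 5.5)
The plan is to use the quaternary condition \ref{en:item:c1-quaternary} as a local rewiring tool, and then to do some bookkeeping on multisets of vertex labels.

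\textbf{Step 1 (two-edge switch).} Fix the labelled vertices and a perfect matching $\mu$ of all their ports. Choose two edges $e_1=\{a,b\}$ and $e_2=\{c,d\}$ of $\mu$. Deleting them from the closed net leaves an ordinary gadget in the $N$-edge model with the four dangling ports $a,b,c,d$. Its net-function $G$ is therefore either zero or, by \ref{en:item:c1-quaternary}, equal to $\lambda N\otimes N$ along some pairing of $\{a,b,c,d\}$ with $\lambda\ne0$.

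Closing $G$ with two $N$ edges has only two cases.
\begin{itemize}
\item The closing pairing coincides with the factor pairing of $G$. This gives two cycles $NN$, each of trace $\Tr(N^2)=2$, so the value is $4\lambda$.
\item The closing pairing differs from the factor pairing. This gives one cycle $N^4$, of trace $2$, so the value is $2\lambda$.
\end{itemize}
The three matchings of $\{a,b,c,d\}$ therefore yield values that are all zero when $G=0$, and otherwise lie in $\{2\lambda,4\lambda\}$. Any two of them thus have ratio in $\{1/2,1,2\}$.

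\textbf{Step 2 (connecting arbitrary matchings).} As in the proof of Lemma~\ref{en:lem:standard-klein-common-phase}, the symmetric difference of two perfect matchings is a union of alternating even cycles. A finite sequence of two-edge switches therefore carries $\mu$ to any other matching $\mu'$. By Step~1 the zero/nonzero status is invariant along this sequence. This proves the first two sentences of the lemma.

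\textbf{Step 3 (zero status depends only on the multiset).} By Step~2, whether a closure is zero depends only on the multiset $M$ of vertex labels, port orders included. A disjoint union of nets multiplies their values, so the nonzero multisets are closed under union.

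Call $f\in\Gamma$ \emph{dead} if every multiset containing $f$ is zero. Zero functions are dead. Any instance using a dead function has value $0$; this test is nonuniform but legitimate, because $\Gamma$ is fixed.

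If every function is dead, every nonempty instance is zero and we are done. Otherwise, for each live $f$ choose a nonzero multiset $M_f\ni f$. Let $H$ be the tensor product of all vertices in $\bigcup_f M_f$. Then $H$ is ordinarily realizable, has even arity, and by Step~2 every complete closure of a single $H$ vertex is nonzero.

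\textbf{Step 4 (Turing equivalence).} The reduction $\Holant(N\mid\{H\})\le\Holant(N\mid\Gamma)$ is immediate, since $H$ is realized by a gadget over $\Gamma$.

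For the converse, realize each live $f$ from one copy of $H$: close every other tensor factor of $H$ internally along its own ports. Since $H$ is a tensor product, this closure is the product of the separate factor closures. Their product is a closure of $H$, which is nonzero, so each factor closure is a nonzero constant. Hence the gadget realizes $f$ up to a known nonzero scalar, which Lemma~\ref{en:lem:per-signature-scaling} compensates. Instances containing dead functions are answered $0$ directly.

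\textbf{Main obstacle.} The delicate point is Step~3. A single function $f$ may have all its self-closures equal to zero while still appearing in a nonzero multiset. For this reason $H$ must be built as a union of witnessing multisets, not as a tensor product of single functions. One must then check that extracting each $f$ from $H$ costs only a nonzero constant, which is exactly where the all-or-nothing property of Step~2 for $H$ is used.
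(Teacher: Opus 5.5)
Your proof is correct and is essentially the paper's. The paper likewise cuts the two edges and applies (C2) to get closure values $2\lambda$ or $4\lambda$. It propagates the zero/nonzero status along two-edge switches, discards the functions that force value zero, and takes $H$ to be a tensor product. It extracts each original vertex from one copy of $H$ by closing the other factors with a fixed nonzero closure.

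The one place you deviate is the definition of a dead function, and the ``main obstacle'' you describe does not exist. Suppose some (equivalently every) single-vertex closure of $f$ vanishes, and let $M$ be any multiset containing $f$. Choose the matching that closes $f$ on its own ports and the rest of $M$ on theirs; this is possible because all arities are even. That closure is a product with a zero factor, so by your Step~2 every closure of $M$ is zero. Hence your ``live'' coincides with the paper's ``active'' (a single-vertex closure is nonzero). The paper then simply lets $H$ be the tensor product of the active functions. This is the same separate-closure observation you already use in Step~4.

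Your witnessing-multiset $H$ is still valid, just unnecessary. The paper's formulation is simpler, and it turns the dead/live test into one finite closure computation per function. Your version instead hard-codes the set of dead functions nonuniformly.
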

\begin{proof}
Cut the two edges to be reconnected. The remaining function-net has a quaternary net-function $q$.
If $q=0$, both closures have value zero. Otherwise, \ref{en:item:c1-quaternary} gives
$q=\lambda N\otimes N$, allowing a port permutation. Every perfect-matching closure has
value $2\lambda$ or $4\lambda$. Any two perfect matchings are connected by finitely many
two-edge reconnections, so zero behavior and the stated nonzero ratios propagate along this path.

Call $f\in\Gamma$ active if one complete closure of a single vertex is nonzero;
equivalently, every such closure is nonzero. If an instance contains an inactive function,
reconnect its fixed list of vertices so that each vertex closes separately. This gives value
zero, and hence the original instance also has value zero. Delete the inactive functions.
If none remain, every nonempty instance has value zero, while the empty function-net has value $1$.
Otherwise, let $H$ be the tensor product of all remaining functions.
Expanding $H$ gives one reduction direction. In the other direction, simulate an original
vertex with one copy of $H$, retain the required factor, and close every other factor by a
previously chosen nonzero closure. Divide out the resulting known nonzero scalar.
There is at least one nonzero closure of $H$. Applying the shared zero behavior to its fixed
list of factor labels shows that every closure is nonzero.
\end{proof}

\begin{corollary}\label{en:cor:c1-active-powers}
For every fixed $k\ge1$, every complete closure of a single $H^{\otimes k}$ vertex is nonzero.
The same holds for any function $F$ obtained from $H^{\otimes k}$ by adding $N$ self-loops.
In particular, $F|_{\mathrm{EO}}\ne0$, and every two-port $N$ shortloop of $F$ is nonzero.
\end{corollary}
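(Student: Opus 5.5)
The plan is to reduce every claim to Lemma~\ref{en:lem:c1-active-single}, applied to the fixed list of vertex labels in question. Its key dichotomy says that, for a fixed list of vertex functions with fixed port orders, either every perfect-matching closure is zero or every one is nonzero. So for each claim it suffices to exhibit one nonzero closure of the relevant object.

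\textbf{Powers of $H$.} Take the vertex list consisting of $k$ copies of $H$; this is the same as a single $H^{\otimes k}$ vertex, since a closure of the tensor product is a closure of the list of factors. Closing each copy separately by its own perfect matching gives a product of $k$ complete closures of single $H$ vertices. Each of these is nonzero by Lemma~\ref{en:lem:c1-active-single}, so the product is nonzero. The shared zero behavior then makes every closure of $H^{\otimes k}$ nonzero.

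\textbf{Adding $N$ self-loops.} Now let $F$ be obtained from $H^{\otimes k}$ by adding $N$ self-loops on some set of port pairs. In the native $N$-edge model a self-loop is just an $N$ edge joining two ports. Hence any complete closure of $F$ is literally a complete closure of $H^{\otimes k}$, namely the perfect matching formed by the self-loop pairs together with the closing pairs. It is therefore nonzero by the previous step. The same observation applied to a partial closure shows that every two-port $N$ shortloop of $F$ is a nonzero function: closing its remaining ports arbitrarily gives a complete closure of $H^{\otimes k}$, which is nonzero.

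\textbf{The EO restriction.} For $F|_{\mathrm{EO}}\ne0$, take any complete closure of $F$ by $N$ edges. Each $N$ edge forces its two endpoints to take opposite values. So the summation over a closure only uses inputs of $F$ in which the ports are split into complementary pairs, and these inputs have weight exactly $m/2$. The nonzero closure value is therefore a sum of values of $F|_{\mathrm{EO}}$, which forces $F|_{\mathrm{EO}}\ne0$.

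The only point needing care is the bookkeeping convention. Shortloops and closures must be taken with the fixed $N$ edge kernel, so that self-loops genuinely coincide with closure edges. Once that is settled, the statement follows directly from Lemma~\ref{en:lem:c1-active-single}, and I expect no real obstacle.
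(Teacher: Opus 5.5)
Your proposal is correct and follows the paper's own proof. The paper also expands $H^{\otimes k}$ into $k$ copies of $H$, closes them separately, and invokes the shared zero behavior of Lemma~\ref{en:lem:c1-active-single}. It likewise treats $N$ self-loops as closure edges of $H^{\otimes k}$, and it gets $F|_{\mathrm{EO}}\ne0$ and the nonvanishing of shortloops by the same observation that closing the remaining ports would otherwise give zero. Your remark that $N$-closures only read EO inputs simply makes that last step explicit.
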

\begin{proof}
Expand $H^{\otimes k}$ into $k$ copies of $H$. Closing the copies separately gives a nonzero
value, so all closures are nonzero. Every complete closure of $F$ is also a complete closure
of this list of $k$ copies of $H$. If its EO restriction or one two-port shortloop were zero,
closing the remaining ports would give zero, a contradiction.
\end{proof}

\subparagraph*{EO complement symmetry and a common phase}

\begin{lemma}\label{en:lem:c1-eo-symmetry}
Under \ref{en:item:c1-binary}, the EO restriction of every even-arity function $f$ realizable
by an ordinary gadget satisfies $f(\alpha)=f(\bar\alpha)$.
\end{lemma}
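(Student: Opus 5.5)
The plan is induction on the arity $2d$, with the antisymmetric part of the EO restriction as the object to kill. For $d=1$ the claim is immediate from \ref{en:item:c1-binary}: a realizable binary vertex function is $\lambda N$, whose two EO values $f(01)=f(10)=\lambda$ coincide. For general $f$, define on $\mathrm{EO}^{2d}$
\[
 h(\alpha)=f(\alpha)-f(\bar\alpha),
\]
so that $h(\bar\alpha)=-h(\alpha)$. The goal is $h=0$.

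\textbf{Step 1: reduce to a linear condition on $h$.} Contract two ports $p,q$ of $f$ with one $N$ edge, i.e.\ form a two-port shortloop. This yields the realizable function
\[
 f'(\beta)=f(\beta,x_p=0,x_q=1)+f(\beta,x_p=1,x_q=0).
\]
When $\beta$ is EO on the remaining $2d-2$ ports, both summands are EO inputs of $f$. By the induction hypothesis $f'(\beta)=f'(\bar\beta)$, and complementing $\beta$ together with the two values at $p,q$ shows that every such shortloop of $h$ vanishes on the EO layer. So $h$ is a complement-antisymmetric EO function annihilated by all pair shortloops, and what remains is to show that this forces $h=0$ (or to exclude the survivors by gadgets).

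\textbf{Step 2: linear algebra on the Johnson-type structure.} I would compute the common kernel of the shortloop maps. For $d=2$ these relations already force $h(0011)=h(0101)=h(0110)=a$, with the opposite signs on the complements. So a one-dimensional survivor remains, namely the totally symmetric antisymmetric pattern. For larger $d$ I expect the kernel to consist of analogous ``Hadamard-type'' patterns, which I would identify by repeatedly applying the $d=2$ computation to quaternary slices. Here one fixes $2d-4$ coordinates of an EO input, so these slices are not realizable; the argument only uses the identities coming from shortloops.

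\textbf{Step 3: eliminate the surviving pattern (main obstacle).} Pure shortloop information on $f$ is not enough, so I would use gadgets built from $f$ together with copies of $f$. The first thing to try is connecting $f$ to a second copy of $f$ (or a permuted copy) along all but two ports via $N$ edges. This produces a binary function whose $01$ and $10$ entries differ by a bilinear expression in the symmetric and antisymmetric parts, $a\cdot(\text{symmetric part})$. \ref{en:item:c1-binary} then forces this difference to vanish. I would also consider other closures, obtained by pairing ports crosswise between the two copies, to get a nonzero multiple of $a^2$ or of $a$ times a nonvanishing symmetric closure; this is where the actual care is needed.

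If the symmetric part also vanishes, then $f$ is purely antisymmetric on EO. In that case I would instead close $f\otimes f$ to a binary whose diagonal entry is a nonzero multiple of $a^2$, contradicting that realizable binaries are multiples of $N$. Either branch gives $a=0$, hence $h=0$ and $f(\alpha)=f(\bar\alpha)$ on the EO layer.
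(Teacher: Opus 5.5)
Your Step~1 is the reduction the paper uses. Shortloops commute with EO restriction, so by induction every $N$ shortloop of the antisymmetric part $h$ vanishes. The gap is in Step~2: your kernel computation is wrong, and Step~3 is built on that error.

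The pattern you say survives at $d=2$ is
\[
h(0011)=h(0101)=h(0110)=a,\qquad h(1100)=h(1010)=h(1001)=-a.
\]
It is indeed annihilated by every shortloop on a pair containing port~$1$. But the shortloop on ports $2,3$ at $(x_1,x_4)=(0,1)$ gives $h(0011)+h(0101)=2a$, so $a=0$.

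The idea you are missing is that this happens in every arity. In a nonzero EO input of arity at least four, pick three positions carrying the local word $100$. Let $a,b,c$ be the values of $h$ at the three placements of the $1$, with all other coordinates fixed and $a\ne0$. The shortloops on the three pairs among these positions, evaluated at the corresponding remaining inputs, equal $a+b$, $a+c$ and $b+c$. If all three vanish, then $2a=0$, a contradiction. Hence the only EO function of arity $\ge4$ whose $N$ shortloops all vanish is $0$, and $h=0$ follows directly from Step~1. There are no Hadamard-type survivors for any $d$.

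Consequently Step~3 is unnecessary, and as written it would not have closed the argument anyway. It lists gadgets to try without computing their outputs. Its last claim also cannot work as stated. Every $N$ edge consumes one $0$ and one $1$, so each term of a diagonal entry of a binary closure of $f\otimes f$ contains at least one non-EO value of $f$. Such an entry therefore cannot be forced to be a nonzero multiple of the EO coefficient $a^2$. If you replace Steps~2--3 by the three-position computation above, the proof is complete and coincides with the paper's.
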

\begin{proof}
Induct on arity. The binary case follows from \ref{en:item:c1-binary}.
An $N$ shortloop deletes one $0$ and one $1$, so it commutes with taking the EO restriction.
By induction, every $N$ shortloop of $f|_{\mathrm{EO}}$ is complement symmetric.
Let $R(\alpha)=f|_{\mathrm{EO}}(\alpha)-f|_{\mathrm{EO}}(\bar\alpha)$.
Then $R$ is an EO function and all its $N$ shortloops vanish.
If an EO function $R$ of arity at least four is nonzero, choose three positions of a nonzero
input forming $100$. Let $a,b,c$ be its values on the three permutations of this local word,
with $a\ne0$. The three shortloops at the corresponding remaining inputs give
$a+b,a+c,b+c$, which cannot all vanish. This contradicts the vanishing of every shortloop
of $R$. Thus $R=0$, completing the induction.
\end{proof}

\begin{lemma}\label{en:lem:c1-common-phase}
For the function $H$ from Lemma~\ref{en:lem:c1-active-single}, there is a nonzero algebraic number
$\theta$ such that $H|_{\mathrm{EO}}/\theta$ is a nonzero rational-valued, complement-symmetric
function. After renaming $H/\theta$ as $H$, the function $H^{\otimes k}|_{\mathrm{EO}}$ is
rational-valued and complement symmetric for every $k\ge1$.
\end{lemma}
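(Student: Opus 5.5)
The plan is to recover the EO values of $H$ from its complete closures, which are all rational multiples of one another, and to do so by induction on arity using $N$ shortloops together with the three-value trick from the proof of Lemma~\ref{en:lem:c1-eo-symmetry}.

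\emph{Closures are commensurate.} With edge kernel $N$, a complete closure of a single $2d$-ary vertex $F$ along a perfect matching $\mu$ of its ports forces each matched pair to take opposite values. Its value is therefore
\[
 \zeta_\mu(F)=\sum_{\alpha\in S_\mu}F(\alpha),
\]
where $S_\mu\subseteq\mathrm{EO}^{2d}$ is the set of $2^d$ inputs that differ on every pair of $\mu$. For $F=H^{\otimes k}$, possibly with $N$ self-loops added, Corollary~\ref{en:cor:c1-active-powers} makes every closure nonzero. Lemma~\ref{en:lem:c1-active-single} then says that any two matchings are joined by two-edge switches, each changing the value by a factor in $\{1/2,1,2\}$. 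So all closures of such an $F$ lie in $\theta_F\mathbb Q^\times$ for one nonzero algebraic $\theta_F$; I take $\theta_F$ to be one actual closure value, which keeps it algebraic. Since shortloops and closures compose, every closure of a shortloop of $F$ is also a closure of $F$. Hence all iterated shortloops of $F$ share the same constant $\theta_F$.

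\emph{EO values lie in $\theta_F\mathbb Q$.} I would prove by induction on arity the following claim: if $G$ is obtained from $F$ by shortloops, then $G|_{\mathrm{EO}}$ takes values in $\theta_F\mathbb Q$.
\begin{itemize}
 \item In arity $0$ the value is a closure, so the claim holds.
 \item In arity $2$, condition \ref{en:item:c1-binary} gives $G=cN$, and closing $G$ gives $2c\in\theta_F\mathbb Q$.
 \item For arity $2m\ge4$, every $N$ shortloop of $G$ on a pair $p,q$ has EO values in $\theta_F\mathbb Q$ by induction. That shortloop's value at an EO input equals $G(\alpha)+G(\alpha')$, where $\alpha_p\neq\alpha_q$ in $\alpha$ and $\alpha'$ swaps positions $p,q$. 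Given any EO input $\alpha$ of $G$, pick three positions on which $\alpha$ reads $100$; they exist since $m\ge2$. Let $a,b,c$ be the values of $G$ on the three rearrangements of this local word with the rest fixed. The three shortloops give $a+b,\ a+c,\ b+c\in\theta_F\mathbb Q$, so $a=\tfrac12\bigl((a+b)+(a+c)-(b+c)\bigr)\in\theta_F\mathbb Q$.
\end{itemize}
Applying the claim with $G=F=H$ and $\theta=\theta_H$ shows that $H|_{\mathrm{EO}}/\theta$ is rational-valued. It is nonzero by Corollary~\ref{en:cor:c1-active-powers}, and complement symmetric by Lemma~\ref{en:lem:c1-eo-symmetry}.

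\emph{Tensor powers.} The step needing care is that $H^{\otimes k}|_{\mathrm{EO}}$ is not the tensor product of the EO restrictions: it includes inputs whose blocks are individually unbalanced. So I do not argue factorwise. Instead I apply the previous paragraph to $F=H^{\otimes k}$ itself, which is ordinarily realizable with all closures nonzero, obtaining values in $\theta_{F}\mathbb Q$. Closing each copy of $H$ separately shows that one closure of $H^{\otimes k}$ is a product of $k$ closures of $H$, hence lies in $\theta^k\mathbb Q^\times$. Therefore $\theta_F\in\theta^k\mathbb Q^\times$. After renaming $H/\theta$ as $H$, the values of $H^{\otimes k}|_{\mathrm{EO}}$ are rational. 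Complement symmetry again follows from Lemma~\ref{en:lem:c1-eo-symmetry}.

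I expect the main obstacle to be the bookkeeping in the first step. One has to check that every closure used, including closures of iterated shortloops, really is a nonzero closure of the same fixed list of $H$ copies, so that the single constant $\theta_F$ governs all of them. The three-value inversion is then routine.
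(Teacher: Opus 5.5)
Your proof is correct. It differs from the paper's mainly in how the three-value identity is used.

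\textbf{The paper's route.} The paper treats the complete closures $z_M$ of $H$ as a linear system with rational coefficients. Its unknowns are the common values on complementary EO pairs. It proves this system injective on the complement-symmetric EO subspace by running the $100$ descent on a hypothetical kernel element $R$, down to a binary $\lambda N$ with nonzero closure. It then reads off the unique solution in $\theta\mathbb{Q}$. Every closure sums over a complement-closed set of inputs, so antisymmetric EO functions are killed by all closures. That argument therefore genuinely needs Lemma~\ref{en:lem:c1-eo-symmetry} as an input.

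\textbf{Your route.} You invert the system explicitly instead. You recurse through the actual realizable shortloops and apply \ref{en:item:c1-binary} at the binary leaves. That is the step that removes the antisymmetric ambiguity, so rationality comes out without first knowing complement symmetry; you add symmetry afterwards.

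\textbf{What each buys.} The paper's version isolates a purely combinatorial injectivity fact about abstract functions; the same pattern recurs in Lemma~\ref{en:lem:nca-rational-power}. Yours is constructive and gives an explicit rational formula for each EO value in terms of closures of $H$.

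\textbf{Tensor powers.} The paper pins the new constant using the $k$-fold concatenation of one nonzero EO input of $H$. You pin it using a product of $k$ separate closures. Both work, and you rightly avoid arguing factorwise on the EO layer of $H^{\otimes k}$.
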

\begin{proof}
For a perfect matching $M$ of the ports of $H$, add $N$ self-loops along $M$ and denote the
closure value by $z_M$. Fix $M_0$ and set $\theta=z_{M_0}\ne0$.
Lemma~\ref{en:lem:c1-active-single} gives $z_M=2^{t_M}\theta$, where $t_M\in\mathbb Z$.

Use the common value on each pair of complementary EO inputs as an unknown.
Each perfect-matching closure supplies a linear equation with rational coefficients and
right-hand side $z_M$. This system has full rank on the complement-symmetric EO subspace.
Indeed, suppose a nonzero complement-symmetric EO function $R$ lies in its homogeneous kernel.
Choose three positions of a support input with local word $100$, and write $a,b,c$ for the
values on its three permutations, with $a\ne0$. The shortloop values $a+b,a+c,b+c$ cannot
all vanish. We may therefore reduce arity while retaining a nonzero function, EO support,
and complement symmetry. Eventually we obtain a nonzero binary function $\lambda N$.
Closing it gives $2\lambda\ne0$, contradicting membership in the homogeneous kernel of all
perfect-matching equations. Dividing the system by $\theta$ therefore gives a unique rational solution.

Apply the same argument to $H^{\otimes k}$, using Corollary~\ref{en:cor:c1-active-powers} for
activity. Fix a nonzero EO input of $H$. Its $k$-fold concatenation has a nonzero rational value,
which forces the new common phase to be rational. Thus the entire EO layer of every tensor power is rational-valued.
\end{proof}

\subparagraph*{Absorbing imbalance and choosing the least generating imbalance}

If all support imbalances of $H$ are nonnegative, or all are nonpositive, every $N$ edge
contributes one $0$ and one $1$, and hence every nonzero global term has total imbalance zero.
No strictly imbalanced term can contribute. The original problem is therefore equal,
instance by instance, to $\Holant(N\mid H|_{\mathrm{EO}})$, and is classified by the general
EO dichotomy and its FP upgrade~\cite{en:meng2025eo,en:guan2026lp}.
We henceforth consider the case where positive and negative support imbalances both occur.

\begin{lemma}\label{en:lem:c1-absorb-charge}
There is an algebraic diagonal matrix $D=\diag(c^{1/2},c^{-1/2})$, with $c\ne0$, such that
$\widetilde H=D^{\otimes m}H$ is globally complement symmetric.
Since $DND=N$, this is an edge-preserving holographic transformation. It leaves the EO
layer of every tensor power unchanged.
\end{lemma}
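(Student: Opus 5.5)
The plan is to turn the statement into a multiplicative identity between $H(\bar\alpha)$ and $H(\alpha)$, and then solve it with one algebraic constant. First the bookkeeping. For $\alpha\in\{0,1\}^m$, the tensor power $D^{\otimes m}$ multiplies $H(\alpha)$ by $c^{(|\alpha|_0-|\alpha|_1)/2}=c^{-r(\alpha)/2}$. Since $r(\bar\alpha)=-r(\alpha)$, global complement symmetry of $\widetilde H$ is equivalent to
\[
 H(\bar\alpha)=c^{-r(\alpha)}H(\alpha)\qquad\text{for all }\alpha .
\]
This means: the support of $H$ is closed under complement, and the ratio $\rho(\alpha)=H(\bar\alpha)/H(\alpha)$ on the support has the form $\kappa^{r(\alpha)}$ for a single algebraic $\kappa=c^{-1}$. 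The identity $DND=N$ is a direct $2\times2$ check. On every EO input $r=0$, so $D^{\otimes m}$ acts trivially on the EO layer of each tensor power. These parts are routine.

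The main tool is Lemma~\ref{en:lem:c1-eo-symmetry} applied to tensor powers $H^{\otimes k}$, which are ordinarily realizable (disjoint unions). Let $\alpha_1,\dots,\alpha_k$ be inputs of $H$ with $\sum_i r(\alpha_i)=0$. Their concatenation is an EO input of $H^{\otimes k}$, so complement symmetry gives
\[
 \prod_i H(\alpha_i)=\prod_i H(\bar\alpha_i).
\]
\begin{enumerate}[label=\textup{(\roman*)}]
 \item Closure of the support under complement. If $r(\alpha)=0$, this is immediate. If $r(\alpha)=a>0$, pick a support input $\beta$ with $r(\beta)=-b<0$; one exists because both signs occur. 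Concatenate $b$ copies of $\alpha$ with $a$ copies of $\beta$. The left side $H(\alpha)^bH(\beta)^a$ is nonzero, so $H(\bar\alpha)\ne0$. The case $a<0$ is symmetric.
 \item Multiplicativity of $\rho$. The displayed identity says $\prod_i\rho(\alpha_i)=1$ whenever $\sum_i r(\alpha_i)=0$ with all $\alpha_i$ in the support. Also $\rho(\bar\alpha)=\rho(\alpha)^{-1}$ and $r(\bar\alpha)=-r(\alpha)$. Hence any integer relation $\sum_i n_i r(\alpha_i)=0$, with signed $n_i$, becomes a nonnegative one by replacing $\alpha_i$ with $\bar\alpha_i$ when $n_i<0$. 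It follows that $\prod_i\rho(\alpha_i)^{n_i}=1$.
\end{enumerate}

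From (ii), the assignment $r(\alpha)\mapsto\rho(\alpha)$ extends to a well-defined group homomorphism $\varphi$ from the subgroup $g\mathbb Z\le\mathbb Z$ generated by the support imbalances to $\overline{\mathbb Q}^{\times}$. Here $g$ is the least positive generating imbalance. Write $g=\sum_i n_ir(\alpha_i)$ and set $\mu=\varphi(g)=\prod_i\rho(\alpha_i)^{n_i}$, which is algebraic. Choose an algebraic $\kappa$ with $\kappa^g=\mu$, and put $c=\kappa^{-1}$ with a fixed algebraic square root $c^{1/2}$. Then $\rho(\alpha)=\kappa^{r(\alpha)}$ for every support input, which is exactly the required identity. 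Off the support both sides vanish, by (i).

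The step I expect to need the most care is (ii). It must be checked that every needed relation is realized by an actual EO input of some $H^{\otimes k}$ with all factors in the support, so that no division by zero occurs. The passage from nonnegative to signed relations through complements also relies on (i) having been established first.
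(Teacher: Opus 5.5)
Your proposal is correct and follows essentially the same route as the paper. You prove complement-closure of the support by concatenating $b$ copies of $\alpha$ with $a$ copies of an opposite-imbalance $\beta$. You then turn signed integer relations into EO concatenations of tensor powers by complementing the inputs with negative multiplicity. This gives a homomorphism on $g\mathbb Z$ and an algebraic $g$-th root. The differences are cosmetic: you use the reciprocal ratio $H(\bar\alpha)/H(\alpha)$, and you take the tensor-power EO symmetry directly from Lemma~\ref{en:lem:c1-eo-symmetry} rather than through Lemma~\ref{en:lem:c1-common-phase}.
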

\begin{proof}
First prove that the support is closed under complementation. If $r(\alpha)=a>0$, choose
a support input $\beta$ with $r(\beta)=-b<0$. Concatenating $b$ copies of $\alpha$ and $a$
copies of $\beta$ gives an EO input. Complement symmetry of the EO layer of the tensor power gives
\[
 H(\alpha)^bH(\beta)^a=H(\bar\alpha)^bH(\bar\beta)^a.
\]
The left side is nonzero, so both complementary values are nonzero. Negative imbalance is
handled in the same way, and zero imbalance follows directly from EO symmetry.

Set $u(\alpha)=H(\alpha)/H(\bar\alpha)$. For two inputs $\alpha,\beta$ of the same imbalance,
the concatenation $\alpha\bar\beta$ shows that $u(\alpha)=u(\beta)$; denote this value by $u_r$.
For any integer relation $\sum_jn_jr_j=0$, replace negative multiplicities by complementary
inputs. EO complement symmetry yields $\prod_ju_{r_j}^{n_j}=1$.
Thus $r\mapsto u_r$ defines a multiplicative homomorphism on the generated subgroup
$g\mathbb Z\le\mathbb Z$. Choose an algebraic root $c^g=u_g$. Then $u_r=c^r$.
On imbalance $r$, the transformation $D^{\otimes m}$ multiplies the value by $c^{-r/2}$,
so the transformed complementary values agree. At imbalance zero the multiplier is one.
The edge-kernel identity follows by direct multiplication.
\end{proof}

Rename $\widetilde H$ as $H$. Conditions \ref{en:item:c1-binary} and \ref{en:item:c1-quaternary}
remain valid because $DND=N$. The function $H$ remains active, is globally complement
symmetric, and has a rational EO layer in every tensor power.

\begin{lemma}[Phase character]\label{en:lem:c1-phase-character}
Let $R=\{r(\alpha):\alpha\in\supp(H)\}$ and
$g=\gcd\{|r|:r\in R,r\ne0\}$. The phase class of a nonzero function value in
$\mathbb C^*/\mathbb R^*$ depends only on its imbalance and defines a homomorphism
\[
 \chi:g\mathbb Z\longrightarrow\{\mathbb R^*,i\mathbb R^*\}.
\]
Some fixed tensor power $H^{\otimes k}$ has a nonzero input of imbalance exactly $g$.
\end{lemma}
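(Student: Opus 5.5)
Three things have to be established: phase depends only on imbalance; this dependence is a homomorphism into $\{\mathbb R^*,i\mathbb R^*\}$; and some tensor power attains imbalance exactly $g$. All three will come from concatenating support inputs of tensor powers of $H$ and then using two facts. The EO layer of every tensor power is rational (Lemma~\ref{en:lem:c1-common-phase}), and $H$ is globally complement symmetric (Lemma~\ref{en:lem:c1-absorb-charge}).

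\emph{Step 1: phase depends only on imbalance.} Take support inputs $\alpha,\beta$ with $r(\alpha)=r(\beta)=r$. Then $\alpha\bar\beta$ is an EO input of $H^{\otimes2}$, so $H(\alpha)H(\bar\beta)\in\mathbb Q^*$. By global complement symmetry $H(\bar\beta)=H(\beta)$, hence $H(\alpha)H(\beta)\in\mathbb Q^*$. Applying the same argument to $\alpha\bar\alpha$ gives $H(\alpha)^2\in\mathbb Q^*$, and likewise for $\beta$. So every nonzero value $v$ satisfies $v^2\in\mathbb Q^*$, which places its class in $\mathbb C^*/\mathbb R^*$ in $\{\mathbb R^*,i\mathbb R^*\}$. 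Moreover $H(\alpha)/H(\beta)=H(\alpha)H(\beta)/H(\beta)^2$ is real, so $\alpha$ and $\beta$ have the same class. Running the identical argument for $H^{\otimes k}$ defines a class $\chi_k(r)$ for every imbalance occurring in $\supp(H^{\otimes k})$. It agrees with $\chi_1$ on common imbalances: pad $\alpha\in\supp(H)$ with an EO support input $\gamma$ of $H^{\otimes(k-1)}$, whose value is rational and nonzero by Corollary~\ref{en:cor:c1-active-powers}, and compare.

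\emph{Step 2: homomorphism.} The set $S$ of imbalances realized in some tensor power is closed under addition, since concatenation adds imbalances and multiplies values (classes multiply). It is closed under negation, since the support is complement closed and $r(\bar\alpha)=-r(\alpha)$. Hence $S$ is a subgroup of $\mathbb Z$ containing $R$, and because every element of $S$ is a sum of elements of $R$, we get $S=g\mathbb Z$. For $r,s\in S$, witnesses $\alpha$ and $\beta$ of $r$ and $s$ (in powers $k$ and $l$) give the witness $\alpha\beta$ of $r+s$ in power $k+l$. Its value is $H^{\otimes k}(\alpha)H^{\otimes l}(\beta)$, so $\chi(r+s)=\chi(r)\chi(s)$. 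Well-definedness across different powers comes from Step 1. The target $\{\mathbb R^*,i\mathbb R^*\}$ is a group of order two under multiplication, so $\chi$ is a homomorphism $g\mathbb Z\to\{\mathbb R^*,i\mathbb R^*\}$.

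\emph{Step 3: imbalance $g$ is attained.} Write $g=\sum_j n_j r_j$ with $r_j\in R$ and $n_j\in\mathbb Z$ (Bézout). Replace each term with $n_j<0$ by $|n_j|$ copies of the complementary input, which has imbalance $-r_j$ and lies in the support by complement closure. Concatenating these inputs gives a support input of $H^{\otimes k}$ with $k=\sum_j|n_j|$ and imbalance exactly $g$. This $k$ is fixed because $H$ is fixed.

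The main obstacle I expect is the bookkeeping in Step 1. Rationality of the EO layer must hold for every $H^{\otimes k}$ after the renormalization by $\theta$, and that renormalization must be compatible with the diagonal transformation $D$. Lemma~\ref{en:lem:c1-absorb-charge} states that $D$ leaves EO layers unchanged, which is exactly what is needed.
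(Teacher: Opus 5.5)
Your proof is correct and is essentially the paper's argument. The EO concatenations $\alpha\bar\alpha$ and $\alpha\bar\beta$, combined with rationality of the EO layer and global complement symmetry, give order at most two and dependence only on imbalance. The B\'ezout relation, with negative multiplicities replaced by complementary inputs, then gives imbalance exactly $g$.

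The only difference is how well-definedness is checked. You define $\chi$ directly on the imbalances realized in all tensor powers, so your padding comparison should be stated for any two powers $k<l$, not only against $\chi_1$; alternatively, apply the $\alpha\bar\beta$ trick across different powers. The paper instead extends $\chi$ from $R$ by checking that every integer zero-sum relation produces a nonzero real EO concatenation.
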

\begin{proof}
The input $\alpha\bar\alpha$ is EO, so global complement symmetry gives
$H(\alpha)^2\in\mathbb R^*$. Every phase class therefore has order at most two.
If $r(\alpha)=r(\beta)$, then $\alpha\bar\beta$ is EO and
$H(\alpha)H(\beta)\in\mathbb R^*$, so the two phase classes agree.
For any integer zero-sum relation, replace negative multiplicities by complementary inputs.
The resulting EO concatenation has a nonzero real value, so the corresponding product of
phase classes is the identity. This proves that the homomorphism is well defined.
Finally, express $g$ by a B\'ezout integer relation. Complement-closed support allows negative
multiplicities to be replaced by positive multiplicities of complementary inputs. The
resulting concatenation is nonzero and gives the required fixed tensor power.
\end{proof}

\begin{lemma}[Shaping while retaining the least generating imbalance]\label{en:lem:c1-shape-generator}
From the tensor power in the preceding lemma, an ordinary gadget realizes a $g$-ary function
\[
 F=Q+a(\ket{0^g}+\ket{1^g}),\qquad a\ne0,
\]
where $Q$ is a nonzero rational-valued, complement-symmetric EO function, $a$ has phase class
$\chi(g)$, and $g\ge6$. Every complete closure of a single $F$ vertex is nonzero.
\end{lemma}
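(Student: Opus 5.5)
The plan is to start from the tensor power $G=H^{\otimes k}$ of Lemma~\ref{en:lem:c1-phase-character}, of arity $m$, together with a support input $\gamma$ with $r(\gamma)=g$. The function $F$ is then obtained from $G$ by adding $N$ self-loops on $(m-g)/2$ port pairs, chosen so that exactly $g$ ports remain open. The proof has four steps: pin down the support shape of any such $F$, check rationality and complement symmetry, choose the pairs so that $a\ne0$, and finally exclude $g=2,4$. Nonvanishing of every complete closure of $F$ and $Q\ne0$ are immediate from Corollary~\ref{en:cor:c1-active-powers}. Indeed, $F$ is $H^{\otimes k}$ with $N$ self-loops, so all its closures are nonzero and $F|_{\mathrm{EO}}\ne0$.

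\emph{Support shape.} An $N$ self-loop contributes one $0$ and one $1$, so every support input of $F$ arises from a support input of $G$ of the same imbalance. Every support imbalance of $G$ lies in $g\mathbb Z$, because $G$ is a tensor power of $H$ and every imbalance of $H$ is a multiple of $g$. An input of arity $g$ has imbalance in $[-g,g]$, so the support of $F$ lies in imbalances $0$ and $\pm g$. The only inputs with imbalance $\pm g$ are $1^g$ and $0^g$.

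\emph{Rationality and complement symmetry.} Global complement symmetry of $H$ (Lemma~\ref{en:lem:c1-absorb-charge}) passes to $G$ and then to $F$, because $N$ is invariant under complementing both of its arguments. Hence $F(0^g)=F(1^g)=:a$, and the EO part $Q$ is complement symmetric. The EO layer of $G$ is rational by Lemma~\ref{en:lem:c1-common-phase}. Each EO value of $F$ is a sum of EO values of $G$, so $Q$ is rational. Next, $a$ is a sum of values of $G$ at inputs of imbalance exactly $g$. By Lemma~\ref{en:lem:c1-phase-character} all of these lie in the single phase class $\chi(g)$, so they are real multiples of one fixed phase. Therefore $a$ has phase class $\chi(g)$ provided $a\ne0$.

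\emph{The main obstacle: $a\ne0$.} Real multiples of a common phase can still cancel. I would handle this by copying the argument of Lemma~\ref{en:lem:c1-eo-symmetry} on the imbalance-$g$ layer instead of the EO layer. Let $R$ be the restriction of $G$ to inputs of imbalance exactly $g$, divided by a fixed representative of the phase $\chi(g)$. Then $R$ is real and nonzero, since $R(\gamma)\ne0$. An $N$ self-loop commutes with restriction to a fixed imbalance layer, because it removes one $0$ and one $1$. While the arity exceeds $g$, a support input of $R$ contains both zeros and ones, so among any three of its positions some local word of the form $100$ or $011$ occurs. Let the values on the three permutations of that word be $a_1,b,c$, with $a_1\ne0$. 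The three possible self-loops on these positions give $a_1+b$, $a_1+c$ and $b+c$ at the corresponding remaining input, and these cannot all vanish. So some single self-loop keeps the layer nonzero. Iterating this choice fixes the pairing adaptively. The process ends at arity $g$ with a nonzero value at the unique imbalance-$g$ input $1^g$, and this value is exactly $a$.

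\emph{Excluding $g=2,4$.} The arity of $F$ is $g$, and it is even because all functions here have even arity. If $g=2$, then $F$ is binary with $F(00)=a\ne0$, which contradicts \ref{en:item:c1-binary}. If $g=4$, condition \ref{en:item:c1-quaternary} makes $F=\lambda N\otimes N$, whose support is EO only; this contradicts $F(1^4)=a\ne0$. Hence $g\ge6$.
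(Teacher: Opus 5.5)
Your proposal is correct and follows essentially the same route as the paper. Adaptively chosen $N$ shortloops, selected by the three-local-sums trick, keep a nonzero value of imbalance $g$ down to arity $g$; the lattice $g\mathbb Z$ of imbalances then confines the support to the EO layer and the two endpoints, and \ref{en:item:c1-binary}, \ref{en:item:c1-quaternary} exclude $g=2,4$. Two small fixes: you must pick three positions that contain both a $0$ and a $1$ (not ``any three''), and dividing by a phase representative to make $R$ real is harmless but unnecessary, since $2a_1=(a_1+b)+(a_1+c)-(b+c)$ already works over $\mathbb C$.
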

\begin{proof}
Retain a nonzero input of imbalance $g$. If it still contains both $0,1$, then $g>0$
allows us to choose a local word $011$. Denote the values on its three permutations by
$a_0,b_0,c_0$, with $a_0\ne0$. The three shortloop values
$a_0+b_0,a_0+c_0,b_0+c_0$ cannot all vanish. Choose a shortloop that remains nonzero.
It deletes one $0$ and one $1$, preserves imbalance, and lowers arity. After finitely many
steps, the retained input is all ones and the remaining arity is $g$.
Shortloop preserves imbalance, so every nonzero final imbalance is still an integer
multiple of $g$. On $g$ ports the only nonzero possibilities are $\pm g$.
Global complement symmetry gives two equal nonzero endpoint values.
The preceding lemmas preserve rationality on the EO layer and activity.
Every summand of imbalance $g$ lies on the same real line $\chi(g)$, so the nonzero sum $a$
still has this phase class. If $g=2$, a nonzero binary diagonal entry occurs; if $g=4$, a
nonzero quaternary endpoint occurs. These contradict \ref{en:item:c1-binary} and
\ref{en:item:c1-quaternary}, respectively. Hence $g\ge6$.
\end{proof}

\subparagraph*{Turning EO shortloops into matching products}

Write $\partial_{ij}f=\sum_{a\ne b}f(x_i=a,x_j=b,\ldots)$, retaining the order of the
remaining ports. For the shaped function $F$, every
$J_{ij}=\partial_{ij}F=\partial_{ij}Q$ is a nonzero rational-valued, complement-symmetric
EO function, and every complete closure of a single vertex is nonzero.
The ARS--EO theorem~\cite[Theorem 3.1]{en:cai2020arseo}, applied to the finite set
$\mathcal E=\{J_{ij}:i<j\}$, gives tractability if $\mathcal E\subseteq\mathcal A$ or
$\mathcal E\subseteq\mathcal P$, and hardness otherwise.
Here $\mathcal A$ is the class of fourth-root quadratic phases on affine supports:
its functions have the form $\lambda\mathbf1_{Ax=b}i^{L(x)+2Q_2(x)}$, where $L$ is an
integer linear polynomial and $Q_2$ is an integer quadratic polynomial.
The class $\mathcal P$ is the product-type class defined earlier.
Our functions are real-valued and complement symmetric, and hence satisfy the ARS condition
of that theorem. In the hard case, expansion of the fixed gadgets transfers hardness to
the original problem. We continue in a branch where the entire set lies in one of the two
tractable classes.

\begin{lemma}[Cleanup of active EO functions in the tractable classes]\label{en:lem:c1-cleanup}
Let $J$ be a nonzero rational-valued, complement-symmetric EO function realizable by an
ordinary gadget. Suppose every complete closure of a single vertex is nonzero.
If $J\in\mathcal P\cup\mathcal A$, then under \ref{en:item:c1-quaternary}, $J$ is a nonzero
scalar multiple of a perfect-matching product of $N$ functions.
\end{lemma}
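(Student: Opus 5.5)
The plan is to combine the local information coming from \ref{en:item:c1-quaternary}, applied to quaternary gadgets obtained from $J$ by shortloops, with the rigid support and value structure that membership in $\mathcal P$ or $\mathcal A$ gives. I will induct on the arity $m$ of $J$. The base cases are direct: $m=2$ is \ref{en:item:c1-binary}, and $m=4$ is \ref{en:item:c1-quaternary}, since the conclusion is exactly the form given there.

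For $m\ge6$, I will first collect the properties that every shortloop $\partial_{ij}J$ inherits.
\begin{itemize}
\item It is realizable by an ordinary gadget.
\item It is EO, rational-valued and complement symmetric, because an $N$ shortloop deletes one $0$ and one $1$.
\item It is active: every complete closure of $\partial_{ij}J$ is a complete closure of $J$, as in Corollary~\ref{en:cor:c1-active-powers}.
\item It lies in the same tractable class as $J$, because both $\mathcal P$ and $\mathcal A$ contain $N$ and are closed under contraction.
\end{itemize}
By induction, every $\partial_{ij}J$ is a nonzero multiple of a perfect-matching product of $N$ functions on the remaining $m-2$ ports. In particular, each shortloop has support of size exactly $2^{(m-2)/2}$ and constant value on that support.

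Next I will use the class structure to control the support of $J$ itself. In both $\mathcal P$ and $\mathcal A$, the support is an affine subspace $S\subseteq\mathbb F_2^m$. For $\mathcal P$, complement closure excludes every genuinely pinning unary factor, so the support is cut out by equalities and disequalities. For $\mathcal A$, the support is affine by definition. Since $S$ lies in the EO layer and is closed under complement, its linear part contains $\mathbf 1$. Consider the equivalence relation on positions under which two positions are equal or opposite on all of $S$. I intend to show that every class of this relation is a pair $\{i,j\}$ forced to be unequal. Suppose instead that some class has size at least $3$, or that some position is free with no fixed partner. Then I will choose a pair $i,j$ that is unequal on part of $S$, such that the shortloop $\partial_{ij}J$ remains nonzero, and show that this shortloop has a support pattern other than that of a matching product. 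The model obstruction is the support $\{000111,111000\}$: its shortloop on positions $1,4$ yields support $\{0011,1100\}$, which violates the $2^{(m-2)/2}$ count. Activity is what guarantees that the shortloop needed for such a count comparison is nonzero.

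Once the support is known to be $\{x: x_{p_k}\ne x_{q_k}\ \forall k\}$ for a perfect matching $\{p_kq_k\}$, the values remain to be fixed. Shortloops on two matched positions belonging to different pairs relate values of $J$ on inputs that differ on exactly those two pairs. Because those shortloops are constant on their supports, and because the $\mathcal A$ phases (respectively the $\mathcal P$ unary factors) are rigid, I will deduce that $J$ is constant on $S$. Constancy on $S$ is exactly the statement that $J$ is a scalar multiple of $\bigotimes_k N(x_{p_k},x_{q_k})$. Rationality and complement symmetry are consistent with this and are used only to rule out the sign patterns $\pm1$ or $\pm i$ that $\mathcal A$ would otherwise permit. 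The expected main obstacle is the support step, namely passing from ``every shortloop has matching-product support'' to ``$S$ itself is matching-type''. I will attack it first by an explicit count of $|S|$ against the sizes of its shortloop supports.
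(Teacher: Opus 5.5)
\textbf{Verdict.} Your route is genuinely different from the paper's, and it can be made to work, but the support step is only sketched and one of its implications does not hold as stated.

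\textbf{Comparison with the paper.} The paper does not induct on arity.
\begin{itemize}
\item For $J\in\mathcal P$ it splits $J$ into equality/disequality blocks and uses activity to force equal block weights. It then extracts any block of arity at least four and shortloops it down to a permutation of $D_4=\ket{0011}+\ket{1100}$.
\item For $J\in\mathcal A$ it pairs coordinates by Walsh-character independence and squares away phases with a two-copy gadget. It uses elimination gadgets to turn a proper affine support, or a mixed quadratic term, into $D_4$, and uses activity to kill linear phase terms.
\end{itemize}
So the paper contradicts \ref{en:item:c1-quaternary} by building an explicit quaternary gadget from $J$. Your induction instead carries \ref{en:item:c1-quaternary} upward through the inductive hypothesis. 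The four inherited properties you list are correct, since $\mathcal P$ and $\mathcal A$ are closed under $N$-contraction. After that, everything is combinatorics of an affine support, which treats $\mathcal P$ and $\mathcal A$ uniformly and needs no gadget engineering.

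There is also a shortcut. Once every $\partial_{ij}J$ equals $\theta_{ij}P_{M_{ij}}$, Theorem~\ref{en:thm:c1-eo-matching-reduction} already finishes the proof, and it does not depend on this lemma. It gives that $J$ is a matching product or $\lambda F'_8$. The second case is impossible, because the $14$-point support of $F'_8$ is not an affine subspace.

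\textbf{What the self-contained version still needs.} The missing implication is the third point below.
\begin{itemize}
\item \emph{No singleton classes.} You need the EO balance per linear class, which the paper also uses. Write $x_i=\ell_i(z)+a_i$ on $S$. Then EO gives $\sum_i(-1)^{a_i}(-1)^{\ell_i(z)}\equiv0$, and independence of characters makes each class balanced.
\item \emph{No forced-equal positions.} Use activity rather than a count. If $x_i=x_{i'}$ on all of $S$, then $\partial_{ii'}J=0$. Together with balance, this leaves only classes that are forced-opposite pairs.
\item \emph{From pairs to the full matching support.} That every class is such a pair does not yet give $S=\{x:x_{p_k}\ne x_{q_k}\ \forall k\}$. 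For example, the relation $y_1+y_2+y_3+y_4=0$ on the logical bits of four pairs is complement-invariant and EO, and it creates no new classes, yet $|S|=8$. Close this step with $\partial_{p_1p_2}J$. Each remaining assignment has a unique preimage, so nothing cancels, and its support is the projection of $S\cap\{y_1\ne y_2\}$. The inductive hypothesis forces this projection to have size $2^{d-1}$, which makes $S$ the full cube in the logical bits.
\end{itemize}

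\textbf{The value step.} The same injective shortloops settle it directly. $\partial_{p_kp_l}J$ and $\partial_{p_kq_l}J$ are the restrictions of $J$ to $\{y_k\ne y_l\}$ and $\{y_k=y_l\}$, and the inductive hypothesis makes each of them constant. For $d\ge3$ these sets overlap enough to make $J$ constant on all of $S$. No appeal to phase rigidity of $\mathcal A$ or to rationality is needed.
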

\begin{proof}
First suppose $J\in\mathcal P$. Decompose it into the connected blocks of its equality and
disequality constraints. Complement-closed support excludes blocks of fixed variables.
Each block supports two complementary assignments with weights $a,b\ne0$.
Assignments to different blocks are independent, whereas the total support is EO, so each
block is itself balanced. Comparing global complementation while fixing the other blocks
gives $b/a=\pm1$ for each block. If a block has ratio $-1$, close it by pairing opposite
positions of a support assignment and close every other block internally. The total value
is zero, a contradiction. Thus the two weights of every block agree.
Closing all other blocks with nonzero value extracts any selected block by an ordinary gadget.
If a block has arity at least four, repeatedly use $N$ to short one port of its $0$ class
with one port of its $1$ class. This yields a nonzero scalar multiple of a port permutation
of $D_4=\ket{0011}+\ket{1100}$, contradicting \ref{en:item:c1-quaternary}.
Consequently every block is a scalar multiple of $N$.

Now suppose $J\in\mathcal A$. Parameterize its affine support as $x_i=\ell_i(z)+a_i$,
where each $\ell_i$ is a linear form over $\mathbb F_2$.
The EO condition says that $\sum_i(-1)^{\ell_i(z)+a_i}=0$ for every $z$.
Walsh characters associated with distinct linear forms are linearly independent.
For each fixed $\ell$, the numbers of coordinates equal to $\ell(z)$ and to $\ell(z)+1$
are therefore equal. Pair these coordinates to obtain
\[
 J=\prod_{t=1}^d[x_t\ne y_t]A(x_1,\ldots,x_d),\qquad A\in\mathcal A.
\]
Choose a nonzero value $\lambda$ of $A$. Rationality implies that every nonzero value is
$\pm\lambda$. Take two copies of $J$ and, for each coordinate, join $y_t$ of the first
copy to $x_t$ of the second by an $N$ edge. Retain $x_t$ of the first copy and $y_t$ of the
second. Each support assignment extends uniquely, and the resulting function is
$\lambda^2\prod_t[x_t\ne y_t]\chi(x)$, where $\chi$ indicates the original affine support.

If $\chi$ is a proper affine subspace, choose free variables among the original coordinates
and a relation for one dependent variable, $x_j+\sum_{t\in T}x_t=c$.
Complement-closed support implies that the number of variables in this row is positive and
even. Add $N$ self-loops to the pairs $x_t,y_t$ of all other variables, eliminating the
other dependent variables and the free variables absent from the row. Only this equation
remains, accompanied by a nonzero power of $2$.
If the equation has at least four variables, choose two free variables $p,q$ in it and
short $x_p,y_q$ and $x_q,y_p$ across the pairs. This forces $x_p=x_q$, eliminates them, and multiplies
the function by $2$. The equation loses two variables. Continuing until two variables remain,
and retaining their opposite pairs, yields a permutation of $D_4$, a contradiction.

Thus $\chi$ is the full cube and $A(x)=\lambda(-1)^{q(x)}$, where $q$ is a polynomial of
degree at most two over $\mathbb F_2$. If $q$ contains a nonzero mixed quadratic term,
short the pair $x_j,y_j$ of a variable participating in such a term. The result is
\[
 \lambda(-1)^{q(0,z)}
 \bigl(1+(-1)^{q(1,z)-q(0,z)}\bigr).
\]
The exponent difference is a nonconstant affine linear form, so the function is nonzero
and has proper affine support. Repeating the preceding squaring and elimination gadgets
again produces $D_4$, a contradiction. Hence $q$ has only linear terms.
If a linear coefficient is nonzero, closing $J$ along all its original pairs introduces
a factor $1-1=0$, contrary to activity. Thus $q$ is constant, proving the claim.
\end{proof}

\subparagraph*{A structural theorem when every shortloop is a matching product}

We now reconstruct the overall structure of an EO function from the matching structures of its $N$ shortloops. We also write $\partial^{ij}$ for $\partial_{ij}$.

Throughout this argument shortloops use binary disequality. For distinct $i,j$, abbreviate
the shortloop notation as
\[
 \partial^{ij}Q:=\sum_{a\ne b}Q(x_i=a,x_j=b,\ldots).
\]
Only the fixed edge-kernel subscript is omitted; the remaining port order is unchanged.
For $e=\{i,j\}$, write $\partial^e$. If $M$ is a perfect matching on an even number of ports, set
\[
 P_M=\prod_{\{u,v\}\in M}[x_u\ne x_v].
\]

First define the arity-eight exception. Let $F'_8$ have value $1$ on the following support
and value $0$ elsewhere:
\begin{equation}\label{en:eq:c1-f8}
\begin{split}
\supp(F'_8)=\{x\in\{0,1\}^8:\ &\textstyle\sum_{i=1}^8x_i=4,\\
&x_1+x_2+x_3+x_4\equiv x_5+x_6+x_7+x_8\equiv0\pmod2,\\
&x_1+x_2+x_5+x_6\equiv x_1+x_3+x_5+x_7\equiv0\pmod2\}.
\end{split}
\end{equation}
Its support has fourteen words, and it is a 0--1-valued, complement-symmetric EO function.
References to this exception allow a variable permutation and an overall nonzero scalar.

\begin{theorem}[EO arity reduction for the trivial group]\label{en:thm:c1-eo-matching-reduction}
Let $Q$ be a rational-valued, complement-symmetric EO function of arity at least six.
Suppose that for every pair of distinct indices $i,j$, there are a nonzero scalar
$\theta_{ij}$ and a perfect matching $M_{ij}$ on the remaining ports such that
\begin{equation}\label{en:eq:c1-all-contractions}
 \partial^{ij}Q=\theta_{ij}P_{M_{ij}}.
\end{equation}
Then exactly one of the following holds:
\begin{enumerate}[label=\textup{(\arabic*)}]
\item $Q$ is a nonzero scalar multiple of a perfect-matching product of binary disequality functions;
\item $Q$ has arity eight and, after permuting variables, $Q=\lambda F'_8$ for some $\lambda\ne0$.
\end{enumerate}
\end{theorem}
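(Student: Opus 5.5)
The plan is to treat the data $(\theta_e,M_e)$, indexed by port pairs $e=\{i,j\}$, as a combinatorial structure on the complete graph of ports, to pin this structure down using commutation of disjoint shortloops, and then to recover $Q$ from its shortloops by a uniqueness principle. Write $2d\ge6$ for the arity.

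\emph{Uniqueness principle.} The homogeneous-kernel argument in the proof of Lemma~\ref{en:lem:c1-common-phase} shows the following. A complement-symmetric EO function of arity at least four whose $N$ shortloops all vanish is zero: the local word $100$ and the values $a+b,a+c,b+c$ force this. Hence, if a candidate $R$ (either $P_\Pi$ for some perfect matching $\Pi$, or $F'_8$) satisfies $\partial^{ij}R=c\,\theta_{ij}P_{M_{ij}}$ for all $i,j$ with one common constant $c\ne0$, then $Q=c^{-1}R$. So the whole task is to show that the shortloop data must be those of one of these two candidates.

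\emph{Commutation relations.} For disjoint pairs $e,f$ we have $\partial^f\partial^eQ=\partial^e\partial^fQ$. One computes $\partial^fP_M$ directly. If $f\in M$, the result is $2P_{M\setminus f}$. Otherwise the two partners $u',v'$ of the ends of $f$ become paired, giving $P_{M'}$ with coefficient $1$, because a chain of three disequalities is a disequality. Comparing the two sides of the commutation identity gives three things:
\begin{enumerate}[label=\textup{(\roman*)}]
\item the same matching is obtained on the remaining $2d-4$ ports;
\item $f\in M_e$ iff $e\in M_f$, because the coefficients $2$ and $1$ must match after dividing by $\theta_e,\theta_f$;
\item the ratio $\theta_e/\theta_f\in\{1,2,1/2\}$ is read off from which of these cases occurs.
\end{enumerate}
Next I will set $\Pi_e=\{e\}\cup M_e$ and imitate the proof of Lemma~\ref{en:lem:twisted-one-factorization}. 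Symmetry (ii) shows that $f\in\Pi_e$ implies $\Pi_f=\Pi_e$, so the distinct $\Pi_e$ partition the edges of $K_{2d}$ into $2d-1$ perfect matchings. Relation (i) then forces, for every two factors $\Pi,\Pi'$ of this one-factorization, a strong rigidity: merging along any $f\in\Pi'$ of the matching $\Pi\setminus e$ must agree with the corresponding merge of $\Pi'\setminus f$.

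\emph{Case analysis.} If all shortloop data are compatible with a single matching $\Pi$, then the candidate $P_\Pi$ reproduces them and alternative~(1) follows from uniqueness. I expect this to be the generic outcome, obtained by checking that the ratios in (iii) are consistent with $\partial^eP_\Pi$, whose value is $2P_{\Pi\setminus e}$ for $e\in\Pi$ and a single merged matching with coefficient $1$ otherwise. The hard step, and the main obstacle, is to show that an incompatible one-factorization satisfying the merge rigidity exists only in arity eight and is the one carried by $F'_8$.

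For this I will use the cycle structure of $\Pi\cup\Pi'$, as in Theorem~\ref{en:thm:twisted-no-pearl}. The merge rule translates into a condition saying that the union of two factors must be a union of $4$-cycles. In arity six this is impossible, because any two of the five factors of $K_6$ must share a $6$-cycle somewhere. This forces $2d=8$ and an "affine" one-factorization of $K_8$, namely the seven parallel classes of $\mathbb F_2^3$. Such a factorization is unique up to relabeling, and $2d\ge10$ is excluded by the same $4$-cycle count. Finally I will verify directly that every shortloop of $F'_8$ is a matching product with the required coefficients; its support is the weight-four words of the extended Hamming code. Uniqueness then gives $Q=\lambda F'_8$. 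The two alternatives are exclusive because $\partial^eF'_8$ is never $2P_{M}$ with $e\cup M$ a factor common to all pairs.
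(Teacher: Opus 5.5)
Your relation (ii) is false, and the one-factorization, the cycle analysis and the case split all depend on it. Commuting disjoint shortloops gives only $\theta_e c_{e,f}=\theta_f c_{f,e}$, where $c_{e,f}=2$ if $f\in M_e$ and $c_{e,f}=1$ otherwise. This is your (iii), and it permits $c_{e,f}\ne c_{f,e}$ whenever $\theta_e/\theta_f=2^{\pm1}$.

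The asymmetry already occurs in alternative (1). For $Q=P_{\{12,34,56\}}$ one has $\partial^{12}Q=2P_{\{34,56\}}$ and $\partial^{35}Q=P_{\{12,46\}}$. So $12\in M_{35}$ but $35\notin M_{12}$, and $\Pi_{12}=\{12,34,56\}$ and $\Pi_{35}=\{12,35,46\}$ are distinct yet share the edge $12$. Hence the $\Pi_e$ form a one-factorization only when all $\theta_e$ are equal. That sub-case contains $F'_8$; it is the paper's all-crossing-type branch at arity eight.

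The principal case, in which the coefficients take two levels, is never analysed. Your claim that the data are then ``compatible with a single matching'' is exactly what has to be proved. You give no argument for it and no mechanism that excludes mixed configurations.

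The paper supplies this step separately for each arity.
\begin{itemize}
\item For arity at least ten, it finds an edge $\{u,v\}$ common to $M_{12}$ and $M_{34}$. It propagates the factor $[x_u\ne x_v]$ to every shortloop avoiding $u,v$ (Lemmas~\ref{en:lem:c1-two-merge-propagation}--\ref{en:lem:c1-triangle-to-all}). It then extracts this factor from $Q$ itself using Lemma~\ref{en:lem:c1-zero-merges} and the EO weight count (Lemma~\ref{en:lem:c1-large-factor}).
\item At arity eight, it normalizes by a shortloop coefficient of maximum absolute value. Maximality forces $\partial^rQ=P_{M_0\setminus r}$ on a matching $M_0$. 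Cross pairs then split into internal type ($\theta=1/2$) and crossing type ($\theta=1$). Only in the all-crossing branch does symmetric incidence hold, and there the direction-bit argument leads to $F'_8$.
\item At arity six, it computes directly.
\end{itemize}

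There is a second, smaller problem even inside your equal-coefficient sub-case: a ``4-cycle count'' cannot exclude $2d\ge10$. The parallel classes of $\mathbb F_2^4$ form a one-factorization of $K_{16}$ in which any two factors are unions of 4-cycles. The bound has to come from the merge identity itself. For $e,f$ in different factors, one side of $\partial^f\partial^eQ=\partial^e\partial^fQ$ keeps $d-3$ edges of $\Pi_e$, while the other side has at most one edge outside $\Pi_f$. Since $\Pi_e\cap\Pi_f=\emptyset$, this gives $d\le4$.
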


We first give a uniform argument for large arities, and then compute both finite base cases
directly. The following local shortloop rule is used in both parts.

\begin{lemma}[One shortloop of a matching product]\label{en:lem:c1-matching-merge}
Let $M$ be a perfect matching. If $\{a,b\}\in M$, then
\[
 \partial^{ab}P_M=2P_{M\setminus\{\{a,b\}\}}.
\]
If $a,b$ are paired with $a',b'$ in $M$, respectively, and $a'\ne b$, then
\[
 \partial^{ab}P_M
 =P_{(M\setminus\{\{a,a'\},\{b,b'\}\})\cup\{\{a',b'\}\}}.
\]
\end{lemma}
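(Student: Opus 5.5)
The plan is a direct computation: $P_M$ factors over the pairs of $M$, and $\partial^{ab}$ only involves the factors containing $a$ or $b$. So write $P_M=R\cdot S$, where $S$ is the product of the (one or two) disequality factors containing $a$ or $b$, and $R$ is the product over the untouched pairs. Since $R$ does not depend on $x_a,x_b$, we get $\partial^{ab}P_M=R\cdot\partial^{ab}S$. The remaining port order is preserved, so only this local binary or quaternary computation has to be done.

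In the first case, $\{a,b\}\in M$ and $S=[x_a\ne x_b]$. The shortloop sums over $(x_a,x_b)\in\{(0,1),(1,0)\}$, and both terms contribute $1$. Hence $\partial^{ab}S=2$ and $\partial^{ab}P_M=2P_{M\setminus\{\{a,b\}\}}$.

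In the second case, $S=[x_a\ne x_{a'}][x_b\ne x_{b'}]$, with $a',b'$ distinct from each other and from $a,b$. This uses that $a'\ne b$, which also gives $b'\ne a$, so $a,b$ lie in different pairs. Summing over $x_a\ne x_b$, the first factor forces $x_a=1-x_{a'}$ and the second forces $x_b=1-x_{b'}$. Then $x_a\ne x_b$ holds exactly when $x_{a'}\ne x_{b'}$, and at most one term survives, with value $1$. Therefore $\partial^{ab}S=[x_{a'}\ne x_{b'}]$, and combining this with $R$ gives the product over the matching $(M\setminus\{\{a,a'\},\{b,b'\}\})\cup\{\{a',b'\}\}$, which is a perfect matching on the remaining ports.

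There is no real obstacle. The only point needing care is the hypothesis $a'\ne b$, which guarantees that the four indices are distinct and keeps the two cases disjoint.
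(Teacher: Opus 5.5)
Your proposal is correct and follows the paper's own argument. Both localize to the disequality factors touching $a,b$; in the first case both assignments $(0,1)$ and $(1,0)$ contribute, and in the second the three disequalities collapse to $[x_{a'}\ne x_{b'}]$, with a unique surviving $(x_a,x_b)$ for each allowed $(x_{a'},x_{b'})$.

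The paper's proof also remarks that the support of $P_M$ determines $M$, which is used later to compare matchings and coefficients. That remark is not part of the stated lemma, so its absence from your proposal is not a gap.
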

\begin{proof}
In the first case, $(x_a,x_b)=(0,1),(1,0)$ each contribute once.
In the second case, the three constraints $x_a\ne x_{a'}$, $x_a\ne x_b$, and $x_b\ne x_{b'}$
are jointly equivalent to $x_{a'}\ne x_{b'}$; every allowed pair $(x_{a'},x_{b'})$ has exactly
one corresponding $(x_a,x_b)$. All other matching edges remain unchanged.
Moreover, the support of $P_M$ uniquely determines $M$: endpoints of a matching edge are
always opposite, whereas variables on distinct edges can be assigned independently.
Consequently equality of nonzero matching products permits comparison of both their matchings
and their scalar coefficients.
\end{proof}

\begin{lemma}[Vanishing of all disequality shortloops]\label{en:lem:c1-zero-merges}
If a function of arity $m\ge3$, denoted by $R$, satisfies $\partial^{ij}R\equiv0$ for every $i\ne j$,
then $R$ can be nonzero only at $0^m,1^m$.
\end{lemma}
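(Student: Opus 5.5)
Take an input $\alpha\notin\{0^m,1^m\}$ with $R(\alpha)\ne0$ and derive a contradiction with the vanishing of all disequality shortloops, reusing the three-position trick from Lemma~\ref{en:lem:c1-eo-symmetry}. Since $\alpha$ contains both a $0$ and a $1$, and $m\ge3$, we can pick three positions $i,j,k$ on which $\alpha$ reads either $100$ or $011$ (in some order). Up to the symmetric treatment of complements, assume the local word is $100$ with the $1$ at position $i$. Let $\gamma$ be the restriction of $\alpha$ to the other $m-3$ positions, and write
\[
 a=R(1,0,0;\gamma),\quad b=R(0,1,0;\gamma),\quad c=R(0,0,1;\gamma),
\]
listing the values at $(x_i,x_j,x_k)$. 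By assumption $a=R(\alpha)\ne0$.

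Compute the three shortloops at the remaining input. For $\partial^{jk}R$, evaluated at $x_i=1$ and $\gamma$, the two opposite assignments of $(x_j,x_k)$ give $R(1,0,1;\gamma)+R(1,1,0;\gamma)$. That is not what we want, so use instead $\partial^{jk}R$ at $x_i=0$: this equals $b+c$. Similarly, $\partial^{ik}R$ at $x_j=0$ equals $a+c$, and $\partial^{ij}R$ at $x_k=0$ equals $a+b$. Each of these is the value of a shortloop function at a specific input, so all three vanish. Adding gives $2(a+b+c)=0$, and subtracting $b+c=0$ yields $a=0$, a contradiction. The case $011$ is identical after exchanging the roles of $0$ and $1$ on those three positions, with the shortloop sums taken at the fixed remaining coordinate equal to $1$.

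The only point needing care is the existence of the local word. If $\alpha$ has at least two zeros and one one, take the one and two zeros. Otherwise it has at least two ones and one zero, and we take the word $011$. Both situations occur only because $\alpha$ is nonconstant and $m\ge3$. No other obstacle is expected. Note that $0^m$ and $1^m$ indeed escape the argument, since no mixed triple exists there, which matches the statement.
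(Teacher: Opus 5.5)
Your argument is correct and is essentially the paper's proof: both choose three positions carrying a single $1$ (or a single $0$) among two copies of the other bit, read off the three shortloop equations $a+b=a+c=b+c=0$ at the appropriate remaining inputs, and combine them to get $a=0$ (the paper writes this as $2R(001\delta)=0$). The false start evaluating $\partial^{jk}R$ at $x_i=1$ is harmless and can simply be deleted.
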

\begin{proof}
Any input containing both $0,1$ has three positions that can be ordered as $001$ or $110$.
For $001\delta$, the three shortloop equations give
\[
 R(001\delta)+R(010\delta)=0,\quad
 R(001\delta)+R(100\delta)=0,\quad
 R(010\delta)+R(100\delta)=0.
\]
Adding the first two equations and subtracting the third gives $2R(001\delta)=0$.
The case $110\delta$ is identical.
\end{proof}

\begin{lemma}[Propagation of a common edge from two shortloops]\label{en:lem:c1-two-merge-propagation}
Suppose $Q$ satisfies \eqref{en:eq:c1-all-contractions}. Let distinct index pairs $p,q$ both be
disjoint from $\{u,v\}$, and suppose $[x_u\ne x_v]$ is a factor of both functions
$\partial^pQ,\partial^qQ$. Then for every index pair $e$ disjoint from
$\{u,v\}\cup p\cup q$, the function $[x_u\ne x_v]$ is also a factor of $\partial^eQ$.
\end{lemma}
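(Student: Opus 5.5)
The plan is to commute the disjoint shortloops and compare matchings, using Lemma~\ref{en:lem:c1-matching-merge} both for the local shortloop rule and for the fact that the support of $P_M$ determines $M$.

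First, fix $e$ disjoint from $\{u,v\}\cup p\cup q$. By \eqref{en:eq:c1-all-contractions}, write $\partial^pQ=\theta_pP_{M_p}$ and $\partial^eQ=\theta_eP_{M_e}$. By hypothesis $\{u,v\}\in M_p$. The pairs $e$ and $p$ are disjoint, and finite summations commute, so
\[
 \partial^p\partial^eQ=\partial^e\partial^pQ=\theta_p\,\partial^eP_{M_p}.
\]
Apply Lemma~\ref{en:lem:c1-matching-merge} to $\partial^eP_{M_p}$. In either of its two cases the result is a nonzero multiple ($2$ or $1$) of a matching product $P_{M'}$. The edges of $M_p$ it removes are those meeting $e$, and $\{u,v\}$ is disjoint from $e$, so $\{u,v\}\in M'$. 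Hence $\partial^p(\theta_eP_{M_e})$ is a nonzero multiple of a matching product containing the edge $\{u,v\}$. Since the support of a matching product determines its matching, the matching obtained from $M_e$ by the shortloop $\partial^p$ contains $\{u,v\}$. The same argument applies with $q$ in place of $p$.

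Now suppose, for contradiction, that $\{u,v\}\notin M_e$. Let $u'$ and $v'$ be the partners of $u$ and $v$ in $M_e$; these four indices are distinct. We read off from Lemma~\ref{en:lem:c1-matching-merge} which edges survive the shortloop $\partial^{p}$ with $p=\{a,b\}$:
\begin{enumerate}[label=\textup{(\roman*)}]
 \item If $p\in M_e$, the new matching is $M_e\setminus\{p\}$. Since $u,v\notin p$, this matching still pairs $u$ with $u'$, so it does not contain $\{u,v\}$.
 \item Otherwise the only new edge is $\{a',b'\}$, where $a',b'$ are the partners of $a,b$. All surviving edges are old edges of $M_e$, so $\{u,v\}$ can appear only as $\{a',b'\}$. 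This forces $\{a,b\}=\{u',v'\}$.
\end{enumerate}
Thus $p=\{u',v'\}$. By the same reasoning for $q$, also $q=\{u',v'\}$, contradicting $p\ne q$. Therefore $\{u,v\}\in M_e$, which means $[x_u\ne x_v]$ is a factor of $\partial^eQ$.

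The only delicate step is the edge bookkeeping in the merge rule, in particular confirming that case~(i) cannot produce $\{u,v\}$ and that case~(ii) produces it only for the single pair $\{u',v'\}$. This is where the hypothesis that $p$ and $q$ are distinct is used.
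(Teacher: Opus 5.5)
Your proof is correct and follows the paper's argument. You commute $\partial^p$ with $\partial^e$, observe that the side starting from $M_p$ keeps the edge $\{u,v\}$, while the side starting from $M_e$ can create that edge only when $p=\{u',v'\}$, and then repeat with $q$ to contradict $p\ne q$. You also spell out the case bookkeeping from Lemma~\ref{en:lem:c1-matching-merge} and the support-uniqueness step, which the paper leaves implicit.
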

\begin{proof}
Suppose instead that $u,v$ are not paired in $M_e$, but have partners $u',v'$, respectively.
Compare $\partial^p\partial^eQ=\partial^e\partial^pQ$.
The right side retains its original factor $[x_u\ne x_v]$, since $e$ avoids its ports.
The left side starts from $M_e$. By Lemma~\ref{en:lem:c1-matching-merge}, if $\{u,v\}$ was
absent, the only way the shortloop at $p$ can create it is $p=\{u',v'\}$.
The analogous comparison with $q$ gives $q=\{u',v'\}$, contradicting $p\ne q$.
\end{proof}

\begin{lemma}[A fixed triangle of shortloops]\label{en:lem:c1-obtain-triangle}
If $Q$ has arity at least ten and satisfies \eqref{en:eq:c1-all-contractions}, there are
pairwise distinct $u,v,r,s,t$ such that $[x_u\ne x_v]$ divides all of
\[
 \partial^{rs}Q,\qquad \partial^{st}Q,\qquad \partial^{rt}Q. \tag{C1.10}
\]
\end{lemma}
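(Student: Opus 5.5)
The plan is to obtain the triangle from Lemma~\ref{en:lem:c1-two-merge-propagation}. It suffices to find one edge $\{u,v\}$ and two distinct index pairs $p,q$, both disjoint from $\{u,v\}$, such that $[x_u\ne x_v]$ divides both $\partial^pQ$ and $\partial^qQ$. Then the lemma puts $[x_u\ne x_v]$ into $\partial^eQ$ for every pair $e$ avoiding $S=\{u,v\}\cup p\cup q$. Since $|S|\le6$ and the arity is $m\ge10$, at least four ports lie outside $S$. Any three of them, $r,s,t$, give the three pairs $rs,st,rt$, and all three avoid $S$. So the whole task is to produce such $u,v,p,q$.

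To do this, fix $p=\{1,2\}$ and pick any edge $\{u,v\}\in M_p$. Let $e$ be a pair disjoint from $p\cup\{u,v\}$; there are $m-4\ge6$ ports available for $e$. Compare the two sides of $\partial^p\partial^eQ=\partial^e\partial^pQ$ using Lemma~\ref{en:lem:c1-matching-merge}. On the right, $\partial^e$ acts on $\theta_pP_{M_p}$ away from $u,v$, so the factor $[x_u\ne x_v]$ survives. The right side is nonzero, since every shortloop of a matching product is a nonzero matching product. On the left, $\partial^p$ acts on $\theta_eP_{M_e}$. Lemma~\ref{en:lem:c1-matching-merge} shows the resulting matching contains $\{u,v\}$ only in two situations: either $\{u,v\}\in M_e$, or $u,v$ are paired in $M_e$ with the two ports of $p$. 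In the second situation $M_e$ contains $\{u,1\},\{v,2\}$ or $\{u,2\},\{v,1\}$. Since both sides are equal nonzero matching products, their matchings agree, so for each such $e$ one of these two alternatives must hold.

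Case 1: some such $e$ has $\{u,v\}\in M_e$. Take $q=e$. Then $p\ne q$, both are disjoint from $\{u,v\}$, and both $\partial^pQ,\partial^qQ$ carry $[x_u\ne x_v]$. This is exactly the required configuration.

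Case 2: every such $e$ pairs $u,v$ with $1,2$. The $m-4\ge6$ available ports support many distinct pairs $e$, for instance three pairwise disjoint ones. Each such $M_e$ has one of the two orientations, so by pigeonhole two distinct pairs $e,f$ share an orientation, say $\{u,1\}\in M_e\cap M_f$. Both $e$ and $f$ avoid $\{u,1\}$. Applying the first paragraph with the edge $\{u,1\}$ in place of $\{u,v\}$ and with $e,f$ in place of $p,q$ gives the triangle, with $\{u,v\}$ renamed as $\{u,1\}$.

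The main point to check carefully is the dichotomy in the comparison step. One must verify from both clauses of Lemma~\ref{en:lem:c1-matching-merge} that a shortloop at $p$ creates the edge $\{u,v\}$ only by merging the two $M_e$-edges at $u$ and $v$ through $p$. In the same-edge clause, $\{a,b\}=p\in M_e$ is simply deleted and nothing new appears. In the merging clause, the only new edge is $\{a',b'\}$, the pair of partners of the ports of $p$, so $\{a',b'\}=\{u,v\}$ forces $u,v$ to be partnered with $1,2$. Everything else is counting, and the bound $m\ge10$ is what leaves four free ports after removing $S$.
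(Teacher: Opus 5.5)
Your argument is correct and is essentially the paper's. Both proofs commute two disjoint shortloops, use Lemma~\ref{en:lem:c1-matching-merge} to find an edge lying in both matchings, and then apply Lemma~\ref{en:lem:c1-two-merge-propagation} to three ports outside the at most six indices already used. The only difference is bookkeeping: the paper takes $q=\{3,4\}\in M_{12}$, so comparing the $d-2$ edges of $M_{12}\setminus\{\{3,4\}\}$ with $M_{34}$ yields a common edge at once, whereas you fix $\{u,v\}\in M_{12}$ first and therefore need your pigeonhole fallback in Case~2. That case does occur, e.g.\ for $Q=cP_M$ with $\{1,a\},\{2,b\}\in M$ and the choice $\{u,v\}=\{a,b\}$.
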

\begin{proof}
Fix $\partial^{12}Q$ and reorder the remaining ports according to its matching so that
\[
 \partial^{12}Q=\theta_{12}[x_3\ne x_4][x_5\ne x_6]
 [x_7\ne x_8][x_9\ne x_{10}]\cdots . \tag{C1.11}
\]
Compare $\partial^{34}\partial^{12}Q$ with $\partial^{12}\partial^{34}Q$.
The former has the $d-2$ edges left after deleting $\{3,4\}$ from (C1.11).
In the latter, shorting $1,2$ in $M_{34}$ replaces at most two edges by one new edge.
Thus at most one edge of the resulting matching was absent from $M_{34}$.
Since $d\ge5$ and $d-2\ge3$, we can fix an edge $\{u,v\}$ common to $M_{12},M_{34}$.

Choose three distinct indices $r,s,t$ in $[2d]\setminus\{1,2,3,4,u,v\}$.
Apply Lemma~\ref{en:lem:c1-two-merge-propagation} with the fixed pairs $p=\{1,2\}$ and
$q=\{3,4\}$ to obtain (C1.10). The edge $u,v$ is selected before the auxiliary triangle
and $u,v$ remain fixed throughout the following argument.
\end{proof}

\begin{lemma}[Propagation from a triangle to every shortloop]\label{en:lem:c1-triangle-to-all}
Under Lemma~\ref{en:lem:c1-obtain-triangle}, fix its $u,v,r,s,t$.
For every pair $i,j$ disjoint from $\{u,v\}$,
\[
 [x_u\ne x_v]\mid \partial^{ij}Q. \tag{C1.12}
\]
\end{lemma}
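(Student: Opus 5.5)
The plan is to apply Lemma~\ref{en:lem:c1-two-merge-propagation} repeatedly, always using two shortloop pairs that already carry the factor $[x_u\ne x_v]$. Call an index pair $e$, disjoint from $\{u,v\}$, \emph{good} if $[x_u\ne x_v]$ divides $\partial^eQ$. By (C1.10), the three triangle pairs $\{r,s\},\{s,t\},\{r,t\}$ are good.

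\textbf{Step 1 (pairs avoiding two triangle edges).} Let $e$ be a pair disjoint from $\{u,v\}$. If $e$ is disjoint from $\{r,s,t\}$, take $p=\{r,s\}$ and $q=\{s,t\}$. Then $e$ avoids $\{u,v\}\cup p\cup q$, so the lemma shows that $e$ is good.

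\textbf{Step 2 (pairs meeting the triangle).} Suppose $e$ meets $\{r,s,t\}$ in exactly one vertex, say $e=\{r,w\}$ with $w\notin\{u,v,r,s,t\}$. A single triangle edge avoiding $r$ is not enough, since the lemma needs two distinct good pairs. I would instead first manufacture good pairs disjoint from $e$.

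Because the arity is at least ten, at least five indices lie outside $\{u,v,r,s,t\}$. Besides $w$, there are therefore at least four further indices $a,b,c,\ldots$. By Step~1 the pairs $\{a,b\}$ and $\{a,c\}$ are good, and both are disjoint from $\{u,v\}\cup\{r,w\}$. Applying the lemma with $p=\{a,b\}$ and $q=\{a,c\}$ shows that $e=\{r,w\}$ is good.

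The same trick covers the remaining cases:
\begin{itemize}
\item If $e$ is itself a triangle edge, it is good by hypothesis.
\item If $e=\{w,w'\}$ with both endpoints outside the triangle, it is already covered by Step~1.
\end{itemize}
In every case one only needs two distinct good pairs avoiding $e\cup\{u,v\}$, and Step~1 supplies them among the free indices.

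\textbf{Main obstacle: counting free indices.} The delicate point is checking that enough indices remain. Step~1 requires $e$ to avoid five fixed indices, and the triangle-meeting case requires three indices $a,b,c$ outside $\{u,v,r,s,t\}\cup e$. With arity $2d\ge10$, the set $\{u,v,r,s,t\}\cup e$ has at most seven elements, leaving at least three, which is exactly enough.

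I would verify this bound carefully, since it is where the hypothesis of arity at least ten (rather than eight) enters. I would also double-check that the lemma's hypothesis ``distinct $p,q$ both disjoint from $\{u,v\}$'' is met, which holds since $\{a,b\}\ne\{a,c\}$. Once every pair disjoint from $\{u,v\}$ is good, (C1.12) follows.
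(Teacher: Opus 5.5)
Your proposal is correct and follows essentially the same route as the paper. The paper also handles pairs avoiding $\{r,s,t\}$ directly using two triangle edges. For the remaining pairs it picks three fresh indices outside $\{u,v,r,s,t,i,j\}$, which is possible since that set has at most seven elements and the arity is at least ten. It certifies a new triangle on these indices by the first case, then applies the first case once more. Your version needs only the two edges $\{a,b\},\{a,c\}$ of that new triangle rather than all three, which is the same argument.
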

\begin{proof}
If $\{i,j\}$ is disjoint from $\{r,s,t\}$, use any two distinct edges of the triangle in
Lemma~\ref{en:lem:c1-two-merge-propagation}. Otherwise, the set $\{u,v,r,s,t,i,j\}$ has at
most seven indices. Since the total arity is at least ten, choose three distinct indices
$r',s',t'$ outside it. The first case shows that their three shortloops still contain
the fixed factor $[x_u\ne x_v]$, giving a new triangle disjoint from $\{i,j\}$.
Apply the first case again to obtain (C1.12). Only the auxiliary triangle changes; $u,v$
remain fixed.
\end{proof}

\begin{lemma}[Extracting a fixed factor at arity at least ten]\label{en:lem:c1-large-factor}
If $Q$ has arity at least ten and satisfies \eqref{en:eq:c1-all-contractions}, then $Q$ is a
nonzero scalar multiple of a perfect-matching product of binary disequality functions.
\end{lemma}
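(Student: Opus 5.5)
The plan is to use Lemma~\ref{en:lem:c1-triangle-to-all} to pull the fixed factor $[x_u\ne x_v]$ out of $Q$ itself. After that, the same argument is applied recursively to the remaining factor, or the remaining factor is handled through its shortloops.

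\emph{Step 1: $Q$ vanishes off $x_u\ne x_v$.} Define $R=Q\cdot[x_u=x_v]$, the restriction of $Q$ to inputs with $x_u=x_v$. For every pair $\{i,j\}$ disjoint from $\{u,v\}$, the shortloop commutes with this restriction. By (C1.12), $\partial^{ij}Q$ is supported on $x_u\ne x_v$, so $\partial^{ij}R=0$. Now fix values $x_u=x_v=c$ and regard $R$ as a function of the other $2d-2\ge 8$ variables. All of its disequality shortloops vanish, so Lemma~\ref{en:lem:c1-zero-merges} shows it can be nonzero only at the constant inputs. However, $Q$ is EO, and an input with $x_u=x_v$ and all other coordinates constant has weight $0$, $2$, $2d-2$ or $2d$. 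None of these equals $d$, since $d\ge5$. Hence $R=0$, and $Q=[x_u\ne x_v]\cdot Q'$, where $Q'(x_u,\ldots)$ denotes the values on the branch $x_v=\bar x_u$.

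\emph{Step 2: split off the variable $x_u$.} Compute $\partial^{uv}Q=2Q''$, where $Q''$ collects the values over both choices of $x_u$. Then compare this with shortloops $\partial^{uw}Q$ for $w\notin\{u,v\}$: each such shortloop equals $Q$ with $v$ renamed to play the role of $w$. The aim is to show that $Q'$ does not depend on $x_u$ once $x_v=\bar x_u$ is imposed, i.e.\ $Q=[x_u\ne x_v]\otimes Q_0$ with $Q_0$ of arity $2d-2$. For this I would use complement symmetry together with the matching structure. $\partial^{uv}Q=\theta_{uv}P_{M_{uv}}$ is a matching product, and it equals $\sum_{a}Q'(a,\cdot)$. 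Separately, for a pair $\{i,j\}$ disjoint from $\{u,v\}$, we have $\partial^{ij}Q=[x_u\ne x_v]\,\theta_{ij}P_{M_{ij}\setminus\{uv\}}$. Lemma~\ref{en:lem:c1-zero-merges} applied to the difference $D(y)=Q'(0,y)-Q'(1,y)$ then finishes the step: all shortloops of $D$ vanish because each $P_{M_{ij}\setminus\{uv\}}$ is independent of $x_u$, and $D$ is supported only on balanced-minus-one weights, so $D$ has no room at the constant inputs. Thus $D=0$, and $Q=[x_u\ne x_v]\otimes Q_0$ with $Q_0=\tfrac12\partial^{uv}Q=\tfrac{\theta_{uv}}2P_{M_{uv}}$. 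This immediately gives the matching product, so no induction is even needed.

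The main obstacle is Step 2, i.e.\ justifying the shortloop commutation identities for $D$ and checking the weight bookkeeping that excludes constant inputs of $D$. If the direct route fails, the fallback is a cleaner one. The factor from Step 1 makes $Q_0$ itself satisfy \eqref{en:eq:c1-all-contractions}, with matchings $M_{ij}\setminus\{uv\}$, and $Q_0$ remains rational and complement symmetric. Then recurse on arity down to the base cases six and eight, whose direct computation (deferred in the paper) supplies the matching product or $F'_8$. At arity ten the $F'_8$ branch would have to be excluded by checking that $[x_u\ne x_v]\otimes F'_8$ violates \eqref{en:eq:c1-all-contractions} for some pair.
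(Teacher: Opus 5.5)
Your proof is correct and is essentially the paper's argument: you kill the $x_u=x_v$ slices and the difference $Q^{01}_{uv}-Q^{10}_{uv}$ using Lemma~\ref{en:lem:c1-zero-merges} together with the EO weight count, and then read off $Q'=\tfrac{\theta_{uv}}{2}P_{M_{uv}}$ from $\partial^{uv}Q$. The step you flag as the main obstacle is immediate, because fixing $(x_u,x_v)$ commutes with $\partial^{ij}$ for pairs disjoint from $\{u,v\}$; so your remarks about $\partial^{uw}Q$ and the fallback recursion are unnecessary and can be dropped.
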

\begin{proof}
The preceding two lemmas fix $u,v$ such that (C1.12) holds for all pairs avoiding $u,v$.
Split $Q$ into the four slices $Q^{00}_{uv},Q^{01}_{uv},Q^{10}_{uv},Q^{11}_{uv}$ according
to $(x_u,x_v)=00,01,10,11$. For every $i,j\notin\{u,v\}$, the common factor gives
\[
 \partial^{ij}Q^{00}_{uv}=\partial^{ij}Q^{11}_{uv}=0,
 \qquad \partial^{ij}(Q^{01}_{uv}-Q^{10}_{uv})=0. \tag{C1.13}
\]
By Lemma~\ref{en:lem:c1-zero-merges}, these three functions can be nonzero only at their
all-zero and all-one inputs. But $Q$ is an EO function of arity $2d$, with $d\ge5$:
the total weights of these slice endpoints belong to $\{0,2d-2\}$, $\{1,2d-1\}$, or
$\{2,2d\}$, none equal to $d$. Hence
\[
 Q^{00}_{uv}=Q^{11}_{uv}=0,\qquad
 Q^{01}_{uv}=Q^{10}_{uv}=:Q',
\]
and $Q=[x_u\ne x_v]Q'$. Finally use the previously unused shortloop $\partial^{uv}$.
On the one hand, $\partial^{uv}Q=2Q'$; on the other,
\eqref{en:eq:c1-all-contractions} gives $\partial^{uv}Q=\theta_{uv}P_{M_{uv}}$.
Thus $Q'=(\theta_{uv}/2)P_{M_{uv}}$, as required.
\end{proof}

The quantifiers in this argument first produce one fixed witness
\[
 \exists\{u,v\}\ \forall\{i,j\}\cap\{u,v\}=\emptyset,
 \qquad [x_u\ne x_v]\mid \partial^{ij}Q,
\]
and only then extract this factor from $Q$ itself. In particular, $u,v$ do not depend on
$i,j$ or on an input selected in a contradiction argument.

We now complete the arity-six and arity-eight base cases by explicit calculations.
Continue to use $Q,\theta_{ij},M_{ij}$ from \eqref{en:eq:c1-all-contractions}, the functions
$P_M$, and $F'_8$ from \eqref{en:eq:c1-f8}; abbreviate $\{i,j\}$ by $ij$.
This is a structural argument about functions, with no new realizability or factorization
assumptions. At arity six we enumerate linear branches. At arity eight we first determine
all matchings and coefficients and then recover the function.
Uniqueness of recovery follows from Lemma~\ref{en:lem:c1-zero-merges}: if two EO functions
have identical shortloops, their difference can be supported only at the two endpoints,
which the EO condition excludes. Thus the two functions are equal.

\subparagraph*{Arity six: explicit calculation of five two-parameter families}

By a nonzero overall scaling and a variable permutation, first arrange
\[
 \partial^{12}Q=P_{\{34,56\}}.
\]
For a three-element subset $A\subseteq[6]$, let $1_A$ be its indicator word and write
$q_A=Q(1_A)=q_{[6]\setminus A}$. All solutions of the initial condition can be written as
\[
\begin{array}{lll}
q_{134}=a,&q_{156}=-a,&q_{135}=b,\\
q_{146}=1-b,&q_{136}=c,&q_{145}=1-c,
\end{array}
\qquad y_r=q_{12r}\quad(3\le r\le6).
\]
Only the seven numbers $a,b,c,y_3,y_4,y_5,y_6$ remain.
On the three pairs of complementary EO inputs, a quaternary matching product has values
$(0,t,t),(t,0,t)$, or $(t,t,0)$, where $t\ne0$.
Computing the following five shortloops entry by entry gives
\begin{equation}\label{en:eq:c1-finite-six-rows}
\begin{array}{c|ccc}
 e&\multicolumn{3}{c}{\partial^e Q\text{ on the three complementary EO pairs}}\\ \hline
 13&y_4-a&y_5+1-b&y_6+1-c\\
 14&y_3-a&y_5+c&y_6+b\\
 25&y_3+b&y_4+1-c&y_6-a\\
 23&y_4+a&y_5+b&y_6+c\\
 16&y_3+1-c&y_4+b&y_5+a
\end{array}
\end{equation}
For example, the third entry in row $25$ is $q_{126}+q_{156}=y_6-a$, and the first entry
in row $16$ is $q_{123}+q_{236}=y_3+1-c$.
Every row must have exactly one zero entry and two equal nonzero entries.

Let $h_e\in\{1,2,3\}$ denote the unique zero coordinate of row $e$.
Start with rows $13,14$. If $(h_{13},h_{14})$ is $(1,2)$ or $(1,3)$, the equality in the
first row forces all three entries of the second row to vanish. If it is $(2,1)$ or $(3,1)$,
the reverse argument makes the entire first row zero. These cases are excluded.
Only $(1,1),(2,2),(2,3),(3,2),(3,3)$ remain.
Next write $A,B,C$ for the entries of row $25$.
In branch $(2,3)$, the first two rows imply $2A+B+3C=0$; in branch $(3,2)$, they imply
$A+2B+3C=0$. One of $A,B,C$ is zero and the other two equal a nonzero $t$.
Substituting into either equation and dividing by the resulting nonzero positive integer gives
$t=0$, a contradiction. In branches $(2,2)$ and $(3,3)$, the first two entries of row $25$ agree,
so only its third entry can be zero. In branch $(1,1)$, row $25$ still allows all three choices.
This exhausts the possibilities into the following five families, with $w=y_5,z=y_6$:
\begingroup\small\setlength{\arraycolsep}{4pt}
\[
\begin{array}{c|c|ccc|c}
 &(h_{13},h_{14},h_{25})&a&b&c&y_3=y_4\\ \hline
\mathrm I&(1,1,1)&\frac{-w+2z-1}{3}&\frac{w-2z+1}{3}&\frac{-2w+z+1}{3}&a\\
\mathrm {II}&(1,1,2)&\frac{-w+2z-1}{3}&\frac{2w-z+2}{3}&\frac{-w+2z+2}{3}&a\\
\mathrm {III}&(1,1,3)&z&\frac{w-z+1}{2}&\frac{-w+z+1}{2}&z\\
\mathrm {IV}&(2,2,3)&z&w+1&-w&w+2z+1\\
\mathrm V&(3,3,3)&z&-z&z+1&w+2z+1
\end{array}
\]
\endgroup
Each row gives all solutions of its listed zero-entry and equality equations; the parameters
must still satisfy the nonzero conditions. For example, in the first family we initially have
$y_3=y_4=a$ and $b=c+w-z$. Row $25$ then gives $a+b=0$ and $a+1-c=z-a$, which yield the
first row of the table by direct solution.
Now use row $16$ and then row $23$. Set $s=w+z$. Substitution gives
\begingroup\small\setlength{\arraycolsep}{3pt}
\[
\begin{array}{c|c|c|c}
 &16&23\text{ after imposing row 16}&z\\ \hline
\mathrm I&((s+1)/3,0,(2s-1)/3)&(2z-2,3-2z,2z-1)&1\\
\mathrm {II}&(0,(s+1)/3,(2s-1)/3)&(2z-2,4-2z,2z)&1\\
\mathrm {III}&((s+1)/2,(s+1)/2,s)&(2z,\frac12-2z,\frac12+2z)&0\\
\mathrm {IV}&(2s+2,2s+2,s)&(2z+1,1-2z,2z)&0\\
\mathrm V&(s+1,s+1,s)&(2z+1,-2z,2z+1)&0
\end{array}
\]
\endgroup
Since row $16$ must have exactly one zero entry, the first two families require $s=2$,
and the last three require $s=0$.
After substitution in row $23$, set each of its three entries to zero in turn and check
equality of the other two. This gives exactly the last column of the table.
For instance, the three choices in the first family give $z=1,3/2,1/2$; in the latter two
cases the other entries are $(1,2)$ and $(-1,2)$, respectively, so both are excluded.
Each of the other four rows similarly has only its listed value.
Thus exactly five points remain:
\begin{equation}\label{en:eq:c1-finite-six-points}
\begin{array}{c|ccc|cccc|c}
&a&b&c&y_3&y_4&y_5&y_6&Q\\ \hline
1&0&0&0&0&0&1&1&P_{\{13,24,56\}}\\
2&0&1&1&0&0&1&1&P_{\{14,23,56\}}\\
3&0&\frac12&\frac12&0&0&0&0&\frac12P_{\{12,34,56\}}\\
4&0&1&0&1&1&0&0&P_{\{16,25,34\}}\\
5&0&0&1&1&1&0&0&P_{\{15,26,34\}}
\end{array}
\end{equation}
All five functions are matching products. Lemma~\ref{en:lem:c1-matching-merge} verifies all
their shortloops omitted from the table as well, so the arity-six case is exhaustive.

\subparagraph*{Arity eight: fixing four variable pairs with a maximum coefficient}

At arity eight we need not list all $35$ independent function values as in the arity-six
calculation. Instead, first determine every matching and coefficient, and then recover the
function by the uniqueness rule above.

Shortloops on disjoint index pairs $e,f$ commute. Comparing
$\partial^f\partial^e Q=\partial^e\partial^f Q$ using Lemma~\ref{en:lem:c1-matching-merge}
gives $\theta_e/\theta_f\in\{1/2,1,2\}$. If $e,f$ intersect, choose a pair $g$ outside their
union and compare each with $g$. Hence the ratio of any two $\theta_e$ is positive real.
Choose one of these $28$ nonzero coefficients with maximum absolute value. Scale globally
and relabel ports so that
\[
 \theta_{12}=1,\qquad M_{12}=\{34,56,78\},\qquad
 0<\theta_e\le1\quad\text{for every }e.
\]
Compare the two orders of shortloop at $12$ and $34$. One side has coefficient $2$;
the other has coefficient $\theta_{34}$ or $2\theta_{34}$.
Maximality forces $\theta_{34}=1$ and $12\in M_{34}$, whence
$M_{34}=\{12,56,78\}$. The same argument applies to $56,78$.
Writing $M_0=\{12,34,56,78\}$ gives
\begin{equation}\label{en:eq:c1-finite-four-edges}
 \partial^r Q=P_{M_0\setminus\{r\}}
 \qquad(r\in M_0).
\end{equation}
The maximum is always taken over the $28$ shortloops of this same function $Q$;
we neither change the function nor invoke a minimum-counterexample hypothesis.

Call the four pairs blocks $A=(1,2),B=(3,4),C=(5,6),D=(7,8)$.
Consider the pair $13$ crossing two blocks. Compare it separately with the pairs $56,78$.
The double shortloops on the side supplied by \eqref{en:eq:c1-finite-four-edges} both have
coefficient $1$. Thus $56,78$ must either both belong to $M_{13}$ or both be absent:
if exactly one belonged, the comparisons would require both $2\theta_{13}=1$ and
$\theta_{13}=1$. If both belong, the remaining edge must be $24$.
If both are absent, we still have $24\in M_{13}$. Otherwise, creating $24$ by shortloop
at $56$ would require the partners of $2,4$ to be $5,6$, whereas creating the same edge by
shortloop at $78$ would require their partners to be $7,8$, a contradiction.
There are consequently only three possibilities:
\begin{equation}\label{en:eq:c1-finite-three-choices}
\begin{array}{c|c}
 M_{13}&\theta_{13}\\ \hline
 \{24,56,78\}&1/2\\
 \{24,57,68\}&1\\
 \{24,58,67\}&1
\end{array}
\end{equation}
The same reasoning applies to every pair crossing two blocks.
Call the first possibility the internal type: the other two blocks are each paired internally.
Call the latter two possibilities the crossing type: the other two blocks are joined by
one of their two cross-block perfect matchings.

\subparagraph*{Arity eight: the three block partitions must have the same type}

The four blocks have three partitions into two pairs:
$AB|CD,AC|BD,AD|BC$. Fix $AB|CD$, take $e$ crossing $A,B$, and take $f$ crossing $C,D$.
If $e$ has internal type, shortloop at $f$ does not remove an edge of $M_e$, so
$\partial^f\partial^e Q$ has coefficient $1/2$. If $f$ has crossing type, the coefficient
on the other side is $1$ or $2$, an impossibility.
Thus one internal-type $e$ forces every opposite-side $f$ to have internal type.
Comparing back forces every crossing pair on the first side to have internal type as well.
For each block partition, all eight cross-block pairs therefore have internal type, or all
have crossing type.

Different partitions cannot have different types. Compare $e=13$ and $f=25$.
In $M_{13}$, the partner of $2$ is always $4$, so $25$ is not a matching edge.
In $M_{25}$, the partner of $1$ is always $6$, so $13$ is not a matching edge either.
Commuting these shortloops therefore gives $\theta_{13}=\theta_{25}$ directly, and
$AB|CD$ and $AC|BD$ have the same type.
Using $e=13,f=27$ likewise shows that $AB|CD$ and $AD|BC$ have the same type.
Consequently all three partitions have internal type, or all have crossing type.

If all have internal type, \eqref{en:eq:c1-finite-four-edges} and
\eqref{en:eq:c1-finite-three-choices} specify all $28$ shortloops. They are exactly the
shortloops of $\frac12P_{M_0}$. Uniqueness for EO functions gives
\[
 Q=\tfrac12P_{\{12,34,56,78\}}.
\]

\subparagraph*{Arity eight: only two orientation patterns remain in the crossing type}

Assume every cross-block pair has crossing type, so all its coefficients are $1$.
Label the two ports in each block by $0,1$ in their established order.
The direction of an edge joining two blocks is the XOR $p\in\{0,1\}$ of its port labels.
Each direction consists of exactly two edges, which together form a perfect matching of
the two blocks.

Fix $AB|CD$. For every $e$ crossing $AB$ and every $f$ crossing $CD$, commute the
shortloops and compare their coefficients, which are $1$ or $2$. This gives
\begin{equation}\label{en:eq:c1-finite-incidence}
 f\in M_e\quad\Longleftrightarrow\quad e\in M_f.
\end{equation}
Two $AB$ edges $e,e'$ of the same direction occur together, or are both absent, in every
$M_f$. By \eqref{en:eq:c1-finite-incidence}, the matchings $M_e,M_{e'}$ therefore choose the
same direction on $CD$. This defines a direction map $\phi:\{0,1\}\to\{0,1\}$.
It cannot be constant. Otherwise, choose an edge $f$ of its selected direction; the same
equivalence would force $M_f$ to contain all four $AB$ edges, a contradiction.
Thus $\phi$ is a bijection and has the form $p\mapsto p\oplus s_1$.
The other two block partitions similarly give $s_2,s_3\in\{0,1\}$.
Equation~\eqref{en:eq:c1-finite-incidence} also determines each reverse map, leaving no further
choices.

Finally compare two specific pairs of double shortloops to determine the three bits.
For $13,25$, their matchings and the resulting double-shortloop matchings are
\[
\begin{array}{c|c|c}
 &\text{original matching}&\text{after shorting the other pair}\\ \hline
 s_1=0&M_{13}=\{24,57,68\}&\{47,68\}\\
 s_1=1&M_{13}=\{24,58,67\}&\{48,67\}\\ \hline
 s_2=0&M_{25}=\{16,38,47\}&\{47,68\}\\
 s_2=1&M_{25}=\{16,37,48\}&\{48,67\}
\end{array}
\]
Hence $s_1=s_2$. Comparing $13,27$ gives
\[
\begin{array}{c|c|c}
 &\text{original matching}&\text{after shorting the other pair}\\ \hline
 s_1=0&M_{13}=\{24,57,68\}&\{45,68\}\\
 s_1=1&M_{13}=\{24,58,67\}&\{46,58\}\\ \hline
 s_3=0&M_{27}=\{18,36,45\}&\{45,68\}\\
 s_3=1&M_{27}=\{18,35,46\}&\{46,58\}
\end{array}
\]
Thus $s_1=s_3$. Only two data sets for all shortloop matchings and coefficients remain:
$s_1=s_2=s_3=0$ or $s_1=s_2=s_3=1$.
By uniqueness, each data set corresponds to at most one EO function.

\subparagraph*{Direct verification of the arity-eight exception and completion}

We must still show that these two data sets arise from functions, rather than merely from
formally compatible matchings. Label $x_1,\ldots,x_8$ by the eight points
\[
 000,001,010,011,100,101,110,111
\]
of $\mathbb F_2^3$, in this order. The four check vectors in \eqref{en:eq:c1-f8} are
\[
11110000,\quad00001111,\quad11001100,\quad10101010.
\]
They span the truth-table space of the constant function and the three coordinate functions.
They are linearly independent and mutually orthogonal, so their common nullspace is this
same four-dimensional space. Every nonconstant affine function
$z\mapsto\alpha\cdot z+\beta$, with $\alpha\ne0$, has weight $4$; the two constant functions
have weights $0,8$. Thus the fourteen support words of $F'_8$ are precisely the truth tables
of the fourteen nonconstant affine functions.

Fix distinct points $u,v$ and let $\delta=u+v$.
Taking different values at $u,v$ is equivalent to $\alpha\cdot\delta=1$, giving four choices
of $\alpha$ and two choices of $\beta$. The remaining six points form three pairs under
$z\leftrightarrow z+\delta$, with opposite function values on each pair.
The restrictions of these eight affine functions to the six points are distinct.
Indeed, if two restrictions were equal, their difference would be an affine function vanishing
on six points; a nonzero affine function has weight $4$ or $8$ and cannot be supported only
on the remaining two points. The three disequality constraints have exactly eight assignments,
so each assignment corresponds to exactly one internal assignment. Therefore
\begin{equation}\label{en:eq:c1-finite-f8-merges}
 \partial^{uv}F'_8
 =\prod_{\substack{\{z,z+\delta\}\subseteq
       \mathbb F_2^3\setminus\{u,v\}\\\text{each pair counted once}}}
 [x_z\ne x_{z+\delta}].
\end{equation}
This verifies all $28$ shortloops, not merely representative index pairs.
In particular, $M_{12}=\{34,56,78\}$ and $M_{13}=\{24,57,68\}$, so $F'_8$ corresponds
exactly to the data with all three direction bits equal to $0$.
After exchanging $x_7,x_8$, the data for the four within-block edges are unchanged, while
$M_{13}$ becomes $\{24,58,67\}$. This is the data with all three bits equal to $1$.
Thus, after normalization by a maximum shortloop coefficient and relabeling according
to its matching, the arity-eight possibilities are precisely
\begin{equation}\label{en:eq:c1-finite-eight-points}
 \tfrac12P_{\{12,34,56,78\}},\qquad F'_8,\qquad (78)F'_8.
\end{equation}
The last two differ by a variable permutation and form the single exceptional class in
the theorem.

\begin{proof}[Proof of Theorem~\ref{en:thm:c1-eo-matching-reduction}]
At arity six, the exhaustive branching calculation from \eqref{en:eq:c1-finite-six-rows}
gives \eqref{en:eq:c1-finite-six-points}.
At arity eight, a maximum shortloop coefficient fixes four blocks. The three cross-block
choices, uniformity of their types, and compatibility of their direction bits then give
\eqref{en:eq:c1-finite-eight-points}. All candidates are directly verified by the matching
shortloop rule and \eqref{en:eq:c1-finite-f8-merges}.
At arity at least ten, apply Lemma~\ref{en:lem:c1-large-factor}.
Undo the nonzero scalings and variable permutations in the finite base cases.
These cases exhaust every even arity at least six.
\end{proof}

\subparagraph*{Two copies of the shaped function force global realnumberizing}

\begin{lemma}\label{en:lem:c1-double-F}
Let $F=Q+a(\ket{0^m}+\ket{1^m})$ be the function from
Lemma~\ref{en:lem:c1-shape-generator}. If $Q$ satisfies all hypotheses of
Theorem~\ref{en:thm:c1-eo-matching-reduction}, then $m=8$, after a port permutation
$Q=\lambda F'_8$, and $a=\pm\lambda\in\mathbb R^*$.
\end{lemma}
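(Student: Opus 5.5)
The plan is to combine Theorem~\ref{en:thm:c1-eo-matching-reduction} with a two-copy gadget. The gadget lets the endpoint term $a(\ket{0^m}+\ket{1^m})$ survive, whereas single-copy $N$ shortloops always annihilate it. By Theorem~\ref{en:thm:c1-eo-matching-reduction}, $Q$ is either $\lambda P_M$ for a perfect matching $M$ of the $m$ ports, or $m=8$ and $Q=\lambda F'_8$ up to a port permutation. Since $Q$ is rational-valued, $\lambda\in\mathbb Q^\times$. It therefore suffices to rule out the matching case and, in the exceptional case, to show $a^2=\lambda^2$. Then $a=\pm\lambda\in\mathbb R^*$, and $m=8$ is automatic.

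The gadget is as follows. Take two copies of $F$ and choose two ports $i,j$. Join every other port $k$ of the first copy to port $k$ of the second copy by an $N$ edge. Keep $x_i,x_j$ from the first copy and $y_i,y_j$ from the second as external ports. Expanding $F=Q+a(\ket{0^m}+\ket{1^m})$ gives four kinds of terms.
\begin{itemize}
\item The endpoint--endpoint term. The $N$ edges pair $0^m$ with $1^m$, so this term contributes exactly $a^2(\ket{00\,11}+\ket{11\,00})$ in the order $(x_i,x_j,y_i,y_j)$.
\item The two cross terms. These would require $Q$ to be nonzero with $m-2$ ports fixed to a constant. This is impossible because $Q$ is EO and $m\ge6$.
\item The $Q$--$Q$ term. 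This gives an EO quaternary function $R_{ij}$.
\end{itemize}
Hence the gadget realizes $H_{ij}=R_{ij}+a^2(\ket{0011}+\ket{1100})$. Since $a\neq0$, we have $H_{ij}\ne0$, so \ref{en:item:c1-quaternary} forces $H_{ij}=\mu N\otimes N$ under some pairing. The points $0011$ and $1100$ lie in the support, which excludes the pairing $\{x_i x_j\},\{y_i y_j\}$. Consequently $H_{ij}$ is one of
\[
 \mu(\ket{0011}+\ket{1100}+\ket{0110}+\ket{1001})
 \quad\text{or}\quad
 \mu(\ket{0011}+\ket{1100}+\ket{0101}+\ket{1010}).
\]
In either case, exactly one of the two complementary pairs $\{0110,1001\}$ and $\{0101,1010\}$ is supported, with value $a^2+R_{ij}(0011)$.

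In the matching case, I choose $\{i,j\}\in M$. Then $R_{ij}$ inherits the factor $[x_i\ne x_j][y_i\ne y_j]$, so $R_{ij}(0011)=0$. The remaining pairs of $M$ are closed against themselves through the $N$ edges. Each pair closes into a cycle, so this closing contributes a nonzero power of $2$ times $\lambda^2$, and it leaves no constraint between $x$ and $y$. Hence $R_{ij}$ is a nonzero multiple of $\ket{01}+\ket{10}$ on each side, that is, nonzero on all four points $0101,0110,1001,1010$. Together with the endpoint term, $H_{ij}$ has six support points, which is a contradiction. This step is a short explicit computation, and I expect it to go through cleanly.

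In the exceptional case $Q=\lambda F'_8$, the main work is an explicit calculation of $R_{ij}$ using the affine-function description of $\supp F'_8$ from \eqref{en:eq:c1-finite-f8-merges}. Support words are truth tables of nonconstant affine functions on $\mathbb F_2^3$, and joining all ports but $i,j$ by $N$ forces the second copy to agree with the complement of the first on those six points. Using this, I would count the pairs of affine functions realizing each external pattern. This should give $R_{ij}(0011)=c_0\lambda^2$, a nonzero value on one complementary pair $c_1\lambda^2$, and zero on the other. Matching $c_1\lambda^2=a^2+c_0\lambda^2$ then yields $a^2=\lambda^2$ once the counts are done correctly; I expect the relevant constants to work out to exactly that. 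This counting, and checking that it is consistent for every choice of $\{i,j\}$ (by the symmetry of the affine structure it suffices to treat one pair), is the main obstacle. It is where I would be most careful, possibly by directly enumerating the fourteen support words for a representative pair such as $\{1,2\}$. Rationality of $\lambda$ then gives $a=\pm\lambda\in\mathbb R^*$.
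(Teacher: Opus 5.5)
Your route is the paper's. You use the same two-copy gadget: keep one pair of ports in each copy and join the rest by $N$ edges. Your matching-case computation with $\{i,j\}\in M$ is exactly the paper's: the complementary-pair values are $(a^2,2^{d-1}\lambda^2,2^{d-1}\lambda^2)$, so the support has six points. The $F'_8$ case rests on the same affine-function count. Your uniform argument that the mixed terms vanish, because $Q$ is EO and $m\ge6$, is correct and slightly cleaner than the paper's case-by-case remark.

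Two things need fixing in the exceptional case.

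\textbf{The missing count.} You leave open the count that is the heart of this case, and it is short. Let $u,v$ be the points of $\mathbb F_2^3$ labelling ports $i,j$.
\begin{enumerate}
\item Suppose $g=f+1$ on the six joined points. Then $f+g+1$ is affine and vanishes on six points. Nonzero affine functions have weight $4$ or $8$, so $g=f+1$ everywhere.
\item Hence the $Q$--$Q$ term is supported only on inputs with $y_i=\bar x_i$ and $y_j=\bar x_j$.
\item Among the four affine $f$ with prescribed $(f(u),f(v))$, exactly one is constant when the two values agree, and none is constant when they differ.
\end{enumerate}
This gives
\[
 R_{ij}=3\lambda^2(\ket{0011}+\ket{1100})+4\lambda^2(\ket{0110}+\ket{1001}),
\]
so $c_0=3$ and $c_1=4$. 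One pair $\{i,j\}$ suffices; no consistency check over all pairs is needed.

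\textbf{The support step.} The inference that $a\ne0$ gives $H_{ij}\ne0$ and $0011\in\supp H_{ij}$ is not valid as stated. The value $R_{ij}(0011)=3\lambda^2$ could cancel $a^2$, since $a^2=-3\lambda^2$ is not excluded a priori. Argue instead as the paper does.
\begin{itemize}
\item The complementary-pair values of $H_{ij}$ are $(a^2+3\lambda^2,\,0,\,4\lambda^2)$.
\item The third entry is nonzero, so $H_{ij}\ne0$.
\item A nonzero $\mu N\otimes N$ has exactly one zero pair value and two equal nonzero ones.
\item Therefore $a^2+3\lambda^2=4\lambda^2$, which gives $a^2=\lambda^2$.
\end{itemize}
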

\begin{proof}
First suppose $m=2d$ and $Q=\lambda P_M$.
Take two copies of $F$, retain the two ports of the same edge of $M$ in each copy, and join
all other corresponding ports by $N$ edges. The $Q\times Q$ contribution has $d-1$ closed
cycles and equals $2^{d-1}\lambda^2N_{12}N_{34}$.
Among products of endpoint terms, only complementary assignments pass through the joining
edges, contributing $a^2D_4$. Mixed terms vanish because an internal matching requires
opposite bits whereas an endpoint term is constant on all ports.
The three complementary EO values of the resulting quaternary function are
\[
 (a^2,2^{d-1}\lambda^2,2^{d-1}\lambda^2).
\]
They are all nonzero, so the function is not a nonzero $N$ matching product, contradicting
\ref{en:item:c1-quaternary}.

Only $Q=\lambda F'_8$ remains. Use the eight-point labeling by $\mathbb F_2^3$.
In two copies of $F$, retain the same two points $u,v$ and join the six other equally labeled
points by $N$ edges. Two affine functions complementary on six points are complementary
on all eight. For prescribed values at $u,v$, there are four affine functions.
If the two values agree, exactly one is constant; if they differ, all four are nonconstant.
Thus $Q\times Q$ contributes $4\lambda^2N_{13}N_{24}-\lambda^2D_4$.
Mixed terms vanish because a nonconstant affine function cannot take the same constant
value at six points. Including products of endpoint terms gives the quaternary function
\[
 4\lambda^2N_{13}N_{24}+(a^2-\lambda^2)D_4.
\]
Its three complementary EO values are $(a^2+3\lambda^2,0,4\lambda^2)$.
Condition~\ref{en:item:c1-quaternary} forces the first and third entries to agree, so
$a^2=\lambda^2$. Since $Q$ is nonzero and rational-valued, $\lambda$ is a nonzero rational
number. Hence $a=\pm\lambda$ is real.
\end{proof}

\begin{proof}[Proof of Theorem~\ref{en:thm:c1-strict-dichotomy}]
Apply Lemma~\ref{en:lem:c1-active-single}: either handle the identically zero branch, or reduce
by Turing equivalence to a single function. Send the common-one-sided branch to the general
EO dichotomy. In every remaining branch, common-phase normalization and imbalance absorption
give an active, complement-symmetric function $H$ whose tensor powers have rational EO layers.
Lemma~\ref{en:lem:c1-phase-character} selects the positive generator $g$ of the subgroup generated
by all support imbalances, and Lemma~\ref{en:lem:c1-shape-generator} produces $F$.
Apply ARS--EO simultaneously to all $\partial_{ij}F$. If the result is hard, actual gadget
expansion transfers hardness to the original problem. Otherwise, Lemma~\ref{en:lem:c1-cleanup}
makes every nonzero shortloop a matching product.
The structural theorem and Lemma~\ref{en:lem:c1-double-F} then give a real endpoint value $a$.
Its phase class is $\chi(g)$, so $\chi(g)=\mathbb R^*$.
The phase homomorphism is therefore trivial everywhere, and the entire function $H$ is real-valued.

Return to ordinary equality-edge coordinates using \eqref{en:eq:K-basis}.
For an output $y$,
\[
 (K^{\otimes m}H)(y)
 =2^{-m/2}i^{|y|}\sum_x(-1)^{y\cdot x}H(x).
\]
Complement symmetry makes the real sum on the right zero when $|y|$ is odd.
When the weight is even, $i^{|y|}$ is real. Thus the entire transformed single function is real-valued.
The Real-Holant dichotomy~\cite{en:shao2020real} classifies this problem, and the preceding
Turing equivalences transfer its conclusion to the original $\Gamma$.
\end{proof}

\paragraph{Lower case}\label{en:subsec:c1-lower}\label{en:sec:c1-quaternary}
Take an indecomposable quaternary function. First determine its form using binary closure, then distinguish cases according to whether its two endpoints are nonzero.
When both endpoints are nonzero, use the existing quaternary dichotomy; when only one is nonzero, obtain pins by interpolation; when neither is nonzero, proceed according to the number of centers in the EO layer.

\subparagraph*{A dichotomy for indecomposable quaternary functions}

We discuss the lower case of $C_1$ in the native $N$-edge model.
Let $\Gamma$ be a finite set of algebraic complex-valued functions of even arity. Under the framework convention above,
the available nonzero binary functions are exactly $\lambda N$, where $\lambda\ne0$.
Write $\delta_x$ for the function taking value $1$ only on input $x$.

\begin{proposition}[The dichotomy for the lower case of $C_1$]\label{en:prop:c1-quaternary-interface}
Under the binary closure condition above, if an indecomposable quaternary function is available, then
$\Holant(N\mid\Gamma)$ is in $\FP$ or is $\sharpP$-hard.
\end{proposition}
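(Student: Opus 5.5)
The plan follows the three-way split announced at the start of this paragraph, organized around the values of an indecomposable quaternary function $f$ at its two endpoints $0000,1111$.

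\textbf{Step 1: normal form of $f$.} First I will pin down the form of $f$ using only the binary closure condition, namely that every available nonzero binary function is $\lambda N$. In the native $N$-edge model a shortloop of $f$ on ports $i,j$ is $\partial^{ij}f$, and it must be zero or a multiple of $N$. Hence every binary slice obtained this way has zero diagonal, and its two antidiagonal entries are equal. Writing these conditions out for all six pairs, with the sixteen values of $f$ as unknowns, should give the following. The weight-one and weight-three values satisfy linear relations forcing them to vanish or to be tied to the endpoint values. The weight-two (EO) layer is complement symmetric, as in Lemma~\ref{en:lem:c1-eo-symmetry}. The same odd-arity argument as in Lemma~\ref{en:lem:klein-upper-basic} excludes nonzero odd-weight parts that would produce unary functions. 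I expect this to leave the form $f=a\ket{0000}+b\ket{1111}+f_{\mathrm{EO}}$ with $f_{\mathrm{EO}}$ complement symmetric, where indecomposability says $f$ is not $\lambda N\otimes N$ under any pairing.

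\textbf{Step 2: both endpoints nonzero.} When $ab\ne0$, I will apply the diagonal transformation $D$ of Lemma~\ref{en:lem:c1-absorb-charge} (which satisfies $DND=N$) to equalize $a$ and $b$ up to sign. I then return to equality edges via $K$. The resulting single quaternary function is classified by the existing dichotomy for a single quaternary function together with $N$, or directly by the binary-disequality dichotomy~\cite{en:liu2026disequality}, since $N$ is an available vertex function here. In the hard branch, adding $f$ only enlarges the problem, so hardness transfers. In the tractable branch, I must show that the \emph{whole} set $\Gamma$ is tractable. I expect this to reduce to invoking the $N$-dichotomy on $\Gamma\cup\{N\}$, because $N$ is realizable: $\Holant(N\mid\Gamma)$ with $N$ as a vertex function is the same problem.

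Indeed, this last observation suggests the cleanest global route for the whole proposition. In ordinary coordinates $N$ is realized, so $\Holant(N\mid\Gamma)\equiv_{\mathrm T}\Holant(\Gamma'\cup\{N\})$ after the $K$-transformation, but only if $K^{-1}$ carries the realized $N$-edge kernel to an $N$ vertex. It does not in general: the vertex $N$ in the native model corresponds to $I$ in ordinary coordinates. So the endpoint analysis is genuinely needed, and I keep the split.

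\textbf{Step 3: exactly one endpoint nonzero.} When, say, $a\ne0=b$, I will build a chain gadget by connecting copies of $f$ along two ports each. Its net-function lives in the span of $\ket{0\cdots0}$ and of EO-type terms, with eigenvalues of different moduli. Polynomial interpolation in the chain length then gives a nonzero function supported on an all-zero input, and closing further produces a pinning function $\delta_0$ or $\delta_{00}$. Once pins are available, odd-arity factors appear, and Theorem~\ref{en:thm:odd-arity-quasi} gives the dichotomy.

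\textbf{Step 4: both endpoints zero.} Now $f$ is a complement-symmetric EO function with values $(u,v,w)$ on the three complementary pairs. Its shortloops are $(v+w)N$, $(u+w)N$ and $(u+v)N$. Indecomposability means it is not of the form $(0,t,t)$ up to permutation. I will count \emph{centers}, meaning the pairs on which $f$ is nonzero. With three centers, or with two centers carrying unequal weights, I will rescale to real values using the common-phase argument of Lemma~\ref{en:lem:c1-common-phase}, return to equality edges, and invoke Theorem~\ref{en:thm:real-holant}. Alternatively, I feed $f$ into the ARS--EO theorem~\cite[Theorem 3.1]{en:cai2020arseo} together with the EO layers of the other functions.

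\textbf{Main obstacle.} The main obstacle is Step 3. The chain interpolation needs a gap in eigenvalue moduli, or a non-root-of-unity ratio, and the degenerate case of equal moduli must instead be handled by a separate gadget, for instance by combining two differently oriented copies of $f$. I would try the $2\times2$ transfer matrix of the chain restricted to the span of $\{\ket{00},\ket{11},\ket{01}+\ket{10}\}$ first.
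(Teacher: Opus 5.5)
Your endpoint/center case split and Step~1 agree with the paper (Lemma~\ref{en:lem:c1-general-q4-shape}). The diagonal entries of the $N$-shortloops give $u_i+u_j=0$, so the odd-weight values vanish outright and the endpoints are unconstrained.

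\textbf{Gap in Steps~2 and~4: no set-level classification.} In both steps you have no way to classify the \emph{whole} set $\Gamma$.
\begin{itemize}
\item A dichotomy for $\Holant(N\mid\{f\})$ helps only on its hard side. On its tractable side it says nothing about $\Gamma$, so ``showing $\Gamma$ tractable'' is not the right goal.
\item You correctly notice that the native vertex $N$ is the image of $I$, so the binary-disequality theorem cannot be invoked. But you then leave the two-endpoint branch open. The paper closes it with the Quaternary $K$-Holant dichotomy of~\cite{en:liu2026disequality} (Theorem~\ref{en:thm:cyc-external-q4}). In the native $N$-edge model its hypotheses are exactly even support, $q_{0000}q_{1111}\ne0$ and indecomposability. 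It classifies all of $\Gamma$ without needing $I$ as a vertex, and no $D$-rescaling is required.
\item In Step~4 the Real-Holant route is unavailable. Lemma~\ref{en:lem:c1-common-phase} rests on Lemma~\ref{en:lem:c1-active-single}, hence on \ref{en:item:c1-quaternary}, which is precisely what fails in the lower case. Nothing makes $f$ real up to scaling, let alone the rest of $\Gamma$. ARS--EO likewise needs the whole set to consist of ARS EO functions.
\end{itemize}
The paper resolves the pure-EO case (Step~4) by counting centers:
\begin{itemize}
\item \emph{Three nonzero centers.} The support has size six, not a power of two, and every complementary pair is nonzero. So $\Holant(N\mid\{f\})$ is already $\sharpP$-hard by the six-vertex dichotomy~\cite{en:cai2018sixvertex}.
\item \emph{One center.} Your plan omits this case entirely. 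It is a complementary two-point function, handled by Lemma~\ref{en:lem:cyc-complement-carrier}. The one-pair reduction and single-occurrence replication of~\cite{en:liu2026disequality} yield $(\delta_0\otimes\delta_1)^{\otimes2}$, after which decomposition and the odd-arity dichotomy apply.
\item \emph{Two centers.} Write $f=aA+bB$ with $A=\delta_{0101}+\delta_{1010}$ and $B=\delta_{0110}+\delta_{1001}$. Here $a=-b$ gives $Y\otimes Y$, which is also decomposable and missing from your list. For $a\ne\pm b$ the two-wire chain gives $q_m=a^mA+b^mB$. If $r=a/b$ is not a root of unity, $A$ is interpolated directly. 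Otherwise one first passes to the two-copy gadget, whose center ratio $(r+r^{-1})/2$ is a nonzero real in $(-1,1)$; for $r=\pm i$ this gadget is directly a complementary two-point function. The carrier lemma then finishes.
\end{itemize}

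\textbf{Gap in Step~3: the interpolation mechanism.} Take $t=0$ and order the ports so that a nonzero center $a$ occupies the $(00,11)$ entry of the $12\mid34$ flattening. This is possible because $f$ is not a bare pin. Then the chain transfer matrix $QJ$, with $J=N\otimes N$, restricted to $\{00,11\}$ is the Jordan block $\left(\begin{smallmatrix}a&s\\0&a\end{smallmatrix}\right)$.
\begin{itemize}
\item The endpoint never carries an eigenvalue of its own, so there is no modulus gap. The ``degenerate'' equal-modulus case you reserve for a separate gadget is in fact the only case.
\item Interpolation works because of the nilpotent part. With $C=\delta_{0011}+\delta_{1100}$,
$q_m=m\frac{s}{a}a^m\delta_{0000}+a^mC+(c+b)^m\frac{A+B}{2}+(c-b)^m\frac{B-A}{2}$.
\item Substituting $q_m$ for all $M$ pins gives $Z_m=\sum_\mu P_\mu(m)\mu^m$ with $[m^M]P_{a^M}=(s/a)^MZ_{\mathrm{target}}$. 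A confluent Vandermonde system recovers this coefficient even under root-of-unity coincidences (Lemma~\ref{en:lem:c1-general-jordan}).
\item Closing $\delta_{0000}$ with $N$ edges gives zero, so a unary pin does not come from ``closing further.'' It comes from applying Theorem~\ref{en:thm:decomposition} to the quaternary pin, after which Theorem~\ref{en:thm:odd-arity-fp} finishes.
\end{itemize}
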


\subparagraph*{The quaternary form forced by binary closure}

\begin{lemma}\label{en:lem:c1-general-q4-shape}
Every ordinary quaternary gadget has the form
\begin{equation}\label{en:eq:c1-general-q4-shape}
 q=s\delta_{0000}+t\delta_{1111}
   +a(\delta_{0011}+\delta_{1100})
   +b(\delta_{0101}+\delta_{1010})
   +c(\delta_{0110}+\delta_{1001}).
\end{equation}
\end{lemma}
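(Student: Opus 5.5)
The plan is to read off every constraint on $q$ from its two-port $N$ shortloops, using only the binary closure condition that every available nonzero binary function is $\lambda N$. Closing two ports of an ordinary quaternary gadget by an $N$ edge is again an ordinary gadget, so each shortloop is available. For each pair $\{i,j\}$ of ports I would write $\partial^{ij}q$ as in the $C_1$ upper-case section. Each such binary function is therefore either zero or a nonzero multiple of $N$. In both cases its two diagonal entries vanish, and its two off-diagonal entries are equal.

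The first step handles the odd-weight inputs; it uses only the vanishing diagonals. Let $i,j$ be the shorted ports and fix the other two ports to $00$. The $(0,0)$ entry of $\partial^{ij}q$ is then $q(e_i)+q(e_j)$, where $e_i$ denotes the weight-one word with its $1$ in position $i$. Hence $q(e_i)+q(e_j)=0$ for all $i\ne j$. Take any three distinct indices $i,j,k$ and combine the three relations exactly as in Lemma~\ref{en:lem:c1-zero-merges}. Adding two of them and subtracting the third gives $2q(e_i)=0$, so all weight-one values vanish. The same argument applied to the $(1,1)$ entries, with the other two ports fixed to $11$, shows that all weight-three values vanish. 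This leaves only the weights $0,2,4$. The values at $0000$ and $1111$ stay unconstrained; they are the free parameters $s,t$.

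The second step handles the six weight-two inputs. Here I would invoke Lemma~\ref{en:lem:c1-eo-symmetry}, which holds under the binary condition alone. It gives $q(\alpha)=q(\bar\alpha)$ on the EO layer. This pairs $0011$ with $1100$, $0101$ with $1010$, and $0110$ with $1001$, yielding the coefficients $a,b,c$ and hence exactly the displayed form \eqref{en:eq:c1-general-q4-shape}. If one prefers a direct check, the equal off-diagonal entries of each $\partial^{ij}q$ give relations of the form $q(\alpha)+q(\beta)=q(\bar\alpha)+q(\bar\beta)$. Combining three such relations forces each complementary pair to agree.

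The only point needing care is bookkeeping. The shortloop must be taken through the native $N$ edge, so it sums over $a\ne b$ at the two shorted ports, and the port order of the remaining two ports must be tracked. Since $N$ is symmetric, reversing that order does not affect either the diagonal or the off-diagonal conditions. I expect no real obstacle here: the lemma is essentially an immediate corollary of the binary closure condition together with Lemma~\ref{en:lem:c1-eo-symmetry}.
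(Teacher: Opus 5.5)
Your proposal is correct and matches the paper's proof. The vanishing diagonal entries of the $N$ shortloops kill the weight-one and weight-three coefficients through the three-index argument, and the equal antidiagonal entries from shorting $12$, $13$, $23$ force the three complementary-pair differences to vanish. Citing Lemma~\ref{en:lem:c1-eo-symmetry} for the second step is also legitimate: it needs only the binary condition, and its quaternary case is exactly this direct shortloop computation.
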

\begin{proof}
Short any two ports $i,j$ with an $N$ edge. Both diagonal entries of the resulting binary matrix must vanish.
Writing $u_i=q(e_i)$ for the four coefficients of weight one, we obtain $u_i+u_j=0$ for every $i\ne j$.
Any three indices imply that their coefficients vanish, and then the fourth vanishes as well.
The same argument handles the four coefficients of weight three.

Let the differences within the three complementary pairs of weight two be
\[
 d_a=q_{0011}-q_{1100},\quad
 d_b=q_{0101}-q_{1010},\quad
 d_c=q_{0110}-q_{1001}.
\]
After shorting $12$, $13$, and $23$, the two antidiagonal entries of the binary matrix are equal, giving
$d_b=d_c$, $d_a=d_c$, and $d_a=-d_b$, respectively.
Thus all three differences vanish, proving the stated form. The values $s,t$ do not enter these shortloops and impose no further constraints.
\end{proof}

\subparagraph*{The case of two nonzero endpoints}

\begin{lemma}\label{en:lem:c1-general-two-endpoints}
If $st\ne0$ in~\eqref{en:eq:c1-general-q4-shape}, then the whole function set has a decidable dichotomy.
\end{lemma}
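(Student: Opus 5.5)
The plan is to show that nonzero endpoints $s,t$ let us leave the lower case of $C_1$ and land in a situation already covered by an established dichotomy. We arrange $q$ from \eqref{en:eq:c1-general-q4-shape} as a $4\times4$ matrix $M_q$ from ports $(1,2)$ to ports $(3,4)$ and connect copies serially through $N$ edges. Lemma~\ref{en:lem:c1-general-q4-shape} makes this matrix block diagonal. The block on $\{00,11\}$, after absorbing the $N$ twists, is
\[
 B=\begin{pmatrix}s&a\\a&t\end{pmatrix}N\quad\text{or a fixed permutation of it}.
\]
The block on $\{01,10\}$ is built from $b,c$. A chain of $k$ copies of $q$ realizes $M_q^k$, so its two blocks are controlled by the eigenvalues of $B$ and of the $b,c$ block.

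\emph{Infinite-order case.} Suppose some eigenvalue ratio is not a root of unity, or $B$ has a nontrivial Jordan block. Standard polynomial interpolation on the chain then makes available a matrix in the algebra generated by $M_q$ that kills all but one eigenspace. Because $st\ne0$, the $\{00,11\}$ block is nontrivially involved, and the result is a quaternary function of the form $u\otimes v$ with $u$ a binary function not proportional to $N$, or a degenerate one. Turing availability of such a binary function, or of a rank-one one, puts us in the exits of Lemma~\ref{en:lem:framework-available-closure}. A rank-one binary function gives a nonzero unary factor, which is handled by Theorem~\ref{en:thm:odd-arity-quasi}. A new full-rank direction enlarges the binary group past $C_1$, and that larger group is handled by the other group cases, including the binary disequality dichotomy of~\cite{en:liu2026disequality} where Lemma~\ref{en:lem:involution-route} applies.

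\emph{Finite-order case.} All ratios are roots of unity, so some fixed power $M_q^k$ is diagonalizable with a single eigenvalue on each block, up to scalar. The chain then realizes a scalar multiple of a function $q'$ with specific endpoint and EO values. I expect to normalize $q'$ to something proportional to $\delta_{0000}+\delta_{1111}$ plus an EO part determined by the block. Gluing two copies along one port pair then realizes generalized equalities $=_{2k}$ in the $N$-edge model. With all even-arity equalities, Γ becomes a #CSP-type problem with the extra $N$. This is covered by the existing quaternary / #CSP dichotomy in the form with binary disequality~\cite{en:liu2026disequality}, whose criterion is decidable.

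The main obstacle is the finite-order branch. The difficulty is checking that the periodic powers genuinely give an equality-type function rather than collapsing back to a decomposable $N\otimes N$ shape. If that happened, it would have to be shown incompatible with indecomposability of $q$ together with $st\ne0$. I would first test this on the explicit $2\times2$ block $B$, using $st\ne0$ to exclude $B$ being proportional to an antidiagonal matrix. Decidability then follows because every branch condition is an algebraic test on the fixed entries $s,t,a,b,c$, and each terminal dichotomy has a decidable criterion.
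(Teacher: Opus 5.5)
Your proposal has a genuine gap. It misses the idea that makes this lemma immediate, and the interpolation route you use instead breaks down exactly where you expect it to.

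By Lemma~\ref{en:lem:c1-general-q4-shape}, $q$ has even support, and $q_{0000}q_{1111}=st\ne0$. If $q$ is tensor-indecomposable, these are precisely the hypotheses of the Quaternary $K$-Holant dichotomy of~\cite{en:liu2026disequality} (Theorem~\ref{en:thm:cyc-external-q4}). That theorem is stated in the native $N$-edge model, does not need $I$ as a vertex function, and already gives a decidable dichotomy for the whole set $\Gamma$.

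What remains is to rule out a decomposable $q$, which your proposal never addresses. With $st\ne0$, every factor is nonzero at its all-$0$ and all-$1$ inputs. A $1+3$ split would then create a nonzero odd-weight entry. In a $2+2$ split, an antidiagonal entry of one factor times the $00$ entry of the other would break even support. Hence both factors would be full-rank diagonal binary functions, available by the decomposition theorem and not proportional to $N$, contradicting the $C_1$ binary closure. That is the whole proof; no chains or eigenvalue case split are needed.

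Your finite-order branch cannot work as planned.
\begin{itemize}
\item In the native model a chain of $m$ copies realizes $(QJ)^{m-1}Q=(QJ)^mJ$ with $J=N\otimes N$, not $M_q^m$.
\item On each $2\times2$ block, Cayley--Hamilton gives $(QJ)^m=\alpha I+\beta QJ$. So every chain output is blockwise $\alpha J+\beta Q$, and its endpoints are $\beta s,\beta t$, always in the ratio $s:t$.
\item When $(QJ)^m$ is scalar on each block, which is your finite-order situation, $\beta=0$ because $QJ$ is not scalar on $\{00,11\}$ (as $s\ne0$). The output is then $\mu_1(\delta_{0011}+\delta_{1100})+\mu_2(\delta_{0110}+\delta_{1001})$, with both endpoints zero. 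One more copy merely rescales the two blocks of $q$.
\end{itemize}
So periodic chains never yield $\delta_{0000}+\delta_{1111}$ plus an EO part unless $s=t$. In $C_1$ every port action is trivial (a binary vertex $\lambda N$ behind an $N$ edge transfers as $I$), so you cannot rebalance $s$ and $t$ either. The step to generalized equalities is therefore unsupported.

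Even granting such equalities, the result you cite, the dichotomy for $\Holant(\mathcal H\cup\{N\})$, concerns $N$ as a vertex function in the equality-edge model. The native vertex $N$ of $C_1$ is $KNK^{\mathsf T}=I$ in ordinary coordinates, so that citation does not apply as stated.

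The infinite-order branch is also loose:
\begin{itemize}
\item Isolating one eigencomponent of the $4\times4$ transfer matrix needs more than a single non-root-of-unity ratio.
\item A singular block ($a^2=st$) fits none of your cases.
\item A merely Turing-available new binary direction does not change the framework's group $\mathcal B$. Sending it to ``another group case'' requires re-running the classification for the enlarged set, which you do not justify.
\end{itemize}
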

\begin{proof}
If $q$ has no nontrivial tensor-product decomposition, apply the
Quaternary $K$-Holant dichotomy of~\cite{en:liu2026disequality} directly.
Its hypotheses are precisely even support, two nonzero endpoints, and tensor-product indecomposability;
its conclusion gives the dichotomy for $\Holant(N\mid\Gamma)$.

If $q$ decomposes, the two nonzero endpoints force every factor to be nonzero on both its all-zero and all-one inputs.
A $1+3$ decomposition would also produce a nonzero input of odd weight, so it is impossible.
In a $2+2$ decomposition, both factors must be full-rank diagonal binary functions:
a nonzero antidiagonal entry of either factor, multiplied by the all-zero entry of the other, would violate even support.
By the decomposition lemma, both factors are available, contradicting the condition that all nonzero binary functions are multiples of $N$.
Thus the decomposable case does not occur in this branch.
\end{proof}

\subparagraph*{Uniform interpolation for a quaternary function with one endpoint}

\begin{lemma}[Jordan interpolation at a single endpoint]\label{en:lem:c1-general-jordan}
If exactly one endpoint in~\eqref{en:eq:c1-general-q4-shape} is nonzero, then instances with arbitrarily many occurrences
of the quaternary pin at that endpoint Turing-reduce to $\Holant(N\mid\Gamma)$.
Consequently, the whole function set has a decidable dichotomy.
\end{lemma}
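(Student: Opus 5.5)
Without loss of generality $s\ne0$ and $t=0$; the case $t\ne0,s=0$ is symmetric under complementing all bits. Complementation commutes with $N$ and preserves $N$ edges. The plan is to build a chain gadget from copies of $q$ whose signature is a linear combination of a few fixed quaternary functions with coefficients evolving under a fixed $2\times2$ or $3\times3$ transfer matrix, and to interpolate a pin from it.

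\textbf{The gadget.} Take $q$ with ports $1,2,3,4$ and join ports $3,4$ of one copy to ports $1,2$ of the next by two $N$ edges. The result of $k$ copies is again quaternary. By Lemma~\ref{en:lem:c1-general-q4-shape} it has the same shape
\[
 q_k=s_k\delta_{0000}+t_k\delta_{1111}+a_k(\delta_{0011}+\delta_{1100})+b_k(\cdots)+c_k(\cdots).
\]
Concretely, view $q$ as a $4\times4$ matrix $Q$ indexed by $(x_1x_2)$ and $(x_3x_4)$. Then $q_k$ corresponds to $Q(N\otimes N\,Q)^{k-1}$. Since $N\otimes N$ maps $00\leftrightarrow11$, the block on $\{00,11\}$ is
\[
 \begin{pmatrix}s&a\\a&0\end{pmatrix}.
\]
Multiplying by $N\otimes N$ gives the block $\begin{pmatrix}a&0\\ s&a\end{pmatrix}$ up to ordering, which is a Jordan block with eigenvalue $a$ and nilpotent part $s$.

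\textbf{Case $a\ne0$.} The powers behave like $a^k\bigl(I+k(s/a)E\bigr)$. Hence $s_k=\bigl(k\,s/a+\text{const}\bigr)a^{k}$ times fixed vectors, while the $\{01,10\}$ block contributes terms in $(b\pm c)^k$ for $b_k,c_k$. This is exactly the linear-growth (Jordan) pattern. With polynomially many $k$, the values of an instance with $n$ occurrences of $q_k$ form a polynomial in $k$ (after dividing by $a^{kn}$) together with exponential terms from the other block. Standard interpolation, solving a Vandermonde-type system in $k$, extracts the coefficient of the top power $k^n$. That coefficient is the instance with every occurrence replaced by $\delta_{0000}$ with weight $s$: the quaternary pin. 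The step I expect to be the main obstacle is separating this from the $\{01,10\}$ block eigenvalues $b+c$ and $b-c$ when these have the same modulus or ratio as $a$. For that I would first apply a self-loop or rotate ports, i.e.\ use the other two pairings $(13|24)$ and $(14|23)$ of the same chain, which exchange the roles of $a,b,c$. I would pick the pairing that makes the Jordan eigenvalue have no root-of-unity ratio with the others, or treat equal eigenvalues jointly. Even then, only the top-degree term $k^n$ arises solely from the nilpotent part, so a generic Vandermonde in $k$ with fixed exponential bases still isolates it.

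\textbf{Case $a=b=c=0$, and the remaining steps.} If $a=b=c=0$, then $q=s\delta_{0000}$ is already the pin, contradicting indecomposability. So some pairing has a nonzero middle coefficient, and the previous case applies.

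With arbitrarily many pins $\delta_{0000}$, connecting a pin to an $N$ edge gives $\delta_{1111}$ on the other side. Pinning pairs of ports of any function yields its slices. Decomposing $\delta_{0000}=\delta_0^{\otimes4}$ by Theorem~\ref{en:thm:decomposition} gives a nonzero unary function, so Theorem~\ref{en:thm:odd-arity-quasi} provides the decidable dichotomy for $\Gamma\cup\{\delta_{0000}\}$. This transfers to $\Holant(N\mid\Gamma)$ through the interpolation reduction, and the tractable side of the decomposition alternative is already in $\FP$.
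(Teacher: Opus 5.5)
Your proposal is correct and follows the paper's proof: the same two-wire $N$ chain $(QJ)^{m-1}Q$ with the Jordan block $\begin{pmatrix}a&s\\0&a\end{pmatrix}$ on $\{00,11\}$, then extraction of the $k^n$ coefficient at base $a^n$ by (confluent) Vandermonde interpolation, then decomposition plus the odd-arity dichotomy. The port rotations you contemplate are unnecessary, for exactly the reason you give at the end: only choosing the Jordan term at every position yields degree $n$, whatever coincidences occur among $a$ and $c\pm b$. Also, when $a=b=c=0$ you need not invoke indecomposability, since $s\delta_{0000}$ is then already the required gadget.
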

\begin{proof}
We first construct a sequence of quaternary chains whose endpoint term has an extra factor equal to the chain length,
and then isolate that term by interpolation.
Suppose first that $s\ne0,t=0$; the target pin is $E=\delta_{0000}$.
If $a=b=c=0$, it is already a nonzero multiple of an actual gadget.
Otherwise, permute the ports so that $a\ne0$. Flatten in the natural order $12\mid34$ and write
\[
 Q=\begin{pmatrix}
 s&0&0&a\\0&b&c&0\\0&c&b&0\\a&0&0&0
 \end{pmatrix},\qquad J=N\otimes N.
\]
Take $m\ge1$ copies of $q$, and connect the two column ports of each copy to the two row ports of the next copy using $N$ edges.
The resulting actual quaternary gadget $q_m$ has matrix
\begin{equation}\label{en:eq:c1-general-chain}
 Q_m=(QJ)^{m-1}Q=(QJ)^mJ.
\end{equation}
On the $00,11$ subspace, the matrix of $QJ$ is
$\left(\begin{smallmatrix}a&s\\0&a\end{smallmatrix}\right)$;
on the $01,10$ subspace, it is
$\left(\begin{smallmatrix}c&b\\b&c\end{smallmatrix}\right)$, which is always semisimple.
Write
\[
 A=\delta_{0101}+\delta_{1010},\quad
 B=\delta_{0110}+\delta_{1001},\quad
 C=\delta_{0011}+\delta_{1100}.
\]
Taking powers gives the exact formula
\begin{equation}\label{en:eq:c1-general-jordan-expansion}
 q_m=m\frac{s}{a}a^mE+a^mC
      +(c+b)^m\frac{A+B}{2}
      +(c-b)^m\frac{B-A}{2}.
\end{equation}
A term with zero base vanishes for every $m\ge1$ and can be deleted.

Consider an arbitrary target function-net containing $M\ge1$ vertices labeled $E$.
By associativity, separate these vertices from the remainder: the former form $E^{\otimes M}$,
and the latter form a fixed context.
Replace the former simultaneously by $q_m^{\otimes M}$, and denote the resulting value by $Z_m$.
The part that varies with $m$ supplies the coefficients of the equations, whereas the fixed context supplies the unknowns to be recovered.
Let $\Lambda$ be the set of distinct nonzero elements among $a,c+b,c-b$; its size is at most three.
After collecting equal bases, multilinear expansion gives
\begin{equation}\label{en:eq:c1-general-exp-poly}
 Z_m=\sum_{\mu\in\mathcal L_M}P_\mu(m)\mu^m,
 \qquad \deg P_\mu\le M,
\end{equation}
where $\mathcal L_M$ consists of the distinct nonzero products of $M$ elements chosen from $\Lambda$.
Its size is at most $\binom{M+2}{2}$.
Only the endpoint term in~\eqref{en:eq:c1-general-jordan-expansion} contains the factor $m$, so
\begin{equation}\label{en:eq:c1-general-leading-coeff}
 [m^M]P_{a^M}(m)=\left(\frac{s}{a}\right)^M Z_{\mathrm{target}}.
\end{equation}
Root-of-unity relations or coincident products among the spectral bases do not affect this equality:
degree $M$ requires choosing the endpoint term at all $M$ positions.

We verify explicitly that finitely many interpolation samples suffice.
Let $L=(M+1)|\mathcal L_M|=O(M^3)$. Using the samples $m=1,\ldots,L$,
solve for the unknown coefficients $[m^j]P_\mu$ with the matrix
\[
 V_{m,(\mu,j)}=m^j\mu^m,\qquad 0\le j\le M.
\]
If $(v_1,\ldots,v_L)$ were a row dependence, set $p(z)=\sum_{m=1}^L v_mz^m$.
The dependence says that $(z\,d/dz)^jp(\mu)=0$ for every $\mu\in\mathcal L_M$ and $0\le j\le M$.
Since $\mu\ne0$, these conditions are equivalent to $p^{(j)}(\mu)=0$:
the conversion between the two families of derivatives is triangular with nonzero diagonal.
Hence $p$ is divisible by the degree-$L$ polynomial
$\prod_{\mu\in\mathcal L_M}(z-\mu)^{M+1}$.
But $p(0)=0$, whereas this product has nonzero constant term, so $p=0$.
Therefore $V$ is invertible. Recovering the coefficient in~\eqref{en:eq:c1-general-leading-coeff} and dividing by
the known nonzero number $(s/a)^M$ gives the target value.
Each sample uses $mM=O(M^4)$ copies of a fixed gadget, and the total number of queries is $O(M^3)$.
All numbers lie in the fixed number field generated by the fixed algebraic coefficients;
exact comparison, exponentiation, and linear-system solving have bit complexity polynomial in the input size.
If $q$ has first been rescaled, compensate by the corresponding known power.

When $s=0,t\ne0$, the endpoint block in the same calculation is a lower triangular Jordan block and yields
$\delta_{1111}$ directly; no additional bit-flip function is required.

After obtaining the quaternary pin, apply the decomposition theorem~\ref{en:thm:decomposition}:
either the problem is already in $\FP$, or a nonzero unary factor $\delta_0$ or $\delta_1$ becomes available.
The odd-arity dichotomy~\ref{en:thm:odd-arity-fp} then completes the proof.
\end{proof}

\subparagraph*{Pure EO quaternary functions, by the number of centers}

\begin{lemma}\label{en:lem:c1-general-three-centers}
If $s=t=0$ and $abc\ne0$, then $\Holant(N\mid\Gamma)$ is $\sharpP$-hard.
\end{lemma}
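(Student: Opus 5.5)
With $s=t=0$ and $abc\ne0$, the quaternary function $q$ in~\eqref{en:eq:c1-general-q4-shape} is an EO function,
\[
 q=a(\delta_{0011}+\delta_{1100})+b(\delta_{0101}+\delta_{1010})+c(\delta_{0110}+\delta_{1001}),
\]
with all three complementary pairs ("centers") supported. The plan is to prove hardness already for the subproblem $\Holant(N\mid q)$. Since $q$ is an actual gadget, this transfers directly to $\Holant(N\mid\Gamma)$.

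The first step puts $q$ into a setting covered by a known EO or ARS--EO classification, namely the general EO dichotomy~\cite{en:meng2025eo} or~\cite[Theorem 3.1]{en:cai2020arseo}. The function $q$ is complement symmetric, so the only issue is the ARS condition, which requires real values. If $a,b,c$ have a common phase, I scale by Lemma~\ref{en:lem:per-signature-scaling} and apply ARS--EO. Otherwise I use the general EO dichotomy, whose tractable classes for a single quaternary EO function are explicit. In either case I then check that $q\notin\mathcal A\cup\mathcal P$, and likewise for the other tractable EO classes.

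Membership in $\mathcal P$ fails because every EO function in $\mathcal P$ is a product of disequalities and unary factors on a matching. Such a function has at most two centers when it is binary-by-binary, and otherwise has a nonzero endpoint. Affine membership fails because the support would have to be an affine subspace. The six weight-two words together with the absent $0000$ do not form one, since $0011+0101+0110=0000$ is not in the support. Thus hardness should follow directly, modulo whatever exceptional tractable forms the general EO theorem contains.

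The main obstacle is this last point. If the cited EO dichotomy admits further tractable families, such as those from an $\mathcal A_\alpha$-type or local-affine transformation, $q$ might fall into one of them for special ratios $a:b:c$. For those ratios I would build gadgets from $q$: for example, the chain $q_m$ of~\eqref{en:eq:c1-general-chain}, whose eigenvalues are $a$, $c+b$ and $c-b$. From these chains I would interpolate a quaternary EO function with unequal center ratios, or with a vanishing center. That would either leave the tractable family or contradict condition (C1) by producing a non-$N$ binary function after a shortloop. The shortloop of $q$ on ports $1,2$ gives $(a+b+c)$-type combinations, which is a useful constraint. I expect the special-ratio analysis to be the delicate part.
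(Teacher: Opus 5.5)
Your reduction to the single-function problem $\Holant(N\mid\{q\})$ is fine, and so is the common-phase branch. After scaling, $q$ is real and complement symmetric, hence ARS. A $6$-point support cannot be the support of a nonzero function in $\mathcal A$ or $\mathcal P$, since those supports have size a power of $2$. Your stated reason for $q\notin\mathcal P$ is muddled, because an EO function never has a nonzero endpoint, but the support-size count settles it.

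The gap is the other case, where $a,b,c$ are not proportional to real numbers. There ARS fails. You defer to an unspecified list of tractable classes in the general EO dichotomy, and then sketch a ``special-ratio'' gadget analysis that is never carried out. That sketch would not close the gap as written. By Lemma~\ref{en:lem:c1-general-q4-shape}, every $N$-shortloop of $q$ is automatically a multiple of $N$; for instance, shorting ports $1,2$ gives $(b+c)N$. So no contradiction with the binary closure condition can come from shortloops. The chains $q_m$ (with $s=0$) also remain EO functions of the same three-center form. You would therefore still need to know exactly which members of that family are tractable. Either you must state the tractable classes of the complex-weighted EO dichotomy you invoke and check $q$ against each one, or you need a more direct classification.

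The missing idea is that this is exactly a six-vertex model. Regarding each $N$ edge as a binary vertex, $\Holant(N\mid\{q\})$ is $\Holant(\neq_2\mid q)$ with $q$ supported on the six weight-two inputs. The Cai--Fu--Xia six-vertex dichotomy~\cite[Theorem~3.1]{en:cai2018sixvertex} classifies this problem for arbitrary complex weights, with no reality or ARS hypothesis. Its only tractable cases are $q\in\mathcal A$, $q\in\mathcal P$, or a zero in each of the three complementary pairs. The first two fail by the support-size count. The third fails because the pairs are $(a,a),(b,b),(c,c)$ with $abc\ne0$. Hence hardness holds for every ratio $a:b:c$, there is no special-ratio family to analyse, and the phase case split is unnecessary. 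Substituting the actual gadget $q$ then transfers hardness to $\Gamma$. This single citation is the paper's entire proof.
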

\begin{proof}
The six-vertex dichotomy of Cai--Fu--Xia~\cite[Theorem~3.1]{en:cai2018sixvertex}
applies to $\Holant(N\mid\{q\})$ on general graphs.
Its exceptions are $q\in\mathcal A$, $q\in\mathcal P$, or the condition that each of the three complementary pairs contains a zero entry.
Here the support of $q$ has size exactly six. A nonzero support in $\mathcal A$ is an affine space,
and a nonzero support in $\mathcal P$ is an independent product of blocks of two complementary points;
both have size a power of $2$.
Thus the first two exceptions are impossible, and the third also fails. The single-$q$ problem is therefore hard.
Substituting the actual gadget transfers this hardness to the whole set $\Gamma$.
\end{proof}

\begin{lemma}[A chain reduction for two centers]\label{en:lem:c1-general-two-centers}
Suppose that $s=t=0$ and exactly two centers are nonzero. After a port permutation, write
\[
 q=aA+bB,\qquad
 A=\delta_{0101}+\delta_{1010},\quad
 B=\delta_{0110}+\delta_{1001},\qquad ab\ne0.
\]
If $a=b$, then $q=aN_{12}N_{34}$; if $a=-b$, then $q=aY_{12}Y_{34}$.
In all other cases, instances with arbitrarily many occurrences of $A$ Turing-reduce to the original problem,
so the whole function set has a decidable dichotomy.
\end{lemma}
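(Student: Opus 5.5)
The plan has three parts: verify the two degenerate identities, build an $N$-edge chain of $q$ whose spectrum lets us interpolate $A$, and then use $A$ to obtain an equality function and finish by a known dichotomy.

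\textbf{Degenerate cases.} The two identities are direct entrywise checks. When $a=b$, the support of $q$ is exactly $\{x_1\ne x_2,\ x_3\ne x_4\}$ with constant value $a$, so $q=aN_{12}N_{34}$. When $a=-b$, the product $Y(x_1,x_2)Y(x_3,x_4)$ takes the value $+1$ on $0101,1010$ and $-1$ on $0110,1001$, so $q=aY_{12}Y_{34}$. (The second form would actually contradict the $C_1$ binary closure, since shortlooping one factor exposes $Y$; we only need to record it.)

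\textbf{Chain construction.} For $a\ne\pm b$ I would reuse the chain construction of Lemma~\ref{en:lem:c1-general-jordan}. Flatten $q$ as $12\mid34$ and form $q_m$ with matrix $(QJ)^{m-1}Q$, where $J=N\otimes N$. Here $Q$ vanishes off the $\{01,10\}$ block, and on that block it equals $\left(\begin{smallmatrix}a&b\\b&a\end{smallmatrix}\right)$; $J$ swaps $01\leftrightarrow10$. So $QJ$ acts there as $\left(\begin{smallmatrix}b&a\\a&b\end{smallmatrix}\right)$, with eigenvalues $b+a$ on $A+B$ and $b-a$ on $B-A$. This gives an exact formula
\[
 q_m=(b+a)^m\,\tfrac{A+B}{2}+\epsilon_m(b-a)^m\,\tfrac{B-A}{2},
\]
where $\epsilon_m=\pm1$ is a sign I will fix by the same bookkeeping as in \eqref{en:eq:c1-general-jordan-expansion}. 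Both bases are nonzero because $a\ne\pm b$.

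\textbf{Interpolation and finish.} Suppose first that $\lambda=(b+a)/(b-a)$ is not a root of unity. For a target net with $M$ copies of $A$, replace them by $q_m^{\otimes M}$ and expand. This gives $Z_m=\sum_{k=0}^M c_k\,\bigl((b+a)^k(b-a)^{M-k}\bigr)^m$, up to the known sign pattern, with $M+1$ distinct bases. The Vandermonde argument from Lemma~\ref{en:lem:c1-general-jordan} (the case $j=0$ only) recovers every $c_k$. The target value is $2^{-M}\sum_k(-1)^{M-k}c_k$, since $A=\tfrac12\bigl((A+B)-(B-A)\bigr)$, so arbitrarily many occurrences of $A$ Turing-reduce to the original problem. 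Once $A$ is available, attaching $N$ edges at ports $2$ and $4$ of $A$ yields the quaternary equality $=_4$. Together with $N$ this reaches the \#CSP / binary-disequality regime, and the dichotomy of~\cite{en:liu2026disequality} (or the \#CSP dichotomy after the usual transformation) finishes decidably.

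\textbf{Main obstacle.} The hard step is when $\lambda$ is a root of unity different from $\pm1$, i.e.\ $|a+b|=|a-b|$ with $a\ne\pm b$. Then the chain $q_m$ is periodic up to scalars and interpolation fails. My first attempt would be a different gadget: connect two copies of $q$ along the other flattenings, $13\mid24$ and $14\mid23$, through $N$ edges. On $\{A,B\}$ these realize different $2\times2$ mixing matrices. One product either has eigenvalue ratio not a root of unity, which reduces to the previous paragraph, or produces a nonzero third center $C$ or a nonzero endpoint. The latter reduces to Lemma~\ref{en:lem:c1-general-three-centers} or Lemma~\ref{en:lem:c1-general-jordan}. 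If every such gadget stays periodic, I would fall back on the six-vertex dichotomy~\cite[Theorem~3.1]{en:cai2018sixvertex} for $\{q\}$. It should show either hardness, or that $q$ is, up to scaling, in $\mathcal A$; in the tractable subcase I would then need an explicit gadget realizing $A$ from $q$ and its $N$-twisted rotations. Confirming that this last subcase really closes is the part I am least sure of.
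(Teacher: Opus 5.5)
Your degenerate identities and your generic interpolation are correct; the sign is $\epsilon_m=1$, as in \eqref{en:eq:c1-general-jordan-expansion}. The trouble is that the case you flag as the main obstacle stays genuinely open, and the missing idea is simply a different flattening.

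\textbf{The root-of-unity case.} Along $13\mid24$ one has $Q=\diag(a,b,b,a)J$ with $J=N\otimes N$. So $QJ=\diag(a,b,b,a)$, and the two-wire $N$-chain gives exactly $q_m=a^mA+b^mB$. There is no mixing at all, and the same holds for $14\mid23$; these flattenings do not ``realize different mixing matrices.'' Vandermonde interpolation then isolates $A$ directly whenever $r=a/b$ is not a root of unity.

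Your bad locus and this one essentially never meet. If $\lambda=(b+a)/(b-a)=e^{i\phi}$, then $a/b=(\lambda-1)/(\lambda+1)=i\tan(\phi/2)$. This has modulus different from $1$ unless $\lambda=\pm i$, i.e.\ $a=\pm ib$. In that case your own $q_2$ has middle block $\left(\begin{smallmatrix}2ab&a^2+b^2\\ a^2+b^2&2ab\end{smallmatrix}\right)=2ab\,I$, so $q_2=2abA$ already.

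The paper runs the same case split in the other order. It uses the diagonal chain first. If $r$ is a root of unity other than $\pm1$, it passes to your $q_2$, whose coefficient ratio $(a^2+b^2)/(2ab)=\cos\theta$ is either $0$ (giving $A$ at once) or a nonzero real in $(-1,1)$. The latter is not a root of unity, so the diagonal chain applies again. The six-vertex fallback is therefore unnecessary, and as you note it is not closed in your proposal.

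\textbf{The finishing step.} This also fails as written. You are in the native $N$-edge model of the trivial group, where the only binary vertices are multiples of $N$. Attaching one to a port gives the transfer matrix $NN=I$, so no port of $A$ can be flipped. In fact, every gadget built from $A$ and $N$ has, on each support input, equally many free ports equal to $0$ and to $1$. Each copy of $A$ carries two of each, each binary $N$ vertex one of each, and each $N$ edge consumes one of each. Hence $=_4$ cannot arise this way.

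Moreover $KNK^{\mathsf T}=I$, so the native $N$ is the identity in ordinary coordinates. The binary-disequality dichotomy of \cite{en:liu2026disequality} therefore does not apply; $C_1$ is precisely a group in which $N$ cannot be obtained.

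The paper closes instead with the complementary two-point carrier, Lemma~\ref{en:lem:cyc-complement-carrier}, applied to $\Gamma\cup\{A\}$. The one-pair reduction and single-occurrence replication of \cite{en:liu2026disequality} give arbitrarily many pins $\Delta^{\otimes2}$. The decomposition theorem plus the odd-arity dichotomy then finish.

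A minor point: an $N$ shortloop of $Y$ is $Y(0,1)+Y(1,0)=0$, so it does not expose $Y$. The cases $a=\pm b$ are excluded by indecomposability, as in the proof of Proposition~\ref{en:prop:c1-quaternary-interface}.
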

\begin{proof}
Flattening along $13\mid24$ gives
\[
 Q=\diag(a,b,b,a)J,\qquad J=N\otimes N.
\]
A two-wire $N$ chain of length $m$ gives the actual gadget
\begin{equation}\label{en:eq:c1-general-two-center-powers}
 q_m=a^mA+b^mB.
\end{equation}
Suppose $r=a/b$ is not a root of unity. Combine the $M$ vertices labeled $A$ in the target function-net into $A^{\otimes M}$,
and regard the remainder as a fixed context. Replacing them all by $q_m$ and dividing the output by $b^{mM}$
gives the value at $r^m$ of a polynomial of degree at most $M$.
For $m=1,\ldots,M+1$, these points are distinct;
ordinary Vandermonde interpolation recovers the target value as its leading coefficient.
This uses $O(M)$ queries, each with $O(M^2)$ copies of a fixed gadget,
and permits arbitrarily many occurrences of $A$ in Turing access to $\Gamma\cup\{A\}$.
Lemma~\ref{en:lem:cyc-complement-carrier} classifies the whole function set containing the complementary two-point quaternary function $A$.
Composing reductions transfers its conclusion to the original problem.

Now suppose that $r$ is a root of unity and $r\ne\pm1$.
Flatten instead along $12\mid34$; the nonzero middle block is
$\left(\begin{smallmatrix}a&b\\b&a\end{smallmatrix}\right)$.
Connecting two copies of $q$ by two $N$ edges gives the middle block
\begin{equation}\label{en:eq:c1-general-root-conversion}
 \begin{pmatrix}a&b\\b&a\end{pmatrix}
 \begin{pmatrix}0&1\\1&0\end{pmatrix}
 \begin{pmatrix}a&b\\b&a\end{pmatrix}
 =\begin{pmatrix}2ab&a^2+b^2\\a^2+b^2&2ab\end{pmatrix}.
\end{equation}
If $r=\pm i$, this is already a nonzero multiple of a complementary two-point quaternary function.
Otherwise both new coefficients are nonzero, and one of their ratios is
\[
 \frac{a^2+b^2}{2ab}=\frac{r+r^{-1}}2.
\]
Writing $r=e^{i\theta}$, this is the nonzero real number $\cos\theta$, strictly between $-1$ and $1$,
and hence is not a root of unity. Apply the preceding paragraph to this actual gadget.
Finally, the two tensor-product identities for $a=\pm b$ follow by entrywise verification.
\end{proof}

\begin{proof}[Proof of Proposition~\ref{en:prop:c1-quaternary-interface}]
Take an indecomposable quaternary function and write it in the form of Lemma~\ref{en:lem:c1-general-q4-shape}.
Two nonzero endpoints are handled by Lemma~\ref{en:lem:c1-general-two-endpoints},
and exactly one nonzero endpoint by Lemma~\ref{en:lem:c1-general-jordan}.
The remaining functions are pure EO.
Three nonzero centers give hardness by Lemma~\ref{en:lem:c1-general-three-centers}.
Exactly one nonzero center is a complementary two-point quaternary function, handled by Lemma~\ref{en:lem:cyc-complement-carrier}.
Two nonzero centers are handled by Lemma~\ref{en:lem:c1-general-two-centers}:
when $a=\pm b$, both forms admit a $2+2$ decomposition, contrary to indecomposability; all other cases give a dichotomy.
With no nonzero center, the function is zero, which has been excluded.
\end{proof}

\subsubsection{\texorpdfstring{The $C_k$ Case ($k\ge3$)}{The Ck Case (k>=3)}}\label{en:sec:cyclic}
We use the framework's convention on available functions: gadgets and their nonzero
tensor factors are all counted as available. The binary closure condition says that
every nonzero available binary function has full rank and that the projective group
of their determinant-normalized representatives is $\Ggroup\cong C_k$, with $k\ge3$.
The finitely many binary directions have already been adjoined to the current function
set in one step. Only finitely many other functions are used in each proof; these too
are adjoined simultaneously under the framework before the gadget calculations.
Write $\mu_k=\{z\in\mathbb C:z^k=1\}$ for the set of $k$th roots of unity.
When there is a non-isotropic eigenvector, orthogonal triangularization and transpose
closure put the entire cyclic group into diagonal form. If $k$ is even, its order-two
coset has a symmetric traceless representative, so the existing disequality dichotomy
already applies. Below we work separately in the equality-edge diagonal model with
odd $k$ and the native $N$-edge model arising from an isotropic eigenbasis. We first
reduce arity in the upper case, then handle the quaternary functions in the lower
case, and finally combine the two conclusions.

A nonzero function is called \emph{indecomposable} if, after any port permutation
and nonzero rescaling, it cannot be written as the tensor product of two functions
of positive arity. By the decomposition theorem, every nonzero factor is available
in the unresolved cases. Thus, in the upper case, a quaternary function that admits
a $2+2$ decomposition must be a binary pairing product from the current group.

\begin{theorem}[Quaternary dichotomy in the native disequality-edge model\cite{en:liu2026disequality}]
\label{en:thm:cyc-external-q4}
Let $\Gamma$ be a finite algebraic function set with edge kernel $N$, ordinarily
realizing a tensor-indecomposable quaternary function $q$ such that
\[
 q_x=0\quad(|x|\text{ is odd}),\qquad q_{0000}q_{1111}\ne0.
\]
Then the entire $\Gamma$ problem has a decidable $\FP/\sharpP$-hard dichotomy.
Its tractability condition is that the entire function set satisfies the seven-part
predicate $\operatorname{Tract}(K\Gamma)$ in the cited paper.
\end{theorem}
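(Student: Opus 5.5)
Since this theorem is imported from~\cite{en:liu2026disequality}, the plan is not to reprove its hard case analysis. Instead I would check that the native $N$-edge formulation here matches the hypotheses of the cited Quaternary $K$-Holant dichotomy exactly, and then transfer its conclusion back along the Turing equivalences already established. The first step is to make the model dictionary precise. By Proposition~\ref{en:prop:holographic-reduction} with $P=K$ from \eqref{en:eq:K-basis}, we have $\Holant(N\mid\Gamma)\equiv_{\mathrm T}\Holant(K\Gamma)$ in ordinary equality-edge coordinates, with instance values preserved. Here $K\Gamma$ denotes $\{K^{\otimes m}f: f\in\Gamma\}$, which is the set appearing in $\operatorname{Tract}(K\Gamma)$. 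I would record that this correspondence commutes with gadget formation, so an ordinary gadget realizing $q$ over $\Gamma$ with $N$ edges is the same as a gadget realizing $K^{\otimes 4}q$ over $K\Gamma$.

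The second step is to verify the hypotheses term by term. The even-support condition $q_x=0$ for odd $|x|$ and the endpoint condition $q_{0000}q_{1111}\ne0$ are stated in native coordinates, which is exactly the form used in the cited paper (the ``$K$-Holant'' there is this $N$-edge model). Tensor-indecomposability is invariant under $K^{\otimes4}$ because $K$ acts factorwise, so it does not matter in which coordinates it is checked. The same lemma as in Lemma~\ref{en:lem:c1-general-two-endpoints} shows that with two nonzero endpoints and even support, any decomposition would be $2+2$ into full-rank diagonal factors. In the present framework such a decomposition is already excluded or reduced by Theorem~\ref{en:thm:decomposition}, so the cited hypothesis ``indecomposable'' is the right one. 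I would also note that, by Lemma~\ref{en:lem:framework-available-closure}, adjoining the realizable $q$ to $\Gamma$ does not change the complexity, so the cited dichotomy for $\Gamma\cup\{q\}$ is a dichotomy for $\Gamma$.

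The third step is decidability. $\Gamma$ is finite with exact algebraic values and $K$ has algebraic entries. Hence $K\Gamma$ is computable exactly, and the seven-part predicate, as a finite list of membership tests (affine-type, product-type and local-affine classes after explicit transformations), is decidable over a fixed number field. On the tractable side, the $\FP$ algorithm for $K\Gamma$ composes with the value-preserving transformation. On the hard side, $\sharpP$-hardness pulls back through $\equiv_{\mathrm T}$.

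I expect the main obstacle to be the bookkeeping of conventions rather than any new mathematics. Three matches must hold exactly: the normalization of $K$ (the factor $1/\sqrt2$ and the row $(i,-i)$), the port-order and transpose conventions for binary functions under $N$ edges (effective contraction $NAN$ versus transfer $AN$), and the precise form in which the cited paper states its $q$-hypotheses. A mismatch, for example using the conjugate basis or swapping the roles of $q_{0000}$ and $q_{1111}$, would silently change $\operatorname{Tract}$. I would resolve this first by comparing the cited paper's definition of its transformation with \eqref{en:eq:K-basis}, composing with a bit-flip or diagonal $\diag(1,\pm1)$ if needed. Such a map preserves $N$ up to scalar, hence preserves the hypotheses and, via Lemma~\ref{en:lem:per-signature-scaling}, the complexity.
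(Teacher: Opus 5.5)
Your route is the paper's route. The theorem is imported from \cite{en:liu2026disequality}, and the paper's justification is only three compatibility remarks:
\begin{enumerate}
\item $K$ is exactly the matrix in \eqref{en:eq:K-basis};
\item the gadget for $q$ and the finitely many auxiliaries can be adjoined without changing the problem;
\item the proof of the cited Quaternary $K$-Holant dichotomy \emph{does not assume that $I$ is an available vertex function}.
\end{enumerate}
You cover the first two but not the third, and the third is the only one with real content. Your ``term by term'' check is limited to the explicit conditions on $q$ (even support, two nonzero endpoints, indecomposability) and to conventions for $K$ and port orders. You name convention bookkeeping as the main obstacle, but that is not where the risk lies.

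The gap is this. The cited quaternary result is a component of a paper whose standing setting is $\Holant(\mathcal H\cup\{N\})$, with binary disequality available as a vertex function. Its proof could therefore rely on such ambient binary auxiliaries. The present theorem is invoked exactly where no such auxiliary exists:
\begin{enumerate}
\item in the $C_1$ lower case, the only native binary functions are $\lambda N$;
\item in the native cyclic case, the only ones are $D^jN$;
\item in the odd-order equality-edge case, $\diag(1,r)$ with $r\in\mu_k$ becomes $\frac12\left(\begin{smallmatrix}1-r&1+r\\1+r&1-r\end{smallmatrix}\right)$ under $K^{-1}$. This is neither a multiple of $I$ nor diagonal, because $-1\notin\mu_k$.
\end{enumerate}
These are precisely the groups isolated in Section~\ref{en:subsec:road-N} because $N$ cannot be obtained there.

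Suppose the cited argument used native $I$, or ordinary $N$ (which is $-i\diag(1,-1)$ in native coordinates), as a gadget ingredient. Then matching $q$'s hypotheses and the normalization of $K$ would still not make it apply to $\Holant(N\mid\Gamma)$ here. So what you propose to verify does not establish ``the hypotheses match exactly''. You also need to check, as the paper asserts, that the cited proof runs in the bare native $N$-edge model using only $\Gamma$ and the realized $q$.
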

We use the Quaternary $K$-Holant dichotomy of~\cite{en:liu2026disequality}, whose proof
does not assume that $I$ is an available vertex function. The matrix $K$ is exactly
the one in~\eqref{en:eq:K-basis}. If $q$ is already a gadget, it and the finitely many
auxiliary functions used may be adjoined, preserving ordinary gadget equivalence.

\subparagraph*{Isotropic cyclic groups retain native disequality edges}

If both eigendirections of a cyclic generator $R$ are isotropic, choose an eigenbasis
$K_0$ with $K_0^{\mathsf T}K_0=N$, and let $D=K_0^{-1}RK_0$.
The similarity representation is $D^j$, but the actual binary vertex function is
\begin{equation}\label{en:eq:cyc-native-binary}
 A_j=K_0^{-1}R^jK_0^{-\mathsf T}=D^jN,
 \qquad NA_jN=ND^j,\qquad A_jN=D^j.
\end{equation}
These three matrices are, respectively, the binary function, the effective matrix
for a two-port shortloop, and the one-port transfer matrix. Port actions therefore
give diagonal weights, whereas shortloops compare the $01$ and $10$ slices.
All isotropic cases are treated in this native $N$-edge model, with $k\ge3$ either
odd or even.

For a quaternary function, two antidiagonal phase probes with distinct phases,
together with the antidiagonal binary outputs, force each coefficient on an equal
remaining input to vanish. Every odd-weight four-bit string contains two unequal
bits whose removal leaves two equal bits. Hence all quaternary gadgets have even
support. Write one natural flattening as
\begin{equation}\label{en:eq:cyc-native-q4}
 Q=\begin{pmatrix}
 s&0&0&a\\0&c&e&0\\0&f&d&0\\b&0&0&t
 \end{pmatrix}.
\end{equation}
Applying Lemma~\ref{en:lem:oddq-root-ratio} to the central block with three effective
antidiagonal probes gives the following: each central complementary pair is either
zero in both entries or nonzero in both; in the latter case the ratio belongs to
$\mu_k$. If two central pairs are active, all four corresponding entries are nonzero.
The last case of the root-ratio lemma says that their products agree and that all
ratios among these four values belong to $\mu_k$. Consequently, if all three central
pairs are active, then
\begin{equation}\label{en:eq:cyc-native-center-products}
 ab=cd=ef=P\ne0.
\end{equation}
These ratio relations concern only the central values; $s,t$ are treated in the
quaternary classification.

\paragraph{Upper case}\label{en:subsec:ck-upper}
Quaternary functions in the upper case are pairing products of binary functions from the current group.
Following the approach of the first version, we reduce the arity of functions of higher arity that are not pairing products: in the equality-edge model we compare the $00,11$ slices, and in the native disequality-edge model we compare the $01,10$ slices, proving the required pairing structure in each setting.

\subparagraph*{Arity reduction in the equality-edge model}

We first treat the diagonal cyclic group with equality edges. The proof has three
steps: obtain a nonzero value on the all-zero input, use three diagonal auxiliary
functions to recover the pairing structure of the slices, and combine three views
to reconstruct the entire function as a pairing product.

\begin{lemma}[Arity reduction for a diagonal cyclic group]\label{en:lem:cyc-diagonal-arity}
Assume the binary closure condition of this section for the equality-edge diagonal
cyclic group, with $k\ge3$. If an even-arity realizable function lies outside the
group's binary pairing product class, then a quaternary realizable function lies
outside that class.
\end{lemma}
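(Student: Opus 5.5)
We argue by a minimal counterexample, as in Lemma~\ref{en:lem:twisted-minimal-pearl}. Suppose some even-arity realizable function lies outside the pairing product class $\mathcal P_{\mathcal B}$. Choose such an $F$ of minimum arity $2d$, and assume $2d\ge6$; otherwise the statement holds already. In the equality-edge model the group is $\{\diag(\omega,\omega^{-1})\}$ up to scalars, and we can arrange that $\omega$ has order $k$ (or $2k$) with $k\ge3$. The shortloops with a diagonal group element $D_j=\diag(\omega^j,\omega^{-j})$ are
\[
 \Gamma_{D_j}^{pq}(F)=\omega^jF^{00}_{pq}+\omega^{-j}F^{11}_{pq},
\]
where $F^{00}_{pq},F^{11}_{pq}$ are the slices. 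By minimality, every nonzero such shortloop is a pairing product of diagonal group matrices, with a recorded pairing.

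\textbf{Step 1 (nonzero all-zero value).} Some value with $x_p=x_q$ is nonzero on some pair, since otherwise a contraction would be zero or binary-like. Pick three distinct indices $j$ (available because $k\ge3$). We then have three nonzero linear combinations of the two slices $F^{00}_{pq},F^{11}_{pq}$, each of which is a diagonal pairing product. Diagonal pairing products are supported on inputs where paired variables agree, and their values are multiplicative phases. A sum $\alpha U+\beta V$ of two such products can itself be a pairing product for three distinct ratios $\alpha/\beta$ only if $U,V$ share the same pairing and their supports coincide with proportional values, or if one of them is zero. This is the analogue of the root-ratio comparison (Lemma~\ref{en:lem:oddq-root-ratio}). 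The conclusion is that both slices are pairing products with a \emph{common} pairing $\mu_{pq}$, or that one slice vanishes. Iterating shortloops along the recovered pairings while keeping the retained input nonzero then gives a nonzero value of $F$ at $0^{2d}$, after relabeling by a bit flip if necessary.

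\textbf{Step 2 (pairing structure of slices).} For each pair $pq$, record $\mu_{pq}$. Next compare two disjoint pairs $pq$ and $rs$ through the commuting double shortloops, in the same way as the commutation arguments in Lemma~\ref{en:lem:twisted-one-factorization} and Lemma~\ref{en:lem:c1-two-merge-propagation}. If $rs\in\mu_{pq}$, then $\mu_{rs}=\mu_{pq}\cup\{pq\}\setminus\{rs\}$. If not, the double shortloop merges edges, and comparing with the diagonal structure yields a contradiction unless $pq$ and $rs$ both lie in a common perfect matching $\Pi$. Because $2d\ge6$, there are three pairwise disjoint pairs. Using them as three \emph{views}, we fix a single matching $\Pi$ such that every slice of every pair in $\Pi$ is a product along $\Pi$ minus that pair.

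\textbf{Step 3 (reconstruction).} Expand $F$ on the pairs of $\Pi$ one at a time. On each pair $pq\in\Pi$, the slices $F^{01},F^{10}$ vanish, because shortloops along the other views kill them via the vanishing lemma (Lemma~\ref{en:lem:c1-zero-merges} style, with the diagonal weights). Moreover, the $00$ and $11$ slices have the ratio forced by the group phases. Hence $F=D(x_p,x_q)\otimes F'$ with $D$ diagonal, and $F'=\Gamma^{pq}$ is a pairing product by minimality. This contradicts the choice of $F$. A counterexample therefore cannot have arity $\ge6$, so a quaternary one exists.

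\textbf{Main obstacle.} The main difficulty is Step~2: showing that the three views force a single common matching. The combinatorics of merged edges, as in the $C_1$ base cases, may produce exceptional configurations at arities six and eight. I would first try to exclude these configurations by using the distinct phases $\omega^{\pm j}$, which separate the two slices, together with the $1/4$-type branch counting from Lemma~\ref{en:lem:twisted-one-factorization}.
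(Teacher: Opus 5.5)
Your proposal has genuine gaps, and they sit where the lemma is hard.

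\textbf{Step 1.} This step misdescribes what three diagonal probes give. It is not true that all probe combinations can lie in the class only when the two slices have equal supports and proportional values, or when one slice vanishes. Take $F^{00}_{pq}=\delta_{00}(x_z,x_w)\,Q$ and $F^{11}_{pq}=\delta_{11}(x_z,x_w)\,Q$, with $Q$ a full-rank diagonal pairing product. Then every probe gives $\omega^jF^{00}_{pq}+\omega^{-j}F^{11}_{pq}=\diag(\omega^j,\omega^{-j})(x_z,x_w)\,Q$, a pairing product from the group. Yet the two slices have disjoint supports and neither is a full-rank product. This complementary-pinned (``rhinoceros'') configuration is not a corner case: the paper's proof shows that at least two of three suitable views must be of this type.

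The correct dichotomy is proportional full-rank slices, or complementary pins on one pair (Lemma~\ref{en:lem:odda-two-slice-forms}). It is obtained only after three moves:
\begin{enumerate}
 \item extract the common external factor $Q$ (Lemma~\ref{en:lem:odda-one-changing-factor});
 \item close $Q$ with probes to get an \emph{actual} quaternary function;
 \item apply minimality to that quaternary function.
\end{enumerate}
The branch ``one slice vanishes'' must be excluded, not kept: closing along the matching of the nonzero slice leaves a binary function with a single nonzero diagonal entry.

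``Relabeling by a bit flip'' is also not available. The diagonal cyclic group has no antidiagonal element, so you cannot flip individual ports without leaving the closure in which minimality holds and all shortloops are diagonal. Only the global complement is a harmless orthogonal change of basis, and it does not move an arbitrary nonzero input to $0^{2d}$. This is exactly why the paper proves $F(0^{2d})\neq0$ directly. Since $0^{2d-2}$ lies in every diagonal pairing support, the four-variable exception of Lemma~\ref{en:lem:odda-three-products} cannot occur, so the three probe shortloops share one matching. In the independent case, closing $Q$ gives a quaternary function with nonzero $00$ slice, hence nonzero all-zero value.

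\textbf{Steps 2 and 3.} Step~2, which you flag yourself, is the content of the lemma and is not supplied. The cited analogies do not carry it.
\begin{enumerate}
 \item Lemma~\ref{en:lem:twisted-one-factorization} rests on the $1/4$ branch criterion, which comes from four trace-orthogonal probes spanning all binary matrices. Diagonal probes never see the $01,10$ slices. On a diagonal pairing product, a diagonal shortloop is nonzero both on a factor edge (scalar $1+ct$) and across two factors (merged factor $\diag(1,abt)$); for odd $k$ it never vanishes. So branch counts recover nothing.
 \item The propagation argument of Lemmas~\ref{en:lem:c1-two-merge-propagation}--\ref{en:lem:c1-large-factor} needs arity at least ten. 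In that setting arity eight genuinely has the exception $F'_8$, so it leaves open precisely the arities you worry about.
\end{enumerate}
Step~3 fails as written. Its vanishing of $F^{01}_{pq},F^{10}_{pq}$ presupposes Step~2. Its binary split $F=D(x_p,x_q)\otimes F'$ needs proportional $00,11$ slices, which rhinoceros pairs do not have.

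\textbf{The missing idea.} Work on four variables rather than on one pair and a global matching $\Pi$.
\begin{enumerate}
 \item Choose $x,y,z,w$ so that $F^{00}_{xy},F^{00}_{xz},F^{00}_{yz}$ pair $z,y,x$, respectively, with $w$ (Lemma~\ref{en:lem:odda-three-views}).
 \item At least two of these slices are $0$-rhinoceroses. Otherwise two antelopes, after closing the common external factor, give a quaternary function nonzero at $0000,0011,0101$, which no diagonal pairing product allows (Lemma~\ref{en:lem:odda-two-rhinos}).
 \item The complementary $11$ slices, plus the pigeonhole on the bits $x,y,z$, force the whole support to satisfy $x=y$ and $z=w$.
 \item Hence $F=J(x,y,z,w)\,Q$. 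Closing $Q$ realizes $J$, which is quaternary and therefore a pairing product by minimality.
\end{enumerate}
No arity-six or arity-eight case analysis is needed.
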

\begin{proof}
Suppose there is no such quaternary function, and choose a non-pairing-product
function $F$ of minimum arity $2m\ge6$. This minimum is taken among ordinary
gadgets over the current fixed function set. Every realizable function of smaller
even arity is zero or a pairing product of full-rank diagonal binary functions
from the current group.

First, three nonzero diagonal pairing products satisfying a linear relation with
all coefficients nonzero must use the same matching. Otherwise, the support count
in Lemma~\ref{en:lem:odda-three-products} forces the third support to be exactly the
symmetric difference of the first two. But the all-zero input belongs to every
purely diagonal pairing support and to none of these symmetric differences, a
contradiction.

Choose three positions in a nonzero input of $F$. At least two positions $p,q$ have
the same value. Thus at least one of $F_{pq}^{00},F_{pq}^{11}$ is nonzero. Exactly
one cannot be nonzero: otherwise every diagonal shortloop gives a multiple of that
nonzero slice, so the slice is a lower-arity diagonal pairing product. Along its
matching, choose probes that close each pair to a nonzero value. Applying these
shortloops to the original $F$ leaves a binary function with exactly one nonzero
diagonal entry. The complete binary closure also forces its off-diagonal entries
to vanish, contradicting full rank.

Write the two nonzero slices as $A,B$. Three probes give $A+t_jB$. If $A,B$ are
dependent, both are proportional to a nonzero lower-arity diagonal pairing product.
Hence $A(0)\ne0$, giving $F(0^{2m})\ne0$. If they are independent, all three
shortloops are nonzero and pairwise nonproportional, and the preceding argument
gives a common matching. By Lemma~\ref{en:lem:odda-one-changing-factor}, they have
common factors on all but at most one pair. Solving for the slices yields
$A=E_0Q$ and $B=E_1Q$, where $Q$ is an actual full-rank diagonal pairing product
and $E_0,E_1$ are both nonzero. Closing $Q$ to a nonzero value in the original $F$
gives an actual quaternary function. Its $00$ slice is nonzero, so it is not the zero
function; the lower-arity condition forces it to be a full-rank diagonal pairing
product. Its all-zero value is therefore nonzero. Thus $E_0(00)\ne0$, again giving
$F(0^{2m})\ne0$.

Now repeat this three-probe and common-factor argument for every pair of ports.
The proofs of Lemmas~\ref{en:lem:odda-common-slice-factor}
and~\ref{en:lem:odda-two-slice-forms} use only diagonal probes, a common matching,
and the actual quaternary closure, so both apply here. Each pair of $00,11$ slices
therefore consists either of proportional complete pairing products or of pairing
products with complementary pins on the same pair. Next use
Lemmas~\ref{en:lem:odda-three-views} and~\ref{en:lem:odda-two-rhinos} to choose three
views on $x,y,z,w$, including at least two pairs of complementary pinned pairing
products. These proofs also use only diagonal closures, with no antidiagonal port
action. Since the variables are Boolean, at least one of $x=z$, $y=z$, and $x=y$
holds; the corresponding slice conditions then force both $x=y$ and $z=w$.
Factor out the common external product and close it to a nonzero value, obtaining
an actual quaternary residual factor. The lower-arity condition makes this factor
a diagonal pairing product as well, so $F$ itself is a pairing product, a
contradiction. This last step is the same as in
Theorem~\ref{en:thm:odd-arity-reduction}; once the first step above gives a nonzero
all-zero value, all remaining steps require only diagonal auxiliary functions.
\end{proof}

\subparagraph*{Arity reduction in the native disequality-edge model}\phantomsection\label{en:sec:native-cyclic-arity}

Shortloops in the native disequality-edge model compare the $01$ and $10$ slices.
The proof again has three steps. First separate the EO part from the two endpoints.
Next use the common pairings of the shortloops to show that a pointwise power of the
EO part is an unweighted matching. Finally exclude the arity-eight exception and
recover the original binary weights. The resulting minimum counterexample consists
only of a matching part and endpoints, which are treated in the next part.

We work in the native $N$-edge model. Up to nonzero scalars, the binary closure
condition gives all nonzero available binary functions as
\[
 A_j=D^jN,\qquad
 D=\operatorname{diag}(\lambda,\lambda^{-1}),\qquad
 \lambda=e^{\pi i/k},\qquad k\ge3,
\]
where $j$ ranges over the integers. No parity restriction is imposed on $k$.
Let $\mathcal P_{\mathcal B}$ consist of the zero function and the pairing tensor
products of actually realizable full-rank antidiagonal binary functions, allowing
an overall nonzero scalar. The ratio of the two antidiagonal entries of each
nonzero binary factor belongs to $\mu_k$. When the two ports of an $A_j$ are used
for a shortloop, the effective kernel is $NA_jN$. As $j$ varies, these kernels,
up to nonzero scalars, are exactly
\begin{equation}\label{en:eq:nca-probes}
 K_t=\begin{pmatrix}0&1\\t&0\end{pmatrix},\qquad t\in\mu_k.
\end{equation}
Thus the actual shortloops in this section have the form
$F_{pq}^{01}+tF_{pq}^{10}$.

Consider a nonzero ordinarily realizable function $F$ of arity $m=2d\ge6$, and assume
that every ordinarily realizable function of smaller even arity belongs to
$\mathcal P_{\mathcal B}$. We will eventually satisfy this assumption by choosing a
non-pairing-product function of minimum arity. Pointwise powers, slices, and EO
restrictions below are used to analyze function values; gadgets are still built
by connecting the specified binary auxiliary functions.

\subparagraph*{The middle layer and consistency of pairings}

\begin{lemma}[Only the EO layer and the two endpoints remain]\label{en:lem:nca-eo-endpoints}
There are an EO function $Q$ and constants $a,b$ such that
\[
 F=Q+a\delta_{0^m}+b\delta_{1^m}.
\]
Here $Q=0$ is allowed.
\end{lemma}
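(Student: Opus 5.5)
The plan is to show that every input which is neither EO nor an endpoint is killed by shortloops. A shortloop of $F$ has arity $m-2$, so the standing lower-arity hypothesis applies to it. Set $Q=F|_{\mathrm{EO}}$, $a=F(0^m)$, $b=F(1^m)$. Then $F=Q+a\delta_{0^m}+b\delta_{1^m}$ holds exactly when $F(x)=0$ for every input $x$ that contains both $0$ and $1$ and has imbalance $r(x)=|x|_1-|x|_0\ne0$. Fix such an $x$. Since $x$ contains both bits, we can choose positions $p,q$ with $x_p=0$ and $x_q=1$. Let $y$ be the word on the remaining $m-2$ ports. Deleting one $0$ and one $1$ does not change the imbalance, so $r(y)=r(x)\ne0$, and $y$ is not an EO input.

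Next I use the actual shortloops described before \eqref{en:eq:nca-probes}. Closing the ports $p,q$ of $F$ with the two ports of an available $A_j$ is an ordinary gadget. Up to a known nonzero scalar, it realizes the $(m-2)$-ary function
\[
 G_t=F^{01}_{pq}+tF^{10}_{pq},\qquad t\in\mu_k .
\]
Its arity is $m-2\ge4$ and even, so the hypothesis puts $G_t$ in $\mathcal P_{\mathcal B}$. Every nonzero member of $\mathcal P_{\mathcal B}$ is a pairing product of full-rank antidiagonal binary factors. Each such factor is supported only on $01,10$, so the support of the product consists of EO inputs. The zero function trivially vanishes off the EO layer. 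Hence $G_t(y)=0$ for every $t\in\mu_k$, that is,
\[
 F(x)+t\,F(x')=0\qquad(t\in\mu_k),
\]
where $x'$ is $x$ with the bits at $p,q$ exchanged. Because $k\ge3$, there are two distinct roots $t_1\ne t_2$ in $\mu_k$. Subtracting the two equations gives $F(x')=0$, and then $F(x)=0$. This proves the decomposition. Nothing prevents $Q=0$.

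The only delicate point is confirming that the kernels $K_t$ in \eqref{en:eq:nca-probes} are genuinely available as shortloops. They come from the actual binary functions $A_j$ through the effective kernel $NA_jN$, so the $G_t$ are honest ordinary gadgets and the minimality hypothesis really applies to them. I would also check that at least two distinct values of $t$ occur. This is where the bound $k\ge3$ (indeed already $k\ge2$) is used. With only one probe, one would obtain a single relation between $F(x)$ and $F(x')$ and no vanishing.
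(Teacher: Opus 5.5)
Your argument is correct and is essentially the paper's proof. You delete an opposite-valued pair $p,q$ and note that the remaining word stays off the EO layer. You then use two distinct actual shortloops $F^{01}_{pq}+tF^{10}_{pq}$, which lie in $\mathcal P_{\mathcal B}$ and are therefore EO-supported, to force both slice values to vanish. Your explicit choice $Q=F|_{\mathrm{EO}}$, $a=F(0^m)$, $b=F(1^m)$ only spells out what the paper leaves implicit.
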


\begin{proof}
Take any input containing both $0$ and $1$ whose weight is not $d$. Choose ports
$p,q$ with opposite values. After deleting these two bits, the remaining input
still lies outside the corresponding EO layer. Every actual shortloop
$F_{pq}^{01}+tF_{pq}^{10}$ is a lower-arity antidiagonal pairing product or zero,
and hence is an EO function. At this remaining input, two distinct parameters $t$
give two zero-value equations, forcing both slice values to vanish. Thus every
value outside the EO layer and the all-zero and all-one inputs is zero.
\end{proof}

\begin{lemma}[Common matching for cyclic probes]\label{en:lem:nca-common-matching}
Fix any two ports $p,q$. All nonzero actual shortloops
$F_{pq}^{01}+tF_{pq}^{10}$, $t\in\mu_k$, have the same support and matching.
\end{lemma}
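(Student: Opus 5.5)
The plan is to work entirely inside the two-dimensional space spanned by the slices $F_{pq}^{01}$ and $F_{pq}^{10}$. Write $S_t=F_{pq}^{01}+tF_{pq}^{10}$ for $t\in\mu_k$. Each $S_t$ is a lower-arity ordinarily realizable function, so by the standing hypothesis it is either zero or a member of $\mathcal P_{\mathcal B}$. A nonzero $S_t$ is therefore a pairing product of full-rank antidiagonal binary factors. By the support-uniqueness remark in Lemma~\ref{en:lem:c1-matching-merge}, its support determines its matching. So it suffices to show that all nonzero $S_t$ have the same support.

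First dispose of the dependent case. If $F_{pq}^{01}$ and $F_{pq}^{10}$ are linearly dependent, every nonzero $S_t$ is a scalar multiple of one fixed function, and there is nothing to prove. Otherwise at most one $S_t$ vanishes, since two vanishing probes with distinct $t$ would kill both slices. Any two nonzero $S_t$ are also nonproportional. Because $k\ge3$, any three distinct parameters $t_1,t_2,t_3$ with nonzero $S_{t_i}$ give three pairwise nonproportional members of a two-dimensional space. They satisfy a linear relation whose three coefficients are all nonzero, namely Lagrange-type coefficients in the $t_i$. When $k\ge4$, or when no probe vanishes, every pair of nonzero probes lies in such a triple. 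In the remaining case $k=3$ with one vanishing probe, the surviving two probes span the slice space and must be compared directly.

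The core step is then a statement in the spirit of Lemma~\ref{en:lem:odda-three-products}, adapted to antidiagonal products. Let $S_1,S_2,S_3$ be three nonzero antidiagonal pairing products with matchings $M_1,M_2,M_3$ and a relation $\alpha S_1+\beta S_2+\gamma S_3=0$ with $\alpha\beta\gamma\ne0$. I will show $M_1=M_2=M_3$.

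\emph{Support counting.} Superimpose $M_1$ and $M_2$. If they form $c$ alternating cycles on the $2d-2$ remaining ports, the two supports meet in exactly $2^c$ words, while each support has $2^{d-1}$ words. The relation forces
\[
 \supp S_1\,\triangle\,\supp S_2\subseteq\supp S_3\subseteq\supp S_1\cup\supp S_2 .
\]
Comparing sizes then gives $2^{d}-2^{c+1}\le2^{d-1}$, hence $c\ge d-2$. If $c=d-1$, the two matchings coincide. So the only problematic configuration is $c=d-2$: the matchings agree except on one alternating $4$-cycle on ports $u,v,w,z$. In that case the symmetric difference is genuinely the support of the third matching on these four ports, and $\supp S_3$ must equal it. So support counting alone does not close the argument.

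\emph{The $4$-cycle case (main obstacle).} Here I plan to compare values rather than supports. Restrict all three products to a fixed admissible assignment of the common edges outside the cycle. This reduces the relation to one among three quaternary products of the form $N$-type$\,\otimes\,N$-type on $\{u,v,w,z\}$, using the three different pairings. Each binary factor has antidiagonal entries with ratio in $\mu_k$.
\begin{enumerate}
 \item The two intersection words of $S_1$ and $S_2$ must cancel, since $S_3$ vanishes on them.
 \item This cancellation forces $\beta/\alpha$ to equal minus a ratio of factor entries, so $\beta/\alpha\in-\mu_k$ up to the fixed scalings.
 \item Reading the relation at the complementary pair of intersection words gives a second expression for the same ratio, twisted by the factor-entry ratios. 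Consistency should force one factor ratio to satisfy $r^2=1$ in a way incompatible with $S_{t}$ depending affinely on $t$ over the distinct $t_i$.
\end{enumerate}
If this single-triple computation is not conclusive, I will vary the triple. With $k\ge3$ and at most one vanishing probe, the two extreme coefficients are rational functions of the $t_i$, and two different triples sharing two probes must produce the same third matching. That should pin the configuration down and yield a contradiction. The sub-case $k=3$ with one vanishing probe is handled by the same value comparison applied directly to the two surviving probes. I expect this value bookkeeping on the $4$-cycle to be the delicate part.

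Once any three nonzero probes are known to share a matching, and hence a support, connecting any two nonzero probes through a common triple gives the lemma for all $t\in\mu_k$.
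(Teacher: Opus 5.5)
Your reduction is sound up to the four-cycle configuration. The dependent case, the support-uniqueness of antidiagonal products, and the counting $c\ge d-2$ are all fine. (In the independent case no probe vanishes at all, since $F_{pq}^{01}+tF_{pq}^{10}=0$ would make the slices dependent, so your subcase ``$k=3$ with one vanishing probe'' is vacuous.) The gap is the exclusion of the four-cycle exception, which is the whole content of the lemma.

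\textbf{Even $k$.} The statement you set out to prove as your core step is false when $k$ is even. That statement says that three nonzero antidiagonal pairing products in a nondegenerate relation share a matching. But for even $k$ we have $-1\in\mu_k$, so $Y$-type factors are actual (up to scalars, $D^{k/2}N=iY$). Then $Y_{12}Y_{34}-Y_{13}Y_{24}+Y_{14}Y_{23}=0$ is a genuine exception built from actual products. No value bookkeeping on a single triple can rule it out, so you must use a fourth probe, which exists because even $k$ forces $k\ge4$. Your ``vary the triple'' remark points the right way but stops at ``should''. The missing step is short:
\begin{enumerate}
 \item If $S_1,S_2$ have different matchings, your own counting shows that every further nonzero probe $S_j$ has support exactly $\supp S_1\triangle\supp S_2$, by equality of sizes.
 \item Hence two further probes $S_3,S_4$ share a support.
 \item The triple $S_1,S_3,S_4$ then falls into the same-support case, forcing $\supp S_1=\supp S_3$.
 \item This is absurd, because $\supp S_1\cap\supp S_2\ne\varnothing$.
\end{enumerate}
This is the paper's argument for every $k\ge4$: a line in the exceptional configuration carries exactly three pairing-product points.

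\textbf{The case $k=3$.} Here only three probes exist, so values must be used, but the mechanism in your step 3 is wrong. Reading the relation at the two words of $\supp S_1\cap\supp S_2$ gives only one multiplicative relation among the entry ratios of the four factors involved. That relation is satisfiable in $\mu_3$. Nothing forces $r^2=1$, and the affine dependence on $t$ plays no role.

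The contradiction appears only when you go around all three pairwise-overlap regions. Every actual factor $D^jN$ has two entries with equal $k$th powers. Hence $U^{[3]},V^{[3]},W^{[3]}$ are nonzero constants $u,v,w$ on their respective supports. Evaluate the relation at inputs lying only in $U\cap V$, only in $U\cap W$, and only in $V\cap W$, and cube. This gives $\alpha^3u=-\beta^3v$, $\alpha^3u=-\gamma^3w$, and $\beta^3v=-\gamma^3w$, which are incompatible. The reason is an odd number of sign changes around the triangle; equivalently, a product of three factor ratios would have to equal $-1\notin\mu_3$. This is exactly the paper's $k=3$ argument.
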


\begin{proof}
Write $A=F_{pq}^{01}$ and $B=F_{pq}^{10}$. If they are dependent, all nonzero
outputs are proportional. Suppose instead that they are independent. Then the
$k$ shortloops are all nonzero and have distinct projective directions.

If $k\ge4$, Lemma~\ref{en:lem:odda-three-products} says that once three pairing
products on the same line have different supports, that line contains exactly
three projective points that are pairing products; every other nonzero point has
six-point support on a local set of four variables. Thus at least four distinct
projective pairing-product points cannot coexist on this line, proving the claim.

It remains to consider $k=3$. Suppose that the three outputs $U,V,W$ exhibit the
four-variable exception, and write their nondegenerate relation as
$\alpha U+\beta V+\gamma W=0$. The two entries of every actual antidiagonal binary
factor have equal cubes. Hence the pointwise cubes of $U,V,W$ on their respective
supports are nonzero constants $u,v,w$. In the exception, each pairwise support
intersection is nonempty, whereas the triple intersection is empty. Evaluate the
relation at inputs belonging only to $U,V$, only to $U,W$, and only to $V,W$,
respectively, and cube the resulting equations. This gives
\[
 \alpha^3u=-\beta^3v,\qquad
 \alpha^3u=-\gamma^3w,\qquad
 \beta^3v=-\gamma^3w.
\]
These equations contradict the nonzero values of all three quantities. Thus the
exception is impossible for cube roots as well.
\end{proof}

\begin{lemma}[Local forms of the two opposite slices]\label{en:lem:nca-slice-forms}
For any ports $p,q$, either $F_{pq}^{01}=F_{pq}^{10}=0$, or a pair $z,w$ of remaining
variables can be chosen so that
\begin{equation}\label{en:eq:nca-slice-factors}
 F_{pq}^{01}=E_0(z,w)Q_{\mathrm{out}},\qquad
 F_{pq}^{10}=E_1(z,w)Q_{\mathrm{out}},
\end{equation}
where $Q_{\mathrm{out}}$ is a pairing product of actual full-rank antidiagonal binary
factors. There are exactly two nonzero forms:
\begin{enumerate}
 \item $E_0=\eta_0 E$ and $E_1=\eta_1 E$, where $E$ is a full-rank antidiagonal
 function, the ratio of its two nonzero values belongs to $\mu_k$, and
 $\eta_1/\eta_0\in\mu_k$;
 \item $E_0,E_1$ are nonzero only at $01,10$, respectively, or only at $10,01$,
 respectively, and the ratio of their two nonzero values belongs to $\mu_k$.
\end{enumerate}
All coefficients specified as nonzero in these two forms are nonzero.
\end{lemma}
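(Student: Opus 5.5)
We split on whether the two slices $A=F_{pq}^{01}$ and $B=F_{pq}^{10}$ are linearly dependent. By the standing hypothesis, every actual shortloop $A+tB$ ($t\in\mu_k$) has arity $2d-2\ge4$ and lies in $\mathcal P_{\mathcal B}$. By Lemma~\ref{en:lem:nca-common-matching}, all nonzero shortloops share one support and one matching $\mu$.

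\emph{Dependent case.} Suppose $B=\eta A$ with $A\ne0$. If some $1+t\eta$ vanishes, choose $t'\ne t$ in $\mu_k$; then $(1+t'\eta)\ne0$, so $A$ is a nonzero scalar multiple of a nonzero pairing product. The same holds when no $1+t\eta$ vanishes. To show $\eta\in\mu_k$, I would close $A$ along all pairs of $\mu$ except one, using probes that make each closed pair nonzero (possible because every factor is full rank). Applied to $F$, this leaves an actual quaternary function. Its $p,q$ shortloops are then a fixed binary factor times $1+t\eta$. Lower-arity closure makes this quaternary function a pairing product. Since $p$ and $q$ are not paired in it, the pair $p,q$ factors as an antidiagonal binary function whose two entries stand in ratio $\eta$, so $\eta\in\mu_k$. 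Choosing any pair $\{z,w\}\in\mu$ gives form~1, with $E_0=E$ and $E_1=\eta E$.

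\emph{Independent case.} All $k\ge3$ shortloops are nonzero, pairwise nonproportional, and share the matching $\mu$. Write each factor's antidiagonal pair of values. For each pair $e\in\mu$, the restriction of $A+tB$ to the support coordinates of $e$ is a linear combination of the two vectors given by the $e$-components of $A$ and $B$. I would apply Lemma~\ref{en:lem:odda-one-changing-factor}, an analogue stated earlier that uses only a common matching. It shows that the factors agree projectively on all pairs but one, $\{z,w\}$. Factoring the common part yields $Q_{\mathrm{out}}$ together with the representations $A=E_0Q_{\mathrm{out}}$ and $B=E_1Q_{\mathrm{out}}$, where $E_0,E_1$ are supported on $\{01,10\}$ in $(z,w)$.

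Independence makes $(E_0,E_1)$ a pair of independent vectors in the two-dimensional space spanned by $\delta_{01},\delta_{10}$. Every combination $E_0+tE_1$ with $t\in\mu_k$ must be full-rank antidiagonal, with entry ratio in $\mu_k$. To pin down the shape, I would close $Q_{\mathrm{out}}$ nonzero in $F$, which gives an actual quaternary function on $p,q,z,w$. By the upper-case hypothesis it is a pairing product of antidiagonal factors. Since it is not of the form $\{pq\}\{zw\}$ (that would make $A,B$ dependent), it must pair $p$ with $z$ or with $w$. Reading off its slices then gives exactly form~2, and the root-of-unity ratios come from the factors being actual binary functions in the group.

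\emph{Main obstacle.} The hardest step is justifying the one-changing-factor reduction, which says that the $k$ shortloops can disagree on at most one pair, in the native $N$-edge setting. I would first check that the cited lemma's proof uses only a common matching and full-rank factors, rather than diagonality. If that check fails, I would argue directly: if two pairs $e,e'$ both varied, the rank-one condition across $t$ on the $e\otimes e'$ block would fail for three distinct values of $t$.
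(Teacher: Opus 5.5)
Your proposal is essentially the paper's proof. You use the common matching together with Lemma~\ref{en:lem:odda-one-changing-factor} to write $F_{pq}^{01}=E_0Q_{\mathrm{out}}$ and $F_{pq}^{10}=E_1Q_{\mathrm{out}}$. You then close $Q_{\mathrm{out}}$ in $F$ to get an actual quaternary $H(p,q,z,w)\in\mathcal P_{\mathcal B}$, whose matching ($pq|zw$ versus crossing) yields form~1 or form~2. The paper does this read-off once for both of your cases. That lemma is a pure rank statement, so the obstacle you anticipated does not arise.

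One sentence in your dependent case is backwards, though. In $H$, $p$ and $q$ must be \emph{paired}, not unpaired. The reason is that $H_{pq}^{01}$ is a nonzero multiple of the full-rank $E(z,w)$, whereas a crossing matching would pin $(z,w)$ to a single input. The full rank of that $pq$ factor is then what gives $\eta\ne0$ and $\eta\in\mu_k$. The omitted case $F_{pq}^{01}=0\ne F_{pq}^{10}$ is symmetric.
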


\begin{proof}
If the two slices are dependent and not both zero, take the actual pairing
decomposition of a nonzero shortloop and factor out all but one pair. If the
slices are independent, Lemma~\ref{en:lem:nca-common-matching} gives a common matching
for three shortloops, and Lemma~\ref{en:lem:odda-one-changing-factor} shows that the
factor changes on at most one pair. Both cases give~\eqref{en:eq:nca-slice-factors},
where for now $E_0,E_1$ are possibly degenerate antidiagonal functions, not both zero.

For each factor of $Q_{\mathrm{out}}$, choose an effective kernel $K_t$ whose closure
value is nonzero. The closure value is a nonzero linear polynomial in $t$, and so
vanishes for at most one $t$. Connect these actual auxiliary functions to the
original $F$, obtaining a nonzero quaternary function $H(p,q,z,w)$. Its $01,10$
slices are a common nonzero constant times $E_0,E_1$. The lower-arity assumption
makes $H$ an actual antidiagonal pairing product.

If the matching of $H$ is $pq|zw$, the two slices are multiples of the same full-rank
$zw$ factor. Their ratio is the ratio of the two antidiagonal values of the $pq$
factor, and belongs to $\mu_k$. If the matching crosses, opposite values on $p,q$
pin $z,w$ to two complementary opposite inputs. The ratio of the corresponding
function values is a product of ratios of two binary factors, and also belongs
to $\mu_k$. This proves the two forms and simultaneously excludes the possibility
that exactly one slice is zero.
\end{proof}

\subparagraph*{All disequality shortloops of the pointwise power}

For any function $f$, write $f^{[k]}(x)=f(x)^k$ for its pointwise power, as distinct
from a matrix power. First suppose that $Q\ne0$ in
Lemma~\ref{en:lem:nca-eo-endpoints}, and let
\[
 S=Q^{[k]}.
\]
This is a nonzero EO function. Write $\partial^{pq}$ for the unweighted binary
disequality shortloop, as in Section~\ref{en:sec:c1-strict}.

\begin{lemma}[Shortloops of the pointwise power are matchings]\label{en:lem:nca-power-caps}
For every $p\ne q$, there are a nonzero constant $\theta_{pq}$ and a perfect matching
$M_{pq}$ on the remaining variables such that
\begin{equation}\label{en:eq:nca-all-power-caps}
 \partial^{pq}S=\theta_{pq}P_{M_{pq}},\qquad
 P_M=\prod_{\{u,v\}\in M}[x_u\ne x_v].
\end{equation}
\end{lemma}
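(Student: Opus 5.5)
The plan is to read every $01$/$10$ slice of $Q$ through Lemma~\ref{en:lem:nca-slice-forms}, take pointwise $k$th powers factor by factor, and then add the two slices.

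First note that for $p\ne q$ the endpoint terms $a\delta_{0^m},b\delta_{1^m}$ of Lemma~\ref{en:lem:nca-eo-endpoints} contribute nothing to the slices at $x_p\ne x_q$. Hence $Q^{01}_{pq}=F^{01}_{pq}$ and $Q^{10}_{pq}=F^{10}_{pq}$. Because the pointwise power acts inputwise and these two slices live on disjoint input sets, we have
\[
 \partial^{pq}S=(F^{01}_{pq})^{[k]}+(F^{10}_{pq})^{[k]}.
\]

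Next apply Lemma~\ref{en:lem:nca-slice-forms} to the pair $p,q$. In the nonzero case we get $F^{01}_{pq}=E_0Q_{\mathrm{out}}$ and $F^{10}_{pq}=E_1Q_{\mathrm{out}}$. The pointwise power is multiplicative over tensor products, so it distributes over these factorizations.

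For each actual full-rank antidiagonal factor of $Q_{\mathrm{out}}$, the two nonzero entries have ratio in $\mu_k$, so their $k$th powers agree. The factor's $k$th power is therefore a nonzero constant times $[x_u\ne x_v]$. Consequently
\[
 Q_{\mathrm{out}}^{[k]}=c\,P_{M_{\mathrm{out}}},\qquad c\ne0.
\]

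Now treat the two forms of Lemma~\ref{en:lem:nca-slice-forms} on the pair $z,w$.
\begin{enumerate}
 \item In the first form, $E^{[k]}=e[x_z\ne x_w]$ for the same reason, and $\eta_1^k=\eta_0^k$. The sum of the two slice powers is therefore $2\eta_0^k e c\,P_{M_{\mathrm{out}}\cup\{zw\}}$.
 \item In the second form, the two slice powers are $\alpha^k\delta_{01}(z,w)$ and $\beta^k\delta_{10}(z,w)$ (or the swapped pair), each tensored with $cP_{M_{\mathrm{out}}}$. Since $\beta/\alpha\in\mu_k$, we have $\alpha^k=\beta^k$. The sum is then $\alpha^kc\,[x_z\ne x_w]P_{M_{\mathrm{out}}}$.
\end{enumerate}
In both cases we obtain \eqref{en:eq:nca-all-power-caps} with a nonzero $\theta_{pq}$.

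The main obstacle is the degenerate alternative of Lemma~\ref{en:lem:nca-slice-forms}, where $F^{01}_{pq}=F^{10}_{pq}=0$ for some pair. This would give $\partial^{pq}S=0$, which is not allowed since $\theta_{pq}$ must be nonzero. I would exclude it by contradiction using $Q\ne0$.

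In that situation every support input of $Q$ has $x_p=x_q$. Pick a support input $\alpha$ of $Q$ with $x_p=x_q=v$. Since $Q$ is EO, some port $r$ has $\alpha_r\ne v$, so the slices of the pair $p,r$ are not both zero. Apply Lemma~\ref{en:lem:nca-slice-forms} to $p,r$ and locate $q$.
\begin{enumerate}
 \item If $q$ lies in a factor of $Q_{\mathrm{out}}$, or belongs to $\{z,w\}$ in the first form, then that factor is full-rank antidiagonal. Hence within the nonzero slice $x_q$ takes both values while $x_p$ is fixed. This produces a support input with $x_q\ne x_p$.
 \item If $q\in\{z,w\}$ in the second form, then $E_0$ and $E_1$ are supported at complementary inputs of $(z,w)$. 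Passing between the $01$ and $10$ slices flips both $x_p$ and $x_q$. So the value of $x_q$ in one of the two slices differs from the fixed $x_p$ there.
\end{enumerate}
Either way we contradict the assumption that all support inputs have $x_p=x_q$. Therefore every pair $p,q$ falls into the nonzero case, and the lemma follows.
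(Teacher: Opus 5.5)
\paragraph{Gap in the zero case.}
Your nonzero-case computation is correct and matches the paper. You reduce $\partial^{pq}S$ to $(F^{01}_{pq})^{[k]}+(F^{10}_{pq})^{[k]}$ and evaluate it in both forms of Lemma~\ref{en:lem:nca-slice-forms}. The gap is in your exclusion of the zero case, in the subcase where $q\in\{z,w\}$ in the second form for the pair $(p,r)$. Suppose $q$ is pinned to $c$ in the slice $(x_p,x_r)=(v,1-v)$. Then it is pinned to $1-c$ in the slice $(1-v,v)$. Flipping $x_p$ and $x_q$ together \emph{preserves} whether $x_p=x_q$, so your inference that $x_q$ must differ from $x_p$ in one of the two slices is false. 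In fact $\alpha$ lies in the first slice with $\alpha_q=\alpha_p=v$, which forces $c=v$, so $x_q=x_p$ on the entire support of both slices. This is the crossing configuration $pw\,|\,rq$ of the quaternary gadget on $p,r,q,w$. It is fully compatible with $\operatorname{supp}(Q)\subseteq\{x_p=x_q\}$, and the $(p,r)$ slice data give no contradiction here.

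\paragraph{Repair and comparison with the paper.}
Use an auxiliary pair disjoint from $\{p,q\}$. The input $\alpha$ has exactly $d\ge3$ coordinates equal to $v$, so there is $s\notin\{p,q\}$ with $\alpha_s=v$; take any $r$ with $\alpha_r=1-v$. The $(r,s)$ slices are then nonzero. In every configuration of Lemma~\ref{en:lem:nca-slice-forms} for $(r,s)$, the nonzero slice contains an input with $x_p\ne x_q$:
\begin{itemize}
\item If $\{p,q\}=\{z,w\}$, the factor there is either full-rank antidiagonal or pinned to $01$ or $10$.
\item If exactly one of $p,q$ lies in $\{z,w\}$, the other sits in a full-rank antidiagonal factor of $Q_{\mathrm{out}}$ and takes both values.
\item If neither lies in $\{z,w\}$, then $p,q$ are either partners in $Q_{\mathrm{out}}$ or lie in independent factors.
\end{itemize}
Each case contradicts $F^{01}_{pq}=F^{10}_{pq}=0$.

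With this repair your route differs from the paper's. The paper shows that $\partial^eS=0$ forces $\partial^fS=0$ for every disjoint $f$. It does this by commuting $\partial^f\partial^eS=\partial^e\partial^fS$ and noting that a disequality shortloop of a nonzero matching product is nonzero. It then uses connectivity of the disjointness graph for $m\ge6$ and the three-local-sums argument to contradict $S\ne0$. Your repaired version is a one-step, support-level use of the same disjoint-pair idea, and it dispenses with the connectivity and three-sums steps.
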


\begin{proof}
The endpoint terms vanish on the $01,10$ slices, so these are also the corresponding
slices of $Q$. Therefore
\[
 \partial^{pq}S=(F_{pq}^{01})^{[k]}+(F_{pq}^{10})^{[k]}.
\]
This expression is the sum of the pointwise powers of two slices; it does not
interchange a shortloop with taking the power of the whole function. If both slices
are zero, the shortloop is zero. Otherwise apply Lemma~\ref{en:lem:nca-slice-forms}.
The function $Q_{\mathrm{out}}^{[k]}$ is a nonzero constant times an unweighted
disequality matching. In the first form, $E^{[k]}$ is a nonzero constant times $N$,
and $\eta_0^k=\eta_1^k\ne0$; the sum has nonzero coefficient $2\eta_0^k$. In the
second form, the $k$th powers of the two nonzero pinned values are equal and nonzero,
so their sum is again a nonzero constant times $N$. Thus every shortloop is either
zero or a nonzero constant times an unweighted disequality matching.

We now exclude zero shortloops. If $\partial^eS=0$ and a disjoint pair of ports $f$
satisfies $\partial^fS\ne0$, commuting the two shortloops gives
\[
 0=\partial^f\partial^eS=\partial^e\partial^fS.
\]
But the right side is a disequality shortloop of a nonzero disequality matching
product, and remains nonzero: closing along a matching edge gives a factor $2$,
whereas shorting across two matching edges merges them and gives a factor $1$.
Thus one zero shortloop forces every shortloop disjoint from it to be zero.

The graph on port pairs in which disjoint pairs are adjacent is connected. Indeed,
disjoint pairs are already adjacent; two distinct intersecting pairs have a union
of only three ports, and since $m\ge6$, a pair outside that union gives a path of
length two. Hence one zero shortloop would force all shortloops to be zero.
A nonzero EO function cannot have all its shortloops zero: choose three positions
with values $100$ in a nonzero input, and denote the function values obtained by
the three placements of the $1$ by $a,b,c$, with $a\ne0$. The corresponding
shortloop values are $a+b,a+c,b+c$, which cannot all vanish. Every shortloop is
therefore nonzero, proving~\eqref{en:eq:nca-all-power-caps}.
\end{proof}

\begin{lemma}[Rational values and bit-flip symmetry]\label{en:lem:nca-rational-power}
There is a nonzero constant $\theta$ such that $R=S/\theta$ is a nonzero rational-valued,
bit-flip-symmetric EO function, and each of its disequality shortloops is still
a nonzero constant times an unweighted disequality matching.
\end{lemma}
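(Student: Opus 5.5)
The plan is to imitate the common-phase argument of Lemma~\ref{en:lem:c1-common-phase}, starting from the conclusion of Lemma~\ref{en:lem:nca-power-caps}. There are three steps: show that all the shortloop coefficients $\theta_{pq}$ are commensurable by powers of two, deduce complement symmetry of $S$, and finally solve the linear system given by the shortloops over $\mathbb Q$. Throughout, $S=Q^{[k]}$ is a nonzero EO function of arity $m=2d\ge6$, and every $\partial^{pq}S=\theta_{pq}P_{M_{pq}}$ with $\theta_{pq}\ne0$.

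\emph{Step 1 (coefficient ratios).} Take two disjoint pairs $e,f$ and compare $\partial^f\partial^eS=\partial^e\partial^fS$ using Lemma~\ref{en:lem:c1-matching-merge}. A disequality shortloop of $\theta P_M$ equals $2\theta P_{M'}$ or $\theta P_{M'}$, according as it closes a matching edge or merges two edges. Since the support of a matching product determines both its matching and its coefficient, we get $\theta_e/\theta_f\in\{1/2,1,2\}$. For intersecting pairs we route through a third pair disjoint from both, which exists because $m\ge6$; this is the connectivity argument already used in Lemma~\ref{en:lem:nca-power-caps}. Setting $\theta=\theta_{12}$, we obtain $\theta_{pq}=2^{n_{pq}}\theta$ with $n_{pq}\in\mathbb Z$ for every pair. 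Hence $R=S/\theta$ satisfies $\partial^{pq}R=2^{n_{pq}}P_{M_{pq}}$, and every shortloop of $R$ is rational-valued.

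\emph{Step 2 (bit-flip symmetry).} Let $\bar R(x)=R(\bar x)$. Complementation commutes with unweighted disequality shortloops, since both $(0,1)$ and $(1,0)$ are summed. It also fixes every $P_M$. Therefore $\partial^{pq}(R-\bar R)=2^{n_{pq}}(P_{M_{pq}}-P_{M_{pq}})=0$ for all $p\ne q$. By Lemma~\ref{en:lem:c1-zero-merges}, $R-\bar R$ is supported only on $0^m,1^m$. These inputs lie outside the EO layer because $m\ge6$, so $R=\bar R$.

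\emph{Step 3 (rationality).} Treat the values of $R$ on the EO layer as unknowns. The shortloop identities $\partial^{pq}R=2^{n_{pq}}P_{M_{pq}}$ form a linear system whose coefficients are $0,1$ and whose right-hand sides are rational. The homogeneous system has only the trivial EO solution: an EO function all of whose shortloops vanish is zero, by Lemma~\ref{en:lem:c1-zero-merges} together with the EO condition. The system is consistent over $\mathbb C$, since $R$ solves it, and its coefficients and right-hand sides are rational. Consistency is a rank condition that does not depend on the field, so the system also has a rational solution. Uniqueness then forces $R$ to equal that rational solution. Finally, $R$ is nonzero because $S\ne0$, and dividing by $\theta$ preserves the matching form of all shortloops.

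I expect Step 1 to be the main point requiring care. One must verify that the double shortloops always compare coefficients of nonzero matching products, which follows from Lemma~\ref{en:lem:c1-matching-merge} because shortloops of nonzero matching products are nonzero. Once Step 1 is in place, Steps 2 and 3 are routine.
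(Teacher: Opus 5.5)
Your proposal is correct and follows essentially the same route as the paper. Both arguments first commute shortloops on disjoint pairs and use the disjointness graph to get $\theta_e/\theta\in2^{\mathbb Z}$. Both then use injectivity of the joint shortloop map on the EO layer to obtain rationality (by elimination over $\mathbb Q$) and bit-flip symmetry. The differences are cosmetic: you prove symmetry before rationality, and you cite Lemma~\ref{en:lem:c1-zero-merges} for injectivity where the paper invokes the equivalent three-local-sums argument.
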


\begin{proof}
Commute two shortloops on disjoint pairs $e,f$. A shortloop of a matching product
introduces only a coefficient $1$ or $2$, so~\eqref{en:eq:nca-all-power-caps} gives
\[
 \theta_e c_e P_{M'}=\theta_f c_f P_{M''},\qquad c_e,c_f\in\{1,2\}.
\]
Both sides are nonzero and have equal supports, so their matchings and coefficients
agree. Hence $\theta_e/\theta_f\in\{1/2,1,2\}$. Following paths in the disjointness
graph above, fix $e_0$ and set $\theta=\theta_{e_0}$ to obtain
$\theta_e/\theta\in2^{\mathbb Z}$ for every $e$. Thus all shortloops of $R=S/\theta$
are rational-valued.

Treat the values on all EO inputs as unknowns. All disequality shortloops together
form a linear system with rational coefficients. This joint linear map is injective
on the EO subspace: a nonzero EO function in its kernel would contradict the
three-local-sums argument above. The system has a unique solution and a rational
right-hand side, so Gaussian elimination shows that $R$ itself is rational-valued.

Every unweighted disequality matching product is bit-flip symmetric, and disequality
shortloops commute with the simultaneous flip of all bits. Hence all shortloops of
$R(x)-R(\bar x)$ vanish. This difference is still an EO function, so injectivity
gives $R(x)=R(\bar x)$.
\end{proof}

\subparagraph*{Excluding the arity-eight exception and recovering the weights}

By Lemma~\ref{en:lem:nca-rational-power}, Theorem~\ref{en:thm:c1-eo-matching-reduction}
applies to $R$. It follows that $S$ is a nonzero constant times an unweighted
disequality matching or, after a variable permutation, a nonzero multiple of the
arity-eight exception $F'_8$.

\begin{lemma}[The arity-eight exception cannot occur]\label{en:lem:nca-exclude-f8}
Here $S$ cannot be an arity-eight exception. Consequently, there are a nonzero
constant $c$ and a perfect matching $M$ such that $S=cP_M$.
\end{lemma}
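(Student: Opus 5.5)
The plan is to exclude $F'_8$ by comparing supports. Taking a pointwise power preserves supports, and Lemma~\ref{en:lem:nca-slice-forms} fixes the shape of the support of every slice of $F$. Suppose, for a contradiction, that $S=Q^{[k]}$ is a nonzero multiple of $F'_8$ after a variable permutation. Then $m=8$ and $\supp(Q)=\supp(F'_8)$. Using the labeling of the eight ports by $\mathbb F_2^3$ from the verification of~\eqref{en:eq:c1-finite-f8-merges}, this support consists exactly of the fourteen truth tables of nonconstant affine functions $z\mapsto\alpha\cdot z+\beta$. The endpoint terms $a\delta_{0^8},b\delta_{1^8}$ vanish on every $01$ and $10$ slice. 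Hence, for distinct ports $u,v$, we have $\supp F^{01}_{uv}=\supp Q^{01}_{uv}$, and this set is the set of affine functions with $f(u)=0$ and $f(v)=1$, restricted to the other six points.

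First I count and describe this slice support. Put $\delta=u+v$. The condition $f(u)\ne f(v)$ means $\alpha\cdot\delta=1$, which allows four choices of $\alpha$. The condition $f(u)=0$ then fixes $\beta=\alpha\cdot u$, so $f(z)=\alpha\cdot(z+u)$. The slice support therefore has exactly four elements, and both slices $F^{01}_{uv},F^{10}_{uv}$ are nonzero. Moreover, for any remaining point $z\notin\{u,v\}$, the vector $z+u$ lies outside $\{0,\delta\}$. So $\alpha\mapsto\alpha\cdot(z+u)$ is a linear form that is not determined by the constraint $\alpha\cdot\delta=1$, and it takes both values $0$ and $1$ on the four admissible $\alpha$. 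Consequently no remaining coordinate is constant on the support of the slice.

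Next I compare this with the two forms of Lemma~\ref{en:lem:nca-slice-forms} for the pair $p,q=u,v$. In the first form, $F^{01}_{uv}=\eta_0E(z,w)Q_{\mathrm{out}}$. Here $E$ is a full-rank antidiagonal function with two support points, and $Q_{\mathrm{out}}$ is a nonzero antidiagonal pairing product on four variables with $2^2$ support points. The slice support would then have $8$ elements, contradicting the count of four. In the second form, $E_0$ is supported on a single input of $(z,w)$, so the coordinates $z,w$ are constant on $\supp F^{01}_{uv}$. This contradicts the fact that every remaining coordinate varies. Both forms are impossible, so $S$ is not an exception. Theorem~\ref{en:thm:c1-eo-matching-reduction}, applied to $R=S/\theta$ from Lemma~\ref{en:lem:nca-rational-power}, then leaves only case~(1), which gives $S=cP_M$ with $c\ne0$.

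The main obstacle is making sure the support identification is legitimate. One point is that $\supp(Q)$ must be read off from $S$ only after undoing the variable permutation in the theorem; this is harmless, because the slice argument above is symmetric in all ports. The other point is that the slices of $F$, not of $S$, are the objects that Lemma~\ref{en:lem:nca-slice-forms} constrains, so I need the slices of $F$ and of $Q$ to have the same support. This holds because the endpoint terms vanish on mixed slices and because $x\mapsto x^k$ preserves nonvanishing. The remaining steps are only the elementary counting over $\mathbb F_2^3$ described above.
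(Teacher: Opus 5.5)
Your proposal is correct and matches the paper's proof essentially step for step: you identify $\supp(Q)$ with the fourteen nonconstant affine truth tables, note that each $01$ slice has exactly four support points with no remaining port fixed, and contradict the local form $E_0Q_{\mathrm{out}}$ of Lemma~\ref{en:lem:nca-slice-forms}. The only cosmetic difference is that you rule out the two local forms separately (support size eight versus a pinned pair), whereas the paper uses the support size four of $Q_{\mathrm{out}}$ to force $\lvert\supp E_0\rvert=1$ directly.
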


\begin{proof}
The fourteen support inputs in~\eqref{en:eq:c1-f8} can be viewed as the truth tables
of all nonconstant affine functions $a+b\cdot v$ on the eight points of
$\mathbb F_2^3$. To verify this description, note that the four linear parity-check
equations are independent, so their solution space has dimension four. All sixteen
affine truth tables satisfy these equations and hence give exactly that space.
Removing the two constant functions of weights zero and eight leaves fourteen
inputs of weight four.

Fix any two distinct coordinates $v,w$ to $0,1$. This imposes two independent
equations on the four affine coefficients and leaves exactly four truth tables.
The evaluation linear form at any third coordinate $u$ is not in the span of the
first two evaluation forms: its constant coefficient is one, excluding zero and
the sum of the first two forms, while $u\ne v,w$ excludes the first two forms
themselves. Thus every remaining coordinate takes both zero and one among these
four tables.

If $S$ were a nonzero multiple of $F'_8$, then $Q$ would have the same support,
since pointwise powers preserve support. Every $01$ slice of $Q$ would have four
support inputs, with none of the six remaining ports fixed. On the other hand,
Lemma~\ref{en:lem:nca-slice-forms} writes this slice as $E_0Q_{\mathrm{out}}$. Here
$Q_{\mathrm{out}}$ has two full-rank binary factors and therefore support size four.
Thus $E_0$ must have support size one, fixing two of the remaining ports. This
contradicts the preceding property.
\end{proof}

\begin{lemma}[Recovering an actual weighted matching]\label{en:lem:nca-recover-weighted}
If $Q\ne0$, then $Q\in\mathcal P_{\mathcal B}$, and its binary factors are indeed
ordinarily realizable.
\end{lemma}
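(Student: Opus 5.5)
The plan is to exploit the fact, already established by Lemma~\ref{en:lem:nca-exclude-f8}, that $S=Q^{[k]}=cP_M$ for a nonzero constant $c$ and a perfect matching $M$. Since pointwise powers preserve support, $\supp(Q)=\supp(P_M)$: an input lies in the support of $Q$ exactly when it takes opposite values on every edge of $M$. In particular, for every edge $\{u,v\}\in M$ the slices $Q^{00}_{uv},Q^{11}_{uv}$ vanish. The endpoint terms $a\delta_{0^m}+b\delta_{1^m}$ of $F$ from Lemma~\ref{en:lem:nca-eo-endpoints} live only in the $00$ and $11$ slices, so $F^{01}_{uv}=Q^{01}_{uv}$ and $F^{10}_{uv}=Q^{10}_{uv}$. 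Both of these have support exactly $\supp(P_{M\setminus\{uv\}})$, which is nonempty and leaves none of the remaining ports fixed.

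First I will fix one edge $\{u,v\}\in M$ and apply Lemma~\ref{en:lem:nca-slice-forms} to the pair $(p,q)=(u,v)$. Both slices are nonzero, so one of the two nonzero forms occurs. The second, pinned form would make the support of $F^{01}_{uv}$ fix the two variables $z,w$. This is impossible, because $\supp(P_{M\setminus\{uv\}})$ lets every remaining variable take both values when $d-1\ge2$, which holds since $m\ge6$. Hence the first form holds, and
\[
 Q=\bigl(\eta_0\delta_{01}+\eta_1\delta_{10}\bigr)(x_u,x_v)\otimes E(z,w)\,Q_{\mathrm{out}},
 \qquad \eta_1/\eta_0\in\mu_k .
\]
Thus $Q$ is a pairing tensor product whose $uv$ factor is the antidiagonal matrix with entries $\eta_0,\eta_1$. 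The other factors are $E$ together with the factors of $Q_{\mathrm{out}}$.

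The remaining point, which I expect to be the only delicate one, is to check that every factor is an actual full-rank antidiagonal binary function, so that the product lies in $\mathcal P_{\mathcal B}$.
\begin{enumerate}
 \item The factors of $Q_{\mathrm{out}}$ are actual by the statement of Lemma~\ref{en:lem:nca-slice-forms}.
 \item For $E$, I would reuse the construction in that lemma's proof: closing $Q_{\mathrm{out}}$ with probes $K_t$ gives the actual quaternary $H(u,v,z,w)$, whose matching is $uv|zw$ in the first form. By the lower-arity hypothesis, $H$ is an actual antidiagonal pairing product, and its $zw$ factor is a nonzero multiple of $E$.
 \item For the $uv$ factor, its ratio $\eta_1/\eta_0$ lies in $\mu_k$. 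It is therefore a nonzero scalar multiple of some $A_j=D^jN$ in \eqref{en:eq:cyc-native-binary}, because these matrices realize every antidiagonal ratio $\lambda^{2j}$ and hence every element of $\mu_k$. Equivalently, it is the $uv$ factor of the actual $H$.
\end{enumerate}
Since all binary directions of the group have been adjoined under the framework, each factor is ordinarily realizable. The matching of this decomposition is $M$ itself, so no further consistency check across edges is needed.

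If a subtlety arises with $E$ being determined only up to scalar, I would absorb all scalars into the overall constant of $Q$. The definition of $\mathcal P_{\mathcal B}$ allows a nonzero overall scalar, and Lemma~\ref{en:lem:per-signature-scaling} compensates for it.
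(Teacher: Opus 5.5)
Your proof is correct, but it assembles the product differently from the paper.

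The paper applies Lemma~\ref{en:lem:nca-slice-forms} at \emph{every} edge $\{p,q\}\in M$. From each application it keeps only the conclusion that $Q^{01}_{pq}$ and $Q^{10}_{pq}$ are proportional. It then obtains $Q=\gamma\prod_{e\in M}B_e$ by treating each edge orientation as a logical bit whose flip multiplies the value by a ratio independent of the other bits. Realizability is checked by closing all other edges of $M$ in $F$ with kernels $K_t$, which also kill the endpoint terms. Each $B_e$ is thereby exhibited literally as an ordinary binary gadget built from $F$.

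You observe that one edge suffices. In the first form, the lemma already gives both slices as multiples of the same full-rank antidiagonal pairing product $E(z,w)Q_{\mathrm{out}}$. So $Q$ can be read off at once, and the logical-bit multiplicativity step becomes unnecessary.

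Your route buys brevity. The cost is heavier reliance on the output of Lemma~\ref{en:lem:nca-slice-forms}: its explicit factorization and the ``actual factors'' clause for $Q_{\mathrm{out}}$, which in turn rest on the common-matching and one-changing-factor lemmas. You also need two further facts: uniqueness of pairing-product factors (to identify the $zw$ factor of $H$ with $E$), and that every antidiagonal matrix with ratio in $\mu_k$ is a multiple of some $D^jN$. The paper's final check, by contrast, is self-contained. It shows that each factor can be extracted from $F$ itself, not merely that it lies among the group's directions.
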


\begin{proof}
Since $S=cP_M$ and pointwise powers preserve support, the support of $Q$ is exactly
the full support of $P_M$. For any pair $p,q$ in the original matching $M$, both
slices $F_{pq}^{01}=Q_{pq}^{01}$ and $F_{pq}^{10}=Q_{pq}^{10}$ have the full support
of the remaining matching, of size $2^{d-1}$. In
Lemma~\ref{en:lem:nca-slice-forms}, the external product has support size $2^{d-2}$,
so the local factor must have full rank and cannot be in the complementary-pin
case. The two slices are therefore proportional.

View the two orientations of each matching edge as a logical bit. This
proportionality says that changing any logical bit changes the function value by
a ratio independent of the other logical bits. All values on the full matching
support are nonzero. Starting at a fixed input and changing the logical bits one
at a time therefore gives
\[
 Q=\gamma\prod_{e\in M}B_e,\qquad \gamma\ne0,
\]
where every $B_e$ is a full-rank antidiagonal function. The local forms already
show that its ratio belongs to $\mu_k$.

The actual realizability of the factors can also be checked directly. In the
original function $F$, close all other matching edges of $M$ using effective
kernels $K_t$ with nonzero closure values. Each antidiagonal kernel used for a
closure kills the all-zero and all-one endpoint terms. The result is therefore
a nonzero multiple of $B_e$, with scalar equal to the product of the other
factors' nonzero closure values. This is an ordinary binary gadget, and the
complete binary closure places it in the actual binary set. Consequently,
$Q\in\mathcal P_{\mathcal B}$.
\end{proof}

\begin{theorem}[Shape of a minimum counterexample in the native cyclic model]\label{en:thm:native-cyclic-shape}
Assume the binary closure condition of this section, and suppose that every nonzero
ordinary quaternary gadget belongs to $\mathcal P_{\mathcal B}$. If there is an
ordinarily realizable even-arity function that is not a pairing product, choose
one of minimum arity and call it $F$. Its arity is even, $m\ge6$, and
\[
 F=Q+a\delta_{0^m}+b\delta_{1^m},
\]
where $Q=0$ or $Q$ is an actual weighted antidiagonal binary pairing product, and
at least one of $a,b$ is nonzero. This holds for every $k\ge3$.
\end{theorem}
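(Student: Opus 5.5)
The plan is to assemble the statement from the lemmas of this subsection, applied to the minimum-arity counterexample $F$.

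First I fix the setting. Since every nonzero quaternary gadget lies in $\mathcal P_{\mathcal B}$, and binary gadgets are pairing products by the binary closure condition, the minimum arity of a non-pairing-product is at least $6$. Odd arities have already been excluded by the framework, so $m=2d\ge6$ is even. By minimality, every ordinarily realizable function of smaller even arity is zero or lies in $\mathcal P_{\mathcal B}$. This is exactly the standing hypothesis under which Lemmas~\ref{en:lem:nca-eo-endpoints}--\ref{en:lem:nca-recover-weighted} were proved, so all of them apply to $F$.

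Next I would apply Lemma~\ref{en:lem:nca-eo-endpoints} to write $F=Q+a\delta_{0^m}+b\delta_{1^m}$ with $Q$ an EO function. If $Q\ne0$, the chain of lemmas finishes the identification of $Q$. Lemma~\ref{en:lem:nca-power-caps} makes every disequality shortloop of $S=Q^{[k]}$ a nonzero matching product. Lemma~\ref{en:lem:nca-rational-power} normalizes $S$ to a rational, bit-flip-symmetric function, so Theorem~\ref{en:thm:c1-eo-matching-reduction} applies to it. Lemma~\ref{en:lem:nca-exclude-f8} removes the arity-eight exception. Finally, Lemma~\ref{en:lem:nca-recover-weighted} gives $Q\in\mathcal P_{\mathcal B}$ with actually realizable antidiagonal factors. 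So in all cases $Q=0$ or $Q$ is an actual weighted antidiagonal pairing product.

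It remains to show $(a,b)\ne(0,0)$. If $a=b=0$, then $F=Q$. Either $F=0$, which is excluded because $F$ is nonzero as a non-pairing-product counterexample, or $F=Q\in\mathcal P_{\mathcal B}$ is a pairing product, which contradicts the choice of $F$.

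The only delicate point is checking that the hypotheses of the cited lemmas really are the minimality hypothesis. In particular, every ordinary gadget of lower even arity must be covered: quaternary ones by the assumption, and higher ones below $m$ by minimality. This check does not depend on the parity of $k$, since none of the lemmas used a parity restriction (the $k=3$ case is handled inside Lemma~\ref{en:lem:nca-common-matching}). That gives the statement for every $k\ge3$.
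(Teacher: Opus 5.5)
Your proposal is correct and follows the paper's proof essentially step for step. Minimality supplies the standing lower-arity hypothesis and $m\ge6$, and Lemma~\ref{en:lem:nca-eo-endpoints} gives the layer decomposition; the chain of lemmas through Lemma~\ref{en:lem:nca-recover-weighted} then identifies a nonzero $Q$ as an actual antidiagonal pairing product, and $a=b=0$ would put $F$ in $\mathcal P_{\mathcal B}$. (Two cosmetic remarks: you should also exclude arity zero, since nonzero constants are empty pairing products, and $m$ is even simply because $F$ was chosen among even-arity functions, not because odd arities were excluded by the framework.)
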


\begin{proof}
Minimality puts every ordinarily realizable function of smaller even arity in
$\mathcal P_{\mathcal B}$. Arities zero, two, and four have been excluded, so
$m\ge6$. Lemma~\ref{en:lem:nca-eo-endpoints} gives the displayed decomposition into
layers. If $Q=0$, neither pointwise powers nor the EO structure theorem are needed:
$F$ is already a pure endpoint function. If $Q\ne0$, apply in order the lemmas on
matching consistency, shortloops of pointwise powers, rational values, exclusion
of the arity-eight exception, and recovery of weights. These give
$Q\in\mathcal P_{\mathcal B}$. If also $a=b=0$, then
$F=Q\in\mathcal P_{\mathcal B}$, contrary to its choice.
\end{proof}

\subparagraph*{Two-copy gadgets and multiline interpolation for the endpoints}\phantomsection\label{en:subsec:native-cyclic-tail}

Continue in the native $N$-edge model, where the binary closure condition gives
the functions $D^jN$. Connecting the actual binary function $D^jN$ to a port has
transfer matrix $D^j$. Thus for every $r\in\mu_k$, the one-port transfer
$\diag(1,r)$ is available, up to a known nonzero scalar. The next two lemmas
treat the endpoint part obtained in Theorem~\ref{en:thm:native-cyclic-shape}.
Both hold for every $k\ge3$.

\begin{lemma}[A quaternary support contradiction for two endpoints]\label{en:lem:native-cyclic-two-endpoints}
Suppose every nonzero ordinary quaternary gadget belongs to the antidiagonal
binary pairing product class $\mathcal P_{\Bgroup}$. If an ordinary gadget of
arity $2d\ge6$ satisfies
\begin{equation}\label{en:eq:native-cyclic-tail-shape}
 F=Q+a\delta_{0^{2d}}+b\delta_{1^{2d}},
\end{equation}
where either $Q=0$ or, on some matching,
\[
 Q=\prod_{i=1}^{d}E_i(x_i,y_i),\qquad
 E_i=\begin{pmatrix}0&u_i\\v_i&0\end{pmatrix},\qquad u_iv_i\ne0,
\]
then $ab=0$. Excluding two nonzero endpoints here does not require a root-of-unity
ratio condition on the factors $E_i$.
\end{lemma}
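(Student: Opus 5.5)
The plan is to argue by contradiction. Assume $ab\ne0$ and build a two-copy quaternary gadget from $F$, in the spirit of Lemma~\ref{en:lem:c1-double-F}, whose support cannot be that of a nonzero member of $\mathcal P_{\Bgroup}$. The support argument should never look at the ratio $u_i/v_i$; this is why no root-of-unity condition will be needed.

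\textbf{The gadget.} If $Q\ne0$, relabel so that $x_1,y_1$ is one pair of its matching; if $Q=0$, take any two ports. Take two copies of $F$, with ports $x_i,y_i$ and $x_i',y_i'$. Keep $x_1,y_1,x_1',y_1'$ external, and join every other port of the first copy to the corresponding port of the second copy by a native $N$ edge. This is an ordinary quaternary gadget $G(x_1,y_1,x_1',y_1')$. Each $N$ edge forces the two copies to take complementary values on the $2d-2$ joined ports. Expanding $F\otimes F$ into $(Q+a\delta_{0}+b\delta_{1})^{\otimes2}$ gives nine terms, which I will evaluate separately.

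\textbf{Evaluating the nine terms.}
\begin{enumerate}[label=\textup{(\roman*)}]
 \item \emph{Endpoint--endpoint.} Complementarity on the joined ports kills $a\cdot a$ and $b\cdot b$. The products $a\cdot b$ and $b\cdot a$ survive and contribute $ab(\delta_{0011}+\delta_{1100})$ on the order $(x_1,y_1,x_1',y_1')$.
 \item \emph{Mixed $Q$--endpoint.} A term of this kind would require $Q$ in one copy to be constant on all joined ports. Because $d\ge3$, these ports contain the full pair $x_2,y_2$, on which $Q$ forces opposite values. Hence all mixed terms vanish.
 \item \emph{$Q$--$Q$.} Each joined pair $i\ge2$ closes into a cycle. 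If the first copy takes $(x_i,y_i)=(c,\bar c)$, the second must take $(\bar c,c)$, so summing over $c$ gives $u_iv_i+v_iu_i=2u_iv_i\ne0$. The kept ports contribute $E_1(x_1,y_1)E_1(x_1',y_1')$, which is nonzero exactly on the four inputs $0101,0110,1001,1010$.
\end{enumerate}
Since $\prod_{i\ge2}2u_iv_i\ne0$, we obtain
\[
 G=ab(\delta_{0011}+\delta_{1100})
   +\Bigl(\prod_{i\ge2}2u_iv_i\Bigr)E_1\otimes E_1 ,
\]
where the second term is absent when $Q=0$.

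\textbf{Contradiction.} In either case $G\ne0$. A nonzero member of $\mathcal P_{\Bgroup}$ of arity four is a product of two full-rank antidiagonal factors, so its support has exactly four points. If $Q=0$, the support of $G$ has size two. If $Q\ne0$, the two summands of $G$ have disjoint supports, so the support of $G$ has size six. Both cases contradict the hypothesis that every nonzero ordinary quaternary gadget lies in $\mathcal P_{\Bgroup}$. Therefore $ab=0$.

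The main point requiring care is step (ii): mixed terms vanish only because $d\ge3$ guarantees a complete matching pair among the joined ports. The bookkeeping of which copy takes which complementary value must also be done carefully. Once the support count is in hand, no property of the ratios $u_i/v_i$ enters the argument.
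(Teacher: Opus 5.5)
Your proposal is correct and matches the paper's proof: the same two-copy gadget joined by native $N$ edges on every port except one matching pair, the same output $\bigl(\prod_{i\ge2}2u_iv_i\bigr)E_1\otimes E_1+ab(\delta_{0011}+\delta_{1100})$, and the same support count of six (or two when $Q=0$) against four. One minor correction: step (ii) needs only $d\ge2$ to guarantee a complete matching pair among the joined ports, so it is not the hypothesis $2d\ge6$ that makes the mixed terms vanish.
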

\begin{proof}
First suppose $Q\ne0$. Take two whole copies of the gadget $F$, leaving the two
ports of the first matching pair external in each copy. For $i\ge2$, connect
the two copies of $x_i$ by an $N$ edge, and likewise connect the two copies of $y_i$.
Order the four external ports as $x_1,y_1,x'_1,y'_1$. When the two $Q$ terms are
paired, the $i$th internal matching pair contributes
\[
 \sum_{x,y}E_i(x,y)E_i(1-x,1-y)=2u_iv_i.
\]
Writing $C=\prod_{i=2}^{d}2u_iv_i\ne0$, this contribution is
$CE_1\otimes E_1$. Pairing opposite endpoints contributes
$ab(\delta_{0011}+\delta_{1100})$; equal endpoints cannot be connected across
$N$ edges. Every mixed term pairing an endpoint with a $Q$ is also zero: at least
one complete internal matching pair is forced to be $00$ or $11$, where its
antidiagonal factor vanishes. The actual quaternary output is therefore exactly
\begin{equation}\label{en:eq:native-cyclic-tail-six-support}
 H=CE_1\otimes E_1+ab(\delta_{0011}+\delta_{1100}).
\end{equation}
The first term has exactly four nonzero support inputs, $0101,0110,1001,1010$,
disjoint from the second term's $0011,1100$. If $ab\ne0$, then $H$ has six support
inputs, whereas a pairing product of two full-rank antidiagonal factors has
exactly four, a contradiction.

If $Q=0$, choose any pair of external ports in each copy and make the same
connections. The output is $ab(\delta_{0011}+\delta_{1100})$. For $ab\ne0$ this has
two support inputs and is again not an actual binary pairing product. Thus
$ab=0$ in both cases.
\end{proof}

\begin{lemma}[Interpolation from a single endpoint at any even arity]\label{en:lem:native-cyclic-endpoint-interpolation}
Let $\Gamma$ be a finite algebraic even-arity function set with edge kernel $N$,
whose complete ordinary binary closure has exactly the nonzero directions $D^jN$.
Suppose $\Gamma$ ordinarily realizes a function of the
form~\eqref{en:eq:native-cyclic-tail-shape} with exactly one nonzero endpoint, where
$Q=0$ or $Q$ is a full-rank antidiagonal matching product whose factors have
nonzero-entry ratios in $\mu_k$. Then the entire $\Gamma$ problem has a decidable
$\FP/\sharpP$-hard dichotomy. This lemma does not assume that every ordinary
quaternary gadget is a pairing product.
\end{lemma}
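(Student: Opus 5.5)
We model the argument on the Jordan interpolation of Lemma~\ref{en:lem:c1-general-jordan}. The aim is to realize an endpoint pin, apply the decomposition theorem to obtain a nonzero unary function, and then use the odd-arity dichotomy. By symmetry assume $a\ne0$, $b=0$; the case $b\ne0$ uses the lower-triangular version of the same Jordan block. If $Q=0$, then $F=a\delta_{0^{2d}}$ is itself a nonzero multiple of the pin $\delta_{0^{2d}}=\delta_0^{\otimes 2d}$. Theorem~\ref{en:thm:decomposition} then either puts the problem in $\FP$ or makes $\delta_0$ available, and Theorem~\ref{en:thm:odd-arity-fp} finishes the argument. So the real work is the case $Q\ne0$, where the goal is to isolate the endpoint term by interpolation.

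\textbf{Step 1 (chain gadget).} We first reduce to a quaternary or low-arity carrier. On the matching of $Q$, take the first pair $(x_1,y_1)$ and close every other pair $(x_i,y_i)$, $i\ge2$, with an actual binary function $D^{j_i}N$. The effective kernel is some $K_t$ with $t\in\mu_k$, and we choose it so that it gives a nonzero closure of $E_i$. Antidiagonal kernels kill the endpoint term, so this loses $a$. We therefore keep more ports open instead.

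Split the ports into two halves $X=\{x_1,\dots,x_d\}$ and $Y=\{y_1,\dots,y_d\}$ and flatten $F$ as a $2^d\times2^d$ matrix $\mathbf F$ from $X$ to $Y$. Chain $m$ copies by joining the $Y$ ports of one copy to the $X$ ports of the next with $N$ edges, giving $\mathbf F_m=(\mathbf F J)^{m-1}\mathbf F$ with $J=N^{\otimes d}$. The product $Q$ restricted to this flattening is the tensor product $\bigotimes_i\mathbf E_i$ of $2\times2$ antidiagonal matrices, so $QJ$ is diagonal with diagonal entries products of $u_i,v_i$. The endpoint $a\,e_{0^d}e_{0^d}^{\mathsf T}$ satisfies $e_{0^d}^{\mathsf T}Je_{0^d}=0$. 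Consequently $\mathbf F J$ restricted to the span of $e_{0^d},e_{1^d}$ is the Jordan-type block $\left(\begin{smallmatrix}\mu&a\\0&\nu\end{smallmatrix}\right)$, where $\mu=\prod u_i$ and $\nu=\prod v_i$ up to ordering, and it is diagonal elsewhere. Since all the ratios $u_i/v_i$ lie in $\mu_k$, we can first pre-compose the $X$ ports with available one-port transfers $\diag(1,r)$, $r\in\mu_k$, to force $\mu=\nu$; this is the only use of the root-of-unity hypothesis. Then $\mathbf F_m$ has endpoint coefficient $m\mu^{m-1}a$, the only term linear in $m$, exactly as in~\eqref{en:eq:c1-general-jordan-expansion}.

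\textbf{Step 2 (interpolation).} For a target net containing $M$ copies of $\delta_{0^{2d}}$, substitute $\mathbf F_m$ for each copy. The resulting value is an exponential polynomial in $m$ whose bases are finitely many products of the diagonal spectral values. We then solve the confluent Vandermonde system exactly as in the proof of Lemma~\ref{en:lem:c1-general-jordan} and extract the coefficient of $m^M\mu^{mM}$, which equals $(a/\mu)^M$ times the target value. This makes the pin Turing-available with arbitrarily many occurrences. We then finish with Theorem~\ref{en:thm:decomposition} and Theorem~\ref{en:thm:odd-arity-fp}, as in the $Q=0$ case.

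\textbf{Main obstacle.} The main obstacle is verifying that no other spectral component carries a factor of $m$. The diagonal entries of $QJ$ are products mixing $u_i$ and $v_i$, and several of them may coincide with $\mu$; however, the corresponding blocks are diagonal, so they contribute only terms of the form $\mu^m$ with no factor $m$. Only the $2\times2$ endpoint block is non-semisimple, so the degree-$M$ coefficient still requires choosing the endpoint term at all $M$ positions. Therefore the argument never needs the quaternary pairing-product assumption, which matches the statement's remark that the lemma does not assume it.
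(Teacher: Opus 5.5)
Your proof is correct and follows the paper's route: root-of-unity transfers on one end of each matching pair, and a chain of whole copies of $F$ joined by $d$ native $N$ edges so that the endpoint block of $\mathbf F J$ becomes a unipotent Jordan block. It then interpolates the coefficient carrying $m^M$, and finishes with the decomposition theorem (applied after returning to the equality-edge model via $K$, a step you leave implicit) and the odd-arity dichotomy. The only difference is that the paper applies $\diag(1,u_i/v_i)$ on every pair, turning each factor into $u_iN$, so that $F_*J=I+sE_{z,o}$ and the chain output is exactly $J+\ell sE_{z,z}$; ordinary polynomial interpolation at $M+1$ points then replaces your confluent Vandermonde system.
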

\begin{proof}
First suppose $Q\ne0$, $a\ne0$, and $b=0$. Apply the actually available transfer
$T_i=\diag(1,u_i/v_i)$ to the first port of each pair, so that $T_iE_i=u_iN$.
After applying these transfers to the whole $F$ and dividing by the known nonzero
scalar $c=\prod_i u_i$, we obtain
\[
 F_* =\prod_{i=1}^{d}N(x_i,y_i)+s\delta_{0^{2d}},\qquad s=a/c\ne0,
\]
where an ordinary realization may carry an additional known nonzero scalar.
If initially one only knows that all nonzero values of $Q$ have the same $k$th
power, fixing all other matching pairs and reversing the $i$th pair gives
$(u_i/v_i)^k=1$, so the same transfers are allowed.

Flatten with one end of each matching pair on the row side and the other on the
column side. Let $J=N^{\otimes d}$, $z=0^d$, and $o=1^d$. In matrix form,
\[
 F_*=J+sE_{z,z},\qquad F_*J=I+sE_{z,o},\qquad E_{z,o}^{\,2}=0.
\]
Put $\ell$ whole copies of the gadget in series, connecting the column ports of
each copy to the corresponding row ports of the next by $d$ native $N$ edges.
The actual output is
\begin{equation}\label{en:eq:native-cyclic-multiline-chain}
 (F_*J)^{\ell-1}F_*=(I+\ell sE_{z,o})J=J+\ell sE_{z,z}.
\end{equation}
Every vertex block in this chain is a whole copy of $F_*$, and adjacent copies
are connected by native $N$ edges. If the actual standard gadget is $\kappa F_*$,
a chain of length $\ell$ carries the additional known nonzero factor $\kappa^\ell$.

Given an instance with $M$ occurrences of the target $\delta_{0^{2d}}$, first
separate these target vertices from the remaining fixed function-net. Replace
each occurrence by a chain of length $\ell$ and expand multilinearly over these
$M$ positions. After division by $\kappa^{\ell M}$, the oracle answer is a
polynomial in $\ell$ of degree at most $M$. Its coefficient of $\ell^M$ is exactly
$s^M$ times the answer to the target instance. Query at $\ell=1,\ldots,M+1$,
interpolate the leading coefficient using $M+1$ queries, and divide by $s^M$.
Each query uses $O(M^2)$ copies of the fixed gadget for the replacements. Hence
\begin{equation}\label{en:eq:native-cyclic-endpoint-access}
 \Holant(N\mid\Gamma,\delta_{0^{2d}})
 \equiv_T\Holant(N\mid\Gamma).
\end{equation}
The reverse reduction is immediate because the original function set is retained.

If the nonzero endpoint is $1^{2d}$, the same normalization gives $J+sE_{o,o}$,
with $F_*J=I+sE_{o,z}$ and $E_{o,z}^{\,2}=0$. The same chain and interpolation
give $\delta_{1^{2d}}$. If $Q=0$, the required single-endpoint function is already
ordinarily available and interpolation is unnecessary.

We can now adjoin a tensor power of a nonzero unary function while preserving
Turing equivalence. Use the fixed $K$ transformation of~\eqref{en:eq:K-basis} to put
the entire problem back into the equality-edge model. The added function remains
a tensor power of a nonzero unary function. By the decomposition theorem,
Theorem~\ref{en:thm:decomposition}, either the problem belongs to $\FP$ or this unary
factor can be adjoined. The odd-arity dichotomy,
Theorem~\ref{en:thm:odd-arity-fp}, completes the latter case.
\end{proof}

\paragraph{Lower case}\label{en:subsec:ck-lower}
The lower case already has an indecomposable quaternary function. We give the quaternary forms in the two edge models, adjust their coefficients with available binary functions,
and then construct equality functions, pins, or functions satisfying the hypotheses of an existing quaternary dichotomy.

\subparagraph*{Actual four-port normalization for diagonal odd-order groups with equality edges}

First let $k\ge3$ be odd. Up to known nonzero scalars, the available diagonal
auxiliaries are $D_r=\diag(1,r)$ for $r\in\mu_k$. Every binary shortloop
of a quaternary function $q$ is diagonal. Two independent diagonal probes
therefore force every odd-weight entry to vanish. Write
\[
 s=q_{0000},\quad t=q_{1111},\qquad
 a_I=q_{\mathbf1_I},\quad b_I=q_{\mathbf1_{I^c}},\qquad I=12,13,14.
\]
For each complementary pair, apply all $D_r$ in both orientations and use
Lemma~\ref{en:lem:oddq-root-ratio}. The values $s,t$ cannot have exactly one nonzero entry.
Each central pair is either zero in both entries or nonzero in both, and in
the latter case $b_I/a_I\in\mu_k$. If $st\ne0$, then also
$t/s\in\mu_k$, every nonzero central pair satisfies $a_Ib_I=st$, and the
ratio of each of its nonzero entries to $s$ belongs to $\mu_k$.

\begin{lemma}[Four-port balancing in odd order]\label{en:lem:cyc-balance}
Using only the actual port actions of the available $D_r$, one can make the
two values in every complementary pair of $q$ equal. If the endpoints are
nonzero, one can also make the endpoint values equal, with the common value
of every nonzero central pair equal to the endpoint value.
\end{lemma}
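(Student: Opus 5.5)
We will realize the balancing by attaching diagonal auxiliaries $D_{r_1},\dots,D_{r_4}$, one per port, with $r_j\in\mu_k$. Attaching $D_{r_j}$ to port $j$ multiplies $q_x$ by $\prod_j r_j^{x_j}$. This transformation acts on each complementary pair $(a_I,b_I)$ independently. For $I=\{1,i\}$, the value $a_I$ has support $\mathbf 1_I$ and is multiplied by $\rho_I:=r_1r_i$, while $b_I$ is multiplied by $\rho_{I^c}:=\prod_{j\notin I}r_j$. The endpoints go to $s$ and $t\prod_j r_j$.

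Write $R=r_1r_2r_3r_4$ and $\beta_I=b_I/a_I\in\mu_k$ for each active central pair. The balancing equations are then
\[
 \frac{R}{(r_1r_i)^2}=\beta_I^{-1}\quad(i=2,3,4),\qquad R=s/t \text{ when } st\ne0 .
\]
Because $k$ is odd, squaring is a bijection of $\mu_k$, so every element of $\mu_k$ has a unique square root there. This reduces the problem to solving a multiplicative linear system over $\mu_k\cong\mathbb Z/k$ with $k$ odd, that is, a linear system over $\mathbb Z/k$ in which $2$ is invertible.

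To solve it, set $r_4=1$ and use additive notation $e_j=\log r_j\in\mathbb Z/k$. The equations become
\[
 e_2+e_3+e_4-e_1-2e_i-e_1\cdots,
\]
which we rewrite directly as
\[
 -e_1+e_j+e_l-e_i=c_I,\qquad \{i,j,l\}=\{2,3,4\}.
\]
Adding two of these equations isolates $e_l-e_1$ up to the invertible factor $2$. Since $e_4=0$, this gives $e_1$, and then $e_2$ and $e_3$. Hence every pair can be balanced simultaneously, for any right-hand sides. Pairs that are zero impose no condition and can be dropped.

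Once the three central pairs are fixed, the endpoint ratio is determined: $R$ is forced by the three central equations. We then invoke the earlier constraint $a_Ib_I=st$. After balancing, $a_I'=b_I'$, so $a_I'^2=s't'$. If some central pair is active, its equation forces $s'/t'=R\cdot t/s\cdot s/t$, and we must check that this is consistent, that is, that the forced $R$ equals $s/t$. This consistency check is the main obstacle. I expect to handle it from $a_I'b_I'=s't'$ together with $s'\cdot a_I'^{-1}\in\mu_k$: the value $(s'/a_I')^2=s'/t'$ lies in $\mu_k$, and after one further global $D_r$ applied to all ports, which multiplies the endpoint ratio by $r^4$ while preserving the central balance since it multiplies both $a_I$ and $b_I$ by $r^2$, the ratio $s'/t'$ can be driven to $1$ because $4$ is invertible mod $k$.

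Once $s'=t'$ holds, $a_I'^2=s'^2$ gives $a_I'=\pm s'$. The sign is removed because $a_I'/s'\in\mu_k$ and $-1\notin\mu_k$ for odd $k$. If no central pair is active, the endpoint equation alone is solved by the global $D_r$. All scalars introduced along the way are known and are compensated by Lemma~\ref{en:lem:per-signature-scaling}.
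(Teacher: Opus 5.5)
Your argument is correct and is essentially the paper's. You solve a multiplicative linear system on $\mu_k^4$ using that $2$ (and, in your version, $4$) is invertible modulo odd $k$, and you finish with $c^2=s^2$, $c/s\in\mu_k$ and $-1\notin\mu_k$; the paper avoids your ``consistency check'' altogether by taking $(R,u_{12},u_{13},u_{14})$ as independent coordinates on $\mu_k^4$ (recovered via $r_1^2=u_{12}u_{13}u_{14}/R$) and setting $R=s/t$ directly, which is what your $r_4=1$ solve followed by the uniform $D_r$ achieves in two stages.
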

\begin{proof}
Apply $D_{r_i}$ at the four respective ports, and put $R=\prod_i r_i$ and
$u_I=\prod_{i\in I}r_i$. If the endpoints are nonzero, set $R=s/t$;
otherwise set $R=1$. For every active central pair choose
\[
 u_I^2=(b_I/a_I)R.
\]
Squaring is bijective on the odd-order group $\mu_k$, so these choices are
possible; choose $u_I$ arbitrarily for inactive pairs. Every prescribed
$(R,u_{12},u_{13},u_{14})\in\mu_k^4$ has an actual four-port solution:
\[
 r_1^2=u_{12}u_{13}u_{14}/R,\qquad
 r_2=u_{12}/r_1,\quad r_3=u_{13}/r_1,\quad r_4=u_{14}/r_1.
\]
The new central values are $a_Iu_I=b_IR/u_I$, and the endpoint values are
$s,tR$. If $st\ne0$, each common central value $c$ satisfies $c^2=s^2$
and $c/s\in\mu_k$. Since $-1\notin\mu_k$, we must have $c=s$.
Every action used here is an actual binary attachment.
\end{proof}

Let $h$ be the balanced function, with common endpoint value $s$ and common
central values $a,b,c$, writing zero for an inactive pair. Apply $K^{-1}$
simultaneously to the entire function set, changing the edge kernel from
$I$ to $N$. The actual new function $h'=(K^{-1})^{\otimes4}h$ satisfies
\begin{equation}\label{en:eq:cyc-K-fourier}
 h'_y=\frac14\sum_x h_x(-i)^{|x|}(-1)^{x\cdot y},\qquad
 h'_{0000}=h'_{1111}=\frac{s-a-b-c}{2}.
\end{equation}
Because $h_x=h_{\bar x}$ and $h$ has even support, complementary terms
cancel at odd-weight inputs $y$. Thus $h'$ also has even support. Invertible
portwise transformations preserve indecomposability.

\begin{proposition}[Quaternary cases for odd-order cyclic groups with equality edges]\label{en:prop:cyc-equality-q4}
Under the binary closure conditions of this section, suppose the diagonal group with equality edges is $C_k$ of odd order, with $k\ge3$. If some quaternary function is not a pairing product of binary functions from this group, then the entire problem has a complexity dichotomy.
\end{proposition}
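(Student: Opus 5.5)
We plan to split according to the balanced function $h$ of Lemma~\ref{en:lem:cyc-balance}, and transport it to the native $N$-edge model through \eqref{en:eq:cyc-K-fourier}. Take an indecomposable quaternary $q$. Balancing by actual $D_r$ attachments gives $h$ with $h_x=h_{\bar x}$, even support, common endpoint value $s$ (possibly zero), and central values $a,b,c$. The first split is on whether $h'_{0000}=h'_{1111}=(s-a-b-c)/2$ is nonzero.

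\textbf{Nonzero endpoints of $h'$.} Indecomposability is preserved by the portwise $K^{-1}$. So $h'$ is a tensor-indecomposable quaternary function with even support and two nonzero endpoints, and Theorem~\ref{en:thm:cyc-external-q4} applies directly to the transformed set. This gives the decidable dichotomy, which we then transfer back through the holographic equivalence of Proposition~\ref{en:prop:holographic-reduction}.

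\textbf{Vanishing endpoints of $h'$, i.e.\ $s=a+b+c$.} Here we work in the equality-edge model and sub-split by the number of nonzero entries among $s,a,b,c$.
\begin{enumerate}
 \item If $s\ne0$, all active central values equal $s$ by the balancing lemma. The relation $s=a+b+c$ with each of $a,b,c\in\{0,s\}$ forces exactly one active central pair. Then $h=s(\delta_{0000}+\delta_{1111}+\delta_{0011}+\delta_{1100})$ up to port order, which is $s\,(=_2)\otimes(=_2)$ across a $2+2$ split. This contradicts indecomposability, so the case is excluded.
 \item If $s=0$, then $a+b+c=0$. With a single active pair this is impossible, so either two pairs are active with $a=-b$, or all three are active.
 \begin{enumerate}
  \item With two active pairs and $a=-b$, the function is $a(\delta_{0011}+\delta_{1100}-\delta_{0101}-\delta_{1010})$. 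I expect this to factor as a product of two $Y$-type binary functions, again contradicting indecomposability or forcing $Y$ into the group, which is impossible for odd $k$. I would verify this by entrywise checking as in Lemma~\ref{en:lem:c1-general-two-centers}.
  \item With three active pairs and $a+b+c=0$, I would instead look at $h'$ itself. By \eqref{en:eq:cyc-K-fourier} it has even support and zero endpoints, so it is a pure EO quaternary function in the $N$-edge model. We then treat it as in the $C_1$ lower case: three nonzero centers of $h'$ give hardness via the six-vertex dichotomy, as in Lemma~\ref{en:lem:c1-general-three-centers}, and two centers are handled by the chain interpolation of Lemma~\ref{en:lem:c1-general-two-centers}, which produces a complementary two-point function. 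That function is then classified by Lemma~\ref{en:lem:cyc-complement-carrier}.
 \end{enumerate}
\end{enumerate}

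\textbf{Main obstacle.} The last transfer is the hardest step. The $C_1$ lemmas assumed the binary closure $\{\lambda N\}$, whereas here the binary directions after $K^{-1}$ are the images of $D_r$. The six-vertex hardness argument only uses the single function $h'$, so it transfers verbatim. The interpolation chains, however, use only $N$ edges and actual copies of $h'$, and I would check that nothing in them relied on the absence of other binary functions. I would also check that the Fourier values of $h'$ on its centers are computed correctly from \eqref{en:eq:cyc-K-fourier}, so that the count of nonzero centers there is determined by $a,b,c$.
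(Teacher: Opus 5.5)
Your proof is correct, and outside one case it reorganizes the paper's argument. Splitting on whether $h'_{0000}=(s-a-b-c)/2$ vanishes recovers the paper's cases as follows:
\begin{itemize}
\item $st\ne0$ with $r\in\{0,2,3\}$, and $s=t=0$ with one active pair or with two active pairs and $a+b\ne0$, are all sent to Theorem~\ref{en:thm:cyc-external-q4};
\item $r=1$ gives two diagonal factors;
\item $a=-b$ gives $aY_{14}Y_{23}$.
\end{itemize}
You should add one line justifying your opening ``take an indecomposable $q$''. By the section's convention, a decomposable $q$ either puts the problem in $\FP$, yields an available unary factor, or has available binary factors, in which case it is a pairing product.

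The real difference is the case $s=t=0$ with all three central pairs active.

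The paper rules this case out internally. Joining two copies of $h$ by equality edges along a $2+2$ flattening gives $Q_a^{\mathsf T}Q_a$, whose endpoints $a^2$ and central pair $(2bc,2bc)$ are nonzero. The relation $a_Ib_I=st$ then forces $a^4=4b^2c^2$, and multiplying the three pairings gives $1=64$.

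You instead resolve the case with external theorems: Theorem~\ref{en:thm:cyc-external-q4} when $a+b+c\ne0$, and the six-vertex dichotomy when $a+b+c=0$. For the latter you left the Fourier computation open. From \eqref{en:eq:cyc-K-fourier} it gives $h'=-h$; for instance $h'_{1100}=-\tfrac12(a-b-c)=-a$. So $h'$ has all six EO weights nonzero, and $\Holant(N\mid\{h'\})$ is $\sharpP$-hard by Cai--Fu--Xia. Consequently the two-center branch never arises, and neither does your ``main obstacle'' about transferring the chain interpolation.

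Your route is valid, but it uses one more external theorem. It also only reaches a conclusion in a configuration that, by the paper's two-copy argument, cannot occur under the binary closure condition. The paper's route additionally shows that every non-pairing-product quaternary, once balanced and transformed by $K^{-1}$, has nonzero endpoints. Hence Theorem~\ref{en:thm:cyc-external-q4} alone closes the proposition.
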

\begin{proof}
First suppose $st\ne0$, and let $r$ be the number of active central pairs.
After Lemma~\ref{en:lem:cyc-balance}, the new endpoint value
in~\eqref{en:eq:cyc-K-fourier} is $s(1-r)/2$. It is nonzero for $r=0,2,3$.
In any of these cases, a positive-arity factorization would, by the two
endpoints and even support, have to be a product of two full-rank diagonal
binary factors, whose support has only four points. These cases are
therefore indecomposable, and Theorem~\ref{en:thm:cyc-external-q4} applies.
More explicitly, even support and the occurrence of both values at every
variable exclude a unary factor. In a $2+2$ factorization, each factor
must have fixed parity, since independent choices of their inputs would
otherwise produce an odd-support word. The two endpoints then force both
factors to be full-rank diagonal functions. For $r=1$, the identity
$a_Ib_I=st$ already gives a product of two available diagonal binary
factors, contrary to the case under consideration.

Now suppose $s=t=0$. One active central pair gives complementary two-point
support, and after balancing the new endpoint is $-a/2\ne0$. The same
theorem applies. With two active central pairs, write their common values
as $a,b$. If $a^2\ne b^2$, then $a+b\ne0$, and the determinant for the
only possible $2+2$ factorization is nonzero. This again gives an
indecomposable quaternary entry. If $a^2=b^2$, an appropriate port order gives
a nonzero multiple of $N\otimes N$ or $Y\otimes Y$. After decomposition, $N$ or $Y$ is also an available binary function, whereas every binary function in the present group is diagonal. This contradicts the binary closure condition.

Finally, exclude the case in which all three central pairs are active.
After balancing, $a,b,c$ are all nonzero. Fix a $2+2$ flattening and the
corresponding port orientations on two copies, and write
\[
 Q_a=\begin{pmatrix}
 0&0&0&a\\0&b&c&0\\0&c&b&0\\a&0&0&0
 \end{pmatrix}.
\]
Join the first two ports of the two copies in corresponding order by
ordinary equality edges, retaining the last two ports of each copy. The
actual output is
\[
 Q_a^{\mathsf T}Q_a=
 \begin{pmatrix}
 a^2&0&0&0\\0&b^2+c^2&2bc&0\\
 0&2bc&b^2+c^2&0\\0&0&0&a^2
 \end{pmatrix}.
\]
Both endpoints and the central pair $(2bc,2bc)$ are nonzero. The
root-of-unity ratio lemma forces $a^4=4b^2c^2$. The other two pairings
likewise give $b^4=4a^2c^2$ and $c^4=4a^2b^2$. Multiplying and dividing by
the nonzero quantity $(abc)^4$ yields $1=64$, a contradiction. Thus the case in which all three central pairs are nonzero also contradicts the binary closure condition.
\end{proof}

\subparagraph*{A carrier exit for arbitrary complementary two-point quaternaries}

\begin{lemma}[Complementary two-point carrier]\label{en:lem:cyc-complement-carrier}
Suppose a finite algebraic function set $\Gamma$ with native $N$ edges
ordinarily realizes
\[
 g=c_0\delta_x+c_1\delta_{\bar x},\qquad
 x\in\{0,1\}^4,\quad c_0c_1\ne0.
\]
Then the problem over the entire function set has a decidable
$\FP/\sharpP$-hard dichotomy. The word $x$ may have weight two; nonzero
endpoints of $g$ are not required.
\end{lemma}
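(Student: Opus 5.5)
The plan is to turn $g$ into a known tractability interface. We aim to reach either a quaternary function covered by Theorem~\ref{en:thm:cyc-external-q4}, with even support, two nonzero endpoints and indecomposability, or a decomposable function whose factors lead to the odd-arity dichotomy.

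\textbf{Normalize the word.} The native $N$-edge model makes this cheap. Attaching the transfer of a binary $N$ vertex to one port (transfer matrix $NN=I$ up to the convention $A_jN$) does not flip bits. However, we can always route a port of $g$ through an extra $N$ edge by inserting a copy of the equality-edge wire. Concretely, the transformation $K$ of \eqref{en:eq:K-basis} can be undone, and in equality coordinates $N$ acts as a bit flip.

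Rather than rely on this, the argument splits by weight. If $|x|$ is odd, $g$ has odd-weight support. Shorting two ports of $g$ with the edge kernel then leaves a nonzero binary function. In fact we decompose directly: $\delta_x+c\delta_{\bar x}$ with $x$ of weight one or three is, after permutation, a function on which we apply Theorem~\ref{en:thm:decomposition} to obtain either $\FP$ or usable factors. Otherwise we interpolate as below.

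If $|x|\in\{0,4\}$, then $g=c_0\delta_{0000}+c_1\delta_{1111}$. This is already indecomposable, has even support and two nonzero endpoints, so Theorem~\ref{en:thm:cyc-external-q4} applies directly.

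If $|x|=2$, permute the ports so that $x=0011$ and $g=c_0\delta_{0011}+c_1\delta_{1100}$. Now form the gadget obtained by connecting ports $3,4$ of one copy of $g$ to ports $3,4$ of a second copy through $N$ edges. The surviving terms pair $0011$ with $1100$, producing $c_0c_1(\delta_{0000}+\delta_{1111})$ on the retained ports $1,2,1',2'$, up to a port order. This is a nonzero multiple of $=_4$ with both endpoints nonzero and even support. It is also indecomposable, since its support has size two, which is not the size of a product of two supports containing both endpoints unless both factors are rank-one diagonal, and that is excluded. Theorem~\ref{en:thm:cyc-external-q4} then gives the dichotomy.

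\textbf{Main obstacle.} The difficult step is making the weight-two gadget computation rigorous in the native model. Each $N$ edge forces the two endpoint bits to be opposite, so connecting $x_3,x_4$ of copy one (values $1,1$ in the $0011$ term) to copy two requires the $1100$ term there, and vice versa. I will check carefully that the two cross terms both survive and land on $\{0000,1111\}$ with correct orientation. I will also confirm that the theorem's hypotheses concern only the realized function, not the rest of $\Gamma$.

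For odd weight, the alternative check is that shorting yields a rank-one binary or a nonzero odd-arity factor. These cases are then routed through Theorem~\ref{en:thm:decomposition} and Theorem~\ref{en:thm:odd-arity-fp} after transforming back by $K$.
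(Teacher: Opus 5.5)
\textbf{The weight-two step fails.} In your gadget the surviving terms are $(x_1,x_2,x'_1,x'_2)=(0,0,1,1)$ (copy one on $0011$, copy two forced onto $1100$) and $(1,1,0,0)$. The output is therefore $c_0c_1(\delta_{0011}+\delta_{1100})$, not a multiple of $=_4$. No port permutation repairs this, because permutations preserve Hamming weight.

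This is a structural obstruction, not a bookkeeping slip. Every native $N$ edge joins a $0$ to a $1$. So in any gadget built from a weight-two $g$, and from antidiagonal binaries such as $D^jN$, each nonzero term has external imbalance zero, and the output is again EO. Example~\ref{en:ex:cyclic-no-endpoint} makes this concrete: $\Gamma=\{=_6\}\cup\{D^jN\}$ realizes exactly such a complementary pair $g$, yet every quaternary gadget of $\Gamma$ is EO. For that $\Gamma$ no quaternary with two nonzero endpoints exists at all, so Theorem~\ref{en:thm:cyc-external-q4} can never be reached, and no ordinary-gadget route of your kind can succeed.

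The missing idea is a reduction outside the ordinary gadget closure. The paper invokes the universal one-pair reduction and single-occurrence replication lemmas of the disequality paper, which grant one copy of the carrier $\Delta=\delta_0\otimes\delta_1$. Attach the two ports of $\Delta$ through $N$ edges to the same port of two copies of $g$. This forces the copies onto complementary terms and yields $c_0c_1\,\Delta\otimes\Delta^{\otimes 2}$. Feeding the retained $\Delta$ into the next copy replicates $\Delta^{\otimes 2}$ arbitrarily often. Transforming back with $K$ then gives a tensor product of nonzero unary functions, and Theorem~\ref{en:thm:decomposition} together with the odd-arity dichotomy finishes. This argument is uniform in $x$, so no case split by weight is needed.

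\textbf{The odd-weight case is also wrong as written.} The function $c_0\delta_x+c_1\delta_{\bar x}$ is never decomposable, since a product over a nontrivial split of the ports would have support of size four. So Theorem~\ref{en:thm:decomposition} has nothing to act on. Shorting two ports on which $x$ differs gives the full-rank diagonal binary $c_0\delta_{00}+c_1\delta_{11}$, not a rank-one function.

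That case can be repaired. The diagonal binary has antidiagonal transfer in the native model, so attaching it to one port of a further copy of $g$ moves the support to $\{0000,1111\}$, and then Theorem~\ref{en:thm:cyc-external-q4} applies. The repair works only because an odd $g$ supplies its own bit flip; for $|x|=2$ no flip is available, which is exactly why the carrier is needed. Your $|x|\in\{0,4\}$ case is fine.
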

\begin{proof}
Use the Universal one-pair reduction and Single-occurrence replication
lemmas of\cite{en:liu2026disequality}, with the ordered function
$\Delta=\delta_0\otimes\delta_1$ serving only as a Turing carrier. The
first lemma states that, for any finite function set, instances containing
at most one $\Delta$ reduce to the original problem. Start the construction with this one copy of $\Delta$.

Attach the two ports of $\Delta$, through $N$ edges, to the same numbered
port in two copies of $g$. The two copies of $g$ are forced to take opposite
values at that port and therefore select complementary terms. The six
remaining external ports have exactly three $0$'s and three $1$'s, with
coefficient $c_0c_1$. A fixed reordering gives
\[
 c_0c_1\,\Delta(P)\otimes\Delta^{\otimes2}(H).
\]
Retain $P$ as an ordered two-port carrier. Feed only this retained block
into the next copy of the same context, without identifying any other
external ports. After $\ell$ nested copies, the unique initial $\Delta$
therefore realizes
\[
 (c_0c_1)^\ell\Delta(P)\otimes
 \bigotimes_{j=1}^{\ell}\Delta^{\otimes2}(H_j),
\]
with size $O(\ell)$. Closing the two ports of $\Delta$ by one $N$ edge
has value $1$, satisfying the nonzero closed-instance condition of the
replication lemma. Arbitrarily many occurrences of $\Delta^{\otimes2}$
thus Turing-reduce to $\Gamma$. Explicitly, first take the disjoint union
of the target instance with the closed instance $J$ consisting of one
$\Delta$ with an $N$ self-loop; its value is $1$. Starting from this unique
carrier, nest the replication contexts to fill every target position.
Apply the single-$\Delta$ reduction, and divide by
$(c_0c_1)^\ell Z(J)$ to recover the original answer.

We may therefore adjoin $\Delta^{\otimes2}$ while preserving Turing equivalence. Use the fixed transformation $K$ to return the entire problem to the equality-edge model; the new function remains a tensor product of nonzero unary functions. By the decomposition theorem, Theorem~\ref{en:thm:decomposition}, either the problem is in $\FP$, or these unary factors may be adjoined. In the latter case, the odd-arity dichotomy, Theorem~\ref{en:thm:odd-arity-fp}, completes the proof.
\end{proof}

\subparagraph*{Two-wire chain interpolation for one-endpoint quaternaries}

\begin{lemma}[One-endpoint interpolation exit]\label{en:lem:cyc-native-one-endpoint}
Suppose the complete ordinary binary closure of a function set with native $N$ edges has exactly the nonzero directions listed in~\eqref{en:eq:cyc-native-binary}. If a quaternary gadget has exactly one nonzero endpoint and at most two active
central complementary pairs, then the entire function set has a decidable
dichotomy.
\end{lemma}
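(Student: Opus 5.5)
We reduce to the Jordan-chain interpolation already carried out in Lemma~\ref{en:lem:c1-general-jordan}, now in the cyclic native $N$-edge model. By the discussion around~\eqref{en:eq:cyc-native-q4}, every quaternary gadget has even support. Each central complementary pair is either inactive or has both entries nonzero with ratio in $\mu_k$. By symmetry we assume $s\ne0$ and $t=0$; the case $s=0$, $t\ne0$ is the mirror image, with a lower-triangular Jordan block, exactly as in the $C_1$ lemma.

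If no central pair is active, then $q=s\delta_{0000}$ is already a pin, and we pass directly to the last step below. Otherwise we permute ports so that an active pair occupies the positions $0011,1100$ of the $12\mid34$ flattening~\eqref{en:eq:cyc-native-q4}, with $ab\ne0$. Since at most two pairs are active, at most one of the middle pairs $(c,d)$ and $(e,f)$ is nonzero.

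Next we balance the corner pair using the actual one-port transfers $D^j=\diag(\lambda^j,\lambda^{-j})$ from~\eqref{en:eq:cyc-native-binary}. A transfer on port $1$ multiplies the ratio $b/a$ by $\lambda^{-2j}$, and these factors exhaust $\mu_k$. Because $b/a\in\mu_k$, we can reach $a=b$, at the cost of rescaling $s$ by a nonzero root of unity, which we absorb into $s$.

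We then chain $m$ copies through two $N$ edges, as in~\eqref{en:eq:c1-general-chain}. On the $\{00,11\}$ block, $QJ$ becomes $\left(\begin{smallmatrix}a&s\\0&a\end{smallmatrix}\right)$. On the $\{01,10\}$ block it becomes either $\left(\begin{smallmatrix}0&c\\d&0\end{smallmatrix}\right)$ or $\diag(e,f)$, or zero; each of these is semisimple. Hence
\[
 q_m=m\,s\,a^{m-1}E+a^mC+(\text{terms }\mu^m\text{ with }\mu\in\{\pm\sqrt{cd}\}\text{ or }\{e,f\}),
\]
where $E=\delta_{0000}$ is the only term carrying the factor $m$.

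The rest copies Lemma~\ref{en:lem:c1-general-jordan}. We replace the $M$ target pins in an instance by $q_m$ and expand multilinearly. The coefficient of $m^Ma^{mM}$ is $(s/a)^M$ times the target value, regardless of coincidences among the bases. The confluent Vandermonde system in the samples $m=1,\dots,L$ is invertible by the same $p(0)=0$ argument, so a polynomial number of queries recovers the target. This makes $\delta_{0000}$, or $\delta_{1111}$ in the mirror case, Turing available. Returning to equality edges via $K$, we apply the decomposition theorem, Theorem~\ref{en:thm:decomposition}: either the problem is in $\FP$, or a nonzero unary factor is adjoined, and then the odd-arity dichotomy, Theorem~\ref{en:thm:odd-arity-fp}, finishes the proof.

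The step I expect to be the main obstacle is ensuring that the factor $m$ appears only on the endpoint term. This is exactly where the hypothesis of at most two active pairs enters. It forces the middle block of $QJ$ to be semisimple, so no second Jordan block can create competing polynomial growth. The balancing step is also essential, since it produces the repeated diagonal $a=b$ that the Jordan block needs. Without it, the corner block would have two distinct eigenvalues and no $m$-growth. I would verify both points explicitly by multiplying out the $4\times4$ blocks before invoking the interpolation.
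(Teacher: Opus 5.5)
Your proposal is correct and follows the paper's route. You balance an active pair with diagonal port transfers, chain copies by two $N$ edges so that the $\{00,11\}$ block becomes a Jordan block and only the endpoint term carries the factor $m$, interpolate to make $\delta_{0000}$ (or $\delta_{1111}$) Turing available, and finish with the decomposition theorem and the odd-arity dichotomy. The only difference is one of degree. The paper normalizes all active central values (to $\delta_{0011}+\delta_{1100}$, or via the rank-one relation to $N_{12}N_{34}$ in the $13\mid24$ flattening, where $QJ=I+sE_{00,11}$), so the chain gives exactly $ms\delta_{0000}+q_0$ and plain degree-$M$ interpolation with $M+1$ queries suffices. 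You balance only the corner pair and absorb the remaining semisimple $\mu^m$ terms with the confluent Vandermonde system of Lemma~\ref{en:lem:c1-general-jordan}, which is valid but uses $O(M^3)$ queries.
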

\begin{proof}
Treat the nonzero endpoint $0000$; the case $1111$ is analogous. With no
central entries, the function is already a nonzero multiple of
$\delta_{0000}$. With one active central pair, its two values have ratio
in $\mu_k$. Available diagonal port transfers and a port permutation
therefore normalize the function to
\[
 q=s\delta_{0000}+\delta_{0011}+\delta_{1100},\qquad s\ne0.
\]
In the $12\mid34$ flattening, restrict to the $\{00,11\}$ block:
\[
 Q=\begin{pmatrix}s&1\\1&0\end{pmatrix},\qquad
 J=(N\otimes N)|_{\{00,11\}}=\begin{pmatrix}0&1\\1&0\end{pmatrix}.
\]
Form a chain of $m$ copies, each time joining the column ports of one copy
to the row ports of the next by two $N$ edges. Its output is
$(QJ)^{m-1}Q$, and direct multiplication gives
\[
 (QJ)^{m-1}Q=\begin{pmatrix}ms&1\\1&0\end{pmatrix}.
\]

With two active central pairs, the rank-one relation among their four
values and the $\mu_k$ ratios give a product of two available
anti-diagonal binary factors. Actual diagonal port transfers normalize
the values to the expression
\[
 q=s\delta_{0000}+N_{12}N_{34}.
\]
We now form a chain directly from copies of the entire function $q$. Change to the
$13\mid24$ flattening and put $J=N\otimes N$. Then
\[
 Q=J+sE_{00,00},\qquad QJ=I+sE_{00,11},\qquad E_{00,11}^2=0.
\]
The same two-wire chain actually realizes
$(QJ)^{m-1}Q=J+msE_{00,00}$. Thus in either case $O(m)$ vertices realize
$q_m=ms\delta_{0000}+q_0$, where $q_0$ is fixed. If the normalized gadget
actually realizes $cq$, a chain of length $m$ realizes $c^mq_m$. A query
with $M$ replacements below must first be divided by the known nonzero
scalar $c^{mM}$.

Consider an instance with $M$ target occurrences of $\delta_{0000}$, and first separate these target vertices from the remaining fixed function-net. Replace all targets by $q_m$ and expand multilinearly over these $M$ positions. The answer is a polynomial in $m$ of degree at most $M$, whose leading
coefficient is exactly $s^M$ times the target answer. Query
$m=1,\ldots,M+1$, using $M+1$ oracle calls and replacement size $O(M^2)$
per query. Recover the leading coefficient by interpolation and divide by
$s^M$. This gives Turing access to arbitrarily many occurrences of
$\delta_{0000}$. After returning the entire problem to the equality-edge model, the new function remains a tensor power of a nonzero unary function. By the decomposition theorem, Theorem~\ref{en:thm:decomposition}, either the problem belongs to $\FP$, or this unary factor may be adjoined. The odd-arity dichotomy, Theorem~\ref{en:thm:odd-arity-fp}, then completes the proof.
\end{proof}

\begin{lemma}[The three-block contradiction when the endpoint product vanishes]\label{en:lem:cyc-native-three-centers}
Under the binary closure conditions of this section for the native $N$ edge model, a quaternary function with $st=0$ cannot have three active central complementary pairs.
\end{lemma}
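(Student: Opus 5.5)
The plan mirrors the three-centre exclusion in Proposition~\ref{en:prop:cyc-equality-q4}, transplanted to native $N$ edges and the flattening~\eqref{en:eq:cyc-native-q4}. Suppose a quaternary gadget $q$ has $st=0$ and three active central pairs. Then all six values $a,b,c,d,e,f$ are nonzero, and by \eqref{en:eq:cyc-native-center-products} we have $ab=cd=ef=P\ne0$, with all pairwise ratios among the central values of two active pairs lying in $\mu_k$.

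\textbf{Step 1: normalize.} Use the available diagonal one-port transfers $\diag(1,r)$, $r\in\mu_k$, to equalize within each complementary pair as far as possible. This simplifies the bookkeeping, but the argument should also go through unnormalized.

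\textbf{Step 2: two-copy gadget.} Take two copies of $q$, fix a $2+2$ flattening, say $12\mid34$, and join ports $1,2$ of the first copy to ports $1,2$ of the second by two native $N$ edges. The output matrix is $Q^{\mathsf T}(N\otimes N)Q$. Because $st=0$, the endpoint entries do not interfere: an endpoint of one copy can only meet the complementary input of the other copy, and $s\cdot t=0$ (or $s$ times $a$-type entries, which we track explicitly).

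Compute this product blockwise on $\{00,11\}$ and $\{01,10\}$.
\begin{itemize}
\item The $\{00,11\}$ block gives endpoint values of the new quaternary function, of the form $2ab$-type products, which are nonzero.
\item The central $\{01,10\}$ block produces entries like $ce+df$, $cf+de$.
\end{itemize}
The new gadget therefore has two nonzero endpoints, with product essentially $P^2$, and a nonzero central pair.

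\textbf{Step 3: derive constraints.} Apply the root-of-unity ratio lemma (Lemma~\ref{en:lem:oddq-root-ratio}) to this new gadget. When both endpoints are nonzero, the product relation forces every active central pair to have product equal to the endpoint product, and its ratios to lie in $\mu_k$. This yields an equation analogous to $a^4=4b^2c^2$.

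Repeat with the other two flattenings ($13\mid24$ and $14\mid23$) to get three such relations. Multiplying them and cancelling the nonzero product of central values should give a numerical contradiction such as $1=64$.

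\textbf{Fallback if the relations degenerate.} Some entry $ce+df$ could vanish because $\mu_k$ may contain $-1$ when $k$ is even. In that case the new gadget has fewer active pairs, and the contradiction comes instead from the same lemma's statement that inactive pairs vanish in both entries, or from comparing absolute values, since all ratios lie on the unit circle.

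\textbf{Main obstacle.} The hardest step is the careful native-edge bookkeeping in Step 2. With $N$ edges the joined product swaps labels, so the central block becomes a product like $\begin{pmatrix}c&e\\f&d\end{pmatrix}N\begin{pmatrix}c&e\\f&d\end{pmatrix}$ rather than a transpose product. One must verify that the resulting relations, combined with $ab=cd=ef$ and the $\mu_k$ ratios, genuinely force a contradiction, and not merely a constraint satisfiable by roots of unity. The first thing to try is the absolute-value version: all central values have equal modulus $\sqrt{|P|}$. The constraint then reads $|P|^2=|ce+df|^2$ with $|ce+df|\le 2|P|$, and the three flattenings should give incompatible moduli, e.g.\ forcing a $2$ versus a $1$.
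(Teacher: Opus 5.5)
Your gadget is the right one: joining ports $1,2$ of two copies by two $N$ edges to get $Q^{\mathsf T}(N\otimes N)Q$ is exactly what is needed. The gap is in Steps 2 and 3, which misread the output and then rely on a relation that does not hold in the native model. With the flattening \eqref{en:eq:cyc-native-q4} one gets
\[
 Q^{\mathsf T}(N\otimes N)Q=
 \begin{pmatrix}
 2bs&0&0&st+ab\\
 0&2cf&cd+ef&0\\
 0&cd+ef&2de&0\\
 st+ab&0&0&2at
 \end{pmatrix}.
\]
The new endpoints are $2bs$ and $2at$, with product $4abst=4Pst=0$. So the new gadget does \emph{not} have two nonzero endpoints. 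The entries $st+ab$ form the new $(0011,1100)$ central pair, and the other central entries are $2cf$, $2de$ and $cd+ef$, not $ce+df$ and $cf+de$.

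More seriously, Step 3 would fail even if the endpoints were nonzero. In Proposition~\ref{en:prop:cyc-equality-q4}, the relation $a_Ib_I=st$ behind $a^4=4b^2c^2$ and $1=64$ comes from diagonal probes, whose shortloops see the $00,11$ slices. In the native $N$-edge model every probe is an antidiagonal kernel $K_t$. Shortloops then compare only the $01,10$ slices, and $s,t$ never enter any constraint; the paper says explicitly that these ratio relations concern only central values. So the transplanted argument has no valid key step, and the fallback and modulus heuristics do not replace it.

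The missing idea is to apply the central-product rigidity \eqref{en:eq:cyc-native-center-products} to the output gadget itself, ignoring its endpoints. Its three central pair products are $(st+P)^2$, $(2cf)(2de)=4P^2$ and $(cd+ef)^2=4P^2$. When $st=0$ these become $P^2$, $4P^2$ and $4P^2$, all nonzero, so all three pairs are active. Rigidity then requires the products to agree, giving $P^2=4P^2$, which contradicts $P\ne0$.

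This is exactly where $st=0$ is used: the endpoints do interfere, through the cross term $st$ in the $(0011,1100)$ pair, and the hypothesis removes it. One flattening suffices. No normalization, extra flattenings or degenerate-case analysis is needed, since every relevant entry is explicitly nonzero.
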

\begin{proof}
Use the port orientation in~\eqref{en:eq:cyc-native-q4}. Join the row ports
of two copies in corresponding order by two $N$ edges, retaining the
column ports. The actual output is $Q^{\mathsf T}(N\otimes N)Q$.
By~\eqref{en:eq:cyc-native-center-products}, the products of its three
central complementary pairs are
\[
 (st+P)^2,\qquad (2cf)(2de)=4P^2,\qquad (cd+ef)^2=4P^2.
\]
If $st=0$, all three pairs are active. Reapplying central-ratio rigidity
would give $P^2=4P^2$, a contradiction. Here the connection matrix between the two copies is $N\otimes N$.
\end{proof}

\begin{proposition}[Quaternary exit for cyclic groups with native disequality edges]\label{en:prop:cyc-native-q4}
Under the binary closure conditions of this section for the native $N$ edge model, let $k\ge3$. If some quaternary function is not a pairing product of binary functions from this group, then the entire problem has a decidable complexity dichotomy.
\end{proposition}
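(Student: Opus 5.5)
The proof is a case split on the general quaternary form~\eqref{en:eq:cyc-native-q4}, sending each branch to one of the exits prepared above. First I would note that every quaternary gadget has even support, and that each central complementary pair is either wholly zero or wholly nonzero with ratio in $\mu_k$. I then split on the endpoints $s,t$ and on the number $r$ of active central pairs.

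\emph{Both endpoints nonzero.} If $st\ne0$, suppose $q$ decomposes. A $1+3$ factorization would produce an odd-weight support word, so it is impossible. In a $2+2$ factorization both factors must have fixed parity, and the two nonzero endpoints force them to be full-rank diagonal binary functions. By the decomposition theorem these factors become available. But in the native model every available binary function is antidiagonal, of the form $D^jN$, so this contradicts the binary closure condition. Hence $q$ is indecomposable, and Theorem~\ref{en:thm:cyc-external-q4} gives the decidable dichotomy directly.

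\emph{Exactly one endpoint nonzero.} By Lemma~\ref{en:lem:cyc-native-three-centers}, three active central pairs cannot occur when $st=0$, so $r\le2$. Lemma~\ref{en:lem:cyc-native-one-endpoint} then applies verbatim.

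\emph{Both endpoints zero.} Again $r\le2$ by Lemma~\ref{en:lem:cyc-native-three-centers}, and $r\ge1$ since $q\ne0$.
\begin{itemize}
\item For $r=1$, $q$ is a complementary two-point quaternary, and Lemma~\ref{en:lem:cyc-complement-carrier} gives the dichotomy.
\item For $r=2$, the root-ratio lemma (Lemma~\ref{en:lem:oddq-root-ratio}) yields two alternatives. In the rank-one case the four central values satisfy the product relation, all mutual ratios lie in $\mu_k$, and $q$ is a pairing product of two available antidiagonal factors $D^{j}N$. This contradicts the hypothesis that $q$ is not such a product.

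Otherwise I would flatten so that the two active pairs lie in a $2\times2$ central block $\bigl(\begin{smallmatrix}c&e\\f&d\end{smallmatrix}\bigr)$ of nonzero determinant, and try to reduce to the carrier. Chaining copies through two $N$ edges gives powers of this block times $N$, which realize functions of the form $\alpha_m A+\beta_m B$. I would then interpolate as in Lemma~\ref{en:lem:c1-general-two-centers}: Vandermonde interpolation when the eigenvalue ratio is not a root of unity, and the squaring trick to produce a non-root-of-unity ratio $(r+r^{-1})/2$ in the other case. This isolates a complementary two-point function with arbitrarily many occurrences, and Lemma~\ref{en:lem:cyc-complement-carrier} finishes.
\end{itemize}

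I expect the main obstacle to be this last subcase. With unequal weights inside a pair, the block is a general invertible matrix rather than the symmetric $\bigl(\begin{smallmatrix}a&b\\b&a\end{smallmatrix}\bigr)$ treated in Lemma~\ref{en:lem:c1-general-two-centers}. I would first use the available diagonal transfers $\diag(1,\rho)$, $\rho\in\mu_k$, to balance each active pair to equal values; this is possible because the in-pair ratios lie in $\mu_k$. That reduces to the symmetric $C_1$-type block, except that the balancing needs square roots in $\mu_k$, which fail for even $k$. If balancing fails I would fall back on the eigen-decomposition of the general block. When the two eigenvalues have a non-root-of-unity ratio, interpolation isolates a rank-one eigencomponent. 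That component is itself a two-point carrier supported on the pair-structured words, up to a diagonal transfer.
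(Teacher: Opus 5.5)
Your case split on the endpoints and on the number $r$ of active central pairs, and the exit you use in each branch, coincide with the paper's. The one real problem is the second alternative you introduce for $s=t=0$, $r=2$. That alternative cannot occur: Lemma~\ref{en:lem:oddq-root-ratio} does not yield two alternatives there.

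Each of the flattenings $12|34$, $13|24$, $14|23$ has an odd $2\times2$ block containing exactly two of the three central pairs, and every choice of two pairs arises this way. So you can take the flattening whose odd block carries both active pairs, and all four of its entries are then nonzero. Shortloops from either side with the effective kernels $K_t$, $t\in\mu_k$, give zero or a multiple of some $D^jN$. Hence this block satisfies the hypotheses of Lemma~\ref{en:lem:oddq-root-ratio} with $n=k$. With all four entries nonzero, only its fourth case survives: $uz=vw$ and all $k$th powers are equal. The block is therefore always singular, $q$ is always a product of two available antidiagonal factors, and $r=2$ always contradicts the hypothesis. This is the one-line argument the paper uses; it is already recorded in the paragraph after \eqref{en:eq:cyc-native-q4}.

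So delete the \emph{otherwise} branch, together with the balancing step and the even-$k$ square-root worry attached to it. The argument you sketch there would not have closed the case anyway. A rank-one eigencomponent of a general nonsingular $2\times2$ central block generically has all four entries nonzero. It is therefore supported on both complementary pairs, not the two-point function $c_0\delta_x+c_1\delta_{\bar x}$ that Lemma~\ref{en:lem:cyc-complement-carrier} requires. With that branch removed, your proof is the paper's.
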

\begin{proof}
The quaternary function has even support. If both endpoints are nonzero
and the function is indecomposable, apply
Theorem~\ref{en:thm:cyc-external-q4} directly. If it factors, the two nonzero
endpoints exclude unary factors, and even support forces a product of two
full-rank diagonal binary factors. After decomposition, both factors are available, contradicting the fact that every binary function in this group is anti-diagonal.

If exactly one endpoint is nonzero, three active central pairs have been
excluded by Lemma~\ref{en:lem:cyc-native-three-centers}; the remaining cases
are resolved by Lemma~\ref{en:lem:cyc-native-one-endpoint}. If both endpoints
are zero, having no active central pairs gives the zero function. One active pair
is resolved by Lemma~\ref{en:lem:cyc-complement-carrier}. Two active pairs
already form a product of two actually available anti-diagonal factors,
by the central rank-one relation, contradicting the hypothesis. Three
active pairs are again excluded by the three-block contradiction.
\end{proof}

\subparagraph*{An example with only EO quaternary functions}
\begin{example}\label{en:ex:cyclic-no-endpoint}
In the native $N$-edge model, consider
$\Gamma=\{=_6\}\cup\{D^jN:0\le j<k\}$, where
$D=\diag(e^{\pi i/k},e^{-\pi i/k})$.
After suppressing the binary vertices, the equality vertices are joined by
disequality edges with nonzero weights. A connected component contributes zero
if it is not bipartite; otherwise there are only two global assignments, whose
contributions can be computed directly. Thus the problem is in $\FP$.
Its complete projective binary group is $C_k$.

On the other hand, suppose a nonzero connected four-port gadget has $a,b$
equality vertices in its two bipartition classes and $u,v$ external ports on
the respective sides. Summing degrees gives $6(a-b)=u-v$, while $u+v=4$.
Hence $a=b$ and $u=v=2$, and its quaternary support lies entirely in the EO layer.
Components containing only binary vertices do not change this conclusion.
Thus, even when an original function has two nonzero endpoints, it may be
impossible to construct a quaternary function with two nonzero endpoints.
Connecting two copies of $=_6$ by four $N$ edges does give a complementary-pair
EO quaternary function. Such functions are handled by
Lemma~\ref{en:lem:cyc-complement-carrier}.
\end{example}

\subparagraph*{Combining the upper and lower cases}

Combining arity reduction in the upper case with the quaternary classification
in the lower case gives a complete dichotomy in each edge model.

\begin{corollary}\label{en:cor:cyc-equality-dichotomy}
Under the binary closure condition of this section for the equality-edge diagonal
group $C_k$ of odd order $k\ge3$, every finite algebraic complex-valued even-arity
function set has an $\FP/\sharpP$-hard dichotomy.
\end{corollary}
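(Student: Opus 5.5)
The corollary splits into the upper and lower cases, and both halves are already available. We work in the equality-edge diagonal model with odd $k\ge3$. Fix a finite algebraic even-arity set $\mathcal F$, and adjoin the binary directions $D_r=\diag(1,r)$, $r\in\mu_k$, using Lemma~\ref{en:lem:framework-available-closure}.

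\emph{Lower case.} Suppose some ordinarily realizable quaternary function is not a pairing product of diagonal group elements. Then Proposition~\ref{en:prop:cyc-equality-q4} gives the dichotomy directly, via Theorem~\ref{en:thm:cyc-external-q4} after the $K$ transformation.

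\emph{Upper case.} Otherwise every nonzero realizable quaternary function is a diagonal pairing product. Lemma~\ref{en:lem:cyc-diagonal-arity}, applied contrapositively, then shows that every nonzero ordinarily realizable even-arity function is a pairing product of full-rank diagonal binary functions from the group. Before invoking it I would check two things. First, odd-arity functions have already been disposed of by Theorem~\ref{en:thm:odd-arity-fp}. Second, decomposition factors are adjoined as in the framework, so ``available'' and ``ordinarily realizable'' may be used interchangeably. This holds in particular for each input function.

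Tractability in the upper case then follows from a cycle-by-cycle evaluation as in Lemma~\ref{en:lem:klein-cycle-algorithm}. Since the lemma is stated only for $\Kstd$ and $\Ktw$ bases, I would repeat its proof with arbitrary fixed full-rank binary factors. Split each vertex along its preprocessed pairing; every component becomes a cycle. The value is the product of traces of products of fixed $2\times2$ matrices (read as $M$ or $M^{\mathsf T}$ by direction) times the known constants, so $\Holant(\mathcal F)\in\FP$. This argument uses nothing about the group beyond the binary factors being fixed matrices.

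The main obstacle is hygiene rather than mathematics: making sure the lower-case hypothesis exactly matches the negation of the upper-case one. In particular, a decomposable quaternary function whose binary factors lie outside the group must be excluded. I would handle this with the remark after Lemma~\ref{en:lem:klein-upper-basic}, adapted to diagonal groups: by the decomposition theorem, any $2+2$ factors become available binaries and so lie in the group, and a $1+3$ split produces a rank-one binary, which is excluded. I would also note explicitly that the tractable alternatives of Theorem~\ref{en:thm:decomposition} encountered along the way already land in $\FP$. Combining the two cases gives the $\FP/\sharpP$-hard dichotomy.
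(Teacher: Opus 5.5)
Your proposal is correct and is essentially the paper's proof: it uses the same two ingredients, Lemma~\ref{en:lem:cyc-diagonal-arity} and Proposition~\ref{en:prop:cyc-equality-q4}, plus the cycle-by-cycle trace evaluation of fixed pairing products. The only difference is bookkeeping: the paper splits on whether all input functions are pairing products and applies the arity-reduction lemma directly, while you split on realizable quaternaries and apply it contrapositively. Since your two cases are exact negations of each other, the extra ``hygiene'' step about decomposable quaternaries with out-of-group factors is not actually needed.
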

\begin{proof}
If all input functions are actual binary pairing products, expand their fixed
decompositions. Every connected component becomes a cycle of binary matrices.
Multiply the matrices in the appropriate orientations and take the trace to
evaluate it in polynomial time. Self-loops and parallel edges are covered by
the same algorithm. Otherwise, Lemma~\ref{en:lem:cyc-diagonal-arity} gives a
quaternary gadget that is not a pairing product, and
Proposition~\ref{en:prop:cyc-equality-q4} completes the proof.
\end{proof}

\begin{corollary}[Complete dichotomy for higher-order cyclic groups with native disequality edges]\label{en:cor:native-cyclic-dichotomy}
Let $k\ge3$, and let $\Gamma$ be a finite algebraic complex-valued even-arity
function set satisfying the binary closure condition of this section in the
native $N$-edge model. Then the entire $\Gamma$ problem has a decidable
$\FP/\sharpP$-hard dichotomy.
\end{corollary}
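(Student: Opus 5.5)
The plan mirrors the proof of Corollary~\ref{en:cor:cyc-equality-dichotomy}, with the native-model results in place of the equality-edge ones. I split into three cases.

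\textbf{Tractable case.} Suppose every nonzero ordinarily realizable function of even arity lies in $\mathcal P_{\mathcal B}$. Then in particular every input function is zero or a fixed antidiagonal pairing product. Following Lemma~\ref{en:lem:klein-cycle-algorithm}, I split each vertex into its binary factors. In the native model each $N$ edge is itself a binary matrix, so every connected component becomes a cycle of $2\times2$ matrices. A component is evaluated as the trace of the product of the factor matrices, taken in the traversal orientation, with $N$ inserted on each edge. This gives an $\FP$ algorithm, and it handles self-loops and parallel edges as before.

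\textbf{Upper case.} Suppose instead that some ordinarily realizable even-arity function is not a pairing product, and that every nonzero ordinary quaternary gadget lies in $\mathcal P_{\mathcal B}$. Choose a counterexample $F$ of minimum arity. Theorem~\ref{en:thm:native-cyclic-shape} gives $m\ge6$ and
\[
F=Q+a\delta_{0^m}+b\delta_{1^m},
\]
where $Q$ is zero or an actual weighted antidiagonal pairing product, and $(a,b)\neq(0,0)$. Lemma~\ref{en:lem:native-cyclic-two-endpoints} then gives $ab=0$, so exactly one endpoint is nonzero. The factors of $Q$ are actual binary functions, so their entry ratios lie in $\mu_k$. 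Hence the hypotheses of Lemma~\ref{en:lem:native-cyclic-endpoint-interpolation} hold, and that lemma yields the decidable dichotomy. Odd-arity functions need no separate treatment: $\Gamma$ is assumed to have even arity, and any odd-arity factors produced later are handled inside those exit lemmas through Theorem~\ref{en:thm:odd-arity-fp}.

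\textbf{Lower case.} Finally, suppose some ordinary quaternary gadget is nonzero and not in $\mathcal P_{\mathcal B}$. Then Proposition~\ref{en:prop:cyc-native-q4} gives the dichotomy directly.

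These three cases exhaust all possibilities, so $\Gamma$ always has a decidable $\FP/\sharpP$-hard dichotomy.

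\textbf{Main obstacle: decidability of the case split.} Which case applies is a question about the infinite closure, and it cannot be settled by enumeration. I would argue instead that each branch's outcome is a decidable condition on fixed data. In the upper and lower cases, the exits rely only on fixed witness gadgets together with the final tractability criteria. Those criteria are the seven-part predicate of Theorem~\ref{en:thm:cyc-external-q4} and the criteria of the odd-arity and decomposition theorems, and all of them are decidable from the fixed input set $\Gamma$ after adjoining fixed gadgets. Tractability in the pure pairing-product case is equivalent to every input function being a pairing product, by the minimality argument above, and this is a finite check. I would therefore phrase the final algorithm as follows: first test whether every input function is a pairing product. If so, the problem is in $\FP$. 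If not, some witness of bounded size exists, either a quaternary non-pairing-product gadget or a minimal counterexample of the described shape. The exit lemmas then reduce the question to criteria that can be checked on $\Gamma$ itself.
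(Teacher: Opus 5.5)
Your non-effective case analysis is exactly the paper's. You use the trace algorithm when every input is a pairing product, and Proposition~\ref{en:prop:cyc-native-q4} when a quaternary non-pairing product is realizable. Otherwise you apply Theorem~\ref{en:thm:native-cyclic-shape}, Lemma~\ref{en:lem:native-cyclic-two-endpoints} and Lemma~\ref{en:lem:native-cyclic-endpoint-interpolation} to a minimum-arity counterexample.

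The gap is in the word \emph{decidable}. Your procedure asserts that ``some witness of bounded size exists'', but no bound is proved. The second witness you name, a minimal counterexample of the described shape, cannot be recognized by a finite check. Minimality is a universal statement about the infinite closure, and so is the hypothesis of Lemma~\ref{en:lem:native-cyclic-two-endpoints} that every nonzero quaternary gadget lies in $\mathcal P_{\mathcal B}$. Your algorithm therefore says neither how to find a witness nor how to use one without knowing which branch you are in. Searching only for a quaternary non-pairing product would not terminate in the upper case.

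The paper closes this with an unbounded search for a locally checkable certificate. It enumerates ordinary gadgets until one of two kinds appears. The first kind is a quaternary non-pairing product. The second kind has arity $2d\ge6$ and the form $Q+a\delta_{0^{2d}}+b\delta_{1^{2d}}$, with $(a,b)\ne(0,0)$ and $Q$ either zero or an antidiagonal matching product with entry ratios in $\mu_k$. Both kinds are recognized by finite checks of support, matching and root-of-unity ratios. Your own case analysis shows that such a gadget exists whenever some input is not a pairing product, so the search terminates without any size bound.

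The certificate is then used with no global hypothesis. If both endpoints are nonzero, the two-copy gadget from the proof of Lemma~\ref{en:lem:native-cyclic-two-endpoints} outputs a quaternary function of support six (two if $Q=0$). That function is not a pairing product of antidiagonal factors, so Proposition~\ref{en:prop:cyc-native-q4} applies. If exactly one endpoint is nonzero, Lemma~\ref{en:lem:native-cyclic-endpoint-interpolation} applies, and it does not assume that all quaternary gadgets are pairing products. With this certificate search replacing the bounded-witness claim, your argument is complete.
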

\begin{proof}
If every input function belongs to the actual binary pairing product class,
expand these fixed decompositions. Each connected component is a cycle of
binary functions, and its value is the trace of the product obtained by
alternating the actual binary matrices with the edge kernel $N$. The entire
problem is therefore computable in polynomial time. Zero functions, nullary
scalars, self-loops, and parallel edges are all handled by the same rule.

Otherwise, some original input function is already a non-pairing-product
function. If an ordinarily realizable quaternary non-pairing-product function
exists, apply Proposition~\ref{en:prop:cyc-native-q4} directly. In the remaining
case, every ordinary quaternary gadget is zero or a binary pairing product from
the current group. Among the ordinary non-pairing-product gadgets over the
current fixed function set, choose one of minimum arity and call it $F$.
It exists because the input function just identified belongs to this set.
The binary closure condition and the quaternary assumption ensure that its
even arity is at least six. Theorem~\ref{en:thm:native-cyclic-shape} gives
$F=Q+a\delta_{0^{2d}}+b\delta_{1^{2d}}$, where $Q=0$ or $Q$ is an actual
antidiagonal matching product whose factor ratios belong to $\mu_k$.
If $a=b=0$, then $F$ is zero or belongs to the actual binary pairing product
class, contrary to its choice. Lemma~\ref{en:lem:native-cyclic-two-endpoints}
excludes $ab\ne0$, so exactly one endpoint is nonzero. Whether or not $Q=0$,
Lemma~\ref{en:lem:native-cyclic-endpoint-interpolation} gives the dichotomy for
the original problem.

Finally, we explain decidability. Membership in the binary pairing product
class can be decided by finite enumeration of matchings and algebraic-number
arithmetic. For the fixed function set to which the group's binary directions
have already been adjoined, if not all input functions are pairing products,
enumerate ordinary gadgets until finding either a quaternary non-pairing-product
gadget or a gadget of the form~\eqref{en:eq:native-cyclic-tail-shape} with at least
one nonzero endpoint and $Q$ satisfying the conditions above. The existence
proof guarantees that this enumeration terminates. A certificate of the first
type is handled by the quaternary dichotomy. For a certificate of the second
type with two nonzero endpoints, the two-copy construction in
Lemma~\ref{en:lem:native-cyclic-two-endpoints} directly produces a quaternary
non-pairing-product gadget; with one nonzero endpoint, the interpolation lemma
applies directly. Every certificate condition is decidable by finite checks of
support, matching, and root-of-unity ratios. The predicates on the complete
function set in the external dichotomies are also decidable. This procedure
therefore terminates and determines the complexity of the original problem.
\end{proof}

\addtocontents{toc}{\protect\setcounter{tocdepth}{2}}

\subsection{Odd Dihedral Groups}\label{en:subsec:road-odd}
For the dichotomy theorem for odd dihedral groups, see also the draft by Jiayi Zheng\cite{en:zheng-draft}. This appendix retains an AI-produced proof.

For odd dihedral groups that cannot yet obtain binary disequality, we first follow the original steps of
``orthogonal triangularization, followed by diagonalization forced by transpose closure'' to obtain a normal form,
then use diagonal and antidiagonal auxiliary functions to reduce to arity four, and finally give the complexity dichotomy for the entire function set.

\subsubsection{Normal Form in the Absence of Binary Disequality}
\begin{lemma}\label{en:lem:odd-dihedral-normal}
If $\mathcal G$ is a dihedral group of order $2n$, where $n\ge3$ is odd, and the condition of Lemma~\ref{en:lem:N-symmetric} does not hold, then there is a complex orthogonal basis in which
\begin{equation}\label{en:eq:odd-normal}
 \mathcal B=\langle D,Y\rangle,\qquad
 D=\diag(\lambda,\lambda^{-1}),\qquad \lambda=e^{\pi i/n}.
\end{equation}
\end{lemma}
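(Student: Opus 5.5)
We work in $\mathcal B\le\mathrm{SL}_2$ with $\mathcal G\cong D_{2n}$, $n\ge3$ odd. By Lemma~\ref{en:lem:involution-route}, the failure of the condition in Lemma~\ref{en:lem:N-symmetric} means that every transpose-fixed order-two coset is antisymmetric, i.e.\ $[Y]$. The plan has four steps: locate the rotation subgroup, orthogonally triangularize a rotation generator, show that transpose closure forces a diagonal form, and then identify the half-turns.

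\textbf{Step 1: the rotation subgroup.} The projective group has a unique cyclic subgroup $C_n$ of index two. Since $n$ is odd, we may lift its generator to $R\in\mathcal B$ of projective order $n$ and choose the sign so that $R^n=I$. Then $R$ has eigenvalues $\mu,\mu^{-1}$ with $\mu^2$ a primitive $n$th root of unity. Transpose induces an automorphism of $\mathcal G$. That automorphism preserves the characteristic subgroup $C_n$, so $R^{\mathsf T}$ is $\pm$ a power of $R$.

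\textbf{Step 2: triangularize $R$.} Take an eigenvector $u$ of $R$.
\begin{itemize}
\item If $u$ is non-isotropic, extend $u/\sqrt{u^{\mathsf T}u}$ to a complex orthogonal basis. In that basis $R$ is upper triangular.
\item If both eigenlines are isotropic, they are the two isotropic lines. An orthogonal change of basis then brings $R$ to the form $aI+bY$, i.e.\ a rotation.
\end{itemize}
In the triangular case, $R^{\mathsf T}$ is lower triangular. But $R^{\mathsf T}=\pm R^j$ is upper triangular, so $R$ is diagonal, i.e.\ $R=D^{\pm1}$ up to the choice of generator. In the rotation case, $R^{\mathsf T}=R^{-1}$. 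A conjugation by the fixed non-orthogonal matrix from \eqref{en:eq:K-basis} does not help here, so the isotropic case must instead be excluded by the next step.

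\textbf{Step 3: the reflections.} Every order-two class $[S]$ is a reflection with $SRS^{-1}=\pm R^{-1}$ and $S^2=-I$.
\begin{itemize}
\item Diagonal case, $R=D$: the relation forces $S$ to be antidiagonal with $S=\begin{pmatrix}0&b\\-b^{-1}&0\end{pmatrix}$. Such an $S$ is symmetric exactly when $b^2=-1$, i.e.\ when it is $\pm X$, which is excluded. The $n$ classes $[D^jS]$ are permuted by transpose and $n$ is odd, so one of them is fixed. That fixed class must be $[Y]$, because a fixed symmetric one would give $N$. Replacing $S$ by that element gives $\mathcal B=\langle D,Y\rangle$.
\item Rotation case, $R=aI+bY$: every reflection anticommutes with $Y$'s eigenstructure in the sense that $S(aI+bY)S^{-1}=aI-bY$. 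This forces $SY=-YS$, so $S$ is symmetric and traceless. That would yield $N$ by Lemma~\ref{en:lem:N-symmetric}, a contradiction. Hence the isotropic case cannot occur.
\end{itemize}

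\textbf{Step 4: normalize $\lambda$ and check algebraicity.} Finally we normalize $\lambda=e^{\pi i/n}$. The diagonal subgroup is generated by $D$, and since it contains $-I$ its order is $2n$. We replace $D$ by the generator with eigenvalue $e^{\pi i/n}$; if needed, conjugating by $Y$ swaps $\lambda$ and $\lambda^{-1}$. This conjugation is orthogonal and is absorbed into the basis. We also check that the orthogonal matrices used are algebraic, by the same argument as in Theorem~\ref{en:thm:two-klein-normal-forms}.

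The main obstacle is Step 2 together with Step 3's parity argument. We must confirm that transpose closure genuinely forces diagonalization rather than leaving a nontrivial Jordan-type triangular form; this is settled because $R$ is semisimple with distinct eigenvalues. The other delicate point is the exact bookkeeping of which reflection coset is transpose-fixed.
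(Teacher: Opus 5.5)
Your argument is correct and is essentially the paper's proof. Transpose closure preserves the odd-order rotation subgroup, a non-isotropic eigenvector gives an orthogonal triangularization that transpose closure forces to be diagonal, and the odd number of reflection cosets yields a transpose-fixed coset that must be $[Y]$. The only variation is that you rule out isotropic eigenlines by showing that every reflection would anticommute with $Y$ and hence be symmetric and traceless, whereas the paper first obtains $Y\in\mathcal B$ and notes that $RY=YR$ would contradict $[Y]$ inverting $r$. One small correction: for the nonabelian $\mathcal G$, the map $[A]\mapsto[A^{\mathsf T}]$ is an anti-automorphism rather than an automorphism, but it still preserves $C_n$, because $C_n$ is exactly the set of odd-order elements.
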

\begin{proof}
Write $\mathcal G=\langle r,s:r^n=s^2=1,\ srs=r^{-1}\rangle$.
Its nonidentity elements of order two are precisely $s,sr,\ldots,sr^{n-1}$, of which there are $n$.
Transpose acts as an involution on these cosets and therefore fixes at least one.
A lift $S$ of a fixed coset has eigenvalues $i,-i$ and satisfies $S^{\mathsf T}=S$ or $S^{\mathsf T}=-S$.
The former case would give $N$, so necessarily $S=\pm Y$, and hence $Y\in\mathcal B$.

The cyclic subgroup $C=\langle r\rangle$ consists exactly of all elements of odd order, so transpose preserves $C$.
Choose a lift $R$ with $R^n=-I$; the full preimage of $C$ is $\langle R\rangle$.
If both eigenlines of $R$ were isotropic, they would be
$\mathbb C(1,i)^{\mathsf T}$ and $\mathbb C(1,-i)^{\mathsf T}$.
These are also the eigenlines of $Y$, so $RY=YR$.
But in the odd dihedral group, $[Y]$ is an element of order two outside the cyclic subgroup and conjugates $r$ to $r^{-1}\ne r$, a contradiction.
Thus $R$ has a non-isotropic eigenvector $u$, which can be extended, after normalization, to a complex orthogonal basis.
In this basis $R$ is upper triangular, as is every element of $\langle R\rangle$.
This cyclic subgroup remains closed under transpose, so $R^{\mathsf T}$ is also upper triangular; hence $R$ is diagonal.
Choose a cyclic generator so that it is the matrix $D$ in Equation~\eqref{en:eq:odd-normal}.
A two-dimensional orthogonal matrix $M$ satisfies $M^{\mathsf T}YM=(\det M)Y$, so $Y$ remains in the group.
The classes $[D]$ and $[Y]$ already generate the entire quotient; these two matrices, together with $-I=D^n$, therefore generate $\mathcal B$.
\end{proof}

The argument first triangularizes the full preimage of the cyclic subgroup in an orthogonal basis, then uses transpose closure of that subgroup to obtain diagonalization.

\subsubsection{Direct arity reduction for odd dihedral groups}
\label{en:sec:odd-arity}

This section uses the equality-edge model.
Suppose every nonzero ordinarily realizable binary function is a full-rank diagonal or antidiagonal matrix.
Let $\mathcal B$ be the set of their determinant-normalized matrices, retaining both signs, and let $\mathcal G=\mathcal B/\{I,-I\}$ be the corresponding quotient group.
Assume that three pairwise nonproportional diagonal auxiliary functions are realizable.
Ignoring compensable nonzero overall scalars, write them as
\begin{equation}\label{en:eq:odda-three-probes}
 L_j=\operatorname{diag}(1,t_j),\qquad
 t_j\ne0,\qquad t_i\ne t_j\quad(i\ne j),\qquad j=1,2,3.
\end{equation}
Also assume that an invertible antidiagonal port action and its inverse are realizable.
These operations preserve the set of actually realizable binary functions.
The odd dihedral normal form
\[
 \mathcal B=\langle D,Y\rangle,\qquad
 D=\operatorname{diag}(\lambda,\lambda^{-1}),\qquad
 \lambda=e^{\pi i/n},\quad n\ge3\text{ is odd},\qquad
 Y=\begin{pmatrix}0&1\\-1&0\end{pmatrix}
\]
satisfies these conditions: choose the three diagonal directions from $I,D,D^2$, and use $Y$ as the antidiagonal action, with inverse $-Y$.

Let $\mathcal P_{\mathcal B}$ consist of the zero functions and the following binary pairing products: take a tensor product of actually realizable full-rank binary functions on a perfect matching, and multiply by a nonzero overall scalar.
Thus a nonzero member of arity $2d$ has the form
\[
 C(x_1,\ldots,x_{2d})=
 c\prod_{\{u,v\}\in M}B_{uv}(x_u,x_v),\qquad c\ne0.
\]
Every binary factor is supported either on the two equal inputs or on the two unequal inputs, so
\begin{equation}\label{en:eq:odda-product-support-size}
 |\operatorname{supp}(C)|=2^d.
\end{equation}
Nullary constants are treated as empty pairing products.

The proof follows the same steps as for higher-order cyclic groups.
First, a linear relation among three shortloops gives a common pairing.
Next, recover the structures of the $00$ and $11$ slices.
Finally, combine three views to obtain two rhinoceros slices and an antelope slice, and reconstruct the whole function as a binary pairing product.
The additional antidiagonal auxiliaries are used to flip bits and to exclude the four-variable exception.
Slices and linear combinations analyze function values; actual gadgets connect the specified binary auxiliaries.

\paragraph{Three-term relations among binary pairing products}

\begin{lemma}[A common pairing or a four-variable exception]\label{en:lem:odda-three-products}
Let $U,V,W$ be nonzero binary pairing products of arity $2d$, formed from full-rank diagonal or antidiagonal binary functions, satisfying
\[
 aU+bV+cW=0,\qquad abc\ne0,
\]
and pairwise nonproportional.
Then one of the following two cases holds:
\begin{enumerate}
 \item The three functions have the same support, and hence the same matching and the same equality or disequality constraint on each edge.
 \item Their matchings differ only on four variables, where they use the three distinct matchings.
 A common nonzero binary pairing product $Q$ can be factored out on the external variables.
 After suitable bit flips on the four variables, the union of the three local supports is exactly the set of all six inputs of Hamming weight two.
 In the two-dimensional space spanned by the three local functions, every nonzero function other than these three projective points is nonzero at all six positions.
\end{enumerate}
\end{lemma}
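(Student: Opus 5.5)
The plan is a support-counting argument. Each of $U,V,W$ has support of size exactly $2^d$ by \eqref{en:eq:odda-product-support-size}. The support of a pairing product is an affine subspace of $\mathbb F_2^{2d}$: each edge $uv$ imposes $x_u+x_v=\epsilon_{uv}$, with $\epsilon=0$ for a diagonal factor and $\epsilon=1$ for an antidiagonal one. The function is nonzero at every point of that affine space. The relation $aU+bV+cW=0$ with $abc\ne0$ forces every point of one support to lie in the union of the other two, since $aU=-bV-cW$ and similarly for the others. So each support $S_U$ is covered by $S_V\cup S_W$.

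First step: an affine subspace covered by two affine subspaces is contained in one of them. Indeed, if $S_U\subseteq S_V\cup S_W$ and $S_U\not\subseteq S_V$, then $S_U\cap S_V$ is a proper affine subspace of $S_U$, hence of at most half its size. Then $S_U\cap S_W$ must contain the complement, which has at least half the points. Being affine, $S_U\cap S_W$ is therefore all of $S_U$ or exactly a coset of a hyperplane of $S_U$. Equal sizes turn containment into equality.

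Case split on this:
\begin{enumerate}
\item All three supports coincide. A support determines its matching, because $u,v$ are always correlated on a matching edge and independent otherwise. We are then in the first conclusion.
\item Otherwise, each support meets the others in exactly a half. The three pairwise intersections are then complementary halves, and $S_W=S_U\triangle S_V$. This is the situation of disjoint halves, since the three supports share no point.
\end{enumerate}

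The main work is in the second case. There $S_U\cap S_V$ is an affine space of dimension $d-1$. Its linear part is generated by the edge constraints of both matchings, so the linear span of the constraint vectors $e_u+e_v$ from $M_U\cup M_V$ has dimension exactly $d+1$. The components of $M_U\cup M_V$ are alternating cycles plus common edges, and the span has dimension $2d-\#\text{cycles}$, counting a shared edge as a $2$-cycle. This forces exactly one alternating cycle of length four, with all other edges common. Moreover, the common edges must carry the same constraint $\epsilon$; otherwise the intersection would be empty.

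So $M_U$ and $M_V$ differ only on four variables. The same argument applied to $(U,W)$ and $(V,W)$ shows that $M_W$ uses the third matching on those variables. On each common edge the three factors are proportional to a common full-rank factor, because their restrictions agree after the local relation is separated. Factor these out as $Q$, and evaluate the relation on the external support to reduce to quaternary local functions.

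Locally, bit flips on the four variables normalize the supports. Each local support has size $4$, and the three supports pairwise meet in two points and have empty triple intersection, so their union has six points. I expect the hardest check to be that these six points are exactly the weight-two words after flipping. I would verify it by writing $S_U=\{x_1+x_2=\alpha,\ x_3+x_4=\beta\}$ and similarly for $V$ and $W$, and using the symmetric-difference identity to get a parity constraint that fixes the total weight.

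For the final claim, take a local function $f=\alpha U+\beta V$ with $\alpha\beta\ne0$ that is not proportional to $W$. It is nonzero on $S_U\setminus S_V$ and on $S_V\setminus S_U$, which together give four of the six points. On $S_U\cap S_V$ it vanishes only if it is proportional to $W$. To see this, $W$ is supported there and $-cW=aU+bV$, so vanishing at one of the two points forces the coefficient ratio; because each factor is full rank, the two points of $S_U\cap S_V$ impose the same ratio. Hence $f$ is nonzero at all six positions.
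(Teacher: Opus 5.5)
Your route is essentially the paper's. Both use equal support sizes $2^d$, then the cycle structure of $M_U\cup M_V$ to force a single alternating four-cycle with all other edges and constraints common. Both then factor out the external product, normalize by bit flips, and finish with a two-point cancellation argument. Your rank count $2d-\#\text{cycles}$ is the same computation as the paper's ``one free bit per consistent cycle.''

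\textbf{The final step fails as written.} $W$ is not supported on $S_U\cap S_V$; it vanishes there, since you showed the three supports have empty common intersection. Nor is full rank of the factors what makes the two points of $S_U\cap S_V$ give the same ratio. In coordinates where $S_U\cap S_V=\{0000,1111\}$, take local $U=\operatorname{diag}(1,2)(x_1,x_2)\,I(x_3,x_4)$ and $V=I(x_1,x_3)\,I(x_2,x_4)$. All factors are full rank, yet $U/V$ is $1$ at $0000$ and $2$ at $1111$; such a pair simply admits no $W$.

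The correct reason is the relation itself. Because $W=0$ on $S_U\cap S_V$, the combination $aU+bV=-cW$ vanishes at both points, so $U/V\equiv -b/a$ there. Hence $\alpha U+\beta V$ vanishes at one of these points iff $\alpha:\beta=a:b$, iff it vanishes at both, iff it is a multiple of $W$. This is the paper's remark that the two intersection positions must cancel at the same parameter.

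\textbf{Smaller points to repair.}

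Your headline ``an affine subspace covered by two affine subspaces is contained in one of them'' is false over $\mathbb F_2$: every affine space is the union of two hyperplane cosets, which is exactly your second case. Your argument actually proves the correct trichotomy, so state it that way.

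Applying the argument to $(U,W)$ only shows that $M_U,M_W$ differ on \emph{some} four-cycle. To see it lies on the same four variables, use $S_W=S_U\triangle S_V$. On each common edge $uv$ the sum $x_u+x_v$ is constant on $S_U\cup S_V\supseteq S_W$, so $uv\in M_W$.

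For the external factors, fix the local variables at a point of the union. Since the triple intersection is empty, exactly two of $U,V,W$ are nonzero there. The relation then gives proportionality of their external products, and hence factorwise proportionality.

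For the weight-two normalization, a parity constraint fixes the weight only modulo two. On the four local variables, $S_U\cup S_V$ consists of six of the eight words of one parity class, and the two missing words form a complementary pair $\{y,\bar y\}$. Flipping by $y$ produces the six weight-two words. Equivalently, as in the paper, flip a point of $S_U\cap S_V$ to $0000$ and then flip $x_2,x_3$.
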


\begin{proof}
The support of a binary pairing product of full-rank monomial matrices determines its matching and edge constraints.
For any variable, only its paired variable is always equal to it or always its complement; the free bits of all other pairs vary independently.
If $U,V$ have the same support, the linear relation places $\operatorname{supp}(W)$ inside that support.
Since all three supports have the same size, they must be equal.

If $U,V$ have different supports, the nondegenerate relation gives
\[
 \operatorname{supp}(U)\mathbin{\triangle}\operatorname{supp}(V)
 \subseteq\operatorname{supp}(W).
\]
By Equation~\eqref{en:eq:odda-product-support-size},
\begin{equation}\label{en:eq:odda-half-intersection}
 |\operatorname{supp}(U)\cap\operatorname{supp}(V)|\ge2^{d-1}.
\end{equation}
The union of the two matchings decomposes into alternating cycles, with a common edge treated as a cycle of two parallel edges.
If the equality and disequality constraints around a cycle are inconsistent, the intersection is empty.
If they are consistent, each cycle leaves only one free bit.
A cycle on $2r$ vertices leaves $r$ free bits in one matching but only one in the intersection.
Thus different matchings have an intersection of size at most $2^{d-1}$.
Equality holds only when there is exactly one four-variable alternating cycle and all other edges and constraints agree.
Different edge constraints on the same matching give disjoint supports, which also cannot satisfy Equation~\eqref{en:eq:odda-half-intersection}.

Consequently the intersection has size exactly $2^{d-1}$, and the support of $W$ is precisely the symmetric difference above.
Its matching agrees outside the four variables and must use the third matching inside them.
The local intersection contains two inputs.
At either such input, $W$ vanishes while $U,V$ do not; substituting into the relation shows that the external product factors of $U,V$ are proportional.
Substitution at any local input in the symmetric difference shows that the external factor of $W$ is proportional to them as well.
Hence a common factor $Q$ can be extracted.

To describe the standard form of the local supports, write the first two local matchings as $12|34$ and $13|24$.
Choose an input in their common support and flip its bits to $0000$.
All constraints in the first two matchings become equalities, so both constraints in the third matching $14|23$ are disequalities.
Now flip variables 2 and 3.
The three local supports become
\[
 \operatorname{supp}(Y_{12}Y_{34}),\quad
 \operatorname{supp}(Y_{13}Y_{24}),\quad
 \operatorname{supp}(Y_{14}Y_{23}),\qquad
 Y=\begin{pmatrix}0&1\\-1&0\end{pmatrix}.
\]
These functions describe only the supports; the local nonzero values need not equal the corresponding products of $Y$ values.
The union is the set of all six four-bit inputs of weight two.

Finally, consider the local two-dimensional linear space.
The two exclusive parts of the supports cannot disappear in a nondegenerate linear combination.
The two positions in the intersection must cancel simultaneously at the same parameter; otherwise a third binary pairing product could not occur.
Except for the first two functions themselves and this unique cancellation direction, every nonzero combination is therefore nonzero at all six positions.
\end{proof}

The four-variable exception does occur; for example,
\[
 Y_{12}Y_{34}-Y_{13}Y_{24}+Y_{14}Y_{23}=0.
\]
The next step excludes it by applying shortloops on different ports.

\paragraph{Changing ports to exclude the four-variable exception}

The next several lemmas concern a nonzero ordinarily realizable function $F$ of arity $2m$, with $m\ge3$.
Assume that every ordinarily realizable function of smaller even arity belongs to $\mathcal P_{\mathcal B}$.
This assumption will hold in the final minimal-counterexample argument.

\begin{lemma}[Consistent pairings for the three shortloops]\label{en:lem:odda-cap-consistency}
Fix any two ports $p,q$ of $F$.
Among the results of shorting them with the three diagonal auxiliaries in Equation~\eqref{en:eq:odda-three-probes}, all nonzero results have the same support, and hence the same matching and edge constraints.
\end{lemma}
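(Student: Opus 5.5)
The plan is to write the two relevant slices as $A=F^{00}_{pq}$ and $B=F^{11}_{pq}$, each a function of the remaining $2m-2$ ports. Shorting $p,q$ with $L_j$ gives $G_j=A+t_jB$, $j=1,2,3$. Each nonzero $G_j$ is ordinarily realizable of smaller even arity, so by the standing assumption it lies in $\mathcal P_{\mathcal B}$. The argument splits according to whether $A,B$ are linearly dependent.

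\emph{Dependent case.} If $A,B$ are dependent, every $G_j$ is a scalar multiple of one fixed nonzero function. All nonzero $G_j$ are then proportional and share their support, and there is nothing more to prove.

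\emph{Independent case.} If $A,B$ are independent, then $A+tB\ne0$ for every $t$, so all three $G_j$ are nonzero. They are pairwise nonproportional because the $t_j$ are distinct. Three vectors in the two-dimensional span of $A,B$, any two of which are independent, satisfy a linear relation $aG_1+bG_2+cG_3=0$ with $abc\ne0$. Lemma~\ref{en:lem:odda-three-products} therefore applies. In its first alternative the three supports coincide, which is the claim. It remains to exclude the four-variable exception.

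\emph{Excluding the exception.} This is the step I expect to be the main obstacle. In the exception, Lemma~\ref{en:lem:odda-three-products} gives $G_j=Q\otimes g_j$ with a common external pairing product $Q$ and local quaternary functions $g_j$ on four ports $u_1,\dots,u_4$. So $A=Q\otimes\alpha$ and $B=Q\otimes\beta$, where $\alpha,\beta$ span the same local plane.

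First, along the matching of $Q$, close every external pair of the original $F$ with a diagonal probe $L_j$ whose closure value on that factor is nonzero. Such a probe exists because a nonzero linear polynomial in $t$ has at most one root and there are three distinct $t_j$. This is an actual gadget, and it yields a realizable hexary function $H(p,q,u_1,\dots,u_4)$ whose $00$ and $11$ slices on $p,q$ are nonzero multiples of $\alpha$ and $\beta$.

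Since $H$ is nonzero of arity six, I then plan to short a pair of the local ports, say $u_1,u_2$, with the three probes. By the lower-arity assumption this produces quaternary functions on $p,q,u_3,u_4$ that must lie in $\mathcal P_{\mathcal B}$. After the bit flips of Lemma~\ref{en:lem:odda-three-products}, the local supports are the three two-point supports of $Y_{12}Y_{34}$, $Y_{13}Y_{24}$ and $Y_{14}Y_{23}$ inside the weight-two layer. A diagonal shortloop at $u_1u_2$ then keeps, from each slice, only the entries with $x_{u_1}=x_{u_2}$, i.e.\ contributions of the local $13|24$ and $14|23$ parts. Both survive with nonzero coefficients for suitable $t_j$.

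I expect the resulting quaternary function on $p,q,u_3,u_4$ to have support incompatible with the size $2^2$ of Equation~\eqref{en:eq:odda-product-support-size}. Concretely, the $00$ and $11$ slices on $p,q$ should carry local supports that are neither equal, nor disjoint in the pattern of a crossing matching, nor zero. If the support count alone does not close the case, the fallback is the antidiagonal port action. Apply it at $u_1$ to move the intersection points to positions where the common cancellation parameter required by Lemma~\ref{en:lem:odda-three-products} would have to coincide with a second, different cancellation parameter. This would force two distinct $t$ values to be equal, which is impossible.

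Either route excludes the exception, so all nonzero shortloops share support, matching and edge constraints.
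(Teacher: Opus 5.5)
Your split into the dependent and independent cases, and the reduction to excluding the four-variable exception, match the paper. The gap is in that exclusion: you assert it rather than prove it, and the contradiction you predict does not occur.

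After the bit flips, $\alpha$ and $\beta$ are nonzero at all six weight-two local inputs. The local function $g_{j_*}=\alpha+t_{j_*}\beta$ whose matching is $u_1u_2|u_3u_4$ vanishes at $0011$ and $1100$, so $\alpha(0011)/\beta(0011)=\alpha(1100)/\beta(1100)=-t_{j_*}$. The $p=q$ part of the diagonal shortloop at $u_1u_2$ is therefore
\[
Q\otimes\bigl(\alpha(0011)\delta_{11}+t_j\alpha(1100)\delta_{00}\bigr)
\quad\text{and}\quad
Q\otimes\bigl(\beta(0011)\delta_{11}+t_j\beta(1100)\delta_{00}\bigr)
\]
on the two slices. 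These have equal supports $\{00,11\}$ on $(u_3,u_4)$ and proportional values, which is exactly the shape of a pairing product with matching $pq|u_3u_4$. So the $00,11$ slices give no contradiction.

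Your fallback is not an argument either. A port action at $u_1$ commutes with the $pq$ shortloop and transforms all three $G_j$ by one invertible operator. It therefore preserves the linear relation and its single cancellation parameter; no second parameter appears. A smaller point: a diagonal probe closes an antidiagonal factor of $Q$ to zero for every $t$, so closing $Q$ needs the antidiagonal action composed with a probe.

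The missing idea is that the contradiction has to go through the slices $C=F^{01}_{pq}$ and $E=F^{10}_{pq}$. No $pq$ shortloop sees these, and your proposal never uses them.

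The paper shorts $p$ with a local variable $x_i$, giving $K^{(i)}_j(q=0)=A(x_i=0)+t_jE(x_i=1)$ and $K^{(i)}_j(q=1)=C(x_i=0)+t_jB(x_i=1)$. On the $6s$ positions with $q+|x_{\ne i}|=2$ and $z\in\mathcal S$, each of these is a nonzero linear form in $t_j$. So the three supports total at least $12s$ against the cap $3\cdot 4s$. Equality confines $C$ to local weights $\{1,4\}$ and $E$ to $\{0,3\}$. Exchanging $p,q$ reverses this, so $C=E=0$, and then each $K^{(i)}_j$ has support $6s>4s$.

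Your own local-pair shortloops can also be completed. The $p=q$ part above already fills the whole support $4s=2^{m-1}$ of a $(2m-2)$-ary pairing product, so the $p\ne q$ part must vanish. That is, $C$ and $E$ vanish wherever $x_{u_a}=x_{u_b}$. By pigeonhole over the six local pairs, $C=E=0$, and one shortloop on $(p,u_1)$ then has support $6s$. Neither argument appears in your proposal.

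Note also that for $m\ge4$ your closed hexary $H$ already contradicts the lower-arity assumption. Its $pq$-slice $\alpha$ has six support points, whereas the corresponding slice of a hexary pairing product has $0$, $2$ or $4$. The real case is $m=3$, where $H=F$.
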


\begin{proof}
Denote the four mathematical slices by
\[
 A=F_{pq}^{00},\quad B=F_{pq}^{11},\quad
 C=F_{pq}^{01},\quad E=F_{pq}^{10}.
\]
The three actual shortloop results are $H_j=A+t_jB$.
If $A,B$ are linearly dependent, all nonzero results are proportional and the claim follows.
If they are independent, the three results are nonzero, pairwise nonproportional, and satisfy a linear relation with all three coefficients nonzero.
The lower-arity assumption makes every $H_j$ a binary pairing product, so Lemma~\ref{en:lem:odda-three-products} applies.

Suppose its four-variable exception occurs.
Denote the local variables by $x_1,\ldots,x_4$ and the remaining variables by $z$.
Factor out the common external function $Q(z)$, and set
\[
 \mathcal S=\operatorname{supp}(Q),\qquad s=|\mathcal S|=2^{m-3}.
\]
Use the available invertible antidiagonal actions to perform the local bit flips in the lemma.
These actions only flip input bits and multiply by nonzero weights, preserving support sizes, membership in $\mathcal P_{\mathcal B}$, and the lower-arity assumption.
Keep the same notation after these actions.
The union of the three special supports is now
\[
 E_2\times\mathcal S,\qquad
 E_2=\{x\in\{0,1\}^4:|x|=2\}.
\]
In this two-dimensional space, $A,B$ correspond to parameters $0,\infty$, whereas the three binary pairing product parameters $t_j$ are nonzero and finite.
The final assertion of Lemma~\ref{en:lem:odda-three-products} therefore says that $A,B$ are nonzero at exactly all $6s$ positions and zero elsewhere.

Fix $i\in\{1,2,3,4\}$ and instead use the same three diagonal auxiliaries on ports $p,x_i$, obtaining functions $K_j^{(i)}$.
Their remaining ports are $q,x_{\ne i},z$, and
\begin{align*}
 K_j^{(i)}(q=0)&=A(x_i=0)+t_jE(x_i=1),\\
 K_j^{(i)}(q=1)&=C(x_i=0)+t_jB(x_i=1).
\end{align*}
Consider the set of positions
\begin{equation}\label{en:eq:odda-anchor-set}
 q+|x_{\ne i}|=2,\qquad z\in\mathcal S.
\end{equation}
It contains $6s$ positions.
At each one, the preceding linear expression in $t_j$ has at least one known nonzero coefficient: from $A$ when $q=0$, and from $B$ when $q=1$.
Such an expression vanishes at at most one of the three distinct parameters, so
\begin{equation}\label{en:eq:odda-support-lower-bound}
 \sum_{j=1}^{3}|\operatorname{supp}(K_j^{(i)})|\ge12s.
\end{equation}
On the other hand, every result is ordinarily realizable with strictly smaller arity, and is therefore zero or a binary pairing product.
Its arity is $2m-2$, so a nonzero result has support size $2^{m-1}=4s$.
Equality must therefore hold in Equation~\eqref{en:eq:odda-support-lower-bound}.
Thus all three results vanish at every position outside Equation~\eqref{en:eq:odda-anchor-set}.
A linear expression vanishing at three distinct parameters has both coefficients zero.

As $i$ ranges over the four local variables, we obtain the following restrictions: if a nonzero input of $C$ has $x_i=0$, then $|x|=1$; if a nonzero input of $E$ has $x_i=1$, then $|x|=3$.
Their possible supports on the local four variables therefore satisfy
\[
 \operatorname{supp}_x(C)\subseteq E_1\cup\{1111\},\qquad
 \operatorname{supp}_x(E)\subseteq E_3\cup\{0000\},
\]
where $E_r=\{x:|x|=r\}$ and $\operatorname{supp}_x$ denotes projection of the nonzero support onto these four variables.
No extra assumption on the external variables was used, so these inclusions also cover inputs that might lie outside $\mathcal S$.

Now exchange $p,q$ and apply the same three shortloops on ports $q,x_i$.
The identical count gives the opposite restrictions
\[
 \operatorname{supp}_x(C)\subseteq E_3\cup\{0000\},\qquad
 \operatorname{supp}_x(E)\subseteq E_1\cup\{1111\}.
\]
The two allowed sets for each function are disjoint, so $C=E=0$.
But then every $K_j^{(i)}$ is nonzero at all $6s$ positions of Equation~\eqref{en:eq:odda-anchor-set}, exceeding the $4s$ positions allowed for a binary pairing product.
This contradiction excludes the four-variable exception.
\end{proof}

\paragraph{A common factor of two slices}

Inserting invertible antidiagonal auxiliaries at suitable ports flips a nonzero input of $F$ to the all-zero input.
Both these operations and their inverses are realizable, so they preserve membership in $\mathcal P_{\mathcal B}$ and the lower-arity assumption.
We may therefore arrange that
\begin{equation}\label{en:eq:odda-all-zero-nonzero}
 F(0^{2m})\ne0.
\end{equation}

\begin{lemma}[At most one changing factor]\label{en:lem:odda-one-changing-factor}
Suppose three nonzero functions are tensor products over the same partition of the variables, are pairwise nonproportional, and satisfy a nondegenerate three-term linear relation.
For any two of them, their factors can be nonproportional on at most one variable group.
\end{lemma}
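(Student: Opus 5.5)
The plan is to reduce the claim to a rank statement about a flattening across one group. Write the three functions as
\[
U=\bigotimes_{g}u_g,\qquad V=\bigotimes_{g}v_g,\qquad W=\bigotimes_{g}w_g
\]
over the common partition into variable groups $g$. All factors are nonzero because the products are nonzero. The relation is
\[
aU+bV+cW=0,\qquad abc\ne0.
\]
Fix two of the functions, say $U$ and $V$. We argue by contradiction: assume their factors are nonproportional on two distinct groups $g_1\ne g_2$.

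The main step is a flattening argument. Group all variables except those in $g_1$ into a single block, so each product becomes a rank-one matrix $u_{g_1}\otimes u'$ with respect to the cut $g_1\mid\text{rest}$. Since $u_{g_2}$ and $v_{g_2}$ are nonproportional, the complementary parts $u'$ and $v'$ are nonproportional as well, because a tensor product is proportional only if every factor is. Likewise $u_{g_1}$ and $v_{g_1}$ are nonproportional. Therefore the matrix
\[
aU+bV=a\,u_{g_1}\otimes u'+b\,v_{g_1}\otimes v'
\]
has rank exactly two: two linearly independent left vectors paired with two linearly independent right vectors give rank two. On the other hand, this matrix equals $-cW$, which has rank one because $c\ne0$ and $W$ is a nonzero product. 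This is a contradiction.

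Hence at most one group can carry nonproportional factors between $U$ and $V$. The same argument applies verbatim to every pair of the three functions. The only points to check are that nonproportionality of a tensor product pushes down to some factor and lifts from factors, and that two nonproportional nonzero vectors are linearly independent; both are immediate. This also explains why pairwise nonproportionality of the three functions is needed: it guarantees that the relation is genuinely three-term, so that none of $a$, $b$, $c$ vanishes.

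I do not expect a substantive obstacle. The one delicate point is the rank computation, which needs both the left factors and the right factors to be independent simultaneously, and this is exactly what the two distinct nonproportional groups supply.
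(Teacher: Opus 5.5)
Your proof is correct and is essentially the paper's argument: flatten across one group where the factors differ, note that the second differing group makes the complementary parts independent, so $aU+bV$ has rank two and cannot equal the rank-one flattening of $-cW$. Your extra checks spell out details the paper leaves implicit: a product of nonzero factors is proportional only if every factor is, and pairwise nonproportionality forces any nontrivial relation to have all three coefficients nonzero.
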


\begin{proof}
If the factors were nonproportional on at least two variable groups, flatten the functions between one such group and all remaining variables.
The first two functions become $uv^{\mathsf T}$ and $xy^{\mathsf T}$, with $u,x$ independent and $v,y$ independent.
A sum with both coefficients nonzero therefore has rank two and cannot be the rank-one flattening of the third function.
\end{proof}

\begin{lemma}[A common external factor of the slices]\label{en:lem:odda-common-slice-factor}
For any ports $p,q$, a pair $z,w$ can be chosen among the remaining variables so that
\begin{equation}\label{en:eq:odda-slice-factorization}
 F_{pq}^{00}=E_0(z,w)Q,\qquad
 F_{pq}^{11}=E_1(z,w)Q.
\end{equation}
Here $Q$ is a binary pairing product of actually realizable full-rank diagonal binary factors.
The matrices $E_0,E_1$ are diagonal, but may at this stage be degenerate, and $E_0(00)\ne0$.
\end{lemma}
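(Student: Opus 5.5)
The plan is to work with the three actual shortloops $H_j=A+t_jB$, where $A=F_{pq}^{00}$ and $B=F_{pq}^{11}$, built from the diagonal auxiliaries in Equation~\eqref{en:eq:odda-three-probes}. Each $H_j$ is ordinarily realizable and has arity $2m-2\ge4$, so by the lower-arity assumption it is zero or lies in $\mathcal P_{\mathcal B}$. By the normalization~\eqref{en:eq:odda-all-zero-nonzero} we have $A(0^{2m-2})=F(0^{2m})\ne0$, which will force diagonality and $E_0(00)\ne0$. The argument then splits according to whether $A,B$ are linearly dependent.

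\emph{Dependent case.} Since $A\ne0$, we have $B=\beta A$ and $H_j=(1+t_j\beta)A$. At most one $j$ makes this vanish, so $A$ is a nonzero multiple of some $H_j\in\mathcal P_{\mathcal B}$. Every binary factor of $A$ is full rank and monomial, and $0^{2m-2}$ lies in the support, so every factor is supported on equal inputs and is therefore diagonal. Choose any pair $z,w$ of its matching and let $E_0$ be that factor, absorbing the scalar. Set $E_1=\beta E_0$ and let $Q$ be the product of the remaining factors. The factors of $Q$ are actual realizable full-rank diagonal functions by the definition of $\mathcal P_{\mathcal B}$, and $E_0(00)\ne0$ because $A(0)\ne0$.

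\emph{Independent case.} Now the three $H_j$ are nonzero, pairwise nonproportional, and satisfy a nondegenerate three-term relation.
\begin{enumerate}
\item Lemma~\ref{en:lem:odda-cap-consistency} gives them a common support, matching and edge constraints.
\item The values $H_j(0)=A(0)+t_jB(0)$ form a nonconstant or nonzero linear expression in $t_j$, so at least two are nonzero. With common support, $0$ therefore lies in every support, and all factors are diagonal.
\item Lemma~\ref{en:lem:odda-one-changing-factor}, applied to $H_1,H_2$ over the common matching, shows that their factors are proportional on every pair except possibly one pair $\{z,w\}$. If there is no such pair, take $\{z,w\}$ to be any pair of the matching.
\item Write $H_1=E'(z,w)Q$ and $H_2=E''(z,w)Q$ after rescaling the common factors. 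Then solve the $2\times2$ system with the invertible matrix $\begin{pmatrix}1&t_1\\1&t_2\end{pmatrix}$, using $t_1\ne t_2$. This expresses $A$ and $B$ as $Q$ times linear combinations $E_0,E_1$ of the diagonal matrices $E',E''$.
\item Finally, $E_0(00)Q(0)=A(0)\ne0$ gives $E_0(00)\ne0$, while $E_0,E_1$ may be degenerate as the statement allows. The factors of $Q$ are factors of $H_1\in\mathcal P_{\mathcal B}$, hence actual realizable full-rank diagonal functions.
\end{enumerate}

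The main point needing care is the bookkeeping in step 3. Proportionality of factors is only up to scalars distributed across groups, so I would normalize $H_1,H_2$ to share literally the same $Q$ before solving. Lemma~\ref{en:lem:odda-one-changing-factor} requires the common partition that Lemma~\ref{en:lem:odda-cap-consistency} provides. This step is where the exclusion of the four-variable exception is essential.
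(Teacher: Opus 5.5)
Your proposal is correct and follows essentially the same route as the paper: both use the normalization $F(0^{2m})\ne0$ to force diagonal factors, split on whether $F_{pq}^{00},F_{pq}^{11}$ are dependent, and in the independent case invoke the consistency lemma and the at-most-one-changing-factor lemma before recovering the slices from two shortloops. Your additional bookkeeping is what the paper's proof does implicitly: you normalize $H_1,H_2$ to share literally the same $Q$, and you note that the factors of $Q$ come from an actual shortloop result.
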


\begin{proof}
The values of the three shortloops at the all-zero assignment of the remaining variables are
\[
 F_{pq}^{00}(0)+t_jF_{pq}^{11}(0).
\]
The first term is nonzero, so at least two results have the all-zero input in their support.
If the support of a full-rank monomial binary pairing product contains the all-zero input, all its binary factors must be diagonal.

If the two slices are linearly dependent, every nonzero shortloop is proportional to this purely diagonal binary pairing product, and the conclusion follows immediately.
If they are independent, all three shortloops are nonzero and, by Lemma~\ref{en:lem:odda-cap-consistency}, have the same support and matching.
Thus all three are purely diagonal binary pairing products.
By Lemma~\ref{en:lem:odda-one-changing-factor}, their factors are nonproportional on at most one pair.
Extract the other common factors as $Q$, and recover the slices from the linear equations for two shortloops, obtaining Equation~\eqref{en:eq:odda-slice-factorization}.
The factors of $Q$ can be chosen as factors in the decomposition of an actual shortloop result, so they do belong to the set of actually realizable binary functions.
Finally, Equation~\eqref{en:eq:odda-all-zero-nonzero} gives $E_0(00)\ne0$.
\end{proof}

\begin{lemma}[Complete pairings and complementarily pinned pairs]\label{en:lem:odda-two-slice-forms}
In Equation~\eqref{en:eq:odda-slice-factorization}, exactly one of the following two cases holds:
\begin{enumerate}
 \item Both $E_0,E_1$ have full rank and are proportional; the two slices have the same complete diagonal-pairing support.
 \item The function $E_0$ is nonzero only at $00$, and $E_1$ only at $11$.
 In the two slices, the same pair $z,w$ is pinned to $00$ and $11$, respectively, while all remaining pairing factors have full rank.
\end{enumerate}
In particular, both slices are nonzero.
\end{lemma}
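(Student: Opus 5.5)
The plan mirrors the native-model argument of Lemma~\ref{en:lem:nca-slice-forms}, replacing antidiagonal probes by the diagonal probes $L_j$. First close $Q$ inside the original $F$. For each pair of $Q$, choose a diagonal auxiliary $L_j$ whose closure of that diagonal full-rank factor is nonzero. Such a choice exists: the closure value is a linear expression $\beta_0+t_j\beta_1$ with $(\beta_0,\beta_1)\ne0$, so it vanishes for at most one of the three distinct $t_j$. Applying these shortloops to $F$ yields an actual quaternary gadget $H(p,q,z,w)$. Its $00$ and $11$ slices on $p,q$ equal a common nonzero constant times $E_0$ and $E_1$. Since $E_0(00)\ne0$, the function $H$ is nonzero, and the lower-arity assumption makes $H$ a nonzero member of $\mathcal P_{\mathcal B}$: a pairing product of two full-rank diagonal or antidiagonal binary functions. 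The three possible matchings of $\{p,q,z,w\}$ are then examined.

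If $H$ pairs $pq\mid zw$, the $00$ and $11$ slices are the two entries of the $pq$ factor times the common $zw$ factor. The $pq$ factor is nonzero at $00$, because $E_0(00)\ne0$, so it is full-rank diagonal. Hence both entries are nonzero, and $E_0,E_1$ are nonzero multiples of one full-rank binary function. This full-rank function is diagonal because $E_0,E_1$ are diagonal. This gives case~1.

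If $H$ pairs $pz\mid qw$ or $pw\mid qz$, the value $H(0,0,0,0)=cE_0(00)\ne0$ forces both factors to be full-rank diagonal. Then $p=q=0$ forces $z=w=0$, and $p=q=1$ forces $z=w=1$. Thus $E_0$ is supported only at $00$ and $E_1$ only at $11$, each with nonzero value, since the factors have full rank. This gives case~2.

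The two cases are exclusive, because in case~1 $E_0$ is full rank while in case~2 it has rank one. Both slices are nonzero in either case. The remaining pairing factors are those of $Q$, which are full rank by Lemma~\ref{en:lem:odda-common-slice-factor}.

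The main point needing care is the reduction of $F$'s slices to those of $H$. The shortloops on $Q$'s pairs commute with slicing at $p,q$, and they act on $F_{pq}^{00}=E_0Q$ and $F_{pq}^{11}=E_1Q$ by multiplying by the same closure value of $Q$. That closure value is a product of nonzero per-pair values, so it is nonzero. This uses only diagonal actions, so no issue with the $01/10$ slices arises.

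If $Q$ is nullary, meaning $2m=4$ remaining variables are impossible since $m\ge3$, then $Q$ has at least one pair and the construction applies verbatim. Finally, the realizability of the probes used is guaranteed by \eqref{en:eq:odda-three-probes}.
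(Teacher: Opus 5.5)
Your proof is correct and is essentially the paper's argument. You close the factors of $Q$ with diagonal probes $L_j$ chosen so that each closure is nonzero, obtaining an actual quaternary $H$ whose $00$ and $11$ slices are a common nonzero multiple of $E_0,E_1$ with $H(0000)\ne0$. The lower-arity hypothesis then makes $H$ a purely diagonal member of $\mathcal P_{\mathcal B}$, and you read off the two cases from the three possible matchings; your sentence about a nullary $Q$ is garbled but harmless, since $Q$ has at least one pair when $2m\ge6$.
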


\begin{proof}
For each full-rank diagonal factor of $Q$, choose an $L_j$ whose bilinear closure with it is nonzero.
The closure value is a nonzero linear polynomial in $t_j$, so at most one probe makes it zero; a choice is always possible.
Connect these auxiliaries on the corresponding external pairs of the original function $F$ to obtain an actual quaternary function $H(p,q,z,w)$.
Its two slices are a common nonzero scalar times $E_0,E_1$, and $H(0000)\ne0$.

Arity four is strictly below $2m$, so $H\in\mathcal P_{\mathcal B}$.
Since its support contains the all-zero input, it is a purely diagonal binary pairing product.
If its matching is $pq|zw$, the two slices are nonzero multiples of the same full-rank $zw$ factor.
If its matching is $pz|qw$ or $pw|qz$, pinning $p=q=0$ leaves only $z=w=0$, and pinning $p=q=1$ leaves only $z=w=1$, with both corresponding values nonzero.
These are precisely the two stated cases.
\end{proof}

To retain the visual names for these slice structures, call a complete-pairing slice in the first case an \emph{antelope}.
Call a slice in the second case, with a pair of variables fixed to $00$ or $11$, a \emph{0-rhinoceros} or \emph{1-rhinoceros}, respectively, and call that pair its horn pair.
The terms antelope and rhinoceros below refer to these two slice structures.
In the first case, both slices are proportional to a nonzero actual shortloop result, so all their pairing factors can be chosen from the available full-rank diagonal functions.
In the second case, Lemma~\ref{en:lem:odda-common-slice-factor} ensures that the common factors outside the horn pair are available binary functions.

\paragraph{Assembling pairings from three views}

\begin{lemma}[Internal pairing on four variables]\label{en:lem:odda-three-views}
There are four distinct variables $x,y,z,w$ such that the three slices
\[
 F_{xy}^{00},\qquad F_{xz}^{00},\qquad F_{yz}^{00}
\]
pair $z,y,x$, respectively, with $w$.
If any of these slices is a rhinoceros, its horn pair is this internal pair.
The three slices have the same matching on all remaining variables, and their external product factors are proportional.
\end{lemma}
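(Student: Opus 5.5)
We would prove this by combining the slice structure from Lemma~\ref{en:lem:odda-two-slice-forms} with commutation of shortloops on disjoint pairs, mirroring the propagation arguments used for $C_1$ (Lemmas~\ref{en:lem:c1-two-merge-propagation}--\ref{en:lem:c1-obtain-triangle}).

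\emph{Step 1: find a first internal pair.} Start from any ports $x,y$ and the $00$ slice $F_{xy}^{00}=E_0(z',w')Q$ of \eqref{en:eq:odda-slice-factorization}. This slice is nonzero at the all-zero input by \eqref{en:eq:odda-all-zero-nonzero}. In either form it carries a complete diagonal matching $\mu_{xy}$ on the remaining $2m-2\ge4$ variables: for an antelope we use its full matching, and for a rhinoceros the horn pair counts as a matched pair.

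Pick a variable $z$ and let $w$ be its partner in $\mu_{xy}$. We then look at $F_{xz}^{00}$ and $F_{yz}^{00}$. To compare them with $F_{xy}^{00}$, take a further $00$ slice on a disjoint pair, using the identity
\[
 (F_{xy}^{00})_{uv}^{00}=(F_{uv}^{00})_{xy}^{00}.
\]
A double slice of a diagonal pairing product is again a diagonal pairing product: either the pair is a matching edge, or the slice pins two partners to $0$. Comparing supports of these double slices, as in the uniqueness-of-support argument of Lemma~\ref{en:lem:odda-three-products}, forces the matchings on the remaining variables to agree up to the edges that touch the sliced variables.

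\emph{Step 2: pass to the three views.} Slicing $F_{xy}^{00}$ at $z=0$ pins $w$, since $zw$ is an edge. Likewise $F_{xz}^{00}$ restricted to $y=0$ equals $F_{xy}^{00}$ restricted to $z=0$, so it pins the same $w$ and leaves the same matching on the other variables. In $F_{xz}^{00}$, the variable $y$ must therefore be paired with the one variable whose value is pinned, namely $w$, because every other variable remains matched as before. The same argument applied to $F_{yz}^{00}$ pairs $x$ with $w$.

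The statement about horns follows from this support bookkeeping: a horn pair is pinned to $00$ in a $0$-rhinoceros, and after restriction the only pinned variable outside the common matching is $w$ together with its partner. The external factors are proportional because all three slices, restricted to $x=y=z=0$, give the same function $F(0,0,0,\cdot)$. This function is nonzero at the all-zero input, and dividing off the nonzero $w$ value gives proportional external products.

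\emph{Main obstacle.} The hardest step is the case analysis when some view is a rhinoceros. There the horn pins two variables, so support comparison alone cannot tell whether $y$ is matched to $w$ or pinned together with some other variable. Our first attempt would be to choose $z,w$ so that the horn of $F_{xy}^{00}$ is avoided, or else taken as $zw$ itself, using the fact that $2m-2\ge4$ leaves at least two pairs.

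If that fails, we would slice with a disjoint pair $u,v$, chosen from a common external edge, in both orders. We would then invoke the lower-arity hypothesis on the actual quaternary gadget obtained by closing all external pairs with the probes $L_j$, as in Lemma~\ref{en:lem:odda-two-slice-forms}. That quaternary function on $x,y,z,w$ must be a purely diagonal pairing product, and its matching determines consistently which variable each of $x,y,z$ is paired with.
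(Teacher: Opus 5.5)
Your reasoning for an antelope view in Step~2 essentially matches the paper's proof. There, pinning $y$ in $F_{xz}^{00}$ reproduces the common slice $F_{xyz}^{000}$, so the partner of $y$ must be the unique forced variable $w$. Your proportionality argument via the all-zero restriction also matches. The gap is the rhinoceros case. You flag it as the main obstacle but never resolve it, and two ingredients are missing.

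First, $z,w$ cannot be chosen freely: when $F_{xy}^{00}$ is a rhinoceros you must take $\{z,w\}$ to be its horn pair. Your alternative of choosing $z,w$ to avoid the horn contradicts the statement itself, which requires the horn of $F_{xy}^{00}$ to be the internal pair $zw$. It also destroys the invariant the proof relies on. The slice $F_{xyz}^{000}$ would then have three forced variables (both horn variables and $w$). A rhinoceros view $F_{xz}^{00}$ whose horn avoids $y$ would no longer be excluded, and nothing would force $y$ to pair with $w$. With the correct choice, the nonzero slice $F_{xyz}^{000}$ has exactly one remaining variable forced to $0$, namely $w$, and all others are freely paired by full-rank diagonal factors.

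Second, you claim that support comparison cannot tell whether $y$ is matched to $w$ in a rhinoceros view. That is mistaken: counting forced variables settles it at once, and this is the idea the paper uses.
\begin{itemize}
\item If $F_{xz}^{00}$ is a rhinoceros whose horn pair does not contain $y$, pinning $y=0$ forces three remaining variables to $0$: the two horn variables and the partner of $y$. This contradicts the single forced variable of $F_{xyz}^{000}$.
\item If the horn pair contains $y$, pinning $y$ forces only the other horn variable, which must therefore be $w$.
\end{itemize}
Hence in every case $y$ is paired with $w$ and any horn is $\{y,w\}$. The same argument applied to $F_{yz}^{00}$ pairs $x$ with $w$.

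Your fallback is therefore unnecessary, and as sketched it is not an argument. A quaternary gadget $H(x,y,z,w)$ obtained by closing external pairs has a single matching, whereas the lemma asserts three different internal pairs in the three views. Reading the horn of each view off $H$ would require already knowing that the views share their external matchings and factors; your ``common external edge'' presupposes exactly this. The double-slice commutation and the $C_1$-style propagation in Step~1 are not needed either.
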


\begin{proof}
Choose any $F_{xy}^{00}$.
If it is an antelope, choose any remaining variable $z$ and let $w$ be its partner.
If it is a rhinoceros, let $z,w$ be its horn pair.
In either case, among the remaining variables of the common slice $F_{xyz}^{000}$, exactly $w$ is forced to zero, while all other variables remain freely paired by full-rank diagonal factors.

Now view this slice from $F_{xz}^{00}$, by pinning $y$ to zero.
If this view is an antelope, the unique additionally fixed variable is the partner of $y$, so that partner is $w$.
If this view is a rhinoceros and $y$ does not lie in its horn pair, pinning $y$ forces three remaining variables to zero: the two in the horn pair and the partner of $y$.
This is a contradiction.
Thus $y$ belongs to the horn pair and the other member is $w$.
The same argument for $F_{yz}^{00}$ shows that $x$ is paired with $w$, and any horn pair must be $x,w$.

In each view, now pin the remaining two variables among $x,y,z,w$ to zero.
All three give the same nonzero function $F_{xyzw}^{0000}$.
On the external variables this is a full-rank diagonal binary pairing product.
Its support determines its matching, so the three external matchings agree.
Equality of the function values also makes the three external product factors proportional.
\end{proof}

\begin{lemma}[At least two rhinoceros views]\label{en:lem:odda-two-rhinos}
At least two of the three slices in Lemma~\ref{en:lem:odda-three-views} are 0-rhinoceroses.
\end{lemma}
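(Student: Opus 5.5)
The plan is to argue by contradiction. We use the common external factor supplied by Lemma~\ref{en:lem:odda-three-views} to close off all external variables at once, producing an actual quaternary function on $x,y,z,w$. We then show that two antelope views would force this quaternary function to have a support incompatible with membership in $\mathcal P_{\mathcal B}$.

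By Lemma~\ref{en:lem:odda-two-slice-forms}, every $00$ slice $F_{uv}^{00}$ is either an antelope or a 0-rhinoceros. In the first case $E_0$ has full rank and is diagonal; in the second $E_0$ is supported only at $00$. Suppose, contrary to the claim, that at least two of the three views are antelopes. By the symmetry of the roles of $x,y,z$ in Lemma~\ref{en:lem:odda-three-views}, we may assume that $F_{xy}^{00}$ and $F_{xz}^{00}$ are antelopes. Their internal pairs are $\{z,w\}$ and $\{y,w\}$ respectively, each carrying a full-rank diagonal factor. Hence, with the external variables ranging over the common external support, $F_{xy}^{00}$ is nonzero both at $(z,w)=(0,0)$ and at $(z,w)=(1,1)$. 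Likewise $F_{xz}^{00}$ is nonzero at $(y,w)=(0,0)$ and at $(y,w)=(1,1)$.

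Next we realize the quaternary gadget. Lemma~\ref{en:lem:odda-three-views} says the three views have the same external matching and proportional external product factors. Let $Q$ be this common full-rank diagonal pairing product. As in the proof of Lemma~\ref{en:lem:odda-two-slice-forms}, for each factor of $Q$ we choose a probe $L_j$ whose closure with that factor is nonzero; this is possible because each closure value is a nonzero linear polynomial in $t_j$. Attaching these probes to $F$ on the external pairs gives an actual quaternary gadget $H(x,y,z,w)$.

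On each of the inputs $0000$, $0011$ and $0101$ (in the order $x,y,z,w$), the value of $H$ is the corresponding local value of a view times the same nonzero closure constant. The point to check carefully is that this constant really is common: this follows from the proportionality of the external factors in Lemma~\ref{en:lem:odda-three-views}. Therefore
\[
 H(0000)\ne0,\qquad H(0011)\ne0,\qquad H(0101)\ne0.
\]
The first nonzero value comes from \eqref{en:eq:odda-all-zero-nonzero}.

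Since $H$ has arity $4<2m$, the lower-arity assumption gives $H\in\mathcal P_{\mathcal B}$. Because $0000$ lies in its support, $H$ is a purely diagonal pairing product with exactly four support inputs. We then enumerate the three matchings:
\begin{itemize}
\item $xy|zw$ gives support $\{0000,0011,1100,1111\}$, which misses $0101$;
\item $xz|yw$ gives support $\{0000,0101,1010,1111\}$, which misses $0011$;
\item $xw|yz$ misses both.
\end{itemize}
This is a contradiction, so at most one view is an antelope and at least two are 0-rhinoceroses.

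The main obstacle is the bookkeeping in the closure step. We must make sure the local nonzero values from two different views survive into the same gadget $H$ with a common nonzero scalar. This relies on the views sharing one external matching and having proportional external factors, rather than on each view being closed separately.
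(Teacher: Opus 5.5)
Your proof is correct and follows the paper's argument essentially step by step: you assume two antelope views, close the common (proportional) external factor with diagonal probes to obtain an actual quaternary $H$ that is nonzero at $0000$, $0011$, $0101$, and then rule out all three diagonal matchings. Two minor precisions, neither of which affects the argument: the two closure constants are only proportional with a nonzero ratio, not literally equal; and $H(0000)\neq0$ already follows from $E(00)\neq0$ for the full-rank antelope factor together with the probe choice.
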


\begin{proof}
If at least two were antelopes, rename $x,y,z$ so that they are $F_{xy}^{00}$ and $F_{xz}^{00}$.
Their external factors are proportional, so a common collection of diagonal auxiliaries can close this common factor to a nonzero value.
Connect these auxiliaries to the original function $F$, obtaining an actual quaternary function $H(x,y,z,w)$ with
\[
 H(0000)\ne0,\qquad H(0011)\ne0,\qquad H(0101)\ne0.
\]
The lower-arity assumption and $H(0000)\ne0$ force $H$ to be a purely diagonal binary pairing product.
But of the three matchings on four variables, $xy|zw$ excludes $0101$, $xz|yw$ excludes $0011$, and $xw|yz$ excludes both inputs.
This is a contradiction.
\end{proof}

\begin{theorem}[Reduction to a quaternary function outside the binary pairing products]\label{en:thm:odd-arity-reduction}
Under the binary rigidity and auxiliary-function conditions of this section, if an ordinarily realizable even-arity function lies outside $\mathcal P_{\mathcal B}$, then an ordinarily realizable quaternary function also lies outside $\mathcal P_{\mathcal B}$.
\end{theorem}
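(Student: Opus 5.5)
We argue by minimal counterexample. Suppose every ordinarily realizable quaternary function lies in $\mathcal P_{\mathcal B}$, and choose an ordinarily realizable even-arity function $F\notin\mathcal P_{\mathcal B}$ of minimum arity $2m$. The binary closure condition excludes arity two; nullary functions are scalars; and arity four is excluded by assumption. Hence $m\ge3$. Every ordinarily realizable function of smaller even arity then lies in $\mathcal P_{\mathcal B}$, so all of the preceding lemmas of this subsection apply to $F$. After invertible antidiagonal port actions, which preserve both the hypothesis and membership in $\mathcal P_{\mathcal B}$, we may also assume \eqref{en:eq:odda-all-zero-nonzero}.

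\textbf{Step 1: extracting the internal pair.} Lemma~\ref{en:lem:odda-three-views} gives variables $x,y,z,w$. By Lemma~\ref{en:lem:odda-two-rhinos}, at least two of the three views are $0$-rhinoceroses. Their horn pairs are $\{z,w\},\{y,w\},\{x,w\}$. Lemma~\ref{en:lem:odda-two-slice-forms} describes the corresponding $11$ slices: for a rhinoceros view on $pq$ with horn pair $\{r,w\}$, the slice $F^{11}_{pq}$ is supported only where $r=w=1$. For an antelope view, the slice is supported where $r,w$ satisfy the same diagonal constraint, so $r=w$.

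Now take any nonzero input of $F$. Since the variables are Boolean, two of $x,y,z$ agree. Suppose, for example, that $x=y$. Reading the input through the $F_{xy}$ view forces $z=w$, in either its $00$ or its $11$ slice. If instead $x=z$, the $F_{xz}$ view forces $y=w$; if $y=z$, the $F_{yz}$ view forces $x=w$.

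We then combine two rhinoceros views, say those with horn pairs $\{z,w\}$ and $\{y,w\}$. Using the pinning relations, we check case by case on the pattern of $(x,y,z)$ that every support input satisfies $x=y$ and $z=w$. For instance, if $x=z\ne y$, the $F_{xz}$ view gives $y=w$, and the rhinoceros condition on the $xy$ or $yz$ slice then yields a contradiction. Working through the finitely many cases, the support of $F$ is contained in $\{x=y,\ z=w\}$, and the restriction factors as before.

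\textbf{Step 2: factoring $F$.} Lemma~\ref{en:lem:odda-three-views} says that the external product factors of the three views are proportional. Combined with Step 1, this shows that $F$ factors as $G(x,y,z,w)\cdot Q'(\text{rest})$, where $Q'$ is a full-rank diagonal pairing product built from available factors. To see this, we compare the slices $F^{ab}_{xy}$ for $(a,b)\in\{00,11\}$ and check that they share the external factor.

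Next, close $Q'$ to a nonzero scalar using the diagonal probes $L_j$. As in Lemma~\ref{en:lem:odda-two-slice-forms}, each closure value is a nonzero linear polynomial in $t_j$, so a suitable probe always exists. This closure is an actual quaternary gadget equal to a nonzero multiple of $G$. By hypothesis $G\in\mathcal P_{\mathcal B}$, so $F=G\otimes Q'\in\mathcal P_{\mathcal B}$. This contradicts the choice of $F$.

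\textbf{Main obstacle.} The delicate part is the Boolean case analysis in Step 1: turning the horn-pair pinnings of two rhinoceros views into the global constraints $x=y$ and $z=w$ on the whole support, together with the common external factor. This must also handle inputs outside the all-zero slice, which requires using the $11$ slices described by the rhinoceros form. I would first try to prove directly that $F^{01}_{xy}=F^{10}_{xy}=0$. The plan is to run the shortloop-support counting of Lemma~\ref{en:lem:odda-cap-consistency} on the ports $x,z$ and $y,z$: a nonzero mixed slice would give some shortloop a support exceeding $2^{m-1}$.
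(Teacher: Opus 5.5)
Your outer structure matches the paper's, but Step~1 has a genuine gap, and Step~1 is the only real content once the lemmas are available. The matching parts are these: a minimal counterexample with $F(0^{2m})\ne0$, the four variables from Lemma~\ref{en:lem:odda-three-views}, two $0$-rhinoceroses from Lemma~\ref{en:lem:odda-two-rhinos}, a support constraint that kills the mixed slices, and closure of the common external factor to produce a quaternary gadget.

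\textbf{The gap in Step~1.} The three implications you list ($x=y\Rightarrow z=w$, $x=z\Rightarrow y=w$, $y=z\Rightarrow x=w$) hold for antelope and rhinoceros views alike, and they are too weak. With your choice of rhinoceroses, horn pairs $\{z,w\}$ and $\{y,w\}$, the rhinoceroses are the views $F_{xy}$ and $F_{xz}$. With that choice, the constraint $x=y$, $z=w$ does not follow. The input $(x,y,z,w)=(0,1,1,0)$ lies in a mixed slice of both rhinoceros views. It lies in the $11$ slice of $F_{yz}$, which, if $F_{yz}$ is an antelope, only demands $x=w$. Your sample case fails for the same reason. When $x=z\ne y$, the input lies in mixed slices of $F_{xy}$ and $F_{yz}$, where the rhinoceros forms say nothing. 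The contradiction there actually comes from the $xz$ view itself.

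\textbf{The missing observation.} A rhinoceros view pins its horn pair to the \emph{value} of the sliced pair. By Lemma~\ref{en:lem:odda-two-slice-forms}, its $00$ slice is a $0$-rhinoceros and its $11$ slice is the complementary $1$-rhinoceros. So on a rhinoceros view $F_{pq}$ with horn pair $\{r,w\}$, $p=q=\epsilon$ forces $r=w=\epsilon$, and all four variables are equal.

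Rename, as the paper does, so that the rhinoceroses are $F_{xz}$ and $F_{yz}$. Then $x=z$ or $y=z$ makes all four variables equal. Otherwise $x=y$, and the third view $F_{xy}$ (antelope or rhinoceros) gives $z=w$. Hence the whole support satisfies $x=y$, $z=w$, and $F^{01}_{xy}=F^{10}_{xy}=0$ follows without support counting. The shortloop-counting plan in your last paragraph is therefore unnecessary. With your labeling, the correct conclusion would be $y=z$, $x=w$, and you would factor through the $F_{yz}$ slices.

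\textbf{A minor point in Step~2.} The common external factor of $F^{00}_{xy}$ and $F^{11}_{xy}$ comes directly from Lemmas~\ref{en:lem:odda-common-slice-factor} and~\ref{en:lem:odda-two-slice-forms} applied to the pair $x,y$. It does not come from the proportionality of external factors across the three views.
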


\begin{proof}
Suppose, for a contradiction, that no such quaternary function exists.
Choose a function $F$ of minimum arity among the ordinarily realizable functions outside $\mathcal P_{\mathcal B}$.
Arities zero, two, and four are excluded, so $F$ has arity $2m\ge6$, and every ordinarily realizable function of smaller even arity belongs to $\mathcal P_{\mathcal B}$.
Apply invertible antidiagonal port actions to ensure $F(0^{2m})\ne0$.
Undoing these actions preserves membership in $\mathcal P_{\mathcal B}$.

Apply the preceding lemmas to obtain four variables $x,y,z,w$.
Rename $x,y,z$ so that $F_{xz}^{00}$ and $F_{yz}^{00}$ are 0-rhinoceroses, with horn pairs $y,w$ and $x,w$, respectively.
The complementary $11$ slices in Lemma~\ref{en:lem:odda-two-slice-forms} give the following conditions on every nonzero input of $F$:
\begin{itemize}
 \item If $x=z$, then $x=y=z=w$.
 \item If $y=z$, then $x=y=z=w$.
 \item If $x=y$, the $00$ or $11$ slice in the third view gives $z=w$.
\end{itemize}
Among three bits $x,y,z$, at least two are equal.
Thus the entire support satisfies
\begin{equation}\label{en:eq:odda-final-support}
 x=y,\qquad z=w.
\end{equation}

Now consider $F_{xy}^{00}$ and $F_{xy}^{11}$.
If they are antelopes, they are proportional and pair $z,w$.
If they are complementary rhinoceroses, their horn pair is $z,w$.
In either case, extract a common external factor formed from actually realizable binary functions to write
\[
 F_{xy}^{00}=R_0(z,w)Q,\qquad
 F_{xy}^{11}=R_1(z,w)Q.
\]
Equation~\eqref{en:eq:odda-final-support} excludes both $F_{xy}^{01}$ and $F_{xy}^{10}$.
The common factor of the two slices is therefore a factor of the entire function:
\[
 F=J(x,y,z,w)Q.
\]
On each pair of $Q$, choose an actually realizable diagonal auxiliary with a nonzero closure value.
This ordinarily realizes a nonzero scalar multiple of $J$.
The function $J$ is quaternary, of strictly smaller arity, so the lower-arity assumption gives $J\in\mathcal P_{\mathcal B}$.
Since $Q$ is already an actual binary pairing product, $F\in\mathcal P_{\mathcal B}$ as well, contradicting its choice.
\end{proof}

\subsubsection{The Quaternary Base Case and Complexity Resolution for Odd Dihedral Groups}
\label{en:sec:oddq}

Throughout this section we use the equality-edge model and ordinary transpose. Suppose that all realizable nonzero binary functions, after determinant normalization and retention of both signs, form
\[
 \mathcal B=\langle D,Y\rangle,\qquad
 D=\operatorname{diag}(\lambda,\lambda^{-1}),\qquad
 \lambda=e^{\pi i/n},\qquad n\ge3\text{ odd},\qquad
 Y=\begin{pmatrix}0&1\\-1&0\end{pmatrix}.
\]
Its projective quotient group is $\mathcal G=\mathcal B/\{I,-I\}$. This section treats precisely the standard odd dihedral form without
$N=\left(\begin{smallmatrix}0&1\\1&0\end{smallmatrix}\right)$.
All actual gadgets may carry known nonzero overall scalars, which are compensated for according to the number of gadget occurrences.

Let $\mu_n$ denote the set of all $n$th roots of unity. Up to nonzero scalars, the two types of actually available binary probes can be written as
\begin{equation}\label{en:eq:oddq-probes}
 L_\rho=\begin{pmatrix}1&0\\0&\rho\end{pmatrix},\qquad
 K_\rho=\begin{pmatrix}0&1\\-\rho&0\end{pmatrix},
 \qquad \rho\in\mu_n.
\end{equation}
Thus the ratio of the two nonzero values of a nonzero diagonal binary function belongs to $\mu_n$; the ratio of the lower nonzero value to the upper one in a nonzero antidiagonal binary function belongs to $-\mu_n$.
In the latter case, the $n$th powers of the two values are negatives of one another.

Let $\mathcal P_{\mathcal B}$ consist of the zero functions and the tensor products of actually realizable full-rank binary functions over any perfect matching, multiplied by a nonzero scalar. Separately, let $\mathcal P$ denote the usual CSP product-type class,
whose definition allows unary, binary equality, and binary disequality factors to share variables. These classes are distinct; for example, quaternary equality
$\mathrm{EQ}_4$ belongs to $\mathcal P$ but not to $\mathcal P_{\mathcal B}$.

\paragraph{Shortloop probes and the total parity of quaternary functions}

\begin{lemma}[Fixed total parity]\label{en:lem:oddq-parity}
Under the binary rigidity conditions above, the support of every nonzero realizable quaternary function has fixed total parity.
Attaching $Y$ to any one port of a function with odd support changes it to a function with even support; this invertible actual port operation preserves membership in $\mathcal P_{\mathcal B}$.
\end{lemma}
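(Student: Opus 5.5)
The plan is to first show that every binary function realized from $q$ by a shortloop has fixed parity, and then use three distinct shortloop probes to force the odd or even part of $q$ to vanish.

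Let $q$ be a nonzero realizable quaternary function. For any two ports $p,r$, attach the diagonal probes $L_\rho$, $\rho\in\mu_n$, as shortloops. Since $n\ge3$, at least three distinct values of $\rho$ are available. Each result is a realizable binary function
\[
 q^{00}_{pr}+\rho\,q^{11}_{pr},
\]
so by binary rigidity it is zero or a full-rank diagonal or antidiagonal matrix. In either nonzero case its support has fixed parity: even for diagonal, odd for antidiagonal. The antidiagonal probes $K_\rho$ do the same for the combination $q^{01}_{pr}-\rho q^{10}_{pr}$.

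The key step is to rule out a nonzero input of odd weight coexisting with a nonzero input of even weight. I would split $q=q_{\mathrm{even}}+q_{\mathrm{odd}}$ and work with the two pieces separately.

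Suppose first that $q_{\mathrm{odd}}\ne0$. Choose an odd-weight support input $\alpha$ and two of its positions $p,r$.
\begin{itemize}
\item If $\alpha_p=\alpha_r$, the probe family $L_\rho$ gives binary results whose value at the remaining two bits of $\alpha$ is $q^{00}_{pr}+\rho q^{11}_{pr}$. This is nonzero for all but at most one $\rho$, and those two remaining bits have odd parity.
\item If $\alpha_p\ne\alpha_r$, use $K_\rho$ instead; the remaining bits then have even parity.
\end{itemize}
Either way, with at least two good $\rho$ values the binary results are antidiagonal, respectively diagonal, and so their supports carry only the parity compatible with $\alpha$.

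Now let $\beta$ be any even-weight input. If $\beta$ agrees with $\alpha$ at some position pair of the matching type used above, then the same binary results see $\beta$'s residual values. Those values have the incompatible parity, so the results must vanish there for at least two distinct $\rho$. Vanishing of a linear expression in $\rho$ at two points forces both coefficients to zero, which gives $q(\beta)=0$.

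The main obstacle is the combinatorics of choosing the pair $p,r$ compatibly for both $\alpha$ and $\beta$. I would handle it by noting that with four bits one can always pick $p,r$ such that $\alpha_p=\alpha_r$ and $\beta_p=\beta_r$ simultaneously, or such that $\alpha_p\ne\alpha_r$ and $\beta_p\ne\beta_r$ simultaneously. To see this, let $\gamma=\alpha\oplus\beta$, which has odd weight. Choose $p,r$ with $\gamma_p=\gamma_r$; this is always possible among four bits. Then the relations $\alpha_p=\alpha_r$ and $\beta_p=\beta_r$ are equivalent, and one of the two probe families applies to both inputs. The residual parities of $\alpha$ and $\beta$ still differ, because the removed bits contribute equally to both. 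This gives the contradiction, so $q$ has fixed total parity.

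For the second claim, attaching $Y$ to one port maps $x\mapsto\bar x$ at that port up to signs. This flips the total parity, so a function with odd support becomes one with even support. The operation is invertible with inverse $-Y$. It keeps $\mathcal P_{\mathcal B}$ invariant, since $Y$ composed with an available full-rank binary factor is again in $\mathcal B$, and the whole group is realizable.
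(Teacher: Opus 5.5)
Your proof is correct, but it takes a different route from the paper's.

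\textbf{The paper's route.} It fixes one $2+2$ pairing and studies the linear shortloop map $T$ on $V_0$ (diagonal matrices) and on $V_1$ (antidiagonal matrices). A rank $0/1/2$ case split does the work. In the rank-two case, a two-dimensional image meeting each of $V_0,V_1$ in at most a line can contain only two allowed projective directions, whereas three distinct probes would produce three. So $T(V_0)$ and $T(V_1)$ each lie in a single $V_i$. Repeating this from the other side of the flattening gives at most one nonzero parity block per row group and per column group, which is fixed total parity.

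\textbf{Your route.} You argue pointwise. Given an odd support point $\alpha$ and an even point $\beta$, you choose the shorted pair adaptively: two positions where $\alpha\oplus\beta$ agrees. Then a single probe family sees both residual inputs with opposite parities. After that you only need two facts. A degree-one polynomial in $\rho$ that is nonzero at $\alpha$'s residual vanishes for at most one $\rho\in\mu_n$. A degree-one polynomial vanishing at two values of $\rho$ is identically zero.

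\textbf{Comparison.} Your version avoids the rank cases and the projective-direction counting, and it never uses the second side of the flattening. The paper's version phrases the result as a property of a fixed pairing's flattening. That is the parity-block language it uses immediately afterwards for the six blocks.

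Your handling of the $Y$ port action, including invertibility via $-Y$ and closure of $\mathcal B$ under products and transposes, matches the paper's.
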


\begin{proof}
Let $V_0,V_1$ be the two-dimensional spaces of all diagonal and all antidiagonal binary matrices, respectively, so
$V_0\cap V_1=\{0\}$. Fix a $2+2$ pairing of the quaternary function. Shorting one pair of ports gives a linear map
$T:\mathbb C^{2\times2}\to\mathbb C^{2\times2}$.

First we prove that $T(V_0)$ is contained in some $V_i$. This is clear at rank zero. At rank one, the
$L_\rho$ in~\eqref{en:eq:oddq-probes} span $V_0$, so at least one probe has a nonzero image;
this allowed binary image belongs to some $V_i$ and spans the entire image line. At rank two, $T|_{V_0}$ is injective,
so at least three distinct projective probe directions give three distinct nonzero image directions. If the two-dimensional image space is contained in neither
$V_0$ nor $V_1$, its intersection with each has dimension at most one, so it can contain at most two allowed projective directions, a contradiction.
The rank-one case already includes situations in which some shortloop results vanish; at rank two, no nonzero probe lies in the kernel.

Applying the same argument to $K_\rho$ shows that $T(V_1)$ is also contained in some $V_i$. Shorting the other pair of ports gives the same conclusion in the opposite direction.
Thus, after the rows and columns of the quaternary flattening are each divided into two parity groups, every row group and every column group has at most one nonzero parity block.
The nonzero blocks of this $2\times2$ block table form a partial matching: two nonzero blocks lie either both on the diagonal or both off the diagonal; a single nonzero block also has fixed total parity.

Attaching $Y$ to a port merely flips that input and multiplies by a nonzero weight; its inverse is realized by $Y^{-1}=-Y$.
It changes total parity and turns each binary factor into a product of existing matrices, with ordinary transpose when needed.
Hence it preserves membership in $\mathcal P_{\mathcal B}$ in both directions.
\end{proof}

\paragraph{Root-of-unity ratios and the six parity blocks}

\begin{lemma}[Two-port root-of-unity ratios]\label{en:lem:oddq-root-ratio}
Suppose $n\ge3$, and let
\[
 Q=\begin{pmatrix}u&v\\w&z\end{pmatrix}.
\]
If, for every $\rho\in\mu_n$,
\begin{equation}\label{en:eq:oddq-root-two-directions}
 (u+\rho w)^n=(v+\rho z)^n,\qquad
 (u+\rho v)^n=(w+\rho z)^n,
\end{equation}
then exactly one of the following four cases holds:
\begin{enumerate}
 \item all four entries are zero;
 \item $v=w=0$, $u,z\ne0$, and $u^n=z^n$;
 \item $u=z=0$, $v,w\ne0$, and $v^n=w^n$;
 \item all four entries are nonzero, $uz=vw$, and their $n$th powers are all equal.
\end{enumerate}
Conversely, all four cases satisfy~\eqref{en:eq:oddq-root-two-directions}.
\end{lemma}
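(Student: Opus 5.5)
The plan is to turn the $n$th-power identities in~\eqref{en:eq:oddq-root-two-directions} into polynomial identities in $\rho$. This works because both sides are polynomials in $\rho$ of degree at most $n$ that agree at the $n$ points of $\mu_n$. Expanding, $(u+\rho w)^n-(v+\rho z)^n=\sum_{k=0}^n\binom nk(u^{n-k}w^k-v^{n-k}z^k)\rho^k$. Reducing $\rho^n=1$, the coefficients of $\rho^k$ for $1\le k\le n-1$ must vanish, together with the combined constant term $k=0$ and $k=n$. Since $n\ge3$, the equations for $k=1$ and $k=2$ are separate and give
\[
 u^{n-1}w=v^{n-1}z,\qquad u^{n-2}w^2=v^{n-2}z^2,\qquad u^n+w^n=v^n+z^n .
\]
The second direction yields the analogous identities with the roles of $v,w$ swapped: $u^{n-1}v=w^{n-1}z$, $u^{n-2}v^2=w^{n-2}z^2$, and $u^n+v^n=w^n+z^n$. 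Subtracting the two sum identities gives $w^n-v^n=v^n-w^n$, so $v^n=w^n$, and then $u^n=z^n$ also follows.

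Next I split into cases by the zero pattern. If $u=0$, then $z^n=u^n=0$, so $z=0$ and $v^n=w^n$. This is case~1 if $v=0$ and case~3 otherwise. If $u\ne0$, then $z\ne0$ and $|v|=|w|$. If $v=0$, then $w=0$, which is case~2. If all four entries are nonzero, divide the degree-one identities. From $u^{n-1}w=v^{n-1}z$ and $u^{n-1}v=w^{n-1}z$, dividing gives $w/v=(v/w)^{n-1}$, i.e.\ $(w/v)^n=1$, consistent with $v^n=w^n$. To get $uz=vw$, take the ratio of the $k=2$ and $k=1$ identities: $w/u=v/z$ when the entries are nonzero, so $uz=vw$. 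It then remains to show that all four $n$th powers are equal. From $u^{n-1}w=v^{n-1}z$ and $uz=vw$ we get $u^{n-1}w=v^{n-1}\cdot vw/u$, so $u^n=v^n$. Together with $v^n=w^n$ and $u^n=z^n$, this gives case~4.

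For the converse, I check each case directly. Cases~1--3 are immediate: in case~2, $(u)^n=(\rho z)^n$ because $\rho^n=1$ and $u^n=z^n$, and similarly in case~3. In case~4, write $w=\alpha u$ and $z=\alpha v$ using $uz=vw$. Then $u+\rho w=u(1+\rho\alpha)$ and $v+\rho z=v(1+\rho\alpha)$, and the identity holds since $u^n=v^n$; the transposed direction is symmetric. I expect the main obstacle to be the bookkeeping of the coefficient extraction modulo $\rho^n=1$. Only the exponents $0$ and $n$ merge, so the $k=1$ and $k=2$ coefficients survive exactly when $n\ge3$. I would state this explicitly, since it is where the hypothesis $n\ge3$ is used.
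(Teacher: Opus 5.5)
Your proof is correct and follows the paper's argument: coefficient comparison after reducing modulo $\rho^n=1$, constant terms giving $u^n=z^n$ and $v^n=w^n$, a split by zero pattern, and, when all entries are nonzero, dividing the $k=2$ identity by the $k=1$ identity and substituting back to get $u^n=v^n$. The only slip is that this division gives $w/u=z/v$, not $w/u=v/z$; it is $w/u=z/v$ that actually yields your conclusion $uz=vw$.
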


\begin{proof}
Expand on all $n$th roots of unity and compare the discrete Fourier coefficients. The two constant-term equations are
\[
 u^n+w^n=v^n+z^n,\qquad
 u^n+v^n=w^n+z^n,
\]
so $u^n=z^n$ and $v^n=w^n$. The remaining coefficients, after canceling the nonzero binomial coefficients, give
\[
 u^{n-r}w^r=v^{n-r}z^r,\qquad
 u^{n-r}v^r=w^{n-r}z^r\quad(1\le r<n).
\]
The zero patterns immediately leave only the first three cases or the case of four nonzero entries. In the latter case, dividing the two equations in the first family for $r=1,2$ gives $uz=vw$, and substituting back into $r=1$ gives $u^n=v^n$.
Sufficiency follows by substitution in each case.
\end{proof}

By Lemma~\ref{en:lem:oddq-parity}, we may assume below that the quaternary function $F$ has been transformed to have even support.
Use the natural flattening, with row index $x_1x_2$ and column index $x_3x_4$, both ordered as $00,01,10,11$.
Write
\begin{equation}\label{en:eq:oddq-eight-parameters}
 N_{12|34}(F)=
 \begin{pmatrix}
 s&0&0&e\\
 0&a&c&0\\
 0&d&b&0\\
 f&0&0&t
 \end{pmatrix}.
\end{equation}
That is,
\[
 s=F_{0000},\quad t=F_{1111},\quad
 a=F_{0101},\ b=F_{1010},\quad
 c=F_{0110},\ d=F_{1001},\quad
 e=F_{0011},\ f=F_{1100}.
\]
The even and odd blocks for the three pairings are, respectively,
\begin{equation}\label{en:eq:oddq-six-blocks}
\begin{array}{c|c|c}
 \text{Pairing}&\text{Even block}&\text{Odd block}\\ \hline
 12|34&\begin{pmatrix}s&e\\f&t\end{pmatrix}
       &\begin{pmatrix}a&c\\d&b\end{pmatrix}\\[4pt]
 13|24&\begin{pmatrix}s&a\\b&t\end{pmatrix}
       &\begin{pmatrix}e&c\\d&f\end{pmatrix}\\[4pt]
 14|23&\begin{pmatrix}s&c\\d&t\end{pmatrix}
       &\begin{pmatrix}e&a\\b&f\end{pmatrix}
\end{array}
\end{equation}
The row and column indices of the even blocks are $00,11$, and those of the odd blocks are $01,10$.

\begin{lemma}[Ratio constraints on the parity blocks]\label{en:lem:oddq-block-rigidity}
Every even block in~\eqref{en:eq:oddq-six-blocks} satisfies Lemma~\ref{en:lem:oddq-root-ratio}.
For every odd block $Q=\left(\begin{smallmatrix}u&v\\w&z\end{smallmatrix}\right)$, the modified matrix
\[
 Q^\sharp=\begin{pmatrix}u&-v\\-w&z\end{pmatrix}
\]
also satisfies that lemma. In particular, if all four entries of an odd block are nonzero, then
\begin{equation}\label{en:eq:oddq-odd-block-sign}
 uz=vw,\qquad u^n=z^n=-v^n=-w^n.
\end{equation}
\end{lemma}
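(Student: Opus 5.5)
Fix one of the three pairings, say $12|34$, and flatten $F$ with row ports $x_1x_2$ and column ports $x_3x_4$. The plan is to realize binary shortloops and apply Lemma~\ref{en:lem:oddq-root-ratio} to each resulting binary matrix.

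Shorting ports $3,4$ with an available probe $B$ gives a binary function on $x_1,x_2$ whose entries are $\sum_{y}F(x_1x_2,y)B(y)$. By the binary rigidity condition, this output is zero or a full-rank diagonal or antidiagonal matrix. Its nonzero entries must therefore have equal $n$th powers (diagonal case) or opposite $n$th powers (antidiagonal case).

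\textbf{Even block.} For the block $\left(\begin{smallmatrix}s&e\\f&t\end{smallmatrix}\right)$, short ports $3,4$ with $L_\rho$. The diagonal output entries are $s+\rho e$ at $x_1x_2=00$ and $f+\rho t$ at $11$. Even support makes the antidiagonal entries vanish. So the output is a diagonal matrix, zero or full rank, with $n$th-power ratio $1$. This gives $(s+\rho e)^n=(f+\rho t)^n$ for every $\rho\in\mu_n$, including the degenerate case where both entries vanish. Shorting ports $1,2$ instead gives the transposed relation $(s+\rho f)^n=(e+\rho t)^n$. These are exactly the two families in~\eqref{en:eq:oddq-root-two-directions}. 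One must check that the shortloop output cannot have a single nonzero entry; this holds because such an output would be rank one, which rigidity excludes.

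\textbf{Odd block.} For the block $\left(\begin{smallmatrix}a&c\\d&b\end{smallmatrix}\right)$, short ports $3,4$ with $K_\rho$. Its nonzero values are $K_\rho(0,1)=1$ and $K_\rho(1,0)=-\rho$. The output has entries $a-\rho c$ at $01$ and $d-\rho b$ at $10$, so it is antidiagonal. Rigidity for antidiagonal functions gives $(a-\rho c)^n=-(d-\rho b)^n$. Since $n$ is odd, this equals $(-d+\rho b)^n$. Writing $Q^\sharp=\left(\begin{smallmatrix}a&-c\\-d&b\end{smallmatrix}\right)=\left(\begin{smallmatrix}u&v\\w&z\end{smallmatrix}\right)$, the relation reads $(u+\rho v)^n=(w+\rho z)^n$. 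Shorting ports $1,2$ with $K_\rho$ gives the other family in the same way. The main bookkeeping obstacle is getting the signs and the orientation of the ports right. One must use that both $K_\rho$ and its transpose $K_\rho^{\mathsf T}$, which is a scalar multiple of $K_{\rho^{-1}}$, are available, so that $\rho$ ranges over all of $\mu_n$ in each family. One must also use that oddness of $n$ turns the sign $-1$ into negation inside the $n$th power.

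\textbf{Conclusion.} The other two pairings are handled identically by permuting ports. Lemma~\ref{en:lem:oddq-root-ratio} therefore applies to every even block and to every $Q^\sharp$. If all four entries of an odd block are nonzero, case 4 of that lemma for $Q^\sharp$ gives $uz=(-v)(-w)=vw$. It also gives $u^n=z^n=(-v)^n=(-w)^n=-v^n=-w^n$, again because $n$ is odd. This is~\eqref{en:eq:oddq-odd-block-sign}.
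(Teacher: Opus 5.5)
Your proposal is correct and follows the paper's proof. You short each side of the flattening with $L_\rho$, or with $K_\rho$ for the odd blocks, use even support to kill the off-parity entries, and combine binary rigidity with the oddness of $n$ to obtain the two families of Lemma~\ref{en:lem:oddq-root-ratio} for the even blocks and for $Q^\sharp$. Two small points do not affect correctness: the remark about needing $K_\rho^{\mathsf T}$ is unnecessary, since every $K_\rho$ with $\rho\in\mu_n$ is already available in a fixed orientation, and you use $u,v,w,z$ for the entries of $Q^\sharp$ in the odd-block paragraph but for the entries of $Q$ in the conclusion.
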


\begin{proof}
For an even block, shorting with $L_\rho$ from each side gives a diagonal or zero binary function, directly yielding
\eqref{en:eq:oddq-root-two-directions}.

For an odd block, shorting with $K_\rho$ from each side gives an antidiagonal or zero binary function, so
\[
 (u-\rho w)^n=-(v-\rho z)^n,\qquad
 (u-\rho v)^n=-(w-\rho z)^n.
\]
Since $n$ is odd, these are precisely the equations asserting that $Q^\sharp$ satisfies
\eqref{en:eq:oddq-root-two-directions}. Zero shortloop results also satisfy these equations.
\end{proof}

\paragraph{Exhaustive classification of quaternary functions}

\begin{theorem}[Quaternary equality or an actual binary product]\label{en:thm:oddq-quaternary}
Under the binary rigidity conditions of this section, every nonzero realizable quaternary function $F$ belongs to exactly one of the following two mutually exclusive cases:
\begin{enumerate}
 \item $F\in\mathcal P_{\mathcal B}$;
 \item the support of $F$ consists of two complementary inputs, and attaching existing $Y$ and $D^j$ functions to its ports ordinarily realizes a nonzero scalar multiple of quaternary equality $\mathrm{EQ}_4$.
\end{enumerate}
\end{theorem}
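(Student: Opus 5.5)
First I reduce to even support. By Lemma~\ref{en:lem:oddq-parity}, attaching $Y$ at one port turns an odd-support function into an even-support one. This operation is invertible by $-Y$ and preserves membership in $\mathcal P_{\mathcal B}$, so I may assume $F$ has the form \eqref{en:eq:oddq-eight-parameters}. I will then run a case analysis on the six parity blocks in \eqref{en:eq:oddq-six-blocks}. Lemma~\ref{en:lem:oddq-block-rigidity} places each block, or its modified form $Q^\sharp$ for the odd blocks, in one of the four cases of Lemma~\ref{en:lem:oddq-root-ratio}. In particular each block is zero, diagonal with both entries nonzero, antidiagonal with both entries nonzero, or fully nonzero of rank one. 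The key observation is that each of the eight values $s,t,a,b,c,d,e,f$ lies in exactly two blocks, one per pairing. So the zero patterns of the blocks are strongly coupled.

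Next I enumerate zero patterns. Consider the pair $s,t$, which shares blocks with every other value. By case~(1) or (2) of Lemma~\ref{en:lem:oddq-root-ratio} applied to any even block, $s$ and $t$ are either both zero or both nonzero, unless the even block is fully nonzero.

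\emph{Case $st\ne0$.} Suppose some central pair, say $(a,b)$, is nonzero. Then the even block of $13|24$ is fully nonzero, so $st=ab$. Its partner odd block $\bigl(\begin{smallmatrix}e&c\\d&f\end{smallmatrix}\bigr)$ is then zero, antidiagonal, diagonal, or full.
\begin{itemize}
\item If it is zero, then $c=d=e=f=0$. Now $F$ is supported on $\{0000,1111,0101,1010\}$ with $st=ab$, which is exactly a diagonal product on the pairing $13|24$. The $n$th-power relations show that the two factors are available diagonal matrices $L_\rho$ up to scalar, so $F\in\mathcal P_{\mathcal B}$.
\item If it is nonzero, I derive a contradiction by cross-checking the other two pairings. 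For example, $c,d\ne0$ makes the even block of $14|23$ full, giving $cd=st=ab$. It also makes the odd block of $12|34$ full, which by \eqref{en:eq:oddq-odd-block-sign} forces $a^n=-c^n$. But the even blocks give $a^n=s^n=c^n$ (up to the $\mu_n$ constraints), contradicting $a^n=-c^n$ since $n$ is odd.
\end{itemize}
If no central pair is nonzero, then $F=s\delta_{0000}+t\delta_{1111}$ with $s^n=t^n$. Attaching $D^j$ at one port normalizes the ratio $t/s$ to $1$, because squaring and multiplication by powers of $\lambda$ reach every $n$th root of unity up to sign. This realizes $\mathrm{EQ}_4$, which is case~2.

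\emph{Case $s=t=0$.} Only central pairs remain. In each odd block the nonzero pairs must satisfy the $Q^\sharp$ conditions: both entries of a pair vanish together, and the $n$th powers are related with a minus sign across pairs that share a block.
\begin{itemize}
\item A single nonzero pair gives complementary two-point support. Bit flips by $Y$ move it to $\{0000,1111\}$, and the previous case then yields $\mathrm{EQ}_4$.
\item Two nonzero pairs, say $(a,b)$ and $(c,d)$, make the odd block of $12|34$ full. It therefore satisfies $ab=cd$, which is an antidiagonal product on the pairing $12|34$ built from $K_\rho$-type factors. This lies in $\mathcal P_{\mathcal B}$.
\item Three nonzero pairs make all three odd blocks full. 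I expect the resulting sign relations to contradict one another: the three conditions $a^n=-c^n$, $a^n=-e^n$, and $c^n=-e^n$ cannot hold simultaneously. This is analogous to the cube argument used earlier.
\end{itemize}

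Finally, mutual exclusivity is immediate from support size. A nonzero member of $\mathcal P_{\mathcal B}$ of arity four has support of size $4$ by \eqref{en:eq:odda-product-support-size}, while case~2 has support of size $2$. I expect the main obstacle to be the bookkeeping in the mixed cases. There one must verify both that each surviving configuration really is a product of \emph{available} factors, with the correct $\pm\mu_n$ ratios, and that every other configuration is contradicted. The argument should exploit the sign gap $\mu_n$ versus $-\mu_n$, which separates the two possibilities because $n$ is odd.
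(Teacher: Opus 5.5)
Your proposal is correct and follows essentially the same route as the paper's proof. Both reduce to even support with $Y$, use the even blocks to force $s,t$ and each central pair to vanish or survive together, and use the sign gap $\mu_n$ versus $-\mu_n$ in full odd blocks to allow at most one surviving pair when $st\ne0$ and to exclude three pairs when $s=t=0$; the surviving configurations are read off as diagonal or antidiagonal products or normalized to $\mathrm{EQ}_4$ with $Y$ and $D^j$, and exclusivity comes from support sizes.

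One wording fix: the $D^j$ normalization involves no ``up to sign.'' Attaching $D^j$ multiplies $t/s$ by $\lambda^{-2j}$, and since $\lambda^{-2}$ generates $\mu_n$ and $s/t\in\mu_n$, you reach ratio $1$ exactly.
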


\begin{proof}
First apply invertible port operations as in Lemma~\ref{en:lem:oddq-parity} to obtain the form
\eqref{en:eq:oddq-eight-parameters}. The ratio constraints on the three even blocks show that $s,t$ are either both zero or both nonzero, as are the two entries in each of the three pairs $(a,b),(c,d),(e,f)$.

If $s,t$ are both nonzero, each nonzero pair $(p,q)$ satisfies
\[
 pq=st,\qquad p^n=q^n=s^n=t^n.
\]
If two pairs are nonzero, however, the corresponding odd block requires, by~\eqref{en:eq:oddq-odd-block-sign}, that the nonzero
$n$th powers in the two pairs be negatives of one another, a contradiction. Thus at most one pair is nonzero.

If all three pairs vanish, then $F=s\lvert0000\rangle+t\lvert1111\rangle$ with $s^n=t^n$.
If exactly one pair is nonzero, permute ports so that it is $(e,f)$. Then $st=ef$, and
\[
 F(x_1,x_2,x_3,x_4)=A(x_1,x_2)B(x_3,x_4),\qquad
 A=\operatorname{diag}(1,f/s),\quad B=\operatorname{diag}(s,e).
\]
Both diagonal ratios lie in $\mu_n$, so the two factors really are nonzero scalar multiples of existing $D^j$ functions.
This proves actual membership in $\mathcal P_{\mathcal B}$, not merely algebraic decomposability.

Now suppose $s=t=0$. Within each nonzero pair, the two $n$th powers are equal. All three pairs vanishing would give $F=0$, which was excluded.
If exactly one pair is nonzero, permute ports so that it is $(e,f)$. Attaching $Y$ to each of ports 3 and 4 changes the support
$\{0011,1100\}$ to $\{0000,1111\}$, and the two nonzero values still have equal $n$th powers.

If exactly two pairs are nonzero, permute ports so that they are $(a,b),(c,d)$. The odd-block constraints give
\[
 ab=cd,\qquad a^n=b^n=-c^n=-d^n.
\]
Therefore
\[
 F(x_1,x_2,x_3,x_4)=A(x_1,x_2)B(x_3,x_4),\qquad
 A=\begin{pmatrix}0&1\\d/a&0\end{pmatrix},\quad
 B=\begin{pmatrix}0&a\\c&0\end{pmatrix}.
\]
Both antidiagonal ratios $d/a,c/a$ have $n$th power $-1$; since $n$ is odd, both ratios belong to
$-\mu_n$. Thus the two factors are nonzero scalar multiples of existing $D^jY$ functions.

If all three pairs are nonzero, denote the common $n$th powers within the three pairs by $A_0,C_0,E_0$, all nonzero.
The three odd blocks require
\[
 A_0=-C_0,\qquad E_0=-C_0,\qquad E_0=-A_0.
\]
The first two equations give $A_0=E_0$, contradicting the third. This exhausts all zero patterns.

The remaining nonproduct cases have been transformed by actual port operations into
\[
 H=s\lvert0000\rangle+t\lvert1111\rangle,
 \qquad s,t\ne0,\qquad(s/t)^n=1.
\]
Choose $j\in\{0,\ldots,n-1\}$ so that $\lambda^{-2j}=s/t$, and attach $D^j$ to any port.
The two nonzero values become $s\lambda^j$ and $t\lambda^{-j}=s\lambda^j$, respectively, ordinarily realizing
$s\lambda^j\mathrm{EQ}_4$. Reversing the initial invertible port operations accounts for the original function.
The two cases are mutually exclusive because their support sizes are four and two, respectively.
\end{proof}

\begin{corollary}[The quaternary resolution from direct arity reduction]\label{en:cor:oddq-eq4-exit}
Together with Theorem~\ref{en:thm:odd-arity-reduction}, if a realizable even-arity function outside
$\mathcal P_{\mathcal B}$ exists, then a nonzero scalar multiple of $\mathrm{EQ}_4$ is ordinarily realizable.
\end{corollary}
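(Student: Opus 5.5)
The plan is to combine Theorem~\ref{en:thm:odd-arity-reduction} with Theorem~\ref{en:thm:oddq-quaternary}, with no new ingredients.

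\textbf{Step 1: check the hypotheses of the arity reduction.} The odd dihedral normal form $\mathcal B=\langle D,Y\rangle$ of Lemma~\ref{en:lem:odd-dihedral-normal} supplies the auxiliaries Theorem~\ref{en:thm:odd-arity-reduction} needs.
\begin{itemize}
\item Every nonzero binary function is a full-rank diagonal or antidiagonal matrix.
\item The three diagonal probes $I,D,D^2$ are pairwise nonproportional. Indeed, $\lambda^2\ne1$ and $\lambda^4\ne1$ because $n\ge3$ is odd, so they fit the form of Equation~\eqref{en:eq:odda-three-probes}.
\item The antidiagonal action $Y$ and its inverse $-Y$ are realizable.
\end{itemize}
Hence, if some realizable even-arity function lies outside $\mathcal P_{\mathcal B}$, that theorem gives a nonzero realizable quaternary function $F\notin\mathcal P_{\mathcal B}$. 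It is nonzero because the zero function belongs to $\mathcal P_{\mathcal B}$ by definition.

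\textbf{Step 2: classify the quaternary witness.} Apply Theorem~\ref{en:thm:oddq-quaternary} to $F$. Its two cases are exhaustive and mutually exclusive, and the first is excluded because $F\notin\mathcal P_{\mathcal B}$. So $F$ is in the second case. Attaching existing $Y$ and $D^j$ functions to ports of $F$ then ordinarily realizes a nonzero scalar multiple $s\lambda^j\,\mathrm{EQ}_4$.

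\textbf{Step 3: confirm realizability.} Attaching binary functions to ports of a gadget is itself a gadget, so $c\,\mathrm{EQ}_4$ with $c\ne0$ is ordinarily realizable. The scalar $c$ is known and can be compensated by Lemma~\ref{en:lem:per-signature-scaling}.

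\textbf{Main obstacle.} The only delicate point is in Step 1: the realizability notion in Theorem~\ref{en:thm:odd-arity-reduction} must be the same as in Theorem~\ref{en:thm:oddq-quaternary}, namely ordinary gadgets over the fixed finite set to which the group's binary directions have already been adjoined. Both sections state their binary rigidity condition for this same closure, so they match. Beyond that, the corollary is just the composition of the two theorems.
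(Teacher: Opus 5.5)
Your proposal is correct and matches the paper's own proof, which simply composes Theorem~\ref{en:thm:odd-arity-reduction} (to obtain a quaternary $F\notin\mathcal P_{\mathcal B}$) with Theorem~\ref{en:thm:oddq-quaternary} (to realize a nonzero multiple of $\mathrm{EQ}_4$). Your explicit check that $I,D,D^2$ and $\pm Y$ satisfy the auxiliary hypotheses spells out what the paper states once in the preamble of that section.
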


\begin{proof}
Theorem~\ref{en:thm:odd-arity-reduction} gives a quaternary function $F\notin\mathcal P_{\mathcal B}$; then
Theorem~\ref{en:thm:oddq-quaternary} realizes a nonzero multiple of $\mathrm{EQ}_4$.
\end{proof}

\paragraph{The complexity dichotomy for the entire function set}

We use the dichotomy theorem for sets of complex-valued Boolean functions in $\#\mathrm{CSP}^{2}$
\cite[Theorem~16]{en:lin2021cspd}: for a fixed finite set $\mathcal H$ of algebraic complex-valued functions,
the $\#\mathrm{CSP}^{2}$ problem, when the entire set is contained in one of
$\mathcal P,\mathcal A,\mathcal A^\alpha,\mathcal L$, belongs to
$\mathrm{FP}$, and is $\#\mathrm P$-hard otherwise. Here $\alpha^2=i$,
$M_\alpha=\operatorname{diag}(1,\alpha)$, and
$\mathcal A^\alpha=\{M_\alpha^{\otimes k}g:g\in\mathcal A\}$.
The local affine class $\mathcal L$ requires that, for each support point $\sigma$, applying to the $j$th port the matrix
$\operatorname{diag}(1,\alpha^{\sigma_j})$ gives a function in $\mathcal A$.

\begin{lemma}[An odd-order diagonal tool excludes the affine tractable classes]\label{en:lem:oddq-exclude-affine}
The function $D$ of this section does not belong to $\mathcal A\cup\mathcal A^\alpha\cup\mathcal L$.
\end{lemma}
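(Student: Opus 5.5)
The plan is to check each class directly on the binary function $D=\diag(\lambda,\lambda^{-1})$ with $\lambda=e^{\pi i/n}$ and odd $n\ge3$. Membership in each class is invariant under nonzero scaling, so I will replace $D$ by the normalized function $\diag(1,\lambda^{-2})=\diag(1,\omega)$, where $\omega=e^{-2\pi i/n}$ is a primitive $n$th root of unity with $n$ odd and $n\ge3$. Its support is $\{00,11\}$, an affine subspace, so the whole question is which phases can occur on that support.

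\emph{Class $\mathcal A$.} A function $c\,\mathbf 1_L(x)i^{Q(x)}$ takes nonzero values whose pairwise ratios are powers of $i$. Here $\omega$ is a ratio of two nonzero values, so $\omega$ would have to be a fourth root of unity. That is impossible, since the order of $\omega$ is odd and at least $3$.

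\emph{Class $\mathcal A^\alpha$.} For a binary $g$, the function $M_\alpha^{\otimes2}g$ multiplies the value at $11$ by $\alpha^2=i$ and leaves $00$ unchanged. Hence if $\diag(1,\omega)=M_\alpha^{\otimes2}g$, then $g=\diag(1,\omega/i)$ up to scale. For $g\in\mathcal A$ we would need $\omega/i$ to be a power of $i$, so $\omega$ would again be a fourth root of unity, which is impossible.

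\emph{Class $\mathcal L$.} Take the support point $\sigma=00$. The required modification uses $\diag(1,\alpha^{0})$ at each port, so it is trivial, and we would need $D\in\mathcal A$, which has already been excluded. The point $\sigma=11$ also fails by the same computation as for $\mathcal A^\alpha$.

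The one point needing care is the order statement: the multiplicative order of $\omega=\lambda^{-2}$ is exactly $n$. This holds because $\lambda$ has order $2n$ and $n$ is odd, and since the odd number $n\ge3$ does not divide $4$, $\omega$ is not a power of $i$. I would also make sure the definition of $\mathcal A$ in use allows only powers of $i$ as phase ratios. That follows because $Q$ takes values modulo four, so every value is $c$ times a power of $i$.
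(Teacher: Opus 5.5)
Your proposal is correct and follows the paper's proof essentially step for step. Both arguments use the value ratio $\lambda^{-2}$ of odd order $n\ge3$, so it lies outside $\mu_4$. Both exclude $\mathcal A$ because ratios of affine values lie in $\mu_4$, exclude $\mathcal A^\alpha$ through the shifted ratio $-i\lambda^{-2}$, and exclude $\mathcal L$ by taking the support point $00$.
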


\begin{proof}
The ratio $\rho=\lambda^{-2}$ of its two nonzero values has odd order $n\ge3$, so
$\rho\notin\mu_4$. The ratio of any two nonzero values of an affine function lies in $\mu_4$, and hence
$D\notin\mathcal A$.

If $D\in\mathcal A^\alpha$, then
$(M_\alpha^{-1}\otimes M_\alpha^{-1})D$ belongs to $\mathcal A$. The ratio of its two nonzero values is
$\alpha^{-2}\rho=-i\rho$, still outside $\mu_4$, a contradiction.
If $D\in\mathcal L$, choose the support point $00$: the local affine definition directly requires the unchanged $D$ to belong to $\mathcal A$, again a contradiction.
\end{proof}

\begin{theorem}[Dichotomy for the standard odd dihedral form without disequality]\label{en:thm:oddq-dichotomy}
Let $\mathcal F$ be a fixed finite set of algebraic complex-valued Boolean functions of even arity satisfying the binary group assumptions of this section.
Then, in these standard coordinates, if $\mathcal F\subseteq\mathcal P$, we have
$\operatorname{Holant}(\mathcal F)\in\mathrm{FP}$; otherwise,
$\operatorname{Holant}(\mathcal F)$ is $\#\mathrm P$-hard.
\end{theorem}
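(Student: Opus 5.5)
We split on whether every function of $\mathcal F$ lies in $\mathcal P_{\mathcal B}$.

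\emph{Case 1: $\mathcal F\subseteq\mathcal P_{\mathcal B}$.} The diagonal and antidiagonal full-rank binary factors are products of unary, equality and disequality constraints, so $\mathcal P_{\mathcal B}\subseteq\mathcal P$. Tractability also follows directly: expand each vertex into its fixed binary factors. Every component then becomes a cycle, and its value is the trace of a product of $2\times2$ matrices, as in Lemma~\ref{en:lem:klein-cycle-algorithm}.

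\emph{Case 2: some function lies outside $\mathcal P_{\mathcal B}$.} It may be an input function or only an ordinarily realizable gadget; this case covers both.

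\textbf{Step 1: realize $\mathrm{EQ}_4$.} Corollary~\ref{en:cor:oddq-eq4-exit} ordinarily realizes a nonzero multiple of $\mathrm{EQ}_4$. Since $\mathrm{EQ}_4$ is a gadget, it may be adjoined without changing complexity.

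\textbf{Step 2: pass to $\#\mathrm{CSP}^2$.} With $\mathrm{EQ}_4$ and equality edges, chains of $\mathrm{EQ}_4$ give every even-arity equality $\mathrm{EQ}_{2k}$. Hence
\[
 \Holant(\mathcal F\cup\{\mathrm{EQ}_4\})\equiv_{\mathrm T}\#\mathrm{CSP}^2(\mathcal F),
\]
where every variable occurs an even number of times. We then apply the $\#\mathrm{CSP}^2$ dichotomy \cite[Theorem~16]{en:lin2021cspd} to the set $\mathcal F\cup\{D,Y\}$. The binary functions $D$ and $Y$ are realizable, so adding them does not change complexity.

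\textbf{Step 3: rule out the affine classes.} Lemma~\ref{en:lem:oddq-exclude-affine} shows that $D\notin\mathcal A\cup\mathcal A^\alpha\cup\mathcal L$, so none of these three classes contains the enlarged set. On the other hand, $D,Y\in\mathcal P$. Therefore the problem is in $\FP$ when $\mathcal F\subseteq\mathcal P$, and $\sharpP$-hard otherwise.

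\textbf{Consistency of the two cases.} If $\mathcal F\subseteq\mathcal P_{\mathcal B}$, the criterion $\mathcal F\subseteq\mathcal P$ holds and Case 1 gives $\FP$. The substantive content lies in Case 2.

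\textbf{Main obstacle.} The delicate point is that the $\#\mathrm{CSP}^2$ dichotomy concerns $\#\mathrm{CSP}^2(\mathcal H)$, whereas the reduction to the Holant problem uses only what we realize: $\mathrm{EQ}_4$ and even equalities. One must check that the cited theorem's hardness is stated for $\#\mathrm{CSP}^2$ with only even-arity equalities available. One must also check that tractable membership is tested for the full set, including the realizable binaries; since $D,Y\in\mathcal P$ this is harmless.

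A second point is that the tractable side must apply to the original $\Holant(\mathcal F)$. This is immediate, because $\Holant(\mathcal F)$ reduces to $\#\mathrm{CSP}^2(\mathcal F)$ by viewing each edge as a binary equality, which gives variables of degree two.

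Finally, the framework assumed zero functions are allowed in $\mathcal P$; zero-valued functions are trivially handled.
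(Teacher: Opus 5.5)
Your proposal is correct and follows essentially the paper's proof: realize $\mathrm{EQ}_4$ by Corollary~\ref{en:cor:oddq-eq4-exit}, pass to $\#\mathrm{CSP}^{2}(\mathcal F\cup\{D,Y\})$, and use Lemma~\ref{en:lem:oddq-exclude-affine} to let $D$ exclude $\mathcal A,\mathcal A^\alpha,\mathcal L$ before invoking \cite[Theorem~16]{en:lin2021cspd}. The only difference is the case split: the paper splits directly on $\mathcal F\subseteq\mathcal P$, so your Case~1 is absorbed into its single tractable branch (the two-assignment algorithm for $\mathcal P$), and it uses $\mathcal P_{\mathcal B}\subseteq\mathcal P$ only on the hard side, to produce an input function outside $\mathcal P_{\mathcal B}$.
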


\begin{proof}
If $\mathcal F\subseteq\mathcal P$, expand the product-type constraints into unary weights and equality and disequality relations.
For each consistent connected component, it suffices to sum the weights of its two possible assignments; an inconsistent component contributes zero. This gives a polynomial-time algorithm.

Conversely, $\mathcal F\not\subseteq\mathcal P$, while
$\mathcal P_{\mathcal B}\subseteq\mathcal P$, so some input function lies outside
$\mathcal P_{\mathcal B}$. Corollary~\ref{en:cor:oddq-eq4-exit} ordinarily realizes a nonzero scalar multiple of
$\mathrm{EQ}_4$.

Let $\mathcal H=\mathcal F\cup\{D,Y\}$. The added functions already have fixed realization gadgets, so
$\operatorname{Holant}(\mathcal H)\equiv_T\operatorname{Holant}(\mathcal F)$.
Ordinary edges provide $\mathrm{EQ}_2$; connecting one port of $\mathrm{EQ}_{2k}$ to one port of $\mathrm{EQ}_4$ gives
$\mathrm{EQ}_{2k+2}$. Thus every CSP variable that occurs an even number of times can be connected by the corresponding even-arity equality gadget, giving
\[
 \#\mathrm{CSP}^{2}(\mathcal H)
 \equiv_T\operatorname{Holant}(\mathcal H).
\]
For the reverse direction, simply regard each Holant edge as a CSP variable occurring twice. All known nonzero gadget scalars are compensated for according to their numbers of occurrences.

The set $\mathcal H$ is not contained in $\mathcal P$, and its member $D$, by
Lemma~\ref{en:lem:oddq-exclude-affine}, simultaneously excludes the other three tractable classes. The dichotomy theorem for function sets in
$\#\mathrm{CSP}^{2}$ therefore gives hardness.
\end{proof}

If a uniform complex orthogonal transformation was previously applied to the original problem, the test
$\mathcal F\subseteq\mathcal P$ should be applied to the entire transformed function set in the standard form of this section.

\paragraph*{Continuing the first route}\label{en:sec:scope}
The three parts above refer the groups that can obtain binary disequality to~\cite{en:liu2026disequality},
and develop proofs for the order-one group, higher-order cyclic groups, and odd dihedral groups;
the result for the cyclic group of order two is cited independently from~\cite{en:zhou-draft}.
These resolutions all concern the entire function set; obtaining a quaternary function is only an intermediate step in the reduction.
In the future, I will continue the second route within the original nine-class group framework, organizing proofs class by class without invoking the dichotomy theorem with binary disequality.


\clearpage
\renewcommand{\paperlanguage}{zh}
\setcounter{section}{0}
\setcounter{subsection}{0}
\setcounter{subsubsection}{0}
\setcounter{paragraph}{0}
\setcounter{equation}{0}
\setcounter{theorem}{0}
\setcounter{tocdepth}{2}
\renewcommand{\thesection}{\arabic{section}}
\ctexset{section={name={},number=\arabic{section},numbering=true}}
\renewcommand{\theHsection}{\paperlanguage.\thesection}
\makeatletter
\def\Hy@chapapp{\Hy@chapterstring}
\makeatother
\renewcommand{\contentsname}{目录}
\renewcommand{\abstractname}{摘要}
\renewcommand{\refname}{参考文献}
\renewcommand{\appendixname}{附录}
\renewcommand{\proofname}{证明}
\renewcommand{\papertheoremname}{定理}
\renewcommand{\paperlemmaname}{引理}
\renewcommand{\paperpropositionname}{命题}
\renewcommand{\papercorollaryname}{推论}
\renewcommand{\paperdefinitionname}{定义}
\renewcommand{\paperexamplename}{例}
\renewcommand{\paperassumptionname}{假设}
\renewcommand{\paperremarkname}{注}
\pdfbookmark[0]{中文原文}{language-zh}
\title{统一二分定理之克莱因群上\\第四版：麻花型克莱因的上集证明与通往最后的路程}
\author{夏盟佶\\中国科学院软件研究所\\中国科学院大学}
\date{2026年9月}
\maketitle
\begin{abstract}
本文在九群分类框架下，延续第三版关于克莱因群上集的研究。
转置封闭将克莱因群的实际矩阵表示归为标准型与麻花型。
标准型沿用上一版实数化与调用 Real-Holant 的路线。
第四版正文集中补入麻花型的直接证明。
正文完成克莱因群上集的二分，上集的条件是任意四元函数均可分解，下集暂不处理。

附录另行整理基于最近含二元不等函数二分定理\cite{zh:liu2026disequality}的通往终点线路线，依次讨论哪些群可取得二元不等函数、
$C_1$ 循环群的证明（参见第二版）、$C_k$ 循环群的证明（参见第一版）与奇二面体群的证明。
二阶循环群 $C_2$ 直接引用 Yuxuan Zhou 的独立草稿，本文不复述其证明。
在所引用结果的基础上，以上四类群与含二元不等函数二分定理\cite{zh:liu2026disequality}覆盖的群类合在一起，
给出全部九类群的二分，从而达到最终的二分定理。其中，新定理覆盖整个克莱因群情形，包括其下集。
\end{abstract}
\tableofcontents
\clearpage

\section{背景与本版安排}
任意取定一个有限布尔复值函数集 $\mathcal F$，以其中的函数构成函网，
便得到求函网值的计数问题 $\#\mathcal F$。
原有总框架按照非零二元可实现函数所对应的射影群，把待处理的有限群情形分为九类，
目标是逐类建立复杂性二分。这一框架同时区分两种高元情形：
所有非零四元普通可实现函数均可二二分解的上集，以及已经出现不可二二分解四元函数的下集。

第三版\cite{zh:xia2026v3}集中讨论克莱因群的上集，
先利用转置封闭识别两种实际矩阵表示，再对标准型进行共同相位与实数化分析。
本版正文延续这一次序，第四版新增工作的中心是麻花型证明。
在明确的二二得四上集条件下，先由双活收缩证明普通闭包中的最小反例满足散珍珠条件，
再依次证明共同配对、支撑扩散以及六元和八元的方向矛盾，最后给出逐圈算法。
正文据此完成克莱因群上集的证明；下集暂不处理。

附录~\ref{zh:sec:appendix-road}另开“通往最后的路程”。
第一条路线利用最近的含二元不等函数二分定理，先处理能够取得该函数的群，
然后整理循环群和奇二面体群的证明。第二条路线继续按原来的九类群框架完成直接证明，
本轮仅记录这一计划，不展开它的技术内容。

\section{基础记号与归约}\label{zh:sec:foundations-zh}

\subsection{函网、网函与结合律}

本文的函数均以布尔变量为自变量，以代数复数为函数值。取定一个有限函数集
$\mathcal F$，函数本身及其元数都不属于输入的一部分。给定一个有限图，在每个点上
放置一个元数等于该点度数的函数，并指定该点各端口与函数自变量的对应次序，就得到
一个\emph{函网}。允许自环和平行边；自环占用同一点的两个不同端口。

每条内部边带一个布尔变量。同一边的两个端口使用同一个变量，所以普通边的二元核是
相等函数。把各点函数相乘，再对所有内部边变量求和，得到函网的值。若留有外端口，
只对内部变量求和，所得关于外端口变量的函数称为该函网的\emph{网函}。
这个定义也固定了外端口次序：交换一个二元网函的两个外端口，会把它的矩阵转置。

\begin{proposition}[函网的结合律]
把函网中的一部分先求成网函，再用这个网函替换该部分，不改变整个函网的网函或值。
\end{proposition}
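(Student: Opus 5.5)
The statement is the associativity proposition for function-nets, repeated at the start of the Chinese translation. I would prove it exactly as in the English version, by reorganizing a finite sum.

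\textbf{Setting up the sum.} Fix the function-net and the chosen part $S$, which is a set of vertices. Partition all edge variables into three classes. The first class consists of the internal variables of $S$, meaning edges with both ends in $S$. The second consists of the variables on edges joining $S$ to the rest; these become the external ports of the sub-net $S$. The third consists of all remaining internal variables of the whole net. The value, or more generally the net-function in the external variables $\mathbf x$ of the whole net, is
\[
 \sum_{\mathbf z}\prod_{v}f_v,
\]
a finite sum over Boolean assignments $\mathbf z$ to all internal variables.

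\textbf{Regrouping.} First write the sum over $\mathbf z$ as an iterated sum: outer sums over the second and third classes, inner sum over the first class. Next, factor the product into the factors for $v\in S$ and those for $v\notin S$. The factors with $v\notin S$ do not depend on first-class variables, so distributivity pulls them out of the inner sum. The inner sum is then $\sum_{\text{first class}}\prod_{v\in S}f_v$. By definition this is the net-function of $S$, evaluated at its port variables, which are the second-class variables. If some original external ports of the whole net lie in $S$, those ports are also ports of $S$. The remaining expression is precisely the defining sum of the net obtained by replacing $S$ with a single vertex carrying that net-function.

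\textbf{Port order and edge cases.} The one point needing care is that the port order of the new vertex must match the order used in defining the net-function of $S$. Self-loops and parallel edges need no special treatment, because the argument is stated entirely in terms of edge variables. No step is a genuine obstacle: everything is commutativity of finite sums and distributivity.
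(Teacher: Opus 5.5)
Your proposal is correct and takes the same approach as the paper. The paper also splits the internal variables into those internal to the chosen part and all others, and notes that the two expressions differ only by reordering a finite sum and applying distributivity. Your three-class split and your remarks on port order and self-loops just spell out details the paper leaves implicit.
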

\begin{proof}
把内部变量分成该部分的内部变量和其余变量。所求的两种表达式只交换有限求和的次序，
并使用乘法对加法的分配律，因而相等。
\end{proof}

只允许使用 $\mathcal F$ 中的点函数所构成的闭合函网作为输入，求其值的问题记为
$\#\mathcal F$，也记为 $\Holant(\mathcal F)$。若允许固定二元边核 $E$，则记为
$\Holant_E(\mathcal F)$；普通模型就是 $E=I$。这种记号区分边核与点函数：
$\Holant_N(\mathcal F)$ 中每条边使用 $N$，并不表示普通模型已经实现了一个 $N$ 点。

\subsection{构件、分解与已知的复杂性出口}

若 $g$ 是一个只使用 $\mathcal F$ 中点函数的固定有限函网的网函，就称 $g$ 从
$\mathcal F$ \emph{普通可实现}。结合律允许用这个构件替换任意多个 $g$ 点，故
$\Holant(\mathcal F\cup\{g\})$ 可多项式时间归约到 $\Holant(\mathcal F)$。
更一般地，若这个加入 $g$ 的问题可多项式时间 Turing 归约到原问题，就称 $g$ 在原问题中
\emph{Turing 可用}。插值得到的函数可能只有后一种可用性。以下 $\equiv_{\mathrm T}$
表示多项式时间 Turing 等价，$\sharpP$ 困难性也在这一归约意义下使用。

对互不相交的两组变量，定义张量积
\[
 (f\otimes g)(x,y)=f(x)g(y).
\]
非零函数若能在两个非空变量组之间写成这样的乘积，就称它可分解；否则称它不可分解。
允许先置换变量。得到这样的分解后，使用下面的分解定理取得因子。

\begin{theorem}[分解定理，\cite{zh:meng2025odd,zh:guan2026lp}]
\label{zh:thm:decomposition}\label{zh:thm:decomposition-fp}
设 $\mathcal F$ 是固定有限的代数复值布尔函数集，$f,g$ 是非零代数复值布尔函数。
以下两项至少有一项成立：
\[
 \Holant(\mathcal F\cup\{f,g\})
 \equiv_{\mathrm T}
 \Holant(\mathcal F\cup\{f\otimes g\}),
\]
或者 $\Holant(\mathcal F\cup\{f\otimes g\})\in\FP$。
\end{theorem}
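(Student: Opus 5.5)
The direction $\Holant(\mathcal F\cup\{f\otimes g\})\le_{\mathrm T}\Holant(\mathcal F\cup\{f,g\})$ is immediate: each vertex $f\otimes g$ is replaced by the disjoint union of an $f$ vertex and a $g$ vertex, with the ports ordered accordingly. The plan for the converse is an \emph{occurrence-balancing} argument. Take an instance $\Omega$ over $\mathcal F\cup\{f,g\}$ with $a$ occurrences of $f$ and $b$ occurrences of $g$. If $a=b$, pair each $f$ vertex with one $g$ vertex and read the pair as a single $f\otimes g$ vertex. By associativity (the proposition on function-nets) the value is unchanged, since a tensor product is simply a disjoint union.

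Suppose $a\ne b$, say $a>b$. I would look for a fixed closed function-net $\Lambda_g$ over $\mathcal F\cup\{f\otimes g\}$ that contains some number of $g$'s in excess of its $f$'s and has known nonzero value $Z(\Lambda_g)$. Define $\Lambda_f$ symmetrically, with $f$'s in excess. The candidates are closures of $(f\otimes g)^{\otimes r}$ in which the $f$ factors are closed against the $g$ factors and against $\mathcal F$-gadgets. Adjoining $\Omega$ with suitably many disjoint copies of $\Lambda_g$ balances the two counts, and the product of values then lets us divide out $Z(\Lambda_g)^{c}$. Since the function set is fixed, the imbalance per copy is a fixed integer $\delta>0$. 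If $\delta\nmid(a-b)$, I would also use combinations of $\Lambda_f$ and $\Lambda_g$, or replace $\Omega$ by $\Omega^{\otimes\delta}$ and then recover $Z(\Omega)$ by taking a $\delta$-th root. The correct root is fixed by an auxiliary comparison with a modified instance, or alternatively by choosing $\Lambda$ with $\delta=1$ whenever one exists.

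The remaining case, which I expect to be the main obstacle, is when every such imbalanced closed net vanishes. The natural approach is to show that this vanishing is a strong structural constraint on $f$ or $g$. Any context that closes one factor of $f\otimes g$ to a nonzero value, while the other factor stays open, turns into an imbalanced closed net once that open factor is closed. So if all imbalanced closures vanish, every occurrence of $f\otimes g$ in a nonzero instance must have its two factors ``entangled'' through the rest of the network. I would try to prove the following. Every instance of $\Holant(\mathcal F\cup\{f\otimes g\})$ that uses at least one $f\otimes g$ vertex vanishes unless it lies in an explicitly describable family (for example, constrained through vanishing or nilpotent-type behaviour of $g$ relative to $\mathcal F$). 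On that family the value factorizes so that it can be computed in polynomial time. This gives the $\FP$ alternative, after the harder complementary claim that $\Holant(\mathcal F)$ itself is tractable in this degenerate regime, or otherwise that the $\FP$ statement is read exactly as in \cite{zh:meng2025odd,zh:guan2026lp}.

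Concretely, I would first try the parity and arity bookkeeping: $f$ and $g$ arities modulo two, and whether unary pinning is available. The odd-arity factor case is where the cited odd-arity machinery enters. The even case reduces to finding a single nonzero closure of $g$ against $\mathcal F\cup\{f\otimes g\}$-gadgets with its partner $f$ closed separately. This is the step whose failure has to be converted into the tractability certificate, and it is where I expect most of the work to lie.
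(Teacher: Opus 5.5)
The paper does not prove this statement; it imports it from the cited odd-arity paper together with the later polynomial-time EO algorithm. Taken on its own terms, your proposal proves only the easy half.

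The direction $\Holant(\mathcal F\cup\{f\otimes g\})\le_{\mathrm T}\Holant(\mathcal F\cup\{f,g\})$ is correct. Balancing is also correct once two points are fixed. First, the compensating nets must be fixed closed nets over $\mathcal F\cup\{f,g\}$, with their values precomputed once. A net over $\mathcal F\cup\{f\otimes g\}$, such as your closures of $(f\otimes g)^{\otimes r}$, always has equally many $f$'s and $g$'s, so it cannot compensate anything. Second, no $\delta$-th roots are needed. Let $S$ be the set of imbalances $\#f-\#g$ of nonzero-valued closed nets. $S$ is closed under disjoint union. If it has elements of both signs, it is a group $d\mathbb Z$ in which $\pm d$ are realized by fixed nets, and $a-b\notin d\mathbb Z$ already forces $Z(\Omega)=0$. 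For real $f,g$ the self-pairings $\sum_x f(x)^2$ and $\sum_x g(x)^2$ are positive, so $\pm2\in S$ always; this is why the real-valued case is easy. The complex theorem is about exactly the remaining case, where $S$ has nonzero elements of only one sign. There you offer only a hope.

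That hope points the wrong way, because in the one-sided case the $\FP$ alternative can fail. Work in $N$-edge coordinates; the transformation by $K$ commutes with $\otimes$. Write the charge of an input $x$ as $|x|_1-|x|_0$.
\begin{itemize}
\item Let $\mathcal F$ be a $\sharpP$-hard set of charge-$0$ EO functions, for example a hard six-vertex function.
\item Let $f$ be quaternary and supported on weight-$3$ inputs, so it has charge $+2$.
\item Let $g$ be ternary and supported on weight-$1$ inputs, so it has charge $-1$.
\end{itemize}
Every $N$ edge carries one $0$ and one $1$, so a nonzero net needs $2\#f=\#g$. Hence $S\subseteq\mathbb Z_{\le0}$, while generically $-1\in S$ (one $f$ and two $g$'s). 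Every instance containing $f\otimes g$, which has charge $+1$, vanishes. Yet $\Holant(\mathcal F\cup\{f\otimes g\})$ contains $\Holant(\mathcal F)$ and is $\sharpP$-hard. So it is not in $\FP$, and $\Holant(\mathcal F)$ is not tractable in this regime.

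The theorem survives here only because $\sharpP$-hardness makes the equivalence automatic, since $\Holant(\mathcal F\cup\{f,g\})\in\FP^{\sharpP}$. No gadget or vanishing-structure argument produces that equivalence. The missing idea is that the one-sided case must be settled by a complexity classification of the residual charge/EO-type problem. Either that problem is $\sharpP$-hard, which gives the equivalence for free, or it is tractable, in which case the polynomial-time EO algorithm delivers $\FP$. This is why the paper cites both works instead of proving the statement. It is also why your final step cannot be completed as you describe it.
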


\begin{theorem}[非零奇元函数的二分，\cite{zh:meng2025odd,zh:guan2026lp}]
\label{zh:thm:odd-arity-quasi}\label{zh:thm:odd-arity-fp}
若固定有限的代数复值布尔函数集 $\mathcal H$ 含有一个非零奇数元函数，则
$\Holant(\mathcal H)$ 属于 $\FP$，或者为 $\sharpP$ 困难的。
\end{theorem}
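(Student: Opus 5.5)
Since this theorem is imported from \cite{zh:meng2025odd,zh:guan2026lp}, the plan is to follow the architecture of those papers rather than rebuild it. There are three stages. First, extract a nonzero \emph{unary} function from the given odd-arity function. Second, use it, together with holographic transformations, to obtain both pins $\Delta_0,\Delta_1$, or else land in a rigid isotropic situation. Third, finish with the Holant$^c$ dichotomy or the \#EO dichotomy, upgrading the tractable side to $\FP$ via the polynomial-time EO algorithm of \cite{zh:guan2026lp}.

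\textbf{Stage 1: a nonzero unary function.} Let $f\in\mathcal H$ be nonzero of odd arity $2k+1$. I will repeatedly short two ports of $f$ with an ordinary edge (a self-loop), which lowers the arity by two and keeps it odd. If some self-loop at some stage is nonzero, continue; the process ends at a nonzero unary function. If every self-loop vanishes at some stage, I will not stop there. Instead I will connect two copies of the current function along all but a few ports, and I will use the decomposition theorem~\ref{zh:thm:decomposition} whenever a product structure appears: either the tractable branch gives $\FP$, or a factor of smaller odd arity becomes available. Induction on arity then yields a unary $u=(u_0,u_1)\neq0$.

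\textbf{Stage 2: pins or an isotropic unary.} I then split on whether $u$ is isotropic.
\begin{itemize}
\item If $u_0^2+u_1^2\neq0$, a complex orthogonal change of basis (which preserves equality edges) turns $u$ into a multiple of $\Delta_0$. I then look for a second pin. I would first try to obtain $\Delta_1$ by attaching binary gadgets to $\Delta_0$. Failing that, I would use interpolation along chains of a binary function whose eigenvalue ratio is not a root of unity, recovering $\Delta_1$ through a Vandermonde argument. If every binary gadget is too degenerate for this, the structure is rigid enough that a direct tractability or hardness argument applies.
\item If $u\propto(1,\pm i)$, apply the $K$-transformation of the preceding section, which makes the edge kernel $N$ and $u$ a pin. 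The problem then becomes an $N$-edge problem with a one-sided pin.
\end{itemize}

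\textbf{Stage 3: invoking known dichotomies.} With both pins available, the Holant$^c$ dichotomy classifies the problem. In the isotropic case, the pinned $N$-edge problem reduces, by the imbalance counting used for the EO layer, to a \#EO-type problem, which the \#EO dichotomy \cite{zh:meng2025eo} classifies. In every branch the tractable families ($\mathcal A$, $\mathcal P$, $\mathcal L$, $\mathcal A^\alpha$, and the ARS/EO classes) are in $\FP$ by \cite{zh:guan2026lp}.

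\textbf{Expected main obstacle.} The hardest step is the isotropic branch of Stage 2, where neither a second pin nor non-degenerate interpolation is directly available. Here I would first try the Jordan-block chain interpolation already used for one-endpoint quaternaries, which realizes a complementary pin. If that fails, I would fall back on the \#EO reduction described in Stage 3.
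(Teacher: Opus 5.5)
The paper gives no proof of this statement. It imports it from the odd-arity paper and uses the later polynomial-time \#EO algorithm only to upgrade its tractable side to $\FP$. Judged as a proof, your outline breaks at Stage~1. Its claim that the induction always ends at a nonzero unary, or in $\FP$ via the decomposition theorem, is false.

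Take $f=(1,i)^{\otimes 3}+(1,-i)^{\otimes 3}$, the symmetric signature $[2,0,-2,0]$. Every self-loop of $f$ vanishes. Joining two copies of an odd-arity function along $j$ ports leaves the even arity $2(2k+1)-2j$, so odd arity can only return through a decomposition with odd factors. None ever occurs. After the transformation by $K$ with $K^{\mathsf T}K=N$, $f$ becomes $2\sqrt2$ times ternary equality with edge kernel $N$. Hence every connected gadget has net-function $0$ or $K^{\otimes k}$ applied to a multiple of $\delta_x+\delta_{\bar x}$, and this is indecomposable for $k\ge2$. A nonzero unary gadget would need a bipartite graph of degree-three vertices with one dangling half-edge, forcing $3|A|-1=3|B|$, which is impossible. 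So every realizable unary is zero, no product with an odd factor appears, and your induction stalls. Yet $\Holant(f)$ is trivially in $\FP$, since it counts $2$-colourings. This isotropic regime, where all self-loops vanish, is never reached by unary extraction; a proof must treat it by a separate argument.

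The later stages assert the difficult steps rather than prove them. In the non-isotropic branch, ``the structure is rigid enough that a direct tractability or hardness argument applies'' is exactly the missing content. For $\{=_3,\Delta_0\}$, all gadget values are nonnegative and positive on the all-zero input. So for either available choice of $u$ ($\Delta_0$ or $(1,1)$), the required second pin ($\Delta_1$, resp.\ a multiple of $(1,-1)$) is never realizable. The only full-rank binary gadgets are multiples of $I$, so neither of your routes applies and a classification is still owed.

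In the isotropic branch, the imbalance argument you borrow from the $C_1$ section needs all support imbalances to have one sign. The pin obtained from $(1,\pm i)$ has imbalance $\mp1$ in the $N$-edge model. If another function has imbalance of the opposite sign (for instance a second isotropic unary, which becomes the opposite pin), the argument does not apply. If none does, every instance containing the pin evaluates to $0$, so the pin is useless. Either way no reduction to \#EO follows. As written, each of your three substantive steps is missing, and the first is wrong.
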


这里采用奇元论文的分解与分类结果，并结合后来的 EO 多项式时间算法，把易侧提升为
$\FP$。按照第一版的框架，遇到可分解函数就使用分解定理：易算情形已经完成；
否则将它的张量因子加入函数集，继续证明。若取得非零奇元因子，再用奇元二分即可。
这也适用于固定 $N$ 边模型：先把整个问题变回相等边模型，分解及归约后再变回来。
后面的群分类把这样取得的因子一起计算在内。

\begin{lemma}[逐函数非零缩放]\label{zh:lem:per-signature-scaling}
对固定有限函数集 $\mathcal F$，给每个 $f\in\mathcal F$ 选定一个非零代数常数 $c_f$。
记 $\mathcal F'=\{c_ff:f\in\mathcal F\}$，则
$\Holant(\mathcal F')\equiv_{\mathrm T}\Holant(\mathcal F)$。
同样的结论适用于任意固定边核。
\end{lemma}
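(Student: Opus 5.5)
The plan is to reduce directly to the instance value. Consider an instance of $\Holant(\mathcal F')$ and the corresponding instance of $\Holant(\mathcal F)$ that has the same underlying graph, port orders and edge kernel, with each vertex carrying $f$ in place of $c_ff$. Let $n_f$ be the number of vertices labelled by $c_ff$.

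The value is a finite sum over edge assignments. Each term is a product of vertex function values, and in it each vertex of type $f$ contributes the same factor $c_f$. This holds with any fixed edge kernel $E$, because $E$ enters each term identically. By multilinearity, the value of the $\mathcal F'$-instance equals $\prod_f c_f^{n_f}$ times the value of the $\mathcal F$-instance.

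The reduction from $\Holant(\mathcal F')$ to $\Holant(\mathcal F)$ then goes as follows:
\begin{enumerate}
 \item Relabel each vertex from $c_ff$ to $f$.
 \item Make a single oracle call on the relabelled instance.
 \item Multiply the answer by $\prod_f c_f^{n_f}$.
\end{enumerate}

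The reverse reduction uses the same construction with the constants $c_f^{-1}$; these exist because each $c_f\ne0$.

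The only point needing care is the complexity of computing the compensating factor. The set $\mathcal F$ is fixed and finite, so the constants $c_f$ are fixed algebraic numbers and all lie in a fixed number field. Each $n_f$ is at most the instance size, so computing $c_f^{n_f}$ by repeated squaring uses polynomially many field operations on exact representations. I do not expect a real obstacle here; this bookkeeping step is the only thing to verify.
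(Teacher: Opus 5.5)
Your proposal is correct and matches the paper's proof. Both observe that the instance value scales by the nonzero, exactly computable factor $\prod_f c_f^{n_f}$, independently of the edge kernel, and both use $c_f^{-1}$ for the reverse reduction.
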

\begin{proof}
若输入中 $f$ 型点的数目为 $n_f$，逐点缩放只使答案乘上
$\prod_f c_f^{n_f}$。这一非零因子可以精确计算并补偿；反向使用 $c_f^{-1}$ 即可。
\end{proof}

以下把已知非零标量倍的辅助函数也称为可用，均由这个引理补偿标量。函数集固定，
所以每个辅助函数的构件、缩放常数或归约都先行固定，不随输入函网任意变化。

\begin{theorem}[Real Boolean Holant 二分，\cite{zh:shao2020real}]
\label{zh:thm:real-holant}
设 $\mathcal H$ 是固定有限的代数实值布尔函数集。若 $\mathcal H$ 满足
Shao--Cai 的 Real Boolean Holant 二分定理中的可判定易解条件，则
$\Holant(\mathcal H)\in\FP$；否则 $\Holant(\mathcal H)$ 为 $\sharpP$ 困难的。
本文把该文的这一整组易解条件称为条件 \textup{(T)}。
\end{theorem}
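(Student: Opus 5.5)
This statement is the Shao--Cai Real Boolean Holant dichotomy, cited as Theorem~\ref{zh:thm:real-holant}. The paper does not reprove it, so the proof plan is to invoke the published theorem~\cite{zh:shao2020real} and check that the hypotheses and the reduction notion match the conventions fixed here.

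First, I will check the input model. In \cite{zh:shao2020real}, $\mathcal H$ is a fixed finite set of real-valued Boolean functions, and instances are graphs with arbitrary degree. Self-loops and parallel edges are permitted there, as they are in our function-nets, so the problem $\Holant(\mathcal H)$ defined via function-nets coincides with theirs. Algebraic values ensure exact arithmetic in the tractable algorithms.

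Second, I will list the tractability conditions (T) explicitly as the union of the tractable families in that paper. These are: products of affine type under suitable holographic transformations; product type; local affine; vanishing families; and the Holant-specific families characterized after orthogonal or $Z$-transformations. For each family I will record that membership is decidable from the finite algebraic data, since each is an explicit algebraic condition on finitely many values and finitely many candidate transformations. The FP side follows because each family has the algorithm given in that paper. The hardness side follows because their $\sharpP$-hardness proofs use polynomial-time Turing reductions, which matches our $\equiv_{\mathrm T}$.

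The only real obstacle is bookkeeping. I must make sure (T) is taken verbatim from the cited theorem, including its per-function scaling convention, so that Corollary~\ref{en:cor:standard-klein-real-holant} applies without change. No new mathematics is required.
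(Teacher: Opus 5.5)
Your plan matches the paper: the paper does not prove this statement but imports it from Shao--Cai, with condition \textup{(T)} and its decidability defined by reference. Checking that the instance model and the notion of Turing reduction agree is therefore all that is needed. I would drop your attempt to re-list the tractable families and to argue decidability yourself, since that paraphrase is loose and not load-bearing. For instance, vanishing families cannot arise at all for a set containing a nonzero real-valued function: joining two copies of such an $f$ port-to-port gives the value $\sum_x f(x)^2>0$.
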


正文标准型先将各函数缩放成代数实值函数，再判断所得函数集是否满足条件 \textup{(T)}。

\subsection{矩阵记号与换基}

把二元函数按两个自变量的先后次序写成 $2\times2$ 矩阵，全文固定
\[
 I=\begin{pmatrix}1&0\\0&1\end{pmatrix},\qquad
 N=\begin{pmatrix}0&1\\1&0\end{pmatrix},\qquad
 Y=\begin{pmatrix}0&1\\-1&0\end{pmatrix},
\]
以及
\[
 X=iN=\begin{pmatrix}0&i\\i&0\end{pmatrix},\qquad
 Z=i\diag(1,-1)=\begin{pmatrix}i&0\\0&-i\end{pmatrix}.
\]
$N$ 是二元不等函数，$Y$ 是反对称矩阵。这里 $X,Z$ 的命名与第三版互换，本文所有公式
都按上式理解。串接二元点给出矩阵乘法，交换两端口给出不带共轭的转置
$A^{\mathsf T}$。复正交矩阵指 $O^{\mathsf T}O=I$；函网求和中不采用共轭内积。

\begin{proposition}[带边核的全息变换]\label{zh:prop:holographic-reduction}
设 $P$ 为可逆代数矩阵，按固定自变量次序排列每个 $m$ 元点函数的值表，并定义
$\widehat f=(P^{-1})^{\otimes m}f$。则
\[
 \Holant_E(\mathcal F)
 \equiv_{\mathrm T}
 \Holant_{P^{\mathsf T}EP}(\widehat{\mathcal F}).
\]
对应输入的值实际相等。二元点函数 $A$ 变为
$\widehat A=P^{-1}AP^{-\mathsf T}$。特别地，当 $E=I$ 且
$P=O$ 为复正交矩阵时，边核仍是 $I$，二元矩阵变为 $O^{\mathsf T}AO$。
\end{proposition}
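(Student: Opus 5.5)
The plan is to mirror the proof of the English Proposition~\ref{en:prop:holographic-reduction}, since this Chinese statement is the same holographic transformation with edge kernel. The argument is local: we insert the identity $PP^{-1}$ at every port of every vertex and then regroup the matrices.

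\begin{enumerate}
\item Take an instance of $\Holant_E(\mathcal F)$. Each edge joins two ports and carries the kernel $E$. At each port, write the identity as $PP^{-1}$.
\item By associativity of function-nets, absorb every $P^{-1}$ into the adjacent vertex function. A vertex of arity $m$ then becomes $(P^{-1})^{\otimes m}f=\widehat f$. Here we use the fixed ordering of arguments, so that the tensor factor acting on the $j$-th argument is exactly the $P^{-1}$ inserted at port $j$.
\item Absorb the two copies of $P$ lying on each edge into the kernel. Writing the kernel as a bilinear form $E(a,b)$ and summing over both port variables gives $\sum_{a',b'}P(a',a)E(a',b')P(b',b)=(P^{\mathsf T}EP)(a,b)$. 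So every edge now carries $P^{\mathsf T}EP$.
\end{enumerate}

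The net result is an instance of $\Holant_{P^{\mathsf T}EP}(\widehat{\mathcal F})$ with literally the same value. The map between instances is a bijection that preserves the underlying graph. The reverse reduction is the same construction with $P^{-1}$ in place of $P$; indeed $(P^{-1})^{\mathsf T}(P^{\mathsf T}EP)P^{-1}=E$. Both directions are even many-one and value-preserving, which gives $\equiv_{\mathrm T}$.

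For the binary formula, view $A$ as a two-port vertex. Applying $P^{-1}$ to the first argument multiplies $A$ on the left, giving $P^{-1}A$. Applying $P^{-1}$ to the second argument multiplies on the right by its transpose, giving $P^{-1}AP^{-\mathsf T}$.

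In the orthogonal case $O^{\mathsf T}O=I$, the new kernel is $O^{\mathsf T}IO=I$. Also $O^{-1}=O^{\mathsf T}$ and $O^{-\mathsf T}=O$, so $A$ becomes $O^{\mathsf T}AO$.

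The only point needing care is the index bookkeeping, not any genuine difficulty. We must keep transposes without conjugation. We must also make sure that the two $P$'s on an edge enter as $P^{\mathsf T}(\cdot)P$ regardless of the edge's orientation. This holds because each edge's contribution is symmetric in how the two port variables are summed.
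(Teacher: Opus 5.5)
Your proposal is correct and follows the paper's proof: insert $PP^{-1}$ at every port, absorb each $P^{-1}$ into the adjacent vertex function and the two copies of $P$ on each edge into the kernel to get $P^{\mathsf T}EP$, then argue edge by edge by associativity, with the reverse direction obtained using $P^{-1}$. Your explicit index check of the new kernel and your derivations of $P^{-1}AP^{-\mathsf T}$ and of the orthogonal case $O^{\mathsf T}AO$ are accurate, and they spell out what the paper's terser proof leaves implicit.
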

\begin{proof}
在每个端口插入 $PP^{-1}$。把 $P^{-1}$ 与点函数先结合，把相邻的两个 $P$ 与旧边核
$E$ 先结合，分别得到所述新点函数和新边核。逐边使用结合律，整个函网的值保持不变。
反向使用 $P^{-1}$。
\end{proof}

例如，取
\[
 K=\frac1{\sqrt2}\begin{pmatrix}1&1\\i&-i\end{pmatrix},
 \qquad K^{\mathsf T}K=N.
\]
从普通模型使用 $K$ 变换后，边核为 $N$，点函数为 $(K^{-1})^{\otimes m}f$。
二元点变为 $K^{-1}AK^{-\mathsf T}$；再与一条 $N$ 边结合，得到传递矩阵 $K^{-1}AK$。

\subsection{有限二元群与九类分派}

先落实第一版中“能实现且仅能实现”的约定\cite{zh:xia2026v1}。
设 $L$ 是由原有限函数集 $\mathcal F$ 的全部函数值生成的数域。
从 $\mathcal F$ 和普通双端导线 $I$ 出发，允许有限次构件构造、非零函数的张量积分解，
以及 $L$ 中非零常数的缩放。所得函数组成集合 $\mathscr C(\mathcal F)$，称为本框架的
\emph{可用函数}。分解因子可选为 $L$ 值函数，理由见下面的证明。
插值则在使用时另行给出归约。

对其中每个非零满秩二元矩阵 $A$，同时取两个代表
\[
 \frac{A}{s},\qquad s^2=\det A,
\]
所得集合记为 $\Bgroup$。

\begin{lemma}[分解因子与有限二元群]\label{zh:lem:framework-available-closure}
对任意有限集 $\mathcal H\subseteq\mathscr C(\mathcal F)$，或者 $\Holant(\mathcal F)\in\FP$，
或者
\[
 \Holant(\mathcal F\cup\mathcal H)\equiv_{\mathrm T}\Holant(\mathcal F).
\]
排除分解定理的易算情形，以及非零奇元、秩一二元和无限射影阶二元的已有二分情形后，
$\Bgroup$ 是有限群。可以在一个固定有限的等价问题中，同时使用这个群的全部二元方向。
\end{lemma}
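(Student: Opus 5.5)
The plan has three parts. First, adjoining finitely many available functions preserves complexity. Second, after the established exits are removed, the binary matrices form a group of finite exponent. Third, a Burnside-type finiteness theorem makes that group finite, so all of its directions can be adjoined at once.

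The first step is to show that decompositions do not enlarge the number field $L$. If a nonzero $F(x,y)$ splits across two variable sets, pick $F(a,b)\ne0$. Then $F(x,y)=F(x,b)\,F(a,y)/F(a,b)$, and both factors take values in $L$. Gadgets and $L$-scalings clearly stay over $L$ as well. Each member of a finite $\mathcal H\subseteq\mathscr C(\mathcal F)$ has a finite derivation. I will merge these derivations and process the steps in order:
\begin{itemize}
\item a gadget step preserves Turing equivalence by associativity;
\item a scaling step preserves it by Lemma~\ref{zh:lem:per-signature-scaling};
\item at a decomposition step, Theorem~\ref{zh:thm:decomposition} either puts the current problem (hence, by the equivalences so far, $\Holant(\mathcal F)$) in $\FP$, or allows both factors to be adjoined.
\end{itemize}
Composing finitely many reductions gives the first claim.

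The second step is finiteness of $\Bgroup$. A rank-one binary function decomposes into unary factors, and a nonzero unary factor is an odd-arity exit via Theorem~\ref{zh:thm:odd-arity-fp}. If a full-rank $A$ has infinite projective order, it has either a nontrivial Jordan block or an eigenvalue ratio that is not a root of unity. In either case, chain interpolation on powers of $A$ realizes a rank-one binary function in the Turing sense, which is again an odd-arity exit. Otherwise every $A$ is full rank of finite projective order $n$. Its eigenvalue ratio is then a primitive $n$th root of unity lying in an extension of $L$ of degree at most $2$ (the splitting field of the characteristic polynomial), so $\varphi(n)\le 2[L:\mathbb Q]$. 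This bounds $n$ uniformly. The normalized $A/s$ satisfies $(A/s)^n=\pm I$, so all orders divide a common bound $2\,\mathrm{lcm}(n)$.

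Serial connection and port exchange give closure under product and transpose, with $\det=1$ preserved. Inverses are positive powers, so $\Bgroup$ is a group of finite exponent inside $\mathrm{SL}_2(\mathbb C)$. Burnside's theorem on linear groups of finite exponent, i.e.\ the finite-exponent theorem cited in the first version, then yields that $\Bgroup$ is finite.

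The third step is to adjoin all directions simultaneously. There are finitely many projective classes, so I will choose one representative and one finite derivation for each and apply the first step to this finite set. The resulting fixed problem has every direction available. Its ordinary closure still lies inside $\mathscr C(\mathcal F)$, so it creates no new directions.

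I expect the main obstacle to be the infinite-order interpolation. One must check that the Jordan case and the non-root-of-unity ratio case really give a Vandermonde-type system that isolates a rank-one limit. One must also route the result carefully through the decomposition theorem, since it is only Turing available and not a gadget. I would follow the first version's chain interpolation for this step.
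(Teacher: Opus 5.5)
Your proposal is correct and follows the paper's proof essentially step for step. Both use the same $F(x,b)\,F(a,y)/F(a,b)$ field argument with merged finite derivations for the first claim, the same exits for rank-one and infinite-projective-order matrices, the uniform bound $\varphi(n)\le 2[L:\mathbb Q]$ giving finite exponent, the finite-exponent (Burnside-type) theorem for finiteness, and one representative with a finite derivation per direction for simultaneous adjunction, with the chain interpolation in the infinite-order case likewise deferred to the first version.
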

\begin{proof}
首先，分解不必扩大数域。若非零函数 $F(x,y)$ 在两组变量间可分解，选取 $F(a,b)\ne0$，则
\[
 F(x,y)=F(x,b)\,\frac{F(a,y)}{F(a,b)}.
\]
右边两个因子都在 $L$ 上。构件构造只用有限次加法和乘法，也保持函数值在 $L$ 上。
任意有限多个可用函数各有有限的推导，把这些推导合在一起，按次序加入所需函数。
构件和非零缩放保持问题等价；每次分解由定理~\ref{zh:thm:decomposition} 保证，
或者已得到原问题的 $\FP$ 算法，或者加入两个因子后仍与原问题等价。
有限次归约复合就证明了第一项，且允许这些辅助函数同时、任意多次使用。

剩下证明有限性，方法与第一版相同。秩一二元函数分解后给出非零一元函数；
无限射影阶二元矩阵的 Jordan 块或两个特征值之比，用链式插值得到秩一矩阵。
这些情形均由奇元二分处理。余下每个非零二元矩阵满秩且有有限射影阶。
记 $d=[L:\mathbb Q]$。若 $A$ 的射影阶为 $n$，其特征值之比是本原 $n$ 次单位根，
又属于 $L$ 的至多二次扩张，因此 $\varphi(n)\le2d$。
满足此式的 $n$ 只有有限多个，每个归一化矩阵的阶又整除 $2n$，故这些阶有共同的有限倍数。
二元串接和端口交换给出乘法与转置封闭，逆矩阵是某个正整数次幂，
所以 $\Bgroup$ 是有限指数的复矩阵群。由第一版使用的有限指数群定理
\cite[Theorem~19A.9(1)]{zh:legacy-a}，$\Bgroup$ 有限。

最后，对群的每个二元方向选取一个代表及其有限推导。方向只有有限多个，
由第一段可将全部代表同时加入一个固定有限的等价问题。
它的普通构件仍属于 $\mathscr C(\mathcal F)$，所有二元方向又都已经加入，
故普通构件的完整二元群恰为 $\Bgroup$。后面的证明若再需要有限多个可用函数，
可同样一并加入；它们仍在这个闭包中，因而不会产生群外的二元方向。
\end{proof}

以下固定这样一个等价问题，仍记它的函数集为 $\mathcal F$。
证明中用到的辅助函数都先行固定，随后可以直接作构件和短路。
这里按群分类讨论固定函数集；引理中的有限推导不要求对无限闭包逐项进行计算。

定义射影商群
\[
 \Ggroup=\Bgroup/\{I,-I\}.
\]
记 $[A]=\{A,-A\}$，乘法为 $[A][B]=[AB]$。
有限旋转群分类确定 $\Ggroup$ 的同构类；具体函数的换基仍按前述全息变换进行。

固定一个群，既给出了可以使用的全部二元方向，也限制了其他二元方向的出现。
群较大时，辅助函数较多，辅佐性较强；群较小时，二元函数的形式限制较强，
自律性较强。根据这两方面作用，将证明分成下面九类。

九类分派如下。二面体旋转群 $D_{2n}$ 的阶为 $2n$；奇、偶均指 $n$ 的奇偶性。
\begin{center}
\begin{tabular}{cl}
\toprule
序号 & 射影群的类型\\
\midrule
1 & 一阶循环群 $C_1$\\
2 & 二阶循环群 $C_2$\\
3 & 高阶循环群 $C_k$，$k\ge3$\\
4 & 奇二面体群 $D_{2n}$，奇数 $n\ge3$\\
5 & 克莱因四元群 $D_4$\\
6 & 大偶二面体群 $D_{2n}$，偶数 $n\ge4$\\
7 & 正四面体旋转群\\
8 & 正八面体旋转群\\
9 & 正二十面体旋转群\\
\bottomrule
\end{tabular}
\end{center}

在每一类群中，再按是否能得到不可分解的四元函数分成上下两集。
上集的每个非零四元函数都可分解；奇元因子已经由奇元二分处理，故这里只可能二二分解。
由分解引理，两个二元因子均可用，归一化后就在本群中。
因此，\emph{上集的四元函数恰为本群二元函数的配对积}；下集则有真四元函数。
若在固定 $N$ 边坐标中讨论，就把本群对应的二元点函数作同样的配对积。

本次正文从第 5 类进入，沿第三版证明克莱因群的二二得四上集，并补齐麻花型。
正文的构件论证还直接适用于那里写明的普通闭包条件。

\section{转置封闭与克莱因群的两种规范形}\label{zh:sec:klein}

下面三节从共同框架交付的抽象条件 $\Ggroup\cong\Kfour$ 出发。证明分成三部分。第一部分
证明整个问题经过保持相等边的正交全息变换后，只可能得到标准型或麻花型；第二部分在
标准型中证明共同相位和实数化，再调用 Real Boolean Holant 二分定理；第三部分在麻花型中
用有向圈和转置作用排除散珍珠最小反例。两型的复杂性分析都以
假设~\ref{zh:ass:klein-upper} 中的二二得四上集条件为前提。

\subsection{从抽象 \texorpdfstring{$\Kfour$}{K4} 到两种矩阵规范形}\label{zh:subsec:klein-canonicalization}

采用行列式为 $1$ 的 Pauli 记号
\[
 I=\begin{pmatrix}1&0\\0&1\end{pmatrix},\qquad
 X=\begin{pmatrix}0&i\\i&0\end{pmatrix},\qquad
 Y=\begin{pmatrix}0&1\\-1&0\end{pmatrix},\qquad
 Z=\begin{pmatrix}i&0\\0&-i\end{pmatrix},
\]
并记
\[
  \Qeight=\{\pm I,\pm X,\pm Y,\pm Z\}.
\]
它们满足
\[
 X^2=Y^2=Z^2=-I,\qquad XY=-Z,\quad YZ=-X,\quad ZX=-Y,
\]
反向相乘时改变符号。

\begin{lemma}[从射影 Klein 群提升到四元数群]\label{zh:lem:klein-lift-q8}
设 $\Bgroup\leq\mathrm{SL}_2(\mathbb C)$，$\{I,-I\}\leq\Bgroup$，并且
$\Bgroup/\{I,-I\}\cong\Kfour$。则 $\Bgroup$ 同构于 $\Qeight$。更具体地，若
$R,S\in\Bgroup$ 代表两个不同的非平凡射影类，则
\[
 R^2=S^2=-I,\qquad RS=-SR,\qquad
 \Bgroup=\{\pm I,\pm R,\pm S,\pm RS\}.
\]
\end{lemma}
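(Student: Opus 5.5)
The plan is to repeat, in the Chinese version, the same elementary argument as the English Lemma~\ref{en:lem:klein-lift-q8}. It uses only the facts that $\Bgroup\le\mathrm{SL}_2(\mathbb C)$ contains $-I$ and that the quotient is the Klein four-group. The proof has three steps: square each nontrivial lift, compute the commutator of two lifts, and count elements.

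\emph{Squares.} Take a representative $R$ of a nontrivial projective class. The class $[R]$ has order two in $\Kfour$, so $[R^2]=[I]$, which means $R^2=\pm I$. Suppose $R^2=I$ while $R\ne\pm I$. Then $R$ is diagonalizable with eigenvalues in $\{1,-1\}$, and both eigenvalues must occur. That gives $\det R=-1$, contradicting $R\in\mathrm{SL}_2$. Hence $R^2=-I$, and the same argument gives $S^2=-I$. In particular $R$ has the distinct eigenvalues $i,-i$.

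\emph{Commutator.} The quotient is abelian, so $RSR^{-1}S^{-1}=\pm I$. Suppose the commutator were $I$. Then $S$ commutes with $R$. Since $R$ has two distinct eigenvalues, $S$ is diagonal in an eigenbasis of $R$. The conditions $S^2=-I$ and $\det S=1$ then force $S=\pm\diag(i,-i)$ in that basis, so $S=\pm R$. This contradicts the assumption that $[R]\ne[S]$. Therefore $RS=-SR$. This is the only point that needs any care: the diagonalization argument relies on the distinct eigenvalues obtained in the first step.

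\emph{Counting and identification.} Now $[RS]$ is the third nontrivial class, because it differs from $[I]$, $[R]$ and $[S]$ (for instance, $[RS]=[R]$ would force $S=\pm I$). The quotient has four elements and each fibre is $\{A,-A\}$, so $\Bgroup=\{\pm I,\pm R,\pm S,\pm RS\}$ has exactly eight elements. The relations $R^2=S^2=-I$ and $RS=-SR$ also give $(RS)^2=-RSSR\cdot(-1)\cdot(-1)$; concretely, $(RS)^2=R(SR)S=-R^2S^2=-I$. These are exactly the defining relations of $\Qeight$, with $R,S,RS$ playing the roles of the three quaternion units up to sign. Hence $\Bgroup\cong\Qeight$.
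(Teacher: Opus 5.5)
Your proof is correct and follows the paper's argument essentially step for step: you square a lift and use $\det=1$ to exclude $R^2=I$, you use the distinct eigenvalues $\pm i$ to show that a trivial commutator would force $S=\pm R$, and you count the four fibres $\{A,-A\}$. Your explicit checks that $[RS]$ is the third class and that $(RS)^2=R(SR)S=-R^2S^2=-I$ make the identification with $\Qeight$ a little more explicit than the paper does; you should delete the stray expression $-RSSR\cdot(-1)\cdot(-1)$, which is true but confusing.
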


\begin{proof}
非平凡类 $[R]$ 的平方是 $[I]$，故 $R^2=I$ 或 $R^2=-I$。若 $R^2=I$，则
$R\ne\pm I$，$R$ 的两个特征值只能是 $1,-1$，与 $\det R=1$ 矛盾。因此
$R^2=-I$；同理 $S^2=-I$。

$[R],[S]$ 在商群中交换，所以 $RSR^{-1}S^{-1}=\pm I$。若等号右边是 $I$，则
$R,S$ 交换。把具有两个不同特征值 $i,-i$ 的 $R$ 对角化后，$S$ 也为对角矩阵；结合
$S^2=-I$ 和 $\det S=1$，只能有 $S=\pm R$，与两个类不同矛盾。
故换位子为 $-I$，也就是 $RS=-SR$。商群有四个元素，所以列出的八个矩阵已经是
$\Bgroup$ 的全部元素。
\end{proof}

转置不是矩阵乘法的同态，但在交换群 $\Ggroup$ 上可诱导一个同态：
\[
 \vartheta([A])=[A^{\mathsf T}].
\]
事实上 $[(AB)^{\mathsf T}]=[B^{\mathsf T}A^{\mathsf T}]
=[A^{\mathsf T}B^{\mathsf T}]$，并且 $\vartheta^2$ 是恒等映射。它固定 $[I]$，在另外三个
元素上的置换因而只有两种可能：三个都固定，或者固定一个并交换另外两个。

\begin{theorem}[Klein 群的两种正交规范形]\label{zh:thm:two-klein-normal-forms}
设 $\Bgroup$ 满足引理~\ref{zh:lem:klein-lift-q8} 的条件，矩阵元均为代数数，并且关于转置封闭。
则存在代数可逆
矩阵 $O$，满足 $O^{\mathsf T}O=I$，使
\[
 O^{\mathsf T}\Bgroup O=\{\pm M:M\in\Kstd\}
 \quad\text{或}\quad
 O^{\mathsf T}\Bgroup O=\{\pm M:M\in\Ktw\},
\]
其中
\[
 \Kstd=\{I,X,Y,Z\}
\]
称为标准型，而
\[
\begin{gathered}
 \Ktw=\{I,Z,X',Y'\},\\
 X'=\frac1{\sqrt2}\begin{pmatrix}0&-1+i\\1+i&0\end{pmatrix},\qquad
 Y'=\frac1{\sqrt2}\begin{pmatrix}0&1+i\\-1+i&0\end{pmatrix}
\end{gathered}
\]
称为麻花型。两型的射影转置作用分别为
\[
 [M^{\mathsf T}]=[M]\quad(M\in\Kstd),
\]
以及
\[
 [I^{\mathsf T}]=[I],\quad[Z^{\mathsf T}]=[Z],\quad
 [(X')^{\mathsf T}]=[Y'],\quad[(Y')^{\mathsf T}]=[X'].
\]
\end{theorem}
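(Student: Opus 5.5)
The plan mirrors the English proof of Theorem~\ref{en:thm:two-klein-normal-forms}. By Lemma~\ref{zh:lem:klein-lift-q8}, $\Bgroup=\{\pm I,\pm R,\pm S,\pm RS\}$ with $R^2=S^2=-I$ and $RS=-SR$. The induced map $\vartheta([A])=[A^{\mathsf T}]$ is an involutive automorphism of $\Ggroup\cong\Kfour$, so it fixes at least one nontrivial class. I would take a representative $R$ of such a fixed class. Since $\det R=1$, this gives $R^{\mathsf T}=\pm R$, and I would split the argument according to the sign.

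\textbf{Case $R^{\mathsf T}=-R$.} A $2\times2$ skew-symmetric matrix with determinant one is $\pm Y$. Next take $S$ from another nontrivial class. Comparing entries in $SY=-YS$ gives a symmetric traceless $S=\begin{pmatrix}a&b\\b&-a\end{pmatrix}$ with eigenvalues $\pm i$. The eigenvectors $u,v$ of the symmetric matrix $S$ satisfy $u^{\mathsf T}v=0$. Neither is isotropic: if $u^{\mathsf T}u=0$, then $u$ would lie in the radical of the nondegenerate form. Normalizing them gives an algebraic complex orthogonal $O$ with $O^{\mathsf T}SO=Z$. Because $M^{\mathsf T}YM=(\det M)Y$ for any $2\times 2$ matrix $M$, the matrix $Y$ is preserved up to sign, and $RS$ becomes $\pm X$. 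This yields the standard form.

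\textbf{Case $R^{\mathsf T}=R$.} The same eigenvector normalization puts $R$ into the form $Z$. Orthogonal congruence commutes with transpose, so the projective transpose action is unchanged in the new coordinates. From $SZ=-ZS$ and $S^2=-I$ we get $S=\begin{pmatrix}0&b\\c&0\end{pmatrix}$ with $bc=-1$. There are now two subcases.
\begin{itemize}
\item If $\vartheta$ fixes all three classes, then $S^{\mathsf T}=\pm S$. The plus sign gives $S=\pm X$ and the minus sign gives $S=\pm Y$, so again the group is in standard form.
\item Otherwise $\vartheta$ swaps $[S]$ and $[ZS]$, so $S^{\mathsf T}=\delta ZS$ with $\delta=\pm1$. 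Comparing entries gives $c=\delta i b$ and $b^2=i/\delta$. After a sign change, $S$ is $Y'$ when $\delta=1$ and $X'$ when $\delta=-1$, and the remaining class is represented by the other of these two matrices. This is the twisted form.
\end{itemize}

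Finally, I would verify the listed transpose actions directly: $(X')^{\mathsf T}=\pm Y'$ by inspection, and $\Kstd$ consists of symmetric or skew-symmetric matrices. Algebraicity of $O$ holds because its entries come from eigenvector normalizations, which use only field operations and square roots.

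The main obstacle is the entrywise case analysis of the swapping subcase, namely solving $S^{\mathsf T}=\delta ZS$ and matching the solutions to exactly $X'$ and $Y'$ up to sign. The non-isotropy argument also needs care, since the orthogonal basis has to exist over $\mathbb C$ with respect to the bilinear, not Hermitian, form.
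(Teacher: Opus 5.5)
Your proposal is correct and follows the paper's proof step by step. It uses the same $\vartheta$-fixed class, the same split on $R^{\mathsf T}=\pm R$, the same normalization of non-isotropic eigenvectors, and the same entrywise solution of $S^{\mathsf T}=\delta ZS$ giving $Y'$ or $X'$. Your identity $M^{\mathsf T}YM=(\det M)Y$ and the direct check $(X')^{\mathsf T}=Y'$ only spell out steps the paper states more tersely.
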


\begin{proof}
从 $\vartheta$ 固定的一个非平凡类中取代表 $R$。于是
$R^{\mathsf T}=R$ 或 $R^{\mathsf T}=-R$。

先设 $R^{\mathsf T}=-R$。二阶反对称矩阵只有
$c\begin{pmatrix}0&1\\-1&0\end{pmatrix}$ 这种形状；由 $\det R=1$ 得 $c=\pm1$，故
$R=\pm Y$。取另一个非平凡类的代表 $S$。由 $SY=-YS$，直接比较四个矩阵元得到
\[
 S=\begin{pmatrix}a&b\\b&-a\end{pmatrix},\qquad a^2+b^2=-1.
\]
$S$ 是对称矩阵，特征值为 $i,-i$。设对应特征向量为 $u,v$，由对称性得到
$u^{\mathsf T}v=0$。若 $u^{\mathsf T}u=0$，则 $u$ 与基 $u,v$ 中的两个向量都正交，
违反该双线性型的非退化性；$v$ 同理。因此两条特征直线都不是迷向的。
把两个特征向量分别规范化后作为列，得到复正交矩阵 $O$，使
$O^{\mathsf T}SO=Z$。同时 $O^{\mathsf T}YO=\pm Y$，所以整个群变成标准型。

再设 $R^{\mathsf T}=R$。同样用规范化的特征向量得到复正交矩阵 $O$，使
$O^{\mathsf T}RO=Z$。由于
$(O^{\mathsf T}AO)^{\mathsf T}=O^{\mathsf T}A^{\mathsf T}O$，这一换基保持射影转置作用。
换到这个坐标后，把另一个生成元仍记为 $S$。由 $SZ=-ZS$ 及
$S^2=-I$，有
\[
 S=\begin{pmatrix}0&b\\c&0\end{pmatrix},\qquad bc=-1.
\]
若 $\vartheta$ 固定三个非平凡类，则 $S^{\mathsf T}=\pm S$。当取正号时
$b=c$、$b^2=-1$，当取负号时 $c=-b$、$b^2=1$；两种情况都给出标准型
$\{\pm I,\pm X,\pm Y,\pm Z\}$。

余下只可能是 $\vartheta$ 固定 $[Z]$，并交换 $[S]$ 与 $[ZS]$。因此
\[
 S^{\mathsf T}=\delta ZS,\qquad \delta\in\{1,-1\}.
\]
逐项比较给出 $c=\delta ib$ 和 $b^2=i/\delta$。改变 $S$ 的正负号后，$\delta=1$
时可取 $S=Y'$，$\delta=-1$ 时可取 $S=X'$；第三个非平凡类是二者中剩下的一个。
这正是麻花型。所有矩阵元都由原来的代数矩阵元经有限次四则运算和开平方得到，所以
$O$ 仍可取为代数矩阵。
\end{proof}

两型的射影转置作用分别为恒等置换和一个换位，而正交换基保持这一置换型，故两型不正交等价。

由于 $O^{\mathsf T}O=I$，由命题~\ref{zh:prop:holographic-reduction}，
定理~\ref{zh:thm:two-klein-normal-forms} 中的变换把整个函网转到标准型或麻花型。
二元矩阵 $A$ 成为 $O^{\mathsf T}AO$，普通相等边保持不变。

\subsection{统一的端口、pairing 与收缩约定}\label{zh:subsec:klein-contraction}

对函数 $F$ 的第 $p,q$ 个端口（$p\ne q$），沿选定圈方向接入二元函数 $M$ 并求和，记为
\begin{equation}\label{zh:eq:cycle-contraction}
  \Gamma_M^{pq}(F)
  =\sum_{a,b\in\{0,1\}}M(a,b)
   F(\ldots,x_p=a,\ldots,x_q=b,\ldots).
\end{equation}
若对侧二元因子按圈方向记作 $N(b,a)$，则局部收缩为
\begin{equation}\label{zh:eq:trace-convention}
  \sum_{a,b}M(a,b)N(b,a)=\Tr(MN).
\end{equation}
式~\eqref{zh:eq:cycle-contraction}--\eqref{zh:eq:trace-convention} 给出双线性收缩。
下面也用这一配对的对偶基读取换基坐标。

对 $\mathcal K\in\{\Kstd,\Ktw\}$，若一个非零 $2d$ 元函数在某一份 pairing 下可以写成
\[
 c\prod_{j=1}^d M_j(x_{q_j},x_{p_j}),\qquad
 c\in\mathbb C^\times,\quad M_j\in\mathcal K,
\]
就称它为一个 $\mathcal K$-基函数。每对的记录次序是 $(p_j,q_j)$；因子写成
$M_j(q_j,p_j)$，正好使式~\eqref{zh:eq:cycle-contraction} 按圈读成 $\Tr(MM_j)$。
零元非零常数视为没有二元因子的基函数。

两套矩阵基都有同一张沿圈配对表：按各自的 $I$ 和三个非平凡矩阵排列，
\begin{equation}\label{zh:eq:klein-cycle-gram}
 \bigl(\Tr(MN)\bigr)_{M,N\in\mathcal K}
 =2\diag(1,-1,-1,-1).
\end{equation}
这是因为 $I^2=I$，三个非平凡矩阵的平方均为 $-I$，两个不同非平凡矩阵的乘积迹为零。
相应地，若
\[
 c_I=\frac12,\qquad c_M=-\frac12\quad(M\ne I),
\]
则在 $p,q$ 两变量上有唯一展开
\begin{equation}\label{zh:eq:klein-pair-expansion}
 F=\sum_{M\in\mathcal K}c_M M(x_q,x_p)\otimes
 \Gamma_M^{pq}(F).
\end{equation}
具体地，函数 $M(x_q,x_p)$ 的双线性对偶函数是
$c_M M(x_p,x_q)$，因为
\[
 \sum_{a,b\in[2]}c_M M(a,b)N(b,a)
 =c_M\Tr(MN)=\mathbf1_{M=N}.
\]
对一份 pairing 逐对取上述对偶函数，就得到对应张量积基的对偶基。

\subsection{二二得四上集条件}\label{zh:subsec:klein-upper-condition}

固定有限代数函数集 $\mathcal F$，并已按定理~\ref{zh:thm:two-klein-normal-forms} 作正交变换。
记 $\mathscr R=\mathscr R(\mathcal F)$ 为其普通网函闭包：只使用原函数、相等边、端口置换、
不交并及内部收缩。以下“可用”均指这一固定闭包中的普通可实现性；由
引理~\ref{zh:lem:per-signature-scaling}，行文可忽略已知且可补偿的非零代数常数。

\begin{assumption}[二二得四上集条件]\label{zh:ass:klein-upper}
取 $\mathcal K=\Kstd$ 或 $\mathcal K=\Ktw$，并假设：
\begin{enumerate}[label=\textup{(\arabic*)}]
 \item $\mathcal K$ 的四个二元方向均在 $\mathscr R$ 中可实现，且 $\mathscr R$ 中每个非零
       二元函数都是 $\mathcal K$ 中某个矩阵的非零标量倍；
 \item $\mathscr R$ 中每个非零四元函数都能沿某一份 pairing 分解为两个二元函数的张量积。
\end{enumerate}
\end{assumption}

第一项是克莱因二元群的完整性条件；第二项是本正文所称的\emph{二二得四上集条件}，
它要求所有普通可实现的四元函数可二二分解，而不只要求输入函数具有这种性质。

\begin{lemma}[上集条件的直接后果]\label{zh:lem:klein-upper-basic}
在假设~\ref{zh:ass:klein-upper} 下，$\mathscr R$ 中没有非零奇元函数，且每个非零四元函数
都是 $\mathcal K$-基函数。
\end{lemma}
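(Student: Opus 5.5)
The lemma has two claims, and I would prove them in order using only Assumption~\ref{zh:ass:klein-upper} and the spanning property of $\mathcal K$. The basic observation is that the four matrices of $\mathcal K$ (either $\Kstd$ or $\Ktw$) are linearly independent. This follows from the trace pairing table \eqref{zh:eq:klein-cycle-gram}, whose Gram matrix $2\diag(1,-1,-1,-1)$ is invertible. Hence $\mathcal K$ spans the whole space of binary functions. Consequently, for any nonzero function $F$ and any port pair $p,q$, at least one contraction $\Gamma_M^{pq}(F)$ with $M\in\mathcal K$ is nonzero. Otherwise the expansion \eqref{zh:eq:klein-pair-expansion} would give $F=0$. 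Since every such $M$ is realizable by item (1), every such contraction stays in $\mathscr R$.

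\emph{No odd arity.} Suppose $F\in\mathscr R$ is nonzero of odd arity $2k+1$. I will apply nonzero contractions repeatedly, as above, $k$ times. This reaches a nonzero unary function $u\in\mathscr R$. The disjoint union $u\otimes u$ lies in $\mathscr R$ and is a nonzero binary function of rank one. Item (1), however, says every nonzero binary function in $\mathscr R$ is a multiple of an invertible matrix of $\mathcal K$, which is a contradiction. The arity-one case needs no contraction.

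\emph{Quaternary functions.} Let $H\in\mathscr R$ be a nonzero quaternary function. Item (2) gives $H=g\otimes h$ along some pairing, with binary $g,h$, both necessarily nonzero. The main point is that $g$ and $h$ are not yet known to lie in $\mathscr R$, so item (1) cannot be applied to them directly. To get around this, I will close the two ports of $h$ with a suitable $M\in\mathcal K$. Spanning guarantees some $M$ with $\sum M(a,b)h(\cdot)\ne0$, in the orientation of \eqref{zh:eq:cycle-contraction}. This contraction of $H$ realizes in $\mathscr R$ a nonzero scalar multiple of $g$. Symmetrically, closing the ports of $g$ realizes a nonzero multiple of $h$. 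Item (1) then makes $g$ and $h$ each a nonzero multiple of some matrix in $\mathcal K$. Writing each factor, after a possible transpose via port order, in the form $M_j(x_{q_j},x_{p_j})$ with the recorded pair order, I obtain $H=c\,M_1\otimes M_2$ with $c\ne0$, that is, a $\mathcal K$-basis function.

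The only subtle point is the orientation convention. The basis-function definition fixes the factor as $M_j(x_{q_j},x_{p_j})$ for a recorded pair $(p_j,q_j)$. Since the pair order may be chosen freely, I can record each pair in whichever order makes the realized factor appear as a matrix of $\mathcal K$ rather than its transpose. Alternatively, the transposed orientation is itself a realizable binary function and hence lies in $\mathcal K$ up to scalar by item (1). Either way, the step goes through.
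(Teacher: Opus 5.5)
Your proof is correct and follows the paper's argument essentially step for step. You use that $\mathcal K$ spans the binary space, so some contraction is nonzero; for odd arity you contract down to a nonzero unary $u$ and observe that $u\otimes u$ is rank one. For the quaternary case you contract one factor of $H=g\otimes h$ to realize the other. Your added remarks on linear independence via the Gram table and on port orientation make explicit details that the paper leaves implicit.
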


\begin{proof}
四个 $\mathcal K$ 矩阵张成全部二元函数空间。因此，对任意非零函数及任意变量对，
至少有一个 $\mathcal K$ 收缩非零，否则该变量对上的每个二元切片都为零。
若有非零奇元函数，逐次作非零收缩可得到非零一元函数 $u$；不交并
$u\otimes u$ 就是非零秩一二元函数，与假设的第一项矛盾。

设非零四元函数 $H=g\otimes h$ 是第二项给出的二二分解。由于 $\mathcal K$ 张成二元
空间，可用其中一个矩阵把 $h$ 收缩为非零常数，从而普通实现 $g$ 的非零倍数；同理可实现
$h$ 的非零倍数。由第一项，两个因子各自都是 $\mathcal K$ 矩阵的非零倍数，故 $H$ 是基函数。
\end{proof}

在这里，四元“可分解”也等价于“可二二分解”：若有一三分解 $u\otimes h$，用
$\mathcal K$ 工具在 $h$ 上作非零收缩，可得到非零秩一二元函数 $u\otimes v$，
同样违反二元条件。

下面的同时非零收缩是两型共用的降元工具。

\begin{lemma}[同时非零收缩（双活引理）]\label{zh:lem:klein-simultaneous-nonzero}
取 $\mathcal K=\Kstd$ 或 $\mathcal K=\Ktw$。若 $C,D$ 是同一组 $n\ge3$ 个变量上的
两个非零复值函数，则存在变量对 $p,q$ 和 $M\in\mathcal K$，使
\[
 \Gamma_M^{pq}(C)\ne0,\qquad
 \Gamma_M^{pq}(D)\ne0.
\]
\end{lemma}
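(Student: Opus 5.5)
The plan is to reduce first to the ternary case $n=3$ and then argue by contradiction with a counting argument followed by an eigen-operator argument. For $n>3$, fix any three variables. For each of $C,D$ separately, pin the remaining variables to an assignment at which that function is nonzero; this gives two nonzero ternary slices, and the two pinning assignments need not coincide. Each contraction value of a slice is a value of the corresponding contraction of the original function. So a common pair and common tool that work for the two slices also work for $C,D$. No pinning gadget is needed, since these slices are used only to select $p,q,M$.

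For $n=3$, define the branch count $s_{pq}(C)$ as the number of $M\in\mathcal K$ with $\Gamma_M^{pq}(C)\ne0$. Because $\mathcal K$ spans all binary functions, every count is at least $1$. If some count equals $1$, the expansion \eqref{zh:eq:klein-pair-expansion} writes $C$ as an invertible binary factor on that pair tensored with a nonzero unary function. A direct computation then shows that both other pairs have all four branches nonzero. In every case the three counts of one function therefore sum to at least $6$, and any count equal to $1$ forces the sum up to $9$.

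Now suppose no common nonzero contraction exists. Then on each pair the two supports are disjoint, so $s_{pq}(C)+s_{pq}(D)\le4$. Summing over the three pairs gives an upper bound of $12$, while the lower bound above also gives $12$. Hence every count equals $2$, and on each pair the supports of $C$ and $D$ are complementary two-element subsets of $\mathcal K$.

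The main obstacle is to turn this rigid support pattern into a contradiction. For a nontrivial $R\in\mathcal K$, consider the operator $\Omega_R^{pq}$ acting by $R^{\mathsf T}$ on $x_p$ and $R$ on $x_q$, i.e.\ $N\mapsto RNR$ on the binary coordinates. Using the quaternion relations, its eigenspaces are spanned by $\{I,R\}$ and by the other two matrices, with eigenvalues $\mp1$. The three choices of $R$ produce exactly the three complementary splittings of $\mathcal K$ into two pairs. So for each pair $pq$ there are an $R_{pq}$ and a sign $\lambda_{pq}$ with $C$ in the $\lambda_{pq}$-eigenspace and $D$ in the $-\lambda_{pq}$-eigenspace.

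Two such operators that share a variable either commute or anticommute. If they anticommute, they cannot have the common eigenvector $C$, so they must commute. In the standard form this forces all three $R_{pq}$ to be equal. In the twisted form the variable shared by $12$ and $23$ sees $R$ and $R^{\mathsf T}$, and since $X',Y'$ are exchanged by transpose and anticommute, this forces $R=Z$.

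Finally, with the common direction $R$, compute the operator identity $\Omega_R^{12}\Omega_R^{13}=\pm\Omega_R^{23}$, where the sign is determined by $R^{\mathsf T}=\pm R$. This fixes the product of the three eigenvalues on any simultaneous eigenvector. For $D$ all three eigenvalues are negated, so the product changes sign, contradicting the same identity. The remaining care is checking that sign bookkeeping in both forms.
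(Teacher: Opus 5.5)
Your proposal is correct and follows the paper's proof essentially step by step: the slice reduction to $n=3$, the branch-count squeeze forcing all six counts to equal $2$, the eigen-splitting of $\Omega_R^{pq}$, commutation forcing a common direction (with $R=Z$ in the twisted form via the $R$ versus $R^{\mathsf T}$ clash on variable $2$), and the three-eigenvalue sign contradiction. Two points to make explicit: in the twisted form the common direction comes, as in the paper, from comparing $12,13$ on variable $1$ and $13,23$ on variable $3$, where both operators act in the same position. The sign bookkeeping you defer is harmless, since the contradiction only needs $\Omega_R^{12}\Omega_R^{13}=\sigma\,\Omega_R^{23}$ for some $\sigma\in\{1,-1\}$.
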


\begin{proof}
先设 $n=3$。记
\[
 \Sigma_{pq}(C)=\{M\in\mathcal K:\Gamma_M^{pq}(C)\ne0\},\qquad
 s_{pq}(C)=|\Sigma_{pq}(C)|.
\]
四个矩阵张成全部二元函数空间，故每个 $s_{pq}(C)\ge1$。若例如
$s_{12}(C)=1$，式~\eqref{zh:eq:klein-pair-expansion} 给出
\[
 C(x_1,x_2,x_3)=N(x_2,x_1)v(x_3)
\]
乘一个非零常数的形式，其中 $N$ 可逆、$v\ne0$。直接按行列指标求和可得
\[
 \Gamma_M^{13}(C)=NMv,\qquad
 \Gamma_M^{23}(C)=N^{\mathsf T}Mv.
\]
三个矩阵都可逆，所以后两对各有四个非零分支。若没有分支数为 $1$，三个分支数各至少为
$2$。两种情况都说明
\[
 s_{12}(C)+s_{13}(C)+s_{23}(C)\ge6.
\]

反设 $C,D$ 没有共同非零收缩。每个变量对上两份支撑不交，故对应分支数之和至多为
$4$。对三个变量对求和，上界是 $12$；对 $C,D$ 分别使用刚才的下界，总和又至少为
$12$。所以两个函数各自的三个分支数之和都等于 $6$。若其中一个分支数为 $1$，前面的
计算会使另外两个分支数都等于 $4$，总和至少为 $9$；因此六个分支数只能全为 $2$。
每个变量对上的两份支撑于是都是互补的二元素集合。

对非平凡 $R\in\mathcal K$，在 $p<q$ 时定义算子
\[
 \Omega_R^{pq}(N)=RNR
\]
作用在 $p,q$ 的二元矩阵坐标上；在张量坐标中，它分别在变量 $p,q$ 上作用
$R^{\mathsf T},R$。由四元数乘法表，$\Omega_R^{pq}$ 的两个特征空间分别由
\[
 \{I,R\},\qquad \mathcal K\setminus\{I,R\}
\]
张成，特征值分别为 $-1,+1$。三种 $R$ 的正、负特征方向恰好给出四元素集合的全部六个
二元素子集。因此对每个 $pq$，存在唯一的非平凡射影方向 $R_{pq}$ 和
$\lambda_{pq}\in\{1,-1\}$，使
\[
 \Omega_{R_{pq}}^{pq}C=\lambda_{pq}C,\qquad
 \Omega_{R_{pq}}^{pq}D=-\lambda_{pq}D.
\]

比较 $12$ 与 $13$：共享变量 $1$ 上的两个矩阵若代表不同射影方向就反对易，两个整体算子
也反对易，不可能有共同非零特征向量 $C$，故 $R_{12}=R_{13}$。比较 $13$ 与 $23$ 的
共享变量 $3$，同理得到 $R_{13}=R_{23}$。记共同方向为 $R$。

标准型中 $R^{\mathsf T}=\eta_RR$，其中 $\eta_R\in\{1,-1\}$。在三个张量位置上直接相乘，
\[
 \Omega_R^{12}\Omega_R^{13}=-\eta_R\Omega_R^{23}.
\]
于是 $C$ 的三个特征值乘积是 $-\eta_R$；$D$ 在三对上都取相反特征值，乘积则是
$\eta_R$，但同一个算子恒等式又要求它等于 $-\eta_R$，矛盾。

麻花型中，若共同方向是 $X'$ 或 $Y'$，则比较 $12$ 与 $23$：共享变量 $2$ 在前一有序对
中读 $R$，在后一有序对中读 $R^{\mathsf T}$，而 $X',Y'$ 互为转置并且反对易，仍会使
两个整体算子反对易。因此只能有 $R=Z$。这时 $Z^{\mathsf T}=Z$，仍有
$\Omega_Z^{12}\Omega_Z^{13}=-\Omega_Z^{23}$，对 $C,D$ 的三个相反特征值作同样比较，
再次矛盾。三元情形得证。

若 $n>3$，任取三个共同变量。分别从 $C,D$ 的一个非零函数值中固定其余变量，得到两个
非零三元切片；两次固定值不必相同。对这两个切片使用三元结论。所得两个非零收缩值也是
原来两个收缩函数各自的一个函数值，所以同一变量对和同一 $M$ 对原来的 $C,D$ 仍同时
非零。这里的固定只用于选择非零切片，不假定钉扎函数可用。
\end{proof}

\section{克莱因群的标准型：共同相位与实数化}\label{zh:subsec:standard-klein}

本节在假设~\ref{zh:ass:klein-upper} 下取 $\mathcal K=\Kstd$，沿用第三版的共同相位与实数化路线。
引理~\ref{zh:lem:klein-upper-basic} 已保证所有非零可用函数均为偶元，且四元函数都是基函数。

记 $V=\operatorname{span}_{\mathbb R}\Kstd$。四元条件有一个反复使用的直接后果：若
$H$ 是非零四元可用函数，则存在 $c\in\mathbb C^\times$，使沿任意 pairing 用
$V$ 中的矩阵把四个端口全部闭合后，所得标量都属于 $c\mathbb R$。因为把 $H$ 的基函数
pairing 与闭合 pairing 叠起来只得到一个四圈或两个二圈；逐圈的值是若干 $\Qeight$
矩阵乘积的迹，属于 $\{0,2,-2\}$，再对 $V$ 中矩阵作实线性展开即可。

对有序 pairing $P$ 及每条边所选的工具组 $\mathbf M$，记
$\zeta_{P,\mathbf M}(F)$ 为把 $F$ 的全部变量按 $P$ 闭合所得的标量。

\begin{lemma}[标准型的共同相位]\label{zh:lem:standard-klein-common-phase}
在假设~\ref{zh:ass:klein-upper} 的标准型情形下，设 $F$ 是非零、可用的偶元函数。任取两份
有序 pairing $P,Q$ 和两组 $\Kstd$ 工具。若两个完全闭合值都非零，则
\[
 \frac{\zeta_{Q,\mathbf N}(F)}{\zeta_{P,\mathbf M}(F)}\in\mathbb R^\times.
\]
因此在任意一份 pairing 所确定的 $\Kstd$ 张量积基中，$F$ 的全部非零坐标有共同复相位。
\end{lemma}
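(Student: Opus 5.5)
The plan is to compare any two nonzero complete closures of $F$ by a chain of intermediate nonzero closures in which every step changes only what touches a bounded set of ports. Each step will then be governed by a nonzero available quaternary (or binary) function. By the observation preceding the lemma, for such an $H$ all closures with tools in $V$ lie on a single real line $c\mathbb R$. So a step between two nonzero closures of the same $H$ multiplies the value by a nonzero real number, and real ratios compose along the chain. I will do this in two stages: first change the pairing, then the tools on a fixed pairing.

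\emph{Changing the pairing.} The symmetric difference $P\triangle Q$ splits into alternating even cycles. On a cycle of length at least four, a single switch of two $P$-edges for two new edges shortens it, so finitely many two-edge switches carry $P$ to $Q$. For one switch, I remove the two tools on the edges being switched. The rest of the current nonzero closure, left open on those four ports, is a nonzero available quaternary function $H$ (nonzero because some closure of it is). Since $\Kstd^{\otimes2}$ is a basis of the quaternary space on the new two pairs, some choice of two $\Kstd$ tools gives a nonzero new closure. The ratio of the new value to the old is real by the $c\mathbb R$ property. Chaining these steps connects the given $P$-closure to some nonzero $Q$-closure with a real ratio.

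\emph{Fixed pairing, different tools.} It remains to compare two nonzero closures on the same pairing whose tool collections differ on some pairs. If they differ on exactly one pair, closing all the other pairs gives a nonzero available binary function with nonzero contractions in two different $\Kstd$ directions. By \eqref{en:eq:klein-cycle-gram} this contradicts item (1) of Assumption~\ref{en:ass:klein-upper}, so this case cannot occur. If they differ on at least two pairs, I pick two differing pairs and close everything else by each collection, obtaining nonzero quaternary functions $H,H'$ on the same four ports. Lemma~\ref{en:lem:klein-simultaneous-nonzero} gives a pair and a common $A\in\Kstd$ making both contractions nonzero. This leaves two nonzero binary functions, and for each the tools in $V$ that close it to zero form a proper real subspace. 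Two proper subspaces cannot cover $V$, so a common $B\in V$ makes both complete closures nonzero. On each side the ratio of old to new value is real, since both are closures of the same quaternary function. The two new paths agree on those four ports.

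\emph{Iteration and the main obstacle.} The tool $B$ lies in $V$ rather than $\Kstd$, so it is not realizable. The delicate step is keeping every quaternary function formed along the way genuinely available. My plan is that $B$ serves only as an algebraic comparison. At the next step I choose the pair carrying $B$ together with one still-differing pair and remove both before forming $H,H'$, so every remaining tool lies in $\Kstd$. This is where the real-linearity in the $c\mathbb R$ property for tools in $V$ is essential. Each step reduces the number of differing pairs by one, so the two paths eventually coincide, and all ratios are real.

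\emph{The coordinate statement.} Finally, \eqref{en:eq:klein-cycle-gram} shows that the complete closure with a given $\Kstd$ tool collection equals the corresponding basis coordinate times $\pm2^d$. So nonzero coordinates correspond to nonzero closures, and the ratio result gives them a common complex phase.
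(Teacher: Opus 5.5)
Your proposal is correct and follows essentially the same route as the paper. You change the pairing by two-edge switches along the alternating cycles of $P\triangle Q$, exclude the distance-one case via the binary condition, and merge the two tool collections with the simultaneous nonzero contraction lemma plus a common real tool $B\in V$; like the paper, you remove $B$ together with a still-differing pair at the next step so every quaternary function formed is genuinely available, and you finish with the Gram-table identification of closures with basis coordinates up to $\pm2^d$.
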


\begin{proof}
先改变 pairing。两份 perfect matching 的对称差由交替偶圈组成；在一个长度至少为四的
交替圈上，把当前 matching 的两条边换成另外两条，可以把该圈缩短。因此可用有限次
二边换接把 $P$ 变成 $Q$。一次换接时，剪掉原闭合中的两个工具，得到一个非零四元可用
函数 $H$。在新 pairing 的两对上，$\Kstd^{\otimes2}$ 是基，所以可选两个工具使新闭合
非零；本节开头的迹计算说明新旧两个非零值的比例为实数。逐步相乘，便可从给定的
$P$-闭合走到某个非零 $Q$-闭合，比例仍为实数。

再固定同一份 pairing，比较两个非零闭合。若两组工具只在一对上不同，闭合其余各对后
得到一个非零二元可用函数，它在两个不同 $\Kstd$ 方向上的收缩都非零，与二元条件中
“只有一个基方向”矛盾。因此这种距离一的情况不会发生。

若至少有两对不同，先选两对，分别按两组闭合其余变量，得到同一组四个变量上的非零
四元函数 $H,H'$。由引理~\ref{zh:lem:klein-simultaneous-nonzero}，存在这四个变量中的一对
及同一个 $A\in\Kstd$，使两次收缩都非零。它们留下两个非零二元函数；在实向量空间
$V$ 中，各自使完全闭合为零的矩阵组成真子空间，两个真子空间的并不能覆盖 $V$，故可取
同一个 $B\in V$，使两次完全闭合都非零。这样得到两条新闭合路径：它们在选出的四个变量
上使用同一 pairing 和同一组 $A,B$，其余地方仍分别沿原来的路径。对每一边，旧值与新值
都是相应非零四元函数的完全闭合值，所以比例为实数。

这里 $B$ 只参与闭合值的代数比较，不要求它普通可实现。两条比较路径始终至多含有
这一个共同的非 $\Kstd$ 工具；下一步先剪掉它，再形成可用四元函数。

若原来恰有两对不同，两条新路径已经碰头。若还有别的不同处，就把共同的 $B\in V$ 所在
变量对作为特别对，再选一对仍不同的变量；剪掉这两对后，其余工具全部属于 $\Kstd$，所以
又得到两个可用四元函数。重复上一段，换成新的共同 $A\in\Kstd,B\in V$，每次使不同处
减少一个。有限步后两路完全相同。沿两路所得的所有非零比例都是实数，所以原来两个闭合
值的比例是实数。结合第一段便得到任意两份 pairing 的结论。

最后，式~\eqref{zh:eq:klein-cycle-gram} 表明：每个基坐标都对应唯一一组 $\Kstd$ 闭合工具，
相应完全闭合值等于该坐标乘一个非零实数 $\pm2^d$。因此所有非零坐标也有共同复相位。
\end{proof}

还要把“基坐标有共同相位”推进到“函数可以整体变成实值”。这是 v3 中保留实虚混合降元
的作用。为避免把“实部、虚部都非零”误当成“没有共同相位”，先写出降元真正保持的两个
互补子空间。

令 $\chi([I])=\chi([Y])=0$，$\chi([X])=\chi([Z])=1$。这是
$\Kfour\to\mathbb F_2$ 的群同态。对一份 pairing 下的 $\Kstd$ 张量积基向量，把各二元因子的
$\chi$ 值相加，所得值称为它的奇偶标记。因为标准型的转置不改变射影类，比较两份 pairing
时，每条交替圈上非零的矩阵迹要求两边射影乘积相同。由迹配对表，换基系数均为实数，
且不混合奇偶标记，因此下面两个实子空间的定义不依赖 pairing。
由标记为 $0$ 和 $1$ 的张量积基向量分别作实线性张成，所得两个子空间记为 $W_0,W_1$。
$W_0$ 中的基向量是实值的，$W_1$ 中的基向量是纯虚值的；给任意变量对接入 $M$ 后，
两个子空间同时平移为
$W_{\epsilon+\chi([M])}$，仍然互不相交。

\begin{theorem}[标准型的逐函数实数化]\label{zh:thm:standard-klein-realification}
在假设~\ref{zh:ass:klein-upper} 的标准型情形下，每个非零、可用的偶元函数 $F$ 都存在
$\lambda_F\in\mathbb C^\times$，使 $\lambda_FF$ 为实值函数。
\end{theorem}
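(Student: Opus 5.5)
The plan has four steps: normalize the phase, split by parity label, reduce arity while keeping both parity components alive, and finish with a contradiction at arity four. This is the Chinese-language copy of Theorem~\ref{en:thm:standard-klein-realification}, so I would follow the same structure.

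\emph{Phase normalization.} Use Lemma~\ref{zh:lem:standard-klein-common-phase} to fix a pairing and multiply $F$ by a nonzero constant. After this, every coordinate of $F$ in the $\Kstd$ tensor-product basis is real.

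\emph{Splitting.} Write $F=F_0+F_1$ with $F_\epsilon\in W_\epsilon$. The text before the theorem shows that these subspaces do not depend on the pairing, that basis vectors in $W_0$ are real-valued and those in $W_1$ are purely imaginary-valued, and that real coordinates stay real under change of pairing. Hence $F_0$ is real-valued and $F_1$ is purely imaginary-valued. If $F_1=0$, then $F$ is already real. If $F_0=0$, multiplying by $-i$ makes it real. It remains to rule out the mixed case $F_0\ne0\ne F_1$.

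\emph{Reduction in the mixed case.} Argue by contradiction and reduce the arity while preserving the mixed property. For arity $2d\ge6$, apply the simultaneous nonzero contraction lemma (Lemma~\ref{zh:lem:klein-simultaneous-nonzero}) to the pair $F_0,F_1$. This gives a pair of ports $p,q$ and a tool $M\in\Kstd$ with $\Gamma_M^{pq}(F_0)\ne0$ and $\Gamma_M^{pq}(F_1)\ne0$.

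The actual operation is a single contraction of $F$ itself, so the result stays in the ordinary closure. The components $F_0,F_1$ only guide the choice of $p,q,M$; they need not be realizable on their own.

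Two properties must be checked after the contraction. First, choose a pairing containing $pq$. By the trace table \eqref{zh:eq:klein-cycle-gram} and the quaternion multiplication rules, contracting a basis vector gives $0$ or $\pm2$ times a basis vector on the remaining pairs, so real coordinates stay real. Second, both parity labels shift by $\chi([M])$, so the two contracted components still lie in complementary subspaces $W_{\chi([M])}$ and $W_{1+\chi([M])}$. Iterating, we reach a nonzero available quaternary function with nonzero components in both parity subspaces.

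\emph{Contradiction at arity four.} By Lemma~\ref{zh:lem:klein-upper-basic}, such a quaternary function is a $\Kstd$-basis function for some pairing. It therefore has a single nonzero basis coordinate and lies in exactly one $W_\epsilon$. This contradicts the presence of both components. The binary condition likewise rules out mixed functions of arity two, since a nonzero binary function is a multiple of one matrix in $\Kstd$. Hence the mixed case cannot occur.

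\emph{Main obstacle.} The hardest part is the bookkeeping in the reduction step. One must check that real coordinates stay real when the reference pairing is changed to one containing $pq$, and that the labels do not mix, so that the invariance of $W_0,W_1$ stated before the theorem actually applies at each step. Once that invariance is in hand, the simultaneous contraction lemma does the remaining work.
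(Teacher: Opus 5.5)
Your proposal is correct and follows the paper's proof essentially step for step: normalize the common phase, split $F$ into its $W_0$ and $W_1$ components, repeatedly apply the simultaneous nonzero contraction lemma to $F_0,F_1$ while keeping coordinates real and shifting both parity labels by $\chi([M])$, and reach a contradiction at arity four via Lemma~\ref{zh:lem:klein-upper-basic}. The only cosmetic difference is that you rewrite coordinates in a pairing containing $pq$ before contracting, whereas the paper reads off the real coefficient directly in the current pairing; both rest on the same pairing-independence facts stated before the theorem.
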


\begin{proof}
由引理~\ref{zh:lem:standard-klein-common-phase}，先乘一个非零常数，使 $F$ 在某一份
$\Kstd$ 张量积基中的坐标全为实数。若其支撑只落在 $W_0$，函数已经实值；若只落在
$W_1$，再乘 $-i$ 即为实值。

反设两个子空间中的坐标都非零，把相应的两个非零分量记为 $F_0,F_1$。它们分别落在
$W_0,W_1$，并且在 $\Kstd$ 基中的坐标都是实数；等价地，$F_0$ 是实值函数，而 $F_1$
是纯虚值函数。若 $F$ 的元数至少为六，应用
引理~\ref{zh:lem:klein-simultaneous-nonzero} 于 $F_0,F_1$，得到同一变量对和同一
$M\in\Kstd$，使两个分量收缩后仍都非零。$F_0,F_1$ 只用于选择工具，不要求它们分别
可实现；实际执行的是对当前函数 $F_0+F_1$ 作同一次收缩，因此始终留在普通闭包中。
收缩一个张量积基向量时，式
~\eqref{zh:eq:klein-cycle-gram} 和四元数乘法表说明所得系数只是原系数的实数倍，故两个
新分量的 $\Kstd$ 基坐标仍为实数；奇偶标记则同时加上 $\chi([M])$，所以它们仍落在两个
互补子空间。下一步直接把这两个非零分量作为双活引理的输入。这样反复进行，最终得到一个
同时使用两个互补奇偶子空间的非零四元可用函数。

四元条件却说它在某一 pairing 下只有一个非零基坐标，因而只属于一个 $W_\epsilon$，
矛盾。二元条件同样排除了低于四元的混合函数。因此两个子空间不能同时出现，结论成立。
\end{proof}

\begin{corollary}[标准型的复杂性二分]\label{zh:cor:standard-klein-real-holant}
在假设~\ref{zh:ass:klein-upper} 的标准型情形下，存在逐函数非零缩放后的代数实值函数集
$\mathcal F_{\mathbb R}$，
使
\[
 \Holant(\mathcal F)\equiv_{\mathrm T}\Holant(\mathcal F_{\mathbb R}).
\]
因此由定理~\ref{zh:thm:real-holant}，若 $\mathcal F_{\mathbb R}$ 满足 condition \textup{(T)}，
则原问题属于 $\FP$；否则原问题为 $\sharpP$-hard。
\end{corollary}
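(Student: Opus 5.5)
The plan is to follow the English proof of Corollary~\ref{en:cor:standard-klein-real-holant}, written here for the Chinese statement. The argument is per-function scaling followed by an appeal to the Real Boolean Holant theorem.

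Fix the finite set $\mathcal F$. For each zero function we leave it unchanged. For each nonzero $f\in\mathcal F$, note that $f$ lies in $\mathscr R$ and has even arity by Lemma~\ref{zh:lem:klein-upper-basic}. Theorem~\ref{zh:thm:standard-klein-realification} then gives $\lambda_f\in\mathbb C^\times$ with $\lambda_f f$ real-valued.

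The step that needs care is algebraicity of $\lambda_f$, because the Real Boolean Holant theorem requires algebraic real inputs. I will trace the construction inside the proofs of Lemma~\ref{zh:lem:standard-klein-common-phase} and Theorem~\ref{zh:thm:standard-klein-realification}.

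\begin{itemize}
\item Expand $f$ in a $\Kstd$ tensor-product basis. Each coordinate is a closure of $f$ with $\Kstd$ tools divided by $\pm2^d$, so it is algebraic, since $\Kstd$ has algebraic entries.
\item Take $\lambda_f$ to be the reciprocal of one nonzero coordinate. By the common-phase lemma this makes all coordinates real.
\item By the theorem, the support lies in a single $W_\epsilon$. If it lies in $W_1$, multiply additionally by $-i$.
\item The resulting $\lambda_f$ is therefore algebraic. The values of $\lambda_f f$ are real linear combinations of algebraic basis-vector values, hence algebraic reals.
\end{itemize}

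Set $\mathcal F_{\mathbb R}=\{\lambda_f f\}$. Lemma~\ref{zh:lem:per-signature-scaling} gives $\Holant(\mathcal F)\equiv_{\mathrm T}\Holant(\mathcal F_{\mathbb R})$. Theorem~\ref{zh:thm:real-holant} applies to the finite algebraic real set $\mathcal F_{\mathbb R}$ and yields the $\FP$/$\sharpP$-hard dichotomy according to condition (T). The Turing equivalence transfers that conclusion back to $\mathcal F$.

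I expect no real obstacle. The only subtlety is the algebraicity check in the list above.
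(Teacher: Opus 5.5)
Your proposal is correct and is essentially the paper's own argument. You take $\lambda_f$ to be the reciprocal of a nonzero algebraic $\Kstd$ basis coordinate, multiply further by $-i$ when the support lies in $W_1$, leave zero functions unchanged, and then apply Lemma~\ref{zh:lem:per-signature-scaling} and Theorem~\ref{zh:thm:real-holant}; your explicit algebraicity check spells out what the paper states in a single line.
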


\begin{proof}
函数集固定且有限，对每个非零 $f$ 选定定理~\ref{zh:thm:standard-klein-realification} 给出的
$\lambda_f$：可用一个非零代数基坐标的倒数消去共同相位，必要时再乘 $-i$，故缩放仍为代数数。
零函数保持不动。引理~\ref{zh:lem:per-signature-scaling} 给出所述 Turing 等价，
而缩放后的函数集满足 Real Boolean Holant 定理的代数实值输入条件。
\end{proof}

\section{克莱因群的麻花型：完整证明}\label{zh:subsec:twisted-klein}

本节在假设~\ref{zh:ass:klein-upper} 下取 $\mathcal K=\Ktw$。若普通闭包中存在非基函数，
就取其中元数最小者；它的每一种二元收缩都非零并且成为基函数。我们称这个结论为
散珍珠条件。共同 pairing 和支撑扩散把它压到六元或八元，最后圈方向却迫使麻花型的
非平凡转置作用成为恒等，得到矛盾。

先给四个射影类选二维二元标签：
\[
 \operatorname{lab}([I])=\binom00,\qquad
 \operatorname{lab}([Z])=\binom10,\qquad
 \operatorname{lab}([Y'])=\binom01,\qquad
 \operatorname{lab}([X'])=\binom11.
\]
用 $M_s\in\Ktw$ 表示标签为 $s\in\mathbb F_2^2$ 的代表矩阵。忽略矩阵乘法中的正负号后，
标签相加就是射影类相乘。转置在标签上的作用为
\begin{equation}\label{zh:eq:twisted-label-transpose}
 \operatorname{lab}([M^{\mathsf T}])=\tau\operatorname{lab}([M]),\qquad
 \tau=\begin{pmatrix}1&1\\0&1\end{pmatrix}.
\end{equation}
这里的矩阵乘法在 $\mathbb F_2$ 上进行。于是
$\tau\binom{1}{1}=\binom{0}{1}$、
$\tau\binom{0}{1}=\binom{1}{1}$，而 $\binom00,\binom10$ 不动。
其中，矩阵 $I$ 的射影类 $[I]$ 对应标签 $\binom00$。下文把标签空间上的恒等映射记为
\[
  \mathrm{Id}_{\mathrm{lab}}=\begin{pmatrix}1&0\\0&1\end{pmatrix}.
\]

\subsection{最小反例怎样得到散珍珠条件}

始终在假设~\ref{zh:ass:klein-upper} 所固定的普通闭包 $\mathscr R$ 中取函数。令
\[
 \mathscr R_{\mathrm{nb}}
 =\{F\in\mathscr R:F\ne0,\ \arity(F)\text{ 为偶数},\ F\text{ 不是 }\Ktw\text{-基函数}\}.
\]
若此集合非空，选取其中元数最小的函数 $F$。

\begin{definition}[散珍珠条件]\label{zh:def:scattered-pearl}
非零偶元函数 $F$ 满足\emph{散珍珠条件}，是指对每个变量对 $p,q$ 和每个
$M\in\Ktw$，都有
\[
 \Gamma_M^{pq}(F)\ne0,
 \qquad
 \Gamma_M^{pq}(F)\text{ 是一个 }\Ktw\text{-基函数}.
\]
各个收缩所得基函数的因子 pairing 可以不同。
\end{definition}

\begin{lemma}[最小反例满足散珍珠条件]\label{zh:lem:twisted-minimal-pearl}
在假设~\ref{zh:ass:klein-upper} 的麻花型情形下，若 $\mathscr R_{\mathrm{nb}}\ne\varnothing$，则其中
的最小元数函数 $F$ 至少为六元，并满足散珍珠条件。
\end{lemma}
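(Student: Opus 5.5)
The plan is to argue by minimality, in three steps.

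\emph{Step 1: the arity is at least six.} By the nullary convention, a nonzero nullary function is a basis function. The first item of Assumption~\ref{zh:ass:klein-upper} makes every nonzero binary function a basis function. Lemma~\ref{zh:lem:klein-upper-basic} does the same for every nonzero quaternary function. Hence the minimal $F$ has even arity $2d\ge6$.

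\emph{Step 2: branch counts.} For a fixed variable pair $p,q$, let $b_{pq}(F)$ be the number of $M\in\Ktw$ with $\Gamma_M^{pq}(F)\ne0$. Since $\Ktw$ spans the binary space and $F\ne0$, we have $b_{pq}(F)\ge1$.

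First I would compute $b_{pq}$ for a basis function.
\begin{itemize}
\item If $p,q$ form one factor, equation~\eqref{zh:eq:klein-cycle-gram} leaves exactly one nonzero branch.
\item If $p,q$ lie in different factors $A(x_r,x_p)$ and $B(x_s,x_q)$, the contraction is the binary matrix $AMB^{\mathsf T}$. This is invertible, so all four branches are nonzero.
\end{itemize}
So a basis function always has branch count $1$ or $4$.

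Then I would rule out the other counts for $F$.
\begin{itemize}
\item If $b_{pq}(F)=1$ with surviving tool $M_0$, the expansion~\eqref{zh:eq:klein-pair-expansion} collapses to $c_{M_0}M_0(x_q,x_p)\otimes\Gamma_{M_0}^{pq}(F)$. The second factor is realizable, nonzero and of smaller arity, so by minimality it is a basis function. Then $F$ is a basis function, a contradiction.
\item If $b_{pq}(F)\in\{2,3\}$, take two nonzero branches $C,D$. They share $2d-2\ge4$ variables, so Lemma~\ref{zh:lem:klein-simultaneous-nonzero} gives a pair $r,s$ and a tool $N$ with both $\Gamma_N^{rs}(C)\ne0$ and $\Gamma_N^{rs}(D)\ne0$. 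Set $F'=\Gamma_N^{rs}(F)$. Disjoint contractions commute, so the two chosen branches of $F'$ at $pq$ stay nonzero. Branches that were zero stay zero. Hence $b_{pq}(F')\in\{2,3\}$. But $F'$ is nonzero, realizable and of smaller even arity, so it is a basis function. This contradicts the $1/4$ criterion.
\end{itemize}
Therefore every pair has four nonzero branches.

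\emph{Step 3: conclusion.} Each $\Gamma_M^{pq}(F)$ is nonzero, realizable, of even arity and of smaller arity than $F$. It cannot lie in $\mathscr R_{\mathrm{nb}}$, so it is a basis function. This is exactly the scattered-pearl condition.

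The main obstacle is the $2$-or-$3$ branch case. It needs the simultaneous nonzero contraction lemma together with the fact that the arity is at least six, which guarantees at least three common variables remain for $C$ and $D$. Commutation of disjoint contractions then has to be checked carefully, so that the branch count of $F'$ is exactly preserved.
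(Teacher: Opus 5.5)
Your proposal is correct and follows the paper's own proof essentially step for step. It uses the same exclusion of arities $0,2,4$ and the same $1/4$ branch-count criterion for $\Ktw$-basis functions via $AMB^{\mathsf T}$. It handles a single nonzero branch by the same factorization, and it rules out the $2$-or-$3$ branch case the same way, with Lemma~\ref{zh:lem:klein-simultaneous-nonzero} and commuting disjoint contractions.
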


\begin{proof}
非零零元函数按约定是基函数；二元条件排除二元反例，
引理~\ref{zh:lem:klein-upper-basic} 排除四元反例。因此 $F$
至少为六元。固定变量对 $p,q$，把四个收缩中非零函数的个数记为 $b_{pq}(F)$。四个矩阵
张成全部二元函数空间，而 $F\ne0$，所以 $b_{pq}(F)\ge1$。先看一个
$\Ktw$-基函数：若 $p,q$ 原来是一条二元因子边，式~\eqref{zh:eq:klein-cycle-gram} 说明恰有
一个分支非零；若 $p,q$ 分属两个因子，可把这两个因子分别记成
$A(x_r,x_p)$ 和 $B(x_s,x_q)$。收缩后的二元矩阵正是
$AMB^{\mathsf T}$；三个矩阵都可逆，所以四个分支全非零。因此基函数在任意
变量对上的分支数只能是 $1$ 或 $4$。

若 $b_{pq}(F)=1$，设唯一非零工具为 $M_0$。由式~\eqref{zh:eq:klein-pair-expansion}，
\[
 F=c_{M_0}M_0(x_q,x_p)\otimes\Gamma_{M_0}^{pq}(F).
\]
第二个因子非零、普通可实现，且元数比 $F$ 小；由最小性，它是基函数。因此 $F$ 也是
基函数，矛盾。

再设 $b_{pq}(F)=2$ 或 $3$。任选两个非零分支
\[
 C=\Gamma_{M_1}^{pq}(F),\qquad
 D=\Gamma_{M_2}^{pq}(F).
\]
由于 $F$ 至少为六元，$C,D$ 至少还有四个共同变量。由双活引理
~\ref{zh:lem:klein-simultaneous-nonzero}，可在另外一对变量 $r,s$ 上取同一个
$N\in\Ktw$，使两者收缩都非零。令
\[
 F'=\Gamma_N^{rs}(F).
\]
两对变量不交，有限求和可交换，所以
\[
 \Gamma_{M_a}^{pq}(F')
 =\Gamma_N^{rs}\bigl(\Gamma_{M_a}^{pq}(F)\bigr)\ne0
 \qquad(a=1,2).
\]
原来为零的分支在进一步收缩后仍为零，故 $F'$ 在同一变量对上仍恰有两个或三个非零分支。
另一方面，$F'$ 非零、普通可实现且元数更小，故由最小性它是基函数，
与基函数的 $1/4$ 分支判据矛盾。

所以最小反例的每个变量对都有四个非零分支。
对任意 $p,q,M$，函数
$G=\Gamma_M^{pq}(F)$ 非零、偶元且元数更小。若 $G$ 不是基函数，它本身就在
$\mathscr R_{\mathrm{nb}}$ 中，与 $F$ 的最小性矛盾。故每个 $G$ 都是
基函数，散珍珠条件成立。
\end{proof}

\subsection{共同配对与元数压缩}

\begin{lemma}[全部变量边怎样分成 pairing]\label{zh:lem:twisted-one-factorization}
设非零 $2d$ 元函数 $F$ 满足散珍珠条件，且 $2d\ge6$。则全部变量对被分成
$2d-1$ 份两两不共边的 pairing。对其中任意一份 $P$，$F$ 在 $P$ 所确定的
$\Ktw$ 张量积基中恰有四个非零坐标。
\end{lemma}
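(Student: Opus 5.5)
The plan is to show that, for every variable pair $pq$, the four basis functions $G_M=\Gamma_M^{pq}(F)$ share one factor pairing $\mu_{pq}$. Then $\Pi_{pq}=\{pq\}\cup\mu_{pq}$ should be a well-defined pairing containing $pq$, and these pairings should coincide or be edge-disjoint. I begin from the scattered-pearl condition: all four $G_M$ are nonzero $\Ktw$-basis functions. As in the proof of Lemma~\ref{zh:lem:twisted-minimal-pearl}, a basis function has branch count exactly $1$ on a pair lying in one factor and exactly $4$ on a pair split across two factors. Hence the factor pairing of a basis function is recovered from its branch counts; write $\mu_M$ for that of $G_M$.

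To compare the $\mu_M$, fix a second pair $rs$ disjoint from $pq$ and form the $4\times4$ zero-one table $D_{M,N}=\mathbf 1[\Gamma_N^{rs}\Gamma_M^{pq}F\ne0]$. Row $M$ has weight $1$ or $4$ by the branch-count criterion applied to $G_M$. Since the two contractions commute, column $N$ has weight $1$ or $4$ by the same criterion applied to the basis function $\Gamma_N^{rs}F$. I will check that such a table is a permutation matrix, the all-ones matrix, or a ``cross'' with one full row and the other three $1$'s in a single column. These give respectively $4$, $0$ or $3$ pairings among $\{\mu_M\}$ containing $rs$. Now fix a remaining variable $r$ and sum $m(e)$ over the edges $e\ni r$: the total is $4$, since each $\mu_M$ gives $r$ one partner. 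The allowed values $0,3,4$ then force exactly one edge with $m=4$. So every variable has the same partner in all four pairings, and $\mu_I=\mu_Z=\mu_{X'}=\mu_{Y'}$.

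The step I expect to be the main obstacle is consistency: $rs\in\Pi_{pq}$ should imply $\Pi_{rs}=\Pi_{pq}$. I would take a nonzero entry of the (now permutation-type) table for $pq,rs$. Contracting $pq$ then $rs$, or in the opposite order, gives the same nonzero basis function. Its pairing must be $\mu_{pq}\setminus\{rs\}=\mu_{rs}\setminus\{pq\}$, using the uniqueness of recovered pairings, and hence $\Pi_{rs}=\Pi_{pq}$. So distinct $\Pi$'s share no edge, and every edge lies in its own $\Pi$. The pairings $\Pi$ therefore partition the $\binom{2d}{2}$ edges into pieces of $d$ edges, giving $2d-1$ pairings.

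For the coordinate count I apply the expansion~\eqref{zh:eq:klein-pair-expansion} on $pq$, writing $F=\sum_M c_M M(x_q,x_p)\otimes G_M$. Each $G_M$ is a single basis vector on the common pairing $\mu_{pq}$, and the $pq$ factors are distinct. So in the basis determined by $\Pi_{pq}$ there are exactly four distinct basis vectors, all with nonzero coefficients. Every pairing in the partition has the form $\Pi_{pq}$, so the count of four holds for each of them.
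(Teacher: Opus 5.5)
Your proposal is correct and follows the paper's proof essentially step for step. The shared steps are: recovering the factor pairings from branch counts; the three possible shapes of the $4\times4$ table, giving $m(e)\in\{0,3,4\}$; the sum-to-four argument forcing $\mu_I=\mu_Z=\mu_{X'}=\mu_{Y'}$; consistency of $\Pi_{pq}$ via commuting contractions; the count $\binom{2d}{2}/d=2d-1$; and the pair expansion on $pq$ for the four coordinates. One point is left implicit in both your version and the paper's. In the consistency step, the permutation shape also gives every column weight one, i.e.\ $pq\in\mu_{rs}$, and this is what justifies reading the common pairing as $\mu_{rs}\setminus\{pq\}$.
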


\begin{proof}
固定变量对 $p,q$，写
\[
 G_M=\Gamma_M^{pq}(F)\qquad(M\in\Ktw).
\]
四个 $G_M$ 都是非零基函数。基函数的因子 pairing 可由 $1/4$ 分支判据唯一恢复：一对
变量恰有一个收缩分支非零，当且仅当它们原来属于同一个二元因子。把 $G_M$ 的 pairing
记为 $\mu_M$。

对另一对不交变量 $r,s$，考虑零一表
\[
 D_{M,N}=\mathbf1_{\Gamma_N^{rs}
                    \bigl(\Gamma_M^{pq}(F)\bigr)\ne0}
 \qquad(M,N\in\Ktw).
\]
每行和每列的重量都只能是 $1$ 或 $4$；列也有此性质，是因为两个不交收缩可以交换。
这样的 $4\times4$ 表只有三种形状：四行各重一的置换形；全一形；或者恰有一行重四、
其余三行唯一的 $1$ 落在同一列的十字形。

对剩余变量边 $e$，令 $m(e)$ 为四份 $\mu_M$ 中含 $e$ 的份数。三种形状分别给出
$m(e)=4,0,3$。固定一个剩余变量 $r$，它在四份 pairing 中各有一个搭档，故
\[
 \sum_{e\ni r}m(e)=4.
\]
和式中若出现 $3$，其余正项至少也是 $3$，不可能凑成 $4$；所以只能出现唯一一个 $4$。
每个变量在四份 pairing 中都有相同搭档，从而
\[
 \mu_I=\mu_Z=\mu_{X'}=\mu_{Y'}=: \mu_{pq}.
\]

令 $\Pi_{pq}=\{pq\}\cup\mu_{pq}$。若 $rs\in\Pi_{pq}$，取上述置换表中的任一非零位置，
先收缩 $pq$ 再收缩 $rs$ 与交换次序所得基函数相同；用 pairing 的唯一性删去这两条边，
得到 $\Pi_{rs}=\Pi_{pq}$。故两份不同的 $\Pi$ 不共边。每条变量边 $pq$ 又都属于
$\Pi_{pq}$，所以这些 pairing 恰好分割完全图的全部边；每份有 $d$ 条边，总数为
\[
 \binom{2d}{2}/d=2d-1.
\]

最后沿 $pq$ 使用式~\eqref{zh:eq:klein-pair-expansion}：
\[
 F=\sum_{M\in\Ktw}c_M M(x_q,x_p)\otimes G_M.
\]
四个 $G_M$ 有共同 pairing $\mu_{pq}$，所以右边是在同一份
$P=\Pi_{pq}$ 张量积基中的四个不同基向量；四个系数都非零。对每份 $P$ 都同样成立。
\end{proof}

取两份不同的 pairing $P,Q$。由于它们不共边，$P\cup Q$ 分解成若干条边色交替圈。
在每条 pair 内固定记录次序，在每条圈上再任选行进方向。若第 $k$ 条圈含 $r_k$ 条
$P$ 边和 $r_k$ 条 $Q$ 边，并且共有 $c$ 条圈，则
\[
 r_1+\cdots+r_c=d.
\]
对 $P$ 边 $e$，令 $\epsilon_e^P=0$ 表示沿记录次序经过，$1$ 表示反向；定义
\[
 \sigma_k^P(\mathbf s)=
 \sum_{e\in P\cap C_k}\tau^{\epsilon_e^P}s_e.
\]
对 $Q$ 边同样定义 $\epsilon_e^Q$。读取换基坐标时，须把
$M(x_q,x_p)$ 换成双线性对偶函数 $c_M M(x_p,x_q)$，相当于反转这一对的次序。
这个反转使射影标签多受一次 $\tau$ 的作用，所以定义
\[
 \sigma_k^Q(\mathbf t)=
 \sum_{e\in Q\cap C_k}\tau^{1+\epsilon_e^Q}t_e.
\]

以 $B_P(\mathbf s),B_Q(\mathbf t)$ 分别表示两套基向量，并把它们的双线性对偶基
记为 $D_P(\mathbf s),D_Q(\mathbf t)$。例如，若 $Q$ 的有序边为 $(p_j,q_j)$，则
\[
 B_Q(\mathbf t)=\prod_{j=1}^d M_{t_j}(x_{q_j},x_{p_j}),\qquad
 D_Q(\mathbf t)=\prod_{j=1}^d c_{M_{t_j}}M_{t_j}(x_{p_j},x_{q_j}).
\]
定义换基矩阵
\begin{equation}\label{zh:eq:twisted-change-basis}
 A^{Q\leftarrow P}_{\mathbf t,\mathbf s}
 =\sum_{\alpha\in[2]^{2d}}
   D_Q(\mathbf t)(\alpha)B_P(\mathbf s)(\alpha).
\end{equation}
这只是式~\eqref{zh:eq:klein-pair-expansion} 的逐对应用。沿每条交替圈依行进方向求和，
矩阵元成为圈上矩阵词的迹，再乘各对偶因子的常数。具体地，$P$ 边上的因子写成
$M(x_q,x_p)$，所以它沿记录次序经过时读 $M^{\mathsf T}$；$Q$ 边上的对偶因子写成
$c_MM(x_p,x_q)$，所以沿记录次序经过时读 $c_MM$。第 $k$ 条圈上的实际标签和因此是
\[
 \sum_{e\in P\cap C_k}\tau^{1+\epsilon_e^P}s_e
 +\sum_{e\in Q\cap C_k}\tau^{\epsilon_e^Q}t_e
 =\tau\bigl(\sigma_k^P(\mathbf s)+\sigma_k^Q(\mathbf t)\bigr).
\]
由于 $\tau$ 可逆，这个和为零当且仅当下面两个标签和相等，故
\begin{equation}\label{zh:eq:twisted-block-criterion}
 A^{Q\leftarrow P}_{\mathbf t,\mathbf s}\ne0
 \quad\Longleftrightarrow\quad
 \sigma_k^P(\mathbf s)=\sigma_k^Q(\mathbf t)
 \quad(1\le k\le c).
\end{equation}
非零时，每条圈贡献绝对值 $2$，每个对偶因子另贡献绝对值 $1/2$，
所以矩阵元的绝对值为 $2^{c-d}$。
交换 $P,Q$ 得到逆换基矩阵 $A^{P\leftarrow Q}$；它的非零矩阵元也有同一绝对值。
逆向求和时，圈条件的两边分别多受一次 $\tau$ 的作用，同时作用 $\tau$ 就恢复
式~\eqref{zh:eq:twisted-block-criterion}。所以正向与逆向使用的是同一组块，
只把每个块的行、列指标对调。

\begin{lemma}[交替圈块的支撑扩散]\label{zh:lem:twisted-support-spreading}
按式~\eqref{zh:eq:twisted-block-criterion} 的 $c$ 个圈标签和给行列分块，每个非零块都是
$N\times N$，其中
\[
 N=4^{d-c}.
\]
若 $F$ 的 $P$ 基支撑中有 $h$ 项落在同一块，则这 $h$ 项在 $Q$ 基中产生至少 $N/h$
个非零坐标。因此，满足散珍珠条件的函数必有 $d\le4$。
\end{lemma}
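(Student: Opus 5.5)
The plan has three parts: count the block sizes, bound the support spread by an uncertainty-type inequality, and then combine this with Lemma~\ref{zh:lem:twisted-one-factorization} to get the arity bound. All block-structure information comes from the criterion~\eqref{zh:eq:twisted-block-criterion}.

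\textbf{Block size.} Fix the tuple of $c$ cycle label sums $(\sigma_1,\ldots,\sigma_c)\in(\mathbb F_2^2)^c$. On cycle $k$ there are $r_k$ $P$-edges. Each $\tau^{\epsilon}$ is invertible, so the labels on the first $r_k-1$ edges can be chosen freely and the last one is then uniquely determined by the sum. This gives $4^{r_k-1}$ column indices per cycle, and the same count holds for rows. Multiplying over the cycles and using $\sum r_k=d$ gives $N=4^{d-c}$.

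\textbf{Invertibility and entry sizes.} By the dual-basis identities, $A^{P\leftarrow Q}A^{Q\leftarrow P}=I$. By~\eqref{zh:eq:twisted-block-criterion}, neither matrix has entries between different blocks, so each block is invertible and its inverse is the corresponding block of $A^{P\leftarrow Q}$. As computed above, every nonzero entry in both directions has absolute value $a=2^{c-d}=N^{-1/2}$.

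\textbf{Support spreading.} Let $v$ be the block coefficient vector, with $h$ nonzero entries, and let $w=A^{Q\leftarrow P}v$, with $s$ nonzero entries. Put $U=\max|v_j|$ and $V=\max|w_i|$. Since each output coordinate is a sum of at most $h$ terms of size at most $aU$, we get $V\le haU$. Applying the inverse block the same way gives $U\le saV$. Combining, $U\le sha^2U$, hence $hs\ge N$.

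\textbf{Arity bound.} By Lemma~\ref{zh:lem:twisted-one-factorization}, $F$ has exactly four nonzero coordinates in each pairing basis, so $h\le4$. Choose any block meeting the $P$-support. Different blocks do not interact, so its $Q$-image contributes at least $N/4$ distinct nonzero $Q$-coordinates, and there are only four. Therefore $4^{d-c}\le16$, that is, $d-c\le2$. Since $P$ and $Q$ share no edge, every alternating cycle contains at least two $P$-edges, so $c\le\lfloor d/2\rfloor$. Then $d-\lfloor d/2\rfloor\le2$, which forces $d\le4$.

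The main care point is to check that the inverse transformation uses exactly the same blocks, with rows and columns swapped, and has entries of the same modulus. Once that is in place, the rest is the elementary counting above.
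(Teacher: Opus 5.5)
Your proposal is correct and follows the paper's proof essentially step for step. You use the same per-cycle count $4^{r_k-1}$ to get $N=4^{d-c}$, and the same max-modulus chain $U\le saV\le sha^2U$, based on the block and its inverse having all nonzero entries of modulus $2^{c-d}=N^{-1/2}$. You then combine $N/4\le 4$ with $c\le\lfloor d/2\rfloor$ to get $d\le4$ exactly as the paper does.
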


\begin{proof}
固定一条含 $r_k$ 条 $P$ 边的圈及其标签和。任选前 $r_k-1$ 条边的标签，最后一条由和式
唯一确定，所以该圈有 $4^{r_k-1}$ 个列标；行标同理。逐圈相乘便得到
\[
 N=\prod_{k=1}^c4^{r_k-1}=4^{d-c}.
\]
对偶基恒等式说明两次换基互为逆运算；标签和不同的块之间没有非零矩阵元，
所以每个块也各自可逆。这个块及其逆矩阵的非零项都有绝对值
$a=2^{c-d}=N^{-1/2}$。

设该块中的输入系数向量为 $v\ne0$，恰有 $h$ 个非零坐标；输出为
$w=A^{Q\leftarrow P}v$，恰有 $s$ 个非零坐标。记
\[
 U=\max_j|v_j|>0,\qquad V=\max_i|w_i|>0.
\]
每个输出坐标至多是 $h$ 个绝对值不超过 $aU$ 的项之和，故 $V\le haU$。
再从 $w$ 使用逆换基矩阵恢复 $v$，同理有 $U\le saV$。因此
\[
 U\le saV\le sha^2U.
\]
约去 $U>0$，得到 $hs\ge a^{-2}=N$，即 $s\ge N/h$。

由引理~\ref{zh:lem:twisted-one-factorization}，$F$ 在 $P,Q$ 两套基中都恰有四个非零坐标。
各块不能互相抵消，且每块 $h\le4$，所以
$4\ge N/4=4^{d-c-1}$，即 $d-c\le2$。两份 pairing 不共边，每条交替圈至少含两条
$P$ 边，故 $c\le\lfloor d/2\rfloor$。合并两个不等式得到 $d\le4$。
\end{proof}

因为这里还假定 $2d\ge6$，候选元数只剩六元、八元。两种情况下等号都给
$N=16$，而两套基中的支撑大小均为四。若 $P$ 基支撑中的四个基向量分在大小分别为
$h_1,\ldots,h_t$ 的多个块，则
\[
 \sum_{j=1}^t\frac{16}{h_j}>4
 \qquad(t\ge2,\ h_1+\cdots+h_t=4),
\]
这里每个 $h_j\le4$，所以左边在 $t\ge2$ 时至少为 $8$。这与 $Q$ 基支撑只有四项矛盾。
因此 $P$ 基支撑中的四个基向量在每条交替圈上有同一个有向标签和。

\subsection{六元、八元的方向矛盾}

固定一份 pairing $P$，用它的一条边上的四个标签给四个非零基坐标编号。对第 $j$ 条
$P$ 边，散珍珠条件的四个分支均非零，所以四项在该位置恰好各用一个标签，给出四点集合
的一个置换 $s_j$。任何这样的置换都唯一写成
\[
 s_j(x)=L_jx+b_j,
\]
其中 $L_j$ 是 $\mathbb F_2$ 上的可逆 $2\times2$ 矩阵：取
$b_j=s_j(0)$，再把 $s_j\binom10+b_j$、$s_j\binom01+b_j$ 作为两列即可；它们是两个
不同的非零向量，因而线性无关。第四个点的像也自动正确，因为 $\mathbb F_2^2$ 四个点
的和为零，任意置换后的四个点之和仍为零。可令第一条边满足
$L_0=\mathrm{Id}_{\mathrm{lab}},b_0=0$。

若某条交替圈包含的 $P$ 边位置集合为 $J$，并令 $\epsilon_j$ 记录沿圈经过第 $j$ 条
$P$ 边时是否反向，则四项的有向标签和相同意味着其线性部分为零：
\begin{equation}\label{zh:eq:twisted-directed-label}
 \sum_{j\in J}\tau^{\epsilon_j}L_j=0.
\end{equation}

\begin{theorem}[麻花型没有散珍珠函数]\label{zh:thm:twisted-no-pearl}
不存在元数至少为六、满足散珍珠条件的非零函数。
\end{theorem}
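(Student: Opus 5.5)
The plan is to argue by contradiction. Lemma~\ref{zh:lem:twisted-support-spreading} already reduces the arity to six or eight. The discussion after it shows that, for any two distinct pairings $P,Q$ from Lemma~\ref{zh:lem:twisted-one-factorization}, all four $P$-basis support vectors have the same directed label sum on every alternating cycle of $P\cup Q$. Via the affine description $s_j(x)=L_jx+b_j$, this becomes the linear condition~\eqref{zh:eq:twisted-directed-label} on each cycle. I will normalize $L_0=\mathrm{Id}_{\mathrm{lab}}$ and look for two pairings whose cycles pass through the same $P$-edges but traverse exactly one of them in opposite directions. Subtracting the two instances of~\eqref{zh:eq:twisted-directed-label} then gives $(\tau+\mathrm{Id}_{\mathrm{lab}})L_j=0$ for an invertible $L_j$, forcing $\tau=\mathrm{Id}_{\mathrm{lab}}$, which contradicts~\eqref{zh:eq:twisted-label-transpose}.

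For arity six, I will relabel so that $P=(12)(34)(56)$. Because $c\le\lfloor d/2\rfloor=1$, any other pairing forms a single hexagon with $P$; I pick one such pairing as $Q=(23)(45)(61)$. The pairing containing the edge $13$ must avoid every edge of $P$ and $Q$, and a short enumeration should show it is forced to be $(13)(25)(46)$. Directing $P$ as $1\to2$, $3\to4$, $5\to6$, the hexagon of $P\cup Q$ traverses all three $P$-edges forwards, while the hexagon of $P\cup Q^*$ (namely $1\to2\to5\to6\to4\to3\to1$) reverses $34$ and keeps the others forward. Adding the two equations $\mathrm{Id}+L_1+L_2=0$ and $\mathrm{Id}+\tau L_1+L_2=0$ yields the contradiction.

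For arity eight, $d-c\le2$ forces $c=2$ for every other pairing, so each of the six pairings besides $P=(12)(34)(56)(78)$ forms two 4-cycles with $P$. Each such pairing therefore groups the four $P$-edges into two pairs, and there are three possible groupings. Within a grouping, each pair of $P$-edges can be joined in two ways, and edge-disjointness of the one-factorization forces at most two pairings per grouping. Six pairings spread over three groupings then gives exactly two per grouping, and these two use complementary joinings. For the grouping $\{12,34\}$, one pairing uses $(14)(23)$ and the other $(13)(24)$. The resulting 4-cycles traverse $34$ in opposite directions relative to $12$, giving $\mathrm{Id}+L_1=0$ and $\mathrm{Id}+\tau L_1=0$, and the same contradiction follows.

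I expect the main obstacle to be the combinatorial step: confirming that the one-factorization really supplies two pairings with the required opposite orientations, namely the forced $Q^*$ in arity six and the counting argument for complementary joinings in arity eight. The orientation bookkeeping for $\epsilon_j$ along each cycle also needs care. The algebra after that step is immediate.
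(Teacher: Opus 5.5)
Your proposal is correct and follows the paper's proof essentially step for step. At arity six you use the same forced pairing $Q^*=(13)(25)(46)$ and the same two hexagon traversals, and at arity eight the same three-grouping count forcing complementary joinings; each case yields $(\tau+\mathrm{Id}_{\mathrm{lab}})L_1=0$ and hence $\tau=\mathrm{Id}_{\mathrm{lab}}$, and the orientation bookkeeping you flag is harmless because the recorded order within each pair is a free convention and reversing a cycle's traversal only multiplies its label equation by the invertible $\tau$.
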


\begin{proof}
引理~\ref{zh:lem:twisted-support-spreading} 已把元数压到六或八。

先看六元。变量重命名后取
\[
 P=(12)(34)(56),\qquad Q=(23)(45)(61).
\]
引理~\ref{zh:lem:twisted-one-factorization} 给出的 pairing 中，含边 $13$ 的那一份不能使用
$P,Q$ 的边，所以只能是
\[
 Q^*=(13)(25)(46).
\]
固定 $P$ 边方向为 $1\to2,3\to4,5\to6$。在 $P\cup Q$ 的六圈上沿
\[
 1\to2\to3\to4\to5\to6\to1
\]
行进，三条 $P$ 边都顺向，式~\eqref{zh:eq:twisted-directed-label} 给出
\[
 \mathrm{Id}_{\mathrm{lab}}+L_1+L_2=0.
\]
在 $P\cup Q^*$ 的六圈上沿
\[
 1\to2\to5\to6\to4\to3\to1
\]
行进，$12,56$ 顺向而 $34$ 反向，所以
\[
 \mathrm{Id}_{\mathrm{lab}}+\tau L_1+L_2=0.
\]
两式在 $\mathbb F_2$ 上相加，得到 $(\tau+\mathrm{Id}_{\mathrm{lab}})L_1=0$。由于
$L_1$ 可逆，只能有 $\tau=\mathrm{Id}_{\mathrm{lab}}$，与
式~\eqref{zh:eq:twisted-label-transpose} 矛盾。

再看八元。固定
\[
 P=(12)(34)(56)(78).
\]
除 $P$ 外，引理~\ref{zh:lem:twisted-one-factorization} 还给出六份 pairing。每份与 $P$
叠成两条四圈，因而把四条 $P$ 边分成两对。
共有三种分组：
\[
 \{12,34\}\mid\{56,78\},\quad
 \{12,56\}\mid\{34,78\},\quad
 \{12,78\}\mid\{34,56\}.
\]
固定第一种分组。连接 $12,34$ 只有 $(13)(24)$ 与 $(14)(23)$ 两种；连接 $56,78$ 也只有
$(57)(68)$ 与 $(58)(67)$ 两种。两份 pairing 若在任何一处作相同选择就会共边，所以这一类
至多有两份；六份 pairing 分入三类，故每类恰有两份，而且两处选择都互补。

于是第一类中有一份在 $1,2,3,4$ 上用 $(14)(23)$，另一份用 $(13)(24)$。固定
$12$ 方向为 $1\to2$、$34$ 方向为 $3\to4$。前一份给出四圈
$1\to2\to3\to4\to1$，两条 $P$ 边都顺向，故
\[
 \mathrm{Id}_{\mathrm{lab}}+L_1=0.
\]
后一份给出四圈 $1\to2\to4\to3\to1$，$12$ 顺向而 $34$ 反向，故
\[
 \mathrm{Id}_{\mathrm{lab}}+\tau L_1=0.
\]
相加并右乘 $L_1^{-1}$，又得到
$\tau=\mathrm{Id}_{\mathrm{lab}}$，仍与麻花型的转置作用矛盾。
\end{proof}

\begin{corollary}[麻花型上集的函数结构]\label{zh:cor:twisted-internal-exit}
在假设~\ref{zh:ass:klein-upper} 的麻花型情形下，$\mathscr R$ 中每个非零函数都是
$\Ktw$-基函数。
\end{corollary}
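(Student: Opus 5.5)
The plan is a minimal-counterexample argument that simply chains the results already established in this section. Suppose, for contradiction, that $\mathscr R$ contains a nonzero function that is not a $\Ktw$-basis function.

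First I will remove the odd-arity case. Lemma~\ref{zh:lem:klein-upper-basic} shows that $\mathscr R$ has no nonzero odd-arity function. Hence any counterexample has even arity, and so $\mathscr R_{\mathrm{nb}}\ne\varnothing$.

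Next I choose a member $F$ of $\mathscr R_{\mathrm{nb}}$ of minimum arity. I apply Lemma~\ref{zh:lem:twisted-minimal-pearl} to it. This excludes arities zero, two and four, using respectively:
\begin{itemize}
 \item the nullary convention,
 \item the binary completeness item of Assumption~\ref{zh:ass:klein-upper},
 \item the quaternary part of Lemma~\ref{zh:lem:klein-upper-basic}.
\end{itemize}
The lemma then shows that $F$ has arity at least six and satisfies the scattered-pearl condition (Definition~\ref{zh:def:scattered-pearl}).

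Finally, Theorem~\ref{zh:thm:twisted-no-pearl} asserts that no nonzero function of arity at least six satisfies that condition. This is a contradiction, so $\mathscr R_{\mathrm{nb}}=\varnothing$. Combined with the odd-arity exclusion, every nonzero function in $\mathscr R$ is a $\Ktw$-basis function.

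I expect no real obstacle, since all the substantive work sits in the lemmas cited. The only points to check are bookkeeping. The minimality in Lemma~\ref{zh:lem:twisted-minimal-pearl} must be taken over the same fixed closure $\mathscr R$ in which contractions stay. Theorem~\ref{zh:thm:twisted-no-pearl} applies to an arbitrary nonzero function, so it also applies to this $F$. Its hypothesis $2d\ge6$ is exactly what the minimality lemma supplies.
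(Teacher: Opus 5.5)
Your proposal is correct and takes the same route as the paper's proof. Lemma~\ref{zh:lem:klein-upper-basic} excludes nonzero odd-arity functions, and a minimum-arity member of $\mathscr R_{\mathrm{nb}}$ would, by Lemma~\ref{zh:lem:twisted-minimal-pearl}, have arity at least six and satisfy the scattered-pearl condition, which contradicts Theorem~\ref{zh:thm:twisted-no-pearl}.
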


\begin{proof}
若 $\mathscr R_{\mathrm{nb}}$ 非空，引理~\ref{zh:lem:twisted-minimal-pearl} 给出满足散珍珠条件的
最小元数函数，而定理~\ref{zh:thm:twisted-no-pearl} 说明它不存在。因此所有非零偶元函数都是
基函数；引理~\ref{zh:lem:klein-upper-basic} 已排除非零奇元函数。
\end{proof}

\begin{lemma}[基函数的逐圈算法]\label{zh:lem:klein-cycle-algorithm}
设 $\mathcal H$ 是固定有限函数集，每个非零函数都已给定一个非零常数倍的
$\Kstd$-基函数或 $\Ktw$-基函数表示。则 $\Holant(\mathcal H)\in\FP$。
\end{lemma}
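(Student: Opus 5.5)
The plan is to evaluate the function-net directly, cycle by cycle. This is the algorithm already used in the English text, Lemma~\ref{en:lem:klein-cycle-algorithm}, and in Corollary~\ref{en:cor:cyc-equality-dichotomy}. First, any instance containing a zero vertex function has value zero and is answered immediately. Because $\mathcal H$ is fixed, each nonzero $f\in\mathcal H$ is preprocessed once into a fixed representation
\[
 f=c_f\prod_{j}M_j(x_{q_j},x_{p_j}),
\]
with a witnessing pairing, an ordered pair $(p_j,q_j)$ for each factor, and $M_j\in\Kstd\cup\Ktw$. This table has constant size and is independent of the input.

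Given an instance, each vertex labelled $f$ of positive arity is split into $\arity(f)/2$ binary vertices according to its pairing, and the constant $c_f$ is recorded. Nullary vertices contribute their values directly to a global constant. By associativity (the Proposition on function-nets), this transformation preserves the value. In the new net every vertex has degree two, so every connected component is a cycle. This includes cycles of length one coming from a factor whose two ports form a self-loop, and cycles of length two coming from parallel edges.

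On each cycle I fix a traversal direction. Each binary factor is read as $M$ when the traversal agrees with its recorded order, and as $M^{\mathsf T}$ otherwise; ordinary edges act as $I$. Summing out the edge variables one by one along the cycle turns the contribution into the trace of the ordered product of these $2\times2$ matrices, exactly as in Equation~\eqref{en:eq:trace-convention}. The value of the instance is therefore the product of $\prod_v c_{f_v}$ and these traces.

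The number of matrix multiplications is linear in the number of edges, and all entries lie in a fixed number field, so exact arithmetic takes polynomial time. I do not expect a real obstacle. The only points to check carefully are orientation bookkeeping, which is handled by the transpose rule, and the degenerate cycles, namely self-loops and parallel edges, which must be verified to follow the same trace formula.
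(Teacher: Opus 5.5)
Your proposal is correct and follows the paper's proof step for step: it answers nets with a zero vertex as zero, preprocesses a witnessing pairing, the factor matrices and the constant of each function, splits vertices into binary factors so that every component is a cycle (including self-loops and parallel edges), and evaluates each cycle as a trace using the $M$ versus $M^{\mathsf T}$ orientation rule. The only caveat is that the English lemma you cite is this same statement, so it cannot serve as support; your argument does not depend on it.
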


\begin{proof}
若输入函网出现零函数点，其值直接为零。以下设所有函数点都非零。函数集固定，所以每个函数
的见证 pairing、每对内的矩阵及整体常数都可预处理。把输入函网
中的每个正元数函数点按该 pairing 拆成二元因子点，零元函数直接计入整体常数。
每个二元因子点有两个端口，原函网的内部边
仍连接这些端口，因此拆开后的每个连通分量都是一个圈。自环给出长度为一的圈，平行边可给
出长度为二的圈，都不例外。

给每条圈任选方向。经过一个二元因子时，若方向与其两个端口的记录次序一致就读矩阵 $M$，
反向就读 $M^{\mathsf T}$；普通内部边读作 $I$。依次对圈上的布尔变量求和，正是这些
$2\times2$ 矩阵乘积的迹。把所有圈的迹与各函数点的整体常数相乘，就得到原函网的值。
矩阵个数与输入大小成正比，所以算法是多项式时间。
\end{proof}

\subsection{上集结论}\label{zh:subsec:klein-exits}

\begin{theorem}[克莱因群上集的复杂性二分]\label{zh:thm:klein-upper-dichotomy}
设固定有限代数函数集 $\mathcal F$ 的非零二元普通可实现函数对应的射影群为克莱因群，
经定理~\ref{zh:thm:two-klein-normal-forms} 的正交变换后满足假设~\ref{zh:ass:klein-upper}。则：
\begin{enumerate}[label=\textup{(\arabic*)}]
 \item 标准型中，各函数可分别乘非零代数常数成为实值函数；所得实值函数集满足
       定理~\ref{zh:thm:real-holant} 的 condition \textup{(T)} 时，原问题属于 $\FP$，否则为
       $\sharpP$-hard；
 \item 麻花型中，每个非零普通可实现函数都是 $\Ktw$-基函数，原问题属于 $\FP$。
\end{enumerate}
\end{theorem}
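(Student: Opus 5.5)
This theorem is an assembly of results already proved, so the proof is a short case split. It reads off the two normal forms given by Theorem~\ref{zh:thm:two-klein-normal-forms} and dispatches each to the corresponding section. First I will note that the orthogonal transformation $O$ satisfies $O^{\mathsf T}O=I$. By Proposition~\ref{zh:prop:holographic-reduction}, it therefore preserves equality edges and transforms the problem by a value-preserving Turing equivalence. Hence it suffices to classify the transformed function set, which by hypothesis satisfies Assumption~\ref{zh:ass:klein-upper} with $\mathcal K=\Kstd$ or $\mathcal K=\Ktw$. Theorem~\ref{zh:thm:two-klein-normal-forms} guarantees that these two cases are exhaustive, so no third form needs treatment.

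For item (1), I will invoke Theorem~\ref{zh:thm:standard-klein-realification} for each nonzero input function; this is legitimate because input functions lie in $\mathscr R$ and have even arity by Lemma~\ref{zh:lem:klein-upper-basic}. This produces algebraic constants $\lambda_f$ making each function real-valued. Corollary~\ref{zh:cor:standard-klein-real-holant} then packages the per-function scaling, via Lemma~\ref{zh:lem:per-signature-scaling}, into the Turing equivalence $\Holant(\mathcal F)\equiv_{\mathrm T}\Holant(\mathcal F_{\mathbb R})$. The Real Boolean Holant dichotomy, Theorem~\ref{zh:thm:real-holant}, transfers condition \textup{(T)} and its conclusion back to the original problem.

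For item (2), Corollary~\ref{zh:cor:twisted-internal-exit} already states that every nonzero function of $\mathscr R$ is a $\Ktw$-basis function; in particular every input function is one. Each such input has a fixed pairing, fixed matrices and a fixed constant, which can be precomputed since the set is finite. I will then apply the cycle-by-cycle algorithm of Lemma~\ref{zh:lem:klein-cycle-algorithm} to $\mathcal H=\mathcal F$ to get membership in $\FP$. The tractability transfers to the original coordinates through the orthogonal equivalence.

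The only point needing care, and hence the main obstacle such as it is, concerns the passage from the transformed problem back to the original one. There I must check that the equivalences compose correctly and that the transformed set is still algebraic, which Theorem~\ref{zh:thm:two-klein-normal-forms} ensures by choosing $O$ algebraic. All substantive work sits in the cited results.
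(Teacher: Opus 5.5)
Your proposal is correct and follows the paper's proof: Theorem~\ref{zh:thm:two-klein-normal-forms} shows the two normal forms are exhaustive, the standard case follows from Corollary~\ref{zh:cor:standard-klein-real-holant}, and the twisted case follows from Corollary~\ref{zh:cor:twisted-internal-exit} together with Lemma~\ref{zh:lem:klein-cycle-algorithm}. Your added remarks on transferring the conclusions back through the algebraic orthogonal transformation only make explicit what the paper leaves implicit.
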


\begin{proof}
两种规范形由定理~\ref{zh:thm:two-klein-normal-forms} 穷尽。标准型结论由
推论~\ref{zh:cor:standard-klein-real-holant} 得到；麻花型结论由
推论~\ref{zh:cor:twisted-internal-exit} 及引理~\ref{zh:lem:klein-cycle-algorithm} 得到。
\end{proof}

正文至此完成二二得四上集的克莱因情形；出现不可二二分解四元函数的下集暂不处理。

\paragraph*{证明工作说明}

粗略地划分，可认为前三版都由作者创作完成，第四版中麻花型情形的证明由 AI 创作完成，作者已经检查过完整证明。

细致点说，麻花型克莱因的情形由 AI 在作者的对话指引下创作完成。作者观察到，在标准型中，虚实的划分也是克莱因群 $\{I,Y\}$ 与 $\{X,Z\}$ 的划分，猜测其它两组划分——$\{I,X\}$ 与 $\{Y,Z\}$，以及 $\{I,Z\}$ 与 $\{X,Y\}$——应该也有相同的效果，并想知道麻花型的其它两组划分是否也有相同的效果。AI 证明了标准型中的结论，并且直接报告证明了结论更强的双活引理；双活引理在麻花型里也成立。这是作者在对话过程中的主要指引贡献，感觉作用已很轻微；前后提供材料、给出指令的对话也很长，貌似作用就更轻微了。

细致点说，截至目前为止，除了麻花型克莱因情形，只有标准型的四面体降元引理，是在作者忙于批改期末考试卷时，由付费 AI（参见第三版中的致谢）拿到命题之后自行证明完整，后经作者验证与重写。复盘时感觉，给出命题也很重要，而且该证明在作者的证明能力之内。其余部分只有免费 AI 参与检索、调整证明用词、简单矩阵计算、汉译英等较低创作性的工作，且效果一般。例如，AI 说不清用哪一个 Burnside 引理，最终还是靠搜索引擎搞定。

\clearpage
\appendix
\section{通往最后的路程}\label{zh:sec:appendix-road}
最近，含二元不等函数的二分定理\cite{zh:liu2026disequality}于 2026 年 8 月 31 日提交至
arXiv。我9月8日看到它，同时聊天听到的消息，
一个完全版的二分定理也已提交，正在等待网上公开显示，考虑几小时后，我打算快速整理一个冲刺版，
补齐含二元不等函数的二分定理到最终二分定理其余情形，但不包括已有完整独立稿件C2的情形，
需要完善的情形，除了奇二面体群之外，都在前三版中接近证明完毕，开通更强的付费AI帮我完善一些证明以及整理文字。

含二元不等函数的二分定理\cite{zh:liu2026disequality}这篇论文的证明很长，涵盖了许多情形，包括本文讨论的整个克莱因群情形，也包括其余的偶二面体群和三类正多面体旋转群。顺便一说，作者曾经试图统一所有含有克莱因群的大群的证明，想建立克莱因群四个矩阵的公共结论部分，然后再证明群里多出来的元素提供的能力都类似效果，分别达到各自的复杂性二分。这件没做成的事情，被这个二分定理做成了（只需辅佐分析一下这些大群里能有二元不等）。
以这个大定理为基础，原有九类群框架的最后阶段便有了一条较短的路线：作一个简短的群论
分析，找出哪些群能够在复正交全息变换后取得二元不等函数，再对这些情形调用该定理。
因此，从完成复杂性分类的角度看，正文可以保留克莱因上集的直接证明，下集则由附录中的
这条路线处理；本版不再展开下集的直接证明。

作完这一步分析，尚不能取得二元不等函数的情形只落在循环群与奇二面体群中，对应九类
中的 $C_1,C_2,C_k\ (k\ge3)$ 和 $D_{2n}\ (n\ge3\text{ 为奇数})$。
其中，奇数阶循环群全部留下；偶数阶循环群和奇二面体群只留下尚不能取得二元不等函数的
矩阵表示。恰好在今年春节后开学时，我的两名低年级研究生分别选择了 $C_2$ 和奇二面体群
作为探索与研究的方向。$C_2$ 的二分已经由 Yuxuan Zhou 在独立工作中完成
\cite{zh:zhou-draft}，我不是该文作者，本附录只引用其结果。
高阶循环群 $C_k$ 在第一稿 v1\cite{zh:xia2026v1}中已经证明；$C_1$ 的主体证明在
v2\cite{zh:xia2026v2}中给出，但当时还留下
一些有限情形没有验证完，一部分下集情形也未处理完整。本轮需要补充的内容中，大部分
恰好已经有我与学生此前完成的几乎纯人工（包括免费AI提供略强于搜索引擎的能力）工作作为基础，但经历了目前付费AI的完善与重写，
从正文到附录有些过程被完善不那么原汁原味了。

既然一个完全版的二分定理也已提交，正在等待网上公开显示，
借用一个我自己的比喻：同行的脚尖已经碰到了终点线，等论文公开显示，脚后跟也就跨过去了。
我想试一试，能否利用这中间的一小段时间，整理出一个 AI 辅佐的版本，也让自己的脚尖碰一碰
终点线，过一回跟着跨过终点线的瘾，为自己二十多年的追求作一个阶段性的了结。
这一版的冲刺以文献~\cite{zh:liu2026disequality}这个大定理为基石；本附录的任务，是把九类群
框架中可以直接借助它的部分摘出来，再补齐余下的证明。

本附录依次安排三个部分：
\begin{enumerate}[label=\textup{(\arabic*)}]
 \item 一个简短的分析，判定九类群中哪些情形能够取得二元不等函数并调用新定理；
 \item 循环群的证明，重点是在 AI 辅助下完善 $C_1$ 的遗留情形，整理既有的 $C_k$ 证明，
       并引用 $C_2$ 的独立结果；
 \item 奇二面体群的二分定理。本附录列出的这份证明，由 AI 在给定框架与研究问题后
       独立完成。
\end{enumerate}

\paragraph*{为什么还要保留原来的路线}
借助新定理完成这次冲刺以后，我仍希望沿原有框架，把不调用这一新定理的直接证明逐类
完成。这是通往最后的第二条路线，本附录暂不展开。

我猜想，文献~\cite{zh:liu2026disequality}的证明组织可能与此前的 $\#\mathrm{EO}$ 二分有相似之处，
即在研究过程中不断取得并加入新的辅助函数。
研究能够得到哪些辅助函数，以及得不到它们时原函数必须满足哪些性质，分别体现了我所说的
“辅佐性”和“自律性”。不断添加辅助函数，是 $\#\mathrm{EO}$ 框架性方法中我很喜欢的一部分；
其中我最喜欢的战略性方法，是实数化。这个框架的基本想法并不复杂，但把全部情形证明完整，
需要处理很多阶段，我更喜欢按阶段截断，每个阶段是个群的新情况方式，按群分情况，每个情况的条件更清晰。

用一个比喻来说，这样的证明仿佛要讲清楚一支队伍怎样从光杆司令起步，逐步拥有特种部队、
陆战队、海军，最后形成三军。每个阶段都有自己的人员和能力，也有相应的纪律与约束；
在证明中，就要分别说明每一阶段的辅佐性与自律性。我个人更喜欢直接采用九类群的分类，
一开始便依据完整的二元群，定位这一类已经具备的辅助能力，以及不能产生其他二元函数所
带来的约束，再在这一类内部展开分析。

陆战队有自己的人员配置和纪律，海军也有自己的人员配置和纪律。分别研究两者时，可以把
各自的辅佐性和自律性直接摆出来，而不必把“陆战队怎样发展成海军”作为证明主线，再区分
哪些情形能够发展、哪些不能发展：能够发展的情形增加了哪些人员，不能发展的情形又因
原有纪律而得到什么结论。我偏爱九类群框架，正是因为它让我更容易看清每一类的这些特点。

如第一版所讨论的，可用二元辅助函数的数量，以及它们在所讨论的矩阵空间或子空间中的
分布，会影响降元的力度。当辅助方向的数量超过相关空间的维数时，辅佐性通常已经足够强，
可以期待较有力的降元结论；这些群的证明方法也往往具有共同的抽象特点。
把这种共同特点解释给 AI，可以据此推进多面体群和较高阶二面体群的直接证明。
因此，借助文献~\cite{zh:liu2026disequality}冲刺以后，我仍愿意按原框架继续做一遍：
辅佐性较强的群有望采用相近的方法，只是逐项完成证明的工程量较大；辅助方向的数量少于
或恰好等于相关空间维数的群，类型较少，证明往往更有各自的特色，也更有“咬劲”。

下面只展开第一条路线。为便于从降元分析走到复杂性二分，先重列所需的模型与记号。
其中，乘积型类 $\mathcal P$ 用于一阶群的 EO 分类和奇二面体群最后的易算判据；
二元配对积类 $\mathcal P_{\mathcal B}$ 则用于描述降元过程中已经分解成二元因子的函数。

\paragraph*{本附录的问题、记号与归约}
一个 $m$ 元布尔函数是映射 $f:\{0,1\}^m\to\overline{\mathbb Q}$。
函数集固定且有限；函数值以代数数的精确表示给出。
本文把通常所说的张量网络称为函网。
一个函网由允许自环和重边的底图、各顶点上的函数以及各顶点边端的次序组成。
外部半边也有次序。把内部边变量求和而保留外部变量，得到网函
\[
 f_G(\mathbf x)=\sum_{\mathbf z\in\{0,1\}^{E(G)}}
       \prod_{v\in V(G)}f_v(\mathbf z,\mathbf x).
\]
每个内部变量在两个边端出现；自环在同一个顶点出现两次。
没有外部半边时，网函是一个常数，计算它的问题记为 $\Holant(\mathcal F)$。
空乘积为 $1$。零函数和零元函数按已知标量处理。
有限求和的结合律允许先结合任何子函网，因此一个固定构件可以替换其网函。

构件的网函称为普通可实现函数；分解得到的张量因子，按
引理~\ref{zh:lem:framework-available-closure} 一并作为可用函数。
函数切片用于读取指定输入上的函数值，插值则在使用处给出归约。
已知非零标量可以补偿：若某个构件实现 $cf$，在一个实例的 $r$ 处替换它后，
答案乘以 $c^r$，最后除去这个非零因子。

二元函数按其两个变量的次序写成矩阵。串接和交换两个外端口分别给出矩阵乘法和转置。
全文固定
\[
 I=\begin{pmatrix}1&0\\0&1\end{pmatrix},\qquad
 N=\begin{pmatrix}0&1\\1&0\end{pmatrix},\qquad
 Y=\begin{pmatrix}0&1\\-1&0\end{pmatrix}.
\]
$N$ 是二元不等函数；$Y$ 是反对称矩阵。
若使用四元群的传统记号，则另外两轴为
$X=iN$ 和 $Z=i\diag(1,-1)$。

\paragraph*{换基后的点函数与边}
对可逆矩阵 $P$，令 $\widehat f=(P^{-1})^{\otimes m}f$。
把每条旧边的两个相邻矩阵先结合，新边核为 $Q=P^{\mathsf T}P$。
因此整个问题精确变成边核为 $Q$、点函数为 $\widehat{\mathcal F}$ 的函网。
对二元函数 $A$，实际变换是
\begin{equation}\label{zh:eq:binary-congruence}
 \widehat A=P^{-1}AP^{-\mathsf T},\qquad
 \widehat A Q=P^{-1}AP.
\end{equation}
把新二元点与一条新边结合，就得到后一矩阵所表示的端口传递作用。
只有 $P^{\mathsf T}P=I$ 时，旧边核保持不变。
这里及以后 $\mathsf T$ 都是不带共轭的转置。

特别地，取
\begin{equation}\label{zh:eq:K-basis}
 K=\frac1{\sqrt2}\begin{pmatrix}1&1\\ i&-i\end{pmatrix},
 \qquad K^{\mathsf T}K=N.
\end{equation}
新问题有固定的 $N$ 边，二元点函数为 $K^{-1}AK^{-\mathsf T}$。
在这个模型中，以一个二元点 $A$ 短路两个端口时，有效收缩矩阵为 $NAN$，
而在一个端口上接入 $A$ 的传递矩阵为 $AN$。
下文也把这个原生 $N$ 边模型记为 $\Holant(N\mid\Gamma)$，
与正文的 $\Holant_N(\Gamma)$ 同义；把每条 $N$ 边看成一个二元顶点，就得到对应的二部函网。

\paragraph*{两个不同的乘积类}
函数类 $\mathcal P$ 是计数约束满足问题的乘积型类：一个函数可以写成
一元函数、相等约束和不等约束的乘积，变量允许在这些因子中重复。
它包括任意元数的广义相等函数。
另一方面，$\mathcal P_{\mathcal B}$ 是零函数以及在某份完美匹配上
把本群的满秩二元函数作张量积所得的函数，允许非零整体标量。
这些二元方向已按总框架加入当前函数集。
后者的每个变量只出现在一个二元因子中。
当二元矩阵均为满秩对角或反对角矩阵时，非零的 $2m$ 元
$\mathcal P_{\mathcal B}$ 函数恰有 $2^m$ 个支撑输入，且
$\mathcal P_{\mathcal B}\subseteq\mathcal P$。
四元相等函数 $=_4=\ket{0000}+\ket{1111}$ 属于 $\mathcal P$，
但不属于 $\mathcal P_{\mathcal B}$。

后文还使用三个已有易算类。仿射类 $\mathcal A$ 中的函数具有形式
$c\,\mathbf1_{L}(x)i^{Q(x)}$，其中 $L$ 是 $\mathbb F_2$ 上的仿射子空间，
$Q$ 是模四二次多项式，其交叉项系数为偶数。
取 $\alpha=e^{\pi i/4}$，$T=\diag(1,\alpha)$，定义
$\mathcal A_\alpha=\{T^{\otimes m}f:f\in\mathcal A,\ \arity(f)=m\}$。
局部仿射类 $\mathcal L$ 要求：对每个 $\sigma\in\supp(f)$，
函数 $\alpha^{\sum_j\sigma_jx_j}f(x)$ 属于 $\mathcal A$。
这些类均包括零函数。一个 $2d$ 元函数称为 EO 函数，是指其支撑只含重量为 $d$ 的输入。
一个 EO 函数具有 ARS 性质，是指
$f(x)=\overline{f(\bar x)}$，其中 $\bar x$ 是逐比特翻转。
这个等式只描述函数值的实结构；函网求和仍不含共轭。

\subsection{哪些群有二元不等函数}\label{zh:subsec:road-N}
\label{zh:sec:binary-group}
沿基础部分的群框架，$\mathcal B$ 包含全部可用二元函数的行列式一代表，
并对乘法与转置封闭；其有限性已由引理~\ref{zh:lem:framework-available-closure} 证明。
射影商群为
\[
 \mathcal G=\mathcal B/\{I,-I\}.
\]
本节只需分析：哪些群中能找到一个对称、迹为零的二元矩阵。
将它正交对角化，再作一次简单的正交换基，就得到二元不等函数 $N$。

\begin{lemma}\label{zh:lem:N-symmetric}
存在一个复正交全息变换，使当前问题普通实现 $N$ 的非零倍数，
当且仅当 $\mathcal B$ 中存在非零、对称、迹为零的矩阵。
\end{lemma}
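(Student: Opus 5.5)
The plan is to prove the two directions separately. Both rest on one observation: an orthogonal holographic transformation $O$ (with $O^{\mathsf T}O=I$) acts on binary functions by congruence $A\mapsto O^{\mathsf T}AO$. By Proposition~\ref{zh:prop:holographic-reduction} this is simultaneously a similarity, because $O^{\mathsf T}=O^{-1}$. It also leaves the ordinary equality edges unchanged, so ordinarily realizable binary functions correspond bijectively before and after the transformation, and their determinant-normalized representatives form $O^{\mathsf T}\mathcal BO$.

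For necessity, suppose that after the transformation some $cN$ with $c\ne0$ is ordinarily realizable. Its determinant-one representative $\pm iN$ then lies in $O^{\mathsf T}\mathcal BO$, and it is symmetric with trace zero. Pulling back by $O$, we get $A=O(\pm iN)O^{\mathsf T}$. Transpose commutes with orthogonal congruence, so $A$ is still symmetric. Similarity preserves trace, so $\operatorname{tr}A=0$. Hence $\mathcal B$ contains a nonzero symmetric traceless matrix.

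For sufficiency, let $A\in\mathcal B$ be symmetric and traceless. Since $\det A=1$, its eigenvalues are $i,-i$. Take eigenvectors $u,v$; symmetry gives $u^{\mathsf T}v=0$. The key step is to show that neither $u$ nor $v$ is isotropic. If $u^{\mathsf T}u=0$, then $u$ is orthogonal to the whole basis $\{u,v\}$, which contradicts nondegeneracy of the bilinear form. This is the same argument as in Theorem~\ref{zh:thm:two-klein-normal-forms}.

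Normalize $u,v$ to unit length (square roots are algebraic) and take them as the columns of an orthogonal $O_1$. This gives $O_1^{\mathsf T}AO_1=i\operatorname{diag}(1,-1)$. Next apply the real orthogonal $H=2^{-1/2}\bigl(\begin{smallmatrix}1&1\\1&-1\end{smallmatrix}\bigr)$, which conjugates $\operatorname{diag}(1,-1)$ to $N$. Together these give $(O_1H)^{\mathsf T}A(O_1H)=iN$.

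Since $A$ is available up to a known nonzero scalar, the transformed problem ordinarily realizes a nonzero multiple of $N$. By Lemma~\ref{zh:lem:per-signature-scaling} that scalar can be compensated.

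I expect the main obstacle to be the non-isotropy of the eigenvectors. It is what allows the orthogonal diagonalization, and it is exactly where symmetry of $A$ is used. Everything else is routine bookkeeping.
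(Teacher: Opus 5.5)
Your proof is correct and follows the paper's argument. For necessity, orthogonal congruence is a similarity and so preserves symmetry and trace. For sufficiency, the non-isotropic eigenvectors of $A$ give an orthogonal diagonalization to $i\diag(1,-1)$, and applying $2^{-1/2}\bigl(\begin{smallmatrix}1&1\\1&-1\end{smallmatrix}\bigr)$ yields $iN$; your necessity step is only more explicit than the paper's one-line remark, in that it names the determinant-one representative $\pm iN$ and pulls it back.
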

\begin{proof}
正交合同也是相似，所以保持对称性和迹。
反过来，设 $A\in\mathcal B$ 对称且迹为零。
行列式为一给出特征值 $i,-i$。对应特征向量 $u,v$ 满足
$u^{\mathsf T}v=0$。两个特征方向都非迷向：若例如 $u^{\mathsf T}u=0$，
二维非退化双线性形式下 $u^\perp=\mathbb C u$，于是 $v$ 与 $u$ 共线，矛盾。
分别归一化后得到正交特征基，将 $A$ 化成 $i\diag(1,-1)$。
再使用 $2^{-1/2}\left(\begin{smallmatrix}1&1\\1&-1\end{smallmatrix}\right)$，
便得到 $iN$。所有标量均可补偿。
\end{proof}

\begin{lemma}\label{zh:lem:involution-route}
设有限群 $\mathcal G$ 的阶为偶数。若 $Y\notin\mathcal B$，则可以正交取得 $N$。
若 $Y\in\mathcal B$，则可以正交取得 $N$，当且仅当 $[Y]$ 属于
$\mathcal G$ 的某个 Klein 四元子群。
\end{lemma}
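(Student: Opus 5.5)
The plan is to reduce everything to Lemma~\ref{zh:lem:N-symmetric}: we need to find a nonzero, symmetric, traceless matrix in $\mathcal B$, or show that none exists. Such a matrix has determinant one, so its eigenvalues are $i,-i$ and its square is $-I$. Its projective class is therefore an element of order two in $\mathcal G$. Conversely, every nonidentity element of order two lifts to some $A$ with $A^2=-I$, because $A^2=I$ with $A\ne\pm I$ would give eigenvalues $1,-1$ and contradict determinant one. Such an $A$ is traceless. The search thus reduces to elements of order two whose lifts are symmetric.

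For the first assertion, count the nonidentity elements of order two. The elements of order greater than two pair off as $g,g^{-1}$, so there is an even number of them. The group order is even, so the number of nonidentity elements of order two is odd. The transpose map $\vartheta([A])=[A^{\mathsf T}]$ is an involution on $\mathcal G$ and preserves element orders, since $(A^{\mathsf T})^k=(A^k)^{\mathsf T}$. It therefore permutes this odd set and has a fixed point $[A]$. Then $A^{\mathsf T}=\pm A$. If $A^{\mathsf T}=-A$, then $A=cY$ with $c^2=1$, so $A=\pm Y$. Hence when $Y\notin\mathcal B$ the lift is symmetric and traceless, and Lemma~\ref{zh:lem:N-symmetric} applies.

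For the case $Y\in\mathcal B$, I would first check by direct computation that a symmetric traceless $A=\left(\begin{smallmatrix}a&b\\b&-a\end{smallmatrix}\right)$ satisfies $AY=-YA$. Consequently $[A]$ and $[Y]$ are distinct commuting involutions; they differ because $Y$ is not symmetric. Together they generate a Klein four subgroup containing $[Y]$. Conversely, suppose $[A]\ne[I],[Y]$ lies in a Klein subgroup with $[Y]$, with $A^2=-I$. Then $AY=\pm YA$. If $AY=YA$, then $A$ lies in the commutant of $Y$, which is $\{uI+vY\}$. Tracelessness forces $u=0$, and then $[A]=[Y]$, a contradiction. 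So $AY=-YA$, and comparing the four entries gives $A$ symmetric with zero trace. Lemma~\ref{zh:lem:N-symmetric} then finishes the proof.

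The main obstacle I expect is the parity-counting step. One has to verify that $\vartheta$ really maps order-two elements to order-two elements and is an involution even though transpose reverses products. This holds because it is an anti-automorphism, and orders are preserved under taking powers of a single element. The case $Y\in\mathcal B$ is then only routine entry comparison.
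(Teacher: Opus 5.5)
Your proposal is correct and follows essentially the same argument as the paper. Both proofs count the odd number of involutions, take a transpose-fixed class, rule out antisymmetric lifts other than $\pm Y$, and, when $Y\in\mathcal B$, characterize symmetric traceless lifts as those anticommuting with $Y$ via the commutant $\{uI+vY\}$, before invoking Lemma~\ref{zh:lem:N-symmetric}. Your extra checks (that transpose preserves element orders, and that lifts of involutions satisfy $A^2=-I$) only make explicit what the paper leaves implicit.
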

\begin{proof}
把群中既非单位元、阶也非二的元素按 $g,g^{-1}$ 配对。
由于群阶为偶数，非单位二阶元素的个数为奇数。
转置在这个奇数集合上作对合，所以固定某个陪集 $[A]$。
二阶非单位陪集的行列式一提升满足 $A^2=-I$、$\operatorname{tr}A=0$。
又因陪集固定，$A^{\mathsf T}=\pm A$。
反对称且行列式为一的二维矩阵只有 $\pm Y$。
所以 $Y\notin\mathcal B$ 时，必有对称迹零矩阵。

现在设 $Y\in\mathcal B$。任意对称迹零矩阵
$A=\left(\begin{smallmatrix}a&b\\b&-a\end{smallmatrix}\right)$ 都满足 $AY=-YA$。
因此 $[A],[Y]$ 为两个不同的可交换二阶元素，生成 Klein 四元群。
反过来，若 $[A]\ne[I],[Y]$ 是某个包含 $[Y]$ 的 Klein 子群中的元素，
则 $AY=\pm YA$。若二者交换，则 $A=uI+vY$；迹为零迫使 $u=0$，
从而 $[A]=[Y]$，矛盾。于是 $AY=-YA$，直接比较四个矩阵元即得
$A^{\mathsf T}=A$、$\operatorname{tr}A=0$。应用引理~\ref{zh:lem:N-symmetric}。
\end{proof}

在有限三维旋转群的分类中，上述判据有如下直接读法。
\begin{center}
\begin{tabular}{ll}
\toprule
射影群 & 能否正交取得 $N$\\
\midrule
奇数阶循环群（包括平凡群） & 不能\\
偶数阶循环群 & 当且仅当 $Y\notin\mathcal B$\\
奇二面体群 & 当且仅当 $Y\notin\mathcal B$\\
偶二面体群及 Klein 四元群 & 能\\
正四面体、正八面体、正二十面体旋转群 & 能\\
\bottomrule
\end{tabular}
\end{center}
这里的二面体群 $D_{2n}$ 的阶为 $2n$，奇偶指 $n$。
这张表不需要挑选一个任意的 Klein 子群并假设它被转置保持。
奇循环群没有二阶元素。循环群没有 Klein 子群；奇二面体群中两个不同二阶元素
的乘积有非平凡奇数阶，也没有 Klein 子群。
偶二面体群的中央半周旋转与任意另一半周旋转生成 Klein 子群，故每个二阶元素
都在这种子群中。
三种正多面体旋转群的每个半周旋转同样属于一组三条两两垂直轴上的半周旋转。
可以直接选坐标核验：正四面体取顶点 $(1,1,1),(1,-1,-1),(-1,1,-1),(-1,-1,1)$；
正八面体取三个坐标单位向量及其负向量，共六个顶点；正二十面体取
$(0,\pm1,\pm\varphi),(\pm1,\pm\varphi,0),(\pm\varphi,0,\pm1)$，
其中 $\varphi=(1+\sqrt5)/2$。
坐标轴上的三个半周旋转都保持相应顶点集。
正四面体的三个二阶旋转就是它们；正二十面体的半周轴因棱的传递性互相共轭。
正八面体还有经过相对棱中点的半周轴，代表轴 $(1,1,0)$ 与
$(1,-1,0)$、$(0,0,1)$ 组成所需的垂直三轴。
故每个二阶元素都位于一个 Klein 子群，再应用引理~\ref{zh:lem:involution-route}。

\begin{theorem}[含二元不等函数的已有二分\cite{zh:liu2026disequality}]
对任意有限代数复值布尔函数集 $\mathcal H$，
$\Holant(\mathcal H\cup\{N\})$ 有 $\FP/\sharpP$-hard 二分，
且易算判据可由精确代数输入判定。
\end{theorem}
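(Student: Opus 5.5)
This statement is not proved in the present paper; it is imported from \cite{zh:liu2026disequality}. My plan is therefore to check that our use matches the hypotheses of that theorem exactly, rather than to reprove it.

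The first step is to confirm the model. The statement concerns the ordinary equality-edge problem $\Holant(\mathcal H\cup\{N\})$, in which $N$ is an actually available vertex function, not an edge kernel. This is the form supplied by Lemma~\ref{zh:lem:N-symmetric}. There, a complex orthogonal transformation preserves equality edges (Proposition~\ref{zh:prop:holographic-reduction} with $O^{\mathsf T}O=I$) and realizes a known nonzero multiple of $N$ by an ordinary gadget. Lemma~\ref{zh:lem:per-signature-scaling} removes that scalar. Adjoining the gadget preserves Turing equivalence by associativity, so the transformed problem is Turing equivalent to one of the form $\Holant(\mathcal H\cup\{N\})$ with $\mathcal H$ finite and algebraic. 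These are exactly the input hypotheses of the cited theorem.

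The second step is decidability. I would observe that the tractability criterion of \cite{zh:liu2026disequality} is a finite conjunction or disjunction of membership tests. Each test checks whether the whole set, possibly after one of finitely many explicit holographic transformations, lies in a known class such as $\mathcal A$, $\mathcal A_\alpha$, $\mathcal P$, $\mathcal L$, or a product-type class. Every such test reduces to finitely many equalities and inequalities among algebraic numbers. These can be decided from exact representations in the fixed number field generated by the values of $\mathcal H$.

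The main obstacle is bookkeeping, not mathematics. One must ensure that the orthogonal change of basis performed before invoking the theorem is applied uniformly to the whole function set, and that the cited paper's conventions agree with ours: unconjugated transpose, and the normalization of $K$ in \eqref{zh:eq:K-basis}. I would handle this by stating that the criterion is tested on the transformed set, as the paper already does for the odd dihedral case. Since orthogonal transformations are invertible and algebraic, the $\FP$ versus $\sharpP$-hard conclusion transfers back to the original problem.
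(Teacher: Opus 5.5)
Your proposal matches the paper: the statement is imported from \cite{zh:liu2026disequality} without proof, and the only point the paper adds is the one you make first, namely that $N$ there is an actually available vertex function in the equality-edge model (as supplied by Lemma~\ref{zh:lem:N-symmetric}), not an edge kernel left by a general change of basis. The decidability clause is also part of what is cited, so your second step's description of the cited criterion as finitely many class-membership tests under finitely many explicit transformations is an unsupported guess about that paper, and you do not need it; just cite that clause as well.
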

因此一个实际可实现的对称迹零二元矩阵就足以结束相应分支。
该定理中的 $N$ 是可用点函数，不是一般换基后记在边上的核。

\subsection{循环群的完整证明}\label{zh:subsec:road-cyclic}
本节分为 $C_1$ 与 $C_k\ (k\ge3)$ 两个子节，每个子节依次讨论上集与下集；
$C_2$ 的独立结果先在这里说明。
沿前面的分解约定，上集的非零四元函数都是本群二元函数的配对积，下集则有真四元函数。
后面的降元证明引用第~\ref{zh:subsec:road-odd}~部分中配对积的共同结构引理；
这些引理在所引位置给出完整证明。

\paragraph*{二阶循环群的引用}\phantomsection\label{zh:sec:c2-citation}
二阶循环群的完整复杂性二分由周宇轩在独立工作中完成\cite{zh:zhou-draft}，
本文作者不是该工作的作者，此处只引用其结果。
我曾在中国科学院大学雁栖湖校区的讨论中，听取该稿的大部分内容，并跟随作者
周宇轩的讲解，检验了当时听到的部分。
对能取得二元不等函数的表示，第~\ref{zh:subsec:road-N}~部分已经给出出口；
剩余的反对称表示在普通相等边坐标下为
\[
 \mathcal B=\{\pm I,\pm Y\},\qquad
 Y=\begin{pmatrix}0&1\\-1&0\end{pmatrix},\qquad
 \mathcal G\cong C_2.
\]
本附录保留这一独立结果的引用，不复述其证明。

\addtocontents{toc}{\protect\setcounter{tocdepth}{4}}
\subsubsection{\texorpdfstring{$C_1$ 情形}{C1 情形}}\label{zh:subsec:cyclic-c1}

\paragraph{上集}\label{zh:subsec:c1-upper}\label{zh:sec:c1-strict}
在原生 $N$ 边坐标中，$C_1$ 的二元函数只有 $N$ 的非零倍数。
上集四元函数经分解引理取得两个二元因子，因此都是 $\lambda N\otimes N$，允许端口置换。
下面先把问题化为单函数，再整理 EO 部分的共同相位，最后用两份函数的连接迫使整体实数化。

\subparagraph*{四元配对积与证明目标}

取式~\eqref{zh:eq:K-basis} 的坐标，边核固定为 $N$，
当前有限函数集记为 $\Gamma$。假设每个非零函数均为偶元，并且
\begin{enumerate}[label=\textup{(C\arabic*)}]
\item\label{zh:item:c1-binary} 每个非零普通可实现二元函数都是 $N$ 的非零常数倍；
\item\label{zh:item:c1-quaternary} 每个非零普通可实现四元函数，在某份端口 pairing 下都是
      $\lambda N\otimes N$，其中 $\lambda\ne0$。
\end{enumerate}
由分解引理，第二项正是 $C_1$ 上集的四元条件。以下证明在这两个条件下完成二分。

对偶数 $m$，记
\[
 \mathrm{EO}^{m}=\{\alpha\in\{0,1\}^{m}:|\alpha|_1=m/2\},\qquad
 r(\alpha)=|\alpha|_1-|\alpha|_0.
\]
函数的 EO 限制指保留该层的值，并把层外置零。记 $[m]=\{1,\ldots,m\}$。

\begin{theorem}[$C_1$ 上集的二分]\label{zh:thm:c1-strict-dichotomy}
设 $\Gamma$ 是满足 \ref{zh:item:c1-binary}、\ref{zh:item:c1-quaternary} 的有限代数复值偶元函数集。
则 $\Holant(N\mid\Gamma)$ 在 $\FP$ 中，或者是 $\sharpP$-hard。
除直接零值及共同单侧的 EO 出口外，证明将原问题 Turing 等价地化为一个实值
普通 Holant 问题，或者在有限个实际 EO 构件中得到困难性出口。
\end{theorem}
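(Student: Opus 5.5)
The proof assembles the chain of lemmas already developed for the English version of the $C_1$ upper case, transported to the Chinese text. It follows the same order.

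\emph{Step 1: reduce to one active function.} Apply the shared-zero-behaviour lemma (the analogue of Lemma~\ref{en:lem:c1-active-single}). Either every nonempty instance vanishes, which is trivially in $\FP$, or the problem is Turing equivalent to $\Holant(N\mid\{H\})$, where every complete closure of $H$ is nonzero.

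\emph{Step 2: normalise the phase and split by imbalance.} By EO complement symmetry and the common-phase lemma, rescale $H$ so that the EO layer of every tensor power is rational and complement symmetric.

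If all support imbalances of $H$ have one sign, every $N$ edge balances the global term. The problem then equals $\Holant(N\mid H|_{\mathrm{EO}})$, which is settled by the EO dichotomy with its $\FP$ upgrade. Otherwise, absorb the imbalance with $D=\diag(c^{1/2},c^{-1/2})$, which satisfies $DND=N$, so that $H$ becomes globally complement symmetric. Extract the phase character $\chi$ on $g\mathbb Z$, and shape a $g$-ary gadget
\[
 F=Q+a(\ket{0^g}+\ket{1^g}),\qquad g\ge6 .
\]

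\emph{Step 3: the ARS--EO dichotomy on the shortloops.} Apply the ARS--EO dichotomy to the finite set $\{\partial_{ij}F\}$. These functions are real and complement symmetric, so the ARS condition holds.

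In the hard case, expanding the fixed gadgets transfers $\sharpP$-hardness to $\Gamma$. In the tractable case, the cleanup lemma for $\mathcal P\cup\mathcal A$, which uses (C2) to exclude $D_4$, shows that every shortloop is a nonzero multiple of an $N$-matching product.

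\emph{Step 4: force $a$ to be real.} Feed these matching shortloops into the EO structure theorem. It yields either $Q=\lambda P_M$ or, at arity eight, $Q=\lambda F'_8$. Then run the two-copy gadget. In the matching case it produces a quaternary function with three nonzero complementary EO values, contradicting (C2). In the $F'_8$ case it forces $a^2=\lambda^2$ with $\lambda\in\mathbb Q^\times$. So $a$ is real, hence $\chi(g)=\mathbb R^*$, hence $\chi$ is trivial and $H$ is real-valued.

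\emph{Step 5: return to equality edges.} Transform back with $K$. Complement symmetry kills the odd-weight outputs, and on even weights the factor $i^{|y|}$ is real. So $K^{\otimes m}H$ is real, and the Real Boolean Holant dichotomy (Theorem~\ref{en:thm:real-holant}) finishes the proof through the chain of Turing equivalences.

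I expect the main obstacle to be Step 4, specifically the EO structure theorem. The base cases at arities six and eight must be exhaustive, and the arity-eight exception has to be handled by the explicit affine-function computation rather than by the generic common-factor propagation. Everything else is bookkeeping over Turing equivalences and known nonzero scalars.
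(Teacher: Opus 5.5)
Your proposal is correct and follows the paper's proof essentially step for step. The steps are the same: reduce to a single active function via shared zero behaviour; apply the common phase and imbalance absorption; build the phase character $\chi$ and the shaped $g$-ary function $F$; run ARS--EO on all $\partial_{ij}F$ followed by the cleanup lemma; combine the EO structure theorem with the two-copy gadget to force $a$ real, hence $\chi$ trivial; and finish with the $K$ transformation back to a real-valued problem for the Real Boolean Holant dichotomy. The one point worth stating explicitly is that each $\partial_{ij}F$ is nonzero because $F$ is active, and this is what supplies the nonzero coefficients $\theta_{ij}$ that the structure theorem requires.
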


\subparagraph*{固定标签列的零性共享与单函数化}

\begin{lemma}\label{zh:lem:c1-active-single}
固定一列顶点函数及各自端口次序，改变全部端口间的闭合完美匹配。
所有所得闭函网要么值全为零，要么值全非零。
后一情形中，一次只重组两条边，前后值之比属于 $\{1/2,1,2\}$。
因而要么每个非空实例的值为零，要么原问题 Turing 等价于
\[
 \Holant(N\mid\{H\}),
\]
其中 $H$ 是普通可实现的偶元函数，且每个单 $H$ 点的完全闭合均非零。
\end{lemma}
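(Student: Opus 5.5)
The plan is to follow the English version of this lemma (Lemma~\ref{en:lem:c1-active-single}), since the statement is identical. First I fix the vertex list and its port orders. Any two perfect matchings closing all ports are joined by a finite sequence of two-edge reconnections: the symmetric difference of two matchings is a union of alternating cycles, and a single switch shortens a cycle of length at least four. It therefore suffices to compare two closures that differ in exactly two edges.

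For such a pair, I cut those two edges. The rest of the function-net, with the four freed ports external, is an ordinarily realizable quaternary net-function $q$. If $q=0$, both closures vanish. Otherwise condition~\ref{zh:item:c1-quaternary} gives $q=\lambda N\otimes N$ up to a port permutation, with $\lambda\ne0$.

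Closing four ports by two $N$ edges in the three possible pairings is a direct trace computation. The pairing that agrees with the factor pairing gives two closed loops, each of value $\Tr(NN)=2$, so the value is $4\lambda$. Each of the two crossing pairings gives one loop of value $2$, so the value is $2\lambda$. All values are therefore nonzero, and the ratios lie in $\{1/2,1,2\}$. Along the switching path, zero behavior and the stated ratios propagate, which gives the first two claims.

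Next I call $f\in\Gamma$ \emph{active} if one complete self-closure of a single $f$ vertex is nonzero. By the first part this is equivalent to every self-closure being nonzero. If an instance contains an inactive vertex, I rematch its fixed vertex list so that every vertex is closed on itself. The result is a product containing a zero factor, so the original instance has value zero. Hence instances with inactive functions are trivially zero, and those functions can be deleted.

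If no active function remains, every nonempty instance has value zero. Otherwise let $H$ be the tensor product of all active functions; it is ordinarily realizable as a disjoint union. The reduction from $\Holant(N\mid\{H\})$ to the original problem is by expansion. For the converse, I simulate each original vertex $f$ by one copy of $H$ and close every other factor of $H$ by a fixed nonzero self-closure. This produces a known nonzero multiple of $f$, which I compensate using Lemma~\ref{zh:lem:per-signature-scaling}.

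Finally, every complete closure of a single $H$ is nonzero. One such closure, closing each factor separately, is a product of nonzero values, and the first part applies to the fixed factor list of $H$.

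The main obstacle is the propagation step. I must check that the switching path only ever cuts exactly two edges, so that the remainder really is a quaternary gadget to which condition~\ref{zh:item:c1-quaternary} applies, including when self-loops or parallel edges are present. Apart from that, the only real computation is the loop count for $\lambda N\otimes N$.
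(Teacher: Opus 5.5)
Your proposal is correct and follows the paper's proof essentially step by step. You cut the two reconnected edges and apply (C2) to the quaternary remainder, where you usefully make explicit the loop count giving closures $4\lambda$ and $2\lambda$, and you propagate along two-edge switches; you then delete inactive functions, tensor the active ones into $H$, and simulate each vertex by one copy of $H$ with the other factors closed by fixed nonzero closures. The obstacle you flag is not a real one: removing any two edges of a closed function-net always leaves exactly four free ports, so the remainder is an ordinary quaternary gadget even when self-loops or parallel edges are present.
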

\begin{proof}
断开被重组的两条边，余下部分是一个四元网函 $q$。
若 $q=0$，两个闭合值同为零；否则按 \ref{zh:item:c1-quaternary}，
$q=\lambda N\otimes N$，允许端口置换。
任一完美匹配闭合的值都是 $2\lambda$ 或 $4\lambda$。
任意两个完美匹配可经有限次两边重组互相到达，故零性及非零比值沿路径传播。

称函数 $f\in\Gamma$ 活跃，若它的一个单点完全闭合非零；这等价于全部单点完全闭合非零。
若实例含非活跃函数，把其固定顶点列重组为各顶点分别闭合，得到值零，
所以原实例也为零。删去非活跃函数。
若没有剩余函数，非空实例恒为零；空网按约定取值 $1$。
否则取剩余全部函数的张量积为 $H$。
展开 $H$ 给出一个归约方向。反方向，用一份 $H$ 模拟一个原顶点，保留所需因子，
其余因子各用预先选定的非零闭合收起；所得已知非零标量可除去。
$H$ 至少有一个非零闭合，再用固定标签列的零性共享，即知全部闭合非零。
\end{proof}

\begin{corollary}\label{zh:cor:c1-active-powers}
对任意固定 $k\ge1$，每个单 $H^{\otimes k}$ 点完全闭合都非零。
由 $H^{\otimes k}$ 添加若干 $N$ 自环得到的函数 $F$ 也有此性质；
特别地，$F|_{\mathrm{EO}}\ne0$，而且 $F$ 的每个二端 $N$ 短路都非零。
\end{corollary}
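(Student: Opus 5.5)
The corollary follows from the shared zero behaviour of Lemma~\ref{zh:lem:c1-active-single}, applied to the fixed list of $k$ copies of $H$. First, $H^{\otimes k}$ is ordinarily realizable as the disjoint union of $k$ copies of $H$, and a complete closure of a single $H^{\otimes k}$ vertex is the same closed function-net as closing this fixed list of $k$ labelled $H$ vertices along some perfect matching of all $km$ ports. One particular matching closes each copy separately, using a closure of $H$ on each copy. By the lemma every such single-vertex closure is nonzero, so this closed net has value equal to a product of $k$ nonzero numbers and is nonzero. The lemma also says that, for a fixed list of vertex labels, either every matching closure vanishes or none does. Hence every complete closure of the list, and so of the single $H^{\otimes k}$ vertex, is nonzero.

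Next, let $F$ be obtained from $H^{\otimes k}$ by adding $N$ self-loops on some disjoint pairs of ports. In the native $N$-edge model a self-loop is just an internal edge joining two ports. Any complete closure of $F$ therefore closes the remaining ports by $N$ edges, and together with the self-loops this is a perfect-matching closure of all ports of the $k$ copies of $H$. By the first paragraph it is nonzero.

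Finally come the two ``in particular'' claims, each by contraposition. If $F|_{\mathrm{EO}}=0$, close the remaining ports of $F$ along any perfect matching by $N$ edges. Every nonzero term requires each $N$ edge to carry one $0$ and one $1$, so only EO inputs of $F$ contribute. The closure value would then be $0$, contradicting the previous paragraph. Likewise, if some two-port $N$ shortloop $\partial_{ij}F$ were zero, closing all its remaining ports would give a complete closure of $F$ with value $0$, again a contradiction. We still need $F$ to have arity at least two for the EO restriction and shortloop statements to be meaningful; for arity zero the EO statement is the nonzero closure itself.

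There is no real obstacle. The only point needing care is that the zero-sharing statement of Lemma~\ref{zh:lem:c1-active-single} applies to an arbitrary fixed list of vertex functions, here $k$ copies of $H$, and not only to a single vertex; the lemma's proof via two-edge reconnections through the quaternary condition \ref{zh:item:c1-quaternary} does not depend on the list.
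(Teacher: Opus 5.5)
Your proposal is correct and follows the paper's proof essentially step for step. You expand $H^{\otimes k}$ into $k$ copies of $H$, close each copy separately to get a nonzero value, and use the zero-sharing part of Lemma~\ref{zh:lem:c1-active-single} to make every closure nonzero; closures of $F$ are closures of the same list, and the EO and shortloop claims follow by closing the remaining ports. Your extra remarks, that an all-$N$ closure only sees EO inputs and that the lemma applies to an arbitrary fixed list of vertex labels, spell out what the paper leaves implicit.
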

\begin{proof}
将 $H^{\otimes k}$ 展开为 $k$ 个 $H$ 点。各点分别闭合给出一个非零值，
故全部闭合非零。$F$ 的每种完全闭合也是这列 $k$ 个 $H$ 点的一种完全闭合。
若其 EO 限制或某个二端短路为零，再把剩余端口闭合便得到零，矛盾。
\end{proof}

\subparagraph*{EO 翻转对称与共同相位}

\begin{lemma}\label{zh:lem:c1-eo-symmetry}
在 \ref{zh:item:c1-binary} 下，每个普通可实现偶元函数 $f$ 的 EO 限制都满足
$f(\alpha)=f(\bar\alpha)$。
\end{lemma}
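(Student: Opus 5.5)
We prove the statement by induction on the even arity $m$ of $f$. For $m=2$, condition \ref{zh:item:c1-binary} gives $f=\lambda N$; its EO restriction takes the value $\lambda$ on both $01$ and $10$, which are complements of each other. Arity zero is vacuous.

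For the inductive step, let $m\ge4$ and set $R(\alpha)=f|_{\mathrm{EO}}(\alpha)-f|_{\mathrm{EO}}(\bar\alpha)$. This is again an EO function. The first key observation is that an $N$ shortloop $\partial_{ij}$ deletes exactly one $0$ and one $1$. Hence it maps the EO layer of arity $m$ into the EO layer of arity $m-2$ and never moves non-EO inputs onto it. This gives $\partial_{ij}(f|_{\mathrm{EO}})=(\partial_{ij}f)|_{\mathrm{EO}}$. Complementation commutes with $\partial_{ij}$, since swapping the two disequal bits inside the sum is a bijection. Therefore $\partial_{ij}R(\beta)=g(\beta)-g(\bar\beta)$ with $g=(\partial_{ij}f)|_{\mathrm{EO}}$. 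The function $\partial_{ij}f$ is ordinarily realizable and has even arity $m-2$, so by induction $g$ is complement symmetric, and every shortloop $\partial_{ij}R$ vanishes.

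It remains to show that a nonzero EO function $R$ of arity at least four cannot have all its $N$ shortloops equal to zero. Take a support input $\alpha$ of $R$. Since $m\ge4$ and $\alpha$ has weight $m/2\ge2$, we can choose three positions on which $\alpha$ reads $100$, with the other positions fixed to some $\delta$. Let $a,b,c$ be the values of $R$ at the three words $100\delta$, $010\delta$, $001\delta$, with $a\ne0$. The shortloops on the three position pairs, evaluated at the corresponding remaining inputs, equal $a+b$, $a+c$, $b+c$. All three vanishing forces $2a=0$, a contradiction. Hence $R=0$ and $f|_{\mathrm{EO}}$ is complement symmetric.

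The only delicate step is the commutation claim $\partial_{ij}(f|_{\mathrm{EO}})=(\partial_{ij}f)|_{\mathrm{EO}}$. It holds because the shortloop preserves the imbalance $r$, so non-EO inputs of $f$ cannot leak into the EO layer of $\partial_{ij}f$. With that in hand, the local three-sum argument finishes the proof, and it is the same argument used repeatedly later in the paper.
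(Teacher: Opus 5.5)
Your proof is correct and follows the paper's argument essentially step for step. You use induction on arity with the binary base case from \ref{zh:item:c1-binary}. You observe that an $N$ shortloop removes one $0$ and one $1$, so it commutes with taking the EO restriction and with complementation, and hence every shortloop of $R$ vanishes. You then finish with the local $100$ three-sum argument forcing $2a=0$. Your explicit check that non-EO inputs cannot leak into the EO layer of $\partial_{ij}f$ is exactly the justification the paper compresses into a single sentence.
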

\begin{proof}
对元数归纳。二元情形由 \ref{zh:item:c1-binary} 成立。
一次 $N$ 短路删去一个 $0$ 和一个 $1$，所以它与取 EO 限制交换。
由归纳假设，$f|_{\mathrm{EO}}$ 的每个 $N$ 短路都翻转对称。
令 $R(\alpha)=f|_{\mathrm{EO}}(\alpha)-f|_{\mathrm{EO}}(\bar\alpha)$，
则 $R$ 为 EO 函数，且全部 $N$ 短路为零。
若至少四元的 EO 函数 $R$ 非零，可在一个非零输入中选出三个位置形成 $100$，
记该局部串三个置换位置上的函数值为 $a,b,c$，其中 $a\ne0$。
三种短路在相应剩余输入上的值分别为 $a+b,a+c,b+c$，不可能全部为零。
这与 $R$ 的所有短路为零矛盾。因此 $R=0$，归纳完成。
\end{proof}

\begin{lemma}\label{zh:lem:c1-common-phase}
对引理~\ref{zh:lem:c1-active-single} 的 $H$，存在非零代数数 $\theta$，使
$H|_{\mathrm{EO}}/\theta$ 是非零有理值、翻转对称函数。
把 $H/\theta$ 重新记为 $H$ 后，对每个 $k\ge1$，
$H^{\otimes k}|_{\mathrm{EO}}$ 均为有理值且翻转对称。
\end{lemma}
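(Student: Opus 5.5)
The plan follows the English-language version of the same lemma (Lemma~\ref{en:lem:c1-common-phase}), using the perfect-matching closures of a single $H$ vertex as a linear system for the EO layer.

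\textbf{Normalizing constant and closure values.} For each perfect matching $M$ of the ports of $H$, I close $H$ along $M$ with $N$ self-loops and write $z_M$ for the value. I fix one matching $M_0$ and set $\theta=z_{M_0}$. By Lemma~\ref{zh:lem:c1-active-single}, every single-vertex closure of $H$ is nonzero, and two closures differing by one two-edge reconnection have ratio in $\{1/2,1,2\}$. Any two matchings are joined by finitely many such reconnections, so $z_M=2^{t_M}\theta$ with $t_M\in\mathbb Z$, and $z_M/\theta$ is rational for every $M$.

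\textbf{The linear system.} A closure along $M$ only sees inputs in which each matched pair takes opposite values. Such inputs lie in the EO layer, so $z_M$ is a $\{0,1\}$-linear combination of the EO values of $H$. By Lemma~\ref{zh:lem:c1-eo-symmetry}, $H|_{\mathrm{EO}}$ is complement symmetric. I therefore take one unknown for each complementary pair of EO inputs. Each matching gives one equation with rational coefficients and right-hand side $z_M/\theta\in\mathbb Q$.

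\textbf{Main step: injectivity.} The hardest part is showing that no nonzero complement-symmetric EO function $R$ has every perfect-matching closure equal to zero.
\begin{enumerate}
\item Pick a nonzero input of $R$ and three positions on it reading $100$. Let $a,b,c$ be the values of $R$ at the three placements of the $1$, with $a\ne0$.
\item The three $N$-shortloops at these positions give $a+b$, $a+c$, $b+c$ at the corresponding remaining input. These cannot all vanish, so some single shortloop of $R$ is nonzero.
\item That shortloop is again EO and complement symmetric, since shortloops commute with global complementation. Its closures are closures of $R$, so they all vanish.
\item Iterating down to arity two gives a nonzero $\lambda N$. Its closure is $2\lambda\ne0$, which is a contradiction.
\end{enumerate}
So the system determines $H|_{\mathrm{EO}}/\theta$ uniquely, and since the equations and right-hand sides are rational, Gaussian elimination shows the solution is rational. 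It is nonzero because $z_{M_0}\ne0$.

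\textbf{Tensor powers.} Corollary~\ref{zh:cor:c1-active-powers} gives the activity of $H^{\otimes k}$, and Lemma~\ref{zh:lem:c1-eo-symmetry} gives the EO complement symmetry of $H^{\otimes k}$. The same argument then shows that $H^{\otimes k}|_{\mathrm{EO}}$ is a rational function times a single common phase. To identify that phase, I take a nonzero EO input $\alpha$ of the renamed $H$. Its $k$-fold concatenation is an EO input of $H^{\otimes k}$ with value $H(\alpha)^k$, which is nonzero and rational. Hence the common phase is rational, and the whole EO layer of $H^{\otimes k}$ is rational-valued and complement symmetric.
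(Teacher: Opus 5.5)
Your proposal is correct and follows the paper's proof essentially step for step. Like the paper, you normalize by one closure $\theta=z_{M_0}$ and use the $\{1/2,1,2\}$ reconnection ratios to get $z_M/\theta\in 2^{\mathbb Z}$, prove injectivity on the complement-symmetric EO subspace via the $a+b,\,a+c,\,b+c$ shortloop descent to a nonzero $\lambda N$, and fix the phase of $H^{\otimes k}|_{\mathrm{EO}}$ through the $k$-fold concatenation of a nonzero EO input — you additionally make explicit that closures only see EO inputs and that complement symmetry of the tensor powers comes from Lemma~\ref{zh:lem:c1-eo-symmetry}.
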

\begin{proof}
对 $H$ 的端口完美匹配 $M$，沿 $M$ 加 $N$ 自环，记闭合值为 $z_M$。
选定 $M_0$ 并令 $\theta=z_{M_0}\ne0$。
引理~\ref{zh:lem:c1-active-single} 给出 $z_M=2^{t_M}\theta$，$t_M\in\mathbb Z$。

把每对互补 EO 输入的共同值作为未知量。各个完美匹配闭合给出有理系数线性方程，
右端为 $z_M$。该系统在翻转对称 EO 子空间上满秩。
事实上，若非零翻转对称 EO 函数 $R$ 属于齐次核，取支撑输入中的三个位置，
将相应局部串记作 $100$，其三个置换值记作 $a,b,c$，其中 $a\ne0$。
三个短路值 $a+b,a+c,b+c$ 不可能全为零。因此可保持非零地降元，
并始终保持 EO 支撑及翻转对称，最后得到非零二元 $\lambda N$。
闭合它得到 $2\lambda\ne0$，与原函数属于全部完美匹配方程的齐次核矛盾。
将线性系统除以 $\theta$，唯一解遂为有理数。

对 $H^{\otimes k}$ 重用以上论证，活动性由推论~\ref{zh:cor:c1-active-powers} 保证。
固定 $H$ 的一个非零 EO 输入，其 $k$ 次拼接值是非零有理数。
这使新共同相位也是有理数，故整个张量幂的 EO 层有理。
\end{proof}

\subparagraph*{偏差吸收与最小生成偏差}

若 $H$ 的全部支撑偏差非负，或全部非正，则每条 $N$ 边贡献一个 $0$ 和一个 $1$，
任一全局非零项的总偏差为零。因此全部严格偏差项均不贡献，原问题逐实例等于
$\Holant(N\mid H|_{\mathrm{EO}})$，由一般 EO 二分及其 FP 升级
\cite{zh:meng2025eo,zh:guan2026lp} 分类。以下只处理支撑中正、负偏差均存在的情形。

\begin{lemma}\label{zh:lem:c1-absorb-charge}
存在一个代数对角矩阵 $D=\diag(c^{1/2},c^{-1/2})$，$c\ne0$，
使 $\widetilde H=D^{\otimes m}H$ 全局翻转对称。
$DND=N$，故这是保持边核的全息变换；它不改变任何张量积幂的 EO 层。
\end{lemma}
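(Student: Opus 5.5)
The plan is to first show that the support of $H$ is closed under complementation, then show that the ratio $H(\alpha)/H(\bar\alpha)$ depends only on the imbalance $r(\alpha)$ and is exponential in it, and finally absorb that exponential with the diagonal matrix $D$.

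\emph{Support closure.} Suppose $r(\alpha)=a>0$ for a support input $\alpha$. By the standing case assumption there is a support input $\beta$ with $r(\beta)=-b<0$. Concatenate $b$ copies of $\alpha$ and $a$ copies of $\beta$; this is an input of $H^{\otimes(a+b)}$ with total imbalance $ab-ab=0$, so it lies in the EO layer. Lemma~\ref{zh:lem:c1-common-phase} says that the EO layer of every tensor power is complement symmetric, which gives
\[
 H(\alpha)^bH(\beta)^a=H(\bar\alpha)^bH(\bar\beta)^a .
\]
The left side is nonzero, hence $H(\bar\alpha)\ne0$. Negative imbalances are handled symmetrically, and imbalance zero is directly the EO symmetry of $H$ itself.

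\emph{Ratio as a function of imbalance.} Set $u(\alpha)=H(\alpha)/H(\bar\alpha)$. If $r(\alpha)=r(\beta)$, the concatenation $\alpha\bar\beta$ is EO, and EO symmetry gives $u(\alpha)=u(\beta)$; write $u_r$ for this common value. Now take any integer relation $\sum_j n_jr_j=0$ among support imbalances, and replace each negative multiplicity by the same number of copies of the complementary input, which flips the sign of the imbalance. The resulting concatenation is EO, and its complement symmetry yields $\prod_j u_{r_j}^{n_j}=1$ (using $u(\bar\alpha)=u(\alpha)^{-1}$). Hence $r\mapsto u_r$ extends to a well-defined homomorphism from the subgroup $g\mathbb Z$ generated by the support imbalances into $\mathbb C^\times$. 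Choose an algebraic $c$ with $c^g=u_g$; then $u_r=c^r$ for every support imbalance $r$.

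\emph{Absorption.} On an input of imbalance $r$, the transformation $D^{\otimes m}$ multiplies the value by $c^{(|\alpha|_0-|\alpha|_1)/2}=c^{-r/2}$, and on the complement by $c^{r/2}$. The transformed ratio is therefore $u_rc^{-r}=1$, so $\widetilde H$ is globally complement symmetric. The kernel identity $DND=N$ is checked by direct multiplication, so by Proposition~\ref{zh:prop:holographic-reduction} the transformation preserves the $N$ edges. EO inputs have $r=0$, so the multiplier there is $1$ and the EO layers of all tensor powers are unchanged.

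The step I expect to be the main obstacle is well-definedness of the homomorphism: the argument must track signs carefully when negative multiplicities are replaced by complements, and it needs the algebraic choice of the roots $c^{1/2}$ and $c^{1/g}$ to be made consistently.
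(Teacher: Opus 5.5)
Your proposal is correct and follows the paper's proof essentially step for step. The shared steps are: support closure via concatenating $b$ copies of $\alpha$ with $a$ copies of $\beta$; the ratio $u(\alpha)=H(\alpha)/H(\bar\alpha)$ depending only on the imbalance, via the EO input $\alpha\bar\beta$; the homomorphism on $g\mathbb Z$, obtained by replacing negative multiplicities with complementary inputs; and absorption by the factor $c^{-r/2}$ with $c^g=u_g$. Your explicit use of $u(\bar\alpha)=u(\alpha)^{-1}$ makes the sign bookkeeping in the well-definedness step a little more transparent than the paper's one-line version.
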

\begin{proof}
先证支撑翻转封闭。若 $r(\alpha)=a>0$，取支撑输入 $\beta$ 满足
$r(\beta)=-b<0$。$b$ 份 $\alpha$ 与 $a$ 份 $\beta$ 拼成 EO 输入。
张量幂 EO 翻转对称给出
\[
 H(\alpha)^bH(\beta)^a=H(\bar\alpha)^bH(\bar\beta)^a.
\]
左边非零，故翻转值非零。负偏差同理，零偏差直接由 EO 对称性。

令 $u(\alpha)=H(\alpha)/H(\bar\alpha)$。
两个相同偏差的输入 $\alpha,\beta$ 拼接 $\alpha\bar\beta$，说明
$u(\alpha)=u(\beta)$，记为 $u_r$。
任意整数零和关系 $\sum_jn_jr_j=0$，把负次数换成翻转输入，
由 EO 翻转对称得 $\prod_ju_{r_j}^{n_j}=1$。
因此 $r\mapsto u_r$ 在所生成子群 $g\mathbb Z\le\mathbb Z$ 上定义乘法同态。
取一个代数根 $c^g=u_g$，有 $u_r=c^r$。
$D^{\otimes m}$ 在偏差 $r$ 上乘以 $c^{-r/2}$，故变换后的两个翻转值相等。
偏差为零时该乘数为一，边核等式可直接相乘验证。
\end{proof}

以后将 $\widetilde H$ 重新记为 $H$。条件 \ref{zh:item:c1-binary}、\ref{zh:item:c1-quaternary}
仍保持，因为 $DND=N$。$H$ 仍活跃，全局翻转对称，且每个张量幂的 EO 层有理。

\begin{lemma}[相位字符]\label{zh:lem:c1-phase-character}
设 $R=\{r(\alpha):\alpha\in\supp(H)\}$，
$g=\gcd\{|r|:r\in R,r\ne0\}$。
非零函数值在 $\mathbb C^*/\mathbb R^*$ 中的相位类只依赖偏差，
并给出一个同态
\[
 \chi:g\mathbb Z\longrightarrow\{\mathbb R^*,i\mathbb R^*\}.
\]
存在一个固定张量积幂 $H^{\otimes k}$，具有偏差恰为 $g$ 的非零输入。
\end{lemma}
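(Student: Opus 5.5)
The plan has two parts. First, use global complement symmetry to show that phase classes have order at most two and are constant on each imbalance. Second, use a Bézout relation to produce the tensor power.

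\emph{Order at most two.} For any support input $\alpha$, the concatenation $\alpha\bar\alpha$ is an EO input of $H^{\otimes2}$. By Lemma~\ref{zh:lem:c1-common-phase}, its value $H(\alpha)H(\bar\alpha)$ is rational. It is nonzero because the support is complement-closed (Lemma~\ref{zh:lem:c1-absorb-charge}). Global complement symmetry gives $H(\bar\alpha)=H(\alpha)$, so $H(\alpha)^2\in\mathbb R^*$. Hence every phase class lies in $\{\mathbb R^*,i\mathbb R^*\}$.

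\emph{Dependence only on imbalance.} Take two support inputs $\alpha,\beta$ with $r(\alpha)=r(\beta)$. Then $\alpha\bar\beta$ is EO, so $H(\alpha)H(\beta)=H(\alpha)H(\bar\beta)$ is a nonzero rational. Therefore $\alpha$ and $\beta$ have the same phase class. Write $\chi(r)$ for the common class.

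\emph{Homomorphism property.} Consider an integer relation $\sum_jn_jr_j=0$ among support imbalances. Replace each negative multiplicity by the corresponding positive multiplicity of the complementary input, whose imbalance is $-r_j$ and whose phase class is the same. Concatenating gives an EO input of a tensor power, whose value is nonzero and rational. So the product of phase classes is trivial. This is exactly what is needed for $\chi$ to extend consistently to the subgroup $g\mathbb Z$ generated by $R$, setting $\chi(\sum n_jr_j)=\prod\chi(r_j)^{n_j}$. Well-definedness follows because any two representations of the same element differ by a zero-sum relation. Since the target group has exponent two, signs are irrelevant.

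\emph{The tensor power.} By Bézout, write $g=\sum_jn_jr_j$ with $r_j\in R$. Again replace negative $n_j$ by complementary inputs with positive multiplicity. Concatenating $k=\sum|n_j|$ support inputs gives an input of $H^{\otimes k}$ with imbalance exactly $g$. Its value is a product of nonzero values, hence nonzero.

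The main point needing care is where the zero-sum relations are applied: to concatenations over multiple arities, which requires rationality of the EO layer for \emph{every} tensor power. That is why Lemma~\ref{zh:lem:c1-common-phase} is stated for all $k$.
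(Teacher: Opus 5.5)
Your proposal is correct and follows the paper's proof step for step. You use $\alpha\bar\alpha$ to get order at most two and $\alpha\bar\beta$ for dependence only on the imbalance. You replace negative multiplicities in zero-sum relations by complementary inputs to get well-definedness, and you use the same complemented Bézout concatenation for the tensor power. Your remark that this needs the EO layer of every tensor power $H^{\otimes k}$ to be rational, together with complement-closed support, names exactly the ingredients the paper's proof relies on implicitly.
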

\begin{proof}
$\alpha\bar\alpha$ 是 EO 输入，全局翻转对称给出 $H(\alpha)^2\in\mathbb R^*$。
故每个相位类的阶至多为二。若 $r(\alpha)=r(\beta)$，则
$\alpha\bar\beta$ 为 EO 输入，所以 $H(\alpha)H(\beta)\in\mathbb R^*$，
两个相位类相同。
任意整数零和关系，仍将负次数改为翻转输入；所得 EO 拼接值实且非零，
故相应相位积为单位类。这证明同态良好定义。
最后由 B\'ezout 整数关系写出 $g$；支撑翻转封闭，负次数可以用翻转输入的正次数代替。
相应输入拼接非零，得到所需固定张量幂。
\end{proof}

\begin{lemma}[保留最小生成偏差的整形]\label{zh:lem:c1-shape-generator}
由上一个引理的张量幂，可以普通实现一个 $g$ 元函数
\[
 F=Q+a(\ket{0^g}+\ket{1^g}),\qquad a\ne0,
\]
其中 $Q$ 是非零有理翻转对称 EO 函数，$a$ 的相位类为 $\chi(g)$，且 $g\ge6$。
$F$ 的每个单点完全闭合都非零。
\end{lemma}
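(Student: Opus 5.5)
The plan is to start from the fixed tensor power $H^{\otimes k}$ supplied by Lemma~\ref{en:lem:c1-phase-character}. It has a nonzero input $\gamma$ of imbalance exactly $g>0$. We shrink it by successive two-port $N$ shortloops, each of which deletes one $0$ and one $1$. Each step lowers the arity by two, preserves the imbalance of every surviving input, and is an ordinary gadget operation, so the result stays in the closure.

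\textbf{Reducing to arity $g$.} As long as the retained input still contains a $0$, it has at least one more $1$ than $0$, so we can choose three positions reading $011$. Let $a_0,b_0,c_0$ be the current function values on the three placements of the $0$ among these positions, with $a_0\ne0$. The three shortloops on the pairs from these positions give the values $a_0+b_0$, $a_0+c_0$, $b_0+c_0$ on the remaining input. These cannot all vanish, so we pick a shortloop that keeps a nonzero retained value. After finitely many steps the retained input is $1^{g}$, and the remaining arity equals its imbalance $g$.

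\textbf{Invariants.} I would then check that each shortloop preserves four properties.
\begin{enumerate}[label=\textup{(\roman*)}]
 \item Activity: by Corollary~\ref{en:cor:c1-active-powers}, every complete closure of a single vertex stays nonzero.
 \item Global complement symmetry: an $N$ shortloop commutes with flipping all bits.
 \item Rationality of the EO layer: the EO layer of the result is a sum of EO-layer values of the rational-valued $H^{\otimes k}|_{\mathrm{EO}}$ from Lemma~\ref{en:lem:c1-common-phase}.
 \item The support imbalance set: it stays inside $g\mathbb Z$, because shortloops preserve imbalance.
\end{enumerate}

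\textbf{Shape of the result.} On $g$ ports the only nonzero imbalances allowed by (iv) are $0$ and $\pm g$, and $\pm g$ occur only at $1^g$ and $0^g$. So the final function is $F=Q+a\ket{1^g}+a'\ket{0^g}$. Global symmetry gives $a=a'$, and the retained input gives $a\ne0$. The EO part $Q$ is rational and complement symmetric, and it is nonzero because $F|_{\mathrm{EO}}\ne0$ by activity (Corollary~\ref{en:cor:c1-active-powers}). For the phase, $a$ is a sum of values of $H^{\otimes k}$ at inputs of imbalance $g$. By Lemma~\ref{en:lem:c1-phase-character} all such values lie on the single real line $\chi(g)$, so the nonzero sum $a$ has phase class $\chi(g)$.

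\textbf{Excluding small $g$.} Since $g$ is even and positive, $g=2$ would give a binary gadget with a nonzero diagonal entry, contradicting \ref{en:item:c1-binary}. Likewise $g=4$ would give a quaternary gadget nonzero at $0000$ and $1111$, which no $\lambda N\otimes N$ is, contradicting \ref{en:item:c1-quaternary}. Hence $g\ge6$.

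\textbf{Main obstacle.} The step I expect to need the most care is the bookkeeping in the reduction: showing that at each step some nonzero shortloop keeps the retained value nonzero, and that no cancellation changes the phase line of the endpoint. The latter holds because imbalance is preserved, so only same-imbalance terms are ever summed into the endpoint value.
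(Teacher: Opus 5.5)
Your proposal is correct and follows the paper's proof essentially step by step. It uses the same $011$ local-word shortloop reduction down to the retained input $1^g$, the same preservation of imbalance, activity, global complement symmetry and EO-layer rationality, the same $\pm g$ endpoint shape with phase class $\chi(g)$, and the same exclusion of $g=2,4$ via (C1) and (C2); you also make explicit two points the paper leaves implicit, namely $Q\ne0$ (from $F|_{\mathrm{EO}}\ne0$ by activity) and the evenness of $g$ (all arities are even).
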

\begin{proof}
保持一个偏差为 $g$ 的非零输入。若它还同时含 $0,1$，因 $g>0$，可选局部串 $011$。
记其三个置换值为 $a_0,b_0,c_0$，其中 $a_0\ne0$。
三个短路值 $a_0+b_0,a_0+c_0,b_0+c_0$ 不会全为零，故选择一个仍非零的短路。
它删去一个 $0$ 和一个 $1$，保持偏差并降低元数，有限步后输入成为全一串，剩余元数为 $g$。
收缩保持偏差，所以最终非零偏差仍是 $g$ 的整数倍；在 $g$ 个端口上只可能是 $\pm g$。
全局翻转对称给出两个相等且非零的端点值。
EO 层有理性及活动性由前述引理保持。
每个偏差 $g$ 的加项都处于同一实直线 $\chi(g)$，故非零和 $a$ 仍有该相位类。
若 $g=2$，出现非零二元对角项；若 $g=4$，出现四元非零端点。
分别违反 \ref{zh:item:c1-binary}、\ref{zh:item:c1-quaternary}，所以 $g\ge6$。
\end{proof}

\subparagraph*{把 EO 短路化成匹配积}

记 $\partial_{ij}f=\sum_{a\ne b}f(x_i=a,x_j=b,\ldots)$，剩余端口保持原次序。
对整形后的 $F$，每个 $J_{ij}=\partial_{ij}F=\partial_{ij}Q$ 都是非零、有理、
翻转对称 EO 函数，且单点完全闭合均非零。
ARS--EO 定理\cite[Theorem 3.1]{zh:cai2020arseo} 对有限集合
$\mathcal E=\{J_{ij}:i<j\}$ 给出如下结论：若
$\mathcal E\subseteq\mathcal A$ 或 $\mathcal E\subseteq\mathcal P$，则易算；否则困难。
这里 $\mathcal A$ 为仿射支撑上的四次单位根二次相位类：
函数形如 $\lambda\mathbf1_{Ax=b}i^{L(x)+2Q_2(x)}$，其中 $L$ 为整数线性式，
$Q_2$ 为整数二次式。$\mathcal P$ 是前文定义的乘积型类。
本处实值且翻转对称，因而满足该定理的 ARS 条件。
困难分支经固定构件展开返回原问题；以下在集合统一落入其中一个易类的分支工作。

\begin{lemma}[活动 EO 易类的清理]\label{zh:lem:c1-cleanup}
设 $J$ 是普通可实现的非零有理翻转对称 EO 函数，所有单点完全闭合均非零。
若 $J\in\mathcal P\cup\mathcal A$，则在 \ref{zh:item:c1-quaternary} 下，
$J$ 是非零常数乘一个 $N$ 的完美匹配乘积。
\end{lemma}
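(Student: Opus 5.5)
The statement is the Chinese copy of Lemma~\ref{en:lem:c1-cleanup}. I will prove it by splitting on the two tractable classes and, in each branch, producing from $J$ either a matching product or an ordinary quaternary gadget that contradicts \ref{zh:item:c1-quaternary}. The contradiction in both branches is the same: a nonzero multiple of a port permutation of $D_4=\ket{0011}+\ket{1100}$. Its three complementary EO pair values have two zeros, while $\lambda N\otimes N$ has exactly one zero. Activity (every single-vertex complete closure is nonzero) is used to rule out sign twists. Complement symmetry and rationality are used to pin down the weights.

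\textbf{Case $J\in\mathcal P$.} I decompose the support into connected components of the equality/disequality constraint graph.
\begin{enumerate}
\item Complement-closed support (from complement symmetry) excludes components with pinned variables. So each block carries exactly two complementary assignments, with weights $a,b\ne0$.
\item Blocks vary independently and the total support is EO, so each block is itself balanced.
\item Applying global complement symmetry with the other blocks fixed gives $b/a=\pm1$.
\item If some block has ratio $-1$, I close it by pairing opposite-valued ports of one support assignment, and close every other block internally. This gives total value $0$, contradicting activity. Hence every block has equal weights.
\item Closing all other blocks to nonzero values extracts each block as an ordinary gadget.
\item If a block has arity at least $6$, I repeatedly short one port of its $0$-class against one port of its $1$-class with $N$. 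Each such short keeps the block form, and the process ends with a multiple of a permuted $D_4$. This is a contradiction, so each block is a multiple of $N$, and $J$ is a matching product.
\end{enumerate}

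\textbf{Case $J\in\mathcal A$.} I parametrize the affine support as $x_i=\ell_i(z)+a_i$.
\begin{enumerate}
\item The EO condition gives $\sum_i(-1)^{\ell_i(z)+a_i}=0$ for all $z$. Linear independence of Walsh characters then shows that, for each linear form $\ell$, the coordinates come in opposite pairs. This yields $J=\prod_t[x_t\ne y_t]\,A(x)$ with $A$ affine.
\item Rationality forces all nonzero values to be $\pm\lambda$.
\item Joining two copies of $J$ along the $y$-ports by $N$ edges gives $\lambda^2\prod_t[x_t\ne y_t]\chi(x)$, which removes the phase.
\item Suppose $\chi$ is a proper affine subspace. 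I pick one defining parity row; it has even length by complement symmetry. I eliminate all other variables with $N$ self-loops on their $x_t,y_t$ pairs. I then shrink the row two variables at a time by cross-shorting $x_p,y_q$ and $x_q,y_p$, which forces $x_p=x_q$. The end result is a permuted $D_4$, which is a contradiction.
\item So $\chi$ is the full cube and $A=\lambda(-1)^{q}$ with $q$ of degree at most $2$.
\item If $q$ has a mixed quadratic term, shorting one pair $x_j,y_j$ in that term produces a nonzero function with proper affine support. The previous step then gives a contradiction again.
\item If $q$ has a nonzero linear term, closing $J$ along its own pairs produces a factor $1-1=0$, contradicting activity.
\item Hence $q$ is constant and $J$ is a matching product.
\end{enumerate}

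\textbf{Main obstacle.} The hardest step is the affine elimination sequence. I must check at each step that the gadget stays nonzero and stays inside the same structured family. The cross-shorts must reduce a parity row by exactly two variables each time, and I must track the nonzero power-of-two scalars so that the final object is genuinely a nonzero multiple of a permuted $D_4$.
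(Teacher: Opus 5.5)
Your proposal is correct and follows the paper's proof essentially step for step. It uses the same block decomposition with the $\pm1$ weight-ratio and activity arguments in the $\mathcal P$ case, and the same Walsh pairing, two-copy squaring, parity-row elimination by cross-shorts, and exclusion of quadratic and linear terms in the $\mathcal A$ case. Every branch ends, as in the paper, with a permuted $D_4$.

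The one slip is the threshold in step 6 of your $\mathcal P$ case. It should read ``arity at least $4$'', as in the paper. A balanced arity-$4$ block with equal weights is already a nonzero multiple of a permuted $D_4$, and you need this case to conclude that every block is a multiple of $N$.
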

\begin{proof}
先设 $J\in\mathcal P$。按等式／不等式约束的连通块分解。
支撑翻转封闭排除固定变量块；每块支持两个互补赋值，权重分别为 $a,b\ne0$。
各块赋值独立而总支撑在 EO，因此每块自身平衡。
比较整体翻转并固定其余块，得到每块 $b/a=\pm1$。
若某块权比为 $-1$，把该块按一个支撑赋值中的相反位配对闭合，再将其余块内部闭合，
全值为零，矛盾。因此每块两个权重相等。
把其他块非零闭合，可以普通提取任一块。
若某块至少四元，用 $N$ 反复短路一位 $0$ 类端口和一位 $1$ 类端口，
得到 $D_4=\ket{0011}+\ket{1100}$ 的端口置换及非零倍数，违反
\ref{zh:item:c1-quaternary}。所以每块均为 $N$ 的倍数。

再设 $J\in\mathcal A$。先把仿射支撑参数化为
$x_i=\ell_i(z)+a_i$，其中 $\ell_i$ 是 $\mathbb F_2$ 上的线性式。
EO 条件等价于 $\sum_i(-1)^{\ell_i(z)+a_i}=0$ 对每个 $z$ 成立。
不同线性式的 Walsh 特征函数线性无关，所以对每个固定的 $\ell$，
出现为 $\ell(z)$ 和 $\ell(z)+1$ 的坐标个数相等。
将这些坐标逐对匹配，得到
\[
 J=\prod_{t=1}^d[x_t\ne y_t]A(x_1,\ldots,x_d),\qquad A\in\mathcal A.
\]
取 $A$ 的一个非零值 $\lambda$；有理性使全部非零值为 $\pm\lambda$。
两份 $J$ 逐坐标连接第一份 $y_t$ 与第二份 $x_t$，各使用一个 $N$ 边，
留下第一份 $x_t$ 与第二份 $y_t$。支持赋值唯一延拓，所得函数为
$\lambda^2\prod_t[x_t\ne y_t]\chi(x)$，其中 $\chi$ 是原仿射支撑的指示函数。

若 $\chi$ 为真仿射子空间，选原坐标中的自由变量，并选一条非自由变量关系
$x_j+\sum_{t\in T}x_t=c$。支撑翻转封闭，所以该行的变量总数为正偶数。
给其他变量的 $x_t,y_t$ 加 $N$ 自环，消去其余依赖变量和未出现的自由变量；
留下这一个方程，仅附带非零的 $2$ 的整数次幂标量。
若方程至少四变量，选其中两个自由变量 $p,q$，交叉短路
$x_p,y_q$ 与 $x_q,y_p$。这强制 $x_p=x_q$，消去它们并乘 $2$，
方程减少两个变量。继续到两个变量时，连同相反配对，得到 $D_4$ 的置换，矛盾。

所以 $\chi$ 为整个立方体，$A(x)=\lambda(-1)^{q(x)}$，$q$ 是次数至多二的
$\mathbb F_2$ 多项式。若 $q$ 有非零混合二次项，短路参与该项的 $x_j,y_j$，得到
\[
 \lambda(-1)^{q(0,z)}
 \bigl(1+(-1)^{q(1,z)-q(0,z)}\bigr).
\]
指数差为非恒定仿射线性式，所以所得函数非零且具有真仿射支撑；
重复刚才的平方及消元构件，又得到 $D_4$，矛盾。
于是 $q$ 只有线性项。若有非零线性系数，按全部原配对闭合 $J$ 会出现因子
$1-1=0$，与活动性矛盾。故 $q$ 常值，结论成立。
\end{proof}

\subparagraph*{全部短路为匹配乘积时的纯结构定理}

下面根据各个 $N$ 短路的匹配结构，拼出 EO 函数的整体结构。
把 $\partial_{ij}$ 也写作 $\partial^{ij}$。

本段始终只用二元不等函数短路。对不同指标 $i,j$，将统一短路符号简写为
\[
 \partial^{ij}Q:=\sum_{a\ne b}Q(x_i=a,x_j=b,\ldots).       
\]
这里只省略固定的边函数下标，剩余端口保持原次序。
若 $e=\{i,j\}$，也写 $\partial^e$。若 $M$ 是一组偶数个端口上的完美匹配，记
\[
 P_M=\prod_{\{u,v\}\in M}[x_u\ne x_v].                  
\]

先定义八元例外。令 $F'_8$ 在下列支撑上取值 $1$，在其余输入上取值 $0$：
\begin{equation}\label{zh:eq:c1-f8}
\begin{split}
\supp(F'_8)=\{x\in\{0,1\}^8:\ &\textstyle\sum_{i=1}^8x_i=4,\\
&x_1+x_2+x_3+x_4\equiv x_5+x_6+x_7+x_8\equiv0\pmod2,\\
&x_1+x_2+x_5+x_6\equiv x_1+x_3+x_5+x_7\equiv0\pmod2\}.
\end{split}
\end{equation}
它有十四个支撑串，是一个 0--1 翻转对称 EO 函数。下面所说的八元例外允许变量置换和
整体乘以非零常数。

\begin{theorem}[平凡群 EO 降元定理]\label{zh:thm:c1-eo-matching-reduction}
设 $Q$ 是至少六元的有理值、0--1 翻转对称 EO 函数。假设对每对不同指标 $i,j$，存在
非零常数 $\theta_{ij}$ 和剩余端口上的完美匹配 $M_{ij}$，使
\begin{equation}\label{zh:eq:c1-all-contractions}
 \partial^{ij}Q=\theta_{ij}P_{M_{ij}}.
\end{equation}
则以下两项恰有一项成立：
\begin{enumerate}[label=\textup{(\arabic*)}]
  \item $Q$ 是非零常数乘一个完美匹配的二元不等函数乘积；
  \item $Q$ 是八元函数，并且在变量置换后 $Q=\lambda F'_8$，其中 $\lambda\ne0$。
\end{enumerate}
\end{theorem}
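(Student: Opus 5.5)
The plan is to split by arity. Large arities, $2d\ge10$, get a uniform propagation argument. Arities six and eight are settled by explicit finite computation. The computations rest on three tools. The first is the local shortloop rule for matching products: closing along a matching edge gives a factor $2$, while shorting across two edges merges them with factor $1$. The second is that the support of $P_M$ determines $M$. The third is that $\partial^{ij}$ commutes with $\partial^{kl}$ for disjoint pairs. We also need a uniqueness principle. If an arity-$m\ge3$ function has all disequality shortloops zero, it is supported only at $0^m,1^m$; by the three-equation trick on a local word $001$, the EO condition then forces it to vanish. Hence an EO function is determined by its shortloops.

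For $2d\ge10$, compare $\partial^{34}\partial^{12}Q$ with $\partial^{12}\partial^{34}Q$, where $\{3,4\}\in M_{12}$. One side keeps $d-2$ edges of $M_{12}$; the other alters at most two edges of $M_{34}$. Hence $M_{12}$ and $M_{34}$ share an edge $\{u,v\}$. Next I prove a propagation lemma: if $[x_u\ne x_v]$ divides $\partial^pQ$ and $\partial^qQ$ for two distinct pairs $p,q$, then it divides $\partial^eQ$ for every $e$ disjoint from $\{u,v\}\cup p\cup q$. The reason is that a shortloop at $p$ can create the edge $uv$ only if $p=\{u',v'\}$, where $u',v'$ are the partners of $u,v$, and the same would hold for $q$. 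Using a triangle of auxiliary pairs $r,s,t$, available because the arity is at least ten, I extend this to every pair avoiding $\{u,v\}$. Then I slice $Q$ on $(x_u,x_v)$. By the uniqueness principle, $Q^{00}$, $Q^{11}$ and $Q^{01}-Q^{10}$ are supported only at the endpoints, and the EO condition kills them since $d\ge5$. So $Q=[x_u\ne x_v]Q'$, and $\partial^{uv}Q=2Q'$ identifies $Q'$ as a matching product.

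For arity six, I normalize $\partial^{12}Q=P_{\{34,56\}}$ and parametrize the ten complementary-pair values by seven unknowns. For each of a few shortloops ($13,14,25,23,16$) I impose that its three complementary-pair values have the form $(0,t,t)$ up to permutation, with $t\ne0$. The resulting branches are linear systems; I expect them to collapse to finitely many points, each a matching product. For arity eight, I normalize by a maximal $|\theta_e|$. Commuting shortloops shows all $\theta$'s are positive real multiples of one another with ratios in $\{1/2,1,2\}$, and maximality forces four blocks $M_0=\{12,34,56,78\}$. I then classify each cross-block pair's matching as internal or crossing, show that all cross-block pairs share one type, and encode the crossing type by orientation bits $s_1,s_2,s_3$, which commutation forces to be equal. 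The internal type gives $\tfrac12P_{M_0}$. The crossing type gives $F'_8$ or $(78)F'_8$. For existence I verify $F'_8$ directly through its description as the truth tables of nonconstant affine functions on $\mathbb F_2^3$.

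I expect the main obstacle to be the arity-eight case analysis: ruling out mixed types and pinning down the direction bits consistently. The two alternatives in the statement are mutually exclusive because the support of $F'_8$ has size $14$, not a power of two.
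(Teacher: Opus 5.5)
Your proposal follows the paper's proof essentially step for step: the same split into $2d\ge10$ versus the finite cases, the same common-edge, two-shortloop propagation, triangle, and $(x_u,x_v)$-slicing argument for large arity, the same five shortloop rows $13,14,25,23,16$ at arity six, and the same maximal-coefficient four-block analysis with internal/crossing types and equal direction bits at arity eight, ending in $\tfrac12P_{M_0}$, $F'_8$, $(78)F'_8$. One small overstatement, harmless to the argument: commutation gives $\theta_e/\theta_f\in\{1/2,1,2\}$ only for disjoint pairs (intersecting pairs go through a third pair, so a priori the ratio lies in $\{1/4,\ldots,4\}$), but the maximal-coefficient normalization only needs all ratios to be positive real.
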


证明分成“先处理统一的大元数，再直接算完两个有限底例”两段。下一条计算是两段共同使用
的局部规则。

\begin{lemma}[完美匹配乘积的一次收缩]\label{zh:lem:c1-matching-merge}
设 $M$ 是完美匹配。若 $\{a,b\}\in M$，则
\[
 \partial^{ab}P_M=2P_{M\setminus\{\{a,b\}\}}.
\]
若 $a,b$ 在 $M$ 中分别与 $a',b'$ 配对且 $a'\ne b$，则
\[
 \partial^{ab}P_M
 =P_{(M\setminus\{\{a,a'\},\{b,b'\}\})\cup\{\{a',b'\}\}}.
\]
\end{lemma}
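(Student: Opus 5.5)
The plan is to compute $\partial^{ab}P_M$ directly from the definition $\partial^{ab}f=\sum_{x_a\ne x_b}f(\ldots)$. The computation is local: every factor of $P_M$ that does not involve $a$ or $b$ passes through the sum unchanged. So I only need to track the factors touching $a,b$ and check that they collapse as claimed.

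First take $\{a,b\}\in M$. The only factor containing $x_a$ or $x_b$ is $[x_a\ne x_b]$. On the two summands $(x_a,x_b)=(0,1),(1,0)$ this factor equals $1$, while the rest of the product is independent of these variables. Hence the sum is $2P_{M\setminus\{\{a,b\}\}}$.

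Next take $a,b$ in different edges $\{a,a'\},\{b,b'\}$ with $a'\ne b$, so that $a',b'$ are distinct from $a,b$ and from each other. The relevant local sum is
\[
 \sum_{x_a\ne x_b}[x_a\ne x_{a'}][x_b\ne x_{b'}].
\]
The constraints force $x_a=\overline{x_{a'}}$ and $x_b=\overline{x_{b'}}$. Under these values the condition $x_a\ne x_b$ becomes $x_{a'}\ne x_{b'}$. For each assignment of $(x_{a'},x_{b'})$ there is therefore at most one contributing summand, and it contributes exactly when $x_{a'}\ne x_{b'}$. The local sum thus equals $[x_{a'}\ne x_{b'}]$. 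Multiplying by the untouched factors gives the matching obtained by deleting the two old edges and adding $\{a',b'\}$.

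There is no real obstacle here. The only care needed is the bookkeeping: the second case must use the distinctness of $a',b'$, which is guaranteed by $a'\ne b$. For later use, I would also record that the support of $P_M$ determines $M$, so that equal nonzero multiples of matching products have equal matchings and equal coefficients. Endpoints of an edge in $M$ always take opposite values, whereas two variables on different edges can be assigned independently. Hence $M$ can be read off as the set of pairs that are complementary on the whole support.
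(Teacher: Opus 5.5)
Your proposal is correct and follows the paper's proof essentially verbatim. Both use the same local computation: in the first case the two summands $(0,1),(1,0)$ each contribute once, and in the second case the three disequalities collapse to $[x_{a'}\ne x_{b'}]$ with a unique $(x_a,x_b)$ for each allowed $(x_{a'},x_{b'})$; both also add the remark that the support of $P_M$ determines $M$, so nonzero matching products can be compared by matching and coefficient.
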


\begin{proof}
第一种情况下，$(x_a,x_b)=(0,1),(1,0)$ 各贡献一次。第二种情况下，
$x_a\ne x_{a'}$、$x_a\ne x_b$、$x_b\ne x_{b'}$ 合起来恰好等价于
$x_{a'}\ne x_{b'}$，而每个满足它的 $(x_{a'},x_{b'})$ 只有一个 $(x_a,x_b)$ 与之对应。
其余完美匹配边不变。
另外，$P_M$ 的支撑唯一确定 $M$：同一匹配边的两个变量恒取相反值，
不同边上的变量可以独立赋值。因此非零匹配乘积相等时，可以同时比较匹配和系数。
\end{proof}

\begin{lemma}[所有二元不等短路为零]\label{zh:lem:c1-zero-merges}
若一个 $m\ge3$ 元函数 $R$ 满足 $\partial^{ij}R\equiv0$ 对每对 $i\ne j$ 都成立，则 $R$ 只可能
在 $0^m,1^m$ 上非零。
\end{lemma}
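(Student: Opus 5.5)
The plan is to argue pointwise on inputs. I will show that every input containing both a $0$ and a $1$ has $R$-value zero, by looking at three positions where the input is not constant.

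Take such an input $x$ of length $m\ge3$. Since both bits occur and $m\ge3$, some three positions $i,j,k$ carry a local word of the form $001$ or $110$ after reordering. For one of two equal bits, take a $0$ in a different position from another $0$; for the remaining position, take a $1$. If only one $0$ occurs, use the complementary choice: two $1$'s and one $0$. Let $\delta$ be the assignment on the other $m-3$ positions. This is the same three-position trick already used in the proofs of Lemma~\ref{zh:lem:c1-eo-symmetry} and the structural lemmas above.

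Consider the word $001$, and write its three rearrangements as $001\delta$, $010\delta$, $100\delta$. The disequality shortloop $\partial^{ab}$ on a pair $a,b$ among the three positions sums over the two unequal assignments to that pair. The third position keeps its bit, and the remaining input on the surviving $m-2$ ports is fixed. Evaluate $\partial^{12}R$ at the remaining input where position $3$ equals $0$ and the rest equals $\delta$. This gives $R(100\delta)+R(010\delta)=0$. In the same way, $\partial^{13}R$ with position $2$ set to $0$ gives $R(100\delta)+R(001\delta)=0$, and $\partial^{23}R$ with position $1$ set to $0$ gives $R(010\delta)+R(001\delta)=0$.

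These three equations, $u+v=u+w=v+w=0$, force $u=v=w=0$: add the first two and subtract the third, giving $2u=0$. In particular $R(x)=0$. The case $110$ is symmetric under swapping the roles of the bits.

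There is no real obstacle here. The only points to check carefully are the bookkeeping of the port order after a shortloop, and the fact that $m\ge3$ guarantees a third position exists. Inputs with both bits present are therefore zero, so $R$ is supported only on $0^m$ and $1^m$.
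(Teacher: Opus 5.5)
Your proposal is correct and follows essentially the same route as the paper: pick three positions carrying the local word $001$ or $110$, write the three disequality-shortloop equations on the three rearrangements, and add two of them and subtract the third to force every value to vanish. Your case split (at least two $0$'s versus exactly one $0$) spells out slightly more explicitly than the paper why such three positions exist once $m\ge3$ and both bits occur.
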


\begin{proof}
任取同时含有 $0,1$ 的输入，其中三个位置可重排为 $001$ 或 $110$。以 $001\delta$ 为例，
三个收缩方程给出
\[
 R(001\delta)+R(010\delta)=0,\quad
 R(001\delta)+R(100\delta)=0,\quad
 R(010\delta)+R(100\delta)=0.
\]
前两式相加再减去第三式，得到 $2R(001\delta)=0$。$110\delta$ 同理。
\end{proof}

\begin{lemma}[共同边从两次收缩传播]\label{zh:lem:c1-two-merge-propagation}
设 $Q$ 满足式~\eqref{zh:eq:c1-all-contractions}。设不同的指标对 $p,q$ 都与 $\{u,v\}$ 不交，
并且 $[x_u\ne x_v]$ 同时是 $\partial^pQ,\partial^qQ$ 的因子。则对任意与
$\{u,v\}\cup p\cup q$ 不交的指标对 $e$，$[x_u\ne x_v]$ 也是 $\partial^eQ$ 的因子。
\end{lemma}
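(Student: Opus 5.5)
We argue by contradiction and use commutation of disjoint shortloops. Suppose $[x_u\ne x_v]$ does not divide $\partial^eQ$. By \eqref{zh:eq:c1-all-contractions} we have $\partial^eQ=\theta_eP_{M_e}$, so $\{u,v\}\notin M_e$. Since $e$ is disjoint from $\{u,v\}$, both $u$ and $v$ are among the remaining ports of $\partial^eQ$. Let $u',v'$ be their partners in $M_e$; the assumption $\{u,v\}\notin M_e$ gives $u'\ne v$.

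The pairs $p$ and $e$ are disjoint, so finite summations commute and
\[
 \partial^p\partial^eQ=\partial^e\partial^pQ .
\]
On the right side, $\partial^pQ=\theta_pP_{M_p}$ contains the edge $\{u,v\}$, and $e$ avoids $u,v$. Lemma~\ref{zh:lem:c1-matching-merge} shows that a shortloop either deletes a matching edge or merges the two edges touching its endpoints; edges not meeting $e$ are unchanged. Hence the right side is a nonzero multiple of a matching product containing $\{u,v\}$. The left side is $\theta_e\,\partial^pP_{M_e}$, which by the same lemma is a nonzero multiple of a matching product. By the support-uniqueness remark in Lemma~\ref{zh:lem:c1-matching-merge}, the two matchings coincide, so $\{u,v\}$ must lie in the matching obtained from $M_e$ by shorting $p$.

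We now read off from the local rule what this forces. If $p$ is an edge of $M_e$, shorting it only deletes that edge, so $\{u,v\}$ would already lie in $M_e$, which is false. Otherwise, write $p=\{a,b\}$ with partners $a',b'$. The only new edge created is $\{a',b'\}$, and every other surviving edge was already in $M_e$. Therefore $\{a',b'\}=\{u,v\}$, so $p$ consists of the $M_e$-partners of $u$ and $v$, that is, $p=\{u',v'\}$. One should check that $p$ cannot contain $u$ or $v$ themselves; this holds because $p$ is disjoint from $\{u,v\}$.

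Running the identical argument with $q$ in place of $p$ gives $q=\{u',v'\}$. Hence $p=q$, contradicting the hypothesis that $p$ and $q$ are distinct. The only point needing care is the bookkeeping in the case analysis of Lemma~\ref{zh:lem:c1-matching-merge}: we must confirm that a single shortloop creates exactly one new edge and no other edge can become $\{u,v\}$. Once that is confirmed, the rest is routine.
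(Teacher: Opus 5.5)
Your proof is correct and is essentially the paper's argument. The paper also assumes $\{u,v\}\notin M_e$ and commutes $\partial^p$ with $\partial^e$; it then uses Lemma~\ref{zh:lem:c1-matching-merge} to see that shorting $p$ can create the edge $\{u,v\}$ only when $p=\{u',v'\}$, and since the same holds for $q$, this contradicts $p\ne q$. Your handling of the case $p\in M_e$ and your appeal to the uniqueness of a matching from its support spell out steps the paper leaves implicit.
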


\begin{proof}
反设 $u,v$ 在 $M_e$ 中不配成一对，而分别与 $u',v'$ 配对。比较
$\partial^p\partial^eQ=\partial^e\partial^pQ$。右端原有因子 $[x_u\ne x_v]$，且 $e$ 不触及它，所以收缩后仍保留。
左端从 $M_e$ 出发；由引理~\ref{zh:lem:c1-matching-merge}，原来没有边 $\{u,v\}$ 时，收缩
$p$ 后要新生这条边，唯一可能是 $p=\{u',v'\}$。同理比较 $q$ 得
$q=\{u',v'\}$，与 $p\ne q$ 矛盾。
\end{proof}

\begin{lemma}[固定的收缩三角形]\label{zh:lem:c1-obtain-triangle}
若 $Q$ 至少十元并满足式~\eqref{zh:eq:c1-all-contractions}，则存在两两不同的
$u,v,r,s,t$，使 $[x_u\ne x_v]$ 同时整除
\[
 \partial^{rs}Q,\qquad \partial^{st}Q,\qquad \partial^{rt}Q.                  \tag{C1.10}
\]
\end{lemma}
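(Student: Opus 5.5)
We first find one edge $\{u,v\}$ that is common to two matchings on disjoint index pairs. Then Lemma~\ref{zh:lem:c1-two-merge-propagation} spreads that edge to a whole triangle of index pairs. Let $2d\ge10$ be the arity, so $d\ge5$. Relabel ports so that
\[
 \partial^{12}Q=\theta_{12}[x_3\ne x_4][x_5\ne x_6][x_7\ne x_8][x_9\ne x_{10}]\cdots ,
\]
that is, $M_{12}$ contains $\{3,4\}$ together with $d-2\ge3$ further edges.

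Next compare the two orders of shortloop on the disjoint pairs $\{1,2\}$ and $\{3,4\}$. By Lemma~\ref{zh:lem:c1-matching-merge}, $\partial^{34}\partial^{12}Q$ is a nonzero multiple of $P_{M_{12}\setminus\{34\}}$, a matching with exactly $d-2$ edges. On the other side, $\partial^{12}\partial^{34}Q$ comes from $M_{34}$. Shorting $1,2$ in $M_{34}$ either deletes the edge $\{1,2\}$ or merges two edges into one new edge. So at most one edge of the resulting matching is absent from $M_{34}$.

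The support of a nonzero matching product determines its matching (Lemma~\ref{zh:lem:c1-matching-merge}), so the two resulting matchings coincide. Hence at least $d-3\ge2$ edges of $M_{12}\setminus\{34\}$ already lie in $M_{34}$. Fix one such common edge $\{u,v\}$. It is disjoint from $\{1,2,3,4\}$, and $[x_u\ne x_v]$ divides both $\partial^{12}Q$ and $\partial^{34}Q$.

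Since $2d\ge10$, the set $[2d]\setminus\{1,2,3,4,u,v\}$ has at least four elements. Choose three distinct indices $r,s,t$ from it. Each of the pairs $rs,st,rt$ is disjoint from $\{u,v\}\cup\{1,2\}\cup\{3,4\}$. Apply Lemma~\ref{zh:lem:c1-two-merge-propagation} with $p=\{1,2\}$ and $q=\{3,4\}$ to each of these three pairs. This gives $[x_u\ne x_v]$ dividing $\partial^{rs}Q$, $\partial^{st}Q$ and $\partial^{rt}Q$, which is (C1.10).

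The main obstacle is the counting step. We must justify that passing from $M_{34}$ to the doubly shorted matching changes at most one edge, treating separately the case $\{1,2\}\in M_{34}$ (pure deletion, with coefficient $2$) and the merge case. We also need the bound $d\ge5$ to leave a surviving common edge. After that, the choice of $r,s,t$ only uses $2d-6\ge4$.
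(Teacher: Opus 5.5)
Your proof is correct and is essentially the paper's own argument. Commuting the shortloops at $\{1,2\}$ and $\{3,4\}$ shows that at most one edge of $M_{12}\setminus\{\{3,4\}\}$ is missing from $M_{34}$, so $d\ge5$ yields a common edge $\{u,v\}$; the two-shortloop propagation lemma with $p=\{1,2\}$, $q=\{3,4\}$ then gives the triangle on any three indices outside $\{1,2,3,4,u,v\}$, and your explicit use of the fact that the support of $P_M$ determines $M$ only spells out a step the paper leaves implicit.
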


\begin{proof}
固定 $\partial^{12}Q$，并按其完美匹配重排其余端口，使
\[
 \partial^{12}Q=\theta_{12}[x_3\ne x_4][x_5\ne x_6]
 [x_7\ne x_8][x_9\ne x_{10}]\cdots .                    \tag{C1.11}
\]
比较 $\partial^{34}\partial^{12}Q$ 与 $\partial^{12}\partial^{34}Q$。前者的完美匹配是从式~(C1.11) 删除
$\{3,4\}$ 后留下的 $d-2$ 条边。后者从 $M_{34}$ 收缩 $1,2$ 时，至多把其中两条边换成
一条新边，故前述结果中至多一条边不是 $M_{34}$ 的原有边。因 $d\ge5$，有 $d-2\ge3$，
所以可以先固定一条边 $\{u,v\}$，使它同时属于 $M_{12},M_{34}$。

在 $[2d]\setminus\{1,2,3,4,u,v\}$ 中再取三个不同指标 $r,s,t$。依次用
引理~\ref{zh:lem:c1-two-merge-propagation}，固定 $p=\{1,2\}$、$q=\{3,4\}$，便得到
式~(C1.10)。这里先选 $u,v$，再选辅助三角形；后文不改变 $u,v$。
\end{proof}

\begin{lemma}[从三角形传播到全部收缩]\label{zh:lem:c1-triangle-to-all}
在引理~\ref{zh:lem:c1-obtain-triangle} 的条件下，固定其中的 $u,v,r,s,t$。则对每一对与
$\{u,v\}$ 不交的 $i,j$，都有
\[
 [x_u\ne x_v]\mid \partial^{ij}Q.                              \tag{C1.12}
\]
\end{lemma}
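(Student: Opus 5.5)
The plan is to split on how the pair $\{i,j\}$ meets the triangle $\{r,s,t\}$, using Lemma~\ref{zh:lem:c1-two-merge-propagation} each time. That lemma needs two distinct pairs $p,q$, both disjoint from $\{u,v\}$, whose shortloops already carry $[x_u\ne x_v]$. It then transfers the factor to every pair $e$ disjoint from $\{u,v\}\cup p\cup q$. The three triangle shortloops from Lemma~\ref{zh:lem:c1-obtain-triangle} give three such witnesses: $\{r,s\},\{s,t\},\{r,t\}$.

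First case: $\{i,j\}\cap\{r,s,t\}=\varnothing$. I take $p=\{r,s\}$ and $q=\{s,t\}$. Then $\{i,j\}$ avoids $\{u,v,r,s,t\}=\{u,v\}\cup p\cup q$. Lemma~\ref{zh:lem:c1-two-merge-propagation} gives $[x_u\ne x_v]\mid\partial^{ij}Q$ directly.

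Second case: $\{i,j\}$ meets $\{r,s,t\}$. I build a fresh triangle avoiding $\{i,j\}$. The set $S=\{u,v,r,s,t,i,j\}$ has at most seven indices, and the arity $2d$ is at least ten. So I can choose three distinct indices $r',s',t'$ outside $S$. Each of $\{r',s'\},\{s',t'\},\{r',t'\}$ is disjoint from $\{u,v,r,s,t\}$. By the first case applied to the original triangle, each of these three shortloops is divisible by $[x_u\ne x_v]$. Now $\{r',s',t'\}$ is a triangle with the same fixed $u,v$ and it misses $\{i,j\}$. Applying the first case once more, with $p=\{r',s'\}$ and $q=\{s',t'\}$, gives (C1.12). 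The disjointness needed is that $\{i,j\}$ avoids $\{u,v,r',s',t'\}$, which holds by the choice of $r',s',t'$ and because $\{i,j\}$ is disjoint from $\{u,v\}$.

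The only delicate points are bookkeeping. The index count in the second case is where the hypothesis of arity at least ten enters. I must also keep $u,v$ fixed throughout and change only the auxiliary triangle, so that the resulting statement has the quantifier order $\exists\{u,v\}\,\forall\{i,j\}$. Pairs meeting $\{u,v\}$ are excluded by the hypothesis, so no further case arises.
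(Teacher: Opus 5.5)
Your proposal is correct and follows the paper's proof essentially verbatim. The disjoint case is one application of Lemma~\ref{zh:lem:c1-two-merge-propagation} with two triangle edges. The intersecting case picks a fresh triangle $\{r',s',t'\}$ outside the at most seven indices $\{u,v,r,s,t,i,j\}$, which uses the arity bound of ten, and then applies the first case twice while $u,v$ stay fixed.
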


\begin{proof}
若 $\{i,j\}$ 与 $\{r,s,t\}$ 不交，取三角形中的任意两条不同边，直接用引理
\ref{zh:lem:c1-two-merge-propagation}。若相交，则
$\{u,v,r,s,t,i,j\}$ 至多有七个指标；因总元数至少十，可在其外另取三个不同指标
$r',s',t'$。第一种情形先说明这三个新收缩仍都含固定因子 $[x_u\ne x_v]$，从而得到一个
与 $\{i,j\}$ 不交的新三角形；再用第一种情形即得式~(C1.12)。改变的只是辅助三角形，
$u,v$ 始终固定。
\end{proof}

\begin{lemma}[至少十元时提取固定因子]\label{zh:lem:c1-large-factor}
若 $Q$ 至少十元并满足式~\eqref{zh:eq:c1-all-contractions}，则 $Q$ 是非零常数乘一个完美
匹配的二元不等函数乘积。
\end{lemma}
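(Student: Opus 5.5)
The plan is to combine the two preceding lemmas into one fixed witness pair and then split $Q$ along it. First I invoke Lemma~\ref{zh:lem:c1-obtain-triangle} to obtain distinct $u,v,r,s,t$. Then Lemma~\ref{zh:lem:c1-triangle-to-all} gives (C1.12): for every index pair $\{i,j\}$ disjoint from $\{u,v\}$, the factor $[x_u\ne x_v]$ divides $\partial^{ij}Q$. The quantifier order matters here. The pair $u,v$ is fixed once, before any $i,j$ are chosen, so it is a single witness and does not depend on $i,j$.

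Next I split $Q$ into the four slices $Q^{00}_{uv},Q^{01}_{uv},Q^{10}_{uv},Q^{11}_{uv}$ according to the value of $(x_u,x_v)$. Let $i,j$ avoid $u,v$. Since the shortloop $\partial^{ij}$ does not touch the ports $u,v$, it commutes with taking these slices. The factor $[x_u\ne x_v]$ then forces $\partial^{ij}Q^{00}_{uv}=\partial^{ij}Q^{11}_{uv}=0$. The same factor says the $01$ and $10$ slices of $\partial^{ij}Q$ are both equal to the residual matching product, so $\partial^{ij}(Q^{01}_{uv}-Q^{10}_{uv})=0$.

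Each of these three functions has arity $2d-2\ge8\ge3$, and all of its disequality shortloops vanish. Lemma~\ref{zh:lem:c1-zero-merges} therefore confines each of them to its all-zero and all-one inputs. Now I use that $Q$ is EO of arity $2d$ with $d\ge5$. The endpoint inputs of the four slices have total weights in $\{0,2d-2\}$ for the $00$ slice, $\{1,2d-1\}$ for the two mixed slices, and $\{2,2d\}$ for the $11$ slice. None of these equals $d$, so all three functions vanish. Hence
\[
 Q^{00}_{uv}=Q^{11}_{uv}=0,\qquad Q^{01}_{uv}=Q^{10}_{uv}=:Q',
\]
and so $Q=[x_u\ne x_v]\,Q'$.

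Finally I recover $Q'$ from the one shortloop not yet used, namely $\partial^{uv}$. On one side, $\partial^{uv}Q=2Q'$. On the other side, \eqref{zh:eq:c1-all-contractions} gives $\partial^{uv}Q=\theta_{uv}P_{M_{uv}}$. Therefore $Q'=(\theta_{uv}/2)P_{M_{uv}}$ and $Q=(\theta_{uv}/2)P_{M_{uv}\cup\{uv\}}$, which is the claimed matching product.

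I expect the real obstacle to lie in the two cited lemmas rather than in this step. In particular, the arity bound $2d\ge10$ is needed to find the common edge and the auxiliary triangles. Within this proof, the only delicate point is the weight count, which needs $d\ne1,2d-1$ and so on; this holds comfortably for $d\ge5$.
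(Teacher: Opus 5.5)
Your proof is correct and follows the paper's argument essentially step for step. You fix the single witness pair $u,v$ from the triangle lemmas, slice along $(x_u,x_v)$, and use Lemma~\ref{zh:lem:c1-zero-merges} together with the EO weight count to kill $Q^{00}_{uv}$, $Q^{11}_{uv}$ and $Q^{01}_{uv}-Q^{10}_{uv}$; you then recover $Q'$ from the unused shortloop $\partial^{uv}$.
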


\begin{proof}
由前两条引理固定 $u,v$，使式~(C1.12) 对所有不触及 $u,v$ 的指标对成立。按
$(x_u,x_v)=00,01,10,11$ 把 $Q$ 写成四个 函数切片
$Q^{00}_{uv},Q^{01}_{uv},Q^{10}_{uv},Q^{11}_{uv}$。对每对 $i,j\notin\{u,v\}$，共同因子
逐个函数切片 给出
\[
 \partial^{ij}Q^{00}_{uv}=\partial^{ij}Q^{11}_{uv}=0,
 \qquad \partial^{ij}(Q^{01}_{uv}-Q^{10}_{uv})=0.                \tag{C1.13}
\]
由引理~\ref{zh:lem:c1-zero-merges}，这三个函数只可能在各自的全零、全一输入上非零。但 $Q$
是 $2d$ 元 EO 函数且 $d\ge5$：这些端点在四个函数切片 中的总重量分别属于
$\{0,2d-2\}$、$\{1,2d-1\}$、$\{2,2d\}$，都不等于 $d$。所以
\[
 Q^{00}_{uv}=Q^{11}_{uv}=0,\qquad
 Q^{01}_{uv}=Q^{10}_{uv}=:Q',
\]
即 $Q=[x_u\ne x_v]Q'$。最后使用尚未用过的收缩 $\partial^{uv}$：一方面
$\partial^{uv}Q=2Q'$，另一方面式~\eqref{zh:eq:c1-all-contractions} 给出
$\partial^{uv}Q=\theta_{uv}P_{M_{uv}}$。故
$Q'=(\theta_{uv}/2)P_{M_{uv}}$，结论成立。
\end{proof}

这段证明的量词次序是先得到一个固定见证
\[
 \exists\{u,v\}\ \forall\{i,j\}\cap\{u,v\}=\emptyset,
 \qquad [x_u\ne x_v]\mid \partial^{ij}Q,
\]
再从 $Q$ 本身提取这个因子。特别地，$u,v$ 不随 $i,j$ 或某个反证输入改变。

下面用数学计算完成六元、八元底例。继续使用式~\eqref{zh:eq:c1-all-contractions}
中的 $Q,\theta_{ij},M_{ij}$、前述 $P_M$ 和式~\eqref{zh:eq:c1-f8} 的 $F'_8$；
用 $ij$ 简记边 $\{i,j\}$。这部分是纯函数结构证明，不新增可实现性或分解前提。
六元直接写出线性分支；八元先确定全部匹配和系数，再恢复函数。
恢复的唯一性由引理~\ref{zh:lem:c1-zero-merges}给出：
若两个 EO 函数的全部短路相同，其差只可能在两个端点上非零，
而 EO 条件排除这两个端点，故二者相等。

\subparagraph*{六元：五个两参数族的逐项计算}

先通过整体非零缩放和变量置换固定
\[
 \partial^{12}Q=P_{\{34,56\}}.
\]
对三元子集 $A\subseteq[6]$，用 $1_A$ 表示其指示串，记
$q_A=Q(1_A)=q_{[6]\setminus A}$。
初始条件的全部解可写成
\[
\begin{array}{lll}
q_{134}=a,&q_{156}=-a,&q_{135}=b,\\
q_{146}=1-b,&q_{136}=c,&q_{145}=1-c,
\end{array}
\qquad y_r=q_{12r}\quad(3\le r\le6).
\]
所以只剩 $a,b,c,y_3,y_4,y_5,y_6$ 七个数。
四元匹配乘积在三对互补 EO 输入上的值，恰为
$(0,t,t),(t,0,t)$ 或 $(t,t,0)$，其中 $t\ne0$。
对以下五个短路逐项求和，得到
\begin{equation}\label{zh:eq:c1-finite-six-rows}
\begin{array}{c|ccc}
 e&\multicolumn{3}{c}{\partial^e Q\text{ 在三对互补 EO 输入上的值}}\\ \hline
 13&y_4-a&y_5+1-b&y_6+1-c\\
 14&y_3-a&y_5+c&y_6+b\\
 25&y_3+b&y_4+1-c&y_6-a\\
 23&y_4+a&y_5+b&y_6+c\\
 16&y_3+1-c&y_4+b&y_5+a
\end{array}
\end{equation}
例如，$25$ 行的第三项是 $q_{126}+q_{156}=y_6-a$；
$16$ 行的第一项是 $q_{123}+q_{236}=y_3+1-c$。
每一行必须恰有一个零项，其余两项相等且非零。

令 $h_e\in\{1,2,3\}$ 表示第 $e$ 行唯一的零坐标。
先用 $13,14$ 两行。若 $(h_{13},h_{14})$ 为 $(1,2)$ 或 $(1,3)$，
第一行的等值式迫使第二行三项全零；若为 $(2,1)$ 或 $(3,1)$，则反过来迫使
第一行三项全零，均不允许。因此只剩 $(1,1),(2,2),(2,3),(3,2),(3,3)$。
再看 $25$ 行，把它的三项记为 $A,B,C$。在 $(2,3)$ 分支，前两行给出
$2A+B+3C=0$；在 $(3,2)$ 分支给出 $A+2B+3C=0$。
而 $A,B,C$ 中一项为零、另两项同为非零 $t$，代入任一个式子都得到非零正整数乘
$t=0$，矛盾。在 $(2,2)$ 和 $(3,3)$ 分支，$25$ 行前两项相等，故只能第三项为零。
在 $(1,1)$ 分支，$25$ 行仍有三种选择。这就穷尽为以下五族；记 $w=y_5,z=y_6$：
\begingroup\small\setlength{\arraycolsep}{4pt}
\[
\begin{array}{c|c|ccc|c}
 &(h_{13},h_{14},h_{25})&a&b&c&y_3=y_4\\ \hline
\mathrm I&(1,1,1)&\frac{-w+2z-1}{3}&\frac{w-2z+1}{3}&\frac{-2w+z+1}{3}&a\\
\mathrm {II}&(1,1,2)&\frac{-w+2z-1}{3}&\frac{2w-z+2}{3}&\frac{-w+2z+2}{3}&a\\
\mathrm {III}&(1,1,3)&z&\frac{w-z+1}{2}&\frac{-w+z+1}{2}&z\\
\mathrm {IV}&(2,2,3)&z&w+1&-w&w+2z+1\\
\mathrm V&(3,3,3)&z&-z&z+1&w+2z+1
\end{array}
\]
\endgroup
表中每一行给出所列零项方程和等值方程的全部解，参数仍须满足各行的非零条件。
例如第一族先有 $y_3=y_4=a$、$b=c+w-z$；再由 $25$ 行得
$a+b=0$、$a+1-c=z-a$，逐项解出第一行。
现在先用 $16$ 行，再用 $23$ 行。记 $s=w+z$，代入得到
\begingroup\small\setlength{\arraycolsep}{3pt}
\[
\begin{array}{c|c|c|c}
 &16&23\text{（已代入第16行条件）}&z\\ \hline
\mathrm I&((s+1)/3,0,(2s-1)/3)&(2z-2,3-2z,2z-1)&1\\
\mathrm {II}&(0,(s+1)/3,(2s-1)/3)&(2z-2,4-2z,2z)&1\\
\mathrm {III}&((s+1)/2,(s+1)/2,s)&(2z,\frac12-2z,\frac12+2z)&0\\
\mathrm {IV}&(2s+2,2s+2,s)&(2z+1,1-2z,2z)&0\\
\mathrm V&(s+1,s+1,s)&(2z+1,-2z,2z+1)&0
\end{array}
\]
\endgroup
由于 $16$ 行必须恰有一项为零，第一、二族只能 $s=2$，后三族只能 $s=0$。
代入 $23$ 行后，逐一令它的三个坐标为零并检验另两项相等，恰得表中最后一列。
例如第一族的三个选择分别给 $z=1,3/2,1/2$；后二者的另两项分别为
$(1,2)$ 和 $(-1,2)$，故都排除。其余四行同样只有所列的一个值。
于是最后只剩五个点：
\begin{equation}\label{zh:eq:c1-finite-six-points}
\begin{array}{c|ccc|cccc|c}
&a&b&c&y_3&y_4&y_5&y_6&Q\\ \hline
1&0&0&0&0&0&1&1&P_{\{13,24,56\}}\\
2&0&1&1&0&0&1&1&P_{\{14,23,56\}}\\
3&0&\frac12&\frac12&0&0&0&0&\frac12P_{\{12,34,56\}}\\
4&0&1&0&1&1&0&0&P_{\{16,25,34\}}\\
5&0&0&1&1&1&0&0&P_{\{15,26,34\}}
\end{array}
\end{equation}
这五个函数均为匹配乘积，引理~\ref{zh:lem:c1-matching-merge}
同时验证它们未列在表中的全部短路。因此六元情形没有遗漏。

\subparagraph*{八元：从最大系数固定四对变量}

八元不必照六元那样列出全部 $35$ 个独立函数值。我们先确定全部匹配和系数，再用
前面的唯一性规则恢复函数。

对不相交的指标对 $e,f$，两次短路可以交换。按
引理~\ref{zh:lem:c1-matching-merge}，比较
$\partial^f\partial^e Q
=\partial^e\partial^f Q$ 得
$\theta_e/\theta_f\in\{1/2,1,2\}$。
如果 $e,f$ 相交，在其并集之外取一个指标对 $g$，分别与 $g$ 比较。
因此全部 $\theta_e$ 之比均为正实数。
从这 $28$ 个非零系数中取绝对值最大的一项，整体缩放并重编号，使
\[
 \theta_{12}=1,\qquad M_{12}=\{34,56,78\},\qquad
 0<\theta_e\le1\quad\text{对全部 }e.
\]
比较 $12$ 和 $34$ 的两次短路：
一侧的系数为 $2$，另一侧为 $\theta_{34}$ 或 $2\theta_{34}$。
最大性迫使 $\theta_{34}=1$，且 $12\in M_{34}$，从而
$M_{34}=\{12,56,78\}$。对 $56,78$ 同理。
记 $M_0=\{12,34,56,78\}$，便得到
\begin{equation}\label{zh:eq:c1-finite-four-edges}
 \partial^r Q=P_{M_0\setminus\{r\}}
 \qquad(r\in M_0).
\end{equation}
这里取最大值的集合始终是同一个 $Q$ 的 $28$ 次短路，没有改换函数或使用最小反例。

把这四对变量称为四个块 $A=(1,2),B=(3,4),C=(5,6),D=(7,8)$。
考虑跨两个块的指标对 $13$。分别与 $56,78$ 比较，两个双重短路在
式~\eqref{zh:eq:c1-finite-four-edges} 一侧的系数都是 $1$。
所以 $56,78$ 必须同时属于 $M_{13}$，或同时不属于它：
若只属于一条，两次比较将同时要求 $2\theta_{13}=1$ 和 $\theta_{13}=1$。
同时属于时，余边只能为 $24$。
同时不属于时也必须有 $24\in M_{13}$：否则，要通过短路 $56$ 新生出 $24$，
$2,4$ 的两个配偶必须为 $5,6$；要通过短路 $78$ 新生出同一条边，
这两个配偶又必须为 $7,8$，矛盾。
因此只有三种可能：
\begin{equation}\label{zh:eq:c1-finite-three-choices}
\begin{array}{c|c}
 M_{13}&\theta_{13}\\ \hline
 \{24,56,78\}&1/2\\
 \{24,57,68\}&1\\
 \{24,58,67\}&1
\end{array}
\end{equation}
对任一跨块指标对，同样的论证成立。第一种称为直配型：另外两块各自内部配对；
后两种称为交配型：另外两块之间取两种跨块匹配之一。

\subparagraph*{八元：三种块划分必须同时取同一型}

四块分成两组各两块，只有 $AB|CD,AC|BD,AD|BC$ 三种划分。
先固定 $AB|CD$，取 $e$ 跨 $A,B$，$f$ 跨 $C,D$。
若 $e$ 为直配型，则短路 $f$ 不删去 $M_e$ 中的一条边，
所以 $\partial^f\partial^e Q$ 的系数为 $1/2$。
若 $f$ 为交配型，另一侧系数只能为 $1$ 或 $2$，不可能相等。
故 $e$ 为直配型会迫使对侧的每个 $f$ 也为直配型；再反过来比较，
本侧的每个跨块指标对也为直配型。
所以在一份块划分内，八个跨块指标对要么全为直配型，要么全为交配型。

不同划分也不能混型。比较 $e=13$ 与 $f=25$。
在 $M_{13}$ 中，$2$ 总与 $4$ 配对，故 $25$ 不是匹配边；
在 $M_{25}$ 中，$1$ 总与 $6$ 配对，故 $13$ 也不是匹配边。
于是交换这两次短路直接给出 $\theta_{13}=\theta_{25}$，
从而 $AB|CD$ 与 $AC|BD$ 同型。改用 $e=13,f=27$，同理得到
$AB|CD$ 与 $AD|BC$ 同型。
所以三种划分要么全为直配型，要么全为交配型。

若全为直配型，式~\eqref{zh:eq:c1-finite-four-edges} 和
式~\eqref{zh:eq:c1-finite-three-choices} 已经给出全部 $28$ 次短路，
它们恰与 $\frac12P_{M_0}$ 的短路相同。由 EO 函数的唯一性规则，
\[
 Q=\tfrac12P_{\{12,34,56,78\}}.
\]

\subparagraph*{八元：交配型只剩两种方向}

以下所有跨块指标对都为交配型，故其系数均为 $1$。
在每个块内，按前述次序把两个端口标为 $0,1$。
两个块之间的一条边，用两个端口标号的异或 $p\in\{0,1\}$ 表示其方向；
每个方向恰有两条边，它们合起来是这两个块的一个完美匹配。

固定 $AB|CD$。对任意跨 $AB$ 的 $e$ 和跨 $CD$ 的 $f$，
交换短路并比较系数 $1$ 或 $2$，得到
\begin{equation}\label{zh:eq:c1-finite-incidence}
 f\in M_e\quad\Longleftrightarrow\quad e\in M_f.
\end{equation}
同一方向的两条 $AB$ 边 $e,e'$，在任何 $M_f$ 中必同时出现或同时不出现。
所以由式~\eqref{zh:eq:c1-finite-incidence}，$M_e,M_{e'}$ 在 $CD$ 上选择同一个方向。
这给出一个方向映射 $\phi:\{0,1\}\to\{0,1\}$。
它不可能为常值：否则取所选方向上的一条 $f$，由同一式子，
$M_f$ 必须同时含有全部四条 $AB$ 边，矛盾。
所以 $\phi$ 是双射，必形如 $p\mapsto p\oplus s_1$。
对另外两份块划分同理，分别得到 $s_2,s_3\in\{0,1\}$。
式~\eqref{zh:eq:c1-finite-incidence} 也确定了反向映射，不再有额外选择。

最后直接比较两组短路，确定三个比特。对 $13,25$，各自的匹配和双重短路匹配如下：
\[
\begin{array}{c|c|c}
 &\text{原匹配}&\text{再短路另一指标对后}\\ \hline
 s_1=0&M_{13}=\{24,57,68\}&\{47,68\}\\
 s_1=1&M_{13}=\{24,58,67\}&\{48,67\}\\ \hline
 s_2=0&M_{25}=\{16,38,47\}&\{47,68\}\\
 s_2=1&M_{25}=\{16,37,48\}&\{48,67\}
\end{array}
\]
因此 $s_1=s_2$。再比较 $13,27$：
\[
\begin{array}{c|c|c}
 &\text{原匹配}&\text{再短路另一指标对后}\\ \hline
 s_1=0&M_{13}=\{24,57,68\}&\{45,68\}\\
 s_1=1&M_{13}=\{24,58,67\}&\{46,58\}\\ \hline
 s_3=0&M_{27}=\{18,36,45\}&\{45,68\}\\
 s_3=1&M_{27}=\{18,35,46\}&\{46,58\}
\end{array}
\]
故 $s_1=s_3$。这样全部短路的匹配和系数只剩两份数据：
$s_1=s_2=s_3=0$，或 $s_1=s_2=s_3=1$。
由唯一性规则，每份数据至多对应一个 EO 函数。

\subparagraph*{正向验证八元例外并收尾}

还须证明上述两份数据确实来自函数，而不只是形式上相容的匹配。
把 $x_1,\ldots,x_8$ 依次标在 $\mathbb F_2^3$ 的
\[
 000,001,010,011,100,101,110,111
\]
八个点上。式~\eqref{zh:eq:c1-f8} 的四个校验向量为
\[
11110000,\quad00001111,\quad11001100,\quad10101010.
\]
它们张成常数函数和三个坐标函数的真值表空间。
这四个向量线性无关且两两正交，所以它们的共同零空间就是同一个四维空间。
非恒定仿射函数 $z\mapsto\alpha\cdot z+\beta$（$\alpha\ne0$）的重量均为 $4$；
两个恒定函数的重量为 $0,8$。所以 $F'_8$ 的十四个支撑串恰是
这十四个非恒定仿射函数的真值表。

固定不同点 $u,v$，令 $\delta=u+v$。在 $u,v$ 上取不同值等价于
$\alpha\cdot\delta=1$，共有四种 $\alpha$ 和两种 $\beta$。
其余六个点按 $z\leftrightarrow z+\delta$ 分成三对，每对的函数值相反。
上述八个仿射函数在这六点上的限制互不相同：
若两个限制相同，其差是一个在六点上为零的仿射函数；
而非零仿射函数的重量只能为 $4$ 或 $8$，不可能仅支撑在剩下两点。
三对不等条件恰有八个赋值，因而每个赋值恰好对应一个内部赋值。于是
\begin{equation}\label{zh:eq:c1-finite-f8-merges}
 \partial^{uv}F'_8
 =\prod_{\substack{\{z,z+\delta\}\subseteq
       \mathbb F_2^3\setminus\{u,v\}\\\text{每对只取一次}}}
 [x_z\ne x_{z+\delta}].
\end{equation}
这验证了全部 $28$ 次短路，而非只验证代表指标对。
其中 $M_{12}=\{34,56,78\}$，$M_{13}=\{24,57,68\}$，
故 $F'_8$ 正好对应三个方向比特均为 $0$ 的数据。
交换 $x_7,x_8$ 后，四条块内边的数据不变，而
$M_{13}$ 变为 $\{24,58,67\}$，所以对应三个比特均为 $1$ 的数据。
因此，在以最大短路系数归一化并按其匹配重编号后，八元只有
\begin{equation}\label{zh:eq:c1-finite-eight-points}
 \tfrac12P_{\{12,34,56,78\}},\qquad F'_8,\qquad (78)F'_8
\end{equation}
这三种可能。最后两种互为变量置换，正是定理中的一个例外类。

\begin{proof}[定理~\ref{zh:thm:c1-eo-matching-reduction} 的证明]
六元由式~\eqref{zh:eq:c1-finite-six-rows} 的穷尽分支计算得到
式~\eqref{zh:eq:c1-finite-six-points}。
八元先取最大短路系数固定四个块，由三种跨块选择、同型性和方向比特的相容关系，
得到式~\eqref{zh:eq:c1-finite-eight-points}。全部候选由匹配短路公式和
式~\eqref{zh:eq:c1-finite-f8-merges} 正向验证。
至少十元时使用引理~\ref{zh:lem:c1-large-factor}。
撤销有限底例中的非零缩放和变量置换，这些情形穷尽所有至少六元的偶数元数。
\end{proof}

\subparagraph*{两份整形函数迫使整体实数化}

\begin{lemma}\label{zh:lem:c1-double-F}
设 $F=Q+a(\ket{0^m}+\ket{1^m})$ 是引理~\ref{zh:lem:c1-shape-generator} 的函数。
若 $Q$ 满足定理~\ref{zh:thm:c1-eo-matching-reduction} 的全部前提，
则 $m=8$，$Q=\lambda F'_8$ 的端口置换，且 $a=\pm\lambda\in\mathbb R^*$。
\end{lemma}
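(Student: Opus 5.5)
By Theorem~\ref{zh:thm:c1-eo-matching-reduction}, $Q$ is either $\lambda P_M$ for a perfect matching $M$ on $m=2d$ ports, or (at $m=8$) a port permutation of $\lambda F'_8$. In each case I will build a quaternary gadget from two copies of $F$ and test it against \ref{zh:item:c1-quaternary}. The gadget is the same throughout: keep two chosen ports $u,v$ in each copy, and join the remaining $m-2$ corresponding ports of the two copies by $N$ edges. Its output is a quaternary function with even support on ports $u,v,u',v'$. Condition \ref{zh:item:c1-quaternary} then says that, on the three complementary EO pairs $\{0011,1100\},\{0101,1010\},\{0110,1001\}$, its values must form a triple $(0,t,t)$ up to order, with $t\ne0$.

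\textbf{Matching case.} Take $Q=\lambda P_M$ and let $u,v$ be the two ends of one matching edge. I expand the product of the two copies of $F$ into four parts.
\begin{itemize}
\item[] $Q\times Q$: each of the $d-1$ joined matching edges closes into a cycle of three $N$-type constraints, contributing $2$. The total is $2^{d-1}\lambda^2[x_u\ne x_v][x_{u'}\ne x_{v'}]$.
\item[] Endpoint $\times$ endpoint: only complementary endpoints pass through the $N$ joins, giving $a^2(\delta_{0011}+\delta_{1100})$ in the order $u,v,u',v'$.
\item[] Mixed terms: these vanish. A joined matching edge of $Q$ needs opposite bits, while the endpoint term on the other side is constant on all its ports.
\end{itemize}
The three complementary EO values are therefore $(a^2,2^{d-1}\lambda^2,2^{d-1}\lambda^2)$. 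All three are nonzero since $a\ne0$, which contradicts \ref{zh:item:c1-quaternary}. So this case cannot occur, and we are left with $m=8$ and $Q=\lambda F'_8$.

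\textbf{Exceptional case.} Label the eight ports by $\mathbb F_2^3$ as in the verification of $F'_8$, so that the support of $F'_8$ is the set of truth tables of the nonconstant affine functions. Keep the same pair of points $u,v$ in both copies. The $N$ joins on the six other points force the two copies' affine functions to be complementary on those six points, hence on all eight, since a nonzero affine difference has weight $4$ or $8$. Fix values at $u,v$; four affine functions match them. If the two values are equal, exactly one of the four is constant, so that one lies outside the support. If they differ, all four are nonconstant. The $Q\times Q$ contribution is therefore $4\lambda^2[x_u\ne x_v][x_{u'}\ne x_{v'}]$ minus the constant-function correction $\lambda^2 D_4$. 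Mixed terms vanish because a nonconstant affine function cannot be constant on six points. Adding the endpoint term gives
\[
4\lambda^2 N_{13}N_{24}+(a^2-\lambda^2)D_4,
\]
whose complementary EO values are $(a^2+3\lambda^2,\,0,\,4\lambda^2)$ up to order. Condition \ref{zh:item:c1-quaternary} forces the two nonzero entries to be equal, so $a^2=\lambda^2$. Finally, $Q$ is rational-valued (Lemma~\ref{zh:lem:c1-shape-generator}), so $\lambda\in\mathbb Q^*$, and hence $a=\pm\lambda\in\mathbb R^*$.

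The step needing the most care is this exact count in the exceptional case. One has to check the claim about complementary affine functions, track which matched function is constant for each value pattern at $u,v$, and confirm that the mixed terms really vanish. The matching case, by contrast, is a direct cycle count.
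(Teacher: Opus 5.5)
Your proposal is correct and follows the paper's proof step by step. It uses the same two-copy gadget with $N$ joins, the same triples $(a^2,2^{d-1}\lambda^2,2^{d-1}\lambda^2)$ in the matching case and $(a^2+3\lambda^2,0,4\lambda^2)$ for $F'_8$, and the same rationality argument for $\lambda$. There are two small slips to fix. Each joined matching edge closes into a cycle of four disequality constraints (two matching factors and two $N$ edges), not three, which is why it contributes $2$. In the exceptional case the $Q\times Q$ term is $4\lambda^2[x_u\ne x_{u'}][x_v\ne x_{v'}]-\lambda^2D_4=4\lambda^2N_{13}N_{24}-\lambda^2D_4$, not $4\lambda^2[x_u\ne x_v][x_{u'}\ne x_{v'}]$; your own final display already has the correct form.
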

\begin{proof}
先设 $m=2d$ 且 $Q=\lambda P_M$。
取两份 $F$，各保留 $M$ 中同一条边的两个端口，其余对应端口逐点用 $N$ 桥接。
$Q\times Q$ 给出 $d-1$ 个闭圈，贡献 $2^{d-1}\lambda^2N_{12}N_{34}$；
端点乘端点只有互补赋值可通过桥边，贡献 $a^2D_4$。
混合项因内部匹配要求相反、端点却全相同而为零。
所得四元的三个互补 EO 值为
\[
 (a^2,2^{d-1}\lambda^2,2^{d-1}\lambda^2).
\]
它们全非零，不能是任一非零 $N$ 匹配乘积，违反 \ref{zh:item:c1-quaternary}。

只剩 $Q=\lambda F'_8$。按 $\mathbb F_2^3$ 的八点标号，
两份 $F$ 各留同样两个点 $u,v$，其余六个同名点逐点用 $N$ 桥接。
两个仿射函数若在六点互补，就在全部八点互补。
指定 $u,v$ 上两个值，共有四个仿射函数；同值时其中恰有一个常值函数，
异值时四个均非常值。因此 $Q\times Q$ 输出
$4\lambda^2N_{13}N_{24}-\lambda^2D_4$。
混合项为零，因为非常值仿射函数不可能在六点同为常值。
加上端点乘端点，所得四元为
\[
 4\lambda^2N_{13}N_{24}+(a^2-\lambda^2)D_4,
\]
三个互补 EO 值为 $(a^2+3\lambda^2,0,4\lambda^2)$。
条件 \ref{zh:item:c1-quaternary} 迫使第一项等于第三项，故 $a^2=\lambda^2$。
由于 $Q$ 为非零有理函数，$\lambda$ 为非零有理数，所以 $a=\pm\lambda$ 为实数。
\end{proof}

\begin{proof}[定理~\ref{zh:thm:c1-strict-dichotomy} 的证明]
依引理~\ref{zh:lem:c1-active-single}，先处理恒零分支，或者 Turing 等价地单函数化。
共同单侧分支送入一般 EO 二分。余下分支经共同相位及偏差吸收，得到活跃、
翻转对称的 $H$，全部张量幂的 EO 层有理。
引理~\ref{zh:lem:c1-phase-character} 取全部支撑偏差生成子群的正生成元 $g$，
引理~\ref{zh:lem:c1-shape-generator} 得到 $F$。
对全部 $\partial_{ij}F$ 同时使用 ARS--EO 二分，若困难便通过实际构件返回原问题；
否则引理~\ref{zh:lem:c1-cleanup} 使全部非零短路成为匹配乘积。
于是纯结构定理及引理~\ref{zh:lem:c1-double-F} 给出实端点值 $a$。
但其相位类为 $\chi(g)$，因此 $\chi(g)=\mathbb R^*$。
相位同态遂处处为单位类，整个 $H$ 都实值。

回到普通相等边坐标，使用式~\eqref{zh:eq:K-basis}。
对输出 $y$，有
\[
 (K^{\otimes m}H)(y)
 =2^{-m/2}i^{|y|}\sum_x(-1)^{y\cdot x}H(x).
\]
翻转对称使右侧实数和在 $|y|$ 奇数时为零；偶数时 $i^{|y|}$ 是实数。
所以变换后的整个单函数实值。
Real-Holant 二分\cite{zh:shao2020real} 分类该问题，先前 Turing 等价把结论返回原 $\Gamma$。
\end{proof}

\paragraph{下集}\label{zh:subsec:c1-lower}\label{zh:sec:c1-quaternary}
取一个真四元函数。先用二元封闭条件确定它的形式，再按两个端点是否非零分情况。
双端点用已有四元二分；单端点用插值得到钉子；没有端点时按 EO 层的中心数处理。

\subparagraph*{真四元函数的二分}

本节在原生 $N$ 边模型中讨论 $C_1$ 的下集。
设 $\Gamma$ 是有限代数复值偶元函数集，按前面的框架约定，
所有可用非零二元函数恰为 $\lambda N$，$\lambda\ne0$。
记 $\delta_x$ 为仅在输入 $x$ 取值 $1$ 的函数。

\begin{proposition}[$C_1$ 下集的二分]\label{zh:prop:c1-quaternary-interface}
在上述二元封闭条件下，若有一个真四元函数，则 $\Holant(N\mid\Gamma)$
在 $\FP$ 中，或者是 $\sharpP$-hard。
\end{proposition}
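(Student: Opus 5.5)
The proof will mirror the English version of the $C_1$ lower case, since the statement just given is the Chinese counterpart of Proposition~\ref{en:prop:c1-quaternary-interface}. Take an indecomposable quaternary function $q$ realized in the native $N$-edge model. First I will pin down its shape using binary closure alone. Shorting any two ports with an $N$ edge must yield a multiple of $N$, so both diagonal entries of every such shortloop vanish. Applied to the weight-one coefficients, this gives $u_i+u_j=0$ for all $i\ne j$, and any three indices then force every weight-one coefficient to vanish. The same argument kills the weight-three coefficients. Requiring equal antidiagonal entries in the shortloops at $12,13,23$ gives three sign relations among the differences of the three complementary weight-two pairs, and these force all the differences to be zero. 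Hence
\[
 q=s\delta_{0000}+t\delta_{1111}+a(\delta_{0011}+\delta_{1100})+b(\delta_{0101}+\delta_{1010})+c(\delta_{0110}+\delta_{1001}).
\]

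Next I split into cases according to the endpoints $s,t$. If $st\ne0$, I invoke the Quaternary $K$-Holant dichotomy of \cite{zh:liu2026disequality} directly. Its hypotheses are even support, two nonzero endpoints and indecomposability. If $q$ did decompose, the nonzero endpoints would rule out a $1+3$ split (it would create odd-weight support) and would force a $2+2$ split into full-rank diagonal binary factors. These factors would be available by the decomposition theorem, contradicting \ref{zh:item:c1-binary}. If exactly one endpoint is nonzero, say $s\ne0$ and $a\ne0$ after a port permutation, I will chain $m$ copies through two $N$ wires in the $12\mid34$ flattening. On the $\{00,11\}$ block, $QJ$ is the Jordan block $\left(\begin{smallmatrix}a&s\\0&a\end{smallmatrix}\right)$, while on the $\{01,10\}$ block it is semisimple. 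Therefore $q_m$ carries a term $m\,(s/a)\,a^m\delta_{0000}$ plus pure exponentials in the bases $a,c\pm b$. Replacing the $M$ target pins in an instance by $q_m$ yields an exponential polynomial. The coefficient of $m^Ma^{mM}$ isolates the target value, and a confluent Vandermonde argument shows that the samples $m=1,\dots,(M+1)|\mathcal L_M|$ suffice to recover it. The case $t\ne0$ is symmetric. If $a=b=c=0$, the pin is already realized and no interpolation is needed. Once the pin is available, the decomposition theorem either places the problem in $\FP$ or supplies a unary factor, and the odd-arity dichotomy finishes.

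If $s=t=0$, I count the nonzero centers. With three centers, the Cai--Fu--Xia six-vertex dichotomy applies to $q$ alone. The support has size six, which is not a power of two, so $q\notin\mathcal A\cup\mathcal P$, and the remaining exception (a zero in every complementary pair) also fails; hence the problem is hard. With one center, $q$ is a complementary two-point function, and I will obtain a Turing carrier $\Delta=\delta_0\otimes\delta_1$ via the one-pair reduction and replication lemmas of \cite{zh:liu2026disequality}, which yields unary factors. With two centers $aA+bB$, the cases $a=\pm b$ give $N\otimes N$ or $Y\otimes Y$, contradicting indecomposability. Otherwise, when $a/b$ is not a root of unity, a two-wire chain gives $a^mA+b^mB$ and Vandermonde interpolation isolates $A$, reducing to the one-center case. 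When $a/b=r$ is a root of unity with $r\ne\pm1$, connecting two copies produces a new ratio $(r+r^{-1})/2$. This is real and lies strictly between $-1$ and $1$, so it is not a root of unity when nonzero; when $r=\pm i$ it is $0$, and the gadget is directly a complementary two-point function.

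I expect the main obstacle to be the one-endpoint interpolation. The difficulty is to make sure that coincidences among the spectral bases $a,c\pm b$, including root-of-unity relations, cannot hide the $m^M$ term. The resolution is that degree $M$ in $m$ arises only by choosing the Jordan endpoint term at every one of the $M$ positions, so the leading coefficient is $(s/a)^M$ times the target value regardless of such coincidences. The second delicate point is relying on the external carrier lemmas for the one-center case; I will cite them as stated in \cite{zh:liu2026disequality}.
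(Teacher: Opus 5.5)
Your proposal is correct and follows the paper's proof essentially step for step. You use the same $N$-shortloop argument to fix the shape of $q$, and the same endpoint split: the Quaternary $K$-Holant dichotomy when $st\ne0$, and the Jordan-block chain with confluent Vandermonde interpolation when exactly one endpoint is nonzero. Your handling of $s=t=0$ by the number of centers also matches the paper: Cai--Fu--Xia with a size-six support for three centers, the complementary two-point carrier for one center, and for two centers the chain $a^mA+b^mB$, the $\cos\theta$ conversion when $r\ne\pm1$ is a root of unity, and the decomposable cases $a=\pm b$.
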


\subparagraph*{由二元闭包强制的四元形式}

\begin{lemma}\label{zh:lem:c1-general-q4-shape}
每个普通四元构件都可写成
\begin{equation}\label{zh:eq:c1-general-q4-shape}
 q=s\delta_{0000}+t\delta_{1111}
   +a(\delta_{0011}+\delta_{1100})
   +b(\delta_{0101}+\delta_{1010})
   +c(\delta_{0110}+\delta_{1001}).
\end{equation}
\end{lemma}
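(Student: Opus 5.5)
The plan is to use only the binary closure condition: every nonzero ordinarily realizable binary function in the native $N$-edge model is a nonzero multiple of $N$. Any $N$-shortloop of a quaternary gadget $q$ is an ordinary binary gadget, so it is either $0$ or $\lambda N$. In particular, both diagonal entries of every shortloop vanish, and its two antidiagonal entries are equal. We read off linear constraints on the sixteen values of $q$ from these facts and check that they leave exactly the eight-parameter form \eqref{en:eq:c1-general-q4-shape}.

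First, the odd-weight entries. Shorting ports $i,j$ with $N$ and evaluating the result at the remaining input $00$ gives $q(e_i)+q(e_j)$, where $e_i$ is the weight-one word with a $1$ at position $i$. Hence $u_i+u_j=0$ for all $i\ne j$, with $u_i=q(e_i)$. From three indices we get $u_1=-u_2=u_3=-u_1$, so $u_1=u_2=u_3=0$, and then $u_4=0$. Evaluating the same shortloops at the remaining input $11$ gives the same system for the four weight-three coefficients, which therefore also vanish.

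Next, the weight-two entries. Here the diagonal entries of a shortloop at remaining input $00$ or $11$ involve only odd-weight values, already zero, or endpoint values, so they impose nothing new. The constraint comes from equality of the antidiagonal entries. For the shortloop at ports $12$, the two antidiagonal entries are $q_{0101}+q_{1001}$ and $q_{0110}+q_{1010}$, at remaining inputs $01$ and $10$ respectively. Their equality gives $q_{0101}-q_{1010}=q_{0110}-q_{1001}$, that is, $d_b=d_c$. The shortloops at $13$ and $23$ similarly give $d_a=d_c$ and $d_a=-d_b$. These three equations force $d_a=d_b=d_c=0$, so each complementary pair of weight-two inputs carries a common value. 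The endpoint values $s=q_{0000}$ and $t=q_{1111}$ never appear in any $N$-shortloop, so they remain free.

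The main obstacle is bookkeeping rather than ideas: one must compute the antidiagonal entries of each of the three chosen shortloops with the correct port order and orientation to get the stated sign patterns. A wrong sign would make $d_a=d_b=d_c=0$ fail to follow. I would verify the $12$ case explicitly and obtain the other two by the same index computation.
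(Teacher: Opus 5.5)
Your proposal is correct and matches the paper's proof essentially step for step: the vanishing diagonal of each $N$-shortloop gives $u_i+u_j=0$ (and the same relations for the weight-three coefficients), and equality of the two antidiagonal entries for the shortloops at $12,13,23$ gives $d_b=d_c$, $d_a=d_c$, $d_a=-d_b$, leaving $s,t$ unconstrained. One small slip: you say the diagonal entries may involve endpoint values, but they never do, since an $N$-shortloop forces $x_i\ne x_j$; you state this correctly yourself at the end.
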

\begin{proof}
把任意两个端口 $i,j$ 用一条 $N$ 边短路，输出二元矩阵的两个对角值必须为零。
设 $u_i=q(e_i)$ 是四个重量一的系数，便有 $u_i+u_j=0$ 对所有 $i\ne j$ 成立。
取任意三个指标可得其系数全为零，继而第四个也为零。
重量三的四个系数用同一论证处理。

设三个重量二互补对的差分别为
\[
 d_a=q_{0011}-q_{1100},\quad
 d_b=q_{0101}-q_{1010},\quad
 d_c=q_{0110}-q_{1001}.
\]
短路 $12$、$13$、$23$ 后，二元矩阵的两个反对角值相等，分别给出
$d_b=d_c$、$d_a=d_c$、$d_a=-d_b$。
因此三个差均为零，得到所述形式。$s,t$ 在这些短路中不出现，因而没有受到约束。
\end{proof}

\subparagraph*{两个非零端点的出口}

\begin{lemma}\label{zh:lem:c1-general-two-endpoints}
若式~\eqref{zh:eq:c1-general-q4-shape} 中 $st\ne0$，则整个函数集已有可判定二分。
\end{lemma}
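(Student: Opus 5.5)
The plan is to split on whether $q$ admits a nontrivial tensor-product decomposition, exactly as in the English version of this lemma (Lemma~\ref{en:lem:c1-general-two-endpoints}).

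\emph{Indecomposable case.} If $q$ is tensor-indecomposable, I would invoke the Quaternary $K$-Holant dichotomy of \cite{zh:liu2026disequality} directly; in the $C_k$ part of the appendix this is the native-$N$-edge quaternary dichotomy. Its hypotheses are:
\begin{itemize}
 \item even support, which holds by the form~\eqref{zh:eq:c1-general-q4-shape} of Lemma~\ref{zh:lem:c1-general-q4-shape};
 \item $q_{0000}q_{1111}=st\ne0$, which is the assumption of the lemma;
 \item indecomposability, which is the case assumption.
\end{itemize}
Because $q$ is an ordinary gadget, adjoining it preserves equivalence. The cited theorem then gives a decidable $\FP/\sharpP$-hard dichotomy for the whole of $\Holant(N\mid\Gamma)$. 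I also need to check that the cited proof does not use $I$ as a vertex function, since only $N$ edges are available here.

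\emph{Decomposable case.} If $q$ decomposes, I aim to show that this branch cannot occur. Because $st\ne0$, every factor must be nonzero at both its all-zero and all-one inputs.
\begin{itemize}
 \item A $1+3$ split $u\otimes h$ would pair $u(0)\ne0$ with $h(111)\ne0$ and produce the nonzero odd-weight entry $q_{0111}$. This contradicts even support.
 \item In a $2+2$ split, a nonzero antidiagonal entry of either factor, multiplied by the other factor's nonzero $00$ entry, would again give an odd-weight support point. So both factors are full-rank diagonal binary functions.
\end{itemize}
By the decomposition theorem (Theorem~\ref{zh:thm:decomposition}), either the problem is already in $\FP$, or both factors become available. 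In the second case a diagonal available binary function contradicts the $C_1$ binary closure condition, under which every nonzero binary function is a multiple of $N$. Hence this branch closes either with $\FP$ or with a contradiction.

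\emph{Main obstacle.} I expect the only real work to be matching hypotheses: confirming that the model, edge kernel and input conventions of the external quaternary dichotomy agree with the native $N$-edge setting here. The decomposable case is routine parity bookkeeping.
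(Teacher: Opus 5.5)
Your proposal is correct and follows the paper's proof essentially step by step. In the indecomposable case you invoke the cited Quaternary $K$-Holant dichotomy, using even support, $st\ne0$ and indecomposability. In the decomposable case you use $st\ne0$ together with even support to rule out a $1+3$ split and to force a $2+2$ split into full-rank diagonal factors, and the availability of those factors contradicts the $C_1$ binary closure; your explicit handling of the $\FP$ alternative of the decomposition theorem is only a slightly more careful version of the paper's appeal to the decomposition lemma.
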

\begin{proof}
若 $q$ 不可作非平凡张量积分解，直接使用\cite{zh:liu2026disequality}的
Quaternary $K$-Holant dichotomy。其条件正是偶支撑、两个非零端点和
张量积不可分解，结论给出 $\Holant(N\mid\Gamma)$ 的二分。

若 $q$ 可分解，两个非零端点迫使每个因子在其全零、全一输入上均非零。
$1+3$ 分解会同时产生奇重量非零输入，故不可能。
在某个 $2+2$ 分解下，两个因子都必须是满秩对角二元函数：
任一因子的反对角非零项与另一因子的全零项相乘都会违反偶支撑。
由分解引理，这两个因子可用，与二元函数只有 $N$ 的非零倍数矛盾。
因此可分解的情形不在本分支中。
\end{proof}

\subparagraph*{任意单端点四元函数的统一插值}

\begin{lemma}[单端点 Jordan 插值]\label{zh:lem:c1-general-jordan}
若式~\eqref{zh:eq:c1-general-q4-shape} 恰有一个非零端点，则任意多次使用该端点
四元钉子的实例均 Turing 归约到 $\Holant(N\mid\Gamma)$。
因此整个函数集已有可判定二分。
\end{lemma}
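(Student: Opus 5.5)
The plan is to build a chain gadget whose endpoint coefficient grows linearly in the chain length, and then isolate that coefficient by polynomial interpolation. This mirrors the English-version proof of the same lemma. Take the case $s\ne0,t=0$; the case $s=0,t\ne0$ is symmetric. The target pin is $E=\delta_{0000}$. If $a=b=c=0$, then $q$ is already a nonzero multiple of $E$ and nothing more is needed. Otherwise, permute ports so that $a\ne0$.

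Flatten $q$ along $12\mid34$ to get the $4\times4$ matrix $Q$, and put $J=N\otimes N$. Chain $m$ copies of $q$, joining the column ports of each copy to the row ports of the next by two $N$ edges. This realizes $q_m$ with matrix $(QJ)^{m-1}Q$. The matrix $QJ$ splits into two invariant blocks:
\begin{itemize}
\item on the $\{00,11\}$ block it is the Jordan block $\left(\begin{smallmatrix}a&s\\0&a\end{smallmatrix}\right)$;
\item on the $\{01,10\}$ block it is the semisimple matrix $\left(\begin{smallmatrix}c&b\\b&c\end{smallmatrix}\right)$, with eigenvalues $c\pm b$.
\end{itemize}
Computing powers then gives
\[
 q_m = m\,(s/a)\,a^m E + a^m C + (c+b)^m\tfrac{A+B}{2} + (c-b)^m\tfrac{B-A}{2}.
\]
Only the $E$ term carries the factor $m$.

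Next, take an instance with $M$ occurrences of $E$. Group them as $E^{\otimes M}$ against a fixed context, and substitute $q_m$ at every occurrence. Multilinear expansion gives
\[
 Z_m=\sum_\mu P_\mu(m)\mu^m ,
\]
where $\mu$ ranges over the at most $\binom{M+2}{2}$ distinct nonzero products of $M$ elements from $\{a,c+b,c-b\}$, and each $P_\mu$ has degree at most $M$. The coefficient of $m^M$ in $P_{a^M}$ equals $(s/a)^M$ times the target value. This holds because degree $M$ in $m$ forces the $E$ term to be chosen at every position, so coincidences among the bases do not interfere.

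The main obstacle is showing that finitely many samples determine this coefficient even when bases coincide or are roots of unity. Sample $m=1,\dots,L$ with $L=(M+1)\cdot\#\{\mu\}$, and consider the generalized Vandermonde matrix $V_{m,(\mu,j)}=m^j\mu^m$. To show it is invertible, suppose a row dependence exists and form its generating polynomial $p(z)=\sum_m v_m z^m$. The dependence says $(z\,d/dz)^j p(\mu)=0$ for all $\mu$ and all $0\le j\le M$. Since every $\mu\ne0$, this is equivalent to $p$ vanishing to order $M+1$ at each $\mu$. Then $p$ is divisible by $\prod_\mu(z-\mu)^{M+1}$, a polynomial of degree $L$ with nonzero constant term. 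But $p$ has degree at most $L$ and $p(0)=0$, so $p=0$.

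Each query uses polynomially many gadget copies, and all arithmetic stays in a fixed number field, so the reduction runs in polynomial time. Having obtained the quaternary pin, transform back to equality edges and apply Theorem~\ref{zh:thm:decomposition}: either the problem is in $\FP$, or the unary factor $\delta_0$ or $\delta_1$ becomes available. In the latter case, Theorem~\ref{zh:thm:odd-arity-fp} finishes the dichotomy.
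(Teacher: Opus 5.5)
Your proposal is correct and follows the paper's proof essentially step for step: the same $N$-edge chain with the Jordan block $\left(\begin{smallmatrix}a&s\\0&a\end{smallmatrix}\right)$ on $\{00,11\}$, the same exact expansion of $q_m$, the same generalized Vandermonde invertibility argument via the generating polynomial $p(z)$ with $p(0)=0$, and the same exit through the decomposition and odd-arity theorems. The one point to make explicit is that the case $s=0,t\ne0$ is "symmetric" only in the sense that the same calculation produces a lower-triangular Jordan block, which yields $\delta_{1111}$ directly; it is not a bit-flip symmetry, and the paper notes that no bit-flip gadget is needed, which matters since none is available in the $C_1$ setting.
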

\begin{proof}
先构造一列四元链，使端点项比其余项多出一个链长因子；再用插值把端点项单独取出。
先设 $s\ne0,t=0$，目标钉子为 $E=\delta_{0000}$。
若 $a=b=c=0$，它已是实际构件的非零倍数。
否则置换端口，使 $a\ne0$。按 $12\mid34$ 的自然顺序展平，写
\[
 Q=\begin{pmatrix}
 s&0&0&a\\0&b&c&0\\0&c&b&0\\a&0&0&0
 \end{pmatrix},\qquad J=N\otimes N.
\]
取 $m\ge1$ 份 $q$，将每份的两个列端口分别用 $N$ 边连接下一份的两个行端口。
所得实际四元构件 $q_m$ 的矩阵为
\begin{equation}\label{zh:eq:c1-general-chain}
 Q_m=(QJ)^{m-1}Q=(QJ)^mJ.
\end{equation}
$QJ$ 在 $00,11$ 子空间上的矩阵是
$\left(\begin{smallmatrix}a&s\\0&a\end{smallmatrix}\right)$；
在 $01,10$ 子空间上是
$\left(\begin{smallmatrix}c&b\\b&c\end{smallmatrix}\right)$，后者总是半单的。
记
\[
 A=\delta_{0101}+\delta_{1010},\quad
 B=\delta_{0110}+\delta_{1001},\quad
 C=\delta_{0011}+\delta_{1100}.
\]
直接取幂得到精确公式
\begin{equation}\label{zh:eq:c1-general-jordan-expansion}
 q_m=m\frac{s}{a}a^mE+a^mC
      +(c+b)^m\frac{A+B}{2}
      +(c-b)^m\frac{B-A}{2}.
\end{equation}
底数为零的项对所有 $m\ge1$ 都为零，故可直接删除。

考虑任意包含 $M\ge1$ 个 $E$ 顶点的目标函网。用结合律，把这些点与其余部分分开：
前者为 $E^{\otimes M}$，后者结合成一个固定的上下文。
将前者同时替换为 $q_m^{\otimes M}$，所得值记为 $Z_m$。
随 $m$ 改变的部分给出方程系数，固定上下文给出待恢复的未知量。
令 $\Lambda$ 是 $a,c+b,c-b$ 中不同非零数的集合；其大小至多为三。
将相同的底数聚合后，多线性展开给出
\begin{equation}\label{zh:eq:c1-general-exp-poly}
 Z_m=\sum_{\mu\in\mathcal L_M}P_\mu(m)\mu^m,
 \qquad \deg P_\mu\le M,
\end{equation}
其中 $\mathcal L_M$ 是从 $\Lambda$ 选取 $M$ 项相乘得到的不同非零数集合。
它的大小至多 $\binom{M+2}{2}$。
公式~\eqref{zh:eq:c1-general-jordan-expansion} 中唯有端点项含有因子 $m$，所以
\begin{equation}\label{zh:eq:c1-general-leading-coeff}
 [m^M]P_{a^M}(m)=\left(\frac{s}{a}\right)^M Z_{\mathrm{target}}.
\end{equation}
不同谱底数之间即使存在单位根关系或者乘积重合，也不改变此式：
次数 $M$ 要求所有 $M$ 个位置都选取端点项。

下面明确验证有限次插值的可逆性。
令 $L=(M+1)|\mathcal L_M|=O(M^3)$，用 $m=1,\ldots,L$ 的样本，
对未知系数 $[m^j]P_\mu$ 求解矩阵
\[
 V_{m,(\mu,j)}=m^j\mu^m,\qquad 0\le j\le M.
\]
若存在行依赖 $(v_1,\ldots,v_L)$，令 $p(z)=\sum_{m=1}^L v_mz^m$。
依赖关系表示 $(z\,d/dz)^jp(\mu)=0$，对所有 $\mu\in\mathcal L_M$ 及
$0\le j\le M$ 成立。由于 $\mu\ne0$，这些关系与
$p^{(j)}(\mu)=0$ 等价：两套导数之间的转换是对角非零的三角转换。
所以 $p$ 被次数为 $L$ 的多项式
$\prod_{\mu\in\mathcal L_M}(z-\mu)^{M+1}$ 整除。
又 $p(0)=0$，而该乘积的常数项非零，故 $p=0$。
因此 $V$ 可逆。恢复式~\eqref{zh:eq:c1-general-leading-coeff} 的系数并除以
已知非零数 $(s/a)^M$ 即得目标值。
每个样本使用 $mM=O(M^4)$ 份固定构件，总查询数为 $O(M^3)$。
所有数都属于由固定代数系数生成的固定数域；精确比较、求幂和线性求解的位复杂度
均为输入大小的多项式。若预先缩放了 $q$，同时补偿相应的已知幂次即可。

当 $s=0,t\ne0$ 时，相同计算的端点子块为下三角 Jordan 块，提取
$\delta_{1111}$；这是直接的矩阵计算，不需要额外提供翻位函数。

得到四元钉子后，用分解定理~\ref{zh:thm:decomposition}：或者问题已在 $\FP$ 中，
或者取得非零一元因子 $\delta_0$ 或 $\delta_1$，再由奇元二分定理~\ref{zh:thm:odd-arity-fp} 完成。
\end{proof}

\subparagraph*{纯 EO 四元函数的中心数分派}

\begin{lemma}\label{zh:lem:c1-general-three-centers}
若 $s=t=0$ 且 $abc\ne0$，则 $\Holant(N\mid\Gamma)$ 是 $\sharpP$-hard。
\end{lemma}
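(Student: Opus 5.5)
The plan mirrors the English version of this lemma: reduce to the single-function six-vertex problem and invoke the Cai--Fu--Xia six-vertex dichotomy \cite[Theorem~3.1]{zh:cai2018sixvertex}. The quaternary function $q$ is an actual gadget. Its form is fixed by Lemma~\ref{zh:lem:c1-general-q4-shape}, with $s=t=0$, so $q$ is an EO function supported on the three complementary pairs of weight two, each with equal values $a,b,c$. Since the problem has native $N$ edges, $\Holant(N\mid\{q\})$ is exactly a six-vertex model on general 4-regular graphs. So I will first apply the dichotomy to the single set $\{q\}$. Then, by replacing each $q$ vertex with its fixed realizing gadget, the hardness transfers to $\Holant(N\mid\Gamma)$; the finitely many known nonzero scalars are compensated by Lemma~\ref{zh:lem:per-signature-scaling}.

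The key step is checking that none of the tractable exceptions of that theorem applies. There are three: $q\in\mathcal A$ (possibly after the theorem's allowed transformations), $q\in\mathcal P$, or some complementary pair contains a zero entry. The third fails immediately because $abc\ne0$.

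For the first two I will use a support-size argument. The support of $q$ is exactly the six weight-two inputs. A nonzero function in $\mathcal A$ has an affine subspace as its support, so the support size is a power of $2$. A nonzero function in $\mathcal P$ has support equal to a product of independent blocks, each block contributing either a fixed value or two complementary points, so again the size is a power of $2$. Since $6$ is not a power of $2$, both memberships are excluded.

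The main obstacle is bookkeeping. I have to confirm that the version of the six-vertex theorem being cited is stated for $\Holant(N\mid\cdot)$ with general (non-planar) graphs and complex weights. I also have to confirm that its $\mathcal A$ and $\mathcal P$ conditions are unaffected by any holographic normalization the theorem builds in, for example conditions phrased via a transformed signature. If they are phrased after such a transformation, I will check that the transformation is a diagonal or bit-flip one, which preserves support size, so the power-of-$2$ argument still applies.
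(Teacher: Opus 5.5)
Your proposal is correct and follows the paper's proof: apply the Cai--Fu--Xia six-vertex dichotomy to $\Holant(N\mid\{q\})$, exclude $\mathcal A$ and $\mathcal P$ because a six-point support is not a power of $2$, exclude the zero-entry exception by $abc\ne0$, and transfer hardness to $\Gamma$ by gadget substitution. One small slip: the paper states the third exception as a zero in \emph{each} complementary pair, not in some pair; here no pair contains a zero, so the exception fails under either reading.
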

\begin{proof}
Cai--Fu--Xia 的六顶点二分\cite[Theorem~3.1]{zh:cai2018sixvertex}
适用于一般图上的 $\Holant(N\mid\{q\})$。
其例外为 $q\in\mathcal A$、$q\in\mathcal P$，或三个互补对中每对至少有一项为零。
这里 $q$ 的支撑大小恰为六；$\mathcal A$ 的非零支撑是仿射空间，
$\mathcal P$ 的非零支撑是互补两点块的独立乘积，两类的支撑大小都是 $2$ 的幂。
所以前两项不可能，第三项也不成立。因此单 $q$ 问题已困难；
用实际构件代回把困难性传给整个 $\Gamma$。
\end{proof}

\begin{lemma}[两个中心的链式出口]\label{zh:lem:c1-general-two-centers}
设 $s=t=0$，恰有两个中心非零。置换端口后写成
\[
 q=aA+bB,\qquad
 A=\delta_{0101}+\delta_{1010},\quad
 B=\delta_{0110}+\delta_{1001},\qquad ab\ne0.
\]
若 $a=b$，则 $q=aN_{12}N_{34}$；若 $a=-b$，则 $q=aY_{12}Y_{34}$。
在其余情形，任意多次使用 $A$ 的实例均 Turing 归约到原问题，
从而整个函数集已有可判定二分。
\end{lemma}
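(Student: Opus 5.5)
The identities for $a=\pm b$ are checked entrywise, and for $a\ne\pm b$ the plan is to realize the powers $q_m=a^mA+b^mB$ with a chain gadget and then isolate $A$ by interpolation. The answer to a target instance with $M$ copies of $A$ appears as the leading coefficient of a polynomial in $r^m$, where $r=a/b$, so we need $r$ not to be a root of unity. When $r$ is a root of unity other than $\pm1$, we first convert to a new two-center gadget whose ratio is not a root of unity. After adjoining $A$, the complementary-carrier lemma (Lemma~\ref{zh:lem:cyc-complement-carrier}) finishes the proof.

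\emph{Entrywise identities.} With $a=b$, the support $\{0101,1010,0110,1001\}$ is exactly where $x_1\ne x_2$ and $x_3\ne x_4$, so $q=aN_{12}N_{34}$. With $a=-b$, compute $Y_{12}Y_{34}$ on these four inputs: the values are $+1$ on $0101$ and $1010$, and $-1$ on $0110$ and $1001$. Hence $q=aY_{12}Y_{34}$.

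\emph{Chain gadget.} Flatten $q$ along $13\mid24$, so that $Q=\diag(a,b,b,a)J$ with $J=N\otimes N$. Chain $m$ copies, joining each copy to the next by two $N$ edges. Since $J^2=I$ and $J$ commutes with this diagonal, we get $(QJ)^{m-1}Q=\diag(a^m,b^m,b^m,a^m)J$. So the chain realizes $q_m=a^mA+b^mB$ with $O(m)$ vertices.

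\emph{Interpolation when $r$ is not a root of unity.} Replace all $M$ occurrences of $A$ in a target instance by $q_m$ and divide by $b^{mM}$. By multilinear expansion the result is $\sum_k c_k r^{mk}$, which is a polynomial of degree at most $M$ in $r^m$; its leading coefficient $c_M$ is the target value. Because $r$ is not a root of unity, the evaluation points $r^m$ for $m=1,\dots,M+1$ are distinct. A Vandermonde solve then recovers $c_M$ using $M+1$ oracle calls, each with $O(M^2)$ gadget vertices, all arithmetic staying in a fixed number field. This gives $\Holant(N\mid\Gamma\cup\{A\})\le_T\Holant(N\mid\Gamma)$. Finally apply Lemma~\ref{zh:lem:cyc-complement-carrier} to $A$, which is a complementary two-point function with nonzero coefficients.

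\emph{Conversion when $r$ is a root of unity, $r\ne\pm1$.} This is the step I expect to be the main obstacle. I would use the $12\mid34$ flattening, where $q$ is nonzero only on the middle block $\left(\begin{smallmatrix}a&b\\b&a\end{smallmatrix}\right)$. Joining two copies by two $N$ edges gives the middle block $\left(\begin{smallmatrix}2ab&a^2+b^2\\a^2+b^2&2ab\end{smallmatrix}\right)$, which is again a two-center function.
\begin{itemize}
\item If $r=\pm i$, then $a^2+b^2=0$ while $2ab\ne0$. The result is already a nonzero multiple of a complementary two-point function, so Lemma~\ref{zh:lem:cyc-complement-carrier} applies directly.
\item Otherwise both entries are nonzero, and their ratio is $(a^2+b^2)/(2ab)=(r+r^{-1})/2=\cos\theta$ with $r=e^{i\theta}$. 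Since $r\ne\pm1,\pm i$, this is a real number in $(-1,1)\setminus\{0\}$, hence not a root of unity, so the interpolation step above applies to the new gadget.
\end{itemize}
The remaining care is bookkeeping: check which of the two complementary pairs each center of the new gadget occupies in the original labeling, so that interpolation isolates a genuine complementary two-point function, and compensate every known nonzero scalar.
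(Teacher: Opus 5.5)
Your proposal is correct and follows the paper's proof essentially step for step: the same $13\mid24$ chain giving $q_m=a^mA+b^mB$, Vandermonde interpolation in $r^m$ when $r$ is not a root of unity, the $12\mid34$ two-copy conversion that yields either a multiple of a complementary two-point function (when $r=\pm i$) or a gadget with real ratio $\cos\theta$, and finally the complementary-carrier lemma applied to $\Gamma\cup\{A\}$. The bookkeeping you flag at the end is immediate: in the original labeling, the two-copy gadget is exactly $2ab\,A+(a^2+b^2)B$.
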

\begin{proof}
按 $13\mid24$ 展平，有
\[
 Q=\diag(a,b,b,a)J,\qquad J=N\otimes N.
\]
长度为 $m$ 的两线 $N$ 链给出实际构件
\begin{equation}\label{zh:eq:c1-general-two-center-powers}
 q_m=a^mA+b^mB.
\end{equation}
设 $r=a/b$ 不是单位根。把目标函网中的 $M$ 个 $A$ 点结合为 $A^{\otimes M}$，
其余部分作为固定上下文。全部换成 $q_m$ 后，将输出除以 $b^{mM}$，便得到一个次数至多为 $M$ 的
多项式在 $r^m$ 上的值。取 $m=1,\ldots,M+1$，这些点互不相同；
普通 Vandermonde 插值的最高次系数就是目标实例值。
这给出 $O(M)$ 次查询，每次 $O(M^2)$ 份固定构件。
由此得到 $\Gamma\cup\{A\}$ 的任意多次 Turing 访问。
引理~\ref{zh:lem:cyc-complement-carrier} 分类包含互补两点四元 $A$ 的整个函数集，
归约复合后返回原问题。

现设 $r$ 是单位根且 $r\ne\pm1$。
改用 $12\mid34$ 展平，非零中间块是
$\left(\begin{smallmatrix}a&b\\b&a\end{smallmatrix}\right)$。
两份 $q$ 用两条 $N$ 边连接，所得中间块为
\begin{equation}\label{zh:eq:c1-general-root-conversion}
 \begin{pmatrix}a&b\\b&a\end{pmatrix}
 \begin{pmatrix}0&1\\1&0\end{pmatrix}
 \begin{pmatrix}a&b\\b&a\end{pmatrix}
 =\begin{pmatrix}2ab&a^2+b^2\\a^2+b^2&2ab\end{pmatrix}.
\end{equation}
若 $r=\pm i$，这已经是互补两点四元的非零倍数。
否则新的两个系数均非零，其比值之一为
\[
 \frac{a^2+b^2}{2ab}=\frac{r+r^{-1}}2.
\]
写 $r=e^{i\theta}$，该数为非零实数 $\cos\theta$，严格介于 $-1$ 与 $1$ 之间，
因而不是单位根。对这个实际构件应用上一段即可。
最后 $a=\pm b$ 的两个张量等式逐项核验成立。
\end{proof}

\begin{proof}[命题~\ref{zh:prop:c1-quaternary-interface} 的证明]
取一个真四元函数，用引理~\ref{zh:lem:c1-general-q4-shape} 写出它的形式。
两个非零端点由引理~\ref{zh:lem:c1-general-two-endpoints} 处理，
恰一个非零端点由引理~\ref{zh:lem:c1-general-jordan} 处理。
其余为纯 EO。
三个非零中心由引理~\ref{zh:lem:c1-general-three-centers} 给出困难性；
恰一个非零中心是互补两点四元，由引理~\ref{zh:lem:cyc-complement-carrier} 处理。
两个非零中心由引理~\ref{zh:lem:c1-general-two-centers} 处理：
$a=\pm b$ 时均可二二分解，与真四元条件矛盾；其余情形给出二分。
没有非零中心则函数为零，已被排除。
\end{proof}

\subsubsection{\texorpdfstring{$C_k$ 情形（$k\ge3$）}{Ck 情形（k≥3）}}\label{zh:sec:cyclic}
本节采用总框架的可用函数约定：构件及其非零张量因子都计入可用函数。
二元封闭条件是，全部非零可用二元函数均满秩，其行列式归一化代表的射影群为
$\Ggroup\cong C_k$，$k\ge3$。有限个二元方向已一次加入当前函数集。
证明中选用的其他函数也只有有限多个，按总框架同时加入后进行构件计算。
记 $\mu_k=\{z\in\mathbb C:z^k=1\}$ 为 $k$ 次单位根的集合。
有非迷向特征向量时，正交三角化和转置封闭把整个循环群化为对角矩阵。
若 $k$ 为偶数，其二阶陪集具有对称迹零代表，故已进入二元不等函数的已有二分。
以下分别在等边对角奇 $k$ 模型与迷向特征基给出的原生 $N$ 边模型中工作，
先讨论上集的降元，再讨论下集的四元出口，最后合并两集的结论。

称一个非零函数为\emph{不可分解}，若在任意端口置换及非零标量调整后，
都不能写成两个正元数函数的张量积。由分解定理，未解决情形中各个非零因子都可用。
因此，上集的四元函数既然能够二二分解，就必为本群二元函数的配对积。

\begin{theorem}[原生不等边模型的四元二分\cite{zh:liu2026disequality}]
\label{zh:thm:cyc-external-q4}
设有限代数函数集 $\Gamma$ 的边核为 $N$，且普通实现一个张量不可分解四元函数 $q$，满足
\[
 q_x=0\quad(|x|\text{ 为奇数}),\qquad q_{0000}q_{1111}\ne0.
\]
则整个 $\Gamma$ 问题有可判定的 $\FP/\sharpP$-hard 二分。
其易算条件是整个函数集满足文献的七项谓词 $\operatorname{Tract}(K\Gamma)$。
\end{theorem}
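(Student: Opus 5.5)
The plan is to reduce the statement directly to the Quaternary $K$-Holant dichotomy of~\cite{en:liu2026disequality}, so almost all of the work is checking that its hypotheses and conventions match ours. The intended route is:
\begin{enumerate}[label=\textup{(\arabic*)}]
 \item Since $q$ is only ordinarily realized rather than a member of $\Gamma$, first adjoin it. The fixed gadget for $q$, together with the finitely many auxiliary functions already adjoined under the framework, gives
 \[
  \Holant(N\mid\Gamma\cup\{q\})\equiv_{\mathrm T}\Holant(N\mid\Gamma)
 \]
 by associativity of function-nets. Any known nonzero scalar on $q$ is compensated by Lemma~\ref{en:lem:per-signature-scaling}.
 \item Identify the model of the cited theorem with ours. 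By Proposition~\ref{en:prop:holographic-reduction} with the matrix $K$ of~\eqref{en:eq:K-basis}, $K^{\mathsf T}K=N$ shows that $\Holant(N\mid\Gamma)$ is exactly the image of an ordinary equality-edge problem on the set $K\Gamma$. Here $K\Gamma$ means applying $K^{\otimes m}$ to each $m$-ary function, and instance values agree exactly.
 \item Check the hypotheses of the quaternary theorem on $q$ in native coordinates: $q$ has even support, $q_{0000}q_{1111}\ne0$, and $q$ is tensor-indecomposable. The cited theorem states these hypotheses in the native $N$-edge coordinates. If a version refers to the transformed function $K^{\otimes4}q$ instead, indecomposability transfers because an invertible portwise transformation preserves tensor factorizations in both directions.
 \item Confirm that the cited argument never uses the ordinary wire $I$ as an available vertex function. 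This matters because in the native model only $N$ edges are free, and $I$ may not be realizable.
 \item Read off the conclusion. Either the whole set satisfies the seven-part predicate $\operatorname{Tract}(K\Gamma)$ and the problem is in $\FP$, or it is $\sharpP$-hard. Both outcomes transfer back through the Turing equivalence of step~(1).
 \item For decidability, the predicate is a finite list of membership tests (affine, product-type, local affine and related classes after fixed transformations) on finitely many functions with exact algebraic values. Each test reduces to finite support enumeration and algebraic-number arithmetic, so it is decidable from exact algebraic input.
\end{enumerate}

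The main obstacle is bookkeeping of conventions rather than new mathematics. One must be sure that ``$K\Gamma$'' in the cited predicate uses the same $K$, without conjugation, and the same direction ($K$ versus $K^{-1}$) as~\eqref{en:eq:K-basis}. One must also be sure the parity and endpoint conditions refer to the native coordinates in which $q$ is written. I would first write out $K^{\otimes4}$ applied to a function of the shape $s\delta_{0000}+t\delta_{1111}+(\text{even-weight terms})$ and compare it with the normal form assumed in the cited paper's quaternary theorem, adjusting by a fixed orthogonal or diagonal relabeling if the conventions differ.
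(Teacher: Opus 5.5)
Your proposal is correct and takes the same route as the paper, which gives no independent argument and simply invokes the Quaternary $K$-Holant dichotomy of the cited work. The paper adjoins the gadget $q$, together with the finitely many auxiliaries, by ordinary gadget equivalence; observes that the cited $K$ is exactly the matrix with $K^{\mathsf T}K=N$ used here; and notes that the cited proof never assumes $I$ is an available vertex function. Your extra bookkeeping (indecomposability transferring under portwise invertible maps, and the decidability remark) only makes explicit what the paper takes directly from the citation.
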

这里使用\cite{zh:liu2026disequality}的 Quaternary $K$-Holant dichotomy，
其证明不假设 $I$ 是可用顶点。矩阵 $K$ 与式~\eqref{zh:eq:K-basis} 完全一致。
若 $q$ 是已有构件，可把它和使用的有限个辅助加入函数集；问题普通构件等价。

\subparagraph*{迷向循环必须保留原生不等边}

若循环生成元 $R$ 的两条特征方向迷向，取 $K_0^{\mathsf T}K_0=N$ 的特征基，
并令 $D=K_0^{-1}RK_0$。
相似表示为 $D^j$，但实际二元点函数为
\begin{equation}\label{zh:eq:cyc-native-binary}
 A_j=K_0^{-1}R^jK_0^{-\mathsf T}=D^jN,
 \qquad NA_jN=ND^j,\qquad A_jN=D^j.
\end{equation}
三者分别是二元函数、两端口短路的有效矩阵和单端口传递矩阵。
因此这里的端口作用是对角加权，短路比较的是 $01,10$ 两个切片。
迷向分支均在此原生 $N$ 边模型中工作，允许 $k\ge3$ 为奇数或偶数。

对四元函数，两个不同的反对角相位探针及反对角二元输出，
迫使剩余相等输入上的两个系数分别为零。
每个四位奇重量串都能选出两个不同的比特且剩余两位相等，
因此所有四元构件均偶支撑。
写其一个自然展平为
\begin{equation}\label{zh:eq:cyc-native-q4}
 Q=\begin{pmatrix}
 s&0&0&a\\0&c&e&0\\0&f&d&0\\b&0&0&t
 \end{pmatrix}.
\end{equation}
三个有效反对角探针对中心块应用引理~\ref{zh:lem:oddq-root-ratio}，得到：
每个中心互补对同时为零或同时非零；非零时两项之比在 $\mu_k$ 中。
若两个中心对活跃，对应四项全非零，根比例引理的最后一支说明其乘积相等，
且四个值之间的比例都在 $\mu_k$ 中。
因此三个中心对全活跃时
\begin{equation}\label{zh:eq:cyc-native-center-products}
 ab=cd=ef=P\ne0.
\end{equation}
这些比例关系只涉及中心值，$s,t$ 留待四元分类时讨论。

\paragraph{上集}\label{zh:subsec:ck-upper}
上集的四元函数都是本群二元函数的配对积。
下面沿第一版的思路，对高元非配对积函数降元；相等边模型比较 $00,11$ 切片，
原生不等边模型比较 $01,10$ 切片，分别证明所需的配对结构。

\subparagraph*{相等边模型的降元}

先处理等边对角循环。证明分三步：先取得全零输入上的非零值，
再用三个对角辅助恢复切片的配对结构，最后合并三个视角，把整个函数拼回配对积。

\begin{lemma}[对角循环的降元]\label{zh:lem:cyc-diagonal-arity}
在本节等边对角循环的二元封闭条件下，$k\ge3$。
若有偶元可实现函数不属于本群的二元配对积类，
则有一个四元可实现函数不属于该类。
\end{lemma}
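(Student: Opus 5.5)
The plan is to argue by contradiction via a minimal counterexample, mirroring Theorem~\ref{en:thm:odd-arity-reduction}. Suppose every ordinarily realizable quaternary function lies in the diagonal pairing product class, and let $F$ be a realizable function outside that class of minimum even arity $2m$. Arities $0,2,4$ are excluded, so $2m\ge6$. Every realizable function of smaller even arity is then zero or a pairing product of full-rank diagonal binary functions from the group.

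The available probes are the diagonal functions $D_r=\diag(1,r)$ for $r\in\mu_k$. Since $k\ge3$, at least three pairwise nonproportional probes are available. Because every binary factor here is diagonal, the four-variable exception of Lemma~\ref{en:lem:odda-three-products} cannot occur. Its exceptional supports all lie in the weight-two layer after flips, whereas every purely diagonal pairing support contains the all-zero input. Hence any three nonproportional shortloops in a nondegenerate relation share one matching. This replaces Lemma~\ref{en:lem:odda-cap-consistency}, which needed antidiagonal actions.

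The key new step is to show $F(0^{2m})\ne0$ without antidiagonal bit flips, and this is where I expect the main obstacle. The plan is as follows. Pick a nonzero input of $F$ and, among three of its positions, two positions $p,q$ carrying equal values. Then at least one of the slices $A=F^{00}_{pq}$ and $B=F^{11}_{pq}$ is nonzero.

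\begin{itemize}
\item \textbf{Exactly one slice nonzero.} Every diagonal shortloop on $p,q$ is a multiple of that slice, so the slice is a lower-arity diagonal pairing product. Close its pairs with nonzero probes inside $F$. This leaves a binary function with exactly one nonzero diagonal entry; its off-diagonal entries also vanish, so it is a nonzero rank-one function. This contradicts the full-rank binary closure.
\item \textbf{$A,B$ dependent.} Both are proportional to a diagonal pairing product, which is nonzero at the all-zero input. Hence $A(0)\ne0$.
\item \textbf{$A,B$ independent.} The three shortloops share a matching, and Lemma~\ref{en:lem:odda-one-changing-factor} leaves at most one changing pair. Solving the linear system gives $A=E_0Q$ and $B=E_1Q$ with $Q$ an actual diagonal pairing product. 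Closing $Q$ inside $F$ yields a nonzero quaternary function, which must be a diagonal pairing product. Its all-zero value is therefore nonzero, so $E_0(00)\ne0$.
\end{itemize}

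In every surviving case $F(0^{2m})\ne0$.

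With the all-zero value secured, the remaining steps reuse the odd dihedral machinery verbatim, since those proofs use only diagonal probes, common matchings and quaternary closures. Lemmas~\ref{en:lem:odda-common-slice-factor} and~\ref{en:lem:odda-two-slice-forms} classify each pair of $00,11$ slices as antelopes or complementary rhinoceroses. Lemmas~\ref{en:lem:odda-three-views} and~\ref{en:lem:odda-two-rhinos} then give four variables $x,y,z,w$ with at least two rhinoceros views. Among the bits $x,y,z$ two are equal, and this forces $x=y$ and $z=w$ on the whole support. The common external factor of $F^{00}_{xy}$ and $F^{11}_{xy}$ is then a factor of $F$, and closing it realizes a quaternary residual. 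By minimality that residual is a diagonal pairing product, so $F$ itself is one, contradicting its choice.
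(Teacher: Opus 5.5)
Your proposal is correct and matches the paper's proof essentially step for step. It uses the same minimal counterexample, and the same exclusion of the four-variable exception: the all-zero input lies in every purely diagonal pairing support but not in a symmetric difference of two such supports. It secures $F(0^{2m})\ne0$ without antidiagonal flips by the same three-case argument, then reuses the odd dihedral slice lemmas and the closing step of the odd dihedral reduction theorem, which, as the paper also notes, need only diagonal probes.
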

\begin{proof}
反设无此四元函数，取最小非配对积函数 $F$，元数 $2m\ge6$。
最小性取在当前固定函数集的普通构件中；所有元数更小的偶元可实现函数
均为零或本群满秩对角二元函数的配对积。
先说明三个非零对角配对积若满足一条系数均非零的线性关系，必须使用同一匹配。
否则，引理~\ref{zh:lem:odda-three-products} 的支撑计数迫使第三个支撑
恰为前两个的对称差；但全零输入属于每个纯对角配对支撑，
不属于任何两者的对称差，矛盾。

从 $F$ 的一个非零输入中选三个位置，其中至少两个位置 $p,q$ 同值。
故 $F_{pq}^{00},F_{pq}^{11}$ 至少一个非零。
它们不能恰有一个非零：否则任一对角短路得到该非零切片的倍数，
所以该切片是低元对角配对积。沿它的匹配逐对选择非零闭合的探针，
在原 $F$ 上作这些短路，会留下恰有一个非零对角值的二元函数；
完整二元闭包又保证其非对角项为零，违反满秩条件。

记这两个非零切片为 $A,B$。三个探针给出 $A+t_jB$。
若 $A,B$ 相关，它们都与某个非零低元对角配对积成比例，
因而 $A(0)\ne0$，即 $F(0^{2m})\ne0$。
若二者无关，三个短路均非零且两两不成比例，刚才的论证给出同一匹配。
由引理~\ref{zh:lem:odda-one-changing-factor}，它们除至多一个配对外有共同因子，
解回切片得到 $A=E_0Q$、$B=E_1Q$，其中 $Q$ 是实际满秩对角配对积，
$E_0,E_1$ 均非零。非零闭合 $Q$，在原 $F$ 上得到一个实际四元函数。
它的 $00$ 切片非零，所以不是零函数；低元条件迫使它为满秩对角配对积。
其全零值因此非零，故 $E_0(00)\ne0$，再次得到 $F(0^{2m})\ne0$。

此后，对任意两个端口重复上述三个探针及共同因子论证。
引理~\ref{zh:lem:odda-common-slice-factor}、\ref{zh:lem:odda-two-slice-forms}
的证明仅使用对角探针、共同匹配以及实际四元闭包，故在此全部适用。
每一对 $00,11$ 切片因此为成比例的完整配对，或同一对上的互补钉位配对。
再用引理~\ref{zh:lem:odda-three-views} 和~\ref{zh:lem:odda-two-rhinos}，
选出 $x,y,z,w$ 的三个视角及至少两份互补钉位配对。
这些证明也只使用对角闭合，不使用反对角端口作用。
由布尔性，$x=z$、$y=z$、$x=y$ 至少一个成立；相应切片条件遂迫使
$x=y$ 且 $z=w$。提出共同外因子并将其非零闭合，得到实际四元剩余因子。
该因子由低元条件仍为对角配对积，所以 $F$ 本身是配对积，矛盾。
这个收尾过程与定理~\ref{zh:thm:odd-arity-reduction} 相同；本段第一步取得全零非零值后，
其余步骤只需对角辅助。
\end{proof}

\subparagraph*{原生不等边模型的降元}\phantomsection\label{zh:sec:native-cyclic-arity}

原生不等边模型的短路比较 $01,10$ 两个切片。下面也分三步证明：
先把函数分成 EO 部分与两个端点；再利用短路的共同配对，把 EO 部分的逐点幂化成
无权匹配；最后排除八元例外，恢复原来的二元权重。
所得最小反例只剩匹配部分与端点，留待下一段处理。

本段在原生 $N$ 边模型中工作，二元封闭条件给出全部非零可用二元函数，忽略非零标量后为：
\[
 A_j=D^jN,\qquad
 D=\operatorname{diag}(\lambda,\lambda^{-1}),\qquad
 \lambda=e^{\pi i/k},\qquad k\ge3,
\]
其中 $j$ 遍历整数。这里不限制 $k$ 的奇偶。记 $\mathcal P_{\mathcal B}$ 为零函数与
实际可实现的满秩反对角二元函数的配对张量积，允许非零整体标量。
每个非零二元因子的两个反对角值之比属于 $\mu_k$。
连接一个 $A_j$ 的两个端口时，实际有效核为 $NA_jN$；随着 $j$ 变化，忽略非零标量，
这些核恰可写成
\begin{equation}\label{zh:eq:nca-probes}
 K_t=\begin{pmatrix}0&1\\t&0\end{pmatrix},\qquad t\in\mu_k.
\end{equation}
因此本节的实际短路形如 $F_{pq}^{01}+tF_{pq}^{10}$。

以下考虑非零普通可实现函数 $F$，元数为 $m=2d\ge6$，并假设所有普通可实现的更低偶元函数
均属于 $\mathcal P_{\mathcal B}$。最终将取最小非配对积函数以满足此假设。
下面的逐点幂、切片及 EO 限制均用来分析函数值，构件仍按指定的二元辅助连接。

\subparagraph*{中间层与配对一致性}

\begin{lemma}[只余 EO 层与两个端点]\label{zh:lem:nca-eo-endpoints}
存在一个 EO 函数 $Q$ 及常数 $a,b$，使
\[
 F=Q+a\delta_{0^m}+b\delta_{1^m}.
\]
这里允许 $Q=0$。
\end{lemma}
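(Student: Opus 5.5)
The plan is to show that every value of $F$ at an input containing both $0$ and $1$ whose weight differs from $d$ must vanish. The remaining values, on the EO layer and at the two endpoints, can then be collected into $Q$, $a$, $b$. The only tools needed are the standing assumption that every ordinarily realizable function of smaller even arity lies in $\mathcal P_{\mathcal B}$, and the fact that the actual shortloops have the form $F_{pq}^{01}+tF_{pq}^{10}$ with $t\in\mu_k$, by \eqref{zh:eq:nca-probes}.

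First, every nonzero member of $\mathcal P_{\mathcal B}$ here is a pairing product of full-rank antidiagonal binary factors. Each pair therefore contributes exactly one $0$ and one $1$, so such a function is an EO function of its arity. Consequently every actual shortloop $F_{pq}^{01}+tF_{pq}^{10}$ is realizable of arity $m-2$, hence zero or EO.

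Next, fix an input $\alpha$ with both bit values and $|\alpha|\ne d$, and choose ports $p,q$ with $\alpha_p\ne\alpha_q$. Deleting these two bits leaves a word $\beta$ of length $m-2$ and weight $|\alpha|-1\ne d-1$, so $\beta$ lies outside the EO layer of arity $m-2$. Hence every shortloop vanishes at $\beta$:
\[
F_{pq}^{01}(\beta)+tF_{pq}^{10}(\beta)=0\qquad(t\in\mu_k).
\]
Since $k\ge3$, there are at least two distinct values of $t$, and the resulting $2\times2$ Vandermonde system forces $F_{pq}^{01}(\beta)=F_{pq}^{10}(\beta)=0$. One of these two entries is exactly $F(\alpha)$. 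Therefore $F$ vanishes off the EO layer except possibly at $0^m$ and $1^m$, which gives the stated decomposition with $Q$ the EO restriction.

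I expect no real obstacle here. The only point requiring care is that $k\ge3$, or in fact $k\ge2$, supplies two distinct probes; this is guaranteed by the binary closure condition. Everything else is bookkeeping of weights.
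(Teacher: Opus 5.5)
Your proof is correct and follows the paper's argument essentially step for step. You delete an opposite-valued pair $p,q$, so the remaining word lies off the EO layer of arity $m-2$. Every actual shortloop there is zero or a lower-arity antidiagonal pairing product, hence EO, so it vanishes at that word. Two distinct parameters $t\in\mu_k$ then force both slice values, one of which is $F(\alpha)$, to vanish.
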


\begin{proof}
任取同时含 $0,1$ 且重量不是 $d$ 的输入。在其中选一对取值相反的端口 $p,q$。
删去这两个比特后，剩余输入仍不在相应 EO 层上。每个实际短路
$F_{pq}^{01}+tF_{pq}^{10}$ 是低元反对角配对积或零，因而为 EO 函数。
在该剩余输入处，用两个不同参数 $t$ 得到两个零值方程，迫使两个切片值各自为零。
所以所有既非 EO 又非全零、全一的函数值都为零。
\end{proof}

\begin{lemma}[循环探针的共同配对]\label{zh:lem:nca-common-matching}
固定任意两个端口 $p,q$。所有非零实际短路
$F_{pq}^{01}+tF_{pq}^{10}$，$t\in\mu_k$，具有相同支撑与匹配。
\end{lemma}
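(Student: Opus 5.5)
Write $A=F_{pq}^{01}$ and $B=F_{pq}^{10}$, so that the actual shortloops are $H_t=A+tB$ with $t\in\mu_k$. The plan splits on whether $A,B$ are linearly dependent.

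If $A,B$ are dependent, every nonzero $H_t$ is a scalar multiple of one fixed function, so all nonzero shortloops share support and matching, and there is nothing more to prove.

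Suppose now that $A,B$ are independent. Then every $H_t$ is nonzero, and distinct $t$ give pairwise nonproportional functions, i.e.\ $k\ge3$ distinct projective points on the line spanned by $A,B$. Each $H_t$ is an ordinary gadget of arity $m-2$. By the lower-arity hypothesis, each is therefore a pairing product of full-rank antidiagonal binary factors. Any three of them satisfy a linear relation with all coefficients nonzero, so Lemma~\ref{zh:lem:odda-three-products} applies; it is stated for diagonal or antidiagonal full-rank factors, so antidiagonal factors are allowed. Its first alternative gives a common support, hence a common matching, for those three. It remains to rule out the four-variable exception.

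For $k\ge4$ I would use the last clause of Lemma~\ref{zh:lem:odda-three-products}. Once three pairing products on the line have different supports, every other nonzero point of the line is nonzero at all six local weight-two positions. Such a point cannot be a pairing product, since its local support is too large. Hence the line carries exactly three projective pairing-product points. This contradicts the presence of $k\ge4$ of them, so some triple, and then by transitivity all the $H_t$, share one support.

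The main obstacle is $k=3$. Here there are only three shortloops $U,V,W$, so the counting argument above does not exclude the exception. I would use the arithmetic of the factors instead. Every actual antidiagonal factor has two nonzero entries with equal $k$-th, here cubic, powers. Hence on its support each of $U,V,W$ has constant pointwise cube, say $u,v,w\ne0$. In the exception the supports pairwise intersect, but the triple intersection is empty. Write the relation as $\alpha U+\beta V+\gamma W=0$ and evaluate it at a point lying only in $\operatorname{supp}U\cap\operatorname{supp}V$, then at one lying only in $\operatorname{supp}U\cap\operatorname{supp}W$, then at one lying only in $\operatorname{supp}V\cap\operatorname{supp}W$. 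Cubing these three identities gives
\[
\alpha^3u=-\beta^3v,\qquad \alpha^3u=-\gamma^3w,\qquad \beta^3v=-\gamma^3w .
\]
The first two give $\beta^3v=\gamma^3w$, and with the third this forces $\beta^3v=0$. That contradicts $\beta\ne0$ and $v\ne0$, so the exception is impossible for $k=3$ as well, and all nonzero shortloops share support and matching.
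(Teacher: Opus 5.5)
Your proposal is correct and follows the paper's proof step for step. It uses the same split on whether $F_{pq}^{01},F_{pq}^{10}$ are dependent, the last clause of Lemma~\ref{zh:lem:odda-three-products} to rule out the exception when $k\ge4$, and the cube-power evaluation on the three pairwise-only support intersections when $k=3$; you also write out the final algebra ($\beta^3v=\gamma^3w$ against $\beta^3v=-\gamma^3w$) that the paper leaves implicit.
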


\begin{proof}
写 $A=F_{pq}^{01}$、$B=F_{pq}^{10}$。若两者相关，所有非零结果成比例。
以下设两者无关，则 $k$ 个短路均非零且射影方向互异。

若 $k\ge4$，引理~\ref{zh:lem:odda-three-products} 说明：一旦同一直线上的三个
配对积出现不同支撑，该直线上恰好只有三个配对积射影点，其余非零点在局部四变量上
有六点支撑。因此不可能同时存在至少四个不同的配对积射影点，结论成立。

只剩 $k=3$。假设三个结果 $U,V,W$ 出现四变量例外，写其非退化关系为
$\alpha U+\beta V+\gamma W=0$。每个实际反对角二元因子的两个值的三次幂相同，
所以 $U,V,W$ 在各自支撑上的逐点三次幂分别为非零常数 $u,v,w$。
例外中，三个两两支撑交集均非空，而三重交集为空。
分别在只属于 $U,V$、只属于 $U,W$、只属于 $V,W$ 的位置上代入关系并取三次幂，得
\[
 \alpha^3u=-\beta^3v,\qquad
 \alpha^3u=-\gamma^3w,\qquad
 \beta^3v=-\gamma^3w.
\]
这些等式与三者非零矛盾。因此三次根情形的例外也不可能出现。
\end{proof}

\begin{lemma}[两个反向切片的局部形式]\label{zh:lem:nca-slice-forms}
对任意端口 $p,q$，或者 $F_{pq}^{01}=F_{pq}^{10}=0$，或者可选剩余变量中的一对
$z,w$ 并写成
\begin{equation}\label{zh:eq:nca-slice-factors}
 F_{pq}^{01}=E_0(z,w)Q_{\mathrm{out}},\qquad
 F_{pq}^{10}=E_1(z,w)Q_{\mathrm{out}},
\end{equation}
其中 $Q_{\mathrm{out}}$ 是实际满秩反对角二元因子的配对积，并且恰有两种非零形式：
\begin{enumerate}
 \item $E_0=\eta_0 E$、$E_1=\eta_1 E$，其中 $E$ 为满秩反对角函数，
 其两个非零值之比属于 $\mu_k$，且 $\eta_1/\eta_0\in\mu_k$；
 \item $E_0,E_1$ 分别只在 $01,10$ 非零，或分别只在 $10,01$ 非零，
 其两个非零值之比属于 $\mu_k$。
\end{enumerate}
两种形式中的所有显示非零系数均非零。
\end{lemma}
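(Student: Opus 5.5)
The proof follows the diagonal-case Lemma~\ref{zh:lem:odda-common-slice-factor} and Lemma~\ref{zh:lem:odda-two-slice-forms}, with $01,10$ slices replacing $00,11$ and the effective kernels $K_t$ replacing diagonal probes. The plan has three steps.

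\textbf{Step 1: both slices nonzero, one common external product.} Suppose $A=F_{pq}^{01}$ and $B=F_{pq}^{10}$ are not both zero.

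If $A,B$ are linearly dependent, then every nonzero actual shortloop $A+tB$ is proportional to a single member of $\mathcal P_{\mathcal B}$, by the lower-arity assumption. Fix one pair $z,w$ of its matching and put all the other binary factors into $Q_{\mathrm{out}}$. Then $A$ and $B$ are scalar multiples of one function $E\cdot Q_{\mathrm{out}}$, where one of the two scalars may be zero.

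If $A,B$ are independent, all $k\ge3$ shortloops are nonzero and pairwise nonproportional. Lemma~\ref{zh:lem:nca-common-matching} gives them a common support and matching. Apply Lemma~\ref{zh:lem:odda-one-changing-factor} to any three of them: their factors differ on at most one pair $z,w$. Solving the two linear equations for $A,B$ gives \eqref{zh:eq:nca-slice-factors}. The factors of $Q_{\mathrm{out}}$ come from an actual shortloop decomposition, so they are actual antidiagonal factors. At this stage $E_0,E_1$ are antidiagonal but possibly degenerate.

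\textbf{Step 2: pass to a quaternary gadget.} For each factor of $Q_{\mathrm{out}}$, choose $t\in\mu_k$ so that closing that factor with $K_t$ gives a nonzero value. This is possible because the closure value is a linear polynomial in $t$ that is not identically zero, so it vanishes for at most one $t$, while $k\ge3$ values are available. Performing these closures on $F$ itself yields an actual quaternary gadget $H(p,q,z,w)$. Every closure uses an antidiagonal kernel, which kills the endpoint terms $a\delta_{0^m},b\delta_{1^m}$ of Lemma~\ref{zh:lem:nca-eo-endpoints}. Hence the $01$ and $10$ slices of $H$ on $p,q$ are a common nonzero constant times $E_0,E_1$. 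In particular $H\ne0$, so the lower-arity hypothesis gives $H\in\mathcal P_{\mathcal B}$: a nonzero product of two full-rank antidiagonal factors whose entry ratios lie in $\mu_k$.

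\textbf{Step 3: read off the two forms from the matching of $H$.}

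If the matching is $pq|zw$, then $H=G(p,q)E(z,w)$. Both slices are multiples of the full-rank factor $E$, with coefficients the two antidiagonal values of $G$. Their ratio $\eta_1/\eta_0$ therefore lies in $\mu_k$, and both are nonzero. This is the first form.

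If the matching is $pz|qw$ or $pw|qz$, fixing $(p,q)=(0,1)$ forces $(z,w)$ to a single opposite input, and $(p,q)=(1,0)$ forces the complementary one. The two values are products of one entry from each factor. Their ratio is a product of the two factor ratios, so it lies in $\mu_k$, and both values are nonzero. This is the second form.

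In both cases both slices are nonzero, which excludes exactly one slice vanishing. It also makes the possibly degenerate $E_0,E_1$ from Step 1 genuine.

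\textbf{Main obstacle.} The delicate point is Step 2's claim that the slices of $H$ are exactly proportional to $E_0,E_1$. This needs that closing $Q_{\mathrm{out}}$ inside $F$ produces no contribution from the endpoint terms and no cross-term from outside the factored form. The factorization \eqref{zh:eq:nca-slice-factors} only describes the $p,q$ opposite slices, while the $00$ and $11$ slices of $F$ on $p,q$ are nonzero only at the endpoints. I would verify directly that any antidiagonal closure of even one pair annihilates $0^m$ and $1^m$. Consequently $H$ has zero $00$ and $11$ slices except possibly at the all-equal points, and $H\in\mathcal P_{\mathcal B}$ then forces those values to vanish as well.
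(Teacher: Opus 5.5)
Your Steps 1--3 are correct and follow the paper's proof essentially line by line. The common factor $Q_{\mathrm{out}}$ comes either from the dependent case or from Lemma~\ref{zh:lem:nca-common-matching} together with Lemma~\ref{zh:lem:odda-one-changing-factor}. Closing $Q_{\mathrm{out}}$ with suitable $K_t$ then gives a nonzero quaternary $H\in\mathcal P_{\mathcal B}$, and the two forms are read off from whether $H$ is matched as $pq|zw$ or crosses.

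You should drop the ``main obstacle'' paragraph, for two reasons. First, the fact that the $01,10$ slices of $H$ equal the common closure constant times $E_0,E_1$ is pure linearity: slicing at $p,q$ commutes with contracting the other pairs. The endpoint terms have $x_p=x_q$, so they never enter these slices in the first place. Second, the verification you propose there is false. $F_{pq}^{00}$ also contains contributions from the EO part of $F$ at remaining inputs of weight $d$, not only at $0^{m-2}$. In the crossing case, $H$ is nonzero at $(p,q,z,w)=(0,0,1,1)$, so its $00$ slice does not vanish. Nothing in Steps 1--3 relies on that claim, so the proof stands once the paragraph is removed.
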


\begin{proof}
若两个切片相关且不都为零，取一个非零短路的实际配对分解并提出除一对外的因子。
若切片无关，三个短路的共同配对由引理~\ref{zh:lem:nca-common-matching} 保证，
再用引理~\ref{zh:lem:odda-one-changing-factor}，至多一个配对上的因子变化。
两种情形都得到式~\eqref{zh:eq:nca-slice-factors}，其中 $E_0,E_1$ 暂为可能退化的
反对角函数，不都为零。

对 $Q_{\mathrm{out}}$ 的每个因子选一个有效核 $K_t$，使其闭合值非零。
闭合值是非零一次式，至多一个 $t$ 使其为零。把这些实际辅助连到原函数 $F$ 上，
得到非零四元函数 $H(p,q,z,w)$。其 $01,10$ 切片是同一非零常数乘 $E_0,E_1$。
低元假设使 $H$ 为实际反对角配对积。

若 $H$ 的匹配为 $pq|zw$，两切片为同一满秩 $zw$ 因子的倍数；
其比例由 $pq$ 因子的两个反对角值给出，属于 $\mu_k$。
若匹配交叉，则 $p,q$ 取相反值时，$z,w$ 被分别固定为两个互补的反向输入。
对应函数值之比是两个二元因子的比值之积，也属于 $\mu_k$。
这证明两种形式，并同时排除了恰有一个切片为零的可能。
\end{proof}

\subparagraph*{逐点幂的全部不等短路}

对任意函数 $f$，记 $f^{[k]}(x)=f(x)^k$，表示逐点幂而非矩阵幂。
以下先设引理~\ref{zh:lem:nca-eo-endpoints} 中的 $Q\ne0$，并令
\[
 S=Q^{[k]}.
\]
它是非零 EO 函数。记 $\partial^{pq}$ 为无权二元不等短路，与
第~\ref{zh:sec:c1-strict}~节一致。

\begin{lemma}[逐点幂的短路为匹配]\label{zh:lem:nca-power-caps}
对每个 $p\ne q$，存在非零常数 $\theta_{pq}$ 及剩余变量上的完美匹配 $M_{pq}$，使
\begin{equation}\label{zh:eq:nca-all-power-caps}
 \partial^{pq}S=\theta_{pq}P_{M_{pq}},\qquad
 P_M=\prod_{\{u,v\}\in M}[x_u\ne x_v].
\end{equation}
\end{lemma}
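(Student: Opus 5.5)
The plan is to compute each unweighted shortloop of $S=Q^{[k]}$ slice by slice, and only afterwards rule out vanishing shortloops by a connectivity argument on port pairs.

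First, note that the endpoint terms $a\delta_{0^m},b\delta_{1^m}$ vanish on the $01$ and $10$ slices at any pair $p,q$. So $Q_{pq}^{01}=F_{pq}^{01}$ and $Q_{pq}^{10}=F_{pq}^{10}$. Since pointwise powers commute with taking slices, we get
\[
 \partial^{pq}S=(F_{pq}^{01})^{[k]}+(F_{pq}^{10})^{[k]}.
\]
This is a sum of pointwise powers of the two slices, not a shortloop of a power. If both slices vanish, the shortloop is zero. Otherwise I invoke Lemma~\ref{en:lem:nca-slice-forms} and check each of its two forms.

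Take $Q_{\mathrm{out}}$ to be a product of full-rank antidiagonal factors whose two entries have ratio in $\mu_k$. Its $k$th pointwise power is then a nonzero constant times an unweighted matching $P_{M'}$.

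In the first form, $E^{[k]}$ is a nonzero multiple of $N$ for the same reason, and $\eta_0^k=\eta_1^k$. The sum is therefore $2\eta_0^k$ times a constant times $[x_z\ne x_w]P_{M'}$.

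In the second form, $E_0^{[k]}$ and $E_1^{[k]}$ are supported on the complementary inputs $01$ and $10$ of $z,w$, with equal nonzero $k$th powers. Their sum is again a nonzero multiple of $N(z,w)$.

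Hence every shortloop of $S$ is either zero or a nonzero multiple of an unweighted disequality matching product.

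The main obstacle is excluding zero shortloops, and I plan to use a commuting argument. Suppose $\partial^eS=0$ and $f$ is a pair disjoint from $e$. Then $0=\partial^f\partial^eS=\partial^e\partial^fS$. If $\partial^fS\neq0$, it is a nonzero matching product, and Lemma~\ref{en:lem:c1-matching-merge} shows that any further disequality shortloop of it is nonzero, with coefficient $2$ or $1$. This is a contradiction, so zeros propagate along disjoint pairs.

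The graph on port pairs with adjacency "disjoint" is connected when $m\ge6$. Two intersecting pairs cover only three ports, so a third pair avoiding all three exists and links them. Consequently, a single zero shortloop would force every shortloop of $S$ to vanish.

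Finally I rule out that case. $S$ is a nonzero EO function because pointwise powers preserve support. Choose a nonzero input of $S$ and three of its positions carrying the local word $100$. Let $a,b,c$ be the values of $S$ at the three placements of the $1$, with $a\ne0$. The corresponding shortloop values are $a+b$, $a+c$, $b+c$, and these cannot all vanish. This is the same three-local-sums argument used in Lemma~\ref{en:lem:c1-eo-symmetry}. So every $\partial^{pq}S$ is nonzero, which gives \eqref{en:eq:nca-all-power-caps}.
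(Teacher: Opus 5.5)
Your proposal is correct and follows essentially the same route as the paper's proof. You rewrite $\partial^{pq}S$ as $(F_{pq}^{01})^{[k]}+(F_{pq}^{10})^{[k]}$ and check the two local forms of the slice lemma; you then exclude zero shortloops by commuting disjoint shortloops, using connectivity of the disjointness graph for $m\ge6$, and applying the three-local-sums argument on a nonzero EO input.
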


\begin{proof}
端点项在 $01,10$ 切片上为零，因此这些切片同时也是 $Q$ 的相应切片。于是
\[
 \partial^{pq}S=(F_{pq}^{01})^{[k]}+(F_{pq}^{10})^{[k]}.
\]
这不是将短路和整体取幂交换，而是两个切片逐点幂之和。
若两切片皆零，短路为零。否则用引理~\ref{zh:lem:nca-slice-forms}。
$Q_{\mathrm{out}}^{[k]}$ 是非零常数乘无权不等匹配。
在第一种形式中，$E^{[k]}$ 是非零常数乘 $N$，而
$\eta_0^k=\eta_1^k\ne0$；相加的系数为 $2\eta_0^k$，非零。
在第二种形式中，两个非零钉位值的 $k$ 次幂相同且非零，相加也是非零常数乘 $N$。
所以每个短路为零或非零常数乘无权不等匹配。

现在排除零短路。若 $\partial^eS=0$，而与 $e$ 不交的一对端口 $f$ 满足
$\partial^fS\ne0$，则交换短路次序得到
\[
 0=\partial^f\partial^eS=\partial^e\partial^fS.
\]
右端却是非零不等匹配积的一次不等短路，仍非零：沿原匹配边短路产生因子 $2$，
跨两条匹配边短路则合并这两条边且产生因子 $1$。
因此一个零短路迫使所有与它不交的短路都为零。

所有二元端口对组成的“不交”图连通。事实上，两对若不交就相邻；若相交且不同，
它们的并只有三个端口，因 $m\ge6$，可在并外选一对，得到长度二的路径。
故一个零短路会迫使全部短路为零。
非零 EO 函数不可能全部短路为零：在一个非零输入中选三个位置为 $100$，
把其三个置换位置的值记为 $a,b,c$，其中 $a\ne0$；相应短路值为
$a+b,a+c,b+c$，不能全为零。于是所有短路均非零，得到
式~\eqref{zh:eq:nca-all-power-caps}。
\end{proof}

\begin{lemma}[有理化与翻转对称]\label{zh:lem:nca-rational-power}
存在非零常数 $\theta$，使 $R=S/\theta$ 是非零有理值、0--1 翻转对称 EO 函数，
且其每个不等短路仍为非零常数乘无权不等匹配。
\end{lemma}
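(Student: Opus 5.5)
The plan is to normalize the shortloop coefficients $\theta_{pq}$ from Lemma~\ref{zh:lem:nca-power-caps} against a single reference, show that they then differ only by powers of $2$, and recover rationality and flip symmetry of $S$ itself from the injectivity of the joint shortloop map on the EO subspace.

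\textbf{Step 1: ratios of coefficients.} Take two disjoint port pairs $e,f$ and commute the shortloops: $\partial^f\partial^eS=\partial^e\partial^fS$. Both sides start from \eqref{zh:eq:nca-all-power-caps}. By Lemma~\ref{zh:lem:c1-matching-merge}, a disequality shortloop of a matching product either closes a matching edge, giving coefficient $2$, or merges two edges, giving coefficient $1$. Hence both sides equal $\theta_e c_eP_{M'}$ and $\theta_fc_fP_{M''}$ with $c_e,c_f\in\{1,2\}$. Both are nonzero. Since the support of $P_M$ determines $M$, the matchings agree and so do the coefficients, which gives $\theta_e/\theta_f\in\{1/2,1,2\}$.

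\textbf{Step 2: normalization.} The disjointness graph on port pairs is connected, as already shown in the proof of Lemma~\ref{zh:lem:nca-power-caps} using $m\ge6$. Fix $e_0$ and put $\theta=\theta_{e_0}$. Following paths in this graph gives $\theta_e/\theta\in2^{\mathbb Z}$ for every $e$. Thus every shortloop $\partial^eR$ of $R=S/\theta$ is a rational multiple of an unweighted matching product, still nonzero.

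\textbf{Step 3: rationality of $R$.} Treat the values of $R$ on the EO layer as unknowns. The equations $\partial^{ij}R=\text{(known rational function)}$ form a linear system with integer coefficients and rational right-hand side. I would show that this joint map is injective on EO functions. Suppose a nonzero EO function $R'$ of arity at least four has all its disequality shortloops zero. Pick a nonzero input and three positions reading $100$, and let $a,b,c$ be the values at the three placements of the $1$, with $a\ne0$. The corresponding shortloop values $a+b$, $a+c$, $b+c$ cannot all vanish, which is a contradiction. Injectivity together with a rational consistent system means Gaussian elimination over $\mathbb Q$ produces the unique solution, so $R$ is rational-valued.

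\textbf{Step 4: flip symmetry.} Unweighted disequality matching products are invariant under complementing all bits, and disequality shortloops commute with global complementation. Hence $R(x)-R(\bar x)$ is an EO function all of whose shortloops vanish. Injectivity then forces $R(x)=R(\bar x)$.

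The only step needing care is the injectivity argument of Step 3; it is the same three-local-sums trick used in Lemmas~\ref{zh:lem:c1-eo-symmetry} and~\ref{zh:lem:c1-common-phase}, so I expect no real obstacle there.
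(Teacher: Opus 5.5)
Your proposal is correct and follows the paper's proof essentially step for step. Like the paper, you commute shortloops on disjoint pairs to get $\theta_e/\theta_f\in\{1/2,1,2\}$, normalize along the connected disjointness graph so that all ratios lie in $2^{\mathbb Z}$, and then use injectivity of the joint disequality-shortloop map on the EO subspace (the three-local-sums argument) twice: once to obtain rationality by Gaussian elimination, and once to conclude $R(x)=R(\bar x)$.
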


\begin{proof}
对不交端口对 $e,f$ 交换两次短路。匹配积的一次短路只产生系数 $1$ 或 $2$，所以
式~\eqref{zh:eq:nca-all-power-caps} 给出
\[
 \theta_e c_e P_{M'}=\theta_f c_f P_{M''},\qquad c_e,c_f\in\{1,2\}.
\]
两边非零，支撑相等，因而匹配相等且系数相等。于是
$\theta_e/\theta_f\in\{1/2,1,2\}$。
沿上述不交图取路径，选定一个 $e_0$ 并令 $\theta=\theta_{e_0}$，便得
$\theta_e/\theta\in2^{\mathbb Z}$ 对所有 $e$ 成立。
因此 $R=S/\theta$ 的全部短路都是有理值。

把所有 EO 输入上的值作为未知数，所有不等短路组成一个有理系数线性系统。
这个联合线性映射在 EO 子空间上单射：其核中的非零 EO 函数会与上面的三个局部和
论证矛盾。系统有唯一解且右端有理，所以高斯消元说明 $R$ 本身有理。

每个无权不等匹配积均翻转对称，而不等短路与逐位翻转交换。
因此 $R(x)-R(\bar x)$ 的全部短路为零；它仍是 EO 函数，单射性遂给出
$R(x)=R(\bar x)$。
\end{proof}

\subparagraph*{排除八元例外并恢复原函数的权重}

由引理~\ref{zh:lem:nca-rational-power}，可以对 $R$ 使用
定理~\ref{zh:thm:c1-eo-matching-reduction}。结论是 $S$ 为非零常数乘一个无权不等匹配，
或者在变量置换后为八元例外 $F'_8$ 的非零倍数。

\begin{lemma}[八元例外不可能出现]\label{zh:lem:nca-exclude-f8}
这里 $S$ 不可能属于八元例外。因此存在非零常数 $c$ 和完美匹配 $M$，使
$S=cP_M$。
\end{lemma}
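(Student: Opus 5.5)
The plan is to compare the support structure of $F'_8$ with the slice structure that Lemma~\ref{zh:lem:nca-slice-forms} forces on $Q$. Pointwise powers preserve support, so if $S$ were a nonzero multiple of a variable permutation of $F'_8$, then $\supp(Q)=\supp(F'_8)$ after that permutation. I will derive a contradiction from the sizes and the fixed-coordinate pattern of the $01$ slices of $Q$. Then I will apply Theorem~\ref{zh:thm:c1-eo-matching-reduction} to $R=S/\theta$ (justified by Lemma~\ref{zh:lem:nca-rational-power}), which leaves $S=cP_M$ as the only option.

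First I describe $\supp(F'_8)$ concretely. Label the eight coordinates by the points of $\mathbb F_2^3$. The four parity checks in \eqref{zh:eq:c1-f8} are independent, so their solution space is four-dimensional. It contains all sixteen affine truth tables $z\mapsto\alpha\cdot z+\beta$, so it equals that space. Removing the two constant tables (weights $0$ and $8$) leaves exactly the fourteen weight-four supports, namely the nonconstant affine functions.

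Next I fix two distinct coordinates $v,w$ to the values $0,1$. This imposes two independent linear conditions on $(\alpha,\beta)$, leaving exactly four truth tables, all nonconstant. For any third coordinate $u$, the evaluation functional at $u$ does not lie in the span of the evaluations at $v$ and $w$. It is not $0$, and it is not the sum of those two because its constant coefficient is $1$. It is not either of them because $u\ne v,w$. Hence every remaining coordinate takes both values among the four tables. In other words, every $01$ slice of $Q$ has support of size four and fixes none of the six remaining ports.

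Finally I compare with Lemma~\ref{zh:lem:nca-slice-forms}. That lemma writes the $01$ slice as $E_0(z,w)Q_{\mathrm{out}}$, where $Q_{\mathrm{out}}$ consists of two full-rank antidiagonal factors on the other four ports and so has support size four. The slice must therefore have $|\supp E_0|=1$. So $E_0$ is of the pinned second form and fixes $z,w$, contradicting the previous paragraph.

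I expect the main obstacle to be bookkeeping rather than substance. One point is checking that the variable permutation relating $S$ to $F'_8$ commutes with taking slices, which it does after relabeling. The other is confirming that the slice is nonzero, so that Lemma~\ref{zh:lem:nca-slice-forms} applies nondegenerately. Nonzeroness follows because the support has four words with $x_p=0,x_q=1$ for any $p\ne q$.
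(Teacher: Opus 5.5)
Your proposal is correct and follows the paper's proof essentially step for step. You identify $\supp(F'_8)$ with the fourteen nonconstant affine truth tables on $\mathbb F_2^3$, show that pinning two coordinates to $0,1$ leaves four tables on which every other coordinate takes both values, and contradict the count $|\supp E_0|\cdot 4=4$ forced by Lemma~\ref{zh:lem:nca-slice-forms}. The conclusion $S=cP_M$ then follows, as in the paper, from Theorem~\ref{zh:thm:c1-eo-matching-reduction} applied via Lemma~\ref{zh:lem:nca-rational-power}.
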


\begin{proof}
式~\eqref{zh:eq:c1-f8} 的十四个支撑输入可以视为：在 $\mathbb F_2^3$ 的八个点上，
所有非恒定仿射函数 $a+b\cdot v$ 的真值表。为验证这一表示，四个线性奇偶校验方程独立，
其解空间维数为四；全部十六个仿射真值表都满足这些方程，因此恰好给出该空间。
去掉重量为零和八的两个常数函数，便留下十四个重量为四的输入。

固定任意两个不同坐标 $v,w$ 为 $0,1$，在四个仿射系数上施加两个独立方程，
恰留下四个真值表。任意第三坐标 $u$ 的求值线性式不在前两个求值式的线性张成中：
它的常数项系数为一，排除了零及前两式的和，而 $u\ne v,w$ 排除了前两式本身。
所以每个剩余坐标在这四张表上都能取零和一。

若 $S$ 是 $F'_8$ 的非零倍数，逐点幂不改变支撑，所以 $Q$ 也具有该支撑。
其任一 $01$ 切片有四个支撑输入，且六个剩余端口均无固定值。
另一方面，引理~\ref{zh:lem:nca-slice-forms} 把该切片写成 $E_0Q_{\mathrm{out}}$。
此时 $Q_{\mathrm{out}}$ 有两个满秩二元因子，支撑大小为四。故 $E_0$ 的支撑只能有
一个输入，从而固定了两个剩余端口，与刚才的性质矛盾。
\end{proof}

\begin{lemma}[恢复实际加权匹配]\label{zh:lem:nca-recover-weighted}
若 $Q\ne0$，则 $Q\in\mathcal P_{\mathcal B}$，其中的二元因子确实普通可实现。
\end{lemma}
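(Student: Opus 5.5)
We work from Lemma~\ref{zh:lem:nca-exclude-f8}, which gives $S=Q^{[k]}=cP_M$ with $c\ne0$ and $M$ a perfect matching. Pointwise powers preserve support, so $\supp(Q)=\supp(P_M)$. Hence $Q$ is nonzero at every one of the $2^d$ inputs on which each edge of $M$ carries opposite bits, and zero elsewhere.

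The first step is to show that the two slices across a matching edge are proportional. Fix an edge $\{p,q\}\in M$. Since the endpoint terms vanish on the $01,10$ slices, $F_{pq}^{01}=Q_{pq}^{01}$ and $F_{pq}^{10}=Q_{pq}^{10}$. Each of these slices has full support, equal to the support of $P_{M\setminus\{pq\}}$, of size $2^{d-1}$.
\begin{itemize}
\item Apply Lemma~\ref{zh:lem:nca-slice-forms} to this pair, giving $F_{pq}^{01}=E_0Q_{\mathrm{out}}$ and $F_{pq}^{10}=E_1Q_{\mathrm{out}}$.
\item Here $Q_{\mathrm{out}}$ is a pairing product on $2d-4$ variables, so its support has size $2^{d-2}$. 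For the slice to have support $2^{d-1}$, $E_0$ must have two support points.
\item This rules out the complementary-pin form of that lemma. Only the first form remains: $E_0=\eta_0E$, $E_1=\eta_1E$ with $E$ full-rank antidiagonal.
\end{itemize}
So $F_{pq}^{10}=(\eta_1/\eta_0)F_{pq}^{01}$, and the ratio lies in $\mu_k$.

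The second step converts this into a product formula. Regard each edge $e\in M$ as a logical bit recording which of its two ports is $1$. Every support point of $Q$ is an assignment of these $d$ logical bits, and all values there are nonzero. The proportionality just shown says that flipping the bit of $e$ multiplies $Q$ by a constant $\rho_e$ that does not depend on the other bits. Starting from one fixed support input and flipping bits one at a time gives $Q=\gamma\prod_{e\in M}B_e$. Each $B_e$ is antidiagonal with entries $1$ and $\rho_e$, and $\rho_e\in\mu_k$.

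The third step checks that each factor is actually realizable, which membership in $\mathcal P_{\mathcal B}$ requires.
\begin{itemize}
\item In the original $F$, close every other edge $e'\in M$ with an available kernel $K_t$. Each closure value is a nonzero linear form in $t$, so for each edge at most one $t\in\mu_k$ is bad and a good $t$ exists.
\item Each such antidiagonal closure kills the terms $a\delta_{0^m}$ and $b\delta_{1^m}$, because the endpoint inputs have equal bits on every pair.
\item The resulting ordinary binary gadget is therefore a nonzero multiple of $B_e$. By the binary closure condition it lies in the actual binary set.
\end{itemize}
Hence $Q\in\mathcal P_{\mathcal B}$.

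I expect the first step to be the main obstacle. It requires checking carefully that the support count excludes the pinned form, including correct bookkeeping of which variables $z,w$ Lemma~\ref{zh:lem:nca-slice-forms} selects. Once proportionality is established, the remaining steps are routine.
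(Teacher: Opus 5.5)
Your proposal is correct and follows the paper's proof step for step. The support count $2^{d-2}$ for $Q_{\mathrm{out}}$ forces the full-rank form in Lemma~\ref{zh:lem:nca-slice-forms}, so the two slices are proportional; the logical-bit argument then gives $Q=\gamma\prod_{e\in M}B_e$; and closing the other matching edges with suitable kernels $K_t$, which kill the endpoint terms, realizes each $B_e$ as an ordinary binary gadget.
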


\begin{proof}
由 $S=cP_M$ 和逐点幂的支撑保持性，$Q$ 的支撑恰为 $P_M$ 的完整支撑。
对原匹配 $M$ 中的任意一对 $p,q$，两个切片 $F_{pq}^{01}=Q_{pq}^{01}$ 和
$F_{pq}^{10}=Q_{pq}^{10}$ 都具有剩余匹配的完整支撑，大小为 $2^{d-1}$。
在引理~\ref{zh:lem:nca-slice-forms} 中，外部因子支撑大小为 $2^{d-2}$，
所以局部因子必须满秩，不能属于互补钉位支。于是这两个切片相互成比例。

以每条匹配边的两个方向为一个逻辑比特。刚得到的比例说明，改变任一逻辑比特时，
函数值之比与其余逻辑比特无关。因完整匹配支撑上的值都非零，可逐比特改变一个固定输入，
得到
\[
 Q=\gamma\prod_{e\in M}B_e,\qquad \gamma\ne0,
\]
其中各 $B_e$ 为满秩反对角函数。局部形式已经给出其比值属于 $\mu_k$。

还可直接验证因子的实际可实现性。在原函数 $F$ 上，沿 $M$ 的其余匹配边选取
闭合值非零的有效核 $K_t$。每个被闭合的反对角核都杀掉全零、全一端点项，
因此留下 $B_e$ 的非零倍数，标量是其他因子的非零闭合值之积。
这确为普通二元构件；完整二元闭包保证其属于实际二元集合。
因此 $Q\in\mathcal P_{\mathcal B}$。
\end{proof}

\begin{theorem}[原生循环最小反例的形状]\label{zh:thm:native-cyclic-shape}
在本节二元封闭条件下，假设每个非零普通四元构件均属于
$\mathcal P_{\mathcal B}$。若存在普通可实现的偶元非配对积函数，取其中元数最小者 $F$。
则其元数为偶数 $m\ge6$，并且
\[
 F=Q+a\delta_{0^m}+b\delta_{1^m},
\]
其中 $Q=0$ 或 $Q$ 是实际加权反对角二元配对积，且 $a,b$ 至少一个非零。
此结论对所有 $k\ge3$ 成立。
\end{theorem}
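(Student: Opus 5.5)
This theorem is essentially an assembly of the lemmas of this subsection, so the plan is to verify their hypotheses in order for the minimal counterexample. First, minimality: choose $F$ of minimum arity among ordinarily realizable even-arity functions outside $\mathcal P_{\mathcal B}$. Every ordinarily realizable function of smaller even arity then lies in $\mathcal P_{\mathcal B}$. Nullary functions are trivially in $\mathcal P_{\mathcal B}$. Nonzero binary functions are multiples of the $A_j=D^jN$ by the binary closure condition, and quaternary gadgets lie in $\mathcal P_{\mathcal B}$ by hypothesis. Hence $m\ge6$, and $F$ satisfies the standing assumption under which Lemmas~\ref{en:lem:nca-eo-endpoints}--\ref{en:lem:nca-recover-weighted} were proved.

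Next, Lemma~\ref{en:lem:nca-eo-endpoints} gives the layer decomposition $F=Q+a\delta_{0^m}+b\delta_{1^m}$ with $Q$ an EO function. If $Q=0$, nothing more about $Q$ is needed. If $Q\ne0$, I would run the chain in order. Lemma~\ref{en:lem:nca-common-matching} supplies common matchings, with the $k=3$ cube argument handling the four-variable exception. Lemma~\ref{en:lem:nca-slice-forms} gives the local slice forms. Lemma~\ref{en:lem:nca-power-caps} shows every disequality shortloop of $S=Q^{[k]}$ is a nonzero multiple of an unweighted matching. Lemma~\ref{en:lem:nca-rational-power} normalizes $S$ to a rational, flip-symmetric $R$. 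This makes Theorem~\ref{en:thm:c1-eo-matching-reduction} applicable to $R$. Lemma~\ref{en:lem:nca-exclude-f8} rules out the $F'_8$ branch, and Lemma~\ref{en:lem:nca-recover-weighted} upgrades $S=cP_M$ to $Q\in\mathcal P_{\mathcal B}$ with genuinely realizable antidiagonal factors.

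Finally, the claim that $a,b$ are not both zero: if $a=b=0$, then $F=Q$ is either $0\in\mathcal P_{\mathcal B}$ or lies in $\mathcal P_{\mathcal B}$ by the preceding step, contradicting the choice of $F$. None of the steps uses the parity of $k$ beyond $k\ge3$. The only place $k$ entered in a special way was the $k=3$ case inside Lemma~\ref{en:lem:nca-common-matching}, which is already handled, so the conclusion holds for all $k\ge3$.

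The main obstacle is making sure the hypotheses of Theorem~\ref{en:thm:c1-eo-matching-reduction} really hold for $R$: arity at least six, rationality, flip symmetry, and \emph{every} shortloop nonzero and equal to a matching product. The delicate point is the nonvanishing of all shortloops, which relies on connectivity of the disjointness graph when $m\ge6$. I would double-check that this holds at $m=6$ itself, where two intersecting pairs leave exactly three ports outside their union, enough for a disjoint third pair. That is the step I would re-verify first.
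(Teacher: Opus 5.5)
Your proposal is correct and matches the paper's proof essentially step for step. Minimality gives $m\ge6$ and the lower-arity hypothesis; the EO/endpoint lemma gives the layer decomposition; when $Q\ne0$, the chain of common matching, pointwise-power shortloops, rationalization, $F'_8$ exclusion and weight recovery gives $Q\in\mathcal P_{\mathcal B}$; and $a=b=0$ would contradict the choice of $F$. The connectivity point you flag at $m=6$ is handled inside the pointwise-power shortloop lemma exactly as you describe.
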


\begin{proof}
最小性使所有更低偶元普通可实现函数属于 $\mathcal P_{\mathcal B}$，而零元、二元和四元
已排除，所以 $m\ge6$。引理~\ref{zh:lem:nca-eo-endpoints} 给出显示的分层。
若 $Q=0$，无需取逐点幂或调用 EO 结构定理；此时 $F$ 已是纯端点函数。
若 $Q\ne0$，依次应用上述配对一致性、逐点幂短路、有理化、八元排除及权重恢复引理，
得到 $Q\in\mathcal P_{\mathcal B}$。
若再有 $a=b=0$，则 $F=Q\in\mathcal P_{\mathcal B}$，与选择矛盾。
\end{proof}

\subparagraph*{端点部分的两份构件与多线插值}\phantomsection\label{zh:subsec:native-cyclic-tail}

继续使用原生 $N$ 边模型，二元封闭条件给出的函数为 $D^jN$。
端口上连接实际二元函数 $D^jN$ 的传递矩阵是 $D^j$，故对任意
$r\in\mu_k$，$\diag(1,r)$ 都是可用的单端口传递，允许已知非零标量。
以下两条引理处理定理~\ref{zh:thm:native-cyclic-shape} 得到的端点部分。
两条引理都适用于任意 $k\ge3$。

\begin{lemma}[两端点的四元支撑矛盾]\label{zh:lem:native-cyclic-two-endpoints}
设所有非零普通四元构件均属于反对角二元配对积类
$\mathcal P_{\Bgroup}$。若一个 $2d\ge6$ 元普通构件满足
\begin{equation}\label{zh:eq:native-cyclic-tail-shape}
 F=Q+a\delta_{0^{2d}}+b\delta_{1^{2d}},
\end{equation}
其中 $Q=0$，或者在某个匹配上
\[
 Q=\prod_{i=1}^{d}E_i(x_i,y_i),\qquad
 E_i=\begin{pmatrix}0&u_i\\v_i&0\end{pmatrix},\qquad u_iv_i\ne0,
\]
则 $ab=0$。这里排除双端点的结论不需要各 $E_i$ 的根单位比例条件。
\end{lemma}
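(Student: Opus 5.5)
The plan is a two-copy gadget that isolates one matching pair and leaves a quaternary function whose support is too large to be an antidiagonal pairing product. Suppose $ab\ne0$ for contradiction; the case $ab=0$ needs no argument.

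First assume $Q\ne0$ and take two whole copies of $F$. In each copy, keep the two ports $x_1,y_1$ of the first matching pair external. For every $i\ge2$, join the two copies of $x_i$ by an $N$ edge, and likewise join the two copies of $y_i$. The result is an actual ordinary quaternary gadget $H$ on the ports $x_1,y_1,x_1',y_1'$, so by hypothesis it lies in $\mathcal P_{\Bgroup}$. I will expand $H$ bilinearly into the products of the three terms of $F$ and compute each contribution.

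\emph{The $Q\otimes Q$ term.} The $N$ edges force the second copy to take the complementary bits on each internal pair. Pair $i$ therefore contributes $\sum_{x,y}E_i(x,y)E_i(1-x,1-y)=2u_iv_i\ne0$, and this term equals $C\,E_1\otimes E_1$ with $C=\prod_{i\ge2}2u_iv_i\ne0$. Its support is exactly $\{0101,0110,1001,1010\}$.

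\emph{The endpoint terms.} An endpoint paired with an endpoint survives only when the two endpoints are opposite, since every $N$ edge needs complementary bits. This gives $ab(\delta_{0011}+\delta_{1100})$.

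\emph{The mixed terms.} A term pairing an endpoint with $Q$ vanishes. The endpoint copy is constant on all ports, so the other copy is constant on every internal pair, where the antidiagonal factor $E_i$ is zero. This needs $d\ge2$, which holds because $2d\ge6$.

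Adding the three contributions gives $H=CE_1\otimes E_1+ab(\delta_{0011}+\delta_{1100})$. This function has six support inputs, whereas a nonzero pairing product of two full-rank antidiagonal factors has exactly four. That contradicts the hypothesis, so $ab=0$.

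If $Q=0$, make the same connections, keeping any one port pair of each copy external. The output is $ab(\delta_{0011}+\delta_{1100})$, which has two support points, again not four, so again $ab=0$. No root-of-unity condition on the $E_i$ is used anywhere.

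The main obstacle is the mixed-term vanishing and the orientation bookkeeping, meaning whether the two copies of the external pair line up so that the $Q\otimes Q$ term is exactly $E_1\otimes E_1$. Only the support count matters, though, and that is insensitive to these details.
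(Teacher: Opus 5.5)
Your proposal is correct and matches the paper's proof essentially step by step. The paper uses the same two-copy gadget with $N$ edges on the pairs $i\ge2$ and the same expansion $H=CE_1\otimes E_1+ab(\delta_{0011}+\delta_{1100})$, so that six support points contradict the required four, and the same two-point support argument when $Q=0$. Your remark that the vanishing of the mixed terms and of the equal-endpoint terms needs at least one internal pair ($d\ge2$) is a point the paper leaves implicit.
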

\begin{proof}
先设 $Q\ne0$。取两份整构件 $F$，各留下第一个匹配对的两个端口；
对 $i\ge2$，把两份的 $x_i$ 用一条 $N$ 边连接，$y_i$ 也对应连接。
四个外端口依次为 $x_1,y_1,x'_1,y'_1$。
$Q$ 与 $Q$ 配对时，第 $i$ 个内部匹配对贡献
\[
 \sum_{x,y}E_i(x,y)E_i(1-x,1-y)=2u_iv_i.
\]
记 $C=\prod_{i=2}^{d}2u_iv_i\ne0$，这部分输出为 $CE_1\otimes E_1$。
两个相反端点的配对贡献 $ab(\delta_{0011}+\delta_{1100})$；
相同端点不能跨 $N$ 边连接。一个端点与一个 $Q$ 的混合项也为零：
至少一个完整内部匹配对被迫为 $00$ 或 $11$，其反对角因子为零。
因此实际四元输出恰为
\begin{equation}\label{zh:eq:native-cyclic-tail-six-support}
 H=CE_1\otimes E_1+ab(\delta_{0011}+\delta_{1100}).
\end{equation}
第一项恰有 $0101,0110,1001,1010$ 四个非零支撑点，
与第二项的 $0011,1100$ 不相交。
若 $ab\ne0$，$H$ 有六个支撑点，而任意两个满秩反对角因子的匹配积恰有四个，矛盾。

$Q=0$ 时，任意选取两对外端口作同样连接，输出为
$ab(\delta_{0011}+\delta_{1100})$。
$ab\ne0$ 时这有两个支撑点，同样不是实际二元配对积。
因此两种情形均有 $ab=0$。
\end{proof}

\begin{lemma}[任意偶元单端点的插值出口]\label{zh:lem:native-cyclic-endpoint-interpolation}
设有限代数偶元函数集 $\Gamma$ 的边核为 $N$，完整普通二元闭包的非零方向恰为 $D^jN$。
若 $\Gamma$ 普通实现式~\eqref{zh:eq:native-cyclic-tail-shape} 中恰有一个非零端点的函数，
其中 $Q=0$，或者 $Q$ 是满秩反对角匹配积且各因子的非零项之比属于 $\mu_k$，
则整个 $\Gamma$ 问题已有可判定的 $\FP/\sharpP$-hard 二分。
本引理不假设所有普通四元构件均为配对积。
\end{lemma}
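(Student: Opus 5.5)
The plan is to extract a pinning function $\delta_{0^{2d}}$ (or $\delta_{1^{2d}}$) by a Jordan-type chain interpolation, exactly as in the quaternary one-endpoint lemma (Lemma~\ref{zh:lem:cyc-native-one-endpoint}), and then finish through the decomposition and odd-arity theorems. By symmetry take $a\ne0$, $b=0$; the case $b\ne0$, $a=0$ is the same with the roles of $0^d$ and $1^d$ exchanged.

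\textbf{Step 1 (trivial case).} If $Q=0$, then $F$ is already a nonzero multiple of $\delta_{0^{2d}}$, so no interpolation is needed.

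\textbf{Step 2 (normalization).} Suppose $Q\ne0$. Because each ratio $u_i/v_i$ lies in $\mu_k$, the one-port transfer $T_i=\diag(1,u_i/v_i)$ is actually available. Applying $T_i$ at the $x_i$ port gives $T_iE_i=u_iN$. It fixes $\delta_{0^{2d}}$ up to the unchanged first coordinate, since $T_i$ acts as $1$ on the input $0$. After dividing by $\prod_iu_i$ we get
\[
F_*=\prod_i N(x_i,y_i)+s\,\delta_{0^{2d}},\qquad s\ne0 .
\]

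\textbf{Step 3 (chain gadget).} Flatten $F_*$ with the rows indexed by $x$ and the columns by $y$. Set $J=N^{\otimes d}$; then $F_*=J+sE_{z,z}$, where $z=0^d$ and $o=1^d$. This gives $F_*J=I+sE_{z,o}$ with $E_{z,o}^2=0$, because $o\ne z$. Connecting $\ell$ copies in series by $d$ native $N$ edges produces
\[
(F_*J)^{\ell-1}F_*=J+\ell s\,E_{z,z}.
\]
This is an actual gadget, up to a known scalar $\kappa^\ell$.

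\textbf{Step 4 (interpolation).} Take an instance with $M$ occurrences of $\delta_{0^{2d}}$. Replace every occurrence by the length-$\ell$ chain and expand multilinearly. After dividing by the known scalar, the value is a polynomial in $\ell$ of degree at most $M$. Its top coefficient is $s^M$ times the target value. Querying $\ell=1,\dots,M+1$ and using Vandermonde interpolation recovers that coefficient with polynomial cost. This gives $\Holant(N\mid\Gamma,\delta_{0^{2d}})\equiv_T\Holant(N\mid\Gamma)$.

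\textbf{Step 5 (conclusion).} Transform back to equality edges by $K$. The image of $\delta_{0^{2d}}$ is a tensor power of a nonzero unary function. Theorem~\ref{zh:thm:decomposition} then either places the problem in $\FP$ or makes the unary factor available, and in the latter case Theorem~\ref{zh:thm:odd-arity-fp} gives the decidable dichotomy.

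\textbf{Main obstacle.} The key point is the nilpotency $E_{z,o}^2=0$, which makes the chain linear rather than exponential in $\ell$. The other thing to check carefully is that every chain step is an actual gadget built from whole copies of $F_*$ joined by native $N$ edges, with no partial pinning.
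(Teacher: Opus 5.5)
Your proposal is correct and matches the paper's proof essentially step for step. Both use the port transfers $T_i=\diag(1,u_i/v_i)$ to normalize $Q$ to $N^{\otimes d}$, the multiline chain $(F_*J)^{\ell-1}F_*=J+\ell s E_{z,z}$ driven by the nilpotency $E_{z,o}^2=0$, leading-coefficient interpolation over $\ell=1,\ldots,M+1$, and the exit through the $K$ transformation, the decomposition theorem and the odd-arity dichotomy; the $1^{2d}$ endpoint is likewise handled by the symmetric computation $F_*J=I+sE_{o,z}$.
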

\begin{proof}
先设 $Q\ne0$、$a\ne0$、$b=0$。
在每对的第一个端口作用实际可用的传递
$T_i=\diag(1,u_i/v_i)$，则 $T_iE_i=u_iN$。
作用于整份 $F$ 后除去已知非零标量 $c=\prod_i u_i$，得到
\[
 F_* =\prod_{i=1}^{d}N(x_i,y_i)+s\delta_{0^{2d}},\qquad s=a/c\ne0,
\]
其中普通实现允许再带一个已知非零标量。
若最初仅知 $Q$ 的所有非零值具有相同的 $k$ 次幂，
固定其他匹配对而反转第 $i$ 对即可推出 $(u_i/v_i)^k=1$，故同样允许这些传递。

按每个匹配对一端为行、另一端为列展平。记
$J=N^{\otimes d}$、$z=0^d$、$o=1^d$，矩阵表达为
\[
 F_*=J+sE_{z,z},\qquad F_*J=I+sE_{z,o},\qquad E_{z,o}^{\,2}=0.
\]
串联 $\ell$ 份整构件，每个相邻连接用 $d$ 条原生 $N$ 边，把列端口与下一份的行端口对应连接。
实际输出为
\begin{equation}\label{zh:eq:native-cyclic-multiline-chain}
 (F_*J)^{\ell-1}F_*=(I+\ell sE_{z,o})J=J+\ell sE_{z,z}.
\end{equation}
构件链中的顶点都是整份 $F_*$，相邻两份之间沿原生 $N$ 边连接。
若实际标准构件是 $\kappa F_*$，长度 $\ell$ 的链还带已知非零因子 $\kappa^\ell$。

给定含 $M$ 处目标 $\delta_{0^{2d}}$ 的实例，先把这些目标点与其余固定函网分开。
把每一处都换成长度 $\ell$ 的链，对这 $M$ 个位置作多线性展开。
询问答案除去 $\kappa^{\ell M}$ 后，是 $\ell$ 的至多 $M$ 次多项式，
其 $\ell^M$ 系数恰为 $s^M$ 乘目标实例的答案。
取 $\ell=1,\ldots,M+1$，用 $M+1$ 次询问插值最高次系数并除以 $s^M$。
每次替换使用 $O(M^2)$ 份固定构件。因此
\begin{equation}\label{zh:eq:native-cyclic-endpoint-access}
 \Holant(N\mid\Gamma,\delta_{0^{2d}})
 \equiv_T\Holant(N\mid\Gamma).
\end{equation}
反向归约因原函数集保留而直接成立。

若非零端点为 $1^{2d}$，同样归一后得到 $J+sE_{o,o}$，
此时 $F_*J=I+sE_{o,z}$ 且 $E_{o,z}^{\,2}=0$。
同样串联与插值便取得 $\delta_{1^{2d}}$。
若 $Q=0$，所需单端点张量已经普通可用，无需插值。

至此可以加入一个非零一元函数的张量幂而保持 Turing 等价。
用式~\eqref{zh:eq:K-basis} 的固定 $K$ 变换把整个问题写回相等边模型，
新增函数仍是非零一元函数的张量幂。
由分解定理~\ref{zh:thm:decomposition}，或者问题属于 $\FP$，或者可以加入这个一元因子。
后一情形由奇元二分定理~\ref{zh:thm:odd-arity-fp} 完成。
\end{proof}

\paragraph{下集}\label{zh:subsec:ck-lower}
下集已经有真四元函数。分别写出两种边模型中的四元形式，利用可用二元函数调整系数，
再构造相等函数、钉子或满足已有四元二分条件的函数。

\subparagraph*{等边奇阶对角群的实际四端口归一化}

以下先令 $k\ge3$ 为奇数。忽略已知非零标量，可用对角辅助为
$D_r=\diag(1,r)$，$r\in\mu_k$。
四元函数 $q$ 的所有二元短路均对角，两个独立对角探针因而迫使奇重量输入为零。
记
\[
 s=q_{0000},\quad t=q_{1111},\qquad
 a_I=q_{\mathbf1_I},\quad b_I=q_{\mathbf1_{I^c}},\qquad I=12,13,14.
\]
对每个互补对，在两种方向上使用全部 $D_r$，应用
引理~\ref{zh:lem:oddq-root-ratio}。它给出：$s,t$ 不能恰有一个非零；
每个中心对同时为零或同时非零；非零时 $b_I/a_I\in\mu_k$。
若 $st\ne0$，还有 $t/s\in\mu_k$，且每个非零中心对满足 $a_Ib_I=st$，
其中全部非零数与 $s$ 的比例都在 $\mu_k$ 中。

\begin{lemma}[奇阶四端口平衡]\label{zh:lem:cyc-balance}
只用实际可用的 $D_r$ 在四个端口上的作用，可以使 $q$ 的每个互补对数值相等。
若端点非零，还可使两端点相等，并且每个非零中心对的公共值都等于端点值。
\end{lemma}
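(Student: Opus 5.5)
The plan is to act with a diagonal tool $D_{r_i}=\diag(1,r_i)$, $r_i\in\mu_k$, on port $i$ for $i=1,\ldots,4$, and to choose the four roots so that every complementary pair becomes balanced. Each $D_{r_i}$ is an actual binary attachment, so the result is still an ordinary gadget up to a known nonzero scalar. This action multiplies the value at input $x$ by $\prod_i r_i^{x_i}$. Put
\[
 R=\prod_i r_i,\qquad u_I=\prod_{i\in I}r_i .
\]
Then the endpoints become $s$ and $tR$. In the central pair indexed by $I\in\{12,13,14\}$, the entry $a_I$ becomes $a_Iu_I$, and the entry $b_I$ (the complement $I^c$) becomes $b_IR/u_I$. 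So the whole task reduces to choosing $(R,u_{12},u_{13},u_{14})$ suitably, and then showing that this target vector is realized by actual roots $r_1,\ldots,r_4\in\mu_k$.

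First choose $R$. If $st\ne0$, the facts derived just before the lemma give $t/s\in\mu_k$, so $R=s/t\in\mu_k$, and this makes both endpoints equal to $s$. Otherwise take $R=1$; by those same facts exactly one endpoint cannot be nonzero, so here $s=t=0$. For each active central pair, balance requires $a_Iu_I=b_IR/u_I$, that is, $u_I^2=(b_I/a_I)R$. The right side lies in $\mu_k$, because $b_I/a_I\in\mu_k$ by the root-ratio lemma. Since $k$ is odd, squaring is a bijection of the cyclic group $\mu_k$, so $u_I$ exists and is unique. For inactive pairs, choose $u_I$ arbitrarily.

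Next, show that every target $(R,u_{12},u_{13},u_{14})\in\mu_k^4$ is realized. We have $u_{12}u_{13}u_{14}=r_1^3r_2r_3r_4=r_1^2R$. Hence define $r_1$ as the unique square root in $\mu_k$ of $u_{12}u_{13}u_{14}/R$, again using that $k$ is odd. Then set $r_j=u_{1j}/r_1$ for $j=2,3,4$. By construction $\prod_i r_i=r_1^{-2}u_{12}u_{13}u_{14}=R$, and $r_1r_j=u_{1j}$. The complements of $12,13,14$ are $34,24,23$, so every pair product is now determined and the data are consistent.

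The main obstacle is the final claim that, when $st\ne0$, the common central value equals the endpoint value $s$ itself, and not merely $\pm s$. Let $c=a_Iu_I$ be the common value of an active pair. Then
\[
 c^2=a_Iu_I\cdot b_IR/u_I=a_Ib_IR=st\cdot s/t=s^2,
\]
using $a_Ib_I=st$ from the root-ratio analysis. Also $c/s=(a_I/s)u_I\in\mu_k$, because $a_I/s\in\mu_k$ by the earlier facts and $u_I\in\mu_k$. Since $k$ is odd, $-1\notin\mu_k$, which forces $c=s$.
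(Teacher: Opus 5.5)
Your proposal is correct and follows the paper's proof essentially step by step. It uses the same parametrization by $(R,u_{12},u_{13},u_{14})$, the same choice $u_I^2=(b_I/a_I)R$ via bijectivity of squaring on the odd-order group $\mu_k$, the same explicit solution $r_1^2=u_{12}u_{13}u_{14}/R$, $r_j=u_{1j}/r_1$, and the same argument from $c^2=s^2$, $c/s\in\mu_k$ and $-1\notin\mu_k$ to conclude $c=s$.
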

\begin{proof}
四个端口分别使用 $D_{r_i}$，记 $R=\prod_i r_i$、$u_I=\prod_{i\in I}r_i$。
若端点非零，令 $R=s/t$；否则令 $R=1$。每个活跃中心对取
\[
 u_I^2=(b_I/a_I)R.
\]
奇数阶群 $\mu_k$ 中平方是双射，故可以选择这些 $u_I$；非活跃对任取。
任何指定的 $(R,u_{12},u_{13},u_{14})\in\mu_k^4$ 都有实际四端口解：
\[
 r_1^2=u_{12}u_{13}u_{14}/R,\qquad
 r_2=u_{12}/r_1,\quad r_3=u_{13}/r_1,\quad r_4=u_{14}/r_1.
\]
于是新的中心值为 $a_Iu_I=b_IR/u_I$，端点为 $s,tR$。
若 $st\ne0$，每个中心公共值 $c$ 满足 $c^2=s^2$，且 $c/s\in\mu_k$。
由于 $-1\notin\mu_k$，只能有 $c=s$。所有作用都是实际二元连接。
\end{proof}

设平衡后的函数为 $h$，端点公共值为 $s$，三个中心公共值为 $a,b,c$，零对记零。
对整个函数集同步作 $K^{-1}$，边核从 $I$ 变为 $N$。
实际新函数 $h'=(K^{-1})^{\otimes4}h$ 满足
\begin{equation}\label{zh:eq:cyc-K-fourier}
 h'_y=\frac14\sum_x h_x(-i)^{|x|}(-1)^{x\cdot y},\qquad
 h'_{0000}=h'_{1111}=\frac{s-a-b-c}{2}.
\end{equation}
因为 $h_x=h_{\bar x}$ 且 $h$ 偶支撑，互补项在奇重量 $y$ 上相消，故 $h'$ 仍偶支撑。
逐端口可逆变换保持张量不可分解性。

\begin{proposition}[等边奇阶循环的四元格]\label{zh:prop:cyc-equality-q4}
在本节的二元封闭条件下，设等边对角群为奇阶 $C_k$，$k\ge3$。
若有一个四元函数不是本群二元函数的配对积，则整个问题有复杂性二分。
\end{proposition}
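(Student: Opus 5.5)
We will follow the normalization already prepared just before the statement. First apply Lemma~\ref{zh:lem:cyc-balance} to the quaternary non-pairing-product $q$: using only actual port attachments of the $D_r$, we obtain a balanced $h$. In $h$ every active central complementary pair has a common value, and if $st\ne0$ both endpoints and all active central values equal $s$. Then we transform the entire function set simultaneously by $K^{-1}$, which changes the edge kernel to $N$. By \eqref{zh:eq:cyc-K-fourier}, $h'$ still has even support, its endpoints are $(s-a-b-c)/2$, and it is indecomposable exactly when $h$ is. The goal in every surviving branch is to reach the hypotheses of Theorem~\ref{zh:thm:cyc-external-q4}: even support, two nonzero endpoints, and tensor indecomposability. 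That theorem then gives the decidable dichotomy for the whole set. Every other branch should contradict the binary closure condition.

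We split on the endpoints and on the number $r$ of active central pairs.

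\emph{Case $st\ne0$.} After balancing, the new endpoint is $s(1-r)/2$, which is nonzero for $r=0,2,3$. For indecomposability of $h$ in these cases, a unary factor is excluded by even support together with both values occurring at every port. In a $2+2$ factorization each factor must have fixed parity, since otherwise independent choices would create odd-weight support. The two nonzero endpoints then force both factors to be full-rank diagonal, giving support size four. This support size rules out $r=0$, $r=2$ and $r=3$, whose supports have sizes $2$, $6$ and $8$. For $r=1$, the relation $a_Ib_I=st$ makes $q$ itself a product of two available diagonal factors, which contradicts the hypothesis of the statement.

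\emph{Case $s=t=0$.} One active pair gives complementary two-point support, and the new endpoint $-a/2$ is nonzero, so Theorem~\ref{zh:thm:cyc-external-q4} applies. With two active pairs $a,b$:
\begin{itemize}
\item if $a^2\ne b^2$, the determinant of the only possible $2+2$ flattening is nonzero, so $h$ is indecomposable and its image has nonzero endpoints $-(a+b)/2$;
\item if $a=\pm b$, a port order shows that $h$ is a multiple of $N\otimes N$ or $Y\otimes Y$. Decomposition then makes $N$ or $Y$ available, contradicting the condition that all binary functions are diagonal.
\end{itemize}

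The step I expect to need genuine work is excluding three active central pairs when $s=t=0$. Here I would take two copies of the balanced $h$ in a fixed $2+2$ flattening $Q_a$ and join the first two ports of the copies by equality edges. This realizes $Q_a^{\mathsf T}Q_a$, whose endpoints are $a^2$ and whose central pair is $(2bc,2bc)$, all nonzero. Lemma~\ref{zh:lem:oddq-root-ratio}, in the form recorded before Lemma~\ref{zh:lem:cyc-balance} (nonzero central pair products equal the endpoint product), gives $a^4=4b^2c^2$. The other two flattenings give $b^4=4a^2c^2$ and $c^4=4a^2b^2$. Multiplying the three relations and dividing by $(abc)^4\ne0$ yields $1=64$, a contradiction. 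The only delicate point is checking that the ratio relations really apply to this gadget output, including which entries pair as complements. I would verify that carefully and otherwise treat the case analysis as routine.
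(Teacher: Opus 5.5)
Your proposal is correct and matches the paper's proof essentially step for step. The shared steps are: balancing by Lemma~\ref{zh:lem:cyc-balance}; the $K^{-1}$ transform with new endpoints $s(1-r)/2$ or $-(a+b)/2$; indecomposability via fixed parity and support size; the $N\otimes N$/$Y\otimes Y$ contradiction; and the two-copy gadget $Q_a^{\mathsf T}Q_a$ giving $1=64$. The delicate point you flagged also checks out: in $Q_a^{\mathsf T}Q_a$ both endpoints equal $a^2$ and the value $2bc$ sits on the complementary pair $0110,1001$, so the recorded relation $a_Ib_I=st$ applies directly.
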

\begin{proof}
先设 $st\ne0$，有 $r$ 个活跃中心对。引理~\ref{zh:lem:cyc-balance} 后，
式~\eqref{zh:eq:cyc-K-fourier} 的新端点为 $s(1-r)/2$。
$r=0,2,3$ 时非零；这些格中，若有正元数分解，
两个端点及偶支撑迫使它为两个满秩对角二元因子的乘积，支撑只能有四点，
因此这些格都是张量不可分解的。适用定理~\ref{zh:thm:cyc-external-q4}。
具体地，偶支撑和每个变量同时取两值排除一元因子；任意 $2+2$ 分解的
两个因子必须各有固定奇偶，否则它们独立取值会产生奇支撑输入。
两个端点又迫使它们都为满秩对角函数。
$r=1$ 时 $a_Ib_I=st$ 已给出两个可用对角二元因子的乘积，不是当前非配对积格。

再设 $s=t=0$。一个活跃中心对给出互补两点支撑，平衡后新端点为 $-a/2\ne0$，
适用同一定理。两个活跃中心对的公共值记为 $a,b$。
若 $a^2\ne b^2$，则 $a+b\ne0$，相应的唯一候选 $2+2$ 分解行列式非零，
所以仍是张量不可分解的四元入口。
若 $a^2=b^2$，适当端口次序下为非零倍的 $N\otimes N$ 或 $Y\otimes Y$。
分解后 $N$ 或 $Y$ 也是可用二元函数，而本群的二元函数全为对角矩阵，
与二元封闭条件矛盾。

最后排除三个中心对均活跃的格。平衡后 $a,b,c$ 均非零。
固定一个 $2+2$ 展平及两份构件的对应端口方向，写为
\[
 Q_a=\begin{pmatrix}
 0&0&0&a\\0&b&c&0\\0&c&b&0\\a&0&0&0
 \end{pmatrix}.
\]
两份的前两端口分别由普通相等边对应连接，保留后两端口，实际输出为
\[
 Q_a^{\mathsf T}Q_a=
 \begin{pmatrix}
 a^2&0&0&0\\0&b^2+c^2&2bc&0\\
 0&2bc&b^2+c^2&0\\0&0&0&a^2
 \end{pmatrix}.
\]
两个端点及中心对 $(2bc,2bc)$ 非零，根单位比例引理迫使 $a^4=4b^2c^2$。
另外两种配对同理给出 $b^4=4a^2c^2$、$c^4=4a^2b^2$。
相乘并除去非零 $(abc)^4$ 得 $1=64$，矛盾。
因此三个中心对全非零的情形也与二元封闭条件矛盾。
\end{proof}

\subparagraph*{任意互补两点四元函数的载体出口}

\begin{lemma}[互补两点载体]\label{zh:lem:cyc-complement-carrier}
设原生 $N$ 边有限代数函数集 $\Gamma$ 普通实现
\[
 g=c_0\delta_x+c_1\delta_{\bar x},\qquad
 x\in\{0,1\}^4,\quad c_0c_1\ne0.
\]
则整个函数集已有可判定的 $\FP/\sharpP$-hard 二分。
这里 $x$ 可以重量为二，不要求 $g$ 有非零端点。
\end{lemma}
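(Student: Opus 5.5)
The strategy is to turn the complementary two-point quaternary $g$ into Turing access to arbitrarily many copies of a nonzero product of unary functions, and then exit through the decomposition theorem and the odd-arity dichotomy. This is the same exit used in Lemma~\ref{en:lem:native-cyclic-endpoint-interpolation}.

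The first step is to fix an ordered ``carrier'' $\Delta=\delta_0\otimes\delta_1$. I would use the Universal one-pair reduction of~\cite{en:liu2026disequality}: any instance over $\Gamma\cup\{\Delta\}$ containing at most one $\Delta$ reduces to $\Holant(N\mid\Gamma)$. I use $\Delta$ only as a seed and never assume it is available more than once.

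The key local gadget works as follows.
\begin{enumerate}[label=\textup{(\roman*)}]
 \item Take two copies of $g$. Attach the two ports of $\Delta$ through $N$ edges to the same numbered port, say port $1$, in the two copies.
 \item Since $\Delta$ forces opposite bits at these two ports, the $N$ edges force the two copies of $g$ to read opposite bits there.
 \item Because $g$ is supported only on $\{x,\bar x\}$, the two copies must select complementary words. The net-function on the six remaining ports is therefore $c_0c_1$ times the indicator of one fixed weight-three word together with its assignment.
 \item That net-function is a product of six pins, three $\delta_0$'s and three $\delta_1$'s, up to a fixed port reordering. I group it as $c_0c_1\,\Delta(P)\otimes\Delta^{\otimes2}(H)$.
\end{enumerate}
This is a single fixed gadget. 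It consumes one carrier $\Delta$ and returns one carrier $\Delta(P)$ plus a spare $\Delta^{\otimes2}$.

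Next I nest this context $\ell$ times, each time feeding the retained block $P$ into the next copy. One initial $\Delta$ then realizes $(c_0c_1)^\ell\Delta(P)\otimes\bigotimes_{j\le\ell}\Delta^{\otimes2}(H_j)$ with $O(\ell)$ vertices. The Single-occurrence replication lemma of~\cite{en:liu2026disequality} requires a nonzero closed instance containing the carrier. Closing $\Delta$ with one $N$ edge gives value $1$, so the closed instance $J$ consisting of one $\Delta$ with an $N$ self-loop works.

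For a target instance with many occurrences of $\Delta^{\otimes2}$, I take its disjoint union with $J$. I then replace the single $\Delta$ of $J$ by the nested construction, filling every target position. The result is an instance with exactly one $\Delta$, which goes to the one-pair reduction; dividing by $(c_0c_1)^\ell$ recovers the target value. This gives $\Holant(N\mid\Gamma\cup\{\Delta^{\otimes2}\})\equiv_{\mathrm T}\Holant(N\mid\Gamma)$.

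To finish, I apply the $K$ transformation~\eqref{en:eq:K-basis} to return to equality edges. Under it $\Delta^{\otimes2}$ becomes a tensor product of four nonzero unary functions. By Theorem~\ref{en:thm:decomposition}, either the problem is in $\FP$ or a nonzero unary factor can be adjoined, and then Theorem~\ref{en:thm:odd-arity-fp} gives the dichotomy. Decidability is inherited from those theorems, since the gadgets are fixed.

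The main obstacle I expect is matching the exact hypotheses of the two cited carrier lemmas. The replicated object must be exactly a tensor power of the carrier, not some other function, and the reordering of the six outputs must be fixed independently of the instance. If the replication lemma instead requires reproducing $\Delta$ itself, the retained $\Delta(P)$ supplies that directly. The spare $\Delta^{\otimes2}$ blocks then give the extra copies, since $\Delta^{\otimes2}$ contains two ordered $\Delta$'s after a port permutation.
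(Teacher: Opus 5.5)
Your proposal is correct and follows the paper's proof essentially step for step. It uses the same seed carrier $\Delta=\delta_0\otimes\delta_1$ under the one-pair reduction, the same two-copy gadget yielding $c_0c_1\,\Delta(P)\otimes\Delta^{\otimes2}(H)$, the same $\ell$-fold nesting closed off by the value-$1$ instance $J$, and the same exit via $K$, the decomposition theorem, and the odd-arity dichotomy (the paper's division by $(c_0c_1)^\ell Z(J)$ equals your $(c_0c_1)^\ell$ since $Z(J)=1$).
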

\begin{proof}
使用\cite{zh:liu2026disequality}的 Universal one-pair reduction 和
Single-occurrence replication，引入仅作 Turing 载体的有序函数
$\Delta=\delta_0\otimes\delta_1$。
第一条引理对任意有限函数集保证至多一个 $\Delta$ 的实例可归约到原问题；
先以这一份 $\Delta$ 开始构造。

把 $\Delta$ 的两个端口经 $N$ 边分别接到两份 $g$ 的同一编号端口。
两份 $g$ 在该端口被迫取相反值，故选定互补的两项。
余下六个外端口恰有三个 $0$、三个 $1$，系数为 $c_0c_1$。
固定重排后得到
\[
 c_0c_1\,\Delta(P)\otimes\Delta^{\otimes2}(H).
\]
$P$ 保留为有序两端口载体。只把这个保留块送入下一个同型上下文，
不识别其他外端口；嵌套 $\ell$ 次便以唯一的初始 $\Delta$ 实现
\[
 (c_0c_1)^\ell\Delta(P)\otimes
 \bigotimes_{j=1}^{\ell}\Delta^{\otimes2}(H_j),
\]
大小 $O(\ell)$。$\Delta$ 两端用一条 $N$ 边闭合的值为 $1$，
故满足单次复制引理的非零闭实例条件。
因此任意多次 $\Delta^{\otimes2}$ 的使用均 Turing 归约到 $\Gamma$。
具体地，先给目标实例并上一份由 $\Delta$ 的 $N$ 自环组成的闭实例 $J$，
其值为 $1$。从这份唯一载体出发嵌套复制，填入全部目标位置；
调用单次 $\Delta$ 的归约后，除去 $(c_0c_1)^\ell Z(J)$ 即得原答案。

至此可以加入 $\Delta^{\otimes2}$ 而保持 Turing 等价。
用固定 $K$ 变换把整个问题写回相等边模型，新增函数仍是非零一元函数的张量积。
由分解定理~\ref{zh:thm:decomposition}，或者问题属于 $\FP$，或者可加入这些一元因子。
后一情形由奇元二分定理~\ref{zh:thm:odd-arity-fp} 完成。
\end{proof}

\subparagraph*{单端点四元函数的两线链插值}

\begin{lemma}[单端点插值出口]\label{zh:lem:cyc-native-one-endpoint}
设原生 $N$ 边函数集的完整普通二元闭包，其非零方向恰为式~\eqref{zh:eq:cyc-native-binary} 所列，
若四元构件恰有一个非零端点且中心最多有两个活跃互补对，
则整个函数集已有可判定的二分。
\end{lemma}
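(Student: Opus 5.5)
The plan follows the English version of this lemma (Lemma on one-endpoint interpolation exit). Treat the nonzero endpoint $0000$; the case $1111$ is symmetric, with the endpoint Jordan block transposed. By the native-model discussion around~\eqref{zh:eq:cyc-native-q4}, the gadget has even support. Each central complementary pair is either zero in both entries or nonzero in both, with entry ratio in $\mu_k$. The first step splits by the number of active central pairs, which is $0$, $1$ or $2$.

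With no active pair, the gadget is already a nonzero multiple of $\delta_{0000}$, so no interpolation is needed.

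With one active pair, permute ports so that it is the $0011,1100$ pair. The available one-port transfers $\diag(1,r)$, $r\in\mu_k$, let us equalize its two entries, and after scaling the gadget becomes $q=s\delta_{0000}+\delta_{0011}+\delta_{1100}$. Flatten along $12\mid34$ and restrict to the $\{00,11\}$ block, so that
\[
Q=\begin{pmatrix}s&1\\1&0\end{pmatrix},\qquad J=\begin{pmatrix}0&1\\1&0\end{pmatrix}.
\]
Then $QJ$ is the Jordan block $\left(\begin{smallmatrix}1&s\\0&1\end{smallmatrix}\right)$, and a two-wire $N$-edge chain of $m$ copies realizes $(QJ)^{m-1}Q=\left(\begin{smallmatrix}ms&1\\1&0\end{smallmatrix}\right)$.

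With two active pairs, use the central rank-one relation from the root-ratio lemma. It says the central part is a product of two available antidiagonal factors, and diagonal transfers normalize the gadget to $s\delta_{0000}+N_{12}N_{34}$. Flatten along $13\mid24$ with $J=N\otimes N$. Then $Q=J+sE_{00,00}$ and $QJ=I+sE_{00,11}$ with $E_{00,11}^2=0$. The same chain therefore gives $J+msE_{00,00}$.

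In both cases, chains of $O(m)$ vertices realize $q_m=ms\delta_{0000}+q_0$ with $q_0$ fixed, up to a known scalar $c^m$. The main step is interpolation, done as in Lemma~\ref{zh:lem:native-cyclic-endpoint-interpolation}. In an instance with $M$ occurrences of $\delta_{0000}$, separate those occurrences from the fixed context and replace each by $q_m$. After dividing by $c^{mM}$, multilinear expansion shows the answer is a polynomial in $m$ of degree at most $M$. Its leading coefficient is $s^M$ times the target value, because the factor $m$ appears only when the endpoint term is chosen at every position. Querying $m=1,\dots,M+1$ and using Vandermonde interpolation recovers that coefficient, with $O(M^2)$ gadget copies per query. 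This gives Turing access to arbitrarily many copies of $\delta_{0000}=\delta_0^{\otimes4}$.

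Finally, apply $K$ to return the whole problem to the equality-edge model, where the new function is still a tensor power of a nonzero unary function. The decomposition theorem~\ref{zh:thm:decomposition} then either puts the problem in $\FP$ or lets us adjoin the unary factor. In the latter case the odd-arity dichotomy~\ref{zh:thm:odd-arity-fp} finishes the proof. The criterion is decidable because every check involved is a finite algebraic test.

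The delicate step is the normalization in the two-pair case. We must check that the diagonal transfers actually available (ratios in $\mu_k$) suffice to reach exactly $N_{12}N_{34}$ while the endpoint stays nonzero. If equality cannot be reached, it is enough to reach any full-rank antidiagonal product $E\otimes E'$. The nilpotent computation needs only $QJ=I'+sE_{00,11}$ with $I'$ diagonal invertible, and rescaling the chain handles this.
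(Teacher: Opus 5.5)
Your proposal is correct and follows the paper's proof essentially step for step. It uses the same split into zero, one, or two active central pairs, the same $12\mid34$ and $13\mid24$ two-wire chains producing $ms\,\delta_{0000}+q_0$, the same leading-coefficient interpolation, and the same return via $K$ to the decomposition and odd-arity theorems.

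The normalization you flag as delicate is settled exactly as the paper asserts. By the root-ratio relations, all four central values have pairwise ratios in $\mu_k$, so transfers $\diag(1,r)$ on ports $1$ and $3$ turn each antidiagonal factor into a multiple of $N$ while leaving the endpoint unchanged. Your fallback is therefore unnecessary, and it would not work as stated anyway. With $QJ=I'+sE_{00,11}$, the powers $(QJ)^{m-1}$ carry an entry proportional to $m$ only when the $00$ and $11$ diagonal entries of $I'$ coincide.
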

\begin{proof}
以非零端点 $0000$ 为例，$1111$ 同理。
没有中心项时已是非零倍的 $\delta_{0000}$。
一个活跃中心对的两值之比在 $\mu_k$，可用已有对角端口传递归一化并置换端口，得到
\[
 q=s\delta_{0000}+\delta_{0011}+\delta_{1100},\qquad s\ne0.
\]
按 $12\mid34$ 展平，在 $\{00,11\}$ 子块中
\[
 Q=\begin{pmatrix}s&1\\1&0\end{pmatrix},\qquad
 J=(N\otimes N)|_{\{00,11\}}=\begin{pmatrix}0&1\\1&0\end{pmatrix}.
\]
把 $m$ 份构件串成链，每次以两条 $N$ 边连接前一份的列端口和后一份的行端口，
输出为 $(QJ)^{m-1}Q$。直接计算给出
\[
 (QJ)^{m-1}Q=\begin{pmatrix}ms&1\\1&0\end{pmatrix}.
\]

两个中心对活跃时，四个中心值的秩一关系和 $\mu_k$ 比例，
给出两个可用反对角二元因子的乘积。
利用实际对角传递在端口上归一化，得到数值表达
\[
 q=s\delta_{0000}+N_{12}N_{34}.
\]
下面直接把整份 $q$ 串成链。
改按 $13\mid24$ 展平，令 $J=N\otimes N$，有
\[
 Q=J+sE_{00,00},\qquad QJ=I+sE_{00,11},\qquad E_{00,11}^2=0.
\]
同样的两线链实际给出 $(QJ)^{m-1}Q=J+msE_{00,00}$。
故两种情形均以 $O(m)$ 个顶点实现 $q_m=ms\delta_{0000}+q_0$，其中 $q_0$ 固定。
若标准构件实际实现 $c q$，则长度 $m$ 的链实现 $c^m q_m$。
下述含 $M$ 个替换的询问须先除去已知非零标量 $c^{mM}$。

考虑含 $M$ 个目标 $\delta_{0000}$ 的实例，先把这些目标点与其余固定函网分开。
把全部目标替换为 $q_m$，对这 $M$ 个位置作多线性展开。
答案是 $m$ 的至多 $M$ 次多项式，最高次系数恰为
$s^M$ 乘目标实例答案。
取 $m=1,\ldots,M+1$，作 $M+1$ 次询问，每次替换大小 $O(M^2)$，
插值恢复最高次系数并除以 $s^M$。
由此得到任意多次 $\delta_{0000}$ 的 Turing 访问。
将整个问题变回相等边模型后，新增函数仍是非零一元函数的张量幂。
由分解定理~\ref{zh:thm:decomposition}，或者问题属于 $\FP$，或者可加入这个一元因子，
再由奇元二分定理~\ref{zh:thm:odd-arity-fp} 完成。
\end{proof}

\begin{lemma}[零端点乘积的三块矛盾]\label{zh:lem:cyc-native-three-centers}
在本节原生 $N$ 边模型的二元封闭条件下，$st=0$ 的四元函数不可能有三个活跃中心互补对。
\end{lemma}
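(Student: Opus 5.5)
The plan is to build one new quaternary gadget from two copies of the given function and then apply the central-product relation \eqref{zh:eq:cyc-native-center-products} twice: once to the original function and once to the new gadget.

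\textbf{Step 1: setup.} Suppose, for contradiction, that $st=0$ and all three central complementary pairs are active. Write the function in the flattening \eqref{zh:eq:cyc-native-q4}, with rows $x_1x_2$ and columns $x_3x_4$. The remarks preceding the upper case apply Lemma~\ref{zh:lem:oddq-root-ratio} to the central block with the antidiagonal probes. They give \eqref{zh:eq:cyc-native-center-products}:
\[
 ab=cd=ef=P\ne0 .
\]
This relation only constrains the central entries. It holds for every quaternary gadget in this native model that has three active central pairs.

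\textbf{Step 2: the two-copy gadget.} Take two copies of $q$. Join the two row ports of the first copy to the corresponding row ports of the second copy by two native $N$ edges, and keep the four column ports external. The resulting actual quaternary gadget has flattening $Q^{\mathsf T}JQ$ with $J=N\otimes N$. Since $J$ just reverses the row order $00,01,10,11$, the entries are quick to read off:
\begin{itemize}
 \item the endpoint pair is $(M_{00,11},M_{11,00})=(st+ab,\;ab+st)$, with diagonal entries $2sb$ and $2at$;
 \item the pair $(M_{01,01},M_{10,10})$ equals $(2cf,\,2de)$;
 \item the pair $(M_{01,10},M_{10,01})$ equals $(cd+ef,\,cd+ef)$.
\end{itemize}
So the products of the three central complementary pairs of the new gadget are
\[
 (st+P)^2,\qquad 4cdef=4P^2,\qquad (cd+ef)^2=4P^2 .
\]

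\textbf{Step 3: the contradiction.} Since $st=0$, the three products are $P^2$, $4P^2$ and $4P^2$. All are nonzero, so all three central pairs of the new gadget are active. Applying \eqref{zh:eq:cyc-native-center-products} to this gadget forces the three products to coincide. That gives $P^2=4P^2$, contradicting $P\ne0$.

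\textbf{Main obstacle.} The delicate step is the bookkeeping of which output entries form the central complementary pairs in the $Q^{\mathsf T}JQ$ flattening. In particular, the endpoint pair of the original block becomes a central pair of the output, because of the $J$ twist. One must also confirm that the connection really realizes $Q^{\mathsf T}JQ$ with native $N$ edges, and not $Q^{\mathsf T}Q$. I would check both points entry by entry before invoking the ratio rigidity.
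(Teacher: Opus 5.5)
Your proof is correct and follows the paper's argument exactly: you use the same two-copy gadget $Q^{\mathsf T}(N\otimes N)Q$, obtain the same three central-pair products $(st+P)^2$, $4P^2$, $4P^2$, and reach the same contradiction $P^2=4P^2$ from applying \eqref{zh:eq:cyc-native-center-products} to the new gadget. One wording slip: you call the entries $M_{00,11}=M_{11,00}=st+ab$ an endpoint pair, but in the output they form a central complementary pair (inputs $0011$, $1100$); your product list already treats them that way.
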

\begin{proof}
采用式~\eqref{zh:eq:cyc-native-q4} 的端口方向。
两份构件的行端口由两条 $N$ 边对应相连，保留列端口，实际输出
$Q^{\mathsf T}(N\otimes N)Q$。
由式~\eqref{zh:eq:cyc-native-center-products}，其三个中心互补对的乘积是
\[
 (st+P)^2,\qquad (2cf)(2de)=4P^2,\qquad (cd+ef)^2=4P^2.
\]
$st=0$ 时三对都活跃，再次应用中心比例刚性，应有 $P^2=4P^2$，矛盾。
这里两份函数之间的连接矩阵为 $N\otimes N$。
\end{proof}

\begin{proposition}[原生不等边循环的四元出口]\label{zh:prop:cyc-native-q4}
在本节原生 $N$ 边模型的二元封闭条件下，$k\ge3$。
若有一个四元函数不是本群二元函数的配对积，则整个问题有可判定的复杂性二分。
\end{proposition}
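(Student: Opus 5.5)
The plan is a case split on the endpoint pattern of the quaternary function, using the native form~\eqref{zh:eq:cyc-native-q4}. By the discussion preceding~\eqref{zh:eq:cyc-native-q4}, every quaternary gadget has even support, and each central complementary pair is either entirely zero or entirely nonzero with ratio in $\mu_k$. So $q$ is described by its endpoints $s,t$ and by which of the three central pairs are active. I will branch on whether both endpoints, exactly one endpoint, or neither endpoint is nonzero. In each branch, the active central pairs are treated either by an exit lemma proved earlier or by a contradiction with the binary closure condition.

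\textbf{Both endpoints nonzero.} If $q$ is tensor-indecomposable, Theorem~\ref{zh:thm:cyc-external-q4} applies directly; its hypotheses are exactly even support and $q_{0000}q_{1111}\ne0$. If $q$ decomposes, the nonzero endpoints rule out a $1+3$ split, because a unary factor would then have to take both values and would create odd-weight support. In a $2+2$ split, each factor must have fixed parity, by the same independence argument used in Proposition~\ref{zh:prop:cyc-equality-q4}. The endpoints then force both factors to be full-rank diagonal. By the decomposition theorem these factors become available, yet every binary function in the native group has the antidiagonal form $D^jN$. This contradicts the binary closure condition, so no further case remains here.

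\textbf{Exactly one endpoint nonzero, or both zero.} With exactly one nonzero endpoint, Lemma~\ref{zh:lem:cyc-native-three-centers} excludes three active central pairs. The remaining configurations, with zero, one or two active pairs, are exactly those covered by Lemma~\ref{zh:lem:cyc-native-one-endpoint}, which gives a decidable dichotomy. With both endpoints zero, there are four possibilities. No active pair means $q=0$, which is excluded. One active pair gives a complementary two-point quaternary function, handled by Lemma~\ref{zh:lem:cyc-complement-carrier}. Two active pairs satisfy the central rank-one relation with ratios in $\mu_k$ (the root-ratio lemma applied to the odd block), which exhibits $q$ as a product of two available antidiagonal factors $D^jN$; this contradicts the assumption that $q$ is not a pairing product. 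Three active pairs are again excluded by Lemma~\ref{zh:lem:cyc-native-three-centers}, since $st=0$.

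The main point needing care is the two-active-pair subcase with zero endpoints. I must verify that the rank-one relation together with the $\mu_k$ ratios gives factors that are genuinely among the available directions $D^jN$, and not merely an algebraic decomposition. I would check this by the same normalization with diagonal transfers $\diag(1,r)$, $r\in\mu_k$, used in Lemma~\ref{zh:lem:cyc-native-one-endpoint}, reading off each factor's antidiagonal ratio.
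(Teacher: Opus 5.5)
Your proposal is correct and follows the paper's proof essentially step for step. It uses the same split by endpoint pattern, the external quaternary dichotomy for the indecomposable two-endpoint case, and the same contradiction with the antidiagonal binary closure in the decomposable and two-center cases; the remaining cases go to the three-block lemma, the one-endpoint interpolation lemma, and the complementary-carrier lemma exactly as in the paper. The point you flag is immediate: an antidiagonal matrix whose entry ratio lies in $\mu_k$ is a scalar multiple of some $D^jN$, because $D^jN$ has ratio $\lambda^{-2j}$ and these ratios run over all of $\mu_k$.
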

\begin{proof}
四元函数已为偶支撑。若两个端点均非零且张量不可分解，直接用
定理~\ref{zh:thm:cyc-external-q4}。
若可分解，两个非零端点排除一元因子，偶支撑迫使它分成两个满秩对角二元因子。
分解后这两个因子也可用，与本群二元函数全为反对角矩阵矛盾。

恰有一个非零端点时，三中心对已由引理~\ref{zh:lem:cyc-native-three-centers} 排除，
其余由引理~\ref{zh:lem:cyc-native-one-endpoint} 完成。
两端点均为零时：零中心是零函数；一个活跃中心对由
引理~\ref{zh:lem:cyc-complement-carrier} 完成；两个活跃中心对按中心秩一关系已是
两个实际可用反对角因子的乘积，与前提矛盾；三个活跃中心对再次由三块矛盾排除。
\end{proof}

\subparagraph*{只有 EO 四元函数的例子}
\begin{example}\label{zh:ex:cyclic-no-endpoint}
在原生 $N$ 边模型中，考虑
$\Gamma=\{=_6\}\cup\{D^jN:0\le j<k\}$，其中
$D=\diag(e^{\pi i/k},e^{-\pi i/k})$。
压缩二元顶点后，相等顶点之间形成带非零权重的不等边。
每个连通分量若不满足二染色条件，贡献为零；否则只有两个全局取值，
可直接计算，所以这个问题在 $\FP$ 中。
完整射影二元群为 $C_k$。

另一方面，一个非零连通四端口构件的二部相等顶点数若为 $a,b$，
两侧的外端口数为 $u,v$，则度数求和给出
$6(a-b)=u-v$，而 $u+v=4$。
所以 $a=b$、$u=v=2$，四元支撑只能在 EO 层。
只有二元顶点的分量也不改变这一结论。
因此，即使原函数有两个非零端点，也未必能构造出双端点非零的四元函数。
两份 $=_6$ 用四条 $N$ 边相连，确实得到互补两点的 EO 四元函数，
这类四元函数由引理~\ref{zh:lem:cyc-complement-carrier} 处理。
\end{example}

\subparagraph*{合并上下集的结论}

上集降元与下集四元分类合在一起，得到两种边模型的完整二分。

\begin{corollary}\label{zh:cor:cyc-equality-dichotomy}
在本节等边对角奇阶 $C_k$ 的二元封闭条件下，$k\ge3$，
所有有限代数复值偶元函数集均有 $\FP/\sharpP$-hard 二分。
\end{corollary}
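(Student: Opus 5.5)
The proof will mirror the English Corollary~\ref{en:cor:cyc-equality-dichotomy}, splitting on whether every input function is already an actual binary pairing product.

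\emph{Tractable side.} Suppose every nonzero function of the current fixed set (with the finitely many group directions $D_r$ adjoined) is a nonzero scalar multiple of a tensor product of full-rank diagonal $\mu_k$-ratio binary functions over some perfect matching. Since the set is fixed, a witnessing matching, the binary factors and the overall scalar of each function are precomputed. Given an instance, I will split every vertex into its binary factors exactly as in Lemma~\ref{zh:lem:klein-cycle-algorithm}; each connected component becomes a cycle, including self-loops and parallel edges. The value of a cycle is the trace of the product of the $2\times2$ matrices read along a chosen direction, transposing a factor traversed against its recorded order. The instance value is the product of these traces and the scalars, so the problem is in $\FP$. Zero functions and nullary scalars are handled directly.

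\emph{Hard or exit side.} If some input function is not such a pairing product, I will apply Lemma~\ref{zh:lem:cyc-diagonal-arity}. It is stated under exactly the present equality-edge diagonal binary closure with $k\ge3$, and it yields an ordinarily realizable quaternary function that is not a pairing product of the group's binary functions. Proposition~\ref{zh:prop:cyc-equality-q4} then supplies the dichotomy for the entire function set. Its branches are:
\begin{enumerate}[label=\textup{(\roman*)}]
 \item the external quaternary dichotomy Theorem~\ref{zh:thm:cyc-external-q4}, applied after balancing and the $K$-transformation;
 \item a contradiction with binary closure in the remaining cases.
\end{enumerate}
All reductions used are gadget reductions or Turing equivalences, so each conclusion transfers back to the original problem.

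Two bookkeeping points need care, and I expect them to be the only real obstacle.

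\textbf{Odd-arity functions.} These have already been dispatched by the framework's odd-arity dichotomy, Theorem~\ref{zh:thm:odd-arity-fp}, so restricting to even arity loses nothing.

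\textbf{Tractable criterion versus pairing-product membership.} The proposition delivers tractability only through the external predicate $\operatorname{Tract}(K\Gamma)$ or the equivalent later outcomes. The pairing-product criterion I will state is therefore only a sufficient condition for tractability. The corollary claims only the existence of an $\FP/\sharpP$-hard dichotomy, not that pairing-product membership is the exact criterion. So it suffices that the non-pairing-product branch always lands in a case with a decided dichotomy.

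I will also remark on decidability. Membership in the pairing-product class is a finite check over matchings with exact algebraic arithmetic. In the other branch, the existence argument of Lemma~\ref{zh:lem:cyc-diagonal-arity} guarantees that enumerating gadgets finds a quaternary certificate.
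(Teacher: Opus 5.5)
Your proposal is correct and follows the paper's proof: the cycle-trace algorithm when every input is an actual binary pairing product, and otherwise Lemma~\ref{zh:lem:cyc-diagonal-arity} to obtain a quaternary non-pairing-product gadget, followed by Proposition~\ref{zh:prop:cyc-equality-q4}. Your extra bookkeeping remarks are harmless but not needed for the stated corollary, which already restricts to even-arity sets. These are the remarks on odd arity, on pairing-product membership being only a sufficient tractability condition, and on decidability.
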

\begin{proof}
若全部输入函数是实际二元配对积，展开这些固定分解后，
每个连通分量是一个二元矩阵圈；按方向乘矩阵再取迹，便可多项式时间求值。
自环和重边包含在同一算法中。
否则由引理~\ref{zh:lem:cyc-diagonal-arity} 得到四元非配对积构件，
再用命题~\ref{zh:prop:cyc-equality-q4} 完成。
\end{proof}

\begin{corollary}[原生不等边高阶循环分支的完整二分]\label{zh:cor:native-cyclic-dichotomy}
设 $k\ge3$，有限代数复值偶元函数集 $\Gamma$ 在原生 $N$ 边模型中满足本节的二元封闭条件。
则整个 $\Gamma$ 问题有可判定的 $\FP/\sharpP$-hard 二分。
\end{corollary}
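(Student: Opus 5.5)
The proof mirrors Corollary~\ref{zh:cor:cyc-equality-dichotomy}, with the native-model ingredients in place of the diagonal ones. It splits on whether every input function is an actual binary pairing product.

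\emph{Tractable branch.} If every nonzero input function of $\Gamma$ lies in $\mathcal P_{\mathcal B}$, we expand the fixed decompositions. Each connected component is then a cycle alternating between actual binary factors $D^jN$ and native $N$ edges. Its value is the trace of the corresponding matrix product, read as $M$ or $M^{\mathsf T}$ according to orientation. This gives a polynomial-time algorithm, and zero functions, nullary scalars, self-loops and parallel edges are absorbed by the same rule.

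\emph{Lower case.} Otherwise some ordinary gadget lies outside $\mathcal P_{\mathcal B}$. If some ordinary quaternary gadget is not a pairing product, Proposition~\ref{zh:prop:cyc-native-q4} already gives a decidable dichotomy.

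\emph{Upper case.} We may now assume every nonzero ordinary quaternary gadget lies in $\mathcal P_{\mathcal B}$. Take a minimum-arity non-pairing-product gadget $F$. Theorem~\ref{zh:thm:native-cyclic-shape} writes it as
\[
 F=Q+a\delta_{0^{m}}+b\delta_{1^{m}},\qquad m\ge6,
\]
with $Q=0$ or $Q$ a weighted antidiagonal matching product whose factor ratios lie in $\mu_k$, and with $(a,b)\ne(0,0)$. Lemma~\ref{zh:lem:native-cyclic-two-endpoints} excludes $ab\ne0$, so exactly one endpoint is nonzero. Lemma~\ref{zh:lem:native-cyclic-endpoint-interpolation} then interpolates the single-endpoint pin. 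By the decomposition theorem and the odd-arity dichotomy, this closes the case.

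The main obstacle is decidability, because the case split refers to the infinite ordinary closure. I would handle it by semi-decision with a termination guarantee.
\begin{itemize}
 \item Membership of the finitely many inputs in $\mathcal P_{\mathcal B}$ is a finite check: enumerate matchings and test antidiagonal support and $\mu_k$ ratios in exact algebraic arithmetic.
 \item If some input is not a pairing product, enumerate gadgets until one of two certificates appears: a quaternary non-pairing product, or a gadget of the endpoint shape \eqref{zh:eq:native-cyclic-tail-shape} with at least one nonzero endpoint and admissible $Q$. The existence argument above guarantees that one appears, so the search terminates.
 \item For a two-endpoint certificate, the two-copy gadget of Lemma~\ref{zh:lem:native-cyclic-two-endpoints} directly yields a quaternary non-pairing product, which is handled as in the lower case.
 \item For a one-endpoint certificate, the interpolation lemma applies.
\end{itemize}
Each branch ends in an external dichotomy whose tractability predicate is decidable on exact input.
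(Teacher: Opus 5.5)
Your proposal is correct and follows the paper's proof essentially step for step. The branches are the same: the cycle-trace algorithm when every input is an actual pairing product; Proposition~\ref{zh:prop:cyc-native-q4} when a quaternary non-pairing-product gadget exists; and otherwise a minimum-arity counterexample shaped by Theorem~\ref{zh:thm:native-cyclic-shape}, with Lemma~\ref{zh:lem:native-cyclic-two-endpoints} excluding two endpoints, Lemma~\ref{zh:lem:native-cyclic-endpoint-interpolation} finishing, and decidability obtained by enumerating gadgets until a quaternary or endpoint-shaped certificate appears, exactly as in the paper.
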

\begin{proof}
如果所有输入函数均属于实际二元配对积类，展开这些固定分解。
每个连通分量是一个二元函数的圈，其值是交替排列实际二元矩阵与边核 $N$ 后所得乘积的迹。
因此整个问题多项式时间可算；零函数、零元标量、自环和重边均按同一规则处理。

否则，某个原输入已是非配对积函数。若存在普通可实现的四元非配对积函数，
直接应用命题~\ref{zh:prop:cyc-native-q4}。
余下所有普通四元构件都是零或本群二元配对积，
在当前固定函数集的普通非配对积构件中取元数最小的 $F$。
它确实存在，因为上述输入函数属于这个集合；
二元封闭条件与四元假设又保证其偶元数至少为六。
定理~\ref{zh:thm:native-cyclic-shape} 给出
$F=Q+a\delta_{0^{2d}}+b\delta_{1^{2d}}$，其中 $Q=0$，
或者 $Q$ 是具有 $\mu_k$ 因子比例的实际反对角匹配积。
若 $a=b=0$，则 $F$ 为零或属于实际二元配对积类，违反选取。
引理~\ref{zh:lem:native-cyclic-two-endpoints} 排除 $ab\ne0$，
因此恰有一个端点非零。
无论 $Q$ 是否为零，引理~\ref{zh:lem:native-cyclic-endpoint-interpolation}
均给出原问题的二分。

最后说明如何判定。二元配对积的成员关系可用有限匹配枚举和代数数运算判定。
对已经加入本群二元方向的固定函数集，若输入函数不全在配对积类中，
就枚举普通构件，寻找以下两种函数之一：
四元非配对积构件，或具有式~\eqref{zh:eq:native-cyclic-tail-shape} 形式、
至少一个端点非零且 $Q$ 满足上述条件的构件。
前面的存在性证明保证这项枚举终止。
第一类证书使用四元二分；第二类若有双端点，
引理~\ref{zh:lem:native-cyclic-two-endpoints} 的两份构件连接直接构造四元非配对积，
而单端点直接使用插值引理。各证书条件通过有限支撑、匹配和根单位检查即可判定，
随后外部二分的完整函数集谓词也是可判定的。
因此这一判定过程会终止，并给出原问题的复杂性二分。
\end{proof}

\addtocontents{toc}{\protect\setcounter{tocdepth}{2}}

\subsection{奇二面体群}\label{zh:subsec:road-odd}
奇二面体群的二分定理另可参见 Jiayi Zheng 的草稿\cite{zh:zheng-draft}。本附录保留一份AI证明的过程。

对尚不能取得二元不等函数的奇二面体群，先沿原有的
“正交三角化、再由转置封闭迫使对角化”步骤取得规范形，
然后使用对角与反对角辅助函数降至四元，最后给出整个函数集的复杂性二分。

\subsubsection{无二元不等函数时的规范形}
\begin{lemma}\label{zh:lem:odd-dihedral-normal}
若 $\mathcal G$ 是阶为 $2n$ 的二面体群，$n\ge3$ 为奇数，
且引理~\ref{zh:lem:N-symmetric} 的条件不成立，则存在复正交基，使
\begin{equation}\label{zh:eq:odd-normal}
 \mathcal B=\langle D,Y\rangle,\qquad
 D=\diag(\lambda,\lambda^{-1}),\qquad \lambda=e^{\pi i/n}.
\end{equation}
\end{lemma}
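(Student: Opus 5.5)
The proof proceeds in three steps: first show $Y\in\mathcal B$, then diagonalize the lift of the rotation subgroup in an orthogonal basis, and finally check that $Y$ survives the change of basis and that $D,Y$ generate everything. Write $\mathcal G=\langle r,s: r^n=s^2=1,\ srs=r^{-1}\rangle$ with $n$ odd. Its elements of order two are exactly the $n$ reflections $sr^j$, and $n$ is odd. Transpose induces an involution $\vartheta([A])=[A^{\mathsf T}]$ of $\mathcal G$, which is an anti-automorphism. Since $\mathcal G$ is not abelian, I will use it directly as a bijection preserving element orders. It therefore permutes the odd set of reflections and fixes some $[S]$. A determinant-one lift of an order-two coset satisfies $S^2=-I$ and $\operatorname{tr}S=0$, and fixedness gives $S^{\mathsf T}=\pm S$. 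The symmetric alternative is exactly the condition of Lemma~\ref{en:lem:N-symmetric}, which is excluded by hypothesis. Hence $S$ is skew-symmetric of determinant one, so $S=\pm Y$ and $Y\in\mathcal B$.

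Next, the rotation subgroup $C=\langle r\rangle$ is characterized intrinsically as the set of elements of odd order, so $\vartheta$ preserves it. Take a lift $R$ of $r$ in $\mathrm{SL}_2$. Its eigenvalues are $\pm\zeta^{\pm1}$ with $\zeta$ a primitive root, and after adjusting the sign we may take $R^n=-I$, so that the full preimage of $C$ in $\mathcal B$ is $\langle R\rangle$. The key point is to find a non-isotropic eigenvector of $R$. In dimension two the only isotropic lines are $\mathbb C(1,i)^{\mathsf T}$ and $\mathbb C(1,-i)^{\mathsf T}$, and these are also the eigenlines of $Y$. If both eigenlines of $R$ were isotropic, $R$ would commute with $Y$. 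But $[Y]$ is a reflection and conjugates $[R]$ to $[R]^{-1}\ne[R]$ because $n\ge3$; even allowing the sign ambiguity, $YRY^{-1}=\pm R$ would contradict this, since the eigenvalues of $R$ are not $\pm$ each other's negatives in the required way. So $R$ has a non-isotropic eigenvector $u$. Normalize it to $u^{\mathsf T}u=1$ and complete it with a vector orthogonal to it to form a complex orthogonal $O$. Then $O^{\mathsf T}RO$ is upper triangular, and so is every element of $\langle R\rangle$. Because $\langle R\rangle$ is closed under transpose, as the preimage of the $\vartheta$-stable set $C$, the matrix $R^{\mathsf T}$ also lies in $\langle R\rangle$ and is upper triangular. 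Hence $R$ is diagonal.

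Finally, $\langle R\rangle$ is cyclic of order $2n$ with $-I=R^n$. Choosing the generator $R^j$ whose eigenvalue is $e^{\pi i/n}$, and permuting the two basis vectors by an orthogonal swap if needed, gives $D=\diag(\lambda,\lambda^{-1})$. For any orthogonal $M$ one has $M^{\mathsf T}YM=(\det M)Y$, so $\pm Y$ stays in the transformed group. Since $[D]$ and $[Y]$ generate the quotient and $-I=D^n$, these two matrices generate $\mathcal B$. The algebraicity of $O$ follows from the fact that it is built with finitely many square roots.

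The main obstacle I expect is the non-isotropy step. The sign bookkeeping in $YRY^{-1}=\pm R$ must be made to yield a genuine contradiction with the dihedral relation. I will check it by noting that $YRY^{-1}$ must equal $\pm R^{-1}$ in $\mathcal B$; if it also commuted, then $R^2=\pm I$, so $[R]$ would have order at most two, which is impossible for odd $n\ge3$.
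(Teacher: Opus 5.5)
Your proposal is correct and follows the paper's proof essentially step for step. You take a transpose-fixed reflection and force it to be $\pm Y$, then use transpose-stability of the odd-order cyclic subgroup. You get a non-isotropic eigenvector of $R$ because commuting with $Y$ would contradict $[Y][R][Y]^{-1}=[R]^{-1}\ne[R]$, and you upgrade orthogonal triangularization to diagonalization by transpose closure. The sign bookkeeping you flag as the main obstacle is not actually one: $RY=YR$ already makes the projective classes commute, which is exactly how the paper argues.
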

\begin{proof}
写 $\mathcal G=\langle r,s:r^n=s^2=1,\ srs=r^{-1}\rangle$。
非单位二阶元素恰为 $s,sr,\ldots,sr^{n-1}$，共 $n$ 个。
转置在这些陪集上是一个对合，故至少固定一个。
其提升 $S$ 的特征值为 $i,-i$，且 $S^{\mathsf T}=S$ 或 $S^{\mathsf T}=-S$。
前一情形会给出 $N$，故只能有 $S=\pm Y$，从而 $Y\in\mathcal B$。

循环子群 $C=\langle r\rangle$ 恰由所有奇数阶元素组成，所以转置保持 $C$。
取提升 $R$ 使 $R^n=-I$；$C$ 的完全逆像为 $\langle R\rangle$。
若 $R$ 的两个特征方向都迷向，则它们是
$\mathbb C(1,i)^{\mathsf T}$ 和 $\mathbb C(1,-i)^{\mathsf T}$，
也是 $Y$ 的两个特征方向，所以 $RY=YR$。
但在奇二面体群中，$[Y]$ 是循环子群外的二阶元素，
它把 $r$ 共轭为 $r^{-1}\ne r$，矛盾。
因此 $R$ 有非迷向特征向量 $u$，可补成复正交基。
在此基下 $R$ 为上三角，$\langle R\rangle$ 的每个元素也为上三角。
这个循环子群仍对转置封闭，故 $R^{\mathsf T}$ 也为上三角；于是 $R$ 对角。
重选循环生成元，把它写成式~\eqref{zh:eq:odd-normal} 的 $D$。
二维正交矩阵 $M$ 满足 $M^{\mathsf T}YM=(\det M)Y$，所以 $Y$ 仍在群中。
$[D]$ 与 $[Y]$ 已生成整个商群，二者及 $-I=D^n$ 因而生成 $\mathcal B$。
\end{proof}

这里先对循环子群的完全逆像作正交三角化，再由这个子群的转置封闭性得到对角化。

\subsubsection{奇二面体群的直接降元}
\label{zh:sec:odd-arity}

本节采用相等边模型。设全部可实现的非零二元函数均为满秩对角或反对角矩阵，
其行列式归一化矩阵集合记为 $\mathcal B$，并保留正负两种代表；相应商群为
$\mathcal G=\mathcal B/\{I,-I\}$。假设可实现三个互不成比例的对角辅助函数，
忽略可补偿的非零整体标量，写成
\begin{equation}\label{zh:eq:odda-three-probes}
 L_j=\operatorname{diag}(1,t_j),\qquad
 t_j\ne0,\qquad t_i\ne t_j\quad(i\ne j),\qquad j=1,2,3.
\end{equation}
另假设可实现一个可逆反对角端口作用及其逆作用。这些操作保持实际二元函数集合。
奇二面体标准形
\[
 \mathcal B=\langle D,Y\rangle,\qquad
 D=\operatorname{diag}(\lambda,\lambda^{-1}),\qquad
 \lambda=e^{\pi i/n},\quad n\ge3\text{ 为奇数},\qquad
 Y=\begin{pmatrix}0&1\\-1&0\end{pmatrix}
\]
满足这些条件：三个对角方向可取自 $I,D,D^2$，反对角作用可取 $Y$，其逆为 $-Y$。

记 $\mathcal P_{\mathcal B}$ 为零函数以及如下配对积：在一个完美匹配上，把实际
可实现的满秩二元函数作张量积，再乘非零整体标量。即其非零 $2d$ 元成员具有形式
\[
 C(x_1,\ldots,x_{2d})=
 c\prod_{\{u,v\}\in M}B_{uv}(x_u,x_v),\qquad c\ne0.
\]
每个二元因子支持一对相等输入或一对不等输入，故
\begin{equation}\label{zh:eq:odda-product-support-size}
 |\operatorname{supp}(C)|=2^d.
\end{equation}
零元常数按空配对积处理。

证明的步骤与高阶循环群相同。先由三个短路的线性关系得到共同配对，
再恢复 $00,11$ 两个切片的结构，最后合并三个视角，得到两个犀牛角和一个羚羊角，
把整个函数拼回配对积。这里增加的反对角辅助用于翻位及排除四变量例外。
切片和线性组合用来分析函数值，构件按指定的二元辅助连接。

\paragraph{配对积的三项关系}

\begin{lemma}[共同配对或四变量例外]\label{zh:lem:odda-three-products}
设 $U,V,W$ 是由满秩对角或反对角二元函数组成的非零 $2d$ 元配对积，满足
\[
 aU+bV+cW=0,\qquad abc\ne0,
\]
并且两两不成比例。则存在以下两种情形之一：
\begin{enumerate}
 \item 三者具有相同支撑，因而具有相同匹配及每条边的相等或不等约束；
 \item 三个匹配只在四个变量上不同，在这四个变量上采用三种不同匹配，外部变量
 上可提出同一个非零配对积因子 $Q$。对四个变量作适当翻位后，三个局部支撑的并
 恰为全部六个重量为二的输入；三者张成的二维空间中，除这三个射影点外，每个非零
 局部函数都在这六个位置全部非零。
\end{enumerate}
\end{lemma}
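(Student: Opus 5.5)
The plan is to argue entirely through supports, since a binary pairing product of full-rank monomial matrices is determined, up to its scalar values, by a rigid support. First I record that the support of such a product determines its matching and edge constraints. For a variable $u$, the only other variable whose value is always equal to $u$, or always its complement, on the support is its partner; every other variable varies freely. Each support has size $2^d$ by \eqref{en:eq:odda-product-support-size}.

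If $\operatorname{supp}U=\operatorname{supp}V$, the relation $aU+bV+cW=0$ puts $\operatorname{supp}W$ inside this common support. Equal sizes then force equality, which gives case~1. Otherwise, nondegeneracy gives $\operatorname{supp}U\mathbin{\triangle}\operatorname{supp}V\subseteq\operatorname{supp}W$. Comparing sizes yields $|\operatorname{supp}U\cap\operatorname{supp}V|\ge2^{d-1}$.

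The key counting step is to bound this intersection from above. I decompose $M_U\cup M_V$ into alternating cycles, treating a common edge as a $2$-cycle. On each cycle the constraints are either inconsistent, which empties the intersection, or leave exactly one free bit. A cycle on $2r$ vertices therefore contributes $r$ free bits to $U$ but only one to the intersection. Hence the intersection has size at most $2^{d-1}$, with equality exactly when there is a single $4$-cycle and all other edges and constraints agree. If the matchings coincide but some edge constraint differs, the supports are disjoint. So the intersection is exactly $2^{d-1}$ and $\operatorname{supp}W$ equals the symmetric difference. This forces $W$ to agree with $U,V$ outside the four variables and to use the third matching on them.

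To extract $Q$, I evaluate the relation at a local input in $\operatorname{supp}U\cap\operatorname{supp}V$, where $W=0$. This shows that the external factors of $U$ and $V$ are proportional. Evaluating at a local input in the symmetric difference then brings $W$'s external factor into the same proportionality.

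For the normal form, I write the local matchings as $12|34$ and $13|24$ and bit-flip a common support point to $0000$. Both constraints of the first two matchings then become equalities, and consistency of the symmetric difference forces both constraints of $14|23$ to be disequalities. Flipping variables $2$ and $3$ then turns the three supports into $\operatorname{supp}(Y_{12}Y_{34})$, $\operatorname{supp}(Y_{13}Y_{24})$ and $\operatorname{supp}(Y_{14}Y_{23})$, whose union is $E_2$. The last assertion is a linear-algebra check in the pencil $\alpha U_{\mathrm{loc}}+\beta V_{\mathrm{loc}}$. The exclusive support points of $U$ and $V$ vanish only when $\alpha=0$ or $\beta=0$. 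The two intersection points must vanish at a single common ratio, because $W$ exists, and a linear form in $(\alpha:\beta)$ has at most one zero.

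I expect the main obstacle to be the cycle-counting bound and its equality case, in particular handling inconsistent cycles and $2$-cycles carefully. The rest is direct substitution.
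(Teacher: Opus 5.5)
Your proposal is correct and follows the paper's proof essentially step for step. It covers support rigidity and the equal-support case, the lower bound $2^{d-1}$ from the symmetric difference, the alternating-cycle upper bound with its single-four-cycle equality case, extracting $Q$ by evaluating at intersection and symmetric-difference points, the bit flips to the $Y$-support normal form, and the pencil argument for the final assertion.
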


\begin{proof}
满秩单项二元配对积的支撑决定匹配和边约束：对任意一个变量，只有它的配对变量
与其恒等或恒反，其余配对中的自由比特独立变化。若 $U,V$ 支撑相同，线性关系使
$\operatorname{supp}(W)$ 包含于这一支撑；三者支撑大小相同，故必然相等。

若 $U,V$ 支撑不同，非退化关系给出
\[
 \operatorname{supp}(U)\mathbin{\triangle}\operatorname{supp}(V)
 \subseteq\operatorname{supp}(W).
\]
由式~\eqref{zh:eq:odda-product-support-size}，
\begin{equation}\label{zh:eq:odda-half-intersection}
 |\operatorname{supp}(U)\cap\operatorname{supp}(V)|\ge2^{d-1}.
\end{equation}
两个匹配的并分解成交替圈，公共边视为含两条重边的圈。若圈上的相等、不等约束不相容，
交集为空；若相容，每个圈只留下一个自由比特。一个含 $2r$ 个顶点的圈，在单个匹配中
留下 $r$ 个自由比特，在交集中只留下一个。因此不同匹配的交集至多为 $2^{d-1}$，
等号仅在恰有一个四变量交替圈、其余边及约束完全相同时成立。
同一匹配的不同边约束给出不相交支撑，也不能满足式~\eqref{zh:eq:odda-half-intersection}。

所以交集大小恰为 $2^{d-1}$，而 $W$ 的支撑恰为上述对称差。四变量以外的匹配不变，
四变量内部则必须采用第三种匹配。局部交集有两个输入；在任意一个这样的输入上，
$W$ 为零而 $U,V$ 非零，代入关系即知 $U,V$ 的外部乘积因子成比例。在局部对称差的
任意一个输入上再代入关系，得到 $W$ 的外部因子也与它们成比例。于是可提出共同因子
$Q$。

为说明局部支撑的标准形式，把前两个局部匹配记为 $12|34$ 与 $13|24$。
选它们共同支撑中的一个输入，逐位翻成 $0000$，就把前两个匹配的约束都变成相等；
第三个匹配 $14|23$ 的约束因此都是不等。再翻转第 2、3 个变量，三个局部支撑便成为
\[
 \operatorname{supp}(Y_{12}Y_{34}),\quad
 \operatorname{supp}(Y_{13}Y_{24}),\quad
 \operatorname{supp}(Y_{14}Y_{23}),\qquad
 Y=\begin{pmatrix}0&1\\-1&0\end{pmatrix}.
\]
这里仅用这些函数表示支撑，不要求局部非零值与 $Y$ 的乘积相同。三个支撑的并是
四位串中全部六个重量为二的输入。

最后看局部二维线性空间。两个独有支撑部分在非退化线性组合中不会消失；它们的两个
交集位置则必须在同一个参数下同时相消，否则不会存在第三个配对积。因此除前两个
函数自身及这个唯一相消方向之外，其余非零组合都在六个位置全部非零。
\end{proof}

四变量例外本身确实存在，例如
\[
 Y_{12}Y_{34}-Y_{13}Y_{24}+Y_{14}Y_{23}=0.
\]
下一步利用不同端口的短路排除这个四变量例外。

\paragraph{换端口排除四变量例外}

以下几个引理共同考虑一个非零可实现的 $2m$ 元函数 $F$，其中 $m\ge3$，并假设
所有可实现的更低偶数元函数均属于 $\mathcal P_{\mathcal B}$。这项假设将在最终的
最小反例证明中满足。

\begin{lemma}[三个短路的配对一致]\label{zh:lem:odda-cap-consistency}
固定 $F$ 的任意两个端口 $p,q$。用式~\eqref{zh:eq:odda-three-probes} 中的三个对角
辅助短路后，所有非零结果具有相同支撑，因而采用相同匹配及边约束。
\end{lemma}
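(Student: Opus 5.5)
We write $A=F^{00}_{pq}$, $B=F^{11}_{pq}$, $C=F^{01}_{pq}$, $E=F^{10}_{pq}$, so that the three actual shortloop results are $H_j=A+t_jB$. If $A,B$ are linearly dependent, all nonzero $H_j$ are proportional and nothing remains to prove. Otherwise the three $H_j$ are nonzero and pairwise nonproportional, and they satisfy a three-term linear relation with all coefficients nonzero. Each $H_j$ has arity $2m-2$, so by the lower-arity hypothesis it lies in $\mathcal P_{\mathcal B}$. Lemma~\ref{zh:lem:odda-three-products} therefore applies, and it remains only to exclude its four-variable exception.

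Assume the exception occurs, with local variables $x_1,\dots,x_4$ and common external factor $Q(z)$ of support $\mathcal S$, where $s=|\mathcal S|=2^{m-3}$. First apply the available invertible antidiagonal actions to bring the local supports into the standard form. These actions preserve support sizes and $\mathcal P_{\mathcal B}$ membership in both directions, so the lower-arity hypothesis survives. After this, the union of the three supports is $E_2\times\mathcal S$. Since $A,B$ correspond to the parameters $0$ and $\infty$, which are not among the $t_j$, the last clause of Lemma~\ref{zh:lem:odda-three-products} shows that $A$ and $B$ are nonzero at exactly these $6s$ positions.

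Next, for each $i$ we short the ports $p,x_i$ with the same three probes, obtaining $K^{(i)}_j$. In terms of the slices,
\[
K^{(i)}_j(q{=}0)=A(x_i{=}0)+t_jE(x_i{=}1),\qquad
K^{(i)}_j(q{=}1)=C(x_i{=}0)+t_jB(x_i{=}1).
\]
Consider the anchor set $q+|x_{\ne i}|=2$, $z\in\mathcal S$, which has $6s$ positions. At each such position the linear expression in $t_j$ has a known nonzero coefficient, coming from $A$ when $q=0$ and from $B$ when $q=1$. Hence it vanishes for at most one of the three values $t_j$, and the total support over the three $K^{(i)}_j$ is at least $12s$. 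On the other hand, each $K^{(i)}_j$ is zero or a pairing product of arity $2m-2$, so its support has size at most $4s$. Equality must therefore hold everywhere, which forces all three $K^{(i)}_j$ to vanish off the anchor set. Consequently both coefficients vanish there. This yields $\operatorname{supp}_x(C)\subseteq E_1\cup\{1111\}$ and $\operatorname{supp}_x(E)\subseteq E_3\cup\{0000\}$.

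Repeating the count with the roles of $p$ and $q$ exchanged gives the opposite inclusions. Since the two allowed sets for each slice are disjoint, we get $C=E=0$. Then each $K^{(i)}_j$ is nonzero on all $6s$ anchor positions, which exceeds $4s$, a contradiction.

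The main obstacle is the support-counting step. Its difficulty is in verifying that every anchor position carries a genuinely nonzero coefficient and that the bound $4s$ is tight enough to kill everything off the anchor set. One must also check that this argument does not need the external variables to lie in $\mathcal S$ when deriving the inclusions for $C$ and $E$.
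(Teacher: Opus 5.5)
Your proposal is correct and follows the paper's proof essentially step for step. It uses the same reduction to the four-variable exception of Lemma~\ref{zh:lem:odda-three-products} after the antidiagonal bit flips, and the same anchor-set count comparing $12s$ with $3\cdot 4s$ for the shortloops on $p,x_i$; it then exchanges $p,q$ to force $C=E=0$, giving the final contradiction $6s>4s$.
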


\begin{proof}
记四个数学切片为
\[
 A=F_{pq}^{00},\quad B=F_{pq}^{11},\quad
 C=F_{pq}^{01},\quad E=F_{pq}^{10}.
\]
三个实际短路结果为 $H_j=A+t_jB$。若 $A,B$ 相关，所有非零结果成比例，结论成立。
若二者无关，三个结果都非零、两两不成比例，且满足三系数均非零的线性关系。
低元假设使每个 $H_j$ 都是配对积，因此可以使用
引理~\ref{zh:lem:odda-three-products}。

假设出现其四变量例外，局部变量记为 $x_1,\ldots,x_4$，其余变量记为 $z$。
提出公共外因子 $Q(z)$，并令
\[
 \mathcal S=\operatorname{supp}(Q),\qquad s=|\mathcal S|=2^{m-3}.
\]
用已有的可逆反对角作用实施引理中的局部翻位。这些作用只是给输入翻位并乘非零权重，
保持支撑大小、配对积归属及低元假设。仍沿用原来的字母记号。
三个特殊支撑的并此时为
\[
 E_2\times\mathcal S,\qquad
 E_2=\{x\in\{0,1\}^4:|x|=2\}.
\]
$A,B$ 在该二维空间中分别对应参数 $0,\infty$，而三个配对积参数 $t_j$ 非零且有限。
因此引理~\ref{zh:lem:odda-three-products} 的末项说明：$A,B$ 恰在全部 $6s$ 个位置
非零，其余位置为零。

固定 $i\in\{1,2,3,4\}$，改在端口 $p,x_i$ 上使用同样三个对角辅助，所得函数为
$K_j^{(i)}$。其剩余端口为 $q,x_{\ne i},z$，并且
\begin{align*}
 K_j^{(i)}(q=0)&=A(x_i=0)+t_jE(x_i=1),\\
 K_j^{(i)}(q=1)&=C(x_i=0)+t_jB(x_i=1).
\end{align*}
考虑位置集合
\begin{equation}\label{zh:eq:odda-anchor-set}
 q+|x_{\ne i}|=2,\qquad z\in\mathcal S.
\end{equation}
它有 $6s$ 个位置。在每个位置，上述关于 $t_j$ 的一次式至少有一个已知非零系数：
$q=0$ 时来自 $A$，$q=1$ 时来自 $B$。每个一次式在三个不同参数下至多一次为零，
所以
\begin{equation}\label{zh:eq:odda-support-lower-bound}
 \sum_{j=1}^{3}|\operatorname{supp}(K_j^{(i)})|\ge12s.
\end{equation}
另一方面，各结果为严格低元的可实现函数，故为零或配对积。其元数是 $2m-2$，
非零时支撑大小为 $2^{m-1}=4s$，所以式~\eqref{zh:eq:odda-support-lower-bound}
两边必须取等号。于是式~\eqref{zh:eq:odda-anchor-set} 以外的所有位置上，三个结果
都为零；一次式在三个不同参数上为零，两个系数也都为零。

遍历四个 $i$，由此得到：$C$ 的非零输入若有 $x_i=0$，则 $|x|=1$；
$E$ 的非零输入若有 $x_i=1$，则 $|x|=3$。因此它们在局部四变量上的可能支撑满足
\[
 \operatorname{supp}_x(C)\subseteq E_1\cup\{1111\},\qquad
 \operatorname{supp}_x(E)\subseteq E_3\cup\{0000\},
\]
其中 $E_r=\{x:|x|=r\}$，$\operatorname{supp}_x$ 表示把非零支撑投影到这四个变量。
这些包含关系没有对外部变量作额外假设，故也包括可能出现在 $\mathcal S$ 外的输入。

再交换 $p,q$，对端口 $q,x_i$ 作同样的三个短路。完全相同的计数给出相反限制
\[
 \operatorname{supp}_x(C)\subseteq E_3\cup\{0000\},\qquad
 \operatorname{supp}_x(E)\subseteq E_1\cup\{1111\}.
\]
两组允许集合的交均为空，故 $C=E=0$。但此时每个 $K_j^{(i)}$ 都在
式~\eqref{zh:eq:odda-anchor-set} 的全部 $6s$ 个位置非零，超过配对积允许的 $4s$，
矛盾。四变量例外不可能出现。
\end{proof}

\paragraph{两个切片的共同因子}

在需要的端口接入可逆反对角辅助，可把 $F$ 的某个非零输入翻到全零。
这一操作及其逆都可实现，故保持 $\mathcal P_{\mathcal B}$ 归属及低元假设。
下面不妨已安排
\begin{equation}\label{zh:eq:odda-all-zero-nonzero}
 F(0^{2m})\ne0.
\end{equation}

\begin{lemma}[至多一个因子改变]\label{zh:lem:odda-one-changing-factor}
设三个非零函数在同一个变量分组上都为张量积，三者两两不成比例，且满足非退化
三项线性关系。则任取其中两个，其不成比例的因子至多出现在一个变量组上。
\end{lemma}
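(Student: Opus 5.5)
The plan is a flattening and rank argument. Call the three functions $U,V,W$. They satisfy $\alpha U+\beta V+\gamma W=0$ with $\alpha\beta\gamma\ne0$, and each is a tensor product over the common partition $G_1,\dots,G_r$ of the variables. Write $U=\bigotimes_k u_k$ and $V=\bigotimes_k v_k$, where $u_k,v_k$ are nonzero functions on $G_k$. Suppose, for contradiction, that $u_k$ and $v_k$ are nonproportional on at least two groups, say $G_1$ and $G_2$.

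Flatten everything across the cut $G_1\mid(G_2\cup\dots\cup G_r)$. Each of the three functions then becomes a rank-one matrix: $U=u_1\,\hat u^{\mathsf T}$, $V=v_1\,\hat v^{\mathsf T}$ and $W=w_1\,\hat w^{\mathsf T}$. Here $\hat u=\bigotimes_{k\ge2}u_k$, and $\hat v,\hat w$ are defined the same way. Nonproportionality on $G_1$ means that $u_1,v_1$ are linearly independent. Nonproportionality on $G_2$ means that $\hat u,\hat v$ are linearly independent. This uses the elementary fact that two tensor products of nonzero vectors are proportional if and only if they are proportional factor by factor; I will verify it with one slice, by fixing a nonzero entry in the other factors.

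Given these two independence facts, the matrix $\alpha u_1\hat u^{\mathsf T}+\beta v_1\hat v^{\mathsf T}$ has rank exactly two. Its column space is spanned by the independent vectors $u_1,v_1$, and its row space by the independent vectors $\hat u,\hat v$, and both coefficients are nonzero. On the other hand, the relation says this matrix equals $-\gamma W$, which has rank at most one. This contradiction shows that the factors of $U$ and $V$ are nonproportional on at most one group. The same argument applies to every pair among the three functions, because the relation is symmetric in $U,V,W$.

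I expect no serious obstacle. The only delicate step is justifying the two independence facts, in particular that nonproportionality on $G_2$ passes to independence of the full complementary products $\hat u,\hat v$. This is exactly the factorwise-proportionality fact for tensor products of nonzero vectors, and I will state and check it explicitly. The pairwise nonproportionality of the three functions is not needed for this argument; it only rules out degenerate cases.
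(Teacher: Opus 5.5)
Your proposal is correct and follows the paper's own proof. You flatten across one group on which the factors differ, use independence there and on a second group to make $\alpha U+\beta V$ a rank-two matrix, and contradict the rank-one flattening of $-\gamma W$; your explicit check that nonproportionality on $G_2$ gives independence of $\hat u,\hat v$ spells out a step the paper leaves implicit.
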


\begin{proof}
若至少两个变量组上的因子不成比例，取其中一个组与其余变量作矩阵展平。
前两函数分别为 $uv^{\mathsf T}$、$xy^{\mathsf T}$，其中 $u,x$ 无关，$v,y$ 也无关。
两个系数非零的和因此秩为二，不可能是第三个函数的秩一展平。
\end{proof}

\begin{lemma}[切片的共同外因子]\label{zh:lem:odda-common-slice-factor}
对任意端口 $p,q$，可在剩余变量中选一对 $z,w$，写成
\begin{equation}\label{zh:eq:odda-slice-factorization}
 F_{pq}^{00}=E_0(z,w)Q,\qquad
 F_{pq}^{11}=E_1(z,w)Q.
\end{equation}
这里 $Q$ 是实际可实现的满秩对角二元因子的配对积，$E_0,E_1$ 是暂时允许退化的
对角矩阵，且 $E_0(00)\ne0$。
\end{lemma}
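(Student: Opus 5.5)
The plan is to follow the English version of Lemma~\ref{en:lem:odda-common-slice-factor} step by step; the ingredients are Lemma~\ref{zh:lem:odda-cap-consistency}, Lemma~\ref{zh:lem:odda-one-changing-factor}, and the normalization~\eqref{zh:eq:odda-all-zero-nonzero}.

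First, evaluate the three actual shortloops $H_j=F_{pq}^{00}+t_jF_{pq}^{11}$ at the all-zero assignment of the remaining variables. The value there is $F_{pq}^{00}(0)+t_jF_{pq}^{11}(0)$, and the first term is $F(0^{2m})\ne0$. Since the $t_j$ are distinct, at most one $j$ can make this vanish. Hence at least two of the $H_j$ contain the all-zero input in their support. Each $H_j$ is a lower-arity realizable function, so by the lower-arity assumption it lies in $\mathcal P_{\mathcal B}$. A full-rank monomial pairing product whose support contains $0$ must have every binary factor diagonal, because an antidiagonal factor vanishes at $00$.

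Next, split into two cases according to whether the slices $A=F_{pq}^{00}$ and $B=F_{pq}^{11}$ are linearly dependent.

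\emph{Dependent case.} Every nonzero $H_j$ is proportional to one fixed purely diagonal pairing product $P$, and both $A$ and $B$ are multiples of $P$. I would take any pair $z,w$ of $P$, put the remaining factors into $Q$, and let $E_0,E_1$ be the corresponding multiples of the $zw$ factor. One of $E_0,E_1$ may be zero, which the statement allows.

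\emph{Independent case.} All three $H_j$ are nonzero, pairwise nonproportional, and satisfy a nondegenerate three-term relation. Lemma~\ref{zh:lem:odda-cap-consistency} excludes the four-variable exception, so the $H_j$ share one support and one matching. Since at least one of them is purely diagonal, all of them are. Lemma~\ref{zh:lem:odda-one-changing-factor} then says their factors agree up to scalars on all pairs except at most one pair $z,w$. If no pair changes, choose $z,w$ arbitrarily. Write $H_j=E^{(j)}(z,w)Q$. Solving the $2\times2$ linear system from two of the $H_j$ (the $t_j$ are distinct) recovers $A=E_0Q$ and $B=E_1Q$ with $E_0,E_1$ diagonal, possibly degenerate.

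To make sure the factors of $Q$ are actually realizable, I would take them literally from the factorization of one actual shortloop result $H_j$. Its binary factors are realizable by the lower-arity assumption, since $H_j\in\mathcal P_{\mathcal B}$.

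Finally, $E_0(00)Q(0)=A(0)=F(0^{2m})\ne0$ gives $E_0(00)\ne0$.

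The main obstacle is the independent case. There one must ensure that the common matching comes from Lemma~\ref{zh:lem:odda-cap-consistency} rather than from a support coincidence. One must also check that the changing pair is the same pair for all three pairwise comparisons. I would argue this via the rank-one flattening of Lemma~\ref{zh:lem:odda-one-changing-factor}: if two different pairs changed, one of the three pairs of functions would differ on two groups.
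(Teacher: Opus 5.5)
Your proposal is correct and follows the paper's proof essentially step by step. Both arguments evaluate the shortloops at the all-zero assignment to get purely diagonal shortloops, split by dependence of the two slices, and invoke Lemma~\ref{zh:lem:odda-cap-consistency} and Lemma~\ref{zh:lem:odda-one-changing-factor}. Both then solve for the slices from two shortloops, take the factors of $Q$ from an actual shortloop result, and read off $E_0(00)\ne0$ from $F(0^{2m})\ne0$. The extra check that the changing pair is the same across all three pairwise comparisons is harmless but unnecessary. Solving for $F_{pq}^{00},F_{pq}^{11}$ uses only two shortloops, and their single changing pair already yields the common factor $Q$.
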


\begin{proof}
三个短路在剩余变量全零处的值为
\[
 F_{pq}^{00}(0)+t_jF_{pq}^{11}(0).
\]
第一项非零，故至少两个结果的支撑含全零。满秩单项二元配对积的支撑若含全零，
它的所有二元因子必为对角函数。

若两个切片相关，所有非零短路与上述纯对角配对积成比例，结论立即成立。
若两个切片无关，三个短路均非零，且由引理~\ref{zh:lem:odda-cap-consistency}
具有同一支撑和匹配，因此三者都是纯对角配对积。再用
引理~\ref{zh:lem:odda-one-changing-factor}，至多一个配对上的因子不成比例。
把其余公共因子提出为 $Q$，并从两个短路的线性方程恢复切片，得到
式~\eqref{zh:eq:odda-slice-factorization}。可取 $Q$ 的因子为某个实际短路结果的
分解因子，故它们确实来自实际二元函数集合。最后由
式~\eqref{zh:eq:odda-all-zero-nonzero} 得 $E_0(00)\ne0$。
\end{proof}

\begin{lemma}[完整配对与互补钉位配对]\label{zh:lem:odda-two-slice-forms}
在式~\eqref{zh:eq:odda-slice-factorization} 中，恰有以下两种情形之一：
\begin{enumerate}
 \item $E_0,E_1$ 均满秩且相互成比例；两个切片具有同一完整对角配对支撑。
 \item $E_0$ 只在 $00$ 非零，$E_1$ 只在 $11$ 非零；两个切片在同一对 $z,w$
 上分别钉为 $00$、$11$，其余配对因子均满秩。
\end{enumerate}
特别地，两个切片都非零。
\end{lemma}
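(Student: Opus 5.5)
The plan is to close off the external factor $Q$ so that only a quaternary function on $p,q,z,w$ remains. That function falls under the lower-arity hypothesis, and its three possible matchings produce exactly the two listed cases.

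First, for each pair of the matching of $Q$, I choose one of the diagonal probes $L_j$ from \eqref{zh:eq:odda-three-probes} and connect its two ports to that pair. The full-rank diagonal factor $\operatorname{diag}(\alpha,\beta)$ on the pair closes to $\alpha+t_j\beta$. This is a nonzero linear polynomial in $t_j$, so at most one of the three distinct probes annihilates it, and a good choice always exists. I perform these connections on the original $F$, not on the slices. Since the slice operation commutes with contractions on disjoint ports, the result is an actual quaternary gadget $H(p,q,z,w)$ whose $00$ and $11$ slices in $p,q$ are $\kappa E_0$ and $\kappa E_1$, where $\kappa\ne0$ is the product of the closure values. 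By Lemma~\ref{zh:lem:odda-common-slice-factor}, $E_0(00)\ne0$, so $H(0000)\ne0$ and in particular $H\ne0$.

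Next, arity four is strictly below $2m$, so the lower-arity assumption gives $H\in\mathcal P_{\mathcal B}$. Its support contains $0000$, and a full-rank monomial pairing product whose support contains the all-zero input must consist entirely of diagonal factors. Hence $H$ is a product of two full-rank diagonal binary functions over one of the matchings $pq|zw$, $pz|qw$, $pw|qz$. I would then read off the slices in each case. For $pq|zw$, pinning $p=q=0$ or $p=q=1$ gives nonzero multiples of the same full-rank diagonal $zw$ factor, so $E_0,E_1$ are full rank and proportional; this is case~1. For $pz|qw$ or $pw|qz$, pinning $p=q=0$ forces $z=w=0$ with a nonzero value, and pinning $p=q=1$ forces $z=w=1$ with a nonzero value; this is case~2. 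In case~2 the remaining factors are those of $Q$, so they are full rank, and both slices are nonzero in both cases. The cases are mutually exclusive because $E_0$ has support size two in case~1 and size one in case~2.

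The only delicate point is justifying that the slices of $H$ are exactly $\kappa E_0,\kappa E_1$. This requires that the probe contractions touch only variables of $Q$ and never $p,q,z,w$, and that the contracted factorization \eqref{zh:eq:odda-slice-factorization} really separates $z,w$ from the paired external variables. Both facts are immediate from the form of \eqref{zh:eq:odda-slice-factorization}, so I expect no genuine obstacle.
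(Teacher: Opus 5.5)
Your proposal is correct and follows the paper's proof essentially step by step. You close each factor of $Q$ in the original $F$ with a diagonal probe $L_j$ chosen so that the closure value, a nonzero linear polynomial in $t_j$, does not vanish, which gives an actual quaternary $H(p,q,z,w)$ with $H(0000)\ne0$; the lower-arity hypothesis then makes $H$ a purely diagonal pairing product, and the matching $pq|zw$ versus $pz|qw$ or $pw|qz$ yields the two cases, with your support-size argument for mutual exclusivity making explicit what the paper leaves implicit.
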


\begin{proof}
对 $Q$ 的每个满秩对角因子，选一个 $L_j$ 使其双线性闭合值非零。
这种闭合值是参数 $t_j$ 的非零一次式，故至多一个探针使它为零，总能选择。
在原函数 $F$ 的相应外部配对上连接这些辅助，得到实际四元函数 $H(p,q,z,w)$。
其两个切片为同一个非零标量乘 $E_0,E_1$，并且 $H(0000)\ne0$。

四元严格低于 $2m$，故 $H\in\mathcal P_{\mathcal B}$。又因其支撑含全零，
它是纯对角配对积。若匹配为 $pq|zw$，两个切片都是同一个满秩 $zw$ 因子的非零倍数。
若匹配为 $pz|qw$ 或 $pw|qz$，钉 $p=q=0$ 时只留下 $z=w=0$，钉 $p=q=1$ 时
只留下 $z=w=1$，相应值均非零。这恰给出两种情形。
\end{proof}

为沿用切片结构的直观名称，称第一种完整配对切片为\emph{羚羊}；称第二种在一对变量上
固定为 $00$ 或 $11$ 的切片为\emph{0 犀牛}或\emph{1 犀牛}，把这一对变量称为犀牛角。
以下用羚羊和犀牛指代这两种切片结构。
第一种情形中，两个切片实际成比例于某个非零短路结果，所以其全部配对因子也可取为
已有满秩对角函数；第二种情形中，犀牛角以外的公共因子由
引理~\ref{zh:lem:odda-common-slice-factor} 保证来自已有二元函数。

\paragraph{三个视角的配对拼接}

\begin{lemma}[四变量内部配对]\label{zh:lem:odda-three-views}
存在四个不同的变量 $x,y,z,w$，使三个切片
\[
 F_{xy}^{00},\qquad F_{xz}^{00},\qquad F_{yz}^{00}
\]
分别把 $z,y,x$ 与 $w$ 配对。若其中某个切片是犀牛，其犀牛角就是这份内部配对。
三个切片在其余变量上的匹配一致，且外部乘积因子相互成比例。
\end{lemma}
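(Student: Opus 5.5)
The plan is to start from one slice and read the remaining two views off the common double slice. First pick any $x,y$ and consider $F_{xy}^{00}$. By Lemma~\ref{zh:lem:odda-two-slice-forms} it is either an antelope or a $0$-rhinoceros. If it is an antelope, take any remaining variable $z$ and let $w$ be its partner in the matching. If it is a rhinoceros, let $z,w$ be its horn pair. In both cases the triple slice $F_{xyz}^{000}$ has exactly one forced remaining variable, namely $w=0$. All other variables stay freely paired by full-rank diagonal factors, and their common external factor is the one from Lemma~\ref{zh:lem:odda-common-slice-factor}. This slice is nonzero by \eqref{zh:eq:odda-all-zero-nonzero}.

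Next view the same triple slice from $F_{xz}^{00}$ by pinning $y=0$. If that slice is an antelope, pinning $y$ additionally fixes only the partner of $y$. Comparing the sets of forced variables in $F_{xyz}^{000}$ then forces that partner to be $w$. If $F_{xz}^{00}$ is a rhinoceros whose horn pair does not contain $y$, pinning $y$ fixes three remaining variables: the two horn variables and the partner of $y$. This contradicts the single forced variable found above, so the horn pair must be $\{y,w\}$. The same comparison applied to $F_{yz}^{00}$ (pin $x=0$) pairs $x$ with $w$, and any horn pair there is $\{x,w\}$. The key point is counting forced zeros in the common support, which is legitimate because support determines the matching and full-rank diagonal factors never force a variable.

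Finally, in each of the three views pin the remaining two of $x,y,z,w$ to $0$. All three reduce to the same nonzero function $F_{xyzw}^{0000}$, which is a full-rank diagonal pairing product on the external variables. Its support determines its matching, so the three external matchings coincide. Equality of function values then makes the external product factors proportional. The main obstacle is the rhinoceros case in the second step. There one must check carefully that the horn-pair variables and the partner of $y$ are genuinely distinct forced variables, including when the partner of $y$ lies in the horn pair. In that subcase I would observe that it already places $y$ in the horn pair, which is the desired conclusion.
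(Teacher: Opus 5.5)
Your proposal is correct and follows the paper's proof essentially step for step. It uses the same choice of $z,w$ from $F_{xy}^{00}$, the same count of forced zeros in the common slice $F_{xyz}^{000}$ as seen from $F_{xz}^{00}$ and $F_{yz}^{00}$, and the same final comparison through $F_{xyzw}^{0000}$. The subcase you flag as the main obstacle is vacuous: the horn pair is itself one pair of the slice's pairing, so the partner of $y$ can lie in it only if $y$ does.
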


\begin{proof}
取任意 $F_{xy}^{00}$。若它是羚羊，任选剩余变量 $z$，令 $w$ 为其伙伴；
若它是犀牛，令 $z,w$ 为犀牛角。无论哪种情形，在共同切片 $F_{xyz}^{000}$ 的
剩余变量中，恰好只有 $w$ 被迫取零，其他变量仍按满秩对角因子自由配对。

现在从 $F_{xz}^{00}$ 的视角看这个切片，即再把 $y$ 钉为零。若该视角是羚羊，
唯一进一步固定的变量必是 $y$ 的伙伴，所以伙伴为 $w$。若该视角是犀牛而 $y$
不在犀牛角上，钉 $y$ 会使犀牛角的两个变量及 $y$ 的伙伴共三个剩余变量被迫取零，
矛盾。因此 $y$ 必在犀牛角上，另一端就是 $w$。对 $F_{yz}^{00}$ 同理，$x$ 必与
$w$ 配对，若有犀牛角也只能是 $x,w$。

再将各视角在 $x,y,z,w$ 中的剩余两个变量钉为零，得到同一个非零函数
$F_{xyzw}^{0000}$。它在外部变量上是满秩对角配对积，支撑决定匹配，所以三个外部匹配
一致；函数值本身相同又说明三份外部乘积因子成比例。
\end{proof}

\begin{lemma}[至少两个犀牛视角]\label{zh:lem:odda-two-rhinos}
引理~\ref{zh:lem:odda-three-views} 的三个切片中，至少两个是 0 犀牛。
\end{lemma}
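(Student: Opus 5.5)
We argue by contradiction, mirroring the English version of this lemma (Lemma~\ref{en:lem:odda-two-rhinos}). Suppose at most one of the three slices in Lemma~\ref{zh:lem:odda-three-views} is a 0-rhinoceros. Then at least two of them are antelopes. After renaming $x,y,z$ we may take these two to be $F_{xy}^{00}$ and $F_{xz}^{00}$. The structural facts we need are already in Lemma~\ref{zh:lem:odda-three-views}: $F_{xy}^{00}$ pairs $z$ with $w$, $F_{xz}^{00}$ pairs $y$ with $w$, the two external matchings agree, and the external product factors are proportional.

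First we use the proportional external factors. By Lemma~\ref{zh:lem:odda-two-slice-forms}, antelope factors can be chosen from the available full-rank diagonal functions. Since the two antelopes share their external factor, one common collection of the diagonal probes $L_j$ closes it to a nonzero value in both. We choose, pair by pair, an $L_j$ with nonzero bilinear closure; this is possible because each closure is a nonzero linear polynomial in $t_j$ and so vanishes for at most one probe. Attaching these probes to the original $F$ gives an actual quaternary function $H(x,y,z,w)$.

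Next we read off three values of $H$. The $00$ slice on $xy$ is a nonzero multiple of the full-rank diagonal factor on $zw$, so $H(0000)\ne0$ and $H(0011)\ne0$. The $00$ slice on $xz$ is a full-rank diagonal factor on $yw$, so $H(0101)\ne0$. We must check that the scalars obtained from the two closures are the same nonzero quantity up to the proportionality constant. This holds because both closures act on the same external factor (up to scalar).

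Finally we apply the lower-arity assumption. Arity four is below $2m$, so $H\in\mathcal P_{\mathcal B}$. Because $H(0000)\ne0$, $H$ is a purely diagonal pairing product. Each of its three possible matchings excludes one of the observed inputs: $xy|zw$ excludes $0101$, $xz|yw$ excludes $0011$, and $xw|yz$ excludes both. This is a contradiction, so at least two of the three views are 0-rhinoceroses.

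The only delicate step is the simultaneous nonzero closure, but it reduces to the one-bad-probe-per-factor argument given above.
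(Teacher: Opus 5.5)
Your proposal is correct and is essentially the paper's proof. Assuming two antelope views $F_{xy}^{00},F_{xz}^{00}$, you close their proportional external factor with one common set of diagonal probes to get an actual quaternary $H$ with $H(0000),H(0011),H(0101)\ne0$, and then note that none of the diagonal pairings $xy|zw$, $xz|yw$, $xw|yz$ supports all three inputs; your justification of the common nonzero closure (each full-rank diagonal factor rules out at most one of the three probes) is the same argument the paper gives in the proof of Lemma~\ref{zh:lem:odda-two-slice-forms}.
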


\begin{proof}
若至少两个是羚羊，重命名 $x,y,z$ 后，不妨为 $F_{xy}^{00}$ 与 $F_{xz}^{00}$。
它们的外部因子成比例，所以可选择同一组对角辅助使该公共因子非零闭合。
在原函数 $F$ 上连接这些辅助，得到实际四元函数 $H(x,y,z,w)$，并有
\[
 H(0000)\ne0,\qquad H(0011)\ne0,\qquad H(0101)\ne0.
\]
低元假设与 $H(0000)\ne0$ 迫使 $H$ 为纯对角配对积。但是三种四变量匹配中，
$xy|zw$ 不允许 $0101$，$xz|yw$ 不允许 $0011$，而 $xw|yz$ 对这两点都不允许。
矛盾。
\end{proof}

\begin{theorem}[降到四元非配对积]\label{zh:thm:odd-arity-reduction}
在本节的二元刚性和辅助函数条件下，若可实现某个偶数元函数不属于
$\mathcal P_{\mathcal B}$，则可实现一个四元函数也不属于
$\mathcal P_{\mathcal B}$。
\end{theorem}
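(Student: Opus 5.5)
The plan is a minimal-counterexample argument; it mirrors the English version of Theorem~\ref{en:thm:odd-arity-reduction}. Suppose, for contradiction, that every realizable quaternary function lies in $\mathcal P_{\mathcal B}$. Among realizable even-arity functions outside $\mathcal P_{\mathcal B}$, choose $F$ of minimum arity $2m$.

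\textbf{Setting up the hypotheses.} Nullary and binary functions lie in $\mathcal P_{\mathcal B}$ by the binary rigidity assumption, and quaternary ones by the contrary hypothesis. Hence $m\ge3$, and every realizable function of smaller even arity lies in $\mathcal P_{\mathcal B}$. This is exactly the standing assumption of Lemmas~\ref{zh:lem:odda-cap-consistency}--\ref{zh:lem:odda-two-rhinos}. Next, attach invertible antidiagonal actions so that $F(0^{2m})\ne0$. These actions and their inverses preserve membership in $\mathcal P_{\mathcal B}$, so it suffices to prove that the transformed $F$ lies in $\mathcal P_{\mathcal B}$.

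\textbf{Forcing $x=y$ and $z=w$ on the support.} Apply Lemma~\ref{zh:lem:odda-three-views} to obtain $x,y,z,w$. By Lemma~\ref{zh:lem:odda-two-rhinos}, after renaming we may assume $F_{xz}^{00}$ and $F_{yz}^{00}$ are $0$-rhinoceroses, with horn pairs $y,w$ and $x,w$.

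Lemma~\ref{zh:lem:odda-two-slice-forms} pairs each $0$-rhinoceros with a complementary $1$-rhinoceros on the same horn pair. The $xz$ view then shows that whenever $x=z$, the support forces $y=w=x$. The $yz$ view likewise shows that $y=z$ forces $x=w=y$.

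If $x=y$, the third view $F_{xy}$ is either an antelope pairing $z,w$ or a rhinoceros with horn $z,w$. Its $00$ and $11$ slices both force $z=w$. Among three Boolean values $x,y,z$ two must coincide, so the whole support satisfies $x=y$ and $z=w$.

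\textbf{Factoring and closing.} Consequently $F_{xy}^{01}=F_{xy}^{10}=0$. The two slices $F_{xy}^{00},F_{xy}^{11}$ share an external factor $Q$ built from realizable diagonal binary functions (Lemma~\ref{zh:lem:odda-common-slice-factor}), so $F=J(x,y,z,w)\,Q$.

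On each pair of $Q$, choose a diagonal probe with nonzero closure. This realizes a nonzero multiple of the quaternary function $J$, which lies in $\mathcal P_{\mathcal B}$ by assumption. Hence $F\in\mathcal P_{\mathcal B}$, a contradiction.

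\textbf{Main obstacle.} I expect the delicate step to be the support argument in the second paragraph. It must check that the complementary $11$-slice structure really yields the stated implications, including the cases where the third view is an antelope versus a rhinoceros. It must also check that the common factor $Q$ of the two slices is indeed a factor of all of $F$. Both rest on the vanishing of the mixed slices, which is exactly what the three-view support condition provides.
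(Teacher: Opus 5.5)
Your proposal is correct and follows the paper's proof essentially step for step. Both arguments take a minimal counterexample normalized so that $F(0^{2m})\ne0$, use the two $0$-rhinoceros views from Lemmas~\ref{zh:lem:odda-three-views} and~\ref{zh:lem:odda-two-rhinos} to force $x=y$ and $z=w$ on the support, and then write $F=J(x,y,z,w)Q$ and close $Q$ to realize $J$. The one detail to make explicit is that the pair in Lemma~\ref{zh:lem:odda-common-slice-factor} can be taken to be $z,w$: it is the horn pair in the rhinoceros case and a matching pair of the antelope in the other case, which is exactly how the paper phrases that step.
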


\begin{proof}
反设没有这样的四元函数。取所有可实现的非 $\mathcal P_{\mathcal B}$ 函数中元数最小者
$F$。零元、二元和四元都已排除，故 $F$ 的元数为 $2m\ge6$，所有可实现的更低偶数元
函数均属于 $\mathcal P_{\mathcal B}$。使用可逆反对角端口作用，使
$F(0^{2m})\ne0$；撤销这些作用不改变 $\mathcal P_{\mathcal B}$ 归属。

应用上述引理得到四个变量 $x,y,z,w$。重命名 $x,y,z$，使
$F_{xz}^{00}$ 与 $F_{yz}^{00}$ 是 0 犀牛，其角分别为 $y,w$ 与 $x,w$。
由引理~\ref{zh:lem:odda-two-slice-forms} 的互补 $11$ 切片，在 $F$ 的每个非零输入上：
\begin{itemize}
 \item 若 $x=z$，则 $x=y=z=w$；
 \item 若 $y=z$，则 $x=y=z=w$；
 \item 若 $x=y$，由第三个视角的 $00$ 或 $11$ 切片，必有 $z=w$。
\end{itemize}
三个比特 $x,y,z$ 必有两个相等，故整个支撑满足
\begin{equation}\label{zh:eq:odda-final-support}
 x=y,\qquad z=w.
\end{equation}

再看 $F_{xy}^{00}$ 与 $F_{xy}^{11}$。若它们是羚羊，两者相互成比例，且 $z,w$ 配对；
若它们是互补犀牛，其角就是 $z,w$。两种情形都可将实际二元函数构成的共同外因子
提出，写成
\[
 F_{xy}^{00}=R_0(z,w)Q,\qquad
 F_{xy}^{11}=R_1(z,w)Q.
\]
式~\eqref{zh:eq:odda-final-support} 已排除 $F_{xy}^{01}$ 与 $F_{xy}^{10}$，因此这两个
切片的公共因子也是整个函数的因子：
\[
 F=J(x,y,z,w)Q.
\]
在 $Q$ 的每个配对上选择一个闭合值非零的实际对角辅助，便普通实现 $J$ 的非零标量倍。
$J$ 为严格低元的四元函数，所以由低元假设属于 $\mathcal P_{\mathcal B}$。
$Q$ 本来就是实际二元配对积，故 $F\in\mathcal P_{\mathcal B}$，与选择矛盾。
\end{proof}

\subsubsection{奇二面体群的四元基例与复杂性出口}
\label{zh:sec:oddq}

本节始终采用相等边模型和普通转置。设全部可实现的非零二元函数经行列式归一化、
并保留两种正负代表后组成
\[
 \mathcal B=\langle D,Y\rangle,\qquad
 D=\operatorname{diag}(\lambda,\lambda^{-1}),\qquad
 \lambda=e^{\pi i/n},\qquad n\ge3\text{ 为奇数},\qquad
 Y=\begin{pmatrix}0&1\\-1&0\end{pmatrix}.
\]
其射影商群为 $\mathcal G=\mathcal B/\{I,-I\}$。本节处理的正是没有
$N=\left(\begin{smallmatrix}0&1\\1&0\end{smallmatrix}\right)$ 的奇二面体标准形。
所有实际构件允许带已知的非零整体标量，使用时按构件出现次数补偿。

记 $\mu_n$ 为全部 $n$ 次单位根。两类实际可用的二元探针，忽略非零标量，可写成
\begin{equation}\label{zh:eq:oddq-probes}
 L_\rho=\begin{pmatrix}1&0\\0&\rho\end{pmatrix},\qquad
 K_\rho=\begin{pmatrix}0&1\\-\rho&0\end{pmatrix},
 \qquad \rho\in\mu_n.
\end{equation}
因此，非零对角二元函数的两个非零值之比属于 $\mu_n$；非零反对角二元函数的
下、上非零值之比属于 $-\mu_n$。后一情形的两个值的 $n$ 次幂互为负数。

记 $\mathcal P_{\mathcal B}$ 为零函数以及在任意完美匹配上由实际可实现的满秩二元
函数作张量积、再乘非零标量得到的函数。另以 $\mathcal P$ 表示通常的 CSP 乘积型类，
其定义允许一元函数、二元相等和二元不等的因子共享变量。两类不同；例如四元相等
$\mathrm{EQ}_4$ 属于 $\mathcal P$，却不属于 $\mathcal P_{\mathcal B}$。

\paragraph{短路探针与四元函数的总奇偶}

\begin{lemma}[固定总奇偶]\label{zh:lem:oddq-parity}
在上述二元刚性条件下，每个非零可实现四元函数的支撑具有固定的总奇偶。
在奇支撑函数的任意一个端口连接 $Y$，可将其变为偶支撑函数；这一可逆的实际端口操作
保持函数是否属于 $\mathcal P_{\mathcal B}$。
\end{lemma}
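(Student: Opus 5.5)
I will follow the same route as the English version of this lemma, Lemma~\ref{en:lem:oddq-parity}. I fix a $2+2$ pairing of the quaternary function $F$. Shorting one pair of ports with a binary function is a linear map
\[
T:\mathbb C^{2\times2}\to\mathbb C^{2\times2}.
\]
By the binary rigidity condition, every nonzero image of an actually available probe is a full-rank diagonal or antidiagonal matrix. So it lies in $V_0$ (the diagonal matrices) or in $V_1$ (the antidiagonal matrices), where $V_0\cap V_1=\{0\}$.

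The first step is to show that $T(V_0)$ lies inside a single $V_i$. I split by the rank of $T|_{V_0}$.
\begin{itemize}
\item At rank zero there is nothing to prove.
\item At rank one, the probes $L_\rho$ of~\eqref{zh:eq:oddq-probes} span $V_0$. Hence some probe has a nonzero image, and this allowed image spans the whole image line.
\item At rank two, $T|_{V_0}$ is injective. Since $n\ge3$, the $L_\rho$ give at least three distinct projective directions, and their images are three distinct nonzero image directions. A two-dimensional image space contained in neither $V_0$ nor $V_1$ meets each in dimension at most one. It can therefore contain at most two allowed projective directions, which is a contradiction.
\end{itemize}
The same argument with the antidiagonal probes $K_\rho$ shows that $T(V_1)$ lies in some $V_i$. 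Applying this to the other pair of ports gives the same conclusion in the reverse direction.

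The second step translates this into the block structure of the $12|34$ flattening. I split rows and columns by parity ($00,11$ versus $01,10$). The first step says each row parity group and each column parity group has at most one nonzero parity block. The nonzero blocks of the $2\times2$ block table therefore form a partial matching. Two nonzero blocks lie either both on the diagonal (total parity even) or both off the diagonal (total parity odd). A single nonzero block also has fixed total parity. This gives the fixed total parity of $\operatorname{supp}F$.

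Finally, attaching $Y$ to one port flips that input and multiplies by a nonzero weight, so it changes the total parity. The operation is invertible because $Y^{-1}=-Y$ is available. Each binary factor of a pairing product becomes a product of group matrices, with an ordinary transpose where needed, so membership in $\mathcal P_{\mathcal B}$ is preserved in both directions.

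The step I expect to be the main obstacle is the rank-two case. There I must check that the images of three distinct probe directions are pairwise non-proportional, which follows from injectivity, and that each such image, being nonzero, is an allowed binary function lying in $V_0\cup V_1$. I must also make sure a zero image cannot occur at rank two, again because the map is injective.
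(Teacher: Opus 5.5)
Your proposal is correct and follows the paper's own proof essentially step by step. The shared steps are: the rank-zero/one/two analysis showing that the shortloop map $T$ sends $V_0$ (via the probes $L_\rho$) and $V_1$ (via the probes $K_\rho$) each into a single parity subspace, in both shorting directions; the resulting partial-matching structure of the $2\times2$ parity-block table, which fixes the total parity; and the invertible port action of $Y$ (with inverse $-Y$), which flips the total parity while preserving membership in $\mathcal P_{\mathcal B}$. The rank-two subtlety you flag is handled in the paper exactly as you propose: injectivity of $T|_{V_0}$ rules out zero images and keeps the $n\ge3$ probe directions pairwise distinct.
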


\begin{proof}
令 $V_0,V_1$ 分别为所有对角、反对角二元矩阵构成的二维空间，故
$V_0\cap V_1=\{0\}$。固定四元函数的一份 $2+2$ 配对，对其中一对端口短路给出
线性映射 $T:\mathbb C^{2\times2}\to\mathbb C^{2\times2}$。

首先证明 $T(V_0)$ 包含于某个 $V_i$。秩零时显然。秩一时，式
\eqref{zh:eq:oddq-probes} 中的 $L_\rho$ 张成 $V_0$，所以至少一个探针有非零像；
这个允许的二元像属于某个 $V_i$，并且张成整个像直线。秩二时，$T|_{V_0}$ 单射，
至少三个不同的探针射影方向给出三个不同的非零像方向。若二维像空间既不包含于
$V_0$ 也不包含于 $V_1$，它与两者的交各至多为一维，至多能包含两个允许的射影方向，
矛盾。这里秩一情形已包含部分短路结果为零的情况；秩二情形没有非零探针落在核中。

对 $K_\rho$ 作同样论证，得 $T(V_1)$ 也包含于某个 $V_i$。再从另一对端口短路，
得到相反方向的同一结论。于是把四元展平的行、列各按奇偶分为两组后，每组行和每组列
都至多有一个非零的奇偶块。这个 $2\times2$ 块表的非零块构成部分匹配：两个非零块
要么都在对角位置，要么都在反对角位置；只有一个非零块时也有固定总奇偶。

端口上连接 $Y$ 只翻转该输入并乘非零权重，其逆操作由 $Y^{-1}=-Y$ 实现。
它改变总奇偶，并将任何二元因子变为已有矩阵的乘积，必要时带普通转置。
因此它双向保持 $\mathcal P_{\mathcal B}$ 归属。
\end{proof}

\paragraph{根单位比例与六个奇偶块}

\begin{lemma}[双端口根单位比例]\label{zh:lem:oddq-root-ratio}
设 $n\ge3$，且
\[
 Q=\begin{pmatrix}u&v\\w&z\end{pmatrix}.
\]
若对每个 $\rho\in\mu_n$ 都有
\begin{equation}\label{zh:eq:oddq-root-two-directions}
 (u+\rho w)^n=(v+\rho z)^n,\qquad
 (u+\rho v)^n=(w+\rho z)^n,
\end{equation}
则恰有下列四种情形之一：
\begin{enumerate}
 \item 四项全零；
 \item $v=w=0$，$u,z\ne0$，且 $u^n=z^n$；
 \item $u=z=0$，$v,w\ne0$，且 $v^n=w^n$；
 \item 四项全非零，$uz=vw$，且四项的 $n$ 次幂全部相等。
\end{enumerate}
反过来，这四种情形均满足式~\eqref{zh:eq:oddq-root-two-directions}。
\end{lemma}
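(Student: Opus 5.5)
First average each identity in \eqref{zh:eq:oddq-root-two-directions} over $\rho\in\mu_n$, i.e.\ compare discrete Fourier coefficients. Expanding $(u+\rho w)^n=\sum_r\binom nr u^{n-r}w^r\rho^r$, and likewise the other three binomials, gives polynomials in $\rho$ of degree $n$. Since $\rho^n=1$, the coefficients of $\rho^0$ and $\rho^n$ merge, while the coefficients of $\rho^r$ for $1\le r\le n-1$ are separate Fourier modes on $\mu_n$. Equality for all $n$ values of $\rho$ therefore gives equality of each mode. The constant modes yield $u^n+w^n=v^n+z^n$ and $u^n+v^n=w^n+z^n$. Subtracting these gives $w^n=v^n$, and then $u^n=z^n$. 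After cancelling the nonzero binomials, the intermediate modes give
\[
 u^{n-r}w^r=v^{n-r}z^r,\qquad u^{n-r}v^r=w^{n-r}z^r\qquad(1\le r<n).
\]

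Next split by zero pattern, using $u^n=z^n$ and $v^n=w^n$, so that $u,z$ vanish together and $v,w$ vanish together. If $u=z=0$ or $v=w=0$, we land in cases 1, 2 or 3, with the stated power equalities. Otherwise all four entries are nonzero. Dividing the $r=2$ equation of the first family by the $r=1$ equation gives $w/u=z/v$, i.e.\ $uz=vw$. Substituting $z=vw/u$ into $r=1$, $u^{n-1}w=v^{n-1}vw/u$, gives $u^n=v^n$. Together with the earlier equalities, all four $n$th powers coincide. The case $n\ge3$ is needed exactly here, to have both $r=1$ and $r=2$ available. Exclusivity is clear from the zero patterns.

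For the converse, the first three cases are immediate: each side becomes $(\text{single term})^n$, possibly times $\rho^n=1$, and $u^n=z^n$ or $v^n=w^n$ finishes. In case 4, set $w=\alpha u$; then $uz=vw$ gives $z=\alpha v$. Also $\alpha^n=1$ and $v=\beta u$ with $\beta^n=1$. Then
\[
 (u+\rho w)^n=u^n(1+\rho\alpha)^n,\qquad (v+\rho z)^n=\beta^nu^n(1+\rho\alpha)^n,
\]
which are equal. The second identity follows symmetrically, since $uz=vw$ is also symmetric under transposition.

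The main obstacle is bookkeeping the Fourier mode merging of $\rho^0$ with $\rho^n$ correctly. Once that is done, the rest is short algebra.
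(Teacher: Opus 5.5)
Your proposal is correct and follows the paper's proof step for step. You compare discrete Fourier coefficients on $\mu_n$ to get $u^n=z^n$, $v^n=w^n$ and the mode identities for $1\le r<n$, split by zero pattern, and in the all-nonzero case divide the $r=2$ identity by the $r=1$ identity to get $uz=vw$ and then $u^n=v^n$. Your explicit check of the converse in case~4 (writing $w=\alpha u$, $z=\alpha v$, $v=\beta u$ with $\alpha^n=\beta^n=1$) simply fills in the substitution that the paper leaves to the reader.
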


\begin{proof}
在全部 $n$ 次单位根上展开并比较离散 Fourier 系数。常数项的两条等式为
\[
 u^n+w^n=v^n+z^n,\qquad
 u^n+v^n=w^n+z^n,
\]
所以 $u^n=z^n$、$v^n=w^n$。其余系数在约去非零的二项式系数后给出
\[
 u^{n-r}w^r=v^{n-r}z^r,\qquad
 u^{n-r}v^r=w^{n-r}z^r\quad(1\le r<n).
\]
零模式立即只有前三种或者四项全非零。全非零时，对第一族等式的 $r=1,2$ 两项
相除，得到 $uz=vw$，代回 $r=1$ 得 $u^n=v^n$。充分性逐项代入即可。
\end{proof}

由引理~\ref{zh:lem:oddq-parity}，以下不妨已把四元函数 $F$ 变为偶支撑。
采用自然展平：行指标为 $x_1x_2$，列指标为 $x_3x_4$，均按 $00,01,10,11$ 排列。
写作
\begin{equation}\label{zh:eq:oddq-eight-parameters}
 N_{12|34}(F)=
 \begin{pmatrix}
 s&0&0&e\\
 0&a&c&0\\
 0&d&b&0\\
 f&0&0&t
 \end{pmatrix}.
\end{equation}
即
\[
 s=F_{0000},\quad t=F_{1111},\quad
 a=F_{0101},\ b=F_{1010},\quad
 c=F_{0110},\ d=F_{1001},\quad
 e=F_{0011},\ f=F_{1100}.
\]
三个配对的偶块与奇块分别为
\begin{equation}\label{zh:eq:oddq-six-blocks}
\begin{array}{c|c|c}
 \text{配对}&\text{偶块}&\text{奇块}\\ \hline
 12|34&\begin{pmatrix}s&e\\f&t\end{pmatrix}
       &\begin{pmatrix}a&c\\d&b\end{pmatrix}\\[4pt]
 13|24&\begin{pmatrix}s&a\\b&t\end{pmatrix}
       &\begin{pmatrix}e&c\\d&f\end{pmatrix}\\[4pt]
 14|23&\begin{pmatrix}s&c\\d&t\end{pmatrix}
       &\begin{pmatrix}e&a\\b&f\end{pmatrix}
\end{array}
\end{equation}
偶块的行、列指标为 $00,11$，奇块的行、列指标为 $01,10$。

\begin{lemma}[奇偶块的比例约束]\label{zh:lem:oddq-block-rigidity}
式~\eqref{zh:eq:oddq-six-blocks} 中每个偶块满足引理~\ref{zh:lem:oddq-root-ratio}。
每个奇块 $Q=\left(\begin{smallmatrix}u&v\\w&z\end{smallmatrix}\right)$ 的变形
\[
 Q^\sharp=\begin{pmatrix}u&-v\\-w&z\end{pmatrix}
\]
也满足该引理。尤其，奇块四项全非零时必有
\begin{equation}\label{zh:eq:oddq-odd-block-sign}
 uz=vw,\qquad u^n=z^n=-v^n=-w^n.
\end{equation}
\end{lemma}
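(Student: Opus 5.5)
The plan is to compute, for each of the six blocks, the binary function obtained by shorting one pair of ports with a probe, and then read off the equations of Lemma~\ref{zh:lem:oddq-root-ratio}. The two ingredients are already available. The first is the binary rigidity condition: every nonzero realizable binary function is, up to a scalar, some $L_\rho$ or $K_\rho$. In particular, a diagonal result has its two entries with equal $n$th powers, and an antidiagonal result has its two entries with $n$th powers that are negatives of each other. The second is that the shortloops can actually be realized by connecting the available probes~\eqref{zh:eq:oddq-probes}.

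\emph{Even blocks.} Fix a pairing, say $12|34$, and short ports $3,4$ with $L_\rho$. The resulting binary function on ports $1,2$ is $(s+\rho e,\;0;\;0,\;f+\rho t)$ in the $00,11$ entries; the odd entries vanish because $F$ has even support. This is a realizable binary function. It is either zero or, being diagonal, satisfies $(s+\rho e)^n=(f+\rho t)^n$. The zero case satisfies the same equation trivially.

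Shorting ports $1,2$ instead gives the transposed relation $(s+\rho f)^n=(e+\rho t)^n$. These two families are exactly~\eqref{zh:eq:oddq-root-two-directions} for the block $\left(\begin{smallmatrix}s&e\\f&t\end{smallmatrix}\right)$. The other two pairings are handled identically after permuting ports.

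\emph{Odd blocks.} Now short with $K_\rho$. Take the odd block $\left(\begin{smallmatrix}u&v\\w&z\end{smallmatrix}\right)$ of a pairing, with row indices $01,10$ on one pair and column indices on the other. Contracting the column pair with $K_\rho$ weights index $01$ by $1$ and index $10$ by $-\rho$. The output is therefore antidiagonal on the row pair, with entries $u-\rho v$ and $w-\rho z$; the even-block contributions vanish because $K_\rho$ kills $00,11$. The realizable antidiagonal condition then gives $(u-\rho v)^n=-(w-\rho z)^n$, and shorting the other side gives $(u-\rho w)^n=-(v-\rho z)^n$.

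Since $n$ is odd, $-X^n=(-X)^n$. Rewriting the right-hand sides as $(-w+\rho z)^n$ and $(-v+\rho z)^n$ turns these exactly into~\eqref{zh:eq:oddq-root-two-directions} for $Q^\sharp=\left(\begin{smallmatrix}u&-v\\-w&z\end{smallmatrix}\right)$.

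\emph{The special case~\eqref{zh:eq:oddq-odd-block-sign}.} Apply the fourth case of Lemma~\ref{zh:lem:oddq-root-ratio} to $Q^\sharp$. It gives $uz=(-v)(-w)=vw$ and $u^n=z^n=(-v)^n=(-w)^n=-v^n=-w^n$, again using that $n$ is odd.

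The main obstacle is bookkeeping the sign and orientation conventions for $K_\rho$. Which index receives $-\rho$, and whether the row or column pair is contracted, must be fixed consistently so that the resulting equations match the displayed $Q^\sharp$ rather than a transposed or conjugated variant. I would fix the conventions by checking a single explicit contraction, then observe that transposing a block or swapping which side is contracted maps the system~\eqref{zh:eq:oddq-root-two-directions} to itself.
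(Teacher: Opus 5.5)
Your proposal is correct and follows the paper's proof. It shorts each side of a block with $L_\rho$ (even blocks) or $K_\rho$ (odd blocks), uses binary rigidity to force a zero, diagonal, or antidiagonal output whose two entries have equal (respectively opposite) $n$th powers, and then uses that $n$ is odd to rewrite the odd-block equations as the root-ratio equations for $Q^\sharp$; your explicit entries $u-\rho v$, $w-\rho z$, $u-\rho w$, $v-\rho z$ reproduce the paper's displayed equations exactly.
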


\begin{proof}
对偶块，从两端分别用 $L_\rho$ 短路，所得二元函数对角或为零，直接给出
式~\eqref{zh:eq:oddq-root-two-directions}。

对奇块，从两端分别用 $K_\rho$ 短路，所得二元函数反对角或为零，故
\[
 (u-\rho w)^n=-(v-\rho z)^n,\qquad
 (u-\rho v)^n=-(w-\rho z)^n.
\]
因为 $n$ 为奇数，这正是 $Q^\sharp$ 满足
式~\eqref{zh:eq:oddq-root-two-directions}。零短路仍满足这些等式。
\end{proof}

\paragraph{四元函数的穷尽分类}

\begin{theorem}[四元相等或实际二元乘积]\label{zh:thm:oddq-quaternary}
在本节的二元刚性条件下，任意非零可实现四元函数 $F$ 恰属于以下两种互斥情形之一：
\begin{enumerate}
 \item $F\in\mathcal P_{\mathcal B}$；
 \item $F$ 的支撑为两个互补输入，并且在其端口连接已有的 $Y$ 和 $D^j$ 后，可普通
 实现非零标量倍的四元相等函数 $\mathrm{EQ}_4$。
\end{enumerate}
\end{theorem}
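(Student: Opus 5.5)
The plan is to normalize $F$ by invertible port operations, read off the nonzero pattern of the eight parameters from the parity-block constraints, and then enumerate the zero patterns. First apply Lemma~\ref{en:lem:oddq-parity}: attaching $Y$ to one port if necessary puts $F$ in the even-support form \eqref{en:eq:oddq-eight-parameters}. This step is invertible and preserves membership in $\mathcal P_{\mathcal B}$, so it suffices to classify the normalized function.

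Next apply Lemma~\ref{en:lem:oddq-block-rigidity} to the three even blocks in \eqref{en:eq:oddq-six-blocks}, using Lemma~\ref{en:lem:oddq-root-ratio}. This forces $s,t$ to be both zero or both nonzero. It does the same for each central complementary pair $(a,b),(c,d),(e,f)$, since each such pair appears as the off-diagonal of some even block. When $st\ne0$ and a pair $(p,q)$ is nonzero, the fourth case of that lemma gives $pq=st$, and all four $n$th powers are equal.

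Then enumerate zero patterns.
\begin{enumerate}
 \item $st\ne0$ and two central pairs nonzero: the odd block containing both pairs has all entries nonzero. By \eqref{en:eq:oddq-odd-block-sign} their $n$th powers are negatives of one another, but both equal $s^n$, a contradiction. So at most one pair is nonzero.
 \item $st\ne0$, exactly one pair nonzero, say $(e,f)$ after permutation: $F=\diag(1,f/s)\otimes\diag(s,e)$ with ratios in $\mu_n$, so both factors are multiples of actual $D^j$ and $F\in\mathcal P_{\mathcal B}$.
 \item $st\ne0$, no pair nonzero: $F=s\ket{0000}+t\ket{1111}$ with $s^n=t^n$.
 \item $s=t=0$, exactly one pair nonzero: $Y$ on two ports turns the support into $\{0000,1111\}$ and preserves equal $n$th powers.
 \item $s=t=0$, exactly two pairs nonzero: the odd-block relation gives $ab=cd$ and $a^n=-c^n$, which yields a product of two antidiagonal factors whose ratios lie in $-\mu_n$. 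These are multiples of $D^jY$.
 \item $s=t=0$, all three pairs nonzero: the three odd blocks give $A_0=-C_0$, $E_0=-C_0$, $E_0=-A_0$, which contradict each other.
\end{enumerate}

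Finally, a two-point function $s\ket{0000}+t\ket{1111}$ with $(s/t)^n=1$ becomes a multiple of $\mathrm{EQ}_4$ after attaching $D^j$ with $\lambda^{-2j}=s/t$ at one port. Since $n$ is odd, squaring is a bijection on $\mu_n$, so such $j$ exists. Undoing the initial port operations yields the claim for the original $F$. The two cases are mutually exclusive because their support sizes are four and two.

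I expect the main obstacle to be checking that each factor in the product cases is an actually realizable group element, and not merely an algebraic decomposition. This amounts to tracking that each ratio lands in $\mu_n$ or $-\mu_n$, using that $n$ is odd. A second point needing care is that every complementary pair genuinely occurs as the off-diagonal of some even block, so the both-zero-or-both-nonzero dichotomy holds for all three pairs.
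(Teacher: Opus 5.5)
Your proposal is correct and follows the paper's proof essentially step for step: the same $Y$-normalization to even support, the same both-zero-or-both-nonzero dichotomy read off the even blocks, the same enumeration of zero patterns with the identical diagonal and antidiagonal factorizations, and the same final $D^j$ balancing with exclusivity by support size ($4$ versus $2$). One small remark: a $j$ with $\lambda^{-2j}=s/t$ exists simply because $\lambda^{-2}=e^{-2\pi i/n}$ generates $\mu_n$, so the parity of $n$ is not needed at that step.
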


\begin{proof}
先由引理~\ref{zh:lem:oddq-parity} 作可逆端口操作，得到式
\eqref{zh:eq:oddq-eight-parameters}。三个偶块的比例约束说明：$s,t$ 同时为零或同时
非零，三对 $(a,b),(c,d),(e,f)$ 也各自同时为零或同时非零。

若 $s,t$ 均非零，每个非零数对 $(p,q)$ 都满足
\[
 pq=st,\qquad p^n=q^n=s^n=t^n.
\]
若两个数对均非零，对应奇块却由式~\eqref{zh:eq:oddq-odd-block-sign} 要求两对的非零
$n$ 次幂互为负，矛盾。所以至多一个数对非零。

三对全零时，$F=s\lvert0000\rangle+t\lvert1111\rangle$，且 $s^n=t^n$。
恰一个数对非零时，置换端口后不妨为 $(e,f)$。此时 $st=ef$，并且
\[
 F(x_1,x_2,x_3,x_4)=A(x_1,x_2)B(x_3,x_4),\qquad
 A=\operatorname{diag}(1,f/s),\quad B=\operatorname{diag}(s,e).
\]
两个对角比都在 $\mu_n$ 中，所以两个因子确实是已有 $D^j$ 的非零标量倍。
这证明实际的 $\mathcal P_{\mathcal B}$ 归属，而不只是代数可分解。

再设 $s=t=0$。每个非零数对内部的 $n$ 次幂相等。三对全零会给出 $F=0$，已经排除。
恰一对非零时，置换端口后不妨为 $(e,f)$。在端口 3、4 各连接 $Y$，就把支撑
$\{0011,1100\}$ 变成 $\{0000,1111\}$，两非零值仍有相同的 $n$ 次幂。

恰两对非零时，置换端口后不妨为 $(a,b),(c,d)$。由奇块约束，
\[
 ab=cd,\qquad a^n=b^n=-c^n=-d^n.
\]
因此
\[
 F(x_1,x_2,x_3,x_4)=A(x_1,x_2)B(x_3,x_4),\qquad
 A=\begin{pmatrix}0&1\\d/a&0\end{pmatrix},\quad
 B=\begin{pmatrix}0&a\\c&0\end{pmatrix}.
\]
两个反对角比 $d/a,c/a$ 的 $n$ 次幂均为 $-1$，因 $n$ 为奇数，它们都属于
$-\mu_n$。所以两个因子分别是已有 $D^jY$ 的非零标量倍。

三对全非零时，记三对的共同 $n$ 次幂为 $A_0,C_0,E_0$，它们均非零。
三个奇块要求
\[
 A_0=-C_0,\qquad E_0=-C_0,\qquad E_0=-A_0.
\]
前两式给出 $A_0=E_0$，与第三式矛盾。这就穷尽了所有零模式。

剩下的非乘积情形经实际端口操作已成为
\[
 H=s\lvert0000\rangle+t\lvert1111\rangle,
 \qquad s,t\ne0,\qquad(s/t)^n=1.
\]
选 $j\in\{0,\ldots,n-1\}$ 使 $\lambda^{-2j}=s/t$，并在任一端口连接 $D^j$。
两个非零值分别变成 $s\lambda^j$ 和 $t\lambda^{-j}=s\lambda^j$，故普通实现
$s\lambda^j\mathrm{EQ}_4$。反向恢复开始时的可逆端口操作即可解释原函数。
两个情形互斥，因为其支撑大小分别为四和二。
\end{proof}

\begin{corollary}[直接降元的四元出口]\label{zh:cor:oddq-eq4-exit}
结合定理~\ref{zh:thm:odd-arity-reduction}，若存在可实现的偶元函数不在
$\mathcal P_{\mathcal B}$，则普通实现非零标量倍的 $\mathrm{EQ}_4$。
\end{corollary}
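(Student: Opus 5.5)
The corollary follows by chaining the two results just established; no new construction is needed. I will take an ordinarily realizable even-arity function outside $\mathcal P_{\mathcal B}$ and first apply Theorem~\ref{zh:thm:odd-arity-reduction} to obtain an ordinarily realizable quaternary function $F\notin\mathcal P_{\mathcal B}$. This requires checking the hypotheses of that theorem: the binary rigidity condition, three pairwise nonproportional diagonal probes, and an invertible antidiagonal port action together with its inverse. In the standard odd dihedral form $\mathcal B=\langle D,Y\rangle$ these are exactly the choices recorded at the start of the direct arity-reduction subsection. The three probes are $I,D,D^2$; they are pairwise nonproportional because $\lambda^{2}$ has odd order $n\ge3$. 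The antidiagonal action is $Y$, with inverse $-Y$.

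Next, $F$ is nonzero, since the zero function lies in $\mathcal P_{\mathcal B}$ by definition. Theorem~\ref{zh:thm:oddq-quaternary} then applies to $F$. Its first alternative, $F\in\mathcal P_{\mathcal B}$, is excluded by the choice of $F$. Hence the second alternative holds, and attaching the existing $Y$ and $D^j$ functions to the ports of $F$ ordinarily realizes a nonzero scalar multiple of $\mathrm{EQ}_4$.

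The only point that needs care is to confirm that both theorems refer to the same notion of $\mathcal P_{\mathcal B}$ and of ordinary realizability. In both places $\mathcal P_{\mathcal B}$ consists of tensor products of actually realizable full-rank binary functions over a perfect matching, and both use the equality-edge model. In Theorem~\ref{zh:thm:oddq-quaternary}, membership in $\mathcal P_{\mathcal B}$ was verified with actually available factors $D^j$ and $D^jY$, not mere algebraic decomposability. So the exclusion of the first alternative is legitimate, and the realized multiple of $\mathrm{EQ}_4$ is an honest ordinary gadget with a known nonzero scalar, which can be compensated.
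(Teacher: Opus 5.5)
Your proposal is correct and follows the paper's proof: Theorem~\ref{zh:thm:odd-arity-reduction} supplies a realizable quaternary $F\notin\mathcal P_{\mathcal B}$, and Theorem~\ref{zh:thm:oddq-quaternary} then forces its second alternative, which realizes a nonzero multiple of $\mathrm{EQ}_4$. Your additional checks are accurate details that the paper leaves implicit: the probes $I,D,D^2$, the action $Y$ with inverse $-Y$, $F\neq 0$, and the fact that both theorems use the same actual class $\mathcal P_{\mathcal B}$.
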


\begin{proof}
定理~\ref{zh:thm:odd-arity-reduction} 给出四元函数 $F\notin\mathcal P_{\mathcal B}$，再用
定理~\ref{zh:thm:oddq-quaternary}，便实现非零倍的 $\mathrm{EQ}_4$。
\end{proof}

\paragraph{整个函数集合的复杂性二分}

以下使用复值布尔 $\#\mathrm{CSP}^{2}$ 的集合级二分定理
\cite[Theorem~16]{zh:lin2021cspd}：固定有限代数复值函数集 $\mathcal H$ 的
$\#\mathrm{CSP}^{2}$ 问题，在整个集合包含于
$\mathcal P,\mathcal A,\mathcal A^\alpha,\mathcal L$ 中某一个时属于
$\mathrm{FP}$，否则为 $\#\mathrm P$-hard。这里 $\alpha^2=i$，
$M_\alpha=\operatorname{diag}(1,\alpha)$，
$\mathcal A^\alpha=\{M_\alpha^{\otimes k}g:g\in\mathcal A\}$。
局部仿射类 $\mathcal L$ 要求：对每个支撑点 $\sigma$，在第 $j$ 个端口施加
$\operatorname{diag}(1,\alpha^{\sigma_j})$ 后的函数都属于 $\mathcal A$。

\begin{lemma}[奇阶对角工具排除仿射易类]\label{zh:lem:oddq-exclude-affine}
本节的 $D$ 不属于 $\mathcal A\cup\mathcal A^\alpha\cup\mathcal L$。
\end{lemma}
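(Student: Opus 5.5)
The plan is to test $D=\diag(\lambda,\lambda^{-1})$ against the three affine-type classes one at a time. In each test I only look at the ratio of its two nonzero values. Viewed as a binary function, $D$ is supported on $\{00,11\}$ with values $\lambda$ and $\lambda^{-1}$. Their ratio is $\rho=\lambda^{-2}=e^{-2\pi i/n}$, a primitive $n$th root of unity. Since $n\ge3$ is odd, $\rho$ has odd order at least three, so $\rho\notin\mu_4=\{\pm1,\pm i\}$. This membership fact drives all three cases.

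For $\mathcal A$, I use the shape recalled above: $c\,\mathbf 1_L(x)i^{Q(x)}$. Any two nonzero values of such a function differ by a factor $i^{Q(x)-Q(y)}\in\mu_4$. The ratio $\rho\notin\mu_4$ therefore excludes $D\in\mathcal A$. For $\mathcal A^\alpha$, suppose $D=M_\alpha^{\otimes2}g$ with $g\in\mathcal A$. Then $g=(M_\alpha^{-1})^{\otimes2}D$ leaves the value at $00$ unchanged and multiplies the value at $11$ by $\alpha^{-2}=-i$. The ratio of the two nonzero values of $g$ becomes $-i\rho$ up to inversion. Since $-i\in\mu_4$, this ratio still lies outside $\mu_4$, contradicting $g\in\mathcal A$.

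For $\mathcal L$, I take the support point $\sigma=00$. The defining condition applies $\diag(1,\alpha^{\sigma_j})=I$ at each port, so it demands that $D$ itself lie in $\mathcal A$. This was already excluded. Using the point $11$ instead would reduce to the $\mathcal A^\alpha$ computation, so either choice works.

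The only place requiring care is the normalization conventions. I need the paper's $\alpha$ with $\alpha^2=i$, and the observation that the twist changes the ratio by an element of $\mu_4$ whichever endpoint is placed in the numerator. I expect no real obstacle beyond checking that $\rho\cdot\mu_4\cap\mu_4=\varnothing$. This holds because $\rho\in\mu_4$ would be forced otherwise, and the order of $\rho$ is odd and at least three.
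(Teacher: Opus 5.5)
Your proposal is correct and follows the paper's proof essentially step for step. The ratio $\rho=\lambda^{-2}$ has odd order $n\ge3$, so it lies outside $\mu_4$; the $\mathcal A^\alpha$ twist only multiplies it by $\alpha^{-2}=-i\in\mu_4$; and the support point $00$ reduces membership in $\mathcal L$ to membership in $\mathcal A$.
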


\begin{proof}
它的两个非零值之比 $\rho=\lambda^{-2}$ 具有奇数阶 $n\ge3$，所以
$\rho\notin\mu_4$。仿射函数任意两个非零值之比在 $\mu_4$ 中，因此
$D\notin\mathcal A$。

若 $D\in\mathcal A^\alpha$，则
$(M_\alpha^{-1}\otimes M_\alpha^{-1})D$ 属于 $\mathcal A$，其两个非零值之比为
$\alpha^{-2}\rho=-i\rho$，仍不在 $\mu_4$，矛盾。
若 $D\in\mathcal L$，取支撑点 $00$，局部仿射定义直接要求未经改变的 $D$
属于 $\mathcal A$，也矛盾。
\end{proof}

\begin{theorem}[无不等函数的奇二面体标准形二分]\label{zh:thm:oddq-dichotomy}
设 $\mathcal F$ 是固定有限的代数复值布尔偶元函数集，并满足本节的二元群前提。
则在该标准坐标中，若 $\mathcal F\subseteq\mathcal P$，有
$\operatorname{Holant}(\mathcal F)\in\mathrm{FP}$；否则
$\operatorname{Holant}(\mathcal F)$ 为 $\#\mathrm P$-hard。
\end{theorem}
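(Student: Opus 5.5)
The proof splits according to whether $\mathcal F\subseteq\mathcal P$.

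\textbf{Tractable side.} Suppose $\mathcal F\subseteq\mathcal P$. Expand every input function into its product-type constraints: unary weights, binary equalities and binary disequalities on its variables. Each Holant edge identifies two variables. The instance then becomes a weighted constraint graph whose components are linked by equality and disequality relations. Each component is either inconsistent, contributing zero, or admits exactly two assignments, complementary on its variables. We sum the two products of unary weights and multiply over components. This runs in polynomial time.

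\textbf{Hard side.} Suppose some $f\in\mathcal F$ lies outside $\mathcal P$. By the support-size remark, every nonzero function in $\mathcal P_{\mathcal B}$ is a product of full-rank diagonal or antidiagonal binary factors, so $\mathcal P_{\mathcal B}\subseteq\mathcal P$. Hence $f\notin\mathcal P_{\mathcal B}$. Corollary~\ref{zh:cor:oddq-eq4-exit} then ordinarily realizes a nonzero multiple of $\mathrm{EQ}_4$.

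Set $\mathcal H=\mathcal F\cup\{D,Y\}$. Both added functions are available by fixed gadgets under the binary closure assumption, so $\operatorname{Holant}(\mathcal H)\equiv_T\operatorname{Holant}(\mathcal F)$. Next we simulate $\#\mathrm{CSP}^2(\mathcal H)$.
\begin{itemize}
\item Ordinary edges give $\mathrm{EQ}_2$.
\item Chaining copies of $\mathrm{EQ}_4$, joining one port of $\mathrm{EQ}_{2k}$ to one port of $\mathrm{EQ}_4$, gives $\mathrm{EQ}_{2k+2}$ for all $k$.
\item A CSP variable occurring an even number of times is therefore replaced by the matching even-arity equality gadget.
\item Conversely, each Holant edge is a CSP variable occurring twice.
\end{itemize}
Known nonzero scalars are compensated by Lemma~\ref{zh:lem:per-signature-scaling}. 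This gives $\#\mathrm{CSP}^2(\mathcal H)\equiv_T\operatorname{Holant}(\mathcal H)$.

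Finally we apply the $\#\mathrm{CSP}^2$ dichotomy \cite[Theorem~16]{zh:lin2021cspd}. The set $\mathcal H$ is not contained in $\mathcal P$, because $f\in\mathcal H$. By Lemma~\ref{zh:lem:oddq-exclude-affine}, $D\in\mathcal H$ lies outside $\mathcal A\cup\mathcal A^\alpha\cup\mathcal L$. So none of the four tractable classes contains $\mathcal H$, and the problem is $\#\mathrm P$-hard.

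\textbf{Main points of care.} The first is the implication $\mathcal F\not\subseteq\mathcal P\Rightarrow$ some function lies outside $\mathcal P_{\mathcal B}$; this needs $\mathcal P_{\mathcal B}\subseteq\mathcal P$, which holds because all binary factors are diagonal or antidiagonal. The second is the equivalence with $\#\mathrm{CSP}^2$. Every function in $\mathcal F$ has even arity, and the cited theorem's $\#\mathrm{CSP}^2$ requires each variable to appear an even number of times. These parity conventions must match exactly, so that the even-arity equalities generated from $\mathrm{EQ}_4$ are precisely what is needed in both reduction directions.
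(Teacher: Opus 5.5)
Your proposal is correct and follows the paper's proof essentially step for step. On the tractable side you use the same two-assignment-per-component algorithm. On the hard side you use the same chain: $\mathcal P_{\mathcal B}\subseteq\mathcal P$, then Corollary~\ref{zh:cor:oddq-eq4-exit} to obtain $\mathrm{EQ}_4$, then $\mathrm{EQ}_{2k}$ chaining to get $\#\mathrm{CSP}^2(\mathcal F\cup\{D,Y\})\equiv_T\operatorname{Holant}(\mathcal F)$, and finally Lemma~\ref{zh:lem:oddq-exclude-affine} with $D$ to exclude $\mathcal A,\mathcal A^\alpha,\mathcal L$ in the cited $\#\mathrm{CSP}^2$ dichotomy.
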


\begin{proof}
若 $\mathcal F\subseteq\mathcal P$，把乘积型约束展开为一元权重及相等、不等关系。
每个一致的连通分量只需合计两个可能赋值的权重；不一致的分量贡献零。因此有多项式
时间算法。

反之，$\mathcal F\not\subseteq\mathcal P$，而
$\mathcal P_{\mathcal B}\subseteq\mathcal P$，所以存在输入函数不在
$\mathcal P_{\mathcal B}$。推论~\ref{zh:cor:oddq-eq4-exit} 普通实现非零标量倍的
$\mathrm{EQ}_4$。

令 $\mathcal H=\mathcal F\cup\{D,Y\}$。新增函数已有固定实现构件，故
$\operatorname{Holant}(\mathcal H)\equiv_T\operatorname{Holant}(\mathcal F)$。
普通边提供 $\mathrm{EQ}_2$；把 $\mathrm{EQ}_{2k}$ 与 $\mathrm{EQ}_4$ 各取一个
端口相接就得到 $\mathrm{EQ}_{2k+2}$。因此每个出现偶数次的 CSP 变量都能用对应的
偶元相等构件连接，给出
\[
 \#\mathrm{CSP}^{2}(\mathcal H)
 \equiv_T\operatorname{Holant}(\mathcal H).
\]
相反方向直接把每条 Holant 边看作出现两次的 CSP 变量。所有构件中的已知非零标量
按出现次数补偿。

集合 $\mathcal H$ 不包含于 $\mathcal P$，而其成员 $D$ 由
引理~\ref{zh:lem:oddq-exclude-affine} 同时排除另外三个易类。集合级
$\#\mathrm{CSP}^{2}$ 二分定理遂给出困难性。
\end{proof}

若此前曾对原问题作统一复正交变换，应当对变换后处于本节标准形的整个函数集检验
$\mathcal F\subseteq\mathcal P$。

\paragraph*{第一条路线的接续}\label{zh:sec:scope}
以上三部分把可取得二元不等函数的群交给文献~\cite{zh:liu2026disequality}，
并展开一阶群、高阶循环群和奇二面体群的证明；
二阶循环群的结果独立引用文献~\cite{zh:zhou-draft}。
这些出口均针对整个函数集，取得一个四元函数只是归约中的中间步骤。
以后仍将按原有九类群框架继续第二条路线，逐类整理不调用含二元不等函数二分定理的证明。

\end{document}